	\algnewcommand\algorithmicforeach{\textbf{for each}}
	\newcolumntype{R}{X}
	\newcolumntype{L}{>{\hsize=.3\hsize}X}
	\newcolumntype{Y}{>{\centering\arraybackslash}X}
	\newcolumntype{Z}{>{\hsize=.5\hsize}X}
	\renewcommand\listfigurename{List of Figures and Tables}
\newlength{\somelength}
\def\degree#1{\gdef\@degree{#1}}
\def\programme#1{\gdef\@programme{#1}}
\def\supervisor#1{\gdef\@supervisor{#1}}
\def\institute#1{\gdef\@institute{#1}}
\def\authorFirstName#1{\gdef\@authorFirstName{#1}}
\def\authorLastName#1{\gdef\@authorLastName{#1}}
\def\maketitle{
	\thispagestyle{empty}
	\vspace*{1in}
	\begin{center}
	\Huge\textbf{\@title}
	\vskip 1in
	\author{\@authorFirstName\ \@authorLastName}
	\Large\textbf{\@authorLastName, \@authorFirstName}
	\vskip 1in
	\large
	A Thesis Submitted in Partial Fulfillment \\
	of the Requirements for the Degree of \\
	\@degree \\
	in \\
	\@programme \\
	\vskip 1in
	\@institute\\
	\@date
	\end{center}
	\vfill
	\newpage\thispagestyle{empty}
	\noindent\begin{xltabular}{\textwidth}{XZ}
	This work is licensed under a Creative Commons ``Attribution 4.0 International'' license.
	& \centering\raisebox{-.5\height}{\includegraphics[width=.3333\textwidth]{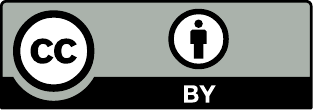}}
	\end{xltabular}
    \newpage}
\renewcommand{\mathbf}[1]{{\bm{#1}}}
\def\munderbar#1{\underline{\sbox\tw@{$#1$}\dp\tw@\z@\box\tw@}}
\declaretheorem[name=Theorem, numberwithin=chapter]{theorem}
\declaretheorem[name=Lemma, sibling=theorem]{lemma}
\declaretheorem[name=Proposition, sibling=theorem]{proposition}
\declaretheorem[name=Corollary, sibling=theorem]{corollary}
\declaretheorem[name=Definition, sibling=theorem, style=definition, qed=\qedsymbol]{definition}
\declaretheorem[name=Example, sibling=theorem, style=definition, qed=\qedsymbol]{example}
\declaretheorem[name=Postulate, style=definition]{postulate}
\declaretheorem[name=Remark, sibling=theorem, style=remark]{remark}
\DeclareMathSymbol{\Alpha}{\mathalpha}{operators}{"41}
\DeclareMathSymbol{\Beta}{\mathalpha}{operators}{"42}
\DeclareMathSymbol{\Epsilon}{\mathalpha}{operators}{"45}
\DeclareMathSymbol{\Zeta}{\mathalpha}{operators}{"5A}
\DeclareMathSymbol{\Eta}{\mathalpha}{operators}{"48}
\DeclareMathSymbol{\Iota}{\mathalpha}{operators}{"49}
\DeclareMathSymbol{\Kappa}{\mathalpha}{operators}{"4B}
\DeclareMathSymbol{\Mu}{\mathalpha}{operators}{"4D}
\DeclareMathSymbol{\Nu}{\mathalpha}{operators}{"4E}
\DeclareMathSymbol{\Omicron}{\mathalpha}{operators}{"4F}
\DeclareMathSymbol{\Rho}{\mathalpha}{operators}{"50}
\DeclareMathSymbol{\Tau}{\mathalpha}{operators}{"54}
\DeclareMathSymbol{\Chi}{\mathalpha}{operators}{"58}
\DeclareMathSymbol{\omicron}{\mathord}{letters}{"6F}
\newcommand{\Integers}{\mathbb{Z}}
\newcommand{\Rationals}{\mathbb{Q}}
\newcommand{\Reals}{\mathbb{R}}
\newcommand{\Complex}{\mathbb{C}}
\newcommand{\prob}{\mathrm{P}}
\DeclareFontFamily{U} {MnSymbolA}{}
\DeclareFontShape{U}{MnSymbolA}{m}{n}{
  <-6> MnSymbolA5
  <6-7> MnSymbolA6
  <7-8> MnSymbolA7
  <8-9> MnSymbolA8
  <9-10> MnSymbolA9
  <10-12> MnSymbolA10
  <12-> MnSymbolA12}{}
\DeclareFontShape{U}{MnSymbolA}{b}{n}{
  <-6> MnSymbolA-Bold5
  <6-7> MnSymbolA-Bold6
  <7-8> MnSymbolA-Bold7
  <8-9> MnSymbolA-Bold8
  <9-10> MnSymbolA-Bold9
  <10-12> MnSymbolA-Bold10
  <12-> MnSymbolA-Bold12}{}
\DeclareSymbolFont{MnSyA} {U} {MnSymbolA}{m}{n}
\DeclareMathSymbol{\medslash}{\mathrel}{MnSyA}{210}
\newcommand{\boldtheta}{{\boldsymbol\theta}}
\newcommand{\boldeta}{{\boldsymbol\eta}}
\newcommand{\boldtau}{{\boldsymbol{T}}}
\DeclareMathOperator{\tr}{tr}
\DeclareRobustCommand{\bigotimes}{
	\mathop{\vphantom{\sum}\mathpalette\bigotimes@\relax}\slimits@}
\renewcommand{\bigotimes@}[2]{\vcenter{\sbox\z@{$#1\sum$}
	\hbox{\resizebox{.9\dimexpr\ht\z@+\dp\z@}{!}{$\m@th\otimes$}}}}
\DeclareMathOperator{\Tensor}{\bigotimes}
\DeclareMathOperator{\st}{s.t.}
\DeclareMathOperator{\wt}{wt} 
\DeclareMathOperator{\var}{Var} 
\DeclareMathOperator{\support}{Supp}
\DeclareMathOperator{\diag}{diag}
\DeclareMathOperator{\rank}{rank}
\DeclareMathOperator{\perm}{perm}
\DeclareRobustCommand{\bigstar}{
	\mathop{\vphantom{\sum}\mathpalette\bigstar@\relax}\slimits@}
\newcommand{\bigstar@}[2]{\vcenter{\sbox\z@{$#1\sum$}
	\hbox{\resizebox{.8\dimexpr\ht\z@+\dp\z@}{!}{$\m@th\star$}}}}
\newcommand{\defeq}{\triangleq}
\newcommand{\defpropto}{\propto}
\renewcommand{\leq}{\leqslant}
\renewcommand{\geq}{\geqslant}
\renewcommand{\subset}{\subseteq}
\newcommand{\transp}{\mathsf{T}}
\newcommand{\Herm}{\dag}
\newcommand{\bra}[1]{\left\lvert#1\right\rangle}
\newcommand{\ket}[1]{\left\langle#1\right\rvert}
\newcommand{\braket}[1]{\bra{#1}\!\ket{#1}}
\newcommand{\conj}[1]{\overline{#1}}
\newcommand{\expectation}[1]{\left\langle{#1}\right\rangle}
\newcommand{\expectationwrt}[2]{\left\langle{#1}\right\rangle_{\!\!#2}}
\newcommand{\abs}[1]{\left\lvert#1\right\rvert}
    \newcommand{\bigabs}[1]{\bigl\lvert#1\bigr\rvert}
    \newcommand{\biggabs}[1]{\biggl\lvert#1\biggr\rvert}
\newcommand{\norm}[1]{\left\lVert#1\right\rVert}
\newcommand{\size}[1]{\left\lvert#1\right\rvert}
\newcommand{\D}{\mathrm{d}} 
\newcommand{\lvec}[1]{\accentset{\leftharpoonup}{#1}} 
\newcommand{\rvec}[1]{\accentset{\rightharpoonup}{#1}} 
\newcommand{\grad}{\nabla}
\newcommand{\id}{\mathcal{I}}
\newcommand{\nb}[1]{{\partial #1}} 
\DeclareMathOperator{\VEC}{vec}
\newcommand{\ceil}[1]{\left\lceil#1\right\rceil}
\newcommand{\tensor}{\otimes}
\newcommand{\xk}{\setminus}
\newcommand{\hadamard}{\odot}
\newcommand{\set}[1]{\mathcal{#1}}
\newcommand{\operator}[1]{\mathcal{#1}}
\newcommand{\system}[1]{\mathsf{#1}} 
\newcommand{\rv}[1]{\mathsf{#1}}
\newcommand{\vect}[1]{\mathbf{#1}}
\newcommand{\hilbert}{\mathcal{H}}
\newcommand{\entropy}{\boldsymbol{\mathrm{H}}}
\newcommand{\qEntropy}{\entropy}
\newcommand{\mutualInfo}{\boldsymbol{\mathrm{I}}}
\newcommand{\qmutualInfo}{\mutualInfo}
\newcommand{\infoRate}{\mathrm{I}}
\newcommand{\entropicRate}{\mathrm{H}}
\newcommand{\capacity}{\mathrm{C}}
\newcommand{\DensOp}{\mathfrak{D}}
\newcommand{\LinearOp}{\mathfrak{L}}
\newcommand{\HermitianOp}{\mathfrak{L}_\Herm}
\newcommand{\PositiveOp}{\mathfrak{L}_{+}}
\newcommand{\StPositiveOp}{\mathfrak{L}_{++}}
\newcommand{\ProbSp}{\mathfrak{P}}
\DeclarePairedDelimiterX{\infdivx}[2]{(}{)}{#1\delimsize\|#2}
\DeclarePairedDelimiterX{\inner}[2]{\langle}{\rangle}{#1\delimsize\vert#2}
\DeclarePairedDelimiterX{\setdef}[2]{\{}{\}}{#1\delimsize\vert#2}
\newcommand{\infdiv}[2]{\mathrm{D}\infdivx*{#1}{#2}}
\newcommand{\IRUB}{\bar{\infoRate}}
\newcommand{\IRLB}{\munderbar{\infoRate}}
\newcommand{\helmholtz}{F_{\mathsf{H}}}
\newcommand{\gibbs}{F_{\mathsf{G}}}
\newcommand{\bethe}{F_{\mathsf{B}}}
\newcommand{\What}{\hat{W}}
\newcommand{\QW}{(Q\mkern-1.5mu W)}
\newcommand{\QWaux}{(Q\mkern-1.5mu\hat{W})}
\newcommand{\vx}{\vect{x}}
\newcommand{\tx}{\tilde{x}}
\newcommand{\tvx}{\tilde{\vect{x}}}
\newcommand{\cx}{\check{x}}
\newcommand{\cvx}{\check{\vect{x}}}
\newcommand{\tX}{\tilde{X}}
\newcommand{\vy}{\vect{y}}
\newcommand{\ty}{\tilde{y}}
\newcommand{\cy}{{\check{y}}}
\newcommand{\cvy}{\check{\vect{y}}}
\newcommand{\tY}{\tilde{Y}}
\newcommand{\vs}{\vect{s}}
\newcommand{\ts}{\tilde{s}}
\newcommand{\tvs}{\tilde{\vect{s}}}
\newcommand{\cs}{\check{s}}
\newcommand{\tS}{\tilde{S}}
\newcommand{\vt}{\vect{t}}
\newcommand{\tit}{\tilde{t}}
\newcommand{\tvt}{\tilde{\vect{t}}}
\newcommand{\muY}{\mu^{\rv{Y}}}
\newcommand{\muXY}{\mu^{\rv{XY}}}
\newcommand{\bmuY}{\bar \mu^{\rv{Y}}}
\newcommand{\bmuXY}{\bar \mu^{\rv{XY}}}
\newcommand{\sigmaY}{\sigma^{\rv{Y}}}
\newcommand{\sigmaXY}{\sigma^{\rv{XY}}}
\newcommand{\bsigmaY}{\bar \sigma^{\rv{Y}}}
\newcommand{\bsigmaXY}{\bar \sigma^{\rv{XY}}}
\newcommand{\lambdaY}{\lambda^{\rv{Y}}}
\newcommand{\lambdaXY}{\lambda^{\rv{XY}}}
\newcommand{\pgood}{p_{\mathrm{g}}}
\newcommand{\pbad}{p_{\mathrm{b}}}
\tikzset{-*/.style={
    shorten >=#1,
    decoration={markings,
                mark={at position 1 with {\draw[fill] circle [radius=#1];}}},
    postaction=decorate},
    -*/.default=1.5pt}
\tikzset{*-/.style={
    shorten <=#1,
    decoration={markings,
                mark={at position 0 with {\draw[fill] circle [radius=#1];}}},
    postaction=decorate},
    *-/.default=1.5pt}
\tikzset{*-*/.style={
    shorten <=#1,
    shorten >=#1,
    decoration={markings,
                mark={at position 0 with {\draw[fill] circle [radius=#1];}},
                mark={at position 1 with {\draw[fill] circle [radius=#1];}}},
    postaction=decorate},
    *-*/.default=1.5pt}
\tikzset{double distance=1.5pt}
\tikzset{multiple/.style={
    double,
    decoration={markings,
                mark={at position .5 with {\node[draw=none] {$\medslash$};}}},
    postaction=decorate}}
\newcommand{\aka}{a.k.a.\ } 
\newcommand{\eg}{\emph{e.g.}} 
\newcommand{\ie}{\emph{i.e.}} 
\newcommand{\wrt}{w.r.t.\ } 
\newcommand{\pmf}{PMF\xspace}
\newcommand{\pmfs}{PMFs\xspace}
\newcommand{\pdf}{PDF\xspace}
\newcommand{\pdfs}{PDFs\xspace}
\newcommand{\LHS}{LHS\xspace} 
\newcommand{\RHS}{RHS\xspace} 
\let\showlineno\@empty
\def\showlineno{\the\inputlineno}
\let\oldmarginpar\marginpar
\renewcommand{\marginpar}[1]{%
	\leavevmode%
	\oldmarginpar[\raggedleft#1]{\raggedright#1}
	\ignorespacesafterend\ignorespaces%
}
\begin{document}
\title{Factor~Graphs for Quantum~Information~Processing}
\authorFirstName{Michael Xuan}
\authorLastName{CAO}
\degree{Doctor of Philosophy}
\programme{Information Engineering}
\supervisor{Prof. VONTOBEL, Pascal Olivier}
\institute{The Chinese University of Hong Kong}
\date{April 2021}
\maketitle
\pagenumbering{roman}
\makeatletter
\noindent\begin{xltabular}{\textwidth}{lX}
	\multicolumn{2}{l}{\textbf{Abstract of thesis entitled:}}\\
	\multicolumn{2}{l}{\@title}\\
	\hline
	\textbf{Submitted by} & \@authorLastName, \@authorFirstName\\
	\cline{2-2}
	\textbf{For the degree of} & \@degree\\
	\cline{2-2}
	\multicolumn{2}{l}{\textbf{At the Chinese University of Hong Kong in \@date}}
\end{xltabular}\vspace{-2pt}
\makeatother
Statistical graphical models are frameworks that use graphs to represent dependencies among a large number of random variables.
Such frameworks have been proven useful for developing analytical/algorithmic methods in characterizing the behavior of systems involving a large number of variables.
Among these frameworks, the factor graphs are a popular variant.
A factor graph represents a factorization of a multivariate function, and in many practical examples, a factorization of some probability function.
It is a recurring problem in physics and engineering to compute the marginals (or partition sums) of a given multivariate function.
With the factor graph representation, such marginals (or partition sums) can be computed/estimated via a class of algorithms known as belief-propagation algorithms.
\par
In recent years, quantum computing and quantum communications have received increasing attention.
In these applications, information is represented and processed by suitable quantum systems.
Quantum theory, which is a generalization of probability theory, describes the uncertainty of quantum systems.
Classical graphical models, which are based on probability theory, can no longer efficiently describe the dependencies among such systems.
\par
In this thesis, we are interested in generalizing factor graphs and the relevant methods toward describing quantum systems.
Two generalizations of classical graphical models are investigated, namely double-edge factor graphs (DeFGs) and quantum factor graphs (QFGs).
Conventionally, a factor in a factor graph represents a nonnegative real-valued local functions.
Two different approaches to generalize factors in classical factor graphs yield DeFGs and QFGs, respectively.
We proposed/re-proposed and analyzed generalized versions of belief-propagation algorithms for DeFGs/QFGs.
As a particular application of the DeFGs, we investigate the information rate and their upper/lower bounds of classical communications over quantum channels with memory.
In this study, we also propose a data-driven method for optimizing the upper/lower bounds on information rate.
\begin{CJK*}{UTF8}{bsmi}
\newpage
\makeatletter
\noindent\begin{xltabular}{\textwidth}{lX}
	\multicolumn{2}{l}{\textbf{標題如下之論文之摘要}}\\
	\multicolumn{2}{l}{用於量子信息處理的因子圖方法的探討}\\
	\hline
	\textbf{提交者} & 曹\ 軒\\
	\cline{2-2}
	\textbf{所申請之學位} & 哲學博士\\
	\cline{2-2}
	\multicolumn{2}{l}{\textbf{二零二零年十一月\ 於香港中文大學}}
\end{xltabular}
\makeatother
統計圖形模型是使用圖形表示大量隨機變量之間的依存關係的框架。
在研究涉及大量變量的系統的行為時，此類框架有助開發相關的分析手段和計算方法。
在這類框架中，因子圖（factor graph）是一種流行的方法。
因子圖用圖表示因式分解。
而在許多實例中，它往往用於表示某些概率函數的分解。
計算給定因式分解的邊際（marginal）或配分和（partition sum）是物理學和工程學中常見的問題。
利用因子圖，這樣的邊際或配分和可用置信傳播算法（belief-propagation algorithm）（估）算出。

近年來，量子計算和量子通信領域所受關注愈增。
在這些應用中，訊息由量子系統（的狀態）表示。
量子論（quantum theory）是描述量子系統不確定性的理論，它是概率論的延伸。
由於前述圖模型上的方法均基於概率論，它們難以高效地描述此類系統之間的依賴性。

在本文中，我們關注因子圖和相關方法的拓展，以期用於描述量子系統。
本研究涉及了兩個拓展模型：雙邊因子圖（DeFGs）和量子因子圖（QFGs）。
傳統上，因子圖中的因子表示的是非負實值局部函數。
而DeFG和QFG分別對應了兩種對經典因子圖中的因子的拓展。
本文分別提出/重新提出並分析了針對DeFG/QFG而拓展的置信傳播算法。
我們還研究了量子記憶信道（quantum channel with memory）的經典通信率及其上下界。
該研究使用了DeFG來表示量子系統。
在本研究中，我們提出了一種基於數據來優化信息速率的上下界的方法。
\end{CJK*}
\cleardoublepage
\chapter*{Preface\markboth{PREFACE}{}}
\addcontentsline{toc}{chapter}{Preface}
Factor graphs~\cite[Section~2.1]{kschischang2001factor, loeliger2004introduction, wainwright2008graphical} are a type of graphical models that uses bipartite graphs (or some equivalence) to describe \emph{factorizations} of multivariate functions, where a factorization of a function is a decomposition of itself into the \emph{product of a several sub-functions} (see~\eqref{eq:cfg:global:function:1}).
Though the method itself does not impose any constraints on the functions to be factorized, it is often used to represent the factorizations of probability functions and thus can be used to illustrate the dependencies among the involved random variables.
Many interesting computer science and physics problems can be modeled as computing the \emph{marginals} of certain functions with some knowledge of its factorization structure.
Examples include problems in coding theory (\eg, decoding of low-density parity-check (LDPC) codes~\cite{gallager1962low}) and problems in statistical mechanics (\eg, the Ising models (see, \eg,~\cite{molkaraie2013partition})).
Besides visualizing such factorizations, factor graphs also provide an efficient approach to compute or estimate the aforementioned marginals using a class of algorithms known as \emph{belief-propagation} (BA) algorithms.
\par
In recent years, quantum computing and quantum communications have recieved increasing attention.
Though technically speaking, such systems \emph{can} be described by probability theory, it is hugely inefficient to do so since a discrete quantum system's state space is continuous.
For example, in a quantum system of $n$-qubits, the states of the system can be $\sum_{\vx_1^n\in\mathbb{F}_2^n} \alpha_{\vx_1^n} \bra{\vx_1^n}$ for any complex numbers $\{\alpha_{\vx_1^n}\}_{\vx_1^n\in\mathbb{F}_2^n}$ such that $\sum_{\vx_1^n\in\mathbb{F}_2^n} \abs{\alpha_{\vx_1^n}}^2 = 1$.
In this case, describing the system using probability theory would require a $(2^{n+1}-1)$-dimensional (real) continuous random variable.
On the other hand, due to the physical nature of observations (as we understand them so far), it is equivalent to describe the same system using a $2^n$-by-$2^n$ density matrix (operator).
The latter description is the key to quantum theory, a generalization of probability theory (see Section~\ref{sec:basic_quantum_theory}).
Thus, solving the interference problem on a system involving quantum components using the traditional method of factor graphs and probability theory would be extremely inefficient.
This has motivated us to consider generalizations of the model and the methods of factor graphs for quantum theory.
\section*{Problems and Contributions\markboth{}{PROBLEMS AND CONTRIBUTIONS}}
As introduced below, we consider three particular topics/problems in this thesis.
The first two topics are two different generalizations of the method of factor graphs with quantum theory in mind.
The last topic solves a specific problem in quantum information theory and demonstrates how factor graphs could be helpful in these problems.
\subsection*{Double-Edge Factor Graphs}
In Chapter~\ref{chapter:DeFGs}, we focus on a generalized class of factor graphs called double-edge factor graphs (DeFGs).
A DeFG represents a factorization of the form $g(\vs,\tvs;\vx) = \prod_{a\in\set{F}} f_a(\vs_\nb{a},\tvs_\nb{a};\vx_{\delta{a}})$, where $f_a$ is complex-valued for each $a$ and where the matrix associated with $f_a$, \ie, $[f_a(\vx_{\delta{a}})]_{\vs_\nb{a},\tvs_\nb{a}} \defeq f_a(\vs_\nb{a},\tvs_\nb{a};\vx_{\delta{a}})$, is PSD for each $\vx_{\delta{a}}$.
This class of graphical models is closely related to the factor graphs for quantum probabilities introduced in~\cite{loeliger2012factor, loeliger2017factor} and is  helpful in representing quantum systems.
In particular, the results of quantum dynamics can be expressed as marginals of some global functions represented by suitable DeFGs.

In our work, we are interested in the problem of computing the marginals and partition sums of DeFGs.
We generalize the ``closing-the-box'' operations to DeFGs, and show that the marginals/partition sum of an acyclic DeFG can be computed using a sequence of ``closing-the-box'' operations.
We also generalize the belief-propagation (BP) algorithm to DeFGs.
In particular, we define the holographic transformation of DeFGs, and derive a loop calculus expansion for DeFGs.
We study multiple numerical examples in which BP algorithms for DeFGs show promising results.

Parts of the contents of Chapter~\ref{chapter:DeFGs} have been published in~\cite{cao2017double}.
\subsection*{Quantum Factor Graphs}
In Chapter~\ref{chapter:QFGs}, we focus on a graphical model called \emph{quantum factor graphs} (QFGs) \cite{Leifer08}.
The graphical model is a direct generalization of the \emph{bifactor networks} proposed in the same paper, in which the authors considered a ``factorization'' in the form of $\rho\defeq\bigstar_{a\in\set{F}}\rho_a$ where $\star$ is a commutative and associative binary operator of linear operators (matrices).
The target is to compute the (partial) trace of $\rho$.
The major results in~\cite{Leifer08} are based on additional commutativity assumptions.

In our work, we consider a different setup without commutativity assumptions, but with the local operators being ``close'' to identity operators.
We establish an approximated distributivity of $\star$ over (partial) trace functions under such a setup and generalize the ``closing-the-box'' operations to QFGs.
We also show that the trace of an acyclic QFG can always be approximated via a sequence of ``closing-the-box'' operations.
The belief-propagation algorithm for QFGs, which coincides with the quantum belief-propagation algorithm in~\cite{Leifer08}, is then a natural and heuristic generalization of the ``closing-the-box'' based method to QFGs with cycles.
We generalize the Bethe approximation to QFGs and define Helmholtz, Gibbs, and the Bethe free energies for this new setup.
We show that, provided that all local operators are close to $I$, the fixed points of the belief-propagation algorithm for QFGs approximately correspond to the stationary points of the Bethe free energy.
A numerical demonstration is also included at the end of that chapter.

Parts of the contents of Chapter~\ref{chapter:QFGs} have been published in~\cite{cao2016quantum}.
\subsection*{Bounding and Estimating the Classical Information Rate of Quantum Channels with Memory}
In Chapter~\ref{chapter:QCwM}, we focus on the information rate of classical communications over a finite-dimensional quantum channel with memory.
A quantum channel with memory is a CPTP map (see Definition~\ref{def:quantum:channel}) from $\hilbert_\system{A}\tensor\hilbert_\system{S}$ to $\hilbert_\system{B}\tensor\hilbert_{\system{S}'}$ where $\system{A}$ and $\system{B}$ are, respectively, the input and output systems and where $\system{S}$ and $\system{S}'$ are, respectively, the memory systems before and after the channel use (see Definition~\ref{def:quantum:channel:memory}).
In particular, we are interested in computing and bounding the information rate of classical communications over a quantum channel with memory using only separable-state ensemble and local measurements.
This restriction is equivalent to the scenario where no quantum computing device is present at the sending or receiving end or where our manipulation of the channel is limited to a single-channel use.
The difficulty of the problem lies with the presence of quantum memory.
In the most straightforward situation, where the memory system exhibits classical properties under certain ensembles and measurements, the classical communication setup is equivalent to a finite-state-machine channel (FSMC)~\cite{gallager1968information}.
Though the evaluation of the information rate of an FSMC is nontrivial in general, efficient stochastic methods for estimating and bounding this quantity have been developed~\cite{arnold2006simulation, sadeghi2009optimization}.

Our work is inspired by~\cite{arnold2006simulation}, where the authors proposed an efficient stochastic method in estimating the information rate of indecomposable FSMCs, and~\cite{sadeghi2009optimization}, where the authors proposed upper and lower bounds of the information rate of FSMCs based on auxiliary FSMCs and efficient methods for optimizing the bounds.
We propose algorithms for estimating the information rate of such communication setups, along with algorithms for bounding the information rate based on so-called auxiliary channels.
Some of the algorithms are generalized versions of the methods in~\cite{arnold2006simulation, sadeghi2009optimization}.
We also discuss suitable graphical models for doing the relevant computations.
We emphasize that the auxiliary channels are learned in a data-driven approach, \ie, only input/output sequences of the actual channel are needed, but not the channel model of the actual channel.

Chapter~\ref{chapter:QCwM} is adapted from~\cite{cao2020bounding}, with the majority of its contents published in~\cite{cao2017estimating, cao2019optimizing, cao2020bounding}.
\par
\section*{Structure of this Thesis\markboth{}{STRUCTURE OF THIS THESIS}}
The rest of this thesis consists of the four main chapters, followed by a summary, appendices, bibliography, and an index of terms.
Chapter~1 provides an introduction to some preliminary knowledge regarding factor graphs and basic quantum information theory.
Readers may skip this chapter if they are already familiar with these topics.
Chapters~2 and~3 discuss double-edge factor graphs and quantum factor graphs, respectively.
Chapter~4 studies classical information rate of quantum channels with memory.
These three chapters are relatively independent of each other, and readers may read them in any desired order.
\chapter*{Acknowledgment\markboth{ACKNOWLEDGEMENT}{}}
\addcontentsline{toc}{chapter}{Acknowledgment}
I want to thank my supervisor Prof. Pascal VONTOBEL for his patient guidance in the past few years, and my co-supervisor, Prof. Chandra NAIR, for his suggestions and help.
I also want to thank Dr. Joseph RENES for hosting my internship during the winter of 2019 and enlightening discussions.

I also want to thank the thesis committee members, Prof. Chandra NAIR, Prof. Angela ZHANG, Prof. Changhong ZHAO, Prof. Cheuk-Ting LI, and Prof. Henry PFISTER, for their time in reading this thesis and for their helpful comments.

Finally, I want to thank my wife, July, for providing continuous joy and support to my life.

The work in this thesis was supported in part by the Research Grants Council of the Hong Kong Special Administrative Region, China, under Project CUHK 14209317 and Project CUHK 14207518.
\chapter*{Notations\markboth{NOTATIONS}{}}
\noindent\begin{xltabular}{\textwidth}{L|R}
	\multicolumn{2}{l}{\textbf{\Large Sets}}\\
	\hline
	$\set{X}$ & a set\\
	$\set{A}\sqcup\set{B}$ & union of disjoint sets $\set{A}$ and $\set{B}$\\
	$\set{B}\xk\set{A}$ & the relative complement of $\set{A}$ with respect to $\set{B}$\\
	$\set{B}\xk a$ & equivalent to $\set{B}\xk\{a\}$\\
	$\Integers,\Integers_{>0},\Integers_{<0},\Integers_{\geqslant 0}$ & the sets of integers, positive integers, negative integers, non-negative integers, respectively\\
	$\Rationals,\Rationals_{>0},\Rationals_{<0},\Rationals_{\geqslant 0}$ & the sets of rationals, positive rationals, negative rationals, non-negative rationals, respectively\\
	$\Reals,\Reals_{>0},\Reals_{<0},\Reals_{\geqslant 0}$ & the sets of real numbers, positive real numbers, negative real numbers, non-negative real numbers, respectively\\
	$\Complex$ & the sets of complex numbers\\
	\hline
\end{xltabular}
\noindent\begin{xltabular}{\textwidth}{L|R}
	\multicolumn{2}{l}{\textbf{\Large Probability and Quantum Theory}}\\
	\hline
	$\rv{X}$ & a random variable\\
	$\prob_\rv{X}$ & probability mass function (\pmf) or probability density function (\pdf) \wrt $\rv{X}$\\
	$\expectation{\rv{X}}$ & expectation of the random variable $\rv{X}$ \\
	$\expectationwrt{f(\rv{X})}{\ b}$ & expectation of $f(\rv{X})$, where the \pmf of $\rv{X}$ is $b$\\
	$\rv{X}_1^n$ & a list of random variables, \ie, $(\rv{X}_1,\ldots,\rv{X}_n)$\\
	$\ProbSp(\set{X})$ & the set of all \pmfs or \pdfs over the set $\set{X}$\\
	$\system{S}$ & a quantum system\\
	$\hilbert_\system{S}$ & state space associated with system $\system{S}$, which is a Hilbert space\footnote{All Hilbert spaces involved in this thesis are finite-dimensional, unless stated otherwise.}\\
	$\rho_\system{S}$ & a density operator associated with system $\system{S}$\\
	$\system{S}_0^n$ & a list of quantum systems, \ie, $(\system{S}_1,\ldots,\system{S}_n)$\\
	$\LinearOp(\hilbert)$ & the set of all linear operators (\ie, linear transformations) on $\hilbert$\\
	$\PositiveOp(\hilbert)$ & the set of all positive operators on $\hilbert$\\
	$\StPositiveOp(\hilbert)$ & the set of all strictly positive operators on $\hilbert$\\
	$\DensOp(\hilbert)$ & the set of all density operators on $\hilbert$\\
	\hline
\end{xltabular}
\noindent\begin{xltabular}{\textwidth}{L|R}
	\multicolumn{2}{l}{\textbf{\Large Linear Algebra and Matrices}}\\
	\hline
	$\vx$ & a vector\\
	$\vx_1^n$ & a vector of length $n$, \ie, $\vx_1^n=(x_1,x_2,\ldots,x_n)$\\
	$\vx_i$ & $i$-th entry of the vector $\vx$\\
	$\norm{\vect{v}}_p$ & $\ell_p$-norm of the real vector $\vx$, \ie, $\norm{\vect{v}}_p \defeq \left(\sum_{i=1}^n \abs{\vect{v}_i}^p\right)^{\frac{1}{p}}$\\
	$A$ & a matrix\\
	$A\geqslant 0$ & a PD matrix, \ie, $A\in\Complex^{n\times n}$ for some positive integer $n$ and $\mathbf{v}^\Herm A\mathbf{v}\geqslant 0$ for all $\mathbf{v}\in\Complex^n$\\
	$A> 0$ & a PSD matrix, \ie, $A\in\Complex^{n\times n}$ for some positive integer $n$ and $\mathbf{v}^\Herm A\mathbf{v}> 0$ for all $\mathbf{v}\in\Complex^n$\\
	$[T]$ & a matrix representation of the linear transformation $T$\\
	$A_{i,j}$ & $(i,j)$-th entry of the matrix $A$\\
	$\VEC(A)$ & vectorization of an $n\times m$ matrix $A$, \ie, $A_{i,j}=\VEC(A)_{i+n\cdot (j-1)}$ \\
	$\tr(A),\!\tr(T),\!\tr(\rho)$ & trace of the matrix $A$, of the transformation $T$, of the operator $\rho$, respectively\\
	$\bra{\phi}$ & a column vector\\
	$\ket{\phi}$ & a row vector adjoint to $\bra{\phi}$, equivalently $\ket{\phi}$ is the linear transformation $\bra{\psi}\mapsto\inner{\phi}{\psi}$\\
	$\inner{\phi}{\psi}$ & inner product between the vectors $\bra{\phi}$ and $\bra{\psi}$\\	
	$\inner{\phi_{\system{A}}}{\psi_{\system{AB}}}$ & equivalent to $(\ket{\phi_\system{A}}\tensor I_\system{B})(\bra{\psi_{\system{AB}}})$\\
	$\exp{A}$ & matrix exponential of the matrix $A$\\
	$\log{A}$ & matrix logarithm of the matrix $A$\\
	$\tensor$ & tensor product between two vector spaces or two operators\\
	$\hadamard$ & Hadamard product between two matrices or vectors\\
	$\perp$ & perpendicular ($\mathbf{v}\perp\mathbf{u} \iff \inner{\mathbf{v}}{\mathbf{u}}=0$)\\
	\hline
\end{xltabular}
\noindent\begin{xltabular}{\textwidth}{L|R}
	\multicolumn{2}{l}{\textbf{\Large Classical and Quantum Information Theory}}\\
	\hline
	$\entropy(\rv{X}), \entropy(p)$ & Shannon entropy of the random variable $\rv{X}$ and the \pmf $p$, respectively\\
	$\qEntropy(\system{A}), \qEntropy(\rho)$ & von Neumann entropy of the quantum system $\system{A}$ and the density operator $\rho$, respectively\\
	$\mutualInfo(\rv{X}:\rv{Y})$ & mutual information between the random variables $\rv{X}$ and $\rv{Y}$\\
	$\qmutualInfo(\system{A}:\system{B})$ & quantum mutual information between the systems $\system{A}$ and $\system{B}$\\
	$\infdiv{p}{q}$ & relative entropy between the \pmfs$p$ and $q$\\
	$\infdiv{\rho}{\sigma}$ & quantum relative entropy between the density operators $\rho$ and $\sigma$\\
	\hline
\end{xltabular}
\noindent\begin{xltabular}{\textwidth}{L|R}
	\multicolumn{2}{l}{\textbf{\Large Graph Theory}}\\
	\hline
	$\set{G}=(\set{V},\set{E},R)$ & a graph: $\set{V}$ is called the vertex set, $\set{E}$ is called the edge set, and $R:\set{E}\to\set{V}\times\set{V}$ is called the relationship function\footnote{All graphs involved in this thesis are simple, finite, and undirectional.} of $\set{G}$\\
	$\set{G}=(\set{V}_1,\set{V}_2,\set{E})$ & a bipartite graph, \ie, $\set{G}=(\set{V}_1\sqcup\set{V}_2,\set{E}\subset\set{V}_1\times\set{V}_2,R:e\mapsto e)$\\
	$\deg(v)$ & the degree of the vertex $v$\\
	$\nb{v}$ & the set of neighbors of $v$\\
	$(v_1-\cdots-v_n)$ & a backtrackless walk in a graph: requiring $v_i\neq v_{i+1}$ and $v_i$ connected to $v_{i+1}$\\
	\hline
\end{xltabular}
\noindent\begin{xltabular}{\textwidth}{L|R}
	\multicolumn{2}{l}{\textbf{\Large Others}}\\
	\hline
	$\star$ & star product between two Hermitian operators (see Eqs.~\eqref{eq:def:star} and~\eqref{eq:def:star:2})\\
	\hline
\end{xltabular}
\newpage
\begin{xltabular}{\textwidth}{L|R}
	\multicolumn{2}{l}{\textbf{\Large Acronyms}}\\
	\hline
	CC-QSC & classical-input classical-output quantum-state channel\\
	CPTP & completely positive trace-preserving\\
	CtB & closing-the-box\\
	DeFG & double-edge factor graph\\
	FG & (classical) factor graph\\
	FSMC & finite-state-machine channel\\
	i.i.d. & independent and identically distributed\\
	NFG & normal factor graph\\
	PD & (strictly) positive definite\\
	\pdf & probability density function\\
	\pmf & probability mass function\\
	PSD & positive semi-definite\\
	QFG & quantum factor graph\\
	\hline
\end{xltabular}

\noindent Note that, unless stated otherwise (only in a few numerical examples), all appearances of logarithm in this thesis should be treated as natural logarithm.
\listoffigures\addcontentsline{toc}{chapter}{\listfigurename}
\listofalgorithms\addcontentsline{toc}{chapter}{\listalgorithmname}
\chapter*{List of Publications}\addcontentsline{toc}{chapter}{List of Publications}
\noindent
\begin{itemize}
	\item M. X. Cao and P. O. Vontobel, ``Quantum factor graphs: closing-the-box  operation and variational approaches,'' in \textit{Proceedings International Symposium on Information Theory and Its Applications (ISITA)}, Monterey, CA, USA, 2016, pp. 651--655.
	\item M. X. Cao and P. O. Vontobel, ``Estimating the information rate of a channel with classical input and output and a quantum state,'' in \textit{Proceedings IEEE International Symposium on Information Theory (ISIT)}, Aachen, Germany, 2017, pp. 3205--3209.
	\item M. X. Cao and P. O. Vontobel, ``Double-edge factor graphs: Definition, properties, and examples,'' in \textit{Proceedings IEEE Information Theory Workshop (ITW)}, Kaohsiung, Taiwan, 2017, pp. 136--140.
	\item M. X. Cao and P. O. Vontobel, ``Optimizing bounds on the classical information rate of quantum channels with memory,'' in \textit{Proceedings IEEE International Symposium on Information Theory (ISIT)}, Paris, France, 2019, pp. 265--269.
	\item M. X. Cao and P. O. Vontobel, ``Bounding and estimating the classical information rate of quantum channels with memory,'' \textit{IEEE Transactions on Information Theory}, vol. 66, no. 9, pp. 5601--5619, 2020.
\end{itemize}
\cleardoublepage
\tableofcontents
\cleardoublepage \setcounter{page}{0} \pagenumbering{arabic} \pagestyle{headings}
\chapter{Preliminaries}\label{chapter:preliminaries}
The preliminaries of this thesis, as introduced in this chapter, include basic results regarding \emph{classical} factor graphs (FGs) and basic quantum information theory, each comprising one section of the chapter.
Readers may skip the correcponding section(s) provided familiarity with the topic(s).
We emphasize the reviewing nature of this chapter; namely, the results are either known contributions, or derived rather straightforwardly.
\section{Factor Graphs and Belief-Propagation Algorithms}\label{sec:CFGs}
This section reviews factor graphs, belief-propagation (BP) algorithms, and several interpretations of BP algorithms' outputs.
\subsection{Factor Graphs and Normal Factor Graphs}
Given a real-valued function $g$ of multiple variables, a factorization\index{factorization} of $g$ is an expression in the form
\begin{equation}\label{eq:cfg:global:function:1}
g\big((x_i)_{i\in\set{V}}\big) = \prod_{a \in \set{F}} f_a(\mathbf{x}_{\partial a}),
\end{equation}
where $\set{V}$, $\set{F}$ are some finite sets and where, for each $a\in\set{F}$, $\partial a\subset\set{V}$ denotes the indices of the arguments of $f_a$.
A factorization as in~\eqref{eq:cfg:global:function:1} can be represented by a \emph{bipartite} graph between $\set{V}$ and $\set{F}$, with $i\in\set{V}$ being a neighbor of $a\in\set{F}$ if and only if $i\in\nb{a}$ (see Figure~\ref{fig:cfg:generic}); conversely, such a graph is called a \emph{factor graph} (FG) \emph{representing~\eqref{eq:cfg:global:function:1}}.
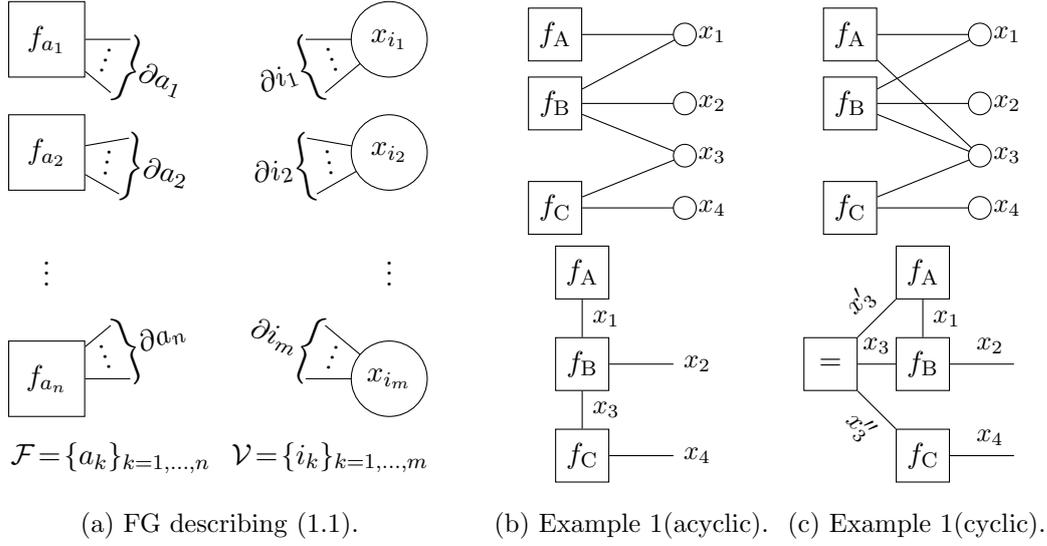
\begin{figure}\centering
\begin{subfigure}[b]{0.48\columnwidth}\centering
	\begin{tikzpicture}[node/.style={draw=none},
    	factor/.style={rectangle, minimum size=1cm, draw},
    	varNode/.style={circle,minimum size=1cm,draw,inner sep=0pt,outer sep=0pt},
    	node distance = 1.5cm]
		\node[factor] (A) {$f_{a_1}$};
		\node[factor] (B)[below of=A] {$f_{a_2}$};
		\node (E)[below of=B] {};
		\node[factor] (C)[below of=E] {$f_{a_{n}}$};
		\node[varNode] (x1)[right=3.5cm of A] {$x_{i_1}$};
		\node[varNode] (x2)[right=3.5cm of B] {$x_{i_2}$};
		\node[varNode] (xn)[right=3.5cm of C] {$x_{i_{m}}$};
		\path (B) edge[draw=none] node {$\vdots$} (C);
		\path (x2) edge[draw=none] node {$\vdots$} (xn);
		\node[below=0.2cm of C.south west, anchor = north west, xshift=-1mm]
		    {$\set{F}\!=\!\{a_k\}_{k=1,\ldots,n}$};
		\node[below=0.35cm of xn.south east, anchor = north east, xshift=3mm]
		    {$\set{V}\!=\!\{i_k\}_{k=1,\ldots,m}$};
		
		\pgfmathsetmacro\rad{1.1};
		\pgfmathsetmacro\y{\rad *cos(90)};
		\pgfmathsetmacro\x{\rad *sin(90)};
		\path (A) edge[draw] ([xshift=\x cm, yshift=-\y cm]A);
		\pgfmathsetmacro\y{\rad *cos(70)};
		\pgfmathsetmacro\x{\rad *sin(70)};
		\path (A) edge[draw=none] node[sloped,above=-8.2pt] {$\vdots$} node[sloped,above=0pt,pos=0.6,anchor=west] {$\bigg\}\nb{a_1}$} ([xshift=\x cm, yshift=-\y cm]A);
		\pgfmathsetmacro\y{\rad *cos(50)};
		\pgfmathsetmacro\x{\rad *sin(50)};
		\path (A) edge[draw] ([xshift=\x cm, yshift=-\y cm]A);
		\pgfmathsetmacro\y{\rad *cos(100)};
		\pgfmathsetmacro\x{\rad *sin(100)};
		\path (B) edge[draw] ([xshift=\x cm, yshift=-\y cm]B);
		\pgfmathsetmacro\y{\rad *cos(80)};
		\pgfmathsetmacro\x{\rad *sin(80)};
		\path (B) edge[draw=none] node[sloped,above=-8.2pt] {$\vdots$} node[sloped,above=0pt,pos=0.6,anchor=west] {$\bigg\}\nb{a_2}$} ([xshift=\x cm, yshift=-\y cm]B);
		\pgfmathsetmacro\y{\rad *cos(60)};
		\pgfmathsetmacro\x{\rad *sin(60)};
		\path (B) edge[draw] ([xshift=\x cm, yshift=-\y cm]B);
		\pgfmathsetmacro\y{\rad *cos(130)};
		\pgfmathsetmacro\x{\rad *sin(130)};
		\path (C) edge[draw] ([xshift=\x cm, yshift=-\y cm]C);
		\pgfmathsetmacro\y{\rad *cos(110)};
		\pgfmathsetmacro\x{\rad *sin(110)};
		\path (C) edge[draw=none] node[sloped,above=-8.2pt] {$\vdots$} node[sloped,above=0pt,pos=0.6,anchor=west] {$\bigg\}\nb{a_n}$} ([xshift=\x cm, yshift=-\y cm]C);
		\pgfmathsetmacro\y{\rad *cos(90)};
		\pgfmathsetmacro\x{\rad *sin(90)};
		\path (C) edge[draw] ([xshift=\x cm, yshift=-\y cm]C);
		\pgfmathsetmacro\y{\rad *cos(90)};
		\pgfmathsetmacro\x{\rad *sin(90)};
		\path (x1) edge[draw] ([xshift=-\x cm, yshift=-\y cm]x1);
		\pgfmathsetmacro\y{\rad *cos(70)};
		\pgfmathsetmacro\x{\rad *sin(70)};
		\path (x1) edge[draw=none] node[sloped,above=-8.2pt] {$\vdots$} node[sloped,above=0pt,pos=0.6,anchor=east] {$\nb{i_1}\bigg\{$}  ([xshift=-\x cm, yshift=-\y cm]x1);
		\pgfmathsetmacro\y{\rad *cos(50)};
		\pgfmathsetmacro\x{\rad *sin(50)};
		\path (x1) edge[draw] ([xshift=-\x cm, yshift=-\y cm]x1);
		\pgfmathsetmacro\y{\rad *cos(100)};
		\pgfmathsetmacro\x{\rad *sin(100)};
		\path (x2) edge[draw] ([xshift=-\x cm, yshift=-\y cm]x2);
		\pgfmathsetmacro\y{\rad *cos(80)};
		\pgfmathsetmacro\x{\rad *sin(80)};
		\path (x2) edge[draw=none] node[sloped,above=-8.2pt] {$\vdots$} node[sloped,above=0pt,pos=0.6,anchor=east] {$\nb{i_2}\bigg\{$}  ([xshift=-\x cm, yshift=-\y cm]x2);
		\pgfmathsetmacro\y{\rad *cos(60)};
		\pgfmathsetmacro\x{\rad *sin(60)};
		\path (x2) edge[draw] ([xshift=-\x cm, yshift=-\y cm]x2);
		\pgfmathsetmacro\y{\rad *cos(130)};
		\pgfmathsetmacro\x{\rad *sin(130)};
		\path (xn) edge[draw] ([xshift=-\x cm, yshift=-\y cm]xn);
		\pgfmathsetmacro\y{\rad *cos(110)};
		\pgfmathsetmacro\x{\rad *sin(110)};
		\path (xn) edge[draw=none] node[sloped,above=-8.2pt] {$\vdots$} node[sloped,above=0pt,pos=0.6,anchor=east] {$\nb{i_m}\bigg\{$}  ([xshift=-\x cm, yshift=-\y cm]xn);
		\pgfmathsetmacro\y{\rad *cos(90)};
		\pgfmathsetmacro\x{\rad *sin(90)};
		\path (xn) edge[draw] ([xshift=-\x cm, yshift=-\y cm]xn);		
	\end{tikzpicture}
	\caption{FG describing \eqref{eq:cfg:global:function:1}.}
	\label{fig:cfg:generic}
\end{subfigure}
\begin{subfigure}[b]{0.25\columnwidth}\centering
	\begin{tikzpicture}[node/.style={draw=none},
    	factor/.style={rectangle, minimum size=.7cm, draw},
    	varNode/.style={circle,minimum size=.3cm,draw,inner sep=0pt,outer sep=0pt}]
		\node[factor] (A) {$f_{\mathrm{A}}$};
		\node[factor] (B)[below = 0.2cm of A] {$f_{\mathrm{B}}$};
		\node (E)[below = 0.2cm of B] {};
		\node[factor] (C)[below = 0.2cm of E] {$f_{\mathrm{C}}$};
		\node[varNode] (x1)[right=1.2cm of A] {};
	    	\node[font=\small, right=-3pt of x1] {$x_1$};
		\node[varNode] (x2)[right=1.2cm of B] {};
	    	\node[font=\small, right=-3pt of x2] {$x_2$};
		\node[varNode] at (E-|x2) (x3) {};
	    	\node[font=\small, right=-3pt of x3] {$x_3$};
		\node[varNode] (x4)[right=1.2cm of C] {};
	    	\node[font=\small, right=-3pt of x4] {$x_4$};
		\path (A) edge (x1);
		\path (B) edge (x1);
		\path (B) edge (x2);
		\path (B) edge (x3);
		\path (C) edge (x3);
		\path (C) edge (x4);
		
		\path (C) edge[draw=none] node[pos=0, factor, yshift=-.5cm, anchor=north] (A) {$f_{\mathrm{A}}$} (x4);
		\node[factor] (B)[below=.5cm of A] {$f_{\mathrm{B}}$};
		\node[factor] (C)[below=.5cm of B] {$f_{\mathrm{C}}$};
		\path (A) edge node[right, font=\small] {$x_1$} (B);
		\path (B) edge node[right, font=\small] {$x_3$} (C);
		\path (B) edge node[pos=1, right, font=\small] {$x_2$} ([xshift=1.2cm]B);
		\path (C) edge node[pos=1, right, font=\small] {$x_4$} ([xshift=1.2cm]C);
	\end{tikzpicture}
	\caption{Example 1(acyclic).}
	\label{fig:cfg:1:a}
\end{subfigure}
\begin{subfigure}[b]{0.25\columnwidth}\centering
    \begin{tikzpicture}[node/.style={draw=none},
    	factor/.style={rectangle, minimum size=.7cm, draw},
    	varNode/.style={circle, minimum size=.3cm,draw,inner sep=0pt, outer sep=0pt}]
		\node[factor] (A) {$f_{\mathrm{A}}$};
		\node[factor] (B)[below = 0.2cm of A] {$f_{\mathrm{B}}$};
		\node (E)[below = 0.2cm of B] {};
		\node[factor] (C)[below = 0.2cm of E] {$f_{\mathrm{C}}$};
		\node[varNode] (x1)[right=1.2cm of A] {};
	    	\node[font=\small, right=-3pt of x1] {$x_1$};
		\node[varNode] (x2)[right=1.2cm of B] {};
	    	\node[font=\small, right=-3pt of x2] {$x_2$};
		\node[varNode] (x3) at (E-|x2) {};
	    	\node[font=\small, right=-3pt of x3] {$x_3$};
		\node[varNode] (x4)[right=1.2cm of C] {};
	    	\node[font=\small, right=-3pt of x4] {$x_4$};
		\path (A) edge (x1);
		\path (A) edge (x3);
		\path (B) edge (x1);
		\path (B) edge (x2);
		\path (B) edge (x3);
		\path (C) edge (x3);
		\path (C) edge (x4);
		
		\path (C) edge[draw=none] node[factor, yshift=-.5cm, anchor=north] (A) {$f_{\mathrm{A}}$} (x4);
		\node[factor] (B)[below=.5cm of A] {$f_{\mathrm{B}}$};
		\node[factor] (C)[below=.5cm of B] {$f_{\mathrm{C}}$};
		\path (A) edge node[right, font=\small] {$x_1$} (B);
		\path (B) edge node[above, font=\small, pos=1, anchor=south east] {$x_2$} ([xshift=1.2cm]B);
		\path (C) edge node[above, font=\small, pos=1, anchor=south east] {$x_4$} ([xshift=1.2cm]C);
		\node[factor] (E)[left=.5cm of B] {$=$};
		\path (E) edge node[above, font=\small, sloped] {$x'_3$} (A);
		\path (E) edge node[above, font=\small, sloped] {$x_3$} (B);
		\path (E) edge node[below, font=\small, sloped] {$x''_3$} (C);
	\end{tikzpicture}
	\caption{Example 1(cyclic).}
	\label{fig:cfg:1:b}
\end{subfigure}
\caption{Some factor graphs.}
\end{figure}
\begin{example}[Two simple examples]\label{ex:cfg:1}
The factor graph in Figure~\ref{fig:cfg:1:a} depicts the factorization
\begin{equation}\label{eq:classical:factorization:example:1}
    g(x_1, x_2, x_3,x_4) = f_{\mathrm{A}}(x_1) \cdot f_{\mathrm{B}}(x_1,x_2,x_3) \cdot f_{\mathrm{C}}(x_3,x_4).
\end{equation}
Note that in this factor graph, $\deg(i)\leq 2$ for each $i\in\set{V}$.
In this case, one can simplify the graph by drawing all the vertices in $\set{V}$ as edges, as shown in the bottom part of the figure.
Note that vertices in $\set{V}$ with degree 1 result in so-called half-edges.
The bottom graph is often known as a \emph{normal} factor graph (NFG)\index{normal factor graph}~\cite{forney2001codes, al2011normal}.
\par
The factor graph in Figure~\ref{fig:cfg:1:b} depicts the factorization
\begin{equation}\label{eq:classical:factorization:example:2}
g(x_1, x_2, x_3,x_4) = f_{\mathrm{A}}(x_1,x_3) \cdot f_{\mathrm{B}}(x_1,x_2,x_3) \cdot f_{\mathrm{C}}(x_3,x_4).
\end{equation}
In comparison with the previous example, this factor graph is cyclic.
Despite the fact that $\deg(x_3)>2$, we can still redraw this graph as an NFG by introducing an \emph{equality} node, as depicted in the bottom part of Figure~\ref{fig:cfg:1:b}.
\end{example}
We formalize the above discussion as follows.
\begin{definition}[Factor Graph]\index{factor graph}
A \emph{factor graph} (FG) is a bipartite graph $\set{G}=(\set{V},\set{F},\set{E}\subset \set{V}\times\set{F})$ associated with a variable set $\mathfrak{V}$ and a factor set $\mathfrak{F}$, where
\begin{itemize}
	\item $\mathfrak{V}=\{\set{X}_i\}_{i\in\set{V}}$ is indexed by $\set{V}$, and each element of $\mathfrak{V}$ is a set (\aka alphabets);
	\item $\mathfrak{F}=\{f_a\}_{a\in\set{F}}$ is indexed by $\set{F}$, and $f_a:\bigtimes_{i\in\nb{a}}\set{X}_i\to \Reals$ for each $a\in\set{F}$.
\end{itemize}
The function $g(\vx)\defeq\prod_{a\in\set{F}} f_a(\vx_\nb{a})$ is called the \emph{global function} of $\set{G}$, and in this case, $\set{G}$ is said to be representing the factorization $g(\vx)=\prod_{a\in\set{F}} f_a(\vx_\nb{a})$.\footnote{Here, $\vx\defeq(x_i)_{i\in\set{V}}$. This also applies for later appearances of ``$\vx$'' as an argument of global functions.}
A factor graph is \emph{normal} if the degree of any vertex in $\set{V}$ is at most 2.
\end{definition}
\begin{remark}
We often redraw the vertices in $\set{V}$ in an NFG as edges, as in Figure~\ref{fig:cfg:1:a}.
\end{remark}
\begin{remark}
Any factor graph can be converted into an NFG by properly introducing equality node(s), as in the bottom part of~Figure~\ref{fig:cfg:1:b}.
\end{remark}
\subsection{Marginals and Partition Sums}
The \emph{marginals} (or \emph{partition sum}) of a multivariable function are the results of the summation over some (or all) of its arguments.
Computing marginals and partition sums of factorizations is a recurring problem in physics and computer science, as demonstrated in the following two examples.
\begin{example}[A 2-D Ising Model~\cite{molkaraie2013partition}]
Figure~\ref{fig:cfg:2:a} depicts the factorization of the configuration probability of a 2-D $n\times n$ Ising model:
\begin{equation}
p^{(\beta)}(\vx_{1,1}^{n,n}) \propto g^{(\beta)}(\vx_{1,1}^{n,n}) \defeq \prod_{i,j=1,\ldots,n} h_{i,j}^{(\beta)}(x_{i,j})\cdot \prod_{i,j,i'j'\in\{1,\ldots,n\}\atop \abs{i-i'}+\abs{j-j'}=1} f^{(\beta)}(x_{i,j},x_{i',j'}),
\end{equation}
where 
\begin{itemize}
\item $x_{i,j}\in\{-1,1\}$ describes the spin configuration of the particle at location $(i,j)$;
\item $\beta\in(0,+\infty)$ is the inverse temperature;
\item $h_{i,j}^{(\beta)}(x_{i,j})\defeq\exp\left(-\beta\cdot\tilde{h}_{i,j}\cdot x_{i,j}\right)$, where $(\tilde{h}_{i,j}\!\cdot\! x_{i,j})$ is the energy contributed by the external magnetic field on the particle at $(i,j)$;
\item $f^{(\beta)}(x_{i,j},x_{i',j'})\defeq \exp\left(-\beta\cdot x_{i,j}\, x_{i',j'}\right)$, where $(x_{i,j}\!\cdot\! x_{i',j'})$ is the energy contributed by the ferromagnetic interaction between the particles $(i,j)$ and $(i',j')$.
\end{itemize}
In this case, the \emph{partition function}\index{partition function} (as a function of $\beta$) of the system is given by 
\begin{equation}
Z(\beta) = \sum_{\vx_{1,1}^{n,n}}\exp\Bigg( -\beta \cdot \Bigg( \sum_{i,j=1,\ldots,n} \tilde{h}_{i,j} \cdot x_{i,j} + \hspace{-16pt} \sum_{i,j,i'j'\in\{1,\ldots,n\} \atop \abs{i-i'}+\abs{j-j'}=1} \hspace{-16pt} x_{i,j} x_{i',j'} \Bigg) \Bigg) = \sum_{\vx_{1,1}^{n,n}} g^{(\beta)}(\vx_{1,1}^{n,n}),
\end{equation}
and the \emph{Helmholtz free energy}\index{free energy!Helmholtz} of the system is
\begin{equation}
F = -\beta^{-1} \log{Z(\beta)},
\end{equation}
which is the maximum useful work obtainable from the system.\footnote{Note that the term ``partition function'' and ``Helmholtz free energy'' are established terms in thermodynamics, and describe a physical system's statistical properties. These two terms were later borrowed, and defined differently in the context of statistical models and graphical methods.}
\end{example}
We borrow the terms \emph{partition function} and \emph{Helmholtz free energy} from statistical mechanics.
\begin{definition}[Partition Sum, Free energy]\label{def:parition:free:energy}
Given a factor graph $\set{G}$ describing the factorization $g(\vx)=\prod_{a\in\set{F}}f_a(\vx_\nb{a})$, the \emph{partition sum}\index{partition sum} of $\set{G}$ is defined as
\begin{equation}
Z(\set{G})\defeq\sum_x g(x) = \sum_x \prod_{a\in\set{F}} f_{a}(x_{\partial a}).
\end{equation}
Also, the \emph{(Helmholtz) free energy}\index{free energy!Helmholtz} of $\set{G}$ is defined as $\helmholtz(\set{G})\defeq -\log{Z(\set{G})}$.
\end{definition}
\begin{example}[Bit-wise Decoding of an LDPC Code~\cite{gallager1962low}]
Figure~\ref{fig:cfg:2:b} depicts the factorization of the conditional distribution of the inputs $(\rv{X}_1,\ldots,\rv{X}_n)$ given the detected outputs $(y_1,\ldots,y_n)$, using a regular LDPC code and {i.i.d.}~copies of a memoryless channel.
Namely,
    \[
    \prob_{\rv{X}_1^n|\rv{Y}_1^n}(\vx_1^n|\vy_1^n) \propto
    g(\vx_1^n,\vy_1^n) \defeq
    \prod_{\ell=1}^n \prob_{\rv{Y}|\rv{X}}(y_\ell|x_\ell)\cdot\prod_{a=1}^d f_+(\vx_\nb{a}),
    \]
where 
    \begin{itemize}
    \item the input/output alphabet are some finite sets;
    \item $\prob_{\rv{Y}|\rv{X}}$ is the conditional \pmf characterizing the memoryless channel in use;
    \item $d$ is a positive integer smaller than $n$ and where, for each $a\in\{1,\ldots,d\}$, $\nb{a}\subset\{1,\ldots,n\}$ and $\size{\nb{a}}=k$ for some positive integer $k\ll n$;
    \item $f_+(\vect{v})\defeq \begin{cases}1 & \sum_{i=1}^k v_i = 0_\mathbb{F}\\ 0 & \text{otherwise}\end{cases},\ \forall \vect{v}\in\mathbb{F}^{k}$.
    \end{itemize}
In this case, bit-wise decoding is equivalent to computing the marginals
\[
\prob_{\rv{X}_i|\rv{Y}_1^n}(x_i|\vy_1^n) \propto
\sum_{\vx_{\{1,\ldots,n\}\xk{i}}} g(\vx_1^n,\vy_1^n) =
\sum_{\vx_{\{1,\ldots,n\}\xk{i}}} \prod_{\ell=1}^n \prob_{\rv{Y}|\rv{X}}(y_\ell|x_\ell) \cdot \prod_{a=1}^d f_+(\vx_\nb{a})
\]
for each $i=1,\ldots,n$.
\end{example}
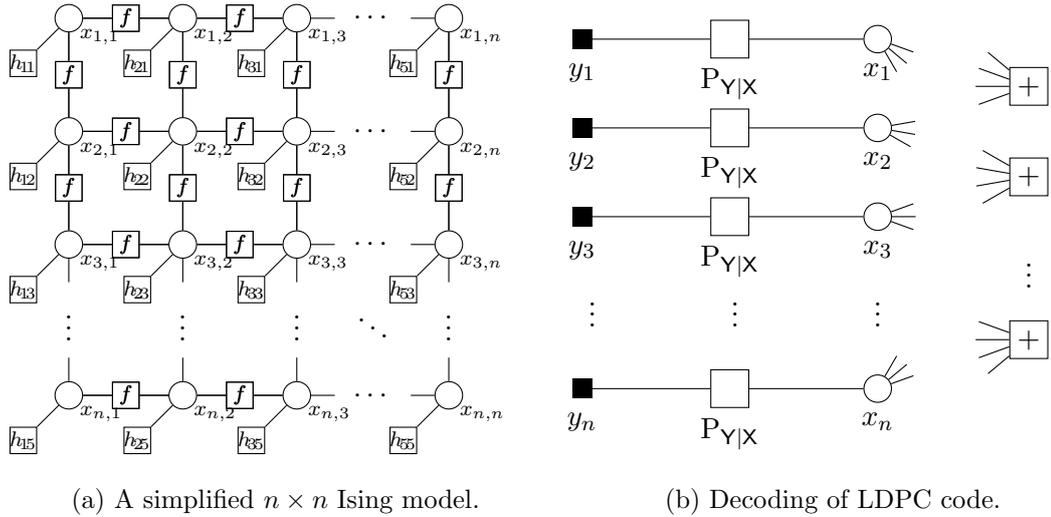
\begin{figure}
\begin{subfigure}[b]{.5\textwidth}
\begin{tikzpicture}[node/.style={draw=none},
    factor/.style={rectangle, minimum size=.35cm, draw},
    varNode/.style={circle, minimum size=.3cm, font=\small,draw},
    node distance=1.5cm]
	\pgfmathsetmacro\dis{1.5};
	\foreach \x in {1,2,3,5}{\foreach\y in {1,2,3,5}{
		\ifthenelse{\x=5}{
			\def\X{n};
			\pgfmathsetmacro\posX{\dis*(\x-2)+2};
		}{
			\def\X{\x};
			\pgfmathsetmacro\posX{\dis*\x};
		};
		\ifthenelse{\y=5}{
			\def\Y{n};
			\pgfmathsetmacro\posY{-\dis*(\y-2)-2};
		}{
			\def\Y{\y};
			\pgfmathsetmacro\posY{-\dis*\y};
		};
		\node[varNode] (x\x\y) at (\posX cm,\posY cm){};
		\node [below right = -.17cm of x\x\y, font=\scriptsize] {$x_{\Y,\X}$};
	}}
	\foreach \x in {1,2,3,5}{
		\path (x\x3) -- (x\x5) coordinate[midway] (m\x4);
		\node at (m\x4){$\vdots$};
	}
	\foreach \y in {1,2,3,5}{
		\path (x3\y) -- (x5\y) coordinate[midway] (m4\y);
		\node at (m4\y){$\cdots$};
	}
	\path (x33) -- (x55) coordinate[midway] (m44);
	\node at (m44){$\ddots$};
	\foreach \x in {1,2,3,5}{\foreach\y in {1,2,3,5}{
		\ifthenelse{\x>1}{
			\ifthenelse{\x=5}{
				\path (x\x\y) edge ([xshift=-.5cm]x\x\y);
			}{
				\pgfmathtruncatemacro\X{\x-1};
				\path (x\x\y) -- (x\X\y) coordinate[midway] (m\x\y\X\y);
				\node[factor] (f\x\y\X\y) at (m\x\y\X\y) {};
				\node[font=\scriptsize] at (m\x\y\X\y) {$f$};
				\path (x\x\y) edge (f\x\y\X\y);
				\path (f\x\y\X\y) edge (x\X\y);
			}
		}{};
		\ifthenelse{\y>1}{
			\ifthenelse{\y=5}{
				\path (x\x\y) edge ([yshift=.5cm]x\x\y);
			}{
				\pgfmathtruncatemacro\Y{\y-1};
				\path (x\x\y) -- (x\x\Y) coordinate[midway] (m\x\y\x\Y);
				\node[factor] (f\x\y\x\Y) at (m\x\y\x\Y) {};
				\node[font=\scriptsize] at (m\x\y\x\Y) {$f$};
				\path (x\x\y) edge (f\x\y\x\Y);
				\path (f\x\y\x\Y) edge (x\x\Y);
			}
		}{};
		\ifthenelse{\x<5}{
			\ifthenelse{\x=3}{
				\path (x\x\y) edge ([xshift=.5cm]x\x\y);
			}{
				\pgfmathtruncatemacro\X{\x+1};
				\path (x\x\y) -- (x\X\y) coordinate[midway] (m\x\y\X\y);
				\node[factor] (f\x\y\X\y) at (m\x\y\X\y) {};
				\node[font=\scriptsize] at (m\x\y\X\y) {$f$};
				\path (x\x\y) edge (f\x\y\X\y);
				\path (f\x\y\X\y) edge (x\X\y);
			}
		}{};
		\ifthenelse{\y<5}{
			\ifthenelse{\y=3}{
				\path (x\x\y) edge ([yshift=-.5cm]x\x\y);
			}{
				\pgfmathtruncatemacro\Y{\y+1};
				\path (x\x\y) -- (x\x\Y) coordinate[midway] (m\x\y\x\Y);
				\node[factor] (f\x\y\x\Y) at (m\x\y\x\Y) {};
				\node[font=\scriptsize] at (m\x\y\x\Y) {$f$};
				\path (x\x\y) edge (f\x\y\x\Y);
				\path (f\x\y\x\Y) edge (x\x\Y);
			}
		}{};
	}}
	\foreach \x in {1,2,3,5}{\foreach\y in {1,2,3,5}{
		\node[factor] (h\x\y) [below left = .4cm of x\x\y] {};
		\node [font=\scriptsize] at (h\x\y) {$h_{\!\x\!\y}$};
		\path (x\x\y) edge (h\x\y);
	}}
\end{tikzpicture}
\caption{A simplified $n\times n$ Ising model.}\label{fig:cfg:2:a}
\end{subfigure}
\begin{subfigure}[b]{.5\textwidth}
\begin{tikzpicture}[node/.style={draw=none},
    factor/.style={rectangle, minimum size=.5cm, draw},
    varNode/.style={circle, minimum size=.3cm, font=\small, draw},
    fixedNode/.style={rectangle, minimum size=.2cm, fill},
    node distance=1.2cm]
	\node[factor] (f1) {}; \node at (f1) {$+$};
	\node[factor] (f2)[below of=f1] {}; \node at (f2) {$+$};
	\node (f)[below of=f2] {$\vdots$}; 
	\node[factor] (fn)[below=.3cm of f] {}; \node at (fn) {$+$};

	\path (f2) -- (f) coordinate[midway] (M);
	\node[varNode] (y3) [left=1.8cm of M] {};\node [below=0pt of y3] {$x_3$};
	\node[varNode] (y2) [above=.8cm of y3] {}; \node [below=0pt of y2] {$x_2$};
	\node[varNode] (y1) [above=.8cm of y2] {}; \node [below=0pt of y1] {$x_1$};
	\node (y) [below=.6cm of y3] {$\vdots$};
	\node[varNode] (yn) [below=.5cm of y] {}; \node [below=0pt of yn] {$x_n$};
	
	\pgfmathsetmacro\rad{.5};
	\pgfmathsetmacro\y{\rad *cos(70)};
	\pgfmathsetmacro\x{\rad *sin(70)};
	\path (y1) edge[draw] ([xshift=\x cm, yshift=-\y cm]y1);
	\pgfmathsetmacro\y{\rad *cos(50)};
	\pgfmathsetmacro\x{\rad *sin(50)};
	\path (y1) edge[draw] ([xshift=\x cm, yshift=-\y cm]y1);
	\pgfmathsetmacro\y{\rad *cos(30)};
	\pgfmathsetmacro\x{\rad *sin(30)};
	\path (y1) edge[draw] ([xshift=\x cm, yshift=-\y cm]y1);
	
	\pgfmathsetmacro\y{\rad *cos(100)};
	\pgfmathsetmacro\x{\rad *sin(100)};
	\path (y2) edge[draw] ([xshift=\x cm, yshift=-\y cm]y2);
	\pgfmathsetmacro\y{\rad *cos(80)};
	\pgfmathsetmacro\x{\rad *sin(80)};
	\path (y2) edge[draw] ([xshift=\x cm, yshift=-\y cm]y2);
	\pgfmathsetmacro\y{\rad *cos(60)};
	\pgfmathsetmacro\x{\rad *sin(60)};
	\path (y2) edge[draw] ([xshift=\x cm, yshift=-\y cm]y2);
	
	\pgfmathsetmacro\y{\rad *cos(110)};
	\pgfmathsetmacro\x{\rad *sin(110)};
	\path (y3) edge[draw] ([xshift=\x cm, yshift=-\y cm]y3);
	\pgfmathsetmacro\y{\rad *cos(90)};
	\pgfmathsetmacro\x{\rad *sin(90)};
	\path (y3) edge[draw] ([xshift=\x cm, yshift=-\y cm]y3);
	\pgfmathsetmacro\y{\rad *cos(70)};
	\pgfmathsetmacro\x{\rad *sin(70)};
	\path (y3) edge[draw] ([xshift=\x cm, yshift=-\y cm]y3);
	
	\pgfmathsetmacro\y{\rad *cos(110)};
	\pgfmathsetmacro\x{\rad *sin(110)};
	\path (yn) edge[draw] ([xshift=\x cm, yshift=-\y cm]yn);
	\pgfmathsetmacro\y{\rad *cos(130)};
	\pgfmathsetmacro\x{\rad *sin(130)};
	\path (yn) edge[draw] ([xshift=\x cm, yshift=-\y cm]yn);
	\pgfmathsetmacro\y{\rad *cos(150)};
	\pgfmathsetmacro\x{\rad *sin(150)};
	\path (yn) edge[draw] ([xshift=\x cm, yshift=-\y cm]yn);
	
	\pgfmathsetmacro\rad{.7};
	\pgfmathsetmacro\y{\rad *cos(110)};
	\pgfmathsetmacro\x{\rad *sin(110)};
	\path (f1) edge[draw] ([xshift=-\x cm, yshift=-\y cm]f1);
	\pgfmathsetmacro\y{\rad *cos(130)};
	\pgfmathsetmacro\x{\rad *sin(130)};
	\path (f1) edge[draw] ([xshift=-\x cm, yshift=-\y cm]f1);
	\pgfmathsetmacro\y{\rad *cos(90)};
	\pgfmathsetmacro\x{\rad *sin(90)};
	\path (f1) edge[draw] ([xshift=-\x cm, yshift=-\y cm]f1);
	\pgfmathsetmacro\y{\rad *cos(70)};
	\pgfmathsetmacro\x{\rad *sin(70)};
	\path (f1) edge[draw] ([xshift=-\x cm, yshift=-\y cm]f1);
	
	\pgfmathsetmacro\y{\rad *cos(100)};
	\pgfmathsetmacro\x{\rad *sin(100)};
	\path (f2) edge[draw] ([xshift=-\x cm, yshift=-\y cm]f2);
	\pgfmathsetmacro\y{\rad *cos(120)};
	\pgfmathsetmacro\x{\rad *sin(120)};
	\path (f2) edge[draw] ([xshift=-\x cm, yshift=-\y cm]f2);
	\pgfmathsetmacro\y{\rad *cos(80)};
	\pgfmathsetmacro\x{\rad *sin(80)};
	\path (f2) edge[draw] ([xshift=-\x cm, yshift=-\y cm]f2);
	\pgfmathsetmacro\y{\rad *cos(60)};
	\pgfmathsetmacro\x{\rad *sin(60)};
	\path (f2) edge[draw] ([xshift=-\x cm, yshift=-\y cm]f2);
	
	\pgfmathsetmacro\y{\rad *cos(70)};
	\pgfmathsetmacro\x{\rad *sin(70)};
	\path (fn) edge[draw] ([xshift=-\x cm, yshift=-\y cm]fn);
	\pgfmathsetmacro\y{\rad *cos(50)};
	\pgfmathsetmacro\x{\rad *sin(50)};
	\path (fn) edge[draw] ([xshift=-\x cm, yshift=-\y cm]fn);
	\pgfmathsetmacro\y{\rad *cos(90)};
	\pgfmathsetmacro\x{\rad *sin(90)};
	\path (fn) edge[draw] ([xshift=-\x cm, yshift=-\y cm]fn);
	\pgfmathsetmacro\y{\rad *cos(110)};
	\pgfmathsetmacro\x{\rad *sin(110)};
	\path (fn) edge[draw] ([xshift=-\x cm, yshift=-\y cm]fn);
	
	\node[factor] (c1) [left=1.5cm of y1] {}; \node[below=0cm of c1] {$\prob_{\rv{Y}|\rv{X}}$};
	\node[factor] (c2) [left=1.5cm of y2] {}; \node[below=0cm of c2] {$\prob_{\rv{Y}|\rv{X}}$};
	\node[factor] (c3) [left=1.5cm of y3] {}; \node[below=0cm of c3] {$\prob_{\rv{Y}|\rv{X}}$};
	\node (c) [left=1.5cm of y] {$\vdots$};
	\node[factor] (cn) [left=1.5cm of yn] {}; \node[below=0cm of cn] {$\prob_{\rv{Y}|\rv{X}}$};
	
	\node[varNode, draw=none] (x1) [left=1.5cm of c1] {}; \node[below=0cm of x1] {$y_1$};
	\node[fixedNode](X1) at (x1) {};
	\node[varNode, draw=none] (x2) [left=1.5cm of c2] {}; \node[below=0cm of x2] {$y_2$};
	\node[fixedNode](X2) at (x2) {};
	\node[varNode, draw=none] (x3) [left=1.5cm of c3] {}; \node[below=0cm of x3] {$y_3$};
	\node (x) [left=1.5cm of c] {$\vdots$}; 
	\node[fixedNode](X3) at (x3) {};
	\node[varNode, draw=none] (xn) [left=1.5cm of cn] {}; \node[below=0cm of xn] {$y_n$};
	\node[fixedNode](Xn) at (xn) {};
	
	\draw (x1.center) -- (c1) -- (y1);
	\draw (x2.center) -- (c2) -- (y2);
	\draw (x3.center) -- (c3) -- (y3);
	\draw (xn.center) -- (cn) -- (yn);
\end{tikzpicture}
\caption{Decoding of LDPC code.}\label{fig:cfg:2:b}
\end{subfigure}
\caption{Examples of applications of factor graphs.}
\end{figure}
\subsection{Computing the Marginals/Partition Sums of Acyclic Factor Graphs}\label{subsec:marginal:acyclic:FGs}
In both examples in the last section, the quantities of interest are some marginals (or partition sum) of the factorization involved.
However, direct computation of the marginals (or partition sum) is not scalable, as the number of configurations grows exponentially \wrt the number of variables.
Utilizing the distributivity of $\cdot$ over $+$, one can compute the marginals more efficiently in some, but not all, cases (\eg, a complete bipartite factor graph).
This is particularly the case for acyclic factor graphs, where marginals can be computed by summing over exactly \emph{one} variable at each step.
For example, the marginal \wrt $x_0$ of the NFG in Figure~\ref{fig:CtB:Step_by_Step}(0) can be computed in $n+1$ steps as
\begin{align}\label{eq:markov:1}
p(x_0) &\defeq \sum_{x_1,\ldots,x_n} \mu(x_0) \cdot \prod_{\ell=1}^n p(x_\ell|x_{\ell-1})\\
&= \mu(x_0) \cdot \sum_{x_1} \!\left(\! p(x_1|x_0)\sum_{x_2} \!\left(\! p(x_2|x_1) \cdots \sum_{x_{n-1}} \!\left(\! p(x_{n-1}|x_{n-2}) \sum_{x_n} p(x_n|x_{n-1}) \!\right)\! \cdots \!\right)\! \!\right)\!. \nonumber
\end{align}
Such computations (by exploiting distributivity) can be illustrated as a sequence of ``closing-the-box'' (CtB) operations\index{closing-the-box operations}, where at each step we replace the subgraph \emph{in the box} with the result of the summation over the variables \emph{inside} the box (as shown in Figure~\ref{fig:CtB:Step_by_Step}(1)--($n\!+\!1$)).
\begin{figure}\centering
\begin{tikzpicture}[node/.style={draw=none},
	factor/.style={rectangle, minimum size=.7cm, draw},
	node distance=2cm]
	\node[factor] (X00) {$\mu$};
	\node[left = 1.5cm of X00.west, anchor = west] {0)};
	\node[factor, right of = X00] (X01) {$\prob$};
	\node[factor, right of = X01] (X02) {$\prob$};
	\node[right of = X02] (X0m) {$\cdots$};
	\node[factor,right of = X0m] (X0n-1) {$\prob$};
	\node[factor,right of = X0n-1] (X0n) {$\prob$};
	\path (X00) edge[draw] node[above=0] {$x_0$} (X01);
	\path (X01) edge[draw] node[above=0] {$x_1$} (X02);
	\path (X02) edge[draw] node[above=0] {$x_2$} (X0m);
	\path (X0m) edge[draw] node[above=0] {$x_{n\!-\!2}$} (X0n-1);
	\path (X0n-1) edge[draw] node[above=0] {$x_{n\!-\!1}$} (X0n);
	\path (X0n) edge[draw] node[above=0] {$x_n$} ([xshift=1.5cm]X0n);
	\node[factor, below = 1cm of X00] (X10) {$\mu$};
	\node[left = 1.5cm of X10.west, anchor = west] {1)};
	\node[factor, right of = X10] (X11) {$\prob$};
	\node[factor, right of = X11] (X12) {$\prob$};
	\node[right of = X12] (X1m) {$\cdots$};
	\node[factor,right of = X1m] (X1n-1) {$\prob$};
	\node[factor,right of = X1n-1] (X1n) {$\prob$};
	\path (X10) edge[draw] node[above=0] {$x_0$} (X11);
	\path (X11) edge[draw] node[above=0] {$x_1$} (X12);
	\path (X12) edge[draw] node[above=0] {$x_2$} (X1m);
	\path (X1m) edge[draw] node[above=0] {$x_{n\!-\!2}$} (X1n-1);
	\path (X1n-1) edge[draw] node[above=0] {$x_{n\!-\!1}$} (X1n);
	\path (X1n) edge[draw] node[above=0] {$x_n$} ([xshift=1.5cm]X1n);
	\draw[dashed] ([xshift=-2mm,yshift=2mm]X1n.north west) rectangle ([xshift=17mm,yshift=-2mm]X1n.south east);
	\node[factor, below = 1cm of X10] (X20) {$\mu$};
	\node[left = 1.5cm of X20.west, anchor = west] {2)};
	\node[factor, right of = X20] (X21) {$\prob$};
	\node[factor, right of = X21] (X22) {$\prob$};
	\node[right of = X22] (X2m) {$\cdots$};
	\node[factor,right of = X2m] (X2n-1) {$\prob$};
	\node[factor,right of = X2n-1, pattern=north west lines] (X2n) {};
	\path (X20) edge[draw] node[above=0] {$x_0$} (X21);
	\path (X21) edge[draw] node[above=0] {$x_1$} (X22);
	\path (X22) edge[draw] node[above=0] {$x_2$} (X2m);
	\path (X2m) edge[draw] node[above=0] {$x_{n\!-\!2}$} (X2n-1);
	\path (X2n-1) edge[draw] node[above=0] {$x_{n\!-\!1}$} (X2n);
	\draw[dashed] (X2n.north west) -- ([xshift=-2mm,yshift=-2mm]X1n.south west);
	\draw[dashed] (X2n.north east) -- ([xshift=17mm,yshift=-2mm]X1n.south east);
	\draw[dashed] ([xshift=-2mm,yshift=2mm]X2n-1.north west) rectangle ([xshift=2mm,yshift=-2mm]X2n.south east);
	\node[factor, below = 1cm of X20] (X30) {$\mu$};
	\node[left = 1.5cm of X30.west, anchor = west] (l3) {3)};
	\node[factor, right of = X30] (X31) {$\prob$};
	\node[factor, right of = X31] (X32) {$\prob$};
	\node[right of = X32] (X3m) {$\cdots$};
	\node[factor,right of = X3m, pattern=north west lines] (X3n-1) {};
	\path (X30) edge[draw] node[above=0] {$x_0$} (X31);
	\path (X31) edge[draw] node[above=0] {$x_1$} (X32);
	\path (X32) edge[draw] node[above=0] {$x_2$} (X3m);
	\path (X3m) edge[draw] node[above=0] {$x_{n\!-\!2}$} (X3n-1);
	\draw[dashed] (X3n-1.north west) -- ([xshift=-2mm,yshift=-2mm]X2n-1.south west);
	\draw[dashed] (X3n-1.north east) -- ([xshift=2mm,yshift=-2mm]X2n.south east);
	\draw[dashed] ([xshift=-2mm,yshift=2mm]X32.north west) rectangle ([xshift=2mm,yshift=-2mm]X3n-1.south east);
	\node[factor, below = 1.5cm of X30] (X40) {$\mu$};
	\node[left = 1.5cm of X40.west, anchor = west] (ln) {$n$)};
	\path (l3) edge[draw = none] node {$\vdots$} (ln);
	\node[factor, right of = X40] (X41) {$\prob$};
	\node[factor, right of = X41, pattern=north west lines] (X42) {};
	\path (X40) edge[draw] node[above=0] {$x_0$} (X41);
	\path (X41) edge[draw] node[above=0] {$x_1$} (X42);
	\path (X42.north west) edge[draw,dashed] node (mark1) {} ([xshift=-2mm,yshift=-2mm]X32.south west);
	\path (X42.north east) edge[draw,dashed] node (mark2) {} ([xshift=2mm,yshift=-2mm]X3n-1.south east);
	\path (mark1) edge[draw=none] node {$\cdots$} (mark2);
	\draw[dashed] ([xshift=-2mm,yshift=2mm]X41.north west) rectangle ([xshift=2mm,yshift=-2mm]X42.south east);
	\node[factor, below = 1cm of X40] (X50) {$\mu$};
	\node[left = 1.5cm of X50.west, anchor = west] {$n\!+\!1$)};
	\node[factor, right of = X50, pattern=north west lines] (X51) {};
	\path (X50) edge[draw] node[above=0] {$x_0$} (X51);
	\draw[dashed] (X51.north west) -- ([xshift=-2mm,yshift=-2mm]X41.south west);
	\draw[dashed] (X51.north east) -- ([xshift=2mm,yshift=-2mm]X42.south east);
\end{tikzpicture}
\caption{A chain NFG, and the process to compute the marginal \wrt $x_0$ as described by~\eqref{eq:markov:1}.}
\label{fig:CtB:Step_by_Step}
\end{figure}
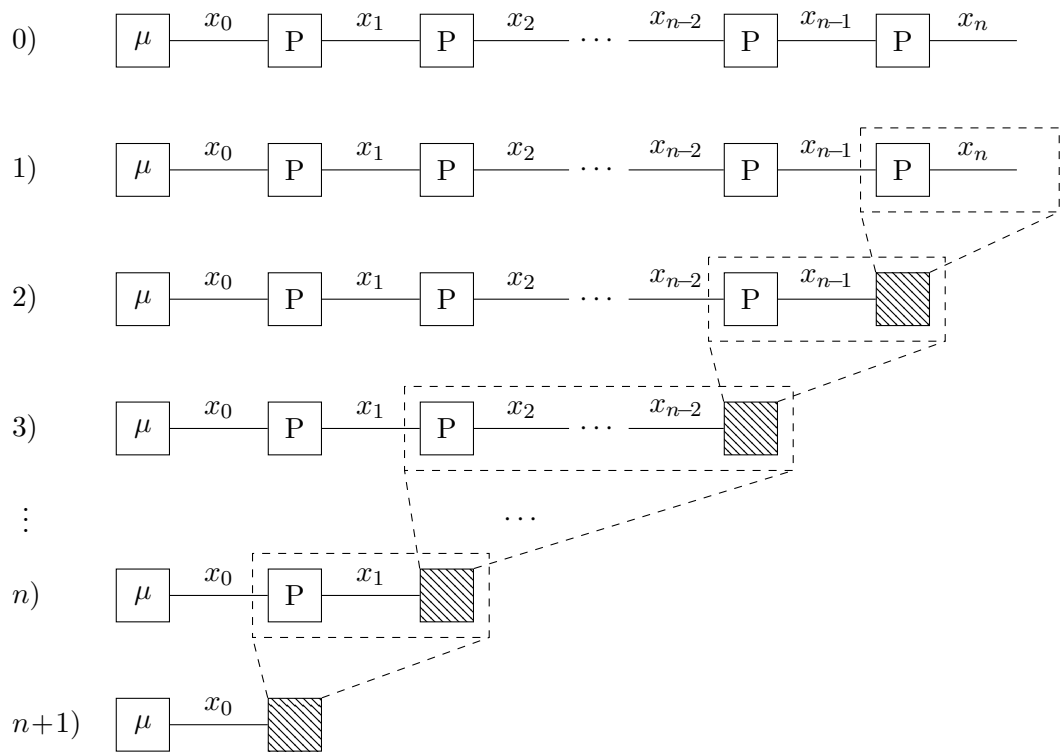
\par
Equivalently, the above process can also be described as a \emph{message-passing algorithm}\index{message-passing algorithm}.
Namely, for an acyclic factor graph $\left((\set{V},\set{F},\set{E}),\{\set{X}_i\}_{i\in\set{V}},\{f_a\}_{a\in\set{F}}\right)$, there exists a unique set of functions (\aka \emph{messages}) $\{m_{i\to a},m_{a\to i}:\set{X}_i\to \Reals\}_{(i,a)\in\set{E}}$ such that
\begin{align}
    \label{eq:msg:acyclic:update:1}
    m_{i\to a}(x_i) &= \prod_{c\in\nb{i}\xk{a}} m_{c\to i}(x_i)\\
    \label{eq:msg:acyclic:update:2}
	m_{a\to i}(x_i) &= \sum_{\vx_{\nb{a}\xk{i}}} f_a(\vx_\nb{a})\cdot\prod_{j\in\nb{a}\xk{i}} m_{j\to a}(x_j)
\end{align}
for all $(i,a)\in\set{E}$, where we treat a vacuous product as constant 1.
Intuitively speaking, since the factor graph is acyclic, one can always construct a set of messages satisfying~\eqref{eq:msg:acyclic:update:1} and~\eqref{eq:msg:acyclic:update:2} from the leaves up to the root.
We state and prove this result as the following theorem.
\begin{theorem}\label{thm:BP:tree}
Consider an acyclic factor graph $((\set{V},\set{F},\set{E}),\{\set{X}_i\}_{i\in\set{V}},\{f_a\}_{a\in\set{F}})$.
There exists a unique set of messages $\{m_{i\to a},m_{a\to i}:\set{X}_i\to \Reals\}_{(i,a)\in\set{E}}$ satisfying~\eqref{eq:msg:acyclic:update:1} and~\eqref{eq:msg:acyclic:update:2}, and
\begin{align}
	\label{eq:msg:bi}
    \sum_{\vx_{\set{V}\xk{i}}} \prod_{a\in\set{F}} f_a(\vx_\nb{a}) &= \prod_{c\in\nb{i}} m_{c\to i}(x_i) && \forall x_i,\, \forall i\in\set{V},\\
    \label{eq:msg:ba}
    \sum_{\vx_{\set{V}\xk{\nb{a}}}} \prod_{c\in\set{F}} f_{c}(\vx_\nb{c}) &= f_a(\vx_\nb{a})\cdot\prod_{i\in\nb{a}} m_{i\to a}(x_i) && \forall\vx_\nb{a},\, \forall a\in\set{F}.
\end{align}
\end{theorem}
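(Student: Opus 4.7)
The plan is to construct the messages explicitly via a tree decomposition and then verify both the fixed-point equations and the marginal identities. Without loss of generality assume the factor graph is connected (the forest case being componentwise identical). Fix any edge $(i,a) \in \set{E}$; removing it splits the tree into two connected components. Let $T_{i\to a}$ denote the component containing $i$, with variable-index set $V_{i\to a} \subset \set{V}$ and factor-index set $F_{i\to a} \subset \set{F}$; let $T_{a\to i}$ denote the other component, with sets $V_{a\to i}$ and $F_{a\to i}$ (so $a \in F_{a\to i}$ and $i \notin V_{a\to i}$). Define
\begin{align*}
m_{i\to a}(x_i) &\defeq \sum_{\vx_{V_{i\to a} \xk \{i\}}} \prod_{b \in F_{i\to a}} f_b(\vx_\nb{b}),\\
m_{a\to i}(x_i) &\defeq \sum_{\vx_{V_{a\to i}}} \prod_{b \in F_{a\to i}} f_b(\vx_\nb{b}),
\end{align*}
where in the second line $x_i$ enters as a free argument of $f_a \in F_{a\to i}$.

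Next I would verify the update rules. For \eqref{eq:msg:acyclic:update:1}, acyclicity decomposes $T_{i\to a}$ as the node $i$ joined to the disjoint union $\bigsqcup_{c \in \nb{i} \xk \{a\}} T_{c\to i}$, whose variable sets meet only at $i$. The product $\prod_{b \in F_{i\to a}} f_b(\vx_\nb{b})$ splits across these subtrees, and since the remaining summation ranges are disjoint, the sum factors into exactly $\prod_{c \in \nb{i} \xk \{a\}} m_{c\to i}(x_i)$. The argument for \eqref{eq:msg:acyclic:update:2} is analogous: $T_{a\to i}$ decomposes as $\{a\} \cup \bigsqcup_{j \in \nb{a} \xk \{i\}} T_{j\to a}$, the factor $f_a(\vx_\nb{a})$ is pulled outside the sum, and the sum over $\vx_{V_{a\to i}}$ splits as an outer sum over $\vx_{\nb{a} \xk \{i\}}$ times a product of inner sums matching the $m_{j\to a}$'s.

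Uniqueness follows by peeling off leaves. For any leaf variable $i$ with $\nb{i} = \{a\}$, equation \eqref{eq:msg:acyclic:update:1} with an empty product forces $m_{i\to a}(x_i) = 1$; analogously for a leaf factor. Orienting each edge along the unique tree path toward an arbitrarily chosen root, each message is expressed in \eqref{eq:msg:acyclic:update:1} or \eqref{eq:msg:acyclic:update:2} purely in terms of strictly ``earlier'' messages, so the whole set is determined by structural induction on the oriented edges. Finally, \eqref{eq:msg:bi} and \eqref{eq:msg:ba} follow by applying the same decomposition to the full tree, this time around $i$ (splitting it into $\{i\} \cup \bigsqcup_{c \in \nb{i}} T_{c\to i}$) or around $a$ (splitting it into $\{a\} \cup \bigsqcup_{j \in \nb{a}} T_{j\to a}$): the sum over the remaining variables distributes through the product of factors along the disjoint subtrees, yielding the advertised products of messages.

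The only real obstacle is the notational bookkeeping of the tree decomposition and of which variable ranges belong to which message; once those are pinned down, every verification reduces to a single application of the distributive law of multiplication over addition, applied recursively along the tree.
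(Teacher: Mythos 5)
Your construction of the messages $m_{i\to a}$ and $m_{a\to i}$ as sums of products over the two tree components created by deleting the edge $(i,a)$ is exactly the paper's existence argument, down to the notation modulo renaming $(T_{i\to a}, V_{i\to a}, F_{i\to a})$ for the paper's $(\set{V}_i^a,\set{F}_i^a,\set{E}_i^a)$, and your verification of the update rules and marginal identities by recursive application of distributivity matches the paper's four observations. Where you diverge is in uniqueness: the paper argues by contradiction, extending a path of disagreement $(a_0-i_0-a_1-i_1-\cdots)$ until it hits a leaf and derives a contradiction there; you instead argue constructively, peeling leaves and propagating determination inward by structural induction. Both are sound, and the two arguments are essentially dual to each other. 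One caveat about your phrasing: ``orienting each edge toward a root'' gives a preferred direction per \emph{edge}, but each edge carries two directed messages; to make the induction well-founded you need either two passes (leaf-to-root then root-to-leaf) or a uniform order such as ranking each directed message $m_{v\to w}$ by the size of the component $T_{v\to w}$ on its source side, under which every message depends only on strictly smaller-ranked ones. That detail is standard and easily supplied, so the proposal is correct and a legitimate alternative to the paper's contradiction argument.
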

\begin{proof}
Without loss of generality, we assume the graph $(\set{V},\set{F},\set{E})$ to be connected. \par
\textbf{Existence:}
Since the bipartite graph $(\set{V},\set{F},\set{E})$ is acyclic, by removing any edge $(i,a)\in\set{E}$, the graph is split into two disjoint bipartite graphs $(\set{V}_i^a,\set{F}_i^a,\set{E}_i^a)$ and $(\set{V}_a^i,\set{F}_a^i,\set{E}_a^i)$ where $i\in\set{V}_i^a$, $a\in\set{F}_a^i$, and $(\set{V},\set{F},\set{E}) = (\set{V}_i^a\sqcup\set{V}_a^i,\set{F}_i^a\sqcup\set{F}_a^i,\set{E}_i^a\sqcup\{(i,a)\}\sqcup\set{E}_a^i)$.
For each $(i,a)\in\set{E}$, we define
	\begin{align*}
	m_{i\to a}(x_i) &\defeq \sum_{x_j:\:j\in\set{V}_i^a\xk{i}}\ \prod_{c\in\set{F}_i^a} f_c(\vx_\nb{c}),\\
	m_{a\to i}(x_i) &\defeq \sum_{x_j:\:j\in\set{V}_a^i}\ \prod_{c\in\set{F}_a^i} f_c(\vx_\nb{c}).
	\end{align*}
Because the bipartite graph $(\set{V},\set{F},\set{E})$ is acyclic, we observe:
\begin{enumerate}
\item For each $(i,a)\in\set{E}$, $(\set{V}_i^a,\set{F}_i^a,\set{E}_i^a)=(\{i\}\cup\bigsqcup_{c\in\nb{i}\xk{a}}\set{V}_c^i, \bigsqcup_{c\in\nb{i}\xk{a}}\set{F}_c^i,\bigsqcup_{c\in\nb{i}\xk{a}}\set{E}_c^i)$,
\item For each $(i,a)\in\set{E}$, $(\set{V}_a^i,\set{F}_a^i,\set{E}_a^i)=(\bigsqcup_{j\in\nb{a}\xk{i}}\set{V}_j^a,\{a\}\cup\bigsqcup_{j\in\nb{a}\xk{i}}\set{F}_j^a,\bigsqcup_{j\in\nb{a}\xk{i}}\set{E}_j^a)$,
\item For each $i\in\set{V}$, $(\set{V},\set{F},\set{E})=(\{i\}\cup\bigsqcup_{a\in\nb{i}}\set{V}_a^i,\bigsqcup_{a\in\nb{i}}\set{F}_a^i,\setdef{(i,a)}{a\in\nb{i}}\cup\bigsqcup_{a\in\nb{i}}\set{E}_a^i)$,
\item For each $a\in\set{F}$, $(\set{V},\set{F},\set{E})=(\bigsqcup_{i\in\nb{a}}\set{V}_i^a,\{a\}\cup\bigsqcup_{i\in\nb{a}}\set{F}_i^a,\setdef{(i,a)}{i\in\nb{a}}\cup\bigsqcup_{i\in\nb{a}}\set{E}_i^a)$,
\end{enumerate}
which implies~\eqref{eq:msg:acyclic:update:1},~\eqref{eq:msg:acyclic:update:2},~\eqref{eq:msg:bi}, and~\eqref{eq:msg:ba}, respectively.
\par
\textbf{Uniqueness:}
Assume there exist two different sets of messages $\{m_{i\to a},m_{a\to i}\}_{(i,a)\in\set{E}}$ and $\{m'_{i\to a},m'_{a\to i}\}_{(i,a)\in\set{E}}$, both satisfying~\eqref{eq:msg:acyclic:update:1},~\eqref{eq:msg:acyclic:update:2},~\eqref{eq:msg:bi}, and~\eqref{eq:msg:ba}.
Without loss of generality, let $m_{i_0\to a_0}\neq m'_{i_0\to a_0}$.
By invoking~\eqref{eq:msg:acyclic:update:1} and~\eqref{eq:msg:acyclic:update:2} alternatively, one can construct a path (as long as possible) $(a_0-i_0-a_i-i_1-\cdots)$ such that $m_{a_k\to i_{k\!-\!1}}\neq m'_{a_k\to i_{k\!-\!1}}$ and $m_{i_k\to a_k}\neq m'_{i_k\to a_k}$ for each $k$.
Since $(\set{V},\set{F},\set{E})$ is finite and acyclic, such a path must be of finite length, with the last vertex being a leaf of the graph.
If the last vertex is from $\set{V}$, say $i_\text{last}$, we have $m_{i_\text{last}\to a_\text{last}} = \mathbf{1} = m_{i_\text{last} \to a_\text{last}}$.
Otherwise, if it is from $\set{F}$, say $a_\text{last}$, we have $m_{a_\text{last} \to i_{\text{last}\!-\!1}} = f_{a_\text{last}} = m'_{a_\text{last}\to i_{\text{last}\!-\!1}}$.
In both cases, we have a contradiction.
Therefore, the set of messages satisfying~\eqref{eq:msg:acyclic:update:1},~\eqref{eq:msg:acyclic:update:2},~\eqref{eq:msg:bi}, and~\eqref{eq:msg:ba} must be unique.
\end{proof}
\par
\subsection{Belief-Propagation Algorithms \& BP Fixed Points}
For generic factor graphs,  the belief-propagation (BP) algorithms\index{belief-propagation algorithm} are \emph{heuristic} generalizations of the message-passing algorithm described by~\eqref{eq:msg:acyclic:update:1} and~\eqref{eq:msg:acyclic:update:2}.
Such generalizations are made by initializing all the messages as constant functions and updating them according to~\eqref{eq:msg:acyclic:update:1} and~\eqref{eq:msg:acyclic:update:2}.
Namely,
\begin{align}
	\label{eq:msg:BP:update:1}
	m_{i\to a}^{(t)}(x_i) &\propto \prod_{c\in\nb{i}\xk{a}} m_{c\to i}^{(t)}(x_i) &&\forall (i,a)\in\set{E},\\
    \label{eq:msg:BP:update:2}
	m_{a\to i}^{(t)}(x_i) &\propto \sum_{\vx_{\nb{a}\xk{i}}} f_a(\vx_\nb{a})\cdot\prod_{j\in\nb{a}\xk{i}} m_{j\to a}^{(t-1)}(x_j) &&\forall (i,a)\in\set{E},
\end{align}
where $t\in\{1,2,\ldots\}$ and where the initial message $m_{i\to a}^{(0)}$ is some constant function for each $(i,a)\in\set{E}$.
Notice that, in the above updating rules, all new messages are computed using the old messages from the last ``batch''.
Such a \emph{sequence} of updates is known as the synchronous \emph{schedule}\index{schedule} \index{schedule!synchronous} (\aka flooding schedule\index{schedule!flooding}).
Though asynchronous schedules\index{schedule!asynchronous} do exist,\footnote{Thus, the belief-propagation algorithms are a class of algorithms instead of a single algorithm.} and can be helpful in many specific situations (see, \eg,~\cite{sharon2007efficient,fan2017scalable}), they are beyond our scope of discussion.
Algorithm~\ref{alg:bpa} lists the belief-propagation algorithm with the flooding schedule.
\begin{algorithm}
\caption{Belief-Propagation Algorithm (Flooding Schedule with Timeout)}
\label{alg:bpa}
\begin{algorithmic}[1]
	\Require{A factor graph $\Big(\set{G}=(\set{V},\set{F},\set{E}\subset \set{V}\times\set{F}),\mathfrak{V}=\{\set{X}\}_{i\in\set{V}},\mathfrak{F}=\{f_a\}_{a\in\set{F}}\Big)$, $\varepsilon \geqslant 0$;}
	\Ensure{Messages $\{m_{i\to a}, m_{a\to i}:\set{X}_i\to \mathbb{R}_{+}\}_{(i,a)\in\set{E}}$, $\mathsf{FLAG}\in\{\mathrm{completed},\mathrm{timeout}\}$.}
	\ForAll{$(i,a)\in\set{E}$}
	\State{$m_{i\to a}^{(0)}(x_i)\gets \size{\set{X}_i}^{-1}\ $ for each $x_i\in\set{X}_i$;}
	\EndFor
	\State{$t\gets 0$;}
	\Do
	\State{$t\gets t+1$;}
	\ForAll{$(i,a)\in\set{E}$}
	\State{$m_{a\to i}^{(t)}(x_i)\defpropto \sum_{\vx_{\nb{a}\xk{i}}}f(\vx_\nb{a})\cdot\prod_{j\in\nb{a}\xk{i}} m_{j\to a}^{(t-1)}(x_j)\ $ for each $x_i\in\set{X}_i$;}
	\EndFor
	\ForAll{$(i,a)\in\set{E}$}
	\State{$m_{i\to a}^{(t)}(x_i)\defpropto \prod_{c\in\nb{i}\xk{a}} m_{c\to i}^{(t)}(x_i)\ $ for each $x_i\in\set{X}_i$;}
	\EndFor
	\DoWhile{$\left(\neg\mathsf{timeout}\right) \land \left( \exists(i,a)\in\set{E} \text{ s.t. }\norm{m_{i\to a}^{(t)}-m_{i\to a}^{(t-1)}}_2>\varepsilon\text{ or }\norm{m_{a\to i}^{(t)}-m_{a\to i}^{(t-1)}}_2>\varepsilon\right)$}
	\Comment{$\mathsf{timeout}=\mathsf{false}$ unless the operating time exceeds a pre-selected waiting time.}
	\If{$\mathrm{timeout}$}
        \State{$\mathsf{FLAG}\gets\mathrm{timeout}$;}
	\Else
	    \State{$\mathsf{FLAG}\gets\mathrm{completed}$;}
	    \ForAll{$(i,a)\in\set{E}$}
	    \State{$m_{i\to a}\gets m_{i\to a}^{(t)}$;}
	    \State{$m_{a\to i}\gets m_{a\to i}^{(t)}$;}
	    \EndFor
	\EndIf
\end{algorithmic}
\end{algorithm}
\par
We consider an instance of BP algorithms to be ``completed'', if the messages $\{m_{i\to a}^{(t)}$, $m_{a\to i}^{(t)}\}_{(i,a)}$ ``converge'' under the updating rules~\eqref{eq:msg:BP:update:1} and~\eqref{eq:msg:BP:update:2}, namely when the changes of messages are \emph{negligible} after applying the updates.
In this sense, the result of a ``completed'' BP algorithm will always be (or at least be close to) a \emph{fixed point} of the update, namely a set of messages that stay unchanged \wrt \eqref{eq:msg:BP:update:1} and~\eqref{eq:msg:BP:update:2}.
We call such sets of messages BP fixed points (see definition below).
\begin{definition}[BP Fixed point]\label{def:BP:fixed:points}
\index{BP fixed points}
A set of messages $\{m_{i\to a}, m_{a\to i}\}_{i,a}$ is said to be a BP fixed point if
\begin{align}
\label{eq:msg:BP:fixed:1}
m_{i\to a}(x_i) &\propto \prod_{c\in\nb{i}\xk{a}} m_{c\to i}(x_i) &&\forall (i,a)\in\set{E},\\
\label{eq:msg:BP:fixed:2}
m_{a\to i}(x_i) &\propto \sum_{\vx_{\nb{a}\xk{i}}} f_a(\vx_\nb{a})\cdot\prod_{j\in\nb{a}\xk{i}} m_{j\to a}(x_j) &&\forall (i,a)\in\set{E}.
\end{align}
In this case, the set $\{m_{i\to a}, m_{a\to i}\}_{i,a}$ is also called a set of fixed-point messages.
\end{definition}
\begin{corollary}\label{cor:BP:tree}
Consider Algorithm~\ref{alg:bpa} with acyclic input factor graph and with no ``timeout'' constraint (\ie, the pre-selected waiting time infinity).
For any $\varepsilon\geqslant 0$, the algorithm will always end with ``$\ \mathsf{FLAG}=\mathrm{completed}$'' within finite time.
\end{corollary}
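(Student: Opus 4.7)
The plan is to show that on a finite acyclic factor graph the BP iterates reach the unique set of (suitably normalized) messages promised by Theorem~\ref{thm:BP:tree} after a number of iterations bounded by the diameter of the graph, after which a further iteration leaves them unchanged and the stopping condition fires even for $\varepsilon=0$.

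Without loss of generality assume $\set{G}$ is connected (otherwise apply the argument to each component). For each directed edge, removing the corresponding undirected edge $(i,a)$ splits the tree into the subgraphs $(\set{V}_i^a,\set{F}_i^a,\set{E}_i^a)$ and $(\set{V}_a^i,\set{F}_a^i,\set{E}_a^i)$ used in the proof of Theorem~\ref{thm:BP:tree}. Define $d(i\!\to\!a)$ to be the number of edges of the longest path in $(\set{V}_i^a,\set{F}_i^a,\set{E}_i^a)$ starting at $i$, and $d(a\!\to\!i)$ analogously. Let $\{m_{i\to a}^{\star},m_{a\to i}^{\star}\}_{(i,a)\in\set{E}}$ denote the normalizations (sum-to-one in $x_i$) of the messages constructed in the existence part of Theorem~\ref{thm:BP:tree}; these are precisely the messages picked out by~\eqref{eq:msg:BP:fixed:1}--\eqref{eq:msg:BP:fixed:2} under that normalization convention, which is also the convention maintained by the $\defpropto$ updates in Algorithm~\ref{alg:bpa}.

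The core step is an induction on depth. Claim: for every directed edge $e$ and every $t$ with $t\geq \lceil d(e)/2\rceil$ (and $t\geq 1$), the BP message satisfies $m_e^{(t)}=m_e^{\star}$. Base case: if $i$ is a leaf variable ($\nb{i}=\{a\}$), then~\eqref{eq:msg:BP:update:1} gives $m_{i\to a}^{(t)}\propto 1$, matching the uniform $m_{i\to a}^{\star}$; if $a$ is a leaf factor ($\nb{a}=\{i\}$), then~\eqref{eq:msg:BP:update:2} gives $m_{a\to i}^{(1)}\propto f_a(x_i)=m_{a\to i}^{\star}(x_i)$ up to normalization, so equality after normalization. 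Inductive step: the decompositions labeled~1 and~2 in the proof of Theorem~\ref{thm:BP:tree} show that the recursions~\eqref{eq:msg:acyclic:update:1}--\eqref{eq:msg:acyclic:update:2} that characterize $m_e^{\star}$ are literally the BP updates~\eqref{eq:msg:BP:update:1}--\eqref{eq:msg:BP:update:2}; hence if every incoming message on the previous update has already attained its starred value, so does the outgoing message, since proportionality plus a fixed normalization forces equality.

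Let $D$ be the diameter of the tree. For every directed edge $e$ we have $d(e)\leq D$, so by the claim $m_e^{(t)}=m_e^{\star}$ for every edge once $t\geq \lceil D/2\rceil$. At that point~\eqref{eq:msg:BP:fixed:1}--\eqref{eq:msg:BP:fixed:2} imply $m_e^{(t+1)}=m_e^{(t)}$ exactly, so $\|m_e^{(t+1)}-m_e^{(t)}\|_2=0\leq\varepsilon$ for all $e$ and the \textbf{do--while} loop exits with $\mathsf{FLAG}=\mathrm{completed}$ after at most $\lceil D/2\rceil+1$ iterations, which is finite. The only subtle point, and the main thing to be careful about, is the interplay between the unnormalized messages used in Theorem~\ref{thm:BP:tree} and the $\defpropto$ updates of the algorithm: one must fix a single normalization convention (sum-to-one) and verify that under this convention the existence/uniqueness statement continues to pin down exactly one set of messages, so that the induction above really does identify $m_e^{(t)}$ with $m_e^{\star}$ rather than with an arbitrary scalar multiple.
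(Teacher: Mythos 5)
Your route differs from the paper's. The paper argues by contrapositive: if $m_{a\to i}^{(t)} \not\propto m_{a\to i}$ at some time $t$, it reuses the path construction from the uniqueness part of Theorem~\ref{thm:BP:tree} to build a backtrackless path of length $2t+1$, which forces the diameter $D$ to be at least $2t+1$, hence all messages have stabilized (up to proportionality) by $t = \lceil D/2 \rceil$. You instead run a direct forward induction on the height $d(e)$ of the subtree hanging off each directed edge $e$. Both approaches bound convergence time by the diameter; the forward induction, if done correctly, yields a slightly more refined per-edge estimate.

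However, the precise inductive claim is off by one on the factor-to-variable side, and this prevents the induction from closing. You claim $m_e^{(t)} = m_e^{\star}$ for every $t \geq \lceil d(e)/2 \rceil$ (with $t \geq 1$). Consider the chain $a_1 - i_1 - a_2 - i_2 - a_3$: here $d(a_2 \to i_2) = 2$ (the path $a_2 - i_1 - a_1$), so the claim asserts $m_{a_2 \to i_2}^{(1)} = m_{a_2 \to i_2}^{\star}$. But under the flooding schedule $m_{a_2 \to i_2}^{(1)}$ is computed from $m_{i_1 \to a_2}^{(0)}$, which is still the uniform initialization rather than $m_{i_1 \to a_2}^{\star} \propto f_{a_1}$; for non-constant $f_{a_1}$ these disagree. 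The culprit is the one-round lag you partly acknowledge but don't track: $m_{a\to i}^{(t)}$ reads the $(t\!-\!1)$-st variable-to-factor messages, while $m_{i\to a}^{(t)}$ reads the $t$-th factor-to-variable messages of the same sweep, so the two message types stabilize one round apart and the inductive step for $a\to i$ at $t = \lceil d(a\to i)/2 \rceil$ cannot borrow $m_{j\to a}^{(t-1)} = m_{j\to a}^{\star}$ when $d(j\to a) = d(a\to i) - 1$. The fix is to use $\lceil d(e)/2 \rceil$ for variable-to-factor edges but $\lfloor d(e)/2 \rfloor + 1$ for factor-to-variable edges; with this adjustment the induction closes (check both parities of $d$), and since $d(a\to i) \leq D - 1$ for every $(i,a)\in\set{E}$ the worst case over all edges is still $\lceil D/2 \rceil$, so the final conclusion — convergence in at most $\lceil D/2 \rceil + 1$ sweeps, hence $\mathsf{FLAG} = \mathrm{completed}$ in finite time for any $\varepsilon \geq 0$ — stands. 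Your remark about pinning down a single normalization convention so that $\defpropto$ really does force equality is sound and worth keeping; the paper sidesteps it by stating its intermediate conclusion only up to proportionality.
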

\begin{proof}
Let $\{m_{i\to a}, m_{a\to i}\}_{(i,a)\in\set{E}}$ be defined as in the proof of Theorem~\ref{thm:BP:tree}, and let $\{m^{(t)}_{i\to a}, m^{(t)}_{a\to i}\}_{(i,a)\in\set{E}}$ denote the messages in Algorithm~\ref{alg:bpa} at timestamp $t$.
For any $i,a,t$ such that $m^{(t)}_{a\to i}\not\propto m_{a\to i}$, using the same construction in the ``Uniqueness'' part of the proof of Theorem~\ref{thm:BP:tree}, one can construct a path $i-a-i_1-a_1-\cdots-a_{t-1}-i_t$ such that $m^{(0)}_{i_t\to a_{t-1}} \not\propto m_{i_t\to a_{t-1}}$. Since the factor graph is acyclic, the diameter of the factor graph must be at least $2t+1$.
Thus, for any $t\geqslant\ceil{\frac{D}{2}}$, $m^{(t)}_{a\to i}\propto m_{a\to i}$, where $D$ denotes the diameter of the graph.
Similarly, $m^{(t)}_{i\to a}\propto m_{i\to a}$ for all $t\geqslant\ceil{\frac{D}{2}}$.
In other words, for any $\varepsilon\geqslant 0$
\[
\norm{m_{i\to a}^{(t)}-m_{i\to a}^{(t-1)}}_2\leqslant\varepsilon\text{ and }\norm{m_{a\to i}^{(t)}-m_{a\to i}^{(t-1)}}_2\leqslant\varepsilon \quad \forall(i,a)\in\set{E} \quad \forall t\geqslant\ceil{\frac{D}{2}}+1,
\]
namely, the algorithm will ``complete'' within finite time.
\end{proof}
\par
As a direct result of Corollary~\ref{cor:BP:tree}, for acyclic factor graphs, taking $\varepsilon=0$, Algorithm~\ref{alg:bpa} (with no ``timeout'' constraint) will always produce the BP-fixed point within finite time.
For cyclic factor graphs in general, however, such results do not generalize.
On the one hand, there exist instances of BP algorithms that fail to converge.
(More interestingly, there are examples of Gaussian message passing algorithms\footnote{Gaussian message passing algorithms or GMPAs are a special class of BP algorithms on continuous-alphabet factor graphs with additional constraints on its factors.} (GMPAs) in which the convergence depends on the update schedule~\cite{fan2017scalable}.)
On the other hand, there also exist factor graphs with multiple BP fixed points.
Though the algorithm has been justified for some special classes of factor graphs, \eg, walk-summable Gaussian graphical models~\cite{malioutov2006walk}, more generic works have been focusing on various interpretations of BP fixed points~\cite{yedidia2005constructing, vontobel2013counting, chertkov2006loop}.
In the remainder of this section, we review two of these interpretations of BP fixed points for factor graphs with non-negative local functions.
\subsection{Interpretations of BP fixed points: The Variational Approach and Bethe's Approximation}
\label{subsec:FG:Bethe}
In this section, we review a method known as the variational approach together with Bethe's approximation (see, \eg,~\cite{yedidia2005constructing}).
The idea behind such methods has a deep root in statistical mechanics.
Note that computing the marginals is equivalent to computing the partition sums (of some slightly modified factor graphs) and thus is equivalent to computing the Helmholtz free energies.
The idea of this method consists of two parts:
The variational approach and the Bethe free energy.
\subsubsection{The Variational Approach}
The variational approach is a method adopted from variational mechanics, in which one considers the Helmholtz free energy as the minimal Gibbs free energy over all possible configurations.
The Gibbs free energy of a factor graph is defined as follows.
\begin{definition}[Gibbs free energy] \label{def:Gibbs:free:energy} \index{free energy!Gibbs}
Given a factor graph $\set{G}$ describing the factorization $g(\vx)=\prod_{a\in\set{F}}f_a(\vx_\nb{a})$ with $f_a\geqslant 0$ for each $a$, the \emph{Gibbs free energy} (\aka the variational free energy) of $\set{G}$ is defined as
\begin{equation}
\gibbs(b) \defeq -\sum_{a\in\set{F}}\sum_{\vx}b(\vx)\log{f_a(\vx_\nb{a})} + \sum_{\vx}b(\vx)\log{b(\vx)}
\end{equation}
for each \pmf $b(\vx)$ on $\bigtimes_{i\in\set{V}}\set{X}_i$.
Note that for the case where there exists some $\vx$ such that $b(\vx)>0$ and $f_a(\vx_a)=0$ for some $a$, we take the convention that $\gibbs(b) \defeq +\infty$.
\end{definition}
\begin{proposition}\label{prop:Gibbs:geq:Helmholtz}
    For any \pmf $b$ on $\bigtimes_{i\in\set{V}}\set{X}_i$, we have
	\begin{equation}
	\gibbs(b) \geqslant \helmholtz,
	\end{equation}
	with equality if and only if $b(\vx_\set{V}) \propto \prod_{a\in\set{F}}f_a(\vx_\nb{a})$.
\end{proposition}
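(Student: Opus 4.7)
The plan is to rewrite $\gibbs(b)$ so that its connection to the Helmholtz free energy $\helmholtz = -\log Z(\set{G})$ becomes transparent, and then invoke the non-negativity of relative entropy (Gibbs' inequality) to conclude. Concretely, first I would collapse the sum over $a$ inside $\gibbs(b)$ by writing $\sum_{a\in\set{F}}\log f_a(\vx_\nb{a}) = \log \prod_{a\in\set{F}} f_a(\vx_\nb{a}) = \log g(\vx)$, so that
\begin{equation*}
\gibbs(b) = -\sum_{\vx} b(\vx)\log g(\vx) + \sum_{\vx} b(\vx)\log b(\vx) = \sum_{\vx} b(\vx) \log \frac{b(\vx)}{g(\vx)}.
\end{equation*}

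Next, assuming $Z = Z(\set{G}) > 0$, I would introduce the PMF $p(\vx) \defeq g(\vx)/Z$ (which is well defined and nonnegative because every $f_a \geq 0$). Substituting $\log g(\vx) = \log Z + \log p(\vx)$ and using $\sum_\vx b(\vx) = 1$ gives
\begin{equation*}
\gibbs(b) = -\log Z + \sum_{\vx} b(\vx) \log \frac{b(\vx)}{p(\vx)} = \helmholtz + \infdiv{b}{p}.
\end{equation*}
Since $\infdiv{b}{p} \geq 0$ with equality iff $b = p$, this yields $\gibbs(b) \geq \helmholtz$ with equality iff $b(\vx) \propto g(\vx) = \prod_{a\in\set{F}} f_a(\vx_\nb{a})$, which is exactly the claim.

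The only subtle point is the degenerate case $Z = 0$, where $\helmholtz = +\infty$. Here $g(\vx) = 0$ for every $\vx$, so for each $\vx$ there exists some $a$ with $f_a(\vx_\nb{a}) = 0$. Since $b$ is a PMF there is at least one $\vx$ with $b(\vx) > 0$, so by the convention in Definition~\ref{def:Gibbs:free:energy} we have $\gibbs(b) = +\infty = \helmholtz$, and the inequality (with equality) holds trivially. I also need to check the conventions underlying the Gibbs/KL manipulations when $b(\vx) = 0$ (use $0 \log 0 = 0$) or when $b(\vx) > 0$ with $f_a(\vx_\nb{a}) = 0$ (both sides equal $+\infty$); these match the convention already fixed in Definition~\ref{def:Gibbs:free:energy}.

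I do not expect any real obstacle: the proof is essentially a one-line rewrite plus Gibbs' inequality, and the main care is just bookkeeping the degenerate cases where zero factors appear. If desired, one can state and prove $\infdiv{b}{p} \geq 0$ on the side via Jensen's inequality applied to $-\log$, but since the relative entropy is introduced in the Notations table and is standard, I would simply cite its non-negativity.
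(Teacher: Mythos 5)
Your proof takes exactly the same route as the paper's: the paper's own proof is a one-liner stating that one shows $\gibbs(b) - \helmholtz = \infdiv{b}{p}$ with $p(\vx) \defeq Z^{-1}\prod_{a\in\set{F}} f_a(\vx_\nb{a})$, and then omits the details. You carry out precisely that rewrite and additionally spell out the degenerate cases (where $Z=0$ or zero factors appear), which the paper glosses over; this is sound and is a slightly more careful version of the same argument.
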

\begin{proof}
The proof is done by showing $\gibbs(b)- \helmholtz=\infdiv{b}{p}$, where the \pmf $p(\vx)\defeq Z^{-1}\cdot \prod_{a\in\set{F}}f_a(\vx_\nb{a})$.
We omit the details.
\end{proof}
As a result of Proposition~\ref{prop:Gibbs:geq:Helmholtz}, the Helmholtz free energy $\helmholtz$ can be obtained by minimizing $\gibbs(b)$.
However, such a minimization problem is generally intractable since the problem's dimension grows exponentially \wrt the number of variables.
This has motivated physicists (since the 1900s) to develop ``good'' yet tractable approximations to $\gibbs$.
The Bethe free energy~\cite{bethe1935statistical} reviewed below is one of these approximations.
\subsubsection{The Bethe free energy}
\begin{definition}[Bethe free energy] \label{def:bethe:energy} \index{free energy!Bethe}
Given a factor graph $\set{G}$ describing the factorization $g(\vx)=\prod_{a\in\set{F}}f_a(\vx_\nb{a})$, the \emph{Bethe free energy} is the function
\begin{equation}\label{eq:def:bethe:free:energy}
\begin{aligned}
\bethe\left(\{b_a\}_{a\in\set{F}},\{b_i\}_{i\in\set{V}}\right) \defeq & -\sum_{a\in\set{F}} \sum_{\vx_\nb{a}} b_a(\vx)\log{f_a(\vx_\nb{a})} + \sum_{a\in\set{F}} b_a(\vx_\nb{a}) \log{b_a(\vx_\nb{a})}\\
&-\sum_{i\in\set{V}} (d_i-1)\cdot \sum_{x_i}b_i(x_i)\log(b_i(x_i)),
\end{aligned}
\end{equation}
where the domain of $\bethe$ is 
\[
\set{L}(\set{G})\defeq\setdef*{\big(\{b_a\}_{a\in\set{F}},\{b_i\}_{i\in\set{V}}\big)}
{\begin{array}{lll}
	b_a\in\ProbSp(\bigtimes_{i\in\nb{a}}\set{X}_i) & &\forall a\in\set{F}\\
	b_i\in\ProbSp(\set{X}_i) & &\forall i\in\set{V}\\
	\sum_{\vx_{\nb{a}\xk{i}}}b_a(\vx_\nb{a})= b_i(x_i) &\forall x_i &\forall (i,a)\in\set{E}
\end{array}}
\]
and where $d_i\defeq\deg(i)$ is the degree of the vertex $i$ for each $i\in\set{V}$.
\end{definition}
The set $\set{L}(\set{G})$ is often known as the \emph{local marginal polytope}\index{local marginal polytope}.
The marginals of a global \pmf always compose an element in $\set{L}(\set{G})$.
However, the converse is not necessarily true.
Namely, if we define the set $\set{M}(\set{G})$ (often known as the \emph{marginal polytope}) as
	\[
	\set{M}(\set{G})\defeq\setdef*{\big(\{b_a\}_{a\in\set{F}},\{b_i\}_{i\in\set{V}}\big)}
	{\begin{array}{lll}
	    \exists b\in\ProbSp(\bigtimes_{i\in\set{V}}\set{X}_i) \text{ s.t. } \\
	    b_a(\vx_\nb{a}) = \sum_{\vx_{\set{V}\xk\nb{a}}}b(\vx) &\forall\vx_\nb{a} &\forall a\in\set{F}\\
	    b_i(x_i) = \sum_{\vx_{\set{V}\xk i}}b(\vx) &\forall x_i &\forall i\in\set{V}
	\end{array}},
	\]
then $\set{M}(\set{G})\subset
\set{L}(\set{G})$, but there do exist $\set{G}$ such that $\set{M}(\set{G})\subsetneq\set{L}(\set{G})$.
For acyclic factor graphs, however, these two sets are the same.
\begin{lemma}[{see, \eg,~\cite[Proposition~4.1]{wainwright2008graphical}}] \label{prop:tree:Fb:equals:Fg}
Given an acyclic factor graph $\set{G}$, for any set of \pmfs $\left(\{b_a\}_{a\in\set{F}},\{b_i\}_{i\in\set{V}}\right)\in\set{L}(\set{G})$, the function
\begin{equation}\label{eq:belief:mock:up}
b(\vx) \defeq \frac{\prod_{a\in\set{F}}b_a(\vx_\nb{a})}{\prod_{i\in\set{V}}\bigl(b_i(x_i)\bigr)^{(d_i-1)}}
\end{equation}
is a \pmf over $\bigtimes_{i\in\set{V}}\set{X}_i$.
Notably, in this case, it holds that $\set{L}(\set{G})=\set{M}(\set{G})$.
\end{lemma}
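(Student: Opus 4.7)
My plan is to prove two claims: (i) the function $b$ defined by~\eqref{eq:belief:mock:up} is a genuine \pmf on $\bigtimes_{i\in\set{V}} \set{X}_i$, and (ii) its marginals onto $\vx_\nb{a}$ and $x_i$ recover the given $b_a$ and $b_i$. Because the containment $\set{M}(\set{G}) \subseteq \set{L}(\set{G})$ is automatic from the definitions, claim (ii) establishes the reverse inclusion and hence $\set{L}(\set{G}) = \set{M}(\set{G})$.

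The key observation is that~\eqref{eq:belief:mock:up} contracts cleanly when a leaf of the tree is removed, and so I would induct on $|\set{E}|$, assuming without loss of generality that $\set{G}$ is connected (disconnected $\set{G}$ are handled by multiplying the constructions over components). The base case is a single factor $a$ attached to a single variable $i$, where $d_i = 1$ so $(b_i)^{d_i-1} = 1$ and $b(\vx) = b_a(x_i) = b_i(x_i)$ by the consistency constraint. For the inductive step, I pick any leaf of the tree. If that leaf is a variable $j$ of degree $1$ with factor-neighbor $a$, then $(b_j)^{d_j-1} = 1$ and a direct computation yields $\sum_{x_j} b(\vx) = b'(\vx_{\set{V}\setminus j})$, where $b'$ is the analogous expression on the reduced graph $\set{G}'$ with $b_a$ replaced by $\sum_{x_j} b_a$. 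If instead the leaf is a factor $a$ of degree $1$ with variable-neighbor $i$, the consistency constraint degenerates to $b_a(x_i) = b_i(x_i)$; deleting $a$ decreases $d_i$ by one, and one verifies algebraically that $b(\vx)$ equals the expression $b'(\vx)$ for $\set{G}'$. In both cases the inductive hypothesis gives $\sum_{\vx} b(\vx) = 1$ and establishes claim (i).

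For claim (ii) I would peel off leaves distinct from the target vertex (or from $\nb{a}$, in the factor case), iterating the same two identities until only the target remains. Under leaf removal, the belief attached to each surviving vertex is unchanged, except that a factor which loses a variable-leaf neighbor has its belief marginalized. In particular, the belief on the designated target vertex propagates unchanged through the reduction, so that once every other vertex has been stripped away, the remaining formula collapses to $b_i(x_i)$ (or $b_a(\vx_\nb{a})$) as required.

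The main subtlety is that~\eqref{eq:belief:mock:up} is a priori ill-defined whenever some $b_i(x_i) = 0$ with $d_i \geq 2$. I would resolve this by adopting the convention $b(\vx) \defeq 0$ at every such configuration, which is forced by the consistency constraint: from $\sum_{\vx_{\nb{a}\setminus i}} b_a(\vx_\nb{a}) = b_i(x_i) = 0$ together with $b_a \geq 0$, one deduces $b_a(\vx_\nb{a}) = 0$ for every $\vx$ having that value of $x_i$ and every $a \in \nb{i}$, so zeros in the denominator are always dominated by zeros in the numerator and the convention is the unique one consistent with taking limits. With this settled, the leaf-removal identities above hold throughout the domain and both inductions go through unobstructed.
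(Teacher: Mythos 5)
Your proof is correct and supplies exactly the argument the paper leaves out. The paper reduces the first claim to $\sum_{\vx} b(\vx) = 1$, remarks only that this is ``proven by exploiting the acyclic structure of the factor graph,'' and omits the details; your leaf-peeling induction (removing a degree-one variable and marginalizing its unique neighboring factor belief, or removing a degree-one factor and cancelling it against the corresponding $b_i$) is precisely that argument made explicit, and it is sound: in a connected tree with at least two edges a removable leaf outside any fixed target always exists, and each removal preserves membership in the local marginal polytope of the reduced graph. Two places where your write-up is actually more careful than the paper's: you also prove the marginal-recovery identity needed to conclude $\set{L}(\set{G}) = \set{M}(\set{G})$, whereas the paper's proof sketch addresses only normalization and never touches the second assertion of the lemma; and you explicitly treat the $0/0$ configurations arising when some $b_i(x_i)=0$ with $d_i \geqslant 2$, which the paper's appeal to ``obvious'' nonnegativity glosses over. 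Both are genuine gaps in the paper's exposition that your proposal closes.
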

\begin{proof}
Since it is obvious that $b(\vx)\geqslant 0$ for all $\vx$, it suffices to check that $\sum_{\vx}b(\vx)=1$.
This is proven by exploiting the acyclic structure of the factor graph.
We omit the details.
\end{proof}
For acyclic factor graphs, Lemma~\ref{prop:tree:Fb:equals:Fg} enables us to write 
\begin{equation}\label{eq:bethe:equals:gibbs}
\gibbs(b)=\bethe\left(\{b_a\}_{a\in\set{F}},\{b_i\}_{i\in\set{V}}\right)
\quad \forall \left(\{b_a\}_{a\in\set{F}},\{b_i\}_{i\in\set{V}}\right) \in \set{L}(\set{G}),
\end{equation}
where the \pmf $b$ is defined \emph{based on} $\left(\{b_a\}_{a\in\set{F}},\{b_i\}_{i\in\set{V}}\right)$ as in~\eqref{eq:belief:mock:up}.
By showing that the minimizer of $\gibbs$ can always be expressed in a form as in~\eqref{eq:belief:mock:up}, one can show that the Bethe free energy and the variational free energy share the same minimum.
Therefore, for acyclic factor graphs, the Helmholtz free energy $\helmholtz$ can be obtained by minimizing the Bethe free energy (see Theorem~\ref{thm:tree:bethe:exact}).
\begin{theorem}\label{thm:tree:bethe:exact}
For acyclic factor graphs, it holds that 
\begin{equation}
\min_{(\{b_a\}_{a\in\set{F}},\{b_i\}_{i\in\set{V}})\in\set{L}(\set{G})} \bethe(\{b_a\}_{a\in\set{F}},\{b_i\}_{i\in\set{V}})
= \min_{b\in\ProbSp(\bigtimes_{i\in\set{V}}\set{X}_i)} \gibbs(b)
= \helmholtz.
\end{equation}
\end{theorem}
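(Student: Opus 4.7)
My plan is to sandwich $\min_{\set{L}(\set{G})}\bethe$ between two copies of $\min_{b}\gibbs(b)=\helmholtz$, the rightmost equality being Proposition~\ref{prop:Gibbs:geq:Helmholtz}. The acyclic hypothesis enters through Lemma~\ref{prop:tree:Fb:equals:Fg}, which converts any locally consistent tuple $(\{b_a\},\{b_i\})\in\set{L}(\set{G})$ into a bona fide global \pmf $b$ via~\eqref{eq:belief:mock:up}, together with the identity~\eqref{eq:bethe:equals:gibbs} stating $\bethe(\{b_a\},\{b_i\})=\gibbs(b)$. The easy inequality $\min_{\set{L}(\set{G})}\bethe\geq\helmholtz$ is then immediate: take an arbitrary $(\{b_a\},\{b_i\})\in\set{L}(\set{G})$, define the associated $b$ by~\eqref{eq:belief:mock:up}, and conclude $\bethe(\{b_a\},\{b_i\})=\gibbs(b)\geq\helmholtz$ by chaining~\eqref{eq:bethe:equals:gibbs} with Proposition~\ref{prop:Gibbs:geq:Helmholtz}; minimizing over $\set{L}(\set{G})$ finishes this direction.

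For the reverse inequality I would exhibit an explicit feasible point that attains $\helmholtz$. Let $b^\star(\vx)\propto\prod_{a\in\set{F}}f_a(\vx_\nb{a})$ be the Gibbs minimizer guaranteed by Proposition~\ref{prop:Gibbs:geq:Helmholtz}, and let $b_a^\star$ and $b_i^\star$ denote its factor-neighborhood and single-variable marginals. These automatically obey the marginal-consistency constraints, so $(\{b_a^\star\},\{b_i^\star\})\in\set{M}(\set{G})=\set{L}(\set{G})$, where the identification is part of Lemma~\ref{prop:tree:Fb:equals:Fg}. The crucial sub-claim is the tree factorization identity
\[
b^\star(\vx) \;=\; \frac{\prod_{a\in\set{F}} b_a^\star(\vx_\nb{a})}{\prod_{i\in\set{V}} \bigl(b_i^\star(x_i)\bigr)^{d_i-1}}.
\]
Once it is in hand, equation~\eqref{eq:bethe:equals:gibbs} applied to $(\{b_a^\star\},\{b_i^\star\})$ yields $\bethe(\{b_a^\star\},\{b_i^\star\})=\gibbs(b^\star)=\helmholtz$, closing the gap.

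The main obstacle is therefore proving this tree factorization identity, i.e.\ that the Gibbs distribution on an acyclic factor graph is reconstructible from its local marginals via~\eqref{eq:belief:mock:up}. My plan is to induct on $|\set{V}|+|\set{F}|$, using that any finite acyclic factor graph has a leaf. If $i_0$ is a degree-one variable node attached to a unique factor $a_0$ (the case of a leaf factor node is analogous), I marginalize $b^\star$ over $x_{i_0}$ to obtain the Gibbs distribution of the strictly smaller factor graph in which $f_{a_0}$ has been replaced by $\sum_{x_{i_0}}f_{a_0}$, apply the inductive hypothesis there, and restore $x_{i_0}$ using the conditional $b^\star(x_{i_0}\mid \vx_{\set{V}\setminus\{i_0\}})$, which on a tree depends only on $\vx_{\nb{a_0}\setminus\{i_0\}}$ (since $\nb{a_0}\setminus\{i_0\}$ separates $i_0$ from the rest of the tree) and is expressible purely through $b_{a_0}^\star$ and $b_{i_0}^\star$ by Bayes' rule. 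A cleaner equivalent route is to observe that the right-hand side of the claimed identity is, by Lemma~\ref{prop:tree:Fb:equals:Fg}, a \pmf sharing all its variable- and factor-marginals with $b^\star$, and to invoke the fact (provable by the same leaf-peeling induction) that on a tree such marginal data determine the joint uniquely.
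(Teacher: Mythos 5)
Your proof is correct and shares the paper's overall logical skeleton: you establish $\min_{\set{L}(\set{G})}\bethe\geq\helmholtz$ by chaining~\eqref{eq:bethe:equals:gibbs} with Proposition~\ref{prop:Gibbs:geq:Helmholtz}, and you close the gap by exhibiting the Gibbs marginals $(\{b_a^\star\},\{b_i^\star\})$ as a feasible point of $\set{L}(\set{G})$ whose Bethe value equals $\helmholtz$, which hinges on the tree factorization identity $b^\star(\vx)=\prod_{a}b_a^\star(\vx_{\nb{a}})\big/\prod_{i}\bigl(b_i^\star(x_i)\bigr)^{d_i-1}$. Where you genuinely diverge is in how you establish that identity. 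The paper routes through Theorem~\ref{thm:BP:tree}: it expresses $b_a^\star\propto f_a\cdot\prod_{i\in\nb{a}}m_{i\to a}$ and $b_i^\star\propto\prod_{c\in\nb{i}}m_{c\to i}$ via the unique BP fixed-point messages on the tree, checks by an edge-counting argument that the exponents of the messages $m_{a\to i}$ in $\prod_a b_a^\star/\prod_i (b_i^\star)^{d_i-1}$ cancel to leave $\prod_a f_a$, and then invokes Lemma~\ref{prop:tree:Fb:equals:Fg} to pin down the proportionality constant. You instead prove the factorization directly by leaf-peeling induction on $\abs{\set{V}}+\abs{\set{F}}$: marginalize out a degree-one variable $i_0$, apply the inductive hypothesis on the contracted graph, and restore $x_{i_0}$ through the Markov conditional $b^\star(x_{i_0}\mid\vx_{\nb{a_0}\setminus\{i_0\}})$, which Bayes' rule expresses via $b_{a_0}^\star$ alone. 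Your route is more elementary and self-contained, requiring no BP-message machinery; the paper's route has the thematic advantage of exhibiting the optimal beliefs as BP fixed points, which links forward to Theorem~\ref{thm:bethe:BPA}. One small imprecision: your side remark that on a tree the local marginal data ``determine the joint uniquely'' is not literally true; it holds only within the class of distributions that factor over the tree (equivalently, satisfy the tree Markov property), a class to which both $b^\star$ and the mock-up $\tilde{b}$ belong. Your primary inductive argument does not rely on this shortcut and is sound as written, but the aside should carry that qualification.
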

\begin{proof}
As mentioned before, it suffices to prove that $b^\star(\vx)\defeq Z^{-1}\cdot \prod_{a\in\set{F}} f_a(\vx_\nb{a})$ can be expanded into $b^\star(\vx)=\prod_{a\in\set{F}}b_a(\vx_\nb{a})\cdot\prod_{i\in\set{V}}b_i^{(1-d_i)}(x_i)$.
However, since the factor graph is acyclic, by Theorem~\ref{thm:BP:tree}, we have
\begin{align*}
	\prod_{a\in\set{F}}b_a(\vx_\nb{a})\cdot\prod_{i\in\set{V}}b_i^{(1-d_i)}(x_i) &\stackrel{\text{(a)}}{\propto} \prod_{a\in\set{F}}\left(f_a(\vx_\nb{a})\cdot\prod_{i\in\nb{a}}m_{i\to a}(x_i)\right) \cdot \prod_{i\in\set{V}}\left(\prod_{c\in\nb{i}}m_{c\to i}(x_i)\right)^{(1-d_i)}\\
	&\stackrel{\text{(b)}}{=} \prod_{a\in\set{F}}f_a(\vx_\nb{a}) \cdot\prod_{(i,a)\in\set{E}} m_{i\to a}(x_i) \cdot\prod_{(i,a)\in\set{E}} m_{a\to i}^{(1-d_i)}(x_i)\\
	&\stackrel{\text{(c)}}{=} \prod_{a\in\set{F}}f_a(\vx_\nb{a}) \cdot\prod_{(i,a)\in\set{E}} \prod_{c\in\nb{i}\xk{a}} m_{c\to i}(x_i) \cdot\prod_{(i,a)\in\set{E}} m_{a\to i}^{(1-d_i)}(x_i)\\
	&\stackrel{\text{(d)}}{=} \prod_{a\in\set{F}}f_a(\vx_\nb{a}) \cdot\prod_{(i,a)\in\set{E}} m_{a\to i}^{(d_i-1)}(x_i) \cdot\prod_{(i,a)\in\set{E}} m_{a\to i}^{(1-d_i)}(x_i)\\
	&= \prod_{a\in\set{F}}f_a(\vx_\nb{a}) \propto b^\star(\vx),
\end{align*}
where $\{m_{i\to a},m_{a\to i}:\set{X}_i\to \Reals\}_{(i,a)\in\set{E}}$ is the unique BP fixed point for the acyclic factor graph.
Notice that we have used Theorem~\ref{thm:BP:tree} for (a) and~\eqref{eq:msg:acyclic:update:1} for (c), whereas (b) and (d) are results of counting.
Finally, by Lemma~\ref{prop:tree:Fb:equals:Fg}, one can prove that $b^\star(\vx)=\prod_{a\in\set{F}}b_a(\vx_\nb{a})\cdot\prod_{i\in\set{V}}b_i^{(1-d_i)}(x_i)$.
\end{proof}
\subsubsection{Bethe's approximation}
Motivated by the above discussion, Bethe's approximation is defined as follows.
\begin{definition}[Bethe's approximation]\label{def:bethe:approximation}
Given a factor graph $\set{G}$, Bethe's approximation (of the partition sum) is defined as
\begin{equation}
Z_{\mathsf{B}}(\set{G}) \defeq \exp\left(-\min_{(\{b_a\}_{a\in\set{F}},\{b_i\}_{i\in\set{V}})\in\set{L}(\set{G})} \bethe\left(\{b_a\}_{a\in\set{F}},\{b_i\}_{i\in\set{V}}\right)\right).
\end{equation}
Bethe's approximation \emph{at} $(\{b_a\}_{a\in\set{F}},\{b_i\}_{i\in\set{V}})\in\set{L}(\set{G})$ is defined as 
\begin{equation}
Z_{\mathsf{B}}\left(\{b_a\}_{a\in\set{F}},\{b_i\}_{i\in\set{V}}\right) \defeq \exp\left(-\bethe\left(\{b_a\}_{a\in\set{F}},\{b_i\}_{i\in\set{V}}\right)\right).\qedhere
\end{equation}
\end{definition}
By Theorem~\ref{thm:tree:bethe:exact}, it is clear that $Z_{\mathsf{B}}(\set{G})=Z(\set{G})$ for acyclic $\set{G}$'s.
For cyclic factor graphs, such ``exactness'' breaks down.
Though some connections between $Z_\mathsf{B}$ and $Z$ have been established~\cite{vontobel2013counting}, except for some families of factor graphs (\ie,~\cite{ruozzi2012bethe, vontobel2013bethe}), there is little we can guarantee about how accurate such approximations are\footnote{One can construct a sequence of factor graphs $\{\set{G}_k\}_k$ such that $Z(\set{G}_k)\to 0$ but $Z_\mathsf{B}(\set{G}_k)\to +\infty$.}.
On the other hand, however, the study of the Bethe free energy minimization problem reveals a deeper connection between the minimization problem of $\bethe$ and the fixed points of BP algorithms (as stated in the famous theorem below).
\begin{theorem}[{see~\cite[Theorem 2]{yedidia2005constructing}}] \label{thm:bethe:BPA}
Given a factor graph $((\set{V},\set{F},\set{E}),$$\{\set{X}_i\}_{i\in\set{V}},$ $\{f_a\}_{a\in\set{F}})$ with non-negative local functions, $\left(\{b_a\}_{a\in\set{F}},\{b_i\}_{i\in\set{V}}\right)\in\set{L}(\set{G})$ is an interior stationary point of $\bethe$ if there exists a positive BP fixed point $\{m_{i\to a}$, $ m_{a\to i}: \set{X}_i\to \Reals_{>0}\}_{(i,a)\in\set{E}}$ such that
\begin{align}
\label{eq:def:ba:msg}
b_a(\vx_\nb{a}) &\propto f(\vx_\nb{a})\cdot\prod_{i\in\nb{a}}m_{i\to a}(x_i),\\
\label{eq:def:bi:msg}
b_i(x_i) &\propto \prod_{a\in\nb{a}} m_{a\to i}(x_i).
\end{align}
Conversely, a set of positive messages $\{m_{i\to a}$, $ m_{a\to i}\}_{(i,a)\in\set{E}}$ is a BP fixed point if $\left(\{b_a\}_{a\in\set{F}},\{b_i\}_{i\in\set{V}}\right)$ defined according to~\eqref{eq:def:ba:msg} and~\eqref{eq:def:bi:msg} form an internal stationary point of $\bethe$.
\end{theorem}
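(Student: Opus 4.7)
The plan is to recognize this as a constrained optimization problem on the local marginal polytope $\set{L}(\set{G})$ and analyze its KKT/stationarity conditions via Lagrange multipliers. Specifically, I would form the Lagrangian
\[
L = \bethe + \sum_{a\in\set{F}} \gamma_a \Bigl(\sum_{\vx_\nb{a}} b_a(\vx_\nb{a}) - 1\Bigr) + \sum_{i\in\set{V}} \gamma_i \Bigl(\sum_{x_i} b_i(x_i) - 1\Bigr) + \sum_{(i,a)\in\set{E}} \sum_{x_i} \lambda_{a\to i}(x_i) \Bigl( b_i(x_i) - \sum_{\vx_{\nb{a}\xk i}} b_a(\vx_\nb{a})\Bigr),
\]
with multipliers $\gamma_a, \gamma_i \in \Reals$ for the two normalization families and $\lambda_{a\to i}(x_i)$ for the marginalization constraint defining $\set{L}(\set{G})$. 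Since the theorem concerns \emph{interior} points, the nonnegativity constraints $b_a \geq 0,\, b_i \geq 0$ are inactive, so no multipliers are needed for them.

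Next, I would compute the partial derivatives $\partial L / \partial b_a(\vx_\nb{a})$ and $\partial L / \partial b_i(x_i)$ using the explicit form~\eqref{eq:def:bethe:free:energy} and set them to zero. The first yields $-\log f_a(\vx_\nb{a}) + \log b_a(\vx_\nb{a}) + 1 + \gamma_a - \sum_{i\in\nb{a}} \lambda_{a\to i}(x_i) = 0$, which rearranges to $b_a(\vx_\nb{a}) \propto f_a(\vx_\nb{a}) \cdot \prod_{i\in\nb{a}} e^{\lambda_{a\to i}(x_i)}$. The second yields $-(d_i - 1)(\log b_i(x_i) + 1) + \gamma_i + \sum_{a\in\nb{i}} \lambda_{a\to i}(x_i) = 0$, i.e., $b_i(x_i)^{d_i-1} \propto \prod_{a\in\nb{i}} e^{\lambda_{a\to i}(x_i)}$. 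The natural identification is then $m_{i\to a}(x_i) \defeq e^{\lambda_{a\to i}(x_i)}$ (up to normalization absorbed in the proportionality), turning the first identity into \eqref{eq:def:ba:msg}.

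For the ``stationary $\Rightarrow$ BP fixed point'' direction, I would define $m_{a\to i}(x_i)$ via the second BP update rule~\eqref{eq:msg:BP:fixed:2} applied to the $m_{i\to a}$'s just obtained; the marginalization constraint $\sum_{\vx_{\nb{a}\xk i}} b_a(\vx_\nb{a}) = b_i(x_i)$ then forces $b_i(x_i) \propto m_{i\to a}(x_i)\, m_{a\to i}(x_i)$ for every $a \in \nb{i}$. Taking the product over $a \in \nb{i}$ and comparing with the stationarity condition $b_i^{\,d_i-1} \propto \prod_{a} m_{i\to a}$ gives $\prod_{a\in\nb{i}} m_{a\to i}(x_i) \propto b_i(x_i)$, which combined with $b_i \propto m_{i\to a} m_{a\to i}$ yields $m_{i\to a}(x_i) \propto \prod_{c \in \nb{i}\xk a} m_{c\to i}(x_i)$, i.e., the remaining fixed-point equation~\eqref{eq:msg:BP:fixed:1}. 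For the converse direction, I would start from a positive BP fixed point, define $b_a$ and $b_i$ by \eqref{eq:def:ba:msg}--\eqref{eq:def:bi:msg}, verify membership in $\set{L}(\set{G})$ (the marginalization constraint follows directly by substituting~\eqref{eq:msg:BP:fixed:2}), and then check that setting $\lambda_{a\to i}(x_i) = \log m_{i\to a}(x_i)$ and choosing $\gamma_a, \gamma_i$ to match the normalizations makes both gradient conditions vanish.

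The main obstacle is keeping the bookkeeping honest: the two directions are essentially mirror images, but the mismatch between the degree-$(d_i{-}1)$ exponent on $b_i$ in $\bethe$ and the single factor $b_i$ in the marginalization constraint means one must be careful that the multipliers $\lambda_{a\to i}$ correspond naturally to $m_{i\to a}$ (not $m_{a\to i}$), and that the derived identity $\prod_{a\in\nb{i}} m_{i\to a} \propto b_i^{\,d_i-1}$ combines consistently with the per-edge identity $b_i \propto m_{i\to a} m_{a\to i}$ to recover both BP update rules without spurious factors. Beyond this combinatorial care, everything else is a routine differentiation of~\eqref{eq:def:bethe:free:energy} and rearrangement of proportionalities.
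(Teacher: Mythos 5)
Your proof is correct and takes exactly the Lagrange-multiplier route the paper refers to (the paper itself only cites~\cite{yedidia2005constructing} for the details of this classical theorem). Your worked-out argument --- including the identification $\lambda_{a\to i}(x_i) = \log m_{i\to a}(x_i)$, the derived relation $b_i^{\,d_i-1} \propto \prod_{a\in\nb{i}} m_{i\to a}$, and the use of the per-edge marginalization constraint to recover $b_i \propto m_{i\to a}\,m_{a\to i}$ and hence both BP update rules --- mirrors precisely the detailed proof the paper \emph{does} write out for the quantum analog, Theorem~\ref{thm:bethe:QBPA}.
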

\begin{proof}
The proof uses the Lagrange multiplier theory (see~\cite{yedidia2005constructing} for details).
\end{proof}
For factor graphs with positive local functions, the above theorem allows us to interpret BP algorithms as a process to find the stationary points of the Bethe free energy (see~\cite[Section~IV-C]{yedidia2005constructing}).
For BP fixed points with messages containing zero entries, it is unclear (but plausible) whether they still correspond to some stationary points of $\bethe$ (see~\cite[Conjecture~1]{yedidia2005constructing}).
\subsection{Interpretations of BP fixed points: The Loop Calculus} \label{subsec:FG:LoopCalculus}
The loop calculus~\cite{chertkov2006loop, chernyak2007loop, mori2015loop} is another approach to characterize Bethe's approximation; and provides another interpretation for the BP fixed points.
The major result of the method can be summarized as the following theorem.
\begin{theorem}[Loop calculus]\label{thm:loop:calculus} \index{loop calculus}
Let $\set{G}=((\set{V}, \set{E}, \set{F}), \{\set{X}_i\}_{i\in\set{V}}, \{f_a\}_{a\in\set{F}})$ be a factor graph with non-negative local functions.
For any interior stationary point $(\{b_a\}_{a\in\set{F}},$ $\{b_i\}_{i\in\set{V}})$ of $\bethe$, it holds that
\begin{equation} \label{eq:lc:main}
Z(\set{G}) = Z_{\mathsf{B}}\left(\{b_a\}_{a\in\set{F}},\{b_i\}_{i\in\set{V}}\right) \cdot \sum_{E\in\mathfrak{L}(\set{E})}\mathcal{K}(E),
\end{equation}
where the set of \emph{generalized loops} $\mathfrak{L}(\set{E})$ is defined as 
\begin{equation}\label{eq:def:extended:loops}
\mathfrak{L}(\set{E}) \triangleq \left\{ E\subset\mathcal{E}:\: \size{\{(i,a)\}_{a\in\nb{i}}\cap E} \neq 1 \ \forall i\in\mathcal{V},\, \size{\{(i,a)\}_{i\in\nb{a}}\cap E} \neq 1 \ \forall a\in\mathcal{F} \right\}
\end{equation}
and where
\begin{equation}\label{eq:lc:binary}
\mathcal{K}(E) \defeq \prod_{a\in\set{F}}
\expectationwrt{ \prod_{i\in\nb{a},(i,a)\in E} \frac{\rv{X}_i-\expectationwrt{\rv{X}_i}{b_i}}{\var{(\rv{X}_i)}}}{b_a} \cdot \prod_{i\in\set{V}} \expectationwrt{ \left( \frac{\rv{X}_i-\expectationwrt{\rv{X}_i}{b_i}}{\var{(\rv{X}_i)}}\right)^{d_i(E)}}{b_i}
\end{equation}
for the case with all alphabets $\set{X}_i$ being binary~\cite{chertkov2006loop} and where
\begin{equation} \label{eq:lc:general}
\mathcal{K}(E) \defeq \sum_{\vy_E:\:y_{(i,a)}\neq 0 \atop \forall (i,a)\in E}\prod_{a\in\set{F}} \expectationwrt{ \prod_{i\in\nb{a},\atop(i,a)\in E} \frac{\partial\log{b_i(\rv{X}_i)}} {\partial\theta^{i,a}_{y_{i,a}}}}{\!b_a} \cdot \prod_{i\in\set{V}} \expectationwrt{ \prod_{a\in\nb{i},\atop(i,a)\in E} \frac{\partial\log{b_i(\rv{X}_i)}}{\partial\eta^{i,a}_{y_{i,a}}}}{\!b_i}
\end{equation}
for non-binary alphabets~\cite{mori2015loop}.
Here, $(\theta^{i,a}_y)_{y=1,\ldots,\size{\set{X}_i}-1}$ and $(\eta^{i,a}_y)_{y=1,\ldots,\size{\set{X}_i}-1}$ are some dual coordinate systems~\cite{amari2000methods} for $\log(\ProbSp(\set{X}_i))\defeq\{\log{p}:p\in\ProbSp(\set{X}_i)\}$.
\end{theorem}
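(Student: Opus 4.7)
My plan is to derive a ``reparameterization identity'' expressing $\prod_a f_a$ in terms of the beliefs alone, and then to expand the resulting integrand in a basis of centered fluctuations whose moments encode the cancellation selecting generalized loops. First, by Theorem~\ref{thm:bethe:BPA}, the interior stationary point corresponds to a positive BP fixed point $\{m_{i\to a},m_{a\to i}\}$ with $b_a\propto f_a\prod_{i\in\nb{a}}m_{i\to a}$ and $b_i\propto\prod_{a\in\nb{i}}m_{a\to i}$. Combining these with~\eqref{eq:msg:BP:fixed:1} in the form $m_{i\to a}\cdot m_{a\to i}\propto b_i$, I can solve $m_{i\to a}\propto b_i/m_{a\to i}$ and substitute back into the $b_a$-equation; all message factors telescope and, after bookkeeping the constants, I obtain
\begin{equation*}
\prod_{a\in\set{F}} f_a(\vx_\nb{a}) \;=\; Z_\mathsf{B}\cdot \prod_{i\in\set{V}} b_i(x_i) \cdot \prod_{a\in\set{F}} \frac{b_a(\vx_\nb{a})}{\prod_{i\in\nb{a}} b_i(x_i)}.
\end{equation*}
The overall constant is identified as $Z_\mathsf{B}=\exp(-\bethe(\{b_a\},\{b_i\}))$ by taking logarithm on both sides and matching against Definition~\ref{def:bethe:energy} (an expectation against $\{b_a\},\{b_i\}$ on both sides reproduces $-\bethe$ exactly).

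Second, summing the identity over $\vx$ gives $Z(\set{G})/Z_\mathsf{B}=\expectationwrt{\prod_{a}\mu_a}{\prod_i b_i}$ with $\mu_a(\vx_\nb{a})\defeq b_a(\vx_\nb{a})/\prod_{i\in\nb{a}}b_i(x_i)$. For binary alphabets, I expand each $\mu_a-1$ in the product basis $\setdef{\prod_{i\in S}\phi_i}{S\subset\nb{a},\,|S|\geq 1}$ with $\phi_i\defeq \rv{X}_i-\expectationwrt{\rv{X}_i}{b_i}$, which is orthogonal under $\prod_{j\in\nb{a}}b_j$. The local consistency $\sum_{\vx_{\nb{a}\setminus i}}b_a=b_i$ immediately annihilates the $|S|\leq 1$ coefficients, leaving only $|S|\geq 2$ terms with coefficient $\expectationwrt{\prod_{i\in S}\phi_i/\var(\rv{X}_i)}{b_a}$. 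Distributing $\prod_a(1+(\mu_a-1))=\sum_{A\subset\set{F}}\prod_{a\in A}(\mu_a-1)$, substituting the fluctuation expansion, and integrating against $\prod_i b_i$, each variable-side factor becomes $\expectationwrt{\phi_i^{n_i}}{b_i}$ where $n_i$ counts total incidences of $i$; since $\expectationwrt{\phi_i}{b_i}=0$, terms with $n_i=1$ vanish, so only $n_i=0$ or $n_i\geq 2$ survive. Re-indexing by $E\defeq\setdef{(i,a)}{a\in A,\,i\in S_a}$, the surviving contributions are exactly those $E$ with no single-incidence node on either side, i.e.\ $E\in\mathfrak{L}(\set{E})$, and collecting the accumulated factors yields $\mathcal{K}(E)$ as in~\eqref{eq:lc:binary}.

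For the general (non-binary) alphabet case, I replace the product-fluctuation basis by the Amari dual coordinates $(\theta^{i,a}_y,\eta^{i,a}_y)$ on $\log(\ProbSp(\set{X}_i))$. The $\theta$-derivatives act as score functions attached to each factor-side expansion of $b_a$, while the $\eta$-derivatives give the dual expansion of each variable-side $b_i$; both have zero mean under $b_i$, so the same vanishing argument kills single-incidence contributions. The Legendre duality of $(\theta,\eta)$ provides the biorthogonality that pairs $b_a$-expectations of $\theta$-derivatives with $b_i$-expectations of $\eta$-derivatives, giving the two distinct-but-paired expectations in~\eqref{eq:lc:general}. The combinatorial skeleton — that only $E\in\mathfrak{L}(\set{E})$ survive — is unchanged. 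The main obstacle is precisely this general-alphabet bookkeeping: showing that the dual-coordinate expansion reproduces exactly the two-expectation form of $\mathcal{K}(E)$ requires tracking how mixed partial derivatives of $\log b_i$ with respect to the two coordinate systems collapse to Kronecker-delta-like pairings under $b_i$, and matching the $|\set{X}_i|-1$ distinct nonzero ``directions'' $y_{i,a}\neq 0$ at each endpoint of an edge in $E$. The binary case is the special instance where this bookkeeping reduces to the single normalized fluctuation $\phi_i/\var(\rv{X}_i)$.
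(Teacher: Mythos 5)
Your proof is correct, but it follows the \emph{first} route the paper mentions and does not carry out (the reparameterization approach of Wainwright/Sudderth), whereas the paper's actual proof uses the \emph{second} route (Holant theorem and holographic transforms from Chertkov--Chernyak and Mori). In your approach, the starting point is the pointwise identity $\prod_a f_a \propto \prod_a b_a / \prod_i b_i^{d_i-1}$, which holds at any positive BP fixed point by exactly the telescoping-message computation used in the proof of Theorem~\ref{thm:tree:bethe:exact}; the constant is then pinned down to $Z_\mathsf{B}$ by averaging each local term of the logarithm against the matching local belief, reproducing $-\bethe$ on the nose — this identification is clean and works precisely because the RHS is a sum of local functions (of $\vx_\nb{a}$ or $x_i$ alone), so the ``expectation against $\{b_a\},\{b_i\}$'' is well-defined despite there being no single joint distribution. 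The subsequent expansion $\prod_a \mu_a = \prod_a\bigl(1 + (\mu_a - 1)\bigr)$, with $\mu_a-1$ written in the centered-monomial basis and the $|S|\leq 1$ coefficients annihilated by local consistency, correctly forces the surviving index sets $E$ to be generalized loops — the vanishing of $\expectationwrt{\phi_i}{b_i}$ kills single-incidence vertices, and the constraint $|S_a|\geq 2$ kills single-incidence factors. What the paper does instead is encode the BP fixed point into a per-edge holographic change of basis $\{\phi_{i,a},\hat\phi_{i,a}\}$ chosen so that the transformed factors $\hat f_a, \hat h_i$ vanish on all weight-$1$ configurations; this restricts the support of the Holant sum to generalized loops, and Corollary~\ref{cor:Zb:decompose} converts the ``all-zero'' contribution into $Z_\mathsf{B}$. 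Your route is arguably more elementary (no Holant machinery) and makes the cancellation mechanism very transparent, while the paper's route is chosen because the holographic-transform picture is what directly generalizes to the DeFG loop calculus later in that chapter. You correctly flag that the non-binary case requires delicate bookkeeping of the Amari dual coordinates; the paper also only sketches that case, deferring details to Mori's paper, so this is not a defect of your proposal relative to the paper. One small reconciliation to note: your final formula places all $\var(\rv{X}_i)$ normalizations inside the factor-side expectations (total power $\var_i^{-d_i(E)}$), while the stated $\mathcal{K}(E)$ apparently splits them across both products with $\var_i$ rather than $\sqrt{\var_i}$ in each — the paper's own last display, which produces $\sqrt{b_i(0)b_i(1)} = \sqrt{\var_i}$ in both denominators, matches your total count, so the stated formula has a typographical $\sqrt{\;\cdot\;}$ missing and your result is the correct one.
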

\begin{remark}
Note that $\emptyset\in\mathfrak{L}(\set{E})$, and for acyclic factor graphs, $\mathfrak{L}(\set{E})=\{\emptyset\}$.
\end{remark}
\begin{remark}
For any factor graph, $\mathcal{K}(\emptyset) = 1$; thus for acyclic factor graphs, $Z(\set{G}) = Z_{\mathsf{B}}(\{b_a\}_{a\in\set{F}},\{b_i\}_{i\in\set{V}})$ at interior stationary points $(\{b_a\}_{a\in\set{F}},\{b_i\}_{i\in\set{V}})$.
\end{remark}
One method to prove the theorem is to express $Z(\set{G})$ as (see~\cite{wainwright2008graphical})
\begin{equation}
Z(\set{G}) = Z_{\mathsf{B}}\left(\{b_a\}_{a\in\set{F}},\{b_i\}_{i\in\set{V}}\right) \cdot
	\frac{\prod_{a\in\set{F}}b_a(\vx_\nb{a})}{\prod_{i\in\set{V}}b_i^{(d_i-1)}(x_i)}
\end{equation}
and expanding the terms on the right-hand side (see~\cite{sudderth2008loop}).
Another method~\cite{chertkov2006loop, mori2015loop}, reviewed below, involves the \emph{Holant theorem}~\cite{al2011holographic}.
\begin{theorem}[Holant theorem] \label{thm:holant} \index{Holant theorem}
Consider a factor graph $\set{G}$ representing the factorization $g(\vx)=\prod_{a\in\set{F}} f_a(\vx_\nb{a})$, where each alphabet $\set{X}_i$ is finite.
Given any set of invertible matrices $\{\phi_{i,a}\in\Reals^{\set{X}_i\times\set{Y}_{i,a}}\}_{(i,a)\in\set{E}}$ (where $\size{\set{X}_i}=\size{\set{Y}_{i,a}}$), one can express the partition sum $Z(\set{G})$ as
\begin{equation}\label{eq:holant}
Z(\set{G}) = \sum_{\vy} \prod_{a\in\set{F}} \hat{f}_a(\vy_{\nb{a},a}) \cdot \prod_{i\in\set{V}} \hat{h}_i(\vy_{i,\nb{i}}),
\end{equation}
where
\begin{align}
\label{eq:holographic:1}
\hat{f}(\vy_{\nb{a},a}) &\defeq \sum_{\vx_\nb{a}} f_a(\vx_\nb{a})\prod_{i\in\nb{a}} \hat{\phi}_{i,a}(y_{i,a},x_i),\\
\label{eq:holographic:2}
\hat{h}_i(\vy_{i,\nb{i}}) &\defeq \sum_{x_i} \prod_{a\in\nb{i}} \phi_{i,a}(x_i,y_{i,a}).
\end{align}
Here, $\hat{\phi}_{i,a}$ is the inverse matrix of $\phi_{i,a}$; and $\phi_{i,a}(x,y)$ and $\hat{\phi}_{i,a}(y,x)$ denote the $(x,y)$-th and the $(y,x)$-th entry of $\phi_{i,a}$ and $\hat{\phi}_{i,a}$, respectively.
\end{theorem}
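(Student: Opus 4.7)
My plan is to verify the identity by expanding the right-hand side of~\eqref{eq:holant} and reducing it to the definition of $Z(\set{G})$; the key ingredient is the relation $\phi_{i,a} \hat{\phi}_{i,a} = I_{\set{X}_i}$, which in entry form reads $\sum_{y_{i,a}} \phi_{i,a}(x, y_{i,a}) \hat{\phi}_{i,a}(y_{i,a}, x') = \delta_{x,x'}$ for every edge $(i,a)\in\set{E}$.

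Concretely, I would start by substituting~\eqref{eq:holographic:1} and~\eqref{eq:holographic:2} into the RHS of~\eqref{eq:holant}. To keep the bookkeeping clean, I introduce two distinct dummy variables per edge: in the expansion of $\hat{f}_a$ use a variable $x^a_i\in\set{X}_i$ (one for each $i\in\nb{a}$), and in the expansion of $\hat{h}_i$ use a variable $x'_i\in\set{X}_i$. This yields
\begin{equation*}
\sum_{\vy}\prod_{a\in\set{F}}\hat{f}_a(\vy_{\nb{a},a})\prod_{i\in\set{V}}\hat{h}_i(\vy_{i,\nb{i}})
= \sum_{\vy}\sum_{\{x^a_i\}}\sum_{\{x'_i\}}\prod_{a}f_a\bigl((x^a_i)_{i\in\nb{a}}\bigr)\!\!\prod_{(i,a)\in\set{E}}\!\!\hat{\phi}_{i,a}(y_{i,a},x^a_i)\,\phi_{i,a}(x'_i,y_{i,a}).
\end{equation*}

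Next, I would interchange the order of summation so that the sum over each $y_{i,a}$ sits innermost, pairing the two factors $\hat{\phi}_{i,a}(y_{i,a},x^a_i)$ and $\phi_{i,a}(x'_i,y_{i,a})$ on the edge $(i,a)$. By the inverse-matrix relation above, each such inner sum collapses to $\delta_{x'_i,x^a_i}$. The product of Kronecker deltas over all edges $(i,a)\in\set{E}$ then forces $x^a_i=x'_i$ for every $i\in\set{V}$ and every $a\in\nb{i}$; after eliminating the $x^a_i$'s in favor of $x'_i\eqqcolon x_i$, the expression reduces to $\sum_{\vx}\prod_{a\in\set{F}} f_a(\vx_\nb{a}) = Z(\set{G})$.

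This is essentially a verification proof, so I do not anticipate any real obstacle; the only subtlety is notational, namely carefully distinguishing the auxiliary $x^a_i$ introduced by $\hat{f}_a$ from the $x'_i$ introduced by $\hat{h}_i$ so that the inverse relation $\phi_{i,a}\hat{\phi}_{i,a}=I$ is applied to the correct pair of indices on each edge. An intuitive way to visualize the manipulation, which I might mention as a remark, is that the original factor graph can be redrawn in normal-factor-graph form by placing an equality constraint at each variable; inserting $\phi_{i,a}\hat{\phi}_{i,a}=I$ on every edge and then absorbing $\phi_{i,a}$ into the equality node (producing $\hat{h}_i$) and $\hat{\phi}_{i,a}$ into the factor $f_a$ (producing $\hat{f}_a$) is exactly the holographic transformation being described.
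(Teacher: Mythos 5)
Your verification is correct and is exactly the elementary linear-algebra argument the paper alludes to but omits: expand $\hat{f}_a$ and $\hat{h}_i$, pull each $y_{i,a}$-sum inward so it pairs $\phi_{i,a}(x'_i,y_{i,a})$ with $\hat{\phi}_{i,a}(y_{i,a},x^a_i)$, collapse to $\delta_{x'_i,x^a_i}$ by $\phi_{i,a}\hat{\phi}_{i,a}=I$, and eliminate the duplicated variables. The remark about absorbing $\phi_{i,a}$ into the equality node and $\hat{\phi}_{i,a}$ into the factor is also the intended picture of the holographic transform, so you have filled in the details faithfully.
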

\begin{proof}
This is a direct result of elementary linear algebra.
We omit the details.
\end{proof}
Using the elements from the theorem above, we consider the factorization 
\begin{equation}\label{eq:holographic:3}
	\tilde{g}(\vy_{\set{E}}) \defeq
	\prod_{a\in\set{F}} \hat{f}_a(\vy_{\nb{a},a})\cdot 
	\prod_{i\in\set{V}} \hat{h}_i(\vy_{i,\nb{i}}).
\end{equation}
The corresponding factor graph $\tilde{\set{G}}$ describing~\eqref{eq:holographic:3} is known as a \emph{holographic transform}\index{holographic transformation} of the original factor graph $\set{G}$ (\wrt $\{\phi_{i,a},\,\hat{\phi}_{i,a}\}_{(i,a)\in\set{E}}$).
Theorem~\ref{thm:holant} can then be rephrased as ``holographic transformations do not change the partition sum.''
For the sake of proving Theorem~\ref{thm:loop:calculus}, we consider a holographic transform such that 
\begin{align}
	\label{eq:lc:assumption:1}
	\hat{f}_a(\vy_{\nb{a},a}) &= 0,\text{ if } \wt(\vy_{\nb{a},a}) = 1 \quad \forall a\in\set{F},\\
	\label{eq:lc:assumption:2}
	\hat{h}_i(\vy_{i,\nb{i}}) &= 0,\text{ if } \wt(\vy_{i,\nb{i}}) = 1 \quad \forall i\in\set{V},
\end{align}
where we have assumed that each alphabet $\set{Y}_{i,a}$ contains an elements labeled as $0$.
This limits the support of $\tilde{g}$ to those $\vy_{\set{E}}$'s such that the indices of the non-zero entries of $\vy_\set{E}$ form a generalized loop, \ie, $\support(\tilde{g})\subset\setdef*{\vy_\set{E}}{\exists E\in\mathfrak{L}(\set{E})\,\st\ y_{i,a}\!=0\,\forall (i,a)\notin E}$.
Substituting~\eqref{eq:holographic:1} and~\eqref{eq:holographic:2} into~\eqref{eq:lc:assumption:1} and~\eqref{eq:lc:assumption:2}, respectively, one can rewrite the latter as
\begin{align}
\label{eq:lc:assumption:3}
\left[\sum_{\vx_{\nb{a}\xk{i}}}f_a(\vx_\nb{a})\cdot
\prod_{j\in\nb{a}\xk{i}}\hat{\phi}_{i,a}(0,x_j)\right]_{x_i}
&\perp \hat{\phi}_{i,a}(y,\cdot) &&\forall y\neq 0,\\
\label{eq:lc:assumption:4}
\left[\prod_{c\in\nb{i}\xk{a}}\phi_{i,c}(x_i,0)\right]_{x_i}
&\perp \phi_{i,a}(\cdot,y) &&\forall y\neq 0,
\end{align}
respectively.
Since $\phi_{i,a}$ and $\hat{\phi}_{i,a}$ are the inverse matrix of each other,~\eqref{eq:lc:assumption:3} and~\eqref{eq:lc:assumption:4} (and thus~\eqref{eq:lc:assumption:1} and~\eqref{eq:lc:assumption:2}) are equivalent to
\begin{align}
\label{eq:lc:assumption:5}
\phi_{i,a}(\cdot,0) &\propto
\left[\sum_{\vx_{\nb{a}\xk{i}}}f_a(\vx_\nb{a})\cdot
\prod_{j\in\nb{a}\xk{i}}\hat{\phi}_{i,a}(0,x_j)\right]_{x_i},\\
\label{eq:lc:assumption:6}
\hat{\phi}_{i,a}(0,\cdot) &\propto
\left[\prod_{c\in\nb{i}\xk{a}}\phi_{i,c}(x_i,0)\right]_{x_i},
\end{align}
respectively.
Compare the above with~\eqref{eq:msg:acyclic:update:1} and~\eqref{eq:msg:acyclic:update:2}.
Eqs.~\eqref{eq:lc:assumption:5} and~\eqref{eq:lc:assumption:6} (and thus~\eqref{eq:lc:assumption:1} and~\eqref{eq:lc:assumption:2}) are equivalent to the existence of some fixed-point messages $\{m_{i\to a}, m_{a\to i}\}_{(i,a)}$ such that
\begin{align}
	\label{eq:lc:assumption:7}
	\phi_{i,a}(x_i,0) &= c_{i,a}\cdot m_{a\to i}(x_i),\\
	\label{eq:lc:assumption:8}
	\hat{\phi}_{i,a}(0,x_i) &= \hat{c}_{i,a}\cdot m_{i\to a}(x_i),
\end{align}
where the constants $c_{i,a}$, $\hat{c}_{i,a}$ satisfy $(c_{i,a}\cdot \hat{c}_{i,a})^{-1}=Z_{i,a}(m_{i\to a},m_{a\to i})\defeq \sum_{x_i} m_{i\to a}(x_i)\cdot m_{a\to i}(x_i)$ for each $(i,a)\in\set{E}$.
Therefore, retracing the steps above, given any positive BP fixed point $\{m_{i\to a}, m_{a\to i}\}_{(i,a)\in\set{E}}$, we can construct a holographic transform satisfying~\eqref{eq:lc:assumption:7} and~\eqref{eq:lc:assumption:8}, and thus satisfying~\eqref{eq:lc:assumption:1} and~\eqref{eq:lc:assumption:2}.
\par
If all of the alphabets are binary (\ie, $\set{X}_i=\set{Y}_{i,a}=\mathbb{F}_2$ for all $i\in\set{V}$ and $(i,a)\in\set{E}$), since $\phi_{i,a}\cdot\hat{\phi}_{i,a}=\hat{\phi}_{i,a}\cdot\phi_{i,a}=\left[\begin{smallmatrix}1 & 0 \\ 0 & 1\end{smallmatrix}\right]$ for each $(i,a)\in\set{E}$, it must hold that
\begin{align}
\label{eq:lc:binary:1}
\phi_{i,a}(x_i,1) &= (-1)^{\conj{x_i}}c_{i,a}\cdot m_{i\to a}(\conj{x_i})	,\\
\label{eq:lc:binary:2}
\hat{\phi}_{i,a}(1,x_i) &= (-1)^{\conj{x_i}}\hat{c}_{i,a}\cdot m_{a\to i}(\conj{x_i}),
\end{align} where $\conj{0}\defeq 1$ and $\conj{1}\defeq 0$.
The proof of~\eqref{eq:lc:binary} in Theorem~\ref{thm:loop:calculus} involves substituting~\eqref{eq:lc:binary:1} and~\eqref{eq:lc:binary:2} into~\eqref{eq:holant} and applying the following corollary.
\begin{corollary}\label{cor:Zb:decompose}
Consider a factor graph $\set{G}$ representing the factorization $g(\vx)=\prod_{a\in\set{F}} f_a(\vx_\nb{a})$.
For a collection of positive fixed-point messages $\{m_{i\to a}, m_{a\to i}\}_{i,a}$, we have
\begin{equation}
Z_{\mathsf{B}}\left(\{b_a\}_{a\in\set{F}},\{b_i\}_{i\in\set{V}}\right) = \frac{\prod_{a\in\set{F}} Z_a(\{m_{i\to a}\}_{i\in\nb{a}}) \cdot \prod_{i\in\set{V}} Z_i(\{m_{a\to i}\}_{a\in\nb{i}})} {\prod_{(i,a)\in\set{E}} Z_{i,a}(m_{i\to a},m_{a\to i})},
\end{equation}
where the \pmfs $\{b_a\}_{a\in\set{F}}$ and $\{b_i\}_{i\in\set{V}}$ are defined according to~\eqref{eq:def:ba:msg} and~\eqref{eq:def:bi:msg} and where
\begin{align*}
Z_a(\{m_{i\to a}\}_{i\in\nb{a}}) &\defeq \sum_{\vx_\nb{a}}f(\vx_\nb{a}) \cdot \prod_{i\in\nb{a}} m_{i\to a}(x_i) && \forall a\in\set{F},\\
Z_i(\{m_{a\to i}\}_{a\in\nb{i}}) &\defeq \sum_{x_i}\prod_{a\in\nb{i}} m_{a\to i}(x_i) && \forall i\in\set{V},\\
Z_{i,a}(m_{i\to a},m_{a\to i}) &\defeq \sum_{x_i} m_{i\to a}(x_i)\cdot m_{a\to i}(x_i) && \forall (i,a)\in\set{E}.
\end{align*}
\end{corollary}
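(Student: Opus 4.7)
The plan is to compute $-\bethe\bigl(\{b_a\},\{b_i\}\bigr)$ directly by substituting the explicit forms of $b_a$ and $b_i$ given in \eqref{eq:def:ba:msg} and \eqref{eq:def:bi:msg}, and then simplify using two elementary consequences of the BP fixed-point equations. The first consequence, which I would establish as a preliminary, is \emph{marginal consistency}: for each edge $(i,a)\in\set{E}$,
\[
\sum_{\vx_{\nb{a}\xk{i}}} b_a(\vx_\nb{a}) \;=\; \frac{m_{i\to a}(x_i)}{Z_a}\sum_{\vx_{\nb{a}\xk{i}}}f_a(\vx_\nb{a})\prod_{j\in\nb{a}\xk{i}} m_{j\to a}(x_j) \;\propto\; m_{i\to a}(x_i)\,m_{a\to i}(x_i),
\]
by \eqref{eq:msg:BP:fixed:2}, and this equals $b_i(x_i)$ up to normalization by \eqref{eq:msg:BP:fixed:1}. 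A useful corollary, which I would record separately, is the alternative edge-local representation $b_i(x_i)=m_{i\to a}(x_i)\,m_{a\to i}(x_i)/Z_{i,a}$ for every $(i,a)\in\set{E}$.

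Next, I would substitute $b_a=f_a\prod_{i\in\nb{a}}m_{i\to a}/Z_a$ into $\langle\log b_a\rangle_{b_a}$ and use marginal consistency to convert expectations of single-variable quantities under $b_a$ into expectations under $b_i$; this gives
\[
\sum_{a}\langle\log f_a\rangle_{b_a}-\sum_{a}\langle\log b_a\rangle_{b_a} \;=\; \sum_{a}\log Z_a \;-\; \sum_{(i,a)\in\set{E}}\langle\log m_{i\to a}\rangle_{b_i}.
\]
Similarly, plugging $b_i=\prod_{a\in\nb{i}}m_{a\to i}/Z_i$ into $\langle\log b_i\rangle_{b_i}$ yields
\[
\sum_i (d_i-1)\langle\log b_i\rangle_{b_i} \;=\; \sum_{(i,a)\in\set{E}}(d_i-1)\langle\log m_{a\to i}\rangle_{b_i} \;-\; \sum_i(d_i-1)\log Z_i.
\]
Adding the two displays gives an expression for $-\bethe$ in which only $\log Z_a$, $\log Z_i$, and edge-local expectations of the messages appear.

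Finally, I would eliminate the edge-local expectations in favor of $\log Z_{i,a}$ by using the alternative representation of $b_i$: taking logarithms of $b_i(x_i)=m_{i\to a}(x_i)m_{a\to i}(x_i)/Z_{i,a}$ and averaging against $b_i$ yields
\[
\log Z_{i,a} \;=\; \langle\log m_{i\to a}\rangle_{b_i} + \langle\log m_{a\to i}\rangle_{b_i} - \langle\log b_i\rangle_{b_i}.
\]
Summing over $(i,a)\in\set{E}$ and re-expanding $\langle\log b_i\rangle_{b_i}$ via the product form introduces exactly the combination $\sum_{(i,a)}\langle\log m_{i\to a}\rangle_{b_i}-\sum_{(i,a)}(d_i-1)\langle\log m_{a\to i}\rangle_{b_i}$ that appeared above, together with $\sum_i d_i\log Z_i$. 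Substituting this identity into the expression for $-\bethe$ cancels every message-expectation term and collapses the $Z_i$ contributions to $\sum_i\log Z_i$, leaving
\[
-\bethe \;=\; \sum_{a\in\set{F}}\log Z_a + \sum_{i\in\set{V}}\log Z_i - \sum_{(i,a)\in\set{E}}\log Z_{i,a},
\]
which is the claimed identity after exponentiation.

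The computation is entirely mechanical once marginal consistency and the two equivalent forms of $b_i$ are in hand; the main thing to be careful about is bookkeeping the double sums $\sum_a\sum_{i\in\nb{a}}=\sum_{(i,a)\in\set{E}}=\sum_i\sum_{a\in\nb{i}}$ and correctly splitting $\sum_i d_i\log Z_i=\sum_i(d_i-1)\log Z_i+\sum_i\log Z_i$. Positivity of the messages (hence of $Z_a,Z_i,Z_{i,a}$) is used implicitly to ensure all logarithms are well-defined.
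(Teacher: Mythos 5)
Your proposal is correct, and it is exactly the calculation the paper gestures at (``substitute~\eqref{eq:def:ba:msg} and~\eqref{eq:def:bi:msg} into~\eqref{eq:def:bethe:free:energy}''); the paper omits all details, and your bookkeeping --- establishing marginal consistency from the fixed-point equations, converting single-variable expectations under $b_a$ into expectations under $b_i$, and using the edge-local form $b_i = m_{i\to a}m_{a\to i}/Z_{i,a}$ to eliminate the message-expectation terms --- supplies them faithfully.
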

\begin{proof}
This is a direct result of Theorem~\ref{thm:bethe:BPA} by substituting~\eqref{eq:def:ba:msg} and~\eqref{eq:def:bi:msg} into~\eqref{eq:def:bethe:free:energy}.
We omit the details.
\end{proof}
\begin{proof}[Proof of~\eqref{eq:lc:binary}~{\cite[see][Proof of Lemma~5]{mori2015loop}}]
Using the aforementioned holographic transform, and by Theorem~\ref{thm:holant}, we have
\begin{fleqn}\begin{equation*}
	Z(\set{G}) = \sum_{\vy_\set{E}\in\support(\tilde{g})}
	    \prod_{a\in\set{F}} \hat{f}_a(\vy_{\nb{a},a})\cdot 
	    \prod_{i\in\set{V}} \hat{h}_i(\vy_{i,\nb{i}})
\end{equation*}\end{fleqn}
\begin{fleqn}\begin{equation*}
	\phantom{Z(\set{G})}
	=\sum_{\vy_\set{E}\in\support(\tilde{g})}\prod_{a\in\set{F}} \sum_{\vx_\nb{a}} f_a(\vx_\nb{a})\prod_{i\in\nb{a}} \hat{\phi}_{i,a}(y_{i,a},x_i) \cdot \prod_{i\in\set{V}} \sum_{x_i} \prod_{a\in\nb{i}} \phi_{i,a}(x_i,y_{i,a})
\end{equation*}\end{fleqn}
\begin{fleqn}\begin{equation*}
	\begin{aligned}
	\phantom{Z(\set{G})}
	=&\left(\prod_{a\in\set{F}} \sum_{\vx_\nb{a}} f_a(\vx_\nb{a})\prod_{i\in\nb{a}} \hat{c}_{i,a}\cdot m_{i\to a}(x_i) \cdot \prod_{i\in\set{V}} \sum_{x_i} \prod_{a\in\nb{i}} c_{i,a}\cdot m_{a\to i}(x_i)\right)\cdot \\
	&\sum_{\vy_\set{E}\in\support(\tilde{g})} \prod_{a\in\set{F}} \frac{\sum_{\vx_\nb{a}} f_a(\vx_\nb{a})\prod_{i\in\nb{a}} \hat{\phi}_{i,a}(y_{i,a},x_i)}{\sum_{\vx_\nb{a}} f_a(\vx_\nb{a})\prod_{i\in\nb{a}} \hat{\phi}_{i,a}(0,x_i)} \prod_{i\in\set{V}} \frac{\sum_{x_i} \prod_{a\in\nb{i}} \phi_{i,a}(x_i,y_{i,a})}{\sum_{x_i} \prod_{a\in\nb{i}} \phi_{i,a}(x_i,0)}
	\end{aligned}
\end{equation*}\end{fleqn}
\begin{fleqn}\begin{equation*}
	\phantom{Z(\set{G})}
	\overset{\text{(a)}}{=} Z_{\mathsf{B}}\left(\{b_a\}_{a},\{b_i\}_{i}\right) \cdot \sum_{\vy_\set{E}\in\support(\tilde{g})} \prod_{a\in\set{F}} \expectationwrt{\prod_{i\in\nb{a}}\frac{\hat{\phi}_{i,a}(y_{i,a},\rv{X}_i)}{\hat{\phi}_{i,a}(0,\rv{X}_i)}}{b_a} \prod_{i\in\set{V}}\expectationwrt{\prod_{a\in\nb{i}}\frac{\phi_{i,a}(\rv{X}_i,y_{i,a})}{\phi_{i,a}(\rv{X}_i,0)}}{b_i}
\end{equation*}\end{fleqn}
\begin{fleqn}\begin{equation*}
	\phantom{Z(\set{G})}
	\overset{\text{(b)}}{=} Z_{\mathsf{B}}\left(\{b_a\}_{a},\{b_i\}_{i}\right) \cdot \sum_{E\in\mathfrak{L}(\set{E})} \prod_{a\in\set{F}} \expectationwrt{\prod_{i\in\nb{a},\atop (i,a)\in E}\frac{\hat{\phi}_{i,a}(1,\rv{X}_i)}{\hat{\phi}_{i,a}(0,\rv{X}_i)}}{b_a} \prod_{i\in\set{V}}\expectationwrt{\prod_{a\in\nb{i},\atop (i,a)\in E}\frac{\phi_{i,a}(\rv{X}_i,1)}{\phi_{i,a}(\rv{X}_i,0)}}{b_i}
\end{equation*}\end{fleqn}
\begin{fleqn}\begin{equation*}
	\phantom{Z(\set{G})}
	\overset{\text{(c)}}{=} Z_{\mathsf{B}}\left(\{b_a\}_{a},\{b_i\}_{i}\right) \cdot \sum_{E\in\mathfrak{L}(\set{E})}
	\begin{aligned}[t]
	&\prod_{a\in\set{F}} \expectationwrt{\prod_{i\in\nb{a},\atop (i,a)\in E}\frac{(-1)^{\conj{\rv{X}_i}} m_{a\to i}(\conj{\rv{X}_i})\cdot m_{i\to a}(\conj{\rv{X}_i})}{m_{i\to a}(\rv{X}_i)\cdot m_{i\to a}(\conj{\rv{X}_i})}}{b_a}\\
	&\cdot\prod_{i\in\set{V}}\expectationwrt{\prod_{a\in\nb{i},\atop (i,a)\in E}\frac{(-1)^{\conj{\rv{X}_i}} m_{i\to a}(\conj{\rv{X}_i})\cdot m_{a\to i}(\conj{\rv{X}_i})}{m_{a\to i}(\rv{X}_i)\cdot m_{a\to i}(\conj{\rv{X}_i})}}{b_i}
	\end{aligned}
\end{equation*}\end{fleqn}
\begin{fleqn}\begin{equation*}
	\phantom{Z(\set{G})}
	\overset{\text{(d)}}{=} Z_{\mathsf{B}}\left(\{b_a\}_{a},\{b_i\}_{i}\right) \cdot \sum_{E\in\mathfrak{L}(\set{E})}
	\prod_{a\in\set{F}} \expectationwrt{\prod_{i\in\nb{a},\atop (i,a)\in E}\frac{(-1)^{\conj{\rv{X}_i}} b_i(\conj{\rv{X}_i})}{\sqrt{b_i(0)b_i(1)}}}{b_a}\!\!\!
	\prod_{i\in\set{V}}\expectationwrt{\prod_{a\in\nb{i},\atop (i,a)\in E}\frac{(-1)^{\conj{\rv{X}_i}} b_i(\conj{\rv{X}_i})}{\sqrt{b_i(0)b_i(1)}}}{b_i}.
\end{equation*}\end{fleqn}
Step (a) is based on Corollary~\ref{cor:Zb:decompose}, together with the observation that
\begin{align*}
\frac{\sum_{\vx_\nb{a}} f_a(\vx_\nb{a})\prod_{i\in\nb{a}} \hat{\phi}_{i,a}(y_{i,a},x_i)}{\sum_{\vx_\nb{a}} f_a(\vx_\nb{a})\prod_{i\in\nb{a}} \hat{\phi}_{i,a}(0,x_i)}
&=\sum_{\vx_\nb{a}} \frac{f_a(\vx_\nb{a})\prod_{i\in\nb{a}} \hat{\phi}_{i,a}(0,x_i)}{\sum_{\tvx_\nb{a}} f_a(\tvx_\nb{a})\prod_{i\in\nb{a}} \hat{\phi}_{i,a}(0,\tx_i)} \cdot \prod_{i\in\nb{a}}
  \frac{\hat{\phi}_{i,a}(y_{i,a},x_i)}{\hat{\phi}_{i,a}(0,x_i)},\\
\frac{\sum_{x_i} \prod_{a\in\nb{i}} \phi_{i,a}(x_i,y_{i,a})}{\sum_{x_i} \prod_{a\in\nb{i}} \phi_{i,a}(x_i,0)}
&= \sum_{x_i}\frac{\prod_{a\in\nb{i}} \phi_{i,a}(x_i,0)}{\sum_{\tx_i} \prod_{a\in\nb{i}} \phi_{i,a}(\tx_i,0)}
  \cdot\prod_{a\in\nb{i}}\frac{\phi_{i,a}(x_i,y_{i,a})}{\phi_{i,a}(x_i,0)}.
\end{align*}
Step (b) exploits the setup that $\support(\tilde{g})\subset\setdef*{\vy_\set{E}}{\exists E\in\mathfrak{L}(\set{E})\,\st\ y_{i,a}\!=0\,\forall (i,a)\notin E}$ and the assumption that all of the alphabets are binary.
Step (c) is the result of substituting \eqref{eq:lc:assumption:7}, \eqref{eq:lc:assumption:8}, \eqref{eq:lc:binary:1}, and \eqref{eq:lc:binary:2}.
Step (d) is shown by observing (from~\eqref{eq:def:bi:msg} and~\eqref{eq:msg:BP:fixed:1}) that $Z_{i,a}^{-1}\cdot m_{a\to i}(\conj{x_i}) \cdot m_{i\to a}(\conj{x_i}) = b_i(\conj{x_i})$ for each $(i,a)\in\set{E}$, and the fact that $m_{i\to a}(\rv{X}_i)m_{i\to a}(\conj{\rv{X}_i})$ and $m_{a\to i}(\rv{X}_i)m_{a\to i}(\conj{\rv{X}_i})$ are independent of the value taken by $\rv{X}_i$.
Namely,
\begin{align*}
	&\prod_{a\in\set{F}} \expectationwrt{\prod_{i\in\nb{a},\atop (i,a)\in E}\frac{1}{m_{i\to a}(\rv{X}_i)\cdot m_{i\to a}(\conj{\rv{X}_i})}}{b_a}
	\prod_{i\in\set{V}}\expectationwrt{\prod_{a\in\nb{i},\atop (i,a)\in E}\frac{1}{m_{a\to i}(\rv{X}_i)\cdot m_{a\to i}(\conj{\rv{X}_i})}}{b_i}\\
=   &\prod_{a\in\set{F}} \expectationwrt{\prod_{i\in\nb{a},\atop (i,a)\in E}\frac{1}{m_{i\to a}(0)\cdot m_{i\to a}(1)}}{b_a}
	\prod_{i\in\set{V}}\expectationwrt{\prod_{a\in\nb{i},\atop (i,a)\in E}\frac{1}{m_{a\to i}(0)\cdot m_{a\to i}(1)}}{b_i}\\
=   &\prod_{(i,a)\in E} m_{i\to a}(0)\cdot m_{i\to a}(1)\cdot m_{a\to i}(0)\cdot m_{a\to i}(1)
=   \prod_{(i,a)\in E} Z_{i,a}^2\cdot b_i(0)\cdot b_i(1)\\
=   &\prod_{a\in\set{F}} \expectationwrt{\prod_{i\in\nb{a},\atop (i,a)\in E}\frac{1}{Z_{i,a}\cdot\sqrt{b_i(0)b_i(1)}}}{b_a}
	\prod_{i\in\set{V}}\expectationwrt{\prod_{a\in\nb{i},\atop (i,a)\in E}\frac{1}{Z_{i,a}\cdot\sqrt{b_i(0)b_i(1)}}}{b_i}.
\end{align*}
Finally,~\eqref{eq:lc:binary} can be shown by noticing that for any binary random variable $\rv{X}$ with its \pmf being $\prob_\rv{X}$, it holds that $(-1)^{\conj{\rv{X}}} \prob_\rv{X}(\conj{\rv{X}}) = \rv{X}-\expectation{\rv{X}}$, and $\prob_\rv{X}(0) \cdot \prob_\rv{X}(1) = \var{\rv{X}}$.
\end{proof}
To prove the general case of Theorem~\ref{thm:holant}, \ie,~\eqref{eq:lc:general}, observe that for satisfying $\phi_{i,a}\cdot\hat{\phi}_{i,a}=\hat{\phi}_{i,a}\cdot\phi_{i,a}=I$ given~\eqref{eq:lc:assumption:7} and~\eqref{eq:lc:assumption:8}, it is equivalent to require
\begin{align}
	\label{eq:lc:general:1}
	\expectationwrt{\frac{\phi_{i,a}(\rv{X}_i,y)}{\phi_{i,a}(\rv{X}_i,0)}}{b_i} & = 0 &&\forall y\neq 0,\\
	\label{eq:lc:general:2}
	\expectationwrt{\frac{\hat{\phi}_{i,a}(y,\rv{X}_i)}{\hat{\phi}_{i,a}(0,\rv{X}_i)}}{b_i} & = 0 &&\forall y\neq 0,\\
	\label{eq:lc:general:3}
	\expectationwrt{\frac{\phi_{i,a}(\rv{X}_i,y)}{\phi_{i,a}(\rv{X}_i,0)}
	\cdot \frac{\hat{\phi}_{i,a}(y',\rv{X}_i)}{\hat{\phi}_{i,a}(0,\rv{X}_i)}}{b_i} & = \delta_{y,y'} &&\forall y,y'\neq 0,
\end{align}
which, in turn, is equivalent to
\begin{align}
	\label{eq:lc:general:4}
	\phi_{i,a}(\rv{X}_i,y) & = \theta^{i,a}_y \cdot
		\phi_{i,a}(\rv{X}_i,0) && \text{ for each }y\in\set{Y}_{i,a}\xk{0} \\
	\label{eq:lc:general:5}
	\hat{\phi}_{i,a}(y,\rv{X}_i) & = \eta^{i,a}_y \cdot
		\hat{\phi}_{i,a}(0,\rv{X}_i) && \text{ for each }y\in\set{Y}_{i,a}\xk{0}
\end{align}
for some pair of dual coordinate systems $\{\theta^{i,a}_y\}_{y\in\set{Y}_{i,a}\xk{0}}$ and $\{\eta^{i,a}_y\}_{y\in\set{Y}_{i,a}\xk{0}}$ for the $(\size{\set{Y}_{i,a}}\!-\!1)=(\size{\set{X}_i}]\!-\!1)$-dimensional manifold $\log(\ProbSp(\set{X}_i))\defeq\{\log{p}:p\in\ProbSp(\set{X}_i)\}$ (see, \eg,~\cite[Chapter~2]{amari2000methods}).
Eq~\eqref{eq:lc:general} can be proven by substituting \eqref{eq:lc:assumption:7}, \eqref{eq:lc:assumption:8}, \eqref{eq:lc:general:4}, and \eqref{eq:lc:general:5} into \eqref{eq:holant} and following the same logic as in the proof of~\eqref{eq:lc:binary} above.
We refer to~\cite[Section~IV-C]{mori2015loop} for details.
\section{Basic quantum information theory}\label{sec:basic_quantum_theory}
This section provides a brief introduction to basic quantum information theory, including quantum postulates, quantum information measures, and quantum channels.
Much of the contents of this section are based on Nielsen and Chuang~\cite{nielsen2011quantum}.
\subsection{Quantum Postulates}\label{sec:quantum_postulates}
The following four postulates are based on our current understanding of physics and provide the framework in which quantum information is considered.
\begin{postulate}[State space]\label{pos:1}
A quantum system is fully described by a unit vector (\aka its state vector\index{state vector} or its state) from a complex Hilbert space (\aka its state space\index{state space}).
\end{postulate}
\begin{postulate}[Evolution]
The evolution\index{evolution} of a closed system is described by a unitary transformation applied to its state vector.
\end{postulate}
\begin{postulate}[Measurement]
A quantum measurement\index{measurement} is described by an indexed set of operators\footnote{As already mentioned in ``Notations'', an operator is a linear transformation in this context.} $\{M_y\in\LinearOp{(\hilbert_\system{A})}\}_{y\in\set{Y}}$ on its state space, where $\sum_{y} M_y^\Herm M_y = I$.
If the state of the system is $\bra{\psi}$ immediately before the measurement, then:
\begin{itemize}
	\item The probability of the measurement outcome being $\cy$ is $p(\cy)=\ket{\psi}{M_\cy}^\Herm {M_\cy}\bra{\psi}$;
	\item Provided the measurement outcome $y$, the state of the system immediately after the measurement is $\frac{M_y\bra{\psi}}{\sqrt{\ket{\psi}M_y^\Herm M_y\bra{\psi}}}$.
\end{itemize}
\end{postulate}
\begin{postulate}[Composition] \label{post:composition}
The state space of a composite system\index{composite system} is the tensor product of the state spaces of each component system.
\end{postulate}
\begin{example}
A \emph{single-qubit system} is a quantum system with 2-dimensional state space and is often referred to as a (single) qubit.
Suppose $\bra{0}$ and $\bra{1}$ form an orthonormal basis of the state space.
Under this basis (\aka \emph{computational basis}), 
\begin{itemize}
	\item Any state vectors can be expressed as $\left[\begin{smallmatrix}a\\b\end{smallmatrix}\right]\defeq a\bra{0}+b\bra{1}$ where $a,b\in\Complex$, and $\abs{a}^2 + \abs{b}^2 = 1$.
	\item (Closed) Evolution on this system can be represented by a $2$-by-$2$ unitary matrix $U$, where the post-evolution state will be $U\cdot \left[\begin{smallmatrix}a\\b\end{smallmatrix}\right]$ given initial state $\left[\begin{smallmatrix}a\\b\end{smallmatrix}\right]$.
	\item A quantum measurement can be described by a set of $2$-by-$2$ matrices $\{M_y\}_y$ such that $\sum_{y}M_y^\Herm M_y=\left[\begin{smallmatrix} 1 & 0 \\ 0 & 1\end{smallmatrix}\right]$, where, given the pre-measurement state being $\left[\begin{smallmatrix}a\\b\end{smallmatrix}\right]$,
	\begin{itemize}
		\item the probability of the outcome being $\cy$ can be expressed as $\left[\begin{smallmatrix}\conj{\vphantom{b}a} & \conj{b}\end{smallmatrix}\right] \cdot M_\cy^\Herm M_\cy \cdot \left[\begin{smallmatrix}a\\b\end{smallmatrix}\right]$,
		\item provided the outcome $y$, the post-measurement state is $p(y)^{-1}\cdot M_y\left[\begin{smallmatrix}a\\b\end{smallmatrix}\right]$. \qedhere
	\end{itemize}
\end{itemize}	 
\end{example}
\begin{example}
A \emph{two-qubit system} is the composition of two single-qubit systems.
According to Postulate~\ref{post:composition}, a basis of its state space can be expressed as
\[
\left\{\bra{00},\bra{01},\bra{10},\bra{11}\right\} \defeq \left\{\bra{0}\tensor\bra{0}, \bra{0}\tensor\bra{1}, \bra{1}\tensor\bra{0}, \bra{1}\tensor\bra{1}\right\}.\qedhere
\]
\end{example}
\subsubsection{Density Operators}
Like classical systems, there are cases when the state of a quantum system is \emph{not} entirely known but rather known to be in the state $\bra{\psi}_i$ with probability $p_i$ (for some indices $i$)\footnote{
Notice that the state space of a quantum system is continuous.
Thus, the distribution of the state vector should also be continuous in general, in which case \pdfs should be considered.}.
In this case, the \pmf of the measurement outcome \wrt the measurement $\{M_y\}_{y}$ can be expressed as
\begin{equation}\label{eq:density:measure:pmf}
p(y)= \sum_{i}p_i\cdot\ket{\psi_i}M_y^\Herm M_y\bra{\psi_i}
    = \tr\left(M_y\cdot \left(\sum_{i}p_i\cdot\braket{\psi_i}\right) \cdot M_y^\Herm\right).
\end{equation}
Observe that~\eqref{eq:density:measure:pmf} is a function of $\sum_{i}p_i\cdot\braket{\psi_i}$.
Namely, given another ensemble $\{q_j,\bra{\phi_j}\}_j$, \ie, a setup that the state vector is $\bra{\phi}_j$ with probability $q_j$, such where $\sum_{i}p_i\cdot\braket{\psi_i}=\sum_{j}q_j\cdot\braket{\phi_j}$, it is impossible to distinguish between these two ensembles using measurements.
This motivated physicists to treat $\rho\defeq\sum_{i}p_i\cdot \braket{\psi_i}$, known as the density operator (see Definition~\ref{def:density:operator}), as a \emph{complete} description of the original system.
\begin{definition}[Density Operator] \label{def:density:operator} \index{density operator}
Given a quantum system $\system{A}$ with its state space being $\hilbert_\system{A}$, a density operator (of $\system{A}$) is a positive trace-$1$ linear operator on $\hilbert_\system{A}$.
The set of all density operators on $\hilbert_\system{A}$ is denoted by $\DensOp(\hilbert_\system{A})$.
\end{definition}
\begin{remark}
$\DensOp(\hilbert_\system{A})$ is the set of all possible operators such that $\rho=\sum_{i}p_i\cdot\braket{\psi_i}$ for some ensemble $\{p_i,\bra{\psi_i}\}_i$.
\end{remark}
Using the notations of density operators, Postulates 1--3 can be rewritten as follows.\\
\noindent\textbf{Postulate 1a.} A quantum system is fully described by a density operator on a complex Hilbert space (known as its state space).
Each density operator on the state space describes some valid configurations of the system.\\
\noindent\textbf{Postulate 2a.} The evolution of a closed system is described by some unitary transformation $U$.
Given the density operator of the system being $\rho$, the density operator after the evolution is $U \rho U^\Herm$.\\
\noindent\textbf{Postulate 3a.} A quantum measurement is described by an indexed set of operators $\{M_y\in\LinearOp{(\hilbert_\system{A})}\}_{y\in\set{Y}}$ on its state space, where $\sum_{y} M_y^\Herm M_y=I$.
If the density operator of the system is $\rho$ immediately before the measurement, then:
	\begin{itemize}
		\item The probability of the measurement outcome being $\cy$ is $p(\cy)=\tr\left({M_\cy} \rho {M_\cy}^\Herm\right)$;
		\item Provided the measurement outcome $y$, the density operator immediately after the measurement will be $\frac{M_y \rho M_y^\Herm}{\tr\left(M_y \rho M_y^\Herm\right)}$.
	\end{itemize}
\par
By the cyclic property of the trace operation, it suffices to know $\{M_y^\Herm M_y\}_y$ for a measurement, if the post-measurement state/density operator is irrelevant to us.
In this case, a quantum measurement can be specified by a POVM $\{E_y\}_y$ (see Definition~\ref{def:POVM}), and the probability of measurement outcome being $\cy$, given the pre-measurement density operator $\rho$, can be expressed as $p(\cy)=\tr(E_\cy\rho)$.
\begin{definition}[Positive-Operator-Valued Measurement (POVM)]\label{def:POVM}
\index{positive-operator-valued Measurement}
Given a quantum system $\system{A}$ with its state space being $\hilbert_\system{A}$, a positive-operator-valued measurement (POVM) is an indexed set of positive operators $\{E_y\in\PositiveOp{(\hilbert_\system{A})}\}_y$ such that $\sum_{y} E_y = I$.
\end{definition}
\subsubsection{Partial Measurement and Partial-trace}
Let $\system{AB}$ be a composite system.
A measurement in the form of $\{M_y\tensor I_\system{B}\}_y$, where $\{M_y\}_{y}$ is some measurement on the subsystem $\system{A}$, is known as a \emph{partial measurement}.
In particular, $\{M_y\tensor I_\system{B}\}_y$ is the unique measurement having the same effect as $\{M_y\}_{y}$ on the system $\system{A}$ while leaving the system $\system{B}$ intact, namely
	\[
	(M_y\tensor I_\system{B}) \cdot
	(\bra{\psi_\system{A}}\tensor\bra{\phi_\system{B}}) =
	(M_y\bra{\psi_\system{A}})\tensor\bra{\phi_\system{B}}
	\quad \forall \bra{\psi_\system{A}}\in\hilbert_\system{A},
	\, \bra{\phi_\system{B}}\in\hilbert_\system{B}.
	\]\par
On the other hand, let $\rho_\system{AB}$ be a density operator for the composite system $\system{AB}$.
We argue that the density operator $\rho_\system{A}\defeq\tr_\system{B}(\rho_\system{AB})$ (\aka \emph{reduced operator}) describes the subsystem $\system{A}$, where $\tr_\system{B}$ is a linear map defined as
\begin{equation}
\tr_\system{B}:\bra{i_\system{A}}\ket{j_\system{A}} \tensor \bra{i'_\system{B}}\ket{j'_\system{B}} \mapsto \bra{i_\system{A}}\ket{j_\system{A}}\cdot\tr(\bra{i'_\system{B}}\ket{j'_\system{B}})
\quad \forall i,j,i',j'
\end{equation}
for \emph{arbitrary}\footnote{To check the well-defineness, one can show that the partial trace $\tr_\system{B}$ is the unique linear map such that $\tr_\system{B}(\sigma_\system{A}\tensor\sigma_\system{B})=\sigma_\system{A}\cdot\tr(\sigma_\system{B})$ for all operators $\sigma_\system{A}$ and $\sigma_\system{B}$ on $\hilbert_\system{A}$ and $\hilbert_\system{B}$, respectively.} bases $\{\bra{k_\system{A}}\}_k$ and $\{\bra{k_\system{B}}\}_k$ for $\hilbert_\system{A}$ and $\hilbert_\system{B}$, respectively.
In particular, given $\rho_\system{AB}$, the reduced density operator $\rho_\system{A}\defeq\tr_\system{B}(\rho_\system{AB})$ is the \emph{unique} density operator such that
\begin{equation}
\tr\left(M_y \rho_\system{A} M_y^\Herm\right)
=\tr\left((M_y \otimes I_\system{B}) \rho_\system{AB} (M_y \otimes I_\system{B})^\Herm\right)
\end{equation}
for any measurement $\{M_y\}_y$ on $\system{A}$.
We refer to~\cite[Box~2.6]{nielsen2011quantum} for details.
\begin{example}\label{ex:joint:system:1}
Let $\system{A}$, $\system{B}$ each be a single qubit system.
Suppose the state vector of their composite system $\system{AB}$ is $\bra{\psi_\system{AB}}=\frac{\bra{00}+\bra{11}}{\sqrt{2}}$.
In this case, the matrix representations\footnote{In the remainder of this thesis, we may refer to an operator/state vector and their matrix representations (\wrt certain computational basis) interchangeably.} of $\rho_\system{AB}$ and $\rho_\system{A}$ are
\[
\rho_\system{AB}=\braket{\psi_\system{AB}}=
\frac{1}{2}\left[
\begin{smallmatrix}1 & 0 &0 & 1 \\ 0 & 0 & 0 & 0 \\ 0 & 0 & 0 & 0 \\ 1 & 0 &0 & 1 \end{smallmatrix}
\right]
\]
and $\rho_\system{A}=\tr_\system{B}(\rho_\system{AB})=\frac{1}{2}\left[\begin{smallmatrix}1 & 0 \\ 0 & 1 \end{smallmatrix}\right]=\frac{1}{2}\braket{0}+\frac{1}{2}\braket{1}$, respectively.
\end{example}
\subsection{Quantum Information Measures and Some Inequalities}
This section reviews some basic elements of quantum information theory, including some of the common quantum information measures and some famous inequalities.
\begin{definition}[von Neumann Entropy] \index{von Neumann entropy}
Given a quantum system $\system{A}$ described by the density operator $\rho_\system{A}$, the von Neumann entropy of $\system{A}$ or $\rho_\system{A}$ is defined as
\[
\qEntropy(\system{A}) = \qEntropy(\rho_\system{A})\defeq -\tr\left(\rho_\system{A}\cdot\log{\rho_\system{A}}\right),
\]
where $\log$ stands for the matrix logarithm, \ie, $\log{\rho_\system{A}}\defeq\sum_{k=1}^\infty \frac{(-1)^{k+1}}{k}(\rho_\system{A}-I)^k$.
\end{definition}
\begin{remark}\label{rem:decompose:logA:AlogA}
For a $d$-by-$d$ PD matrix $A=UDU^\Herm$, where $U$ is unitary, and $D=\diag(\lambda_1,\lambda_2,\ldots,\lambda_n) > 0$ is diagonal, one can show that 
\begin{align}
\log{A} = U\cdot\log{D}\cdot U^\Herm &= U\cdot\left[\begin{smallmatrix} \log{\lambda_1} & & & \\ & \log{\lambda_2} & & \\ & & \ddots & \\ & & & \log{\lambda_n} \end{smallmatrix}\right]\cdot U^\Herm,\\
\label{eq:decompose:AlogA}
A\cdot\log{A} = U\cdot D\log{D}\cdot U^\Herm &= U\cdot\left[\begin{smallmatrix} \lambda_1\log{\lambda_1} & & & \\ & \lambda_2\log{\lambda_2} & & \\ & & \ddots & \\ & & & \lambda_n\log{\lambda_n} \end{smallmatrix}\right]\cdot U^\Herm.
\end{align}
By a continuity argument ($\lim_{\lambda\to 0}\lambda\log{\lambda}=0$),~\eqref{eq:decompose:AlogA} also holds for PSD matrices $A$.
\end{remark}
\begin{theorem}
Let $\rho$ be a density operator, and suppose a spectral decomposition of $\rho$ is $\rho=\sum_{i}p_i\braket{\psi_i},$ where $\{p_i\}_{i}$ is some \pmf and where $\{\bra{\psi_i}\}_{i}$ are orthonormal \wrt each other, then $\qEntropy(\rho)=\entropy(p)=-\sum_{i}p_i\log{p_i}$.
\end{theorem}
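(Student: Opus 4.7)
The plan is to reduce the statement to the formula for $A\log A$ established in Remark~\ref{rem:decompose:logA:AlogA} and then apply the cyclic property of the trace. The spectral decomposition $\rho=\sum_i p_i\braket{\psi_i}$ can be rewritten in matrix form as $\rho = UDU^\Herm$, where $U$ is the unitary whose columns are the column vectors $\bra{\psi_i}$ (the orthonormality of $\{\bra{\psi_i}\}_i$ is precisely what makes $U$ unitary), and where $D = \diag(p_1,p_2,\ldots)$. Since $\rho$ is a density operator, $D \geq 0$ and $\tr(D)=\sum_i p_i = 1$, so $\{p_i\}_i$ is indeed a \pmf.

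Next, I would invoke Remark~\ref{rem:decompose:logA:AlogA} (more precisely, the PSD extension obtained via the continuity $\lim_{\lambda\to 0}\lambda\log\lambda = 0$) to conclude
\begin{equation*}
\rho\cdot\log\rho \;=\; U \cdot D\log D \cdot U^\Herm,
\end{equation*}
where $D\log D$ is the diagonal matrix with entries $p_i\log p_i$ (understood as $0$ when $p_i=0$). Taking traces and using the cyclic property $\tr(UMU^\Herm)=\tr(M)$, one obtains
\begin{equation*}
\qEntropy(\rho) \;=\; -\tr(\rho\log\rho) \;=\; -\tr(D\log D) \;=\; -\sum_i p_i\log p_i \;=\; \entropy(p).
\end{equation*}

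The only subtle point is the treatment of possibly vanishing eigenvalues: if some $p_i=0$, then strictly speaking $\log\rho$ is not defined, but both $\rho\log\rho$ and $\entropy(p)$ are by the standard continuity convention $0\log 0 \defeq 0$. This is exactly what the remark flags, so no additional work is required beyond citing that extension. Overall this is a short, essentially computational argument; there is no real obstacle once the spectral decomposition has been identified with the unitary diagonalization that the remark assumes.
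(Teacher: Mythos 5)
Your proof is correct and takes exactly the route the paper indicates: it is precisely the ``direct result of Remark~\ref{rem:decompose:logA:AlogA}'' that the paper cites and declines to spell out, with the details (identifying the spectral decomposition with a unitary diagonalization, applying the $A\log A$ formula from the remark together with the PSD/continuity extension, and using cyclicity of the trace) filled in. No gap; the handling of zero eigenvalues via $0\log 0 \defeq 0$ is the right flag to raise and is covered by the remark's continuity argument.
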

\begin{proof}
This is a direct result of Remark~\ref{rem:decompose:logA:AlogA}.
We omit the details.
\end{proof}
\begin{corollary}\label{cor:qEntropy:basics}
Let $\rho$ be a density operator on a $d$-dimensional Hilbert space.
Then, it holds that $0\leqslant \qEntropy(\rho) \leqslant \log{d}$, where $\qEntropy(\rho)=0$ if and only if $\rho$ is in a pure state\footnote{A density operator $\rho$ is said to be in a \emph{pure state} if $\rank(\rho)=1$, otherwise it is said to be in a \emph{mixed state}.} and where $\qEntropy(\rho)=\log{d}$ if and only if $\rho=I/d$.
\end{corollary}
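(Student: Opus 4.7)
The plan is to reduce the quantum statement to the classical Shannon entropy of the eigenvalues by invoking the preceding theorem. Since $\rho$ is a density operator on a $d$-dimensional Hilbert space, it is Hermitian and positive semi-definite with trace $1$, and so admits a spectral decomposition $\rho = \sum_{i=1}^{d} p_i \braket{\psi_i}$ where $\{\bra{\psi_i}\}_{i=1}^d$ is an orthonormal basis and $\{p_i\}_{i=1}^d$ is a PMF (non-negative and summing to $1$). By the theorem immediately preceding the corollary, $\qEntropy(\rho) = \entropy(p) = -\sum_{i=1}^d p_i \log p_i$.

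The task then reduces to the classical fact that for a PMF $p$ on $d$ outcomes, $0 \leq \entropy(p) \leq \log d$, together with the equality conditions. First, I would note that each term $-p_i\log p_i \geq 0$ (using the convention $0 \log 0 = 0$), so $\entropy(p) \geq 0$, with equality if and only if every $p_i \in \{0,1\}$; since $\sum_i p_i = 1$, this forces exactly one $p_i$ to equal $1$ and the rest to vanish. Translating back, this is precisely the condition that $\rho = \braket{\psi_{i^*}}$ for some $i^*$, i.e., $\rank(\rho) = 1$, which is the definition of $\rho$ being in a pure state.

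For the upper bound, I would apply the standard argument via the non-negativity of relative entropy (or equivalently Jensen's inequality applied to $-\log$): for the uniform PMF $u$ with $u_i = 1/d$,
\begin{equation*}
\log d - \entropy(p) = \sum_{i=1}^d p_i \log\frac{p_i}{1/d} = \infdiv{p}{u} \geq 0,
\end{equation*}
with equality if and only if $p = u$. Translating back, $p_i = 1/d$ for all $i$ means $\rho = \sum_i \frac{1}{d} \braket{\psi_i} = \frac{1}{d}\sum_i \braket{\psi_i} = I/d$, where the last equality uses that $\{\bra{\psi_i}\}_i$ is an orthonormal basis so $\sum_i \braket{\psi_i} = I$. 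Conversely, if $\rho = I/d$, then all eigenvalues equal $1/d$ and $\qEntropy(\rho) = \log d$.

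There is no real obstacle here: the main step is simply recognizing that the spectral decomposition converts the problem into the classical one, and then the classical bounds together with their tight cases yield the quantum statements essentially verbatim. The only subtlety is carefully checking the equality cases, in particular that a rank-$1$ density operator is (by definition) pure, and that orthonormality of the eigenbasis combined with uniform eigenvalues gives the maximally mixed state $I/d$.
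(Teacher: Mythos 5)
Your proof is correct and takes exactly the approach the paper intends (the paper leaves the corollary unproved, but labels it a corollary of the preceding theorem, which reduces $\qEntropy$ to the Shannon entropy of the eigenvalue PMF — precisely what you do). The reduction to the classical bounds $0\leqslant\entropy(p)\leqslant\log d$, together with the equality cases — point-mass PMF corresponding to $\rank(\rho)=1$, uniform PMF corresponding to $\rho=I/d$ via $\sum_i\braket{\psi_i}=I$ — is handled carefully and completely.
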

\begin{definition}[Quantum Information Divergence] \index{quantum information divergence}
Given density operators $\rho$ and $\sigma$ (on the same Hilbert space), the \emph{quantum information divergence} between $\rho$ and $\sigma$ is defined to be
\[
\infdiv{\rho}{\sigma}\defeq\tr(\rho\log{\rho})-\tr(\rho\log{\sigma}).\qedhere
\]
\end{definition}
\begin{theorem}[Klein's Inequality~\cite{klein1931quantenmechanischen}]
For all $\rho,\sigma\in\DensOp(\hilbert)$, it holds that $\infdiv{\rho}{\sigma})\geqslant 0$, and equality holds if and only if $\rho=\sigma$.
\end{theorem}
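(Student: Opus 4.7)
The plan is to reduce Klein's inequality to the classical Gibbs inequality by working in the eigenbases of $\rho$ and $\sigma$, exploiting a doubly stochastic matrix that naturally arises between them.

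First, I would take spectral decompositions $\rho = \sum_i p_i \braket{\psi_i}$ and $\sigma = \sum_j q_j \braket{\phi_j}$, where $\{p_i\}_i$ and $\{q_j\}_j$ are the eigenvalues (hence \pmfs, by the definition of density operators) and $\{\bra{\psi_i}\}_i$, $\{\bra{\phi_j}\}_j$ are orthonormal bases. By Remark~\ref{rem:decompose:logA:AlogA}, this yields $\tr(\rho\log\rho) = \sum_i p_i \log p_i$ and
\begin{equation*}
\tr(\rho\log\sigma) = \sum_i p_i \ket{\psi_i}(\log\sigma)\bra{\psi_i} = \sum_i p_i \sum_j P_{ij}\log q_j,
\end{equation*}
where $P_{ij} \defeq \abs{\inner{\psi_i}{\phi_j}}^2$. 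The key observation is that $P=[P_{ij}]$ is \emph{doubly stochastic}: each row and column sums to $1$ because $\{\bra{\psi_i}\}_i$ and $\{\bra{\phi_j}\}_j$ are orthonormal bases.

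Next, I would apply Jensen's inequality to the concave function $\log$: for each $i$,
\begin{equation*}
\sum_j P_{ij}\log q_j \leqslant \log\Bigl(\sum_j P_{ij} q_j\Bigr) \eqqcolon \log r_i.
\end{equation*}
Because $P$ is doubly stochastic, $\{r_i\}_i$ is again a \pmf. Combining,
\begin{equation*}
\infdiv{\rho}{\sigma} = \sum_i p_i\log p_i - \sum_i p_i \sum_j P_{ij}\log q_j \geqslant \sum_i p_i \log\frac{p_i}{r_i} \geqslant 0,
\end{equation*}
where the last step is the classical Gibbs inequality (non-negativity of the classical relative entropy between $\{p_i\}$ and $\{r_i\}$).

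For the equality case, both inequalities must be tight. Tightness of Jensen for each $i$ with $p_i>0$ forces $q_j$ to be constant across those $j$ with $P_{ij}>0$, and tightness of the classical Gibbs inequality forces $r_i = p_i$ for all $i$ with $p_i>0$. Tracing these back through the doubly stochastic structure shows that $\rho$ and $\sigma$ share a common eigenbasis with matching eigenvalues on the support of $\rho$, and a standard bookkeeping argument on the kernel (treating the case $\ker\sigma \not\subset \ker\rho$ as $\infdiv{\rho}{\sigma}=+\infty$ via the convention $0\log 0 = 0$, $p\log 0 = +\infty$ for $p>0$) then yields $\rho=\sigma$. The main obstacle, and the only place where care is needed, is precisely this handling of zero eigenvalues; the algebraic core of the argument is just Jensen plus classical Gibbs.
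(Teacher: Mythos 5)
Your proof is correct and is precisely the argument that the paper points to (it omits the proof and cites Nielsen and Chuang, Theorem~11.7, whose proof is exactly this reduction via the doubly stochastic matrix $P_{ij}=\abs{\inner{\psi_i}{\phi_j}}^2$, Jensen's inequality, and the classical Gibbs inequality). Your handling of the equality case, including the note on the kernel of $\sigma$ and the $+\infty$ convention, is the right bookkeeping, so there is nothing to add.
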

\begin{proof}
Omitted. (See, \eg,~\cite[Theorem~11.7]{nielsen2011quantum} for details.)
\end{proof}
The von Neumann entropy is a continuous function, as shown by the following inequality.
\begin{theorem}[Fannes' Inequality~\cite{fannes1973continuity}]
Let $\rho,\sigma\in\DensOp(\hilbert)$ be density operators on some $d$-dimensional Hilbert space $\hilbert$.
Then,
\[
\abs{\qEntropy(\rho) - \qEntropy(\sigma)}
\leqslant \norm{\rho-\sigma}_1 \cdot\log{d} - \norm{\rho-\sigma}_1\log\left(\norm{\rho-\sigma}_1\right)
\leqslant \norm{\rho-\sigma}_1 \cdot\log{d} + \frac{1}{e}.
\]
\end{theorem}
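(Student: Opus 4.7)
The plan is to reduce the comparison of von~Neumann entropies to an $\ell_1$-sum of eigenvalue differences, bound each term by a concavity-based modulus-of-continuity estimate for $f(x)\defeq -x\log x$, and then dominate the eigenvalue sum by $\norm{\rho-\sigma}_1$ via a Ky~Fan/Mirsky-type spectral inequality.

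Let $r_1\geq\cdots\geq r_d$ and $s_1\geq\cdots\geq s_d$ denote the eigenvalues of $\rho$ and $\sigma$ in decreasing order, counted with multiplicity. By the preceding theorem on the spectral form of $\qEntropy$, we have $\qEntropy(\rho)=\sum_i f(r_i)$ and $\qEntropy(\sigma)=\sum_i f(s_i)$, so the triangle inequality gives $\abs{\qEntropy(\rho)-\qEntropy(\sigma)}\leq\sum_i\abs{f(r_i)-f(s_i)}$. I would then establish the elementary fact that, for $x,y\in[0,1]$ with $\abs{x-y}\leq 1/2$, $\abs{f(x)-f(y)}\leq f(\abs{x-y})$; this is checked by splitting into cases based on which of $x,y$ is smaller and invoking concavity of $f$ together with monotonicity of $f$ on $[0,1/e]$.

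Setting $t_i\defeq\abs{r_i-s_i}$ and $\tau\defeq\sum_i t_i$, the previous estimate yields $\abs{\qEntropy(\rho)-\qEntropy(\sigma)}\leq\sum_i f(t_i)$, provided each $t_i\leq 1/2$. Jensen's inequality applied to the concave $f$ gives $\sum_i f(t_i)\leq d\cdot f(\tau/d)=\tau\log d-\tau\log\tau$. The final ingredient is the Ky~Fan/Mirsky inequality for Hermitian operators, $\tau=\sum_i\abs{r_i-s_i}\leq\norm{\rho-\sigma}_1$, a standard spectral result. Combined with monotonicity of $x\mapsto x\log d-x\log x$ on the relevant range, this produces the first inequality in the statement. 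The second inequality is one-variable calculus: $-x\log x\leq 1/e$ on $(0,1]$ (attained at $x=1/e$) and $-x\log x\leq 0$ on $[1,\infty)$, so it holds throughout the admissible range $\norm{\rho-\sigma}_1\in[0,2]$.

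The main obstacle is the localization in the modulus-of-continuity step, which only applies when $\tau\leq 1/2$. For larger trace distance the argument must be supplemented by a case analysis: one verifies that the trivial bound $\abs{\qEntropy(\rho)-\qEntropy(\sigma)}\leq\log d$ from Corollary~\ref{cor:qEntropy:basics} is dominated by $\tau\log d-\tau\log\tau$ in the complementary regime, so the stated inequality holds throughout. The Ky~Fan/Mirsky inequality, while classical (provable via the Courant--Fischer characterization of eigenvalues or via Lidskii's theorem on majorization), is the other non-trivial ingredient; everything else is calculus and the spectral formula for $\qEntropy$.
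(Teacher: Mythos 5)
The paper's own proof is omitted (cited to Nielsen--Chuang), so there is nothing in-paper to compare against; your outline is the standard textbook approach. However, the case analysis you sketch for the regime $\tau>1/2$ contains a genuine error, and in fact the first inequality of the theorem as stated is false without a smallness hypothesis on $\norm{\rho-\sigma}_1$, so no argument can close that case.

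You claim that ``the trivial bound $\abs{\qEntropy(\rho)-\qEntropy(\sigma)}\leqslant\log d$ \ldots is dominated by $\tau\log d-\tau\log\tau$ in the complementary regime.'' Writing $g(x)\defeq x\log(d/x)$, this would require $g(\tau)\geqslant\log d$ for all $\tau>1/2$; but $g$ is concave, increases only on $(0,d/e)$, and $g(d)=0$, so the claim fails as soon as $\tau$ is moderately large relative to $d$. Concretely, take $d=3$, $\rho$ pure, and $\sigma=I/3$. Then $\qEntropy(\rho)=0$, $\qEntropy(\sigma)=\log 3\approx 1.099$, while $\tau=\norm{\rho-\sigma}_1=4/3$ (trace-norm convention) and the middle expression of the theorem equals $\tfrac43\log\tfrac94\approx 1.081<\log 3$. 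The first inequality therefore fails outright, and the same example also violates it under the trace-distance convention $\tfrac12\tr\sqrt{(\rho-\sigma)^\Herm(\rho-\sigma)}$, where $\norm{\rho-\sigma}_1=2/3$ and the bound is $\tfrac23\log\tfrac92\approx 1.003$. The reference the paper itself cites (Nielsen--Chuang, Theorem~11.6) states Fannes' inequality only under a smallness assumption on the trace distance, and that hypothesis has been silently dropped in the transcription; a correct proof must restore it (or pass to Audenaert's sharpened unconditional version, which replaces $\log d$ by $\log(d-1)$ and $\eta$ by the binary entropy).

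A secondary, related gap: the step ``combined with monotonicity of $x\mapsto x\log d-x\log x$'' passing from $\tau$ to $\norm{\rho-\sigma}_1$ is only valid when $\norm{\rho-\sigma}_1\leqslant d/e$, since $g$ is decreasing past $d/e$. For $d=2$ the trace norm of a difference of density operators can be as large as $2>2/e$, so ``monotonicity on the relevant range'' does not by itself justify the replacement. Under the missing hypothesis $\norm{\rho-\sigma}_1\leqslant 1/e$ both this issue and the case analysis vanish, which is further evidence that the hypothesis is essential. The remaining ingredients of your sketch --- the spectral form of $\qEntropy$, the modulus-of-continuity estimate $\abs{f(x)-f(y)}\leqslant f(\abs{x-y})$ for $\abs{x-y}\leqslant 1/2$, Jensen's inequality, and the Mirsky/Lidskii bound $\sum_i\abs{r_i-s_i}\leqslant\norm{\rho-\sigma}_1$ --- are all correct and are exactly those used in the standard proof.
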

\begin{proof}
Omitted. (See, \eg,~\cite[Theorem~11.6]{nielsen2011quantum} for details.)
\end{proof}
\begin{definition}[Conditional entropy, mutual information]
\index{quantum conditional entropy}
\index{quantum mutual information}
Given a joint quantum system $\system{AB}$ described by the density operator $\rho_\system{AB}$, the conditional entropy of $\system{A}$ given $\system{B}$ is defined as
\[
\qEntropy(\system{A}|\system{B})\defeq \qEntropy(\system{AB})-\qEntropy(\system{B}),
\]
and the mutual information between the systems $\system{A}$ and $\system{B}$ is defined as
\[
\qmutualInfo(\system{A}:\system{B})\defeq \qEntropy(\system{A})+\qEntropy(\system{B})-\qEntropy(\system{AB}),
\]
where the reduced operators $\rho_\system{A}=\tr_\system{B}(\rho_\system{AB})$ and $\rho_\system{B}=\tr_\system{A}(\rho_\system{AB})$ are used for $\qEntropy(\system{A})$ and $\qEntropy(\system{B})$, respectively.
\end{definition}
\begin{remark}
In the above definitions, the quantities $\qEntropy(\system{A})$, $\qEntropy(\system{A}|\system{B})$, and $\qmutualInfo(\system{A}:\system{B})$  depend on the density operators of the involved system(s).
If the density operators are not clear from the context, one may specify the density operators being considered by writing $\qEntropy(\system{A})[\rho_\system{A}]$, $\qEntropy(\system{A}|\system{B})[\rho_\system{AB}]$, and $\qmutualInfo(\system{A}:\system{B})[\rho_\system{AB}]$, respectively.
\end{remark}
\begin{example}[Quantum Conditional Entropy can be Negative]
Consider a joint quantum system $\system{AB}$ with its density operator being $\rho_\system{AB}=\braket{\psi_\system{AB}}$, where $\bra{\psi_\system{AB}}=\frac{\bra{00}+\bra{11}}{\sqrt{2}}$ (see Example~\ref{ex:joint:system:1}).
In this case, $\qEntropy(\system{AB})=0$ (see Corollary~\ref{cor:qEntropy:basics}), and $\qEntropy(\system{A})=\qEntropy(\system{B})=1$ bit since $\rho_\system{A}=\rho_\system{B}=\frac{1}{2}I$.
Moreover, one can verify that $\qEntropy(\system{A}|\system{B})=\qEntropy(\system{B}|\system{A})=-1$ bit and that $\qmutualInfo(\system{A}:\system{B}) = 2$ bits.
(Note that we have used binary logarithm in this example.)
\end{example}
We omit the proof of the following two inequalities (see, \eg,~\cite[Section~11.3.4, 11.4]{nielsen2011quantum}).
\begin{theorem}[Subadditivity]
	Given a joint quantum system $\system{AB}$, it holds that
	\begin{align}
		\qEntropy(\system{A},\system{B})&\leqslant \qEntropy(\system{A}) + \qEntropy(\system{B}),\\
		\qEntropy(\system{A},\system{B})&\geqslant \abs{\qEntropy(\system{A}) - \qEntropy(\system{B})}.
	\end{align}
\end{theorem}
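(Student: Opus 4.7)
The plan is to prove the two inequalities separately, using Klein's inequality for the first and a purification trick for the second.

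For the subadditivity inequality $\qEntropy(\system{A},\system{B}) \leqslant \qEntropy(\system{A}) + \qEntropy(\system{B})$, I would apply Klein's inequality with $\rho = \rho_\system{AB}$ and $\sigma = \rho_\system{A} \otimes \rho_\system{B}$. The key algebraic fact is that the matrix logarithm factorizes across tensor products: $\log(\rho_\system{A}\otimes\rho_\system{B}) = \log\rho_\system{A}\otimes I_\system{B} + I_\system{A}\otimes\log\rho_\system{B}$, which follows from the eigenvalue decomposition of each factor and the spectral definition of $\log$ (as in Remark~\ref{rem:decompose:logA:AlogA}). Substituting into $\infdiv{\rho_\system{AB}}{\rho_\system{A}\otimes\rho_\system{B}}\geqslant 0$, the cross terms $\tr\bigl(\rho_\system{AB}(\log\rho_\system{A}\otimes I)\bigr) = \tr(\rho_\system{A}\log\rho_\system{A})$ (and analogously for $\system{B}$) collapse by the definition of partial trace, yielding $-\qEntropy(\system{AB}) + \qEntropy(\system{A}) + \qEntropy(\system{B}) \geqslant 0$. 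Equality holds precisely when $\rho_\system{AB} = \rho_\system{A}\otimes\rho_\system{B}$.

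For the triangle (Araki--Lieb) inequality $\qEntropy(\system{A},\system{B}) \geqslant \bigabs{\qEntropy(\system{A}) - \qEntropy(\system{B})}$, I would introduce a purifying auxiliary system $\system{R}$ with state space large enough so that there exists a pure state $\bra{\psi_\system{ABR}}$ satisfying $\tr_\system{R}(\braket{\psi_\system{ABR}}) = \rho_\system{AB}$; existence of such a purification is standard and follows from Schmidt decomposition. The two facts I would invoke are: (i) any two complementary subsystems of a pure joint state have equal von Neumann entropy (again via Schmidt decomposition and Corollary~\ref{cor:qEntropy:basics}); so $\qEntropy(\system{AB}) = \qEntropy(\system{R})$ and $\qEntropy(\system{A}) = \qEntropy(\system{BR})$. (ii) Subadditivity, already established in the first half, applied to the bipartition $\system{B}$ versus $\system{R}$ of the joint system $\system{BR}$.

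Combining these, $\qEntropy(\system{A}) = \qEntropy(\system{BR}) \leqslant \qEntropy(\system{B}) + \qEntropy(\system{R}) = \qEntropy(\system{B}) + \qEntropy(\system{AB})$, giving $\qEntropy(\system{AB}) \geqslant \qEntropy(\system{A}) - \qEntropy(\system{B})$. The symmetric bound $\qEntropy(\system{AB}) \geqslant \qEntropy(\system{B}) - \qEntropy(\system{A})$ follows by swapping the roles of $\system{A}$ and $\system{B}$, completing the absolute-value bound. The main conceptual obstacle, compared to the classical case where the analogous inequality $\entropy(\rv{X},\rv{Y}) \geqslant \max\{\entropy(\rv{X}),\entropy(\rv{Y})\}$ holds trivially, is that quantum conditional entropy can be negative; the purification step is precisely what replaces the missing monotonicity and reduces the triangle inequality back to subadditivity on an enlarged system.
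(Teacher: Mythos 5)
Your proof is correct and matches the standard argument (Klein's inequality for subadditivity, purification plus the equal-entropy-of-complementary-marginals fact for Araki--Lieb) given in Nielsen and Chuang, which is precisely the reference the paper cites in lieu of a proof. Since the paper itself omits the argument, your proposal supplies the intended proof rather than an alternative one.
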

\begin{theorem}[Strong subadditivity]
	Given a joint quantum system $\system{ABC}$, it holds that
	\begin{equation}
	\qEntropy(\system{A},\system{B},\system{C}) + \qEntropy(\system{B}) \leqslant
	\qEntropy(\system{A},\system{B}) + \qEntropy(\system{B},\system{C}).
	\end{equation}
\end{theorem}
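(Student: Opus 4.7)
The plan is to deduce strong subadditivity from the \emph{monotonicity of quantum relative entropy under CPTP maps} (Lindblad's theorem), applied to the partial trace $\tr_\system{C}$. First I would establish the two identities
\begin{align*}
\infdiv{\rho_\system{ABC}}{\rho_\system{A}\tensor\rho_\system{BC}}
 &= \qEntropy(\system{A}) + \qEntropy(\system{BC}) - \qEntropy(\system{ABC}),\\
\infdiv{\rho_\system{AB}}{\rho_\system{A}\tensor\rho_\system{B}}
 &= \qEntropy(\system{A}) + \qEntropy(\system{B}) - \qEntropy(\system{AB}),
\end{align*}
both of which follow in a few lines from the definition of $\infdiv{\cdot}{\cdot}$, the logarithm decomposition $\log(X\tensor Y) = (\log X)\tensor I + I\tensor(\log Y)$, and the defining property of the partial trace ($\tr(\rho_\system{ABC}\cdot(M_\system{A}\tensor I_\system{BC}))=\tr(\rho_\system{A}M_\system{A})$). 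Since $\tr_\system{C}$ maps $\rho_\system{ABC}\mapsto\rho_\system{AB}$ and $\rho_\system{A}\tensor\rho_\system{BC}\mapsto\rho_\system{A}\tensor\rho_\system{B}$, granting monotonicity then gives $\infdiv{\rho_\system{AB}}{\rho_\system{A}\tensor\rho_\system{B}} \leqslant \infdiv{\rho_\system{ABC}}{\rho_\system{A}\tensor\rho_\system{BC}}$; cancelling $\qEntropy(\system{A})$ and rearranging yields the claimed inequality $\qEntropy(\system{ABC})+\qEntropy(\system{B}) \leqslant \qEntropy(\system{AB})+\qEntropy(\system{BC})$.

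The real work is Lindblad's monotonicity, and this is where the main obstacle sits. My intended route is via Lieb's concavity theorem: for any operator $K$ and any $t\in[0,1]$, the functional $(A,B)\mapsto\tr(K^\Herm A^t K B^{1-t})$ is jointly concave on pairs of strictly positive operators. Differentiating in $t$ at $t=1$ (with $K$ chosen appropriately) and rearranging yields joint convexity of $(\rho,\sigma)\mapsto\infdiv{\rho}{\sigma}$. Monotonicity under $\tr_\system{C}$ then follows by the standard unitary-averaging trick: one writes $\tr_\system{C}(\sigma)\tensor(I_\system{C}/d_\system{C}) = \int (I\tensor U)\,\sigma\,(I\tensor U^\Herm)\,\D U$ against the Haar measure on $\system{C}$, applies joint convexity together with unitary invariance of $\infdiv{\cdot}{\cdot}$, and cancels the ancilla using the product rule $\infdiv{\rho\tensor\tau}{\sigma\tensor\tau}=\infdiv{\rho}{\sigma}$.

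The heaviest single investment is Lieb's concavity theorem itself, which I would attack by reducing to operator-convexity of $x\mapsto -\log x$ via the integral representation $-\log x = \int_0^\infty \bigl(\tfrac{1}{x+s}-\tfrac{1}{1+s}\bigr)\,\D s$, exhibiting $-\log$ as a positive combination of the operator-convex functions $x\mapsto -(x+s)^{-1}$, and then concluding through a relative-modular-operator interpolation argument. An alternative and in some respects cleaner path would bypass Lieb's theorem: prove isometric invariance of $\infdiv{\cdot}{\cdot}$ (immediate from cyclicity of the trace), dilate $\tr_\system{C}$ via Stinespring as an isometry followed by the discarding of a pure-state ancilla, and dispatch the discarding step by operator-Jensen-type inequalities. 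Either way the algebraic passage from monotonicity to strong subadditivity is elementary; all of the genuine difficulty is concentrated in the operator-theoretic lemma that powers monotonicity.
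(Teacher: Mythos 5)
The paper itself does not supply a proof here: both subadditivity and strong subadditivity are stated with the proof explicitly omitted and a pointer to Nielsen and Chuang (Sections~11.3.4, 11.4). Your main route --- the two identities $\infdiv{\rho_\system{ABC}}{\rho_\system{A}\tensor\rho_\system{BC}} = \qmutualInfo(\system{A}:\system{BC})$ and $\infdiv{\rho_\system{AB}}{\rho_\system{A}\tensor\rho_\system{B}} = \qmutualInfo(\system{A}:\system{B})$, followed by monotonicity of $\infdiv{\cdot}{\cdot}$ under $\tr_\system{C}$ and cancellation of $\qEntropy(\system{A})$ --- is correct, and it is in fact the same architecture the cited reference uses: Nielsen and Chuang also establish Lieb's concavity theorem and then deduce strong subadditivity from it. The chain Lieb concavity $\Rightarrow$ joint convexity of quantum relative entropy $\Rightarrow$ monotonicity under partial trace (via Haar averaging over $\system{C}$ plus unitary invariance and cancellation of the $I_\system{C}/d_\system{C}$ ancilla) $\Rightarrow$ strong subadditivity is the classical Lieb--Ruskai/Lindblad argument and it goes through as you sketch it.

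One caution on the ``alternative and cleaner path'' in your last paragraph: Stinespring-dilating $\tr_\system{C}$ does not reduce the problem, because $\tr_\system{C}$ already \emph{is} the ``discard an ancilla'' step of a Stinespring factorization. Dilation lets you reduce monotonicity under an \emph{arbitrary} CPTP map to monotonicity under a partial trace; it cannot be used to prove monotonicity under partial trace itself, so the proposed reduction is circular. If you genuinely want to bypass Lieb's concavity theorem, you need a different engine for the discarding step --- for example Petz's argument via the relative modular operator and the operator Jensen inequality, or Effros's perspective-function proof of joint convexity --- neither of which is a Stinespring-dilation reduction. The route in your first two paragraphs is sound; treat the closing ``alternative'' as a pointer to a distinct family of proofs rather than a shortcut.
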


\subsection{Quantum Channels and Their Representations}
\begin{definition}[Quantum Channel]\label{def:quantum:channel}
\index{quantum channel}
A quantum channel from system $\system{A}$ to system $\system{B}$ is a complete positive trace-preserving linear transformation from $\LinearOp(\hilbert_\system{A})$ to $\LinearOp(\hilbert_\system{B})$.
Namely, $\operator{N}:\LinearOp(\hilbert_\system{A})\to\LinearOp(\hilbert_\system{B})$ is a quantum channel if 
\begin{itemize}
	\item $\operator{N}$ is linear, namely $\operator{N}(a\cdot\rho+\sigma)=a\cdot\operator{N}(\rho)+\operator{N}(\sigma)$ $\forall\rho,\sigma \in \LinearOp(\hilbert_\system{A})$ and $a\in\Complex$;
	\item $\operator{N}$ is completely positive, namely for any other system $\system{R}$,
	\[
	\left(\id_\system{R}\tensor\operator{N}\right)(\rho_\system{RA})\in\PositiveOp(\hilbert_{\system{RB}})
	\quad \forall \rho_\system{RA}\in\PositiveOp(\hilbert_\system{RA}),
	\]
	where $\id_\system{R}$ is the identity map on $\hilbert_\system{R}$;
	\item $\operator{N}$ is trace-preserving, namely $\tr(\operator{N}(\rho))=\tr(\rho)$ for any $\rho\in\LinearOp(\hilbert_\system{A})$.
\end{itemize}
Additionally, $\operator{N}$ is said to be finite-dimensional if both $\hilbert_\system{A}$ and $\hilbert_\system{B}$ are of finite dimension.
\end{definition}
For the rest of this thesis, all quantum channels involved are assumed to be finite-dimensional, unless stated otherwise.
In the following, we review two representations of quantum channels.
\begin{theorem}[Operator-sum representation / Kraus representation]\label{thm:Kraus}
$\operator{N}$ is a finite-dimensional quantum channel if and only if there exist $M$ operators $\{E_k:\hilbert_\system{A}\to\hilbert_\system{B}\}_{k=1}^M$ such that
\begin{equation}\label{eq:Kraus}
\operator{N}:\rho\mapsto \sum_{k=1}^{M} E_k\rho E_k^\Herm
\end{equation}
and such that $1\leq M\leqslant \dim(\hilbert_\system{A})\cdot\dim(\hilbert_\system{B})$ and $\sum_{k=1}^M E_k^\Herm E_k = I$.
The set of operators $\{E_k\}_{k=1}^M$ is known as an operator-sum representation or a Kraus representation of the quantum channel $\operator{N}$.
\end{theorem}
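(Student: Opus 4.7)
The proof splits naturally into the easy ``if'' direction and the substantive ``only if'' direction. For the ``if'' direction, I would check the three defining properties of a quantum channel directly from the form~\eqref{eq:Kraus}. Linearity is immediate. Complete positivity follows because for any auxiliary system $\system{R}$ and any $\rho_{\system{RA}} \in \PositiveOp(\hilbert_{\system{RA}})$, we have $(\id_\system{R} \tensor \operator{N})(\rho_{\system{RA}}) = \sum_k (I_\system{R} \tensor E_k)\, \rho_{\system{RA}}\, (I_\system{R}\tensor E_k)^\Herm$, which is a sum of conjugations of a PSD operator and hence PSD. Trace preservation follows from the cyclic property of the trace: $\tr(\operator{N}(\rho)) = \sum_k \tr(E_k \rho E_k^\Herm) = \tr\!\bigl(\rho \sum_k E_k^\Herm E_k\bigr) = \tr(\rho)$.

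For the ``only if'' direction, the plan is to use the Choi--Jamio\l{}kowski correspondence. Introduce an auxiliary system $\system{A}'$ with $\hilbert_{\system{A}'} \cong \hilbert_\system{A}$, fix an orthonormal basis $\{\bra{i}\}_i$, and form the (unnormalized) maximally entangled vector $\bra{\Phi} \defeq \sum_i \bra{i}_{\system{A}'}\tensor\bra{i}_\system{A}$. Define the Choi operator $J(\operator{N}) \defeq (\id_{\system{A}'}\tensor\operator{N})(\braket{\Phi}) \in \LinearOp(\hilbert_{\system{A}'}\tensor\hilbert_\system{B})$. Since $\braket{\Phi} \in \PositiveOp(\hilbert_{\system{A}'}\tensor\hilbert_\system{A})$, the complete positivity of $\operator{N}$ (applied with $\system{R} = \system{A}'$) gives $J(\operator{N}) \geq 0$. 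A spectral decomposition on the $\bigl(\dim(\hilbert_\system{A})\cdot\dim(\hilbert_\system{B})\bigr)$-dimensional space $\hilbert_{\system{A}'}\tensor\hilbert_\system{B}$ then yields $J(\operator{N}) = \sum_{k=1}^M \braket{v_k}$ for some $M \leq \dim(\hilbert_\system{A})\cdot\dim(\hilbert_\system{B})$ and some vectors $\bra{v_k} \in \hilbert_{\system{A}'}\tensor\hilbert_\system{B}$.

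The key step is now to turn each $\bra{v_k}$ into a Kraus operator. Using the ``vectorization correspondence,'' every vector $\bra{v} \in \hilbert_{\system{A}'}\tensor\hilbert_\system{B}$ can be written uniquely as $(I_{\system{A}'}\tensor E)\bra{\Phi}$ for a unique operator $E:\hilbert_\system{A}\to\hilbert_\system{B}$ (explicitly, if $\bra{v} = \sum_{i,j} v_{ij}\bra{i}_{\system{A}'}\tensor\bra{j}_\system{B}$, then $E = \sum_{i,j} v_{ij}\bra{j}_\system{B}\ket{i}_\system{A}$). Letting $E_k$ be the operator so obtained from $\bra{v_k}$, I would verify the reconstruction formula $\operator{N}(\rho) = \tr_{\system{A}'}\!\bigl((\rho^\transp\tensor I_\system{B})\, J(\operator{N})\bigr)$ for every $\rho\in\LinearOp(\hilbert_\system{A})$. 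This formula holds because both sides are linear in $\rho$ and agree on the rank-one basis $\rho = \bra{i}\ket{j}$ by direct evaluation of $J(\operator{N})$. Substituting $J(\operator{N}) = \sum_k (I_{\system{A}'}\tensor E_k)\braket{\Phi}(I_{\system{A}'}\tensor E_k^\Herm)$ and using the identity $(A_{\system{A}'}\tensor I_\system{A})\bra{\Phi} = (I_{\system{A}'}\tensor A^\transp_\system{A})\bra{\Phi}$ (the ``transpose trick'') collapses the expression to $\operator{N}(\rho) = \sum_k E_k \rho E_k^\Herm$. Finally, trace preservation applied to $\rho = \bra{\psi}\ket{\psi}$ for arbitrary $\bra{\psi}\in\hilbert_\system{A}$ forces $\ket{\psi}\!\bigl(\sum_k E_k^\Herm E_k\bigr)\!\bra{\psi} = \braket{\psi}$ and hence $\sum_k E_k^\Herm E_k = I$.

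The main obstacle is the reconstruction formula relating $\operator{N}$ to its Choi operator, together with the accounting of bases needed to identify $\bra{v_k}$ with $E_k$; once one is comfortable with the vectorization/transpose correspondence, the verification that $\sum_k E_k \rho E_k^\Herm$ reproduces $\operator{N}(\rho)$ becomes a routine basis computation, and the dimension bound $M \leq \dim(\hilbert_\system{A})\cdot\dim(\hilbert_\system{B})$ is automatic from the spectral decomposition.
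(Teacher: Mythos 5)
Your proof is correct and follows essentially the same route as the paper's proof outline: both construct the Choi operator on an auxiliary copy of $\hilbert_\system{A}$ (your $\system{A}'$, the paper's $\system{R}$), decompose it into rank-one PSD pieces, and read off the Kraus operators via the vectorization correspondence, with the dimension bound coming from the rank of the Choi operator. You simply fill in the verification steps that the paper's sketch leaves implicit — spelling out the ``if'' direction, the reconstruction formula $\operator{N}(\rho) = \tr_{\system{A}'}\!\bigl((\rho^\transp\tensor I_\system{B})\, J(\operator{N})\bigr)$, and the transpose trick — and those details are right.
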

\begin{proof}[Proof Outline ({see, \eg,~\cite[Theorem~8.1]{nielsen2011quantum}} for details)]
It is relatively straightforward to show that the mapping $\rho\mapsto \sum_{k=1}^{M} E_k\rho E_k^\Herm$ is a quantum channel.
To show any quantum channel can be represented in this way, we introduce an auxiliary system $\system{R}$ with $\dim(\hilbert_\system{R})=\dim(\hilbert_\system{A})$, and define
\[
\sigma \defeq \left( \id_\system{R}\tensor\operator{N} \right) \left( \left( \sum_i \bra{i_\system{R}}\bra{i_\system{A}} \right) \left( \sum_i \bra{i_\system{R}}\bra{i_\system{A}} \right)^\Herm \right),
\]
where $\{\bra{i_\system{R}}\}_i$ and $\{\bra{i_\system{A}}\}_i$ are some orthonormal bases for $\hilbert_\system{R}$ and $\hilbert_\system{A}$, respectively.
Decompose $\sigma$ into $\sigma=\sum_{k}\braket{k_\system{RB}}$, where $\bra{k_\system{RB}}\in\hilbert_\system{RB}$ need not be normalized.
We claim that~\eqref{eq:Kraus} can be satisfied by the operators $\{E_k\}_{k=1}^M$ defined as
\[
E_k \bra{\psi} \defeq \sum_i \inner{i_\system{A}}{\psi} \inner{i_\system{R}}{k_\system{RB}}
\quad \forall \bra{\psi}\in\hilbert_\system{A}
\]
for each $k=1,\ldots,M$.
(Please refer to the entry ``$\inner{\phi_{\system{A}}}{\psi_{\system{AB}}}$'' in the notation table.)
Finally, $M\leqslant \dim(\hilbert_\system{A})\cdot\dim(\hilbert_\system{B})$, since the rank of $\sigma$ is at most $\dim(\hilbert_\system{R}\tensor\hilbert_\system{B})=\dim(\hilbert_\system{R})\cdot\dim(\hilbert_\system{B})=\dim(\hilbert_\system{A})\cdot\dim(\hilbert_\system{B})$.
\end{proof}
\begin{theorem}[Unitary representation/Stinespring representation]
	\label{thm:Stinespring}
	$\operator{N}$ is a finite-dimensional quantum channel if and only if there exists some environment system $\system{env}$ with state space of dimension at most $\dim(\hilbert_\system{A}) \cdot \dim(\hilbert_\system{B})$, and some unitary operator $U$ such that
	\begin{equation}\label{eq:unirary:representation}
	\operator{N}(\rho) = \tr_{\system{env}}\left(U(\rho\tensor\braket{0})U^\Herm\right)
	\qquad \forall \rho\in\LinearOp(\hilbert_\system{A})
	\end{equation}
	for some state vector $\bra{0}$ of $\system{env}$.
\end{theorem}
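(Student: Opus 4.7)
The plan is to bridge this statement with the Kraus representation (Theorem~\ref{thm:Kraus}) in both directions. For the ``only if'' direction, I would start from any Kraus representation $\{E_k\}_{k=1}^M$ of $\operator{N}$ with $M\leqslant\dim(\hilbert_\system{A})\cdot\dim(\hilbert_\system{B})$, take $\hilbert_\system{env}$ of dimension $M$ with an orthonormal basis $\{\bra{k_\system{env}}\}_{k=1}^M$, and single out a fixed basis vector $\bra{0_\system{env}}$. Define $V:\hilbert_\system{A}\to\hilbert_\system{B}\tensor\hilbert_\system{env}$ by $V\bra{\psi}\defeq\sum_{k=1}^{M}(E_k\bra{\psi})\tensor\bra{k_\system{env}}$. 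The identity $\sum_k E_k^\Herm E_k=I_\system{A}$ makes $V$ an isometry, so $\bra{\psi}\tensor\bra{0_\system{env}}\mapsto V\bra{\psi}$ is a well-defined partial isometry on $\hilbert_\system{A}\tensor\hilbert_\system{env}$.

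Next, I would extend this partial isometry to a unitary $U$ on $\hilbert_\system{A}\tensor\hilbert_\system{env}$ by mapping any orthonormal basis of the orthogonal complement of $\hilbert_\system{A}\tensor\bra{0_\system{env}}$ onto any orthonormal basis of the orthogonal complement of the image of $V$. A direct expansion then yields $\tr_\system{env}\bigl(U(\rho\tensor\braket{0_\system{env}})U^\Herm\bigr)=\sum_{k,k'}E_k\rho E_{k'}^\Herm\cdot\tr\bigl(\bra{k_\system{env}}\ket{k'_\system{env}}\bigr)=\sum_k E_k\rho E_k^\Herm=\operator{N}(\rho)$, as required.

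For the ``if'' direction, I would read off candidate Kraus operators $\{E_k\}$ from $U$ via the identity $U(\bra{\psi}\tensor\bra{0_\system{env}})=\sum_k (E_k\bra{\psi})\tensor\bra{k_\system{env}}$ for every $\bra{\psi}\in\hilbert_\system{A}$. Unitarity of $U$ restricted to $\hilbert_\system{A}\tensor\bra{0_\system{env}}$ forces $\sum_k E_k^\Herm E_k=I$, and the same partial-trace computation recovers $\operator{N}(\rho)=\sum_k E_k\rho E_k^\Herm$; Theorem~\ref{thm:Kraus} then certifies $\operator{N}$ as a quantum channel.

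The main obstacle is the dimension bookkeeping required to extend $V$ to a genuine unitary $U$. The map is prescribed only on the $\dim(\hilbert_\system{A})$-dimensional subspace $\hilbert_\system{A}\tensor\bra{0_\system{env}}$, and a unitary extension demands that the ambient input and output spaces carry matching dimensions; when $\dim(\hilbert_\system{A})\neq\dim(\hilbert_\system{B})$ one must first isometrically embed both spaces into a common Hilbert space (for example by padding the smaller of the two with zero blocks inside the environment) before completing $V$ via a Gram-Schmidt-type procedure on the orthogonal complements. Because the Kraus rank from Theorem~\ref{thm:Kraus} is already at most $\dim(\hilbert_\system{A})\cdot\dim(\hilbert_\system{B})$, the stated dimension budget for $\hilbert_\system{env}$ accommodates this padding, so no further inflation is needed.
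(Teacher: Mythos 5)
Your argument follows the paper's route exactly: in both directions you translate between the Kraus picture (Theorem~\ref{thm:Kraus}) and the Stinespring picture by identifying $E_k = \ket{k_\system{env}}U\bra{0_\system{env}}$ and completing the isometry $\bra{\psi}\mapsto\sum_k (E_k\bra{\psi})\tensor\bra{k_\system{env}}$ on the subspace $\hilbert_\system{A}\tensor\bra{0_\system{env}}$ to a unitary on the whole product space. The paper dispatches the dimension obstacle you worry about at the end by simply declaring $\dim\hilbert_\system{A}=\dim\hilbert_\system{B}$ without loss of generality and then orthonormally completing the columns of $[U]$ from the blocks $[E_k]$, so your extra paragraph on padding is essentially an attempt to justify that unstated reduction; neither of you spells it out in full, but your sketch is consistent with the paper's.
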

\begin{proof}[Proof Outline ({see, \eg,~\cite[Box~8.1]{nielsen2011quantum}} for details)]
We use Theorem~\ref{thm:Kraus}.
Without loss of generality, assume $\dim(\hilbert_\system{A})=\dim(\hilbert_\system{B})$.
Let  $\{\bra{i_\system{env}}\}_i$ be an orthonormal basis for $\hilbert_\system{env}$ where $\bra{0_\system{env}}=\bra{0}$.
In this case, by letting $E_k\defeq \ket{k_\system{env}}U\bra{0_\system{env}}$, we have
\[
\tr_{\system{env}}\left(U(\rho\tensor\braket{0})U^\Herm\right) = \sum_{k} E_k\rho E_k^\Herm.
\]
By verifying $\sum_k E_k^\Herm E_k = I$, we have shown that the mapping $\operator{N}$ in~\eqref{eq:unirary:representation} is a quantum channel.
On the other hand, to show that any quantum channel $\operator{N}$ can be represented in this way, it suffices to, given any Kraus representation $\{E_k\}_k$ of some quantum channel, construct a unitary operator $U$ such that $\ket{k_\system{env}}U\bra{0_\system{env}}=E_k$.
This can be done by fixing the first column (of blocks) of $[U]$ to be $[E_k]$'s, and filling up the remaining columns of the matrix while ensuring all columns are orthonormal \wrt each other.
Here, $[E_k]$ is the matrix representation of $E_k$ under some orthonormal basis $\{\bra{i_\system{P}}\}_i$ (for both $\hilbert_\system{A}$ and $\hilbert_\system{B}$); and $[U]$ is the matrix representation of $U$ under the basis $\{\bra{k_\system{env}}\tensor\bra{i_\system{P}}\}_{k,i}$.
\end{proof}
\begin{corollary}[Choi--Jamio{\l}kowski matrix] \label{cor:Choi}
Let $\operator{N}:\LinearOp(\hilbert_\system{A})\to\LinearOp(\hilbert_\system{B})$ be a quantum channel, and let $\{\bra{\ell}\}_{\ell=1}^{d}$ be an orthonormal basis for both $\hilbert_\system{A}$ and $\hilbert_\system{B}$ (where $d=\dim(\hilbert_\system{A})=\dim(\hilbert_\system{B})$).
Let the matrix $W\in\Complex^{d^2\times d^2}$ be defined as
\begin{equation}
W_{(i,j),(i'j')} \defeq \ket{j}\operator{N}(\bra{i}\!\ket{i'})\bra{j'}
	\qquad \forall i,i',j,j'=1,\ldots,d.
\end{equation}
Then the matrix $W$ is PSD.
\end{corollary}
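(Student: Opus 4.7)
The plan is to recognize that $W$ is the \emph{Choi--Jamio{\l}kowski matrix} of $\operator{N}$, and to establish positivity by exhibiting $W$ as the image of a manifestly positive operator under a completely positive map. Concretely, I would introduce an auxiliary system $\system{R}$ with $\hilbert_\system{R}\cong\hilbert_\system{A}$ and the (unnormalized) maximally entangled vector $\bra{\Phi} \defeq \sum_{\ell=1}^d \bra{\ell}_\system{R} \tensor \bra{\ell}_\system{A}$, so that $\braket{\Phi}\in\PositiveOp(\hilbert_{\system{RA}})$ as a rank-$1$ outer product. The main identity to verify is
\begin{equation*}
W \;=\; \bigl(\id_\system{R}\tensor\operator{N}\bigr)\!\bigl(\braket{\Phi}\bigr),
\end{equation*}
under the product basis $\{\bra{i}_\system{R}\tensor\bra{j}_\system{B}\}_{i,j}$ of $\hilbert_{\system{RB}}$. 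Once this is established, complete positivity of $\operator{N}$ (Definition~\ref{def:quantum:channel}) immediately forces the right-hand side into $\PositiveOp(\hilbert_{\system{RB}})$, which is exactly the PSD claim about $W$.

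The key verification is the matrix-entry computation. Expanding $\braket{\Phi}=\sum_{i,i'}(\bra{i}\ket{i'})\tensor(\bra{i}\ket{i'})$ and applying $\id_\system{R}\tensor\operator{N}$ componentwise by linearity yields $\sum_{i,i'} (\bra{i}\ket{i'})\tensor\operator{N}(\bra{i}\ket{i'})$. Reading off the $((i,j),(i',j'))$-entry of a tensor product $A\tensor B$ as $A_{i,i'}\cdot B_{j,j'}$, only the term with matching indices contributes, leaving exactly $\ket{j}\operator{N}(\bra{i}\ket{i'})\bra{j'}$, which matches the definition of $W_{(i,j),(i',j')}$.

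As a backup (and perhaps more elementary) route, I could instead invoke the Kraus representation from Theorem~\ref{thm:Kraus} and write $\operator{N}(\rho)=\sum_{k=1}^M E_k\rho E_k^\Herm$. Substituting $\rho=\bra{i}\ket{i'}$ gives
\begin{equation*}
W_{(i,j),(i',j')} \;=\; \sum_{k=1}^M \bigl(\ket{j}E_k\bra{i}\bigr)\cdot\bigl(\ket{i'}E_k^\Herm\bra{j'}\bigr) \;=\; \sum_{k=1}^M v_k(i,j)\,\overline{v_k(i',j')},
\end{equation*}
where $v_k(i,j)\defeq\ket{j}E_k\bra{i}\in\Complex$ and the adjoint identity $\ket{i'}E_k^\Herm\bra{j'}=\overline{\ket{j'}E_k\bra{i'}}$ is used. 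Viewing $v_k$ as a column vector in $\Complex^{d^2}$ indexed by the pair $(i,j)$, this reads $W=\sum_{k=1}^M v_k v_k^\Herm$, a sum of rank-$1$ PSD matrices, hence PSD.

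There is no real obstacle here beyond bookkeeping. The only delicate point is to respect the paper's convention that $\bra{\cdot}$ denotes a column vector and $\ket{\cdot}$ a row vector (opposite to the usual physics notation), so the expression $\bra{i}\ket{i'}$ is an outer product (a rank-$1$ matrix) while $\ket{j}\operator{N}(\cdots)\bra{j'}$ is a scalar, and the adjoint relation used in the Kraus argument is taken with that convention in mind.
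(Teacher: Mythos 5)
Your proposal is correct, and it actually contains both the paper's route and an alternative. The paper proves the corollary by citing Theorem~\ref{thm:Kraus}: substitute a Kraus decomposition $\operator{N}(\rho)=\sum_k E_k\rho E_k^\Herm$ into the definition of $W$ and observe a sum of rank-one PSD terms $v_kv_k^\Herm$ — this is exactly your ``backup'' argument, with the careful sign-of-convention remark about $\ket{i'}E_k^\Herm\bra{j'}=\overline{\ket{j'}E_k\bra{i'}}$ being the right thing to check. Your primary route is genuinely different and arguably more direct: identify $W$ as $(\id_\system{R}\tensor\operator{N})(\braket{\Phi})$ for the unnormalized maximally entangled state $\bra{\Phi}=\sum_\ell\bra{\ell}_\system{R}\tensor\bra{\ell}_\system{A}$, then read off PSD-ness from the complete positivity axiom in Definition~\ref{def:quantum:channel}, bypassing the Kraus structure theorem entirely. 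What the Kraus route buys is that it only uses a structural result the paper has already proved, which is why the paper invokes it; what the Choi-state route buys is conceptual economy, since it appeals directly to the defining property of the channel and makes the ``Choi--Jamio{\l}kowski'' name self-explanatory — and it is in fact the same auxiliary-system construction the paper itself uses inside the proof of Theorem~\ref{thm:Kraus}, so the two arguments are close cousins. Your index bookkeeping in the entry computation (the $\delta_{i,a}\delta_{a',i'}$ collapse) is correct, as is your explicit attention to the column/row $\bra{\cdot}/\ket{\cdot}$ convention the paper uses, which is indeed the opposite of standard physics usage.
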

\begin{proof}
This is a direct result of Theorem~\ref{thm:Kraus}.
We omit the details.
\end{proof}
The matrix $W$ defined in Corollary~\ref{cor:Choi} is often known as a \emph{Choi--Jamio{\l}kowski matrix} of the quantum channel $\operator{N}$ (\wrt basis $\{\bra{\ell}\}_{\ell=1}^{d}$).
In particular, it holds that
\begin{equation}
[\operator{N}(\rho)]_{j,j'} = \sum_{i,i'}W_{(i,j),(i'j')}\cdot [\rho]_{i,i'}
	\quad \forall \rho\in\DensOp(\hilbert_\system{A}),
\end{equation}
where $[\rho]$ and $[\operator{N}(\rho)]$ are the matrix representation of the corresponding operators under the basis $\{\bra{\ell}\}_{\ell=1}^{d}$.
\begin{example} Below is a list of examples of quantum channels.
	\begin{description}
		\item[Bit flip channel] $\rho\mapsto E_0\rho E_0^\Herm\!+\!E_1\rho E_1^\Herm$, 
		$E_0\!=\!\sqrt{p}\left[\begin{smallmatrix}1 & 0 \\ 0 & 1\end{smallmatrix}\right]$,
		$E_1\!=\!\sqrt{1\!-\!p}\left[\begin{smallmatrix}0 & 1 \\ 1 & 0\end{smallmatrix}\right]$,
		$p\in(0,1)$.
		\item[Phase flip channel] $\rho\mapsto E_0\rho E_0^\Herm\!+\!E_1\rho E_1^\Herm$, 
		$E_0\!=\!\sqrt{p}\left[\begin{smallmatrix}1 & 0 \\ 0 & 1\end{smallmatrix}\right]$,
		$E_1\!=\!\sqrt{1\!-\!p}\left[\begin{smallmatrix}1 & 0 \\ 0 & -1\end{smallmatrix}\right]$,
		$p\in(0,1)$.
		\item[Amplitude damping] $\rho\mapsto E_0\rho E_0^\Herm\!+\!E_1\rho E_1^\Herm$, 
		$E_0\!=\!\left[\begin{smallmatrix}1 & 0 \\ 0 & \sqrt{1-\gamma}\end{smallmatrix}\right]$,
		$E_1\!=\!\left[\begin{smallmatrix}0 & \sqrt{\gamma} \\ 0 & 0\end{smallmatrix}\right]$,
		$\gamma\in(0,1)$.
		\item[Depolarizing channel] $\rho\mapsto\frac{pI}{2}+(1-p)\rho$, $p\in(0,1)$.
		\qedhere
	\end{description}
\end{example}
\subsection{Communications over quantum channels}
Quantum channels can be used to convey either classical or quantum information.
In this section, we focus on classical communications.
Though quantum communication is an important topic in quantum information theory, it is beyond our discussion scope.
\par
Given (multiple copies of) a quantum channel from system $\system{A}$ to system $\system{B}$, the task of classical communication over this channel is to transmit some classical information from Alice, who has \emph{access} to the copies of system $\system{A}$, to Bob, who has \emph{access} to the copies of system $\system{B}$, namely
\begin{itemize}
	\item Alice is able to fully manipulate the (joint) state (\ie, the density operator) of the copies of $\system{A}$;
	\item Bob can perform any quantum measurement of his choice on the copies of $\system{B}$.
\end{itemize}
\begin{figure}
	\centering
	\begin{tikzpicture}[square/.style={rectangle, draw, minimum size=1cm}]
		\node[square] (Source) {};
		\node[square, minimum height=5cm, right=1cm of Source] (Map) {};
		\path (Source) edge[-latex] node[above=0pt] {$m$} (Map);
		\node[square, right=1cm of Map.north east, anchor=north west] (Enc1) {};
		\node[square, below=.5cm of Enc1] (Enc2) {};
		\node[square, right=1cm of Map.south east, anchor=south west] (Encn) {};
		\path (Enc2) edge[draw=none] node (Encm) {$\vdots$} (Encn);
		\node[square, right=1cm of Enc1] (N1) {$\operator{N}$};
		\node[square, right=1cm of Enc2] (N2) {$\operator{N}$};
		\node[square, right=1cm of Encn] (Nn) {$\operator{N}$};
		\path (N2) edge[draw=none] node (Nm) {$\vdots$} (Nn);
		\node[square, minimum height=5cm, right=1cm of N1.north east, anchor=north west] (Measurement) {};
		\node[square, right=1cm of Measurement] (Destination) {};
		
		\path (Map.east|-Enc1) edge[-latex] node[above=0pt] {$m_1$} (Enc1);
		\path (Map.east|-Enc2) edge[-latex] node[above=0pt] {$m_2$} (Enc2);
		\path (Map.east|-Encn) edge[-latex] node[above=0pt] {$m_n$} (Encn);
		\path (Enc1) edge[-latex] node[above=0pt] {$\rho_{\system{A}_1}^{m_1}$} (N1);
		\path (Enc2) edge[-latex] node[above=0pt] {$\rho_{\system{A}_2}^{m_2}$} (N2);
		\path (Encn) edge[-latex] node[above=0pt] {$\rho_{\system{A}_n}^{m_n}$} (Nn);
		\path (N1) edge[-latex] node[above=0pt] {$\sigma_{\system{B}_1}^{m_1}$} (N1-|Measurement.west);
		\path (N2) edge[-latex] node[above=0pt] {$\sigma_{\system{B}_2}^{m_2}$} (N2-|Measurement.west);
		\path (Nn) edge[-latex] node[above=0pt] {$\sigma_{\system{B}_n}^{m_n}$} (Nn-|Measurement.west);
		\path (Measurement) edge[-latex] node[above=0pt] {$\hat{m}$} (Destination);
		
		\node[minimum size = 3pt, fill = black, inner sep = 0pt, outer sep = 0pt, below = 6pt of Measurement.center, circle, anchor = center] (m) {};
		\draw[decoration={markings, mark = at position 1 with {\arrow[>=latex]{>}}}, postaction={decorate}, draw = none] (m.center)--([yshift=14pt,xshift=6pt]m.center);
		\draw(m.center)--([yshift=12.6pt, xshift=5.4pt]m.center);
		\draw ([xshift=8pt]m.center) arc (0:180:8pt);
	\end{tikzpicture}
	\caption{Product-state setup for classical communications over a quantum channel.}
	\label{fig:cc:over:qc}
\end{figure}
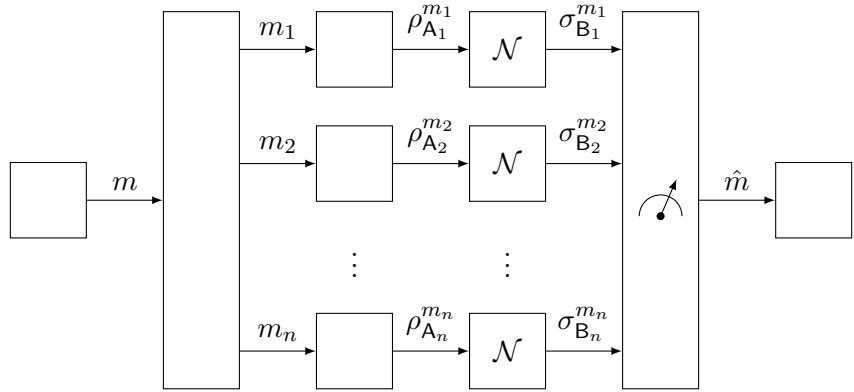
For example, to transmit a message $m\in\set{M}_n$ using $n$ copies of the channel (\ie, $\operator{N}^{\tensor n}$ will be applied):
Immediately before applying the channels, Alice prepares the states of the systems $\system{A}_1^n$ such that its (joint) density operator becomes $\rho_{\system{A}_1}^{m_1}\tensor\rho_{\system{A}_2}^{m_2}\tensor\cdots\tensor\rho_{\system{A}_n}^{m_n}$, where $(m_1,m_2,\ldots,m_n)$ is some function of $m$.\footnote{Note that $\system{A}_1^n \defeq \system{A}_1\tensor\cdots\tensor\system{A}_n$, where $\system{A}_i$ is the copy of $\system{A}$ at the $i$-th channel use. Similar statement also applies to $\system{B}_1^n$.}
Upon receiving the state of the systems $\system{B}_1^n$, Bob estimates the original message as a result of some physical operation on $\system{B}_1^n$, where, for each possible estimate $\hat{m}$, the corresponding effect is described by some POVM element $E_{\hat{m}}$.
An error is said to have occurred if this estimate differs from $m$, and in this setup, this happens with probability $\prob_{e}^{(n)}(m)=1-\tr\left(E_m\cdot\Tensor_{\ell=1}^n \operator{N}(\rho_{\system{A}_\ell}^{m_\ell})\right)$.
If there exists a sequence (indexed by $n$) of encoders $m\mapsto\Tensor_{\ell=1}^n \rho_{\system{A}_\ell}^{m_\ell}$ and POVMs $\{E_{\hat{m}}\}_{\hat{m}\in\set{M}_n}$ such that $\lim_{n\to\infty}\prob_{e}^{(n)}=\mathbf{0}$, a communication rate of $R\defeq\liminf_{n\to\infty}\frac{1}{n}\log{\size{\set{M}_n}}$ is said to have been \emph{achieved}.
If the supremum of all achievable rates and the infimum of all unachievable rates are the same, this number is known as the capacity.
The capacity of this particular setup exists and is often referred to as the \emph{product-state capacity} of the quantum channel $\operator{N}$ (see Theorem~\ref{thm:HSW}).
The product-state capacity is denoted by $C_1$.
\begin{definition}[Holevo capacity]
Given a quantum channel $\operator{N}:\LinearOp(\hilbert_\system{A})\to\LinearOp(\hilbert_\system{B})$, the non-negative real number
\begin{equation}\label{eq:def:holevo:quantity}
	\chi(\operator{N}) \defeq
	\sup_{p_i \geqslant 0, \sum_{i}p_i=1 \atop \rho_i\in\DensOp(\hilbert_\system{A})}
	\left\{\qEntropy\left(\operator{N}\left(\sum_{i}p_i\rho_i\right)\right) - 
	\sum_{i} p_i \qEntropy\left(\operator{N}\left(\rho_i\right)\right)\right\}
\end{equation}
is called the the Holevo capacity of $\operator{N}$.
\end{definition}
\begin{theorem}[HSW Theorem~\cite{holevo1973bounds, holevo1998capacity, schumacher1997sending}] \label{thm:HSW}
The product-state capacity of a (memoryless) quantum channel coincides with its Holevo capacity.
Namely, $C_1(\operator{N})\equiv\chi(\operator{N})$.
\end{theorem}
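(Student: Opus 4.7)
The plan is to prove the two inequalities $C_1(\operator{N}) \geqslant \chi(\operator{N})$ (achievability) and $C_1(\operator{N}) \leqslant \chi(\operator{N})$ (converse) separately, by adapting the standard arguments used in classical Shannon theory (random coding and Fano's inequality) to the quantum setting. The adaptation in the achievability direction is where the main difficulties lie.

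For the \textbf{converse}, I would first establish the \emph{Holevo bound}: for any ensemble $\{p_i,\sigma_i\}$ in $\DensOp(\hilbert)$ and any POVM $\{E_y\}$ applied to the mixture, the classical mutual information $\mutualInfo(\rv{X}:\rv{Y})$ of the induced joint distribution is bounded by $\qEntropy\bigl(\sum_i p_i\sigma_i\bigr) - \sum_i p_i \qEntropy(\sigma_i)$. This is shown by considering the purifying/extending system that carries the classical label and applying the data-processing inequality for quantum relative entropy together with Klein's inequality. Then, given any product-state code achieving rate $R$ over $n$ channel uses with vanishing maximal error $\prob_e^{(n)}$, I would apply Fano's inequality to the induced classical channel $\rv{M}\to\hat{\rv{M}}$ to write $nR \leqslant \mutualInfo(\rv{M}:\hat{\rv{M}}) + n\epsilon_n$ with $\epsilon_n\to 0$. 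The Holevo bound, applied to the output ensemble $\{p(m),\Tensor_\ell \operator{N}(\rho^{m_\ell}_{\system{A}_\ell})\}$, together with subadditivity of von Neumann entropy and the fact that $\qEntropy(\Tensor_\ell \operator{N}(\rho^{m_\ell}_{\system{A}_\ell})) = \sum_\ell \qEntropy(\operator{N}(\rho^{m_\ell}_{\system{A}_\ell}))$ for product outputs, yields $\mutualInfo(\rv{M}:\hat{\rv{M}}) \leqslant n\cdot\chi(\operator{N})$. Letting $n\to\infty$ gives $R\leqslant \chi(\operator{N})$.

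For \textbf{achievability}, I would fix an input ensemble $\{p_i,\rho_i\}_{i\in\set{I}}$ nearly attaining the supremum in~\eqref{eq:def:holevo:quantity}, and set $\sigma_i\defeq\operator{N}(\rho_i)$ and $\bar\sigma\defeq\sum_i p_i\sigma_i$. Pick $R<\chi(\{p_i,\rho_i\})$. Draw $\lfloor e^{nR}\rfloor$ codewords $\vect{i}(m)\in\set{I}^n$ i.i.d.\ from $p^{\tensor n}$, so that codeword $m$ is encoded as the product state $\rho_{\vect{i}(m)}\defeq\bigotimes_{\ell=1}^n \rho_{i_\ell(m)}$ with output $\sigma_{\vect{i}(m)} = \bigotimes_{\ell=1}^n \sigma_{i_\ell(m)}$. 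The key technical device is the pair of \emph{typical projectors}: the projector $\Pi$ onto the $\delta$-typical subspace of $\bar\sigma^{\tensor n}$ (spanned by eigenvectors with eigenvalues close to $e^{-n\qEntropy(\bar\sigma)}$), and, for each codeword, the conditional typical projector $\Pi_{\vect{i}}$ onto the typical subspace of $\sigma_{\vect{i}}$ (of dimension roughly $e^{n\sum_i p_i\qEntropy(\sigma_i)}$). Using properties of these projectors one shows $\expectation{\tr(\Pi \sigma_{\vect{i}})}$ and $\expectation{\tr(\Pi_{\vect{i}}\sigma_{\vect{i}})}$ both tend to $1$, while $\tr(\Pi\,\bar\sigma^{\tensor n}\Pi) \leqslant e^{-n\qEntropy(\bar\sigma)+o(n)}$ bounds the dimension. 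Decoding is defined by the square-root (pretty-good) measurement built from $\{\Lambda_m\defeq\Pi\Pi_{\vect{i}(m)}\Pi\}_m$; the error probability is controlled via the Hayashi--Nagaoka operator inequality, which plays the role of the classical union bound in this noncommutative setting. Averaging over the random codebook, the expected error probability is bounded by an expression of the form $2\cdot(\text{typicality error}) + 4\cdot e^{nR}\cdot e^{-n\chi(\{p_i,\rho_i\})+o(n)}$, which vanishes as $n\to\infty$ whenever $R<\chi(\{p_i,\rho_i\})$. A standard expurgation argument (discarding the worst half of the codewords) converts the average error guarantee into a maximal error guarantee at negligible rate loss; taking the supremum over input ensembles $\{p_i,\rho_i\}$ yields $C_1(\operator{N})\geqslant \chi(\operator{N})$.

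The hardest step is the achievability decoder analysis: the classical union bound fails for quantum hypothesis testing, so the \emph{Hayashi--Nagaoka inequality} (for any $0\leqslant S\leqslant I$ and any $T\geqslant 0$, $I - (S+T)^{-1/2}S(S+T)^{-1/2} \leqslant 2(I-S) + 4T$) must be invoked to convert the pretty-good measurement error probability into a sum of two interpretable terms, one bounding the ``missed typicality'' and one bounding the ``collision with other codewords''. Getting the second term to decay requires the typical-projector dimension bound together with the gentle-measurement lemma to guarantee that the sequence of projections $\Pi$ and $\Pi_{\vect{i}}$ does not disturb $\sigma_{\vect{i}}$ significantly. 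The converse direction, by contrast, reduces cleanly to the Holevo bound plus subadditivity, and so is essentially routine once the Holevo bound is in place.
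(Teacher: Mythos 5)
The paper does not actually prove Theorem~\ref{thm:HSW}; the proof is ``omitted'' with a pointer to Nielsen and Chuang, Section~12.3.2, so there is no in-paper argument to compare yours against directly. Your sketch is a correct high-level outline of the standard HSW proof: the converse follows from Fano's inequality together with the Holevo bound applied to the output ensemble of the code, using additivity of von~Neumann entropy on tensor-product output states and subadditivity on the reduced marginals of the averaged output; achievability is by random coding with a square-root (pretty-good) measurement built from the typical projector $\Pi$ of $\bar\sigma^{\tensor n}$ and the conditionally typical projectors $\Pi_{\vect{i}}$, plus expurgation from average to maximal error. The one place your route genuinely diverges from the cited reference is the decoder error bound: you invoke the Hayashi--Nagaoka operator inequality, which postdates both the original Holevo and Schumacher--Westmoreland papers and the Nielsen--Chuang account, and is the cleanest modern substitute for the classical union bound in this noncommutative setting; those earlier treatments reach the same estimate via a more ad hoc analysis of the square-root measurement. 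Both routes work, and Hayashi--Nagaoka shortens the argument at the modest cost of one extra operator-inequality lemma. Your inventory of supporting facts --- quantum AEP for $\Pi$ and $\Pi_{\vect{i}}$, the gentle-measurement lemma to control disturbance, and the final supremum over input ensembles --- places the right ingredients in the right order, so this is a sound plan even if every typicality estimate and the Hayashi--Nagaoka inequality itself would still need to be filled in for a complete proof.
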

\begin{proof}
Omitted. (See, \eg,~\cite[Section~12.3.2]{nielsen2011quantum}.)
\end{proof}
As a result of Theorem~\ref{thm:HSW}, computing the product-state capacity is equivalent to maximizing the \emph{Holevo quantity} $\chi(\{\varrho_i\}_i)$, where
\begin{equation}
\chi(\{\varrho_i\}_i) \defeq 
\qEntropy\left(\operator{N}\left(\sum_{i}\varrho_i\right)\right) - \sum_{i} \qEntropy\left(\operator{N}\left(\varrho_i\right)\right)
\end{equation}
and where $\{\varrho_i\}_i$ is an index set of operators such that\footnote{Such a set of positive operators is sometimes referred to as an \emph{ensemble}; however, more often, an ensemble refers to an indexed set of pairs $\{(p_i,\rho_i)\}_i$ where $(p_i)_i$ is a \pmf, and $\{\rho_i\}_i$'s are density operators. These two formalisms are equivalent.} $\varrho_i>0$ for each $i$, and $\tr(\sum_{i}\varrho_i)=1$.
The complexity of solving this optimization problem \emph{exactly} is NP-complete~\cite{beigi2007complexity}.
Fortunately, given any positive tolerance (\wrt the distance from the exact maximizer), an estimated optimal $\{\varrho_i\}_i$ can be computed efficiently in an iterative manner~\cite{nagaoka1998algorithms}.
\par
However, knowing $C_1$ is still steps away from finding the unrestricted capacity for transmitting classical information (\aka classical capacity).
If the channel is \emph{additive} (see Definition~\ref{def:additive:cc}), \ie, joint manipulation of multiple input systems does not increase capacity, it holds that classical capacity $C=C_1$.
Otherwise, the situation is more complicated.
Namely, we need to consider a relaxed setup by allowing Alice to manipulate at most $k$ input systems jointly, \ie, Alice encodes each message $m$ as $\rho_{\system{A}_1^k}^{m_1}\tensor\rho_{\system{A}_{k+1}^{2\cdot k}}^{m_2}\tensor\cdots\tensor\rho_{\system{A}_{(n\!-\!1)\cdot k}^{n\cdot k}}^{m_n}$ for some $n\in\Integers_{>0}$.
The capacity of this relaxed setup is $C_k\defeq\frac{1}{k}C_1(\operator{N}^{\tensor k})$.
Finding the classical capacity is then equivalent to finding the supremum of $\{C_k\}_{k=1}^{\infty}$, \ie, $C=\limsup_{k\to\infty}C_k(\operator{N})$.
\begin{definition}[Additivity of classical capacity] \label{def:additive:cc}
The classical capacity of a quantum channel $\operator{N}$ is said to be \emph{additive} if
\begin{equation}
C_1(\operator{N}) = C_k(\operator{N})
\defeq \frac{1}{k}C_1(\operator{N}^{\tensor k})
\end{equation}
for all positive integers $k$.
\end{definition}
Unfortunately, classical capacities are \emph{not} additive in general~\cite{hastings2009superadditivity}, despite the alternative speculations~\cite{amosov2000some, fukuda2005extending} in the early years of the study.
This makes the computation of classical capacities particularly difficult, since evaluating the Holevo quantity alone would take an exponential amount of time (and storage) as the dimensions of the involved quantum systems grow.
On the other hand, quite a number of additive channels exist, including quantum erasure channels~\cite{bennett1997capacities}, unital qubit channels~\cite{king2002additivity}, depolarizing channels~\cite{king2003capacity}, and transpose depolarizing channels~\cite{datta2006additivity}.
\par
Readers must be reminded that the setup presented in this section is not the only type of classical communication over quantum channels.
Other important setups include entanglement-assist classical communications~\cite{bennett1999entanglement, holevo2002entanglement, shor2004classical} and private (classical) communications~\cite{devetak2005private}.
These discussions are beyond the scope of this thesis, and we omit the details.
\chapter{Double-Edge Factor Graphs}\label{chapter:DeFGs} 
In this chapter, we consider the problem of representing the dynamics of quantum systems using concepts of factor graphs.
The result is a graphical model called double-edge factor graphs (DeFGs).
The idea is based on the linear-algebraic descriptions of factor graphs~\cite{al2011normal} and can be connected to the methods proposed by Loeliger and Vontobel for representing quantum systems~\cite{loeliger2012factor, loeliger2017factor}.
Related studies also include~\cite{mori2015holographic}, in which he proposed a generalized bipartite graphical model capable of representing quantum systems.
For comparison, the global functions considered here are in the form of
\begin{equation}\label{eq:global:function:DeFG}
	g(\vx,\tvx;\vy) =
	\prod_{a\in\mathcal{F}} f_a(\vx_{\nb{a}},\tvx_{\nb{a}};\vy_{\delta{a}}),
\end{equation}
where for any fixed $\vy_{\delta{a}}$, the matrix $\left[f_a^{\vy_{\delta{a}}}\right]_{\vx_{\nb{a}},\tvx_{\nb{a}}}\defeq f_a^{\vy_{\delta{a}}}(\vx_{\nb{a}},\tvx_{\nb{a}})$ is PSD.
In contrast, in \cite{loeliger2012factor, loeliger2017factor}, the functions $\{f_a\}$'s are  required to be decomposable, \ie,
\begin{equation}\label{eq:decompose}
	f_a(\vx_{\nb{a}},\tvx_{\nb{a}};\vy_{\delta{a}}) =
	\tilde{f}_a(\vx_{\nb{a}}; \vy_{\delta{a}})
	\conj{\tilde{f}_a(\tvx_{\nb{a}}; \vy_{\delta{a}})}
	\quad \exists\tilde{f}_a \quad \forall a.
\end{equation}
Therefore, the scenario we consider here is more general.
We also consider the problem of computing the partition sums of such global functions, \ie, 
\begin{equation}
	Z(\set{G}) \defeq \sum_{\vy}\sum_{\vx,\,\tvx} g(\vx,\tvx;\vy).
\end{equation}
This problem is directly related to inference problems on systems with quantum components.
For solving this problem, we propose a generalized version of the BA algorithms for DeFGs.
Some numerical results and (attempted) analysis of the algorithms are also included in this chapter.
\section[From {FGs} for Quantum Probabilities to {DeFGs}]{From Factor Graphs for Quantum Probabilities to Double-Edge Factor Graphs}
\subsection{Factor Graphs for Quantum Probabilities}\label{subsec:FG:QP}
This section reviews the method proposed in~\cite{loeliger2012factor, loeliger2017factor}, where a relaxed version of factor graphs is used to represent ``quantum probabilities''.
We start with the following example.
\begin{example}\label{exp:quantum_1} 
Consider the factor graph in Figure~\ref{fig:quantum_1}.
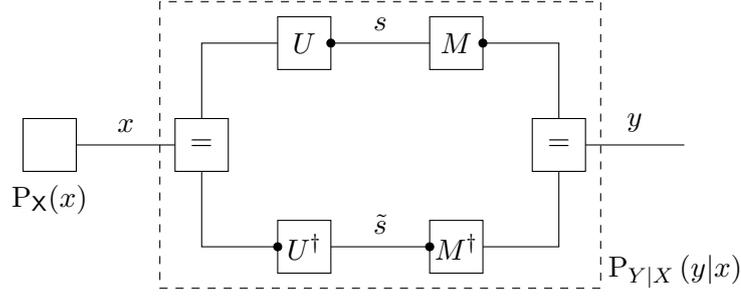
\begin{figure}\centering\centering
\begin{tikzpicture}[
	nodes={draw=none,minimum size=0pt,inner sep=0pt,outer sep=0pt},
	factor/.style={rectangle, minimum size=.7cm, draw},
	node distance=2cm]
	\node[factor] (P) {}; \node [below=5pt of P] {$\prob_\rv{X}(x)$};
	\node[factor] (e1) [right of = P] {$=$};
	\node (m1) [right=1cm of e1] {};
	\node[factor] (U1)[above=1cm of m1] {$U$};
	\node[factor] (U1H)[below=1cm of m1] {$U^\Herm$};
	\node[factor] (BH)[right of=U1] {$M$};
	\node[factor] (B)[right of=U1H] {$M^\Herm$};
	\path (B) -- (BH) coordinate [midway] (m2);
	\node[factor] (e2) [right=1cm of m2] {$=$};
	\node (Y)[right=1.3cm of e2] {};
	\path (P) edge node[above=5pt] {$x$} (e1);
	\draw (e1) |- (U1); \draw[-*] (e1) |- (U1H);
	\path (U1) edge[draw,*-] node[above=5pt] {$s$} (BH); \path (U1H) edge[draw,-*] node[above=5pt] {$\ts$} (B);
	\draw[*-] (BH) -| (e2);\draw (B) -| (e2);
	\path (e2) edge node[above=5pt] {$y$} (Y);\
	\draw [dashed] ([xshift=-.2cm,yshift=+.2cm]e1.west|-U1.north) rectangle ([xshift=+.2cm,yshift=-.2cm]e2.east|-U1H.south);
	\node [anchor=south west,xshift=.1cm] at ([xshift=+.2cm,yshift=-.2cm]e2.east|-U1H.south) {$\prob_{Y|X}\left(y|x\right)$};
\end{tikzpicture}
\caption{FG describing a simple quantum system.}
\label{fig:quantum_1}
\end{figure}
Here, some of the local functions are specified by matrices.
The dots at the end of some of the edges specify which variable is the first index of the matrix.
This setup depicts a cq-channel followed by a unitary evolution and then a measurement.
In this case, the global function is
\begin{align}\label{eq:defg:ex1:gf}
	g(s,\ts;x,y)&=\prob_\rv{X}(x)\cdot U_{s,x}U^\Herm_{x,\ts} \cdot M_{y,s}M^\Herm_{\ts,y}\\
	&=\prob_\rv{X}(x)\cdot U(s,x)\conj{U(\ts,x)}\cdot M(y,s)\conj{M(y,\ts)}, \label{eq:defg:ex1:gf:2}
\end{align}
where $U$ and $M$ are both complex unitary matrices.
Note that the letters $U$ and $M$ denote matrices in~\eqref{eq:defg:ex1:gf}, but functions in~\eqref{eq:defg:ex1:gf:2}.
\par
Though we have complex-valued functions as local functions in these cases, the marginals and the partition sum are still real and non-negative due to the symmetric structure.
For example, by closing the dashed box in Figure~\ref{fig:quantum_1}, the resultant function can be expressed as
\begin{align}\label{eq:defg:ex1:cond}
	\prob_{Y|X}(y|x) 
	&= \sum_{s}U(s,x)M(y,s) \cdot \sum_{\ts}\conj{U(\ts,x)M(y,\ts)}\\
	&= \biggabs{\sum_{s}U(s,x)M(y,s)}^2,
\end{align}
which is the conditional probability of $\rv{Y}$ given $\rv{X}$ in this setup.
\end{example}
Consider the factor graphs constructed in a symmetric manner where complex functions always appear in conjugate pairs.
In such cases, any factorization with local functions satisfying~\eqref{eq:decompose} can be represented by some factor graph as in Example~\ref{exp:quantum_1}.
A variety of such representable quantum systems can be found in~\cite{loeliger2012factor, loeliger2017factor}.
\subsection{Double-Edge Factor Graphs (DeFGs)}
\begin{example}[Redrawing of Example~\ref{exp:quantum_1}]\label{fig:ex:DeFG_1}
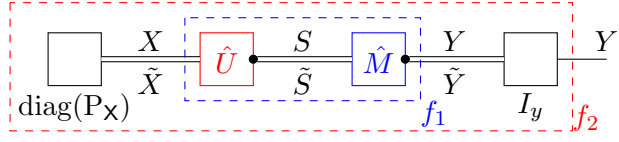
\begin{figure}\centering
\begin{tikzpicture}[
	nodes={draw=none,minimum size=0pt,inner sep=0pt,outer sep=0pt},
	factor/.style={rectangle, minimum size=.7cm, draw},
	node distance=2cm]
	\node[factor] (P) {}; \node [below=.1cm of P] {$\diag(\prob_\rv{X})$};
	\node[factor, red] (u) [right of= P] {$\hat{U}$};
	\node[factor, blue] (b) [right of= u] {$\hat{M}$};
	\node[factor] (Q) [right of= b] {}; \node [below=.1cm of Q] {$I_y$};
	
	\path (P) edge[double] node[above=3pt] {$X$} node[below=3pt] {$\tX$} (u);
	\path (u) edge[*-, double] node[above=3pt] {$S$} node[below=3pt] {$\tS$} (b);
	\path (b) edge[*-,double] node[above=3pt] {$Y$} node[below=3pt] {$\tY$} (Q);
	\path (Q) edge node[above=3pt, pos=1] {$Y$} ([xshift=1cm]Q);

	\draw [blue, dashed] ([xshift=-.2cm,yshift=.2cm]u.north west) rectangle ([xshift=.2cm,yshift=-.2cm]b.south east);
	\node [anchor=north west, blue] at ([xshift=.2cm,yshift=-.2cm]b.south east) {$f_1$};
	\draw [red, dashed] ([xshift=-.5cm,yshift=.4cm]P.north west) rectangle ([xshift=.2cm,yshift=-.6cm]Q.south east);
	\node [anchor=south west, red] at ([xshift=.2cm,yshift=-.6cm]Q.south east) {$f_2$};
\end{tikzpicture}
\caption{Quantum NFG for an elementary quantum system.}
\label{fig:DeFG_1}
\end{figure}
Consider the factor graph in Figure~\ref{fig:DeFG_1} as a redrawing of Figure~\ref{fig:quantum_1}, where each local function is specified by a matrix with the variables associated with upper edges composing the first index of the matrix.
The dots in this graph are used to specify how the variables are \emph{arranged} to compose the indices of the matrices.
For example, the factors labeled by the matrices $\hat{U}$ and $\hat{M}$ are associated with the functions
\begin{align}
	\hat{U}(s,x;\ts,\tx) &\defeq \hat{U}_{(s,x),(\ts,\tx)} = U_{s,x}\cdot U^\Herm_{\tx,\ts} = U(s,x)\conj{U(\ts,\tx)},\\
	\hat{M}(y,s;\ty,\ts) &\defeq \hat{M}_{(y,s),(\ty,\ts)} = M_{y,s}\cdot M^\Herm_{\ts,\ty} = M(y,s)\conj{M(\ty,\ts)},
\end{align}
respectively.
Thus, the global function is
\begin{equation}
	g(x,s,t,\tx,\ts,\ty) = \prob_\rv{X}(x)\cdot \hat{U}(s,x;\ts,\tx) \cdot \hat{M}(y,s;\ty,\ts) \cdot \delta_{y,\ty},
\end{equation}
which is equivalent to~\eqref{eq:defg:ex1:gf}.
In addition, the exterior functions $f_1$, $f_2$ corresponding to the inner and outer boxes in Figure~\ref{fig:DeFG_1} can be expressed as
\begin{align}
	f_1(x,y,\tx,\ty) 
	& = \sum_{s,\ts} U(s,x)\conj{U(\ts,\tx)}\cdot M(y,s)\conj{M(\ty,\ts)}\\
	& = \sum_{s}U(s,x)M(y,s) \cdot \conj{\sum_{\ts}U(\ts,\tx)M(\ty,\ts)}\\
	& = [M\cdot U]_{y,x}\cdot [M\cdot U]^\Herm_{\tx,\ty},\label{eq:foornote:order}\\
	f_2 (y)
	&= \sum_{x}\prob_\rv{X}(x)\cdot f_1(x,y,x,\ty) = \sum_{x} [M\cdot U]_{y,x}\cdot \prob_\rv{X}(x)\cdot [M\cdot U]^\Herm_{x,\ty} \\
	&= \left[M\cdot U\cdot \diag(\prob_\rv{X}) \cdot U^\Herm \cdot M^\Herm\right]_{y,y},
\end{align}
respectively.\footnote{Note that in~\eqref{eq:foornote:order}, we take Hermitian of a matrix first, then extract the indices. Namely, for any complex matrix $A$, $A^\Herm_{i,j}=\conj{A_{j,i}}$.}
One can check that $f_2(y)=\sum_{x}\prob_\rv{X}(x)\cdot\prob_{\rv{Y}|\rv{X}}(y|x)$, where $\prob_{\rv{Y}|\rv{X}}$ was defined earlier in~\eqref{eq:defg:ex1:cond}, \ie, $f_2(y)$ is the \pmf of $Y$ given the \pmf of $X$ being $\prob_\rv{X}$.
One can view $f_1$ as a matrix generalization of $\prob_{\rv{Y}|\rv{X}}$.
In this example, $\diag(f_1)=\prob_{\rv{Y}|\rv{X}}$.
\end{example}
In the above example, we have successfully described a quantum system using a simpler graph with compatible ``closing-the-box'' operations.
Such techniques can be made generic, and we name such graphical models \emph{double-edge factor graphs} (DeFGs).
\begin{definition}[Double-Edge Factor Graph] \label{def:DeFG} \index{double-edge factor graph}
A double-edge factor graph (DeFG) is a bipartite graph $\set{G}=(\set{V}_1\sqcup\set{V}_2,\set{F},\set{E}_1\sqcup\set{E}_2)$ associated with variable sets $\mathfrak{V}_1$, $\mathfrak{V}_2$, and a factor set $\mathfrak{F}$, where
\begin{itemize}
	\item $\set{E}_1\subset\set{V}_1\times\set{F}$, $\set{E}_2\subset\set{V}_2\times\set{F}$; and for each $a\in\set{F}$, we denote the sets of its neighbors in $\set{V}_1$ and $\set{V}_2$ by $\delta{a}$ and $\nb{a}$, respectively;
	\item $\mathfrak{V}_1=\{\set{X}_i\}_{i\in\set{V}_1}$ and $\mathfrak{V}_2=\{\set{S}_j\}_{j\in\set{V}_2}$ are indexed by $\set{V}_1$ and $\set{V}_2$, respectively, and each element of $\mathfrak{V}_1$ and  $\mathfrak{V}_2$ is a set (\aka alphabets);
	\item $\mathfrak{F}=\{f_a\}_{a\in\set{F}}$ is indexed by $\set{F}$, where for each $a\in\set{F}$ $f_a:\bigtimes_{i\in\nb{a}}\set{S}_j^2\times\bigtimes_{i\in\delta{a}}\set{X}_i \to \Complex$ and where the matrix $[f_a(\vx_{\delta{a}})]_{\vs_\nb{a},\tvs_\nb{a}}\defeq f_a(\vs_\nb{a},\tvs_\nb{a};\vx_{\delta{a}})$ is PSD for each $\vx_{\delta{a}}$.
\end{itemize}
The function $g(\vs,\tvs;\vx) \defeq \prod_{a\in\set{F}} f_a(\vs_\nb{a},\tvs_\nb{a};\vx_{\delta{a}})$ is called the \emph{global function} of $\set{G}$, and in this case, $\set{G}$ is also said to be representing the factorization $g(\vs,\tvs;\vx) \defeq \prod_{a\in\set{F}} f_a(\vs_\nb{a},\tvs_\nb{a};$ $\vx_{\delta{a}})$.
Like (classical) factor graphs, a DeFG is said to be \emph{normal} if the degree of any vertex in $\set{V}_1\sqcup\set{V}_2$ is at most 2.
\end{definition}
\begin{remark}
Similar to factor graphs, in a normal DeFG, we redraw the vertices in $\set{V}_1$ as edges and the vertices in $\set{V}_2$ as double edges, as in Figure~\ref{fig:ex:DeFG_1}.
\end{remark}
\begin{remark}
Any DeFG can be converted into a normal DeFG by properly introducing equality node(s) along with introducing suitable auxiliary variables.
\end{remark}
In the previous example, we have converted a factor graph describing a quantum system into a DeFG.
Such conversions can be done systematically, as described in Table~\ref{tb:CintoDeFG}.
\begin{table}\centering
\begin{tabular}{|p{3.5cm}|p{3.5cm}|m{5.5cm}|}
	\hline
	FG Description & DeFG Description & Remarks\\
	\hline
	\centering\raisebox{-.5\height}{
	\begin{tikzpicture}[nodes = {draw= none, minimum size = 0pt},
		factor/.style={rectangle, minimum size=.5cm, draw},
		node distance=1.7cm]
	\node[factor] (u) {}; \node[below = 0pt of u] {$U$};
	\node[factor] (l) [below=.5cm of u] {}; \node[below = 0pt of l] {$U^\Herm$};
	\draw[*-] (u) -- ([xshift=1cm]u.east); \node[anchor=west] at ([xshift=1cm]u.east) {$s'$};
	\draw     (u) -- ([xshift=-1cm]u.west); \node[anchor=east] at ([xshift=-1cm]u.west) {$s$};
	\draw     (l) -- ([xshift=1cm]l.east); \node[anchor=west] at ([xshift=1cm]l.east) {$\ts'$};
	\draw[*-] (l) -- ([xshift=-1cm]l.west); \node[anchor=east] at ([xshift=-1cm]l.west) {$\ts$};
	\node[fit=(current bounding box),inner ysep=1mm,inner xsep=0]{};
	\end{tikzpicture}
	}& \centering\raisebox{-.5\height}{
	\begin{tikzpicture}[nodes = {draw= none, minimum size = 0pt},
		factor/.style={rectangle, minimum size=.5cm, draw},
		node distance=1.7cm]
	\node[factor] (f) {}; \node[below = 0pt of f] {$\hat{U}$};
	\path (f) edge[draw,*-,double] node[above=0pt] {$s'$} node[below=0pt] {$\ts'$} ([xshift=1cm]f.east);
	\path (f) edge[draw,double] node[above=0pt] {$s$} node[below=0pt] {$\ts$} ([xshift=-1cm]f.west);
	\node[fit=(current bounding box),inner ysep=1mm,inner xsep=0]{};
	\end{tikzpicture}
	} & $\hat{U}((s',s),(\ts',\ts))=U_{s',s}U^\Herm_{\ts,\ts'}$\\
	\hline
	\centering\raisebox{-.5\height}{
	\begin{tikzpicture}[nodes = {draw=none, minimum size = 0pt},
		factor/.style={rectangle, minimum size=.5cm, draw},
		node distance=1.7cm]
	\node[factor] (e) {}; \node at (e) {$=$};
	\node (U)[above right =.5cm and .5cm of e, anchor=west] {$s$};
	\node (UH)[below right =.5cm and .5cm of e, anchor=west] {$\ts$};
	\draw (e) |- (U.west);
	\draw (e) |- (UH.west);
	\path (U.east) edge[draw=none] node (or) [anchor=west] {or~} (UH.east);
	\node[factor, right=0pt of or] (e) {}; \node at (e) {$I$};
	\node (U)[above right =.5cm and .5cm of e, anchor=west] {$s$};
	\node (UH)[below right =.5cm and .5cm of e, anchor=west] {$\ts$};
	\draw[*-] (e) |- (U.west);
	\draw (e) |- (UH.west);
	\node[fit=(current bounding box),inner ysep=1mm,inner xsep=0]{};
	\end{tikzpicture}
	} & \centering\raisebox{-.5\height}{
	\begin{tikzpicture}[nodes = {draw=none, minimum size = 0pt},
		factor/.style={rectangle, minimum size=.5cm, draw},
		node distance=1.7cm]
	\node[factor] (I) {$=$}; \node [below=0pt of I] {$I$};
	\path (I) edge[double] node[above=0pt] {$s$} node[below=0pt] {$\ts$} ([xshift=1.5cm]I);
	\node[fit=(current bounding box),inner ysep=1mm,inner xsep=0]{};
	\end{tikzpicture}
	} & $I(s,\ts)=\delta_{s,\ts}$\\
	\hline
	\centering\raisebox{-.5\height}{
	\begin{tikzpicture}[nodes = {draw= none, minimum size = 0pt},
		factor/.style={rectangle, minimum size=.5cm, draw},
		node distance=1.7cm]
	\node[factor] (P) {}; \node [below=0pt of P] {$\diag(\prob_\rv{X})$};
	\node[factor] (e) [right=.7cm of P] {$=$};
	\node (U)[above right =.5cm and .5cm of e, anchor=west] {$s$};
	\node (UH)[below right =.5cm and .5cm of e, anchor=west] {$\ts$};
	\path (P) edge node[above] {$x$} (e);
	\draw (e) |- (U);
	\draw (e) |- (UH);
	\node[fit=(current bounding box),inner ysep=1mm,inner xsep=0]{};
	\end{tikzpicture}
	} & \centering\raisebox{-.5\height}{
	\begin{tikzpicture}[nodes = {draw= none, minimum size = 0pt},
		factor/.style={rectangle, minimum size=.5cm, draw},
		node distance=1.7cm]
	\node[factor] (P) {}; \node [below=.1cm of P] {$\diag(\prob_\rv{X})$};
	\path (P) edge[double] node[above=0pt] {$s$} node[below=0pt] {$\ts$} ([xshift=1.5cm]P);
	\node[fit=(current bounding box),inner ysep=1mm,inner xsep=0]{};
	\end{tikzpicture}
	} & $\diag(\prob_\rv{X})(s,\ts)= \delta_{s,\ts}\cdot \prob_\rv{X}(s)$ \\
	\hline
	\centering\raisebox{-.5\height}{
	\begin{tikzpicture}[nodes = {draw= none, minimum size = 0pt},
		factor/.style={rectangle, minimum size=.5cm, draw},
		node distance=1.7cm]
	\node[factor] (e) {$=$};
	\node (U)[above left=.5cm and .8cm of e, anchor=east] {$s'$};
	\node (UH)[below left=.5cm and .8cm of e, anchor=east] {$\ts'$};
	\path (e) edge node[above=0cm, pos=1] {$y$} ([xshift=1cm]e);
	\draw (e) |- (U);
	\draw (e) |- (UH);
	\node[fit=(current bounding box),inner ysep=1mm,inner xsep=0]{};
	\end{tikzpicture}
	} & \centering\raisebox{-.5\height}{
	\begin{tikzpicture}[nodes = {draw= none, minimum size = 0pt},
		factor/.style={rectangle, minimum size=.5cm, draw},
		node distance=1.7cm]
	\node[factor] (I) {}; \node [below = 0pt of I] {$I_y$};
	\path (I) edge[double] node[above=0pt, pos=1] {$s$} node[below=0pt, pos=1] {$\ts$} ([xshift=-1cm]I);
	\path (I) edge node[above=0pt, pos=1] {$y$} ([xshift=1cm]I);
	\node[fit=(current bounding box),inner ysep=1mm,inner xsep=0]{};
	\end{tikzpicture}
	} & $I_y(s,\ts) = \delta_{s,\ts} \cdot \delta_{s,y}$ \\
	\hline
\end{tabular}
\caption{Conversions between factor graphs and DeFGs}
\label{tb:CintoDeFG}
\end{table}
\subsection{``Closing-the-Box'' Operations on DeFGs}
We define the ``closing-the-box'' operations on DeFGs in the same manner as for factor graphs.
\begin{definition}[``Closing-the-Box'' Operations on DeFGs] \label{def:DeFG:CtB} \index{closing-the-box operations!of DeFGs}
Let $\set{G}=(\set{V}_1\sqcup\set{V}_2,\set{F},\set{E}_1\sqcup\set{E}_2)$ be a DeFG as defined in Definition~\ref{def:DeFG}.
Let $\set{G}'=(\set{V}'_1\sqcup\set{V}'_2,\set{F}')$ be a subgraph of $\set{G}$ such that if a variable vertex is in $\set{G}'$, then so do all of its neighbors (all of which are in $\set{F}$).
(We call such a subgraph a \emph{box} of $\set{G}$.)
The ``closing-the-box'' operation (\wrt $\set{G}'$) is to replace $\set{G}'$ in $\set{G}$ by a factor vertex associated with the \emph{exterior function} $f_{\set{G}'}$ of $\set{G}'$, where $f_{\set{G}'}$ is the resultant function by summing the product of the local functions associated with $\set{F}'$ over the variables associated with $\set{V}'_1\sqcup\set{V}'_2$.
Namely, $f_{\set{G}'}$ is defined as
\[
f_{\set{G}'}(\vs_{\cup_{a\in\set{F}'}\nb{a}\xk\set{V}'_2},\tvs_{\cup_{a\in\set{F}'}\nb{a}\xk\set{V}'_2};\vx_{\cup_{a\in\set{F}'}\delta{a}\xk\set{V}'_1}) \defeq \sum_{\vs_{\set{V}'_2},\tvs_{\set{V}'_2},\vx_{\set{V}'_1}} \prod_{a\in\set{F}'} f_a(\vs_\nb{a},\tvs_\nb{a};\vx_{\delta{a}}).\qedhere
\]
\end{definition}
An advantage of DeFGs over factor graphs for quantum probabilities is that a DeFG remains to be DeFG after any ``closing-the-box'' operations.
\begin{theorem}\label{thm:DeFG:CtB}
	Given a DeFG, it remains a DeFG after a ``closing-the-box'' operation.
\end{theorem}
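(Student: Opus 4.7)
The goal is to show that for every fixed $\vx_{\text{ext}} := \vx_{\cup_{a \in \set{F}'} \delta{a} \xk \set{V}'_1}$, the matrix $\bigl[f_{\set{G}'}(\vx_{\text{ext}})\bigr]_{\vs_{\text{ext}}, \tvs_{\text{ext}}}$ is PSD, where $\vs_{\text{ext}} := \vs_{\cup_{a \in \set{F}'} \nb{a} \xk \set{V}'_2}$; this is precisely the condition of Definition~\ref{def:DeFG} that must be verified in order for the newly created factor $f_{\set{G}'}$ to be a legitimate DeFG factor. My plan is first to fix all classical variables jointly---both $\vx_{\text{ext}}$ and the interior $\vx_{\set{V}'_1}$---and prove that the matrix $\bigl[\prod_{a \in \set{F}'} f_a\bigr]_{\vs, \tvs}$ on the joint quantum index space (indexed by all $\vs_j$ with $j \in \cup_{a \in \set{F}'} \nb{a}$, and similarly for $\tvs$) is PSD; then to recover $f_{\set{G}'}$ by (i) independently summing over $\vs_{\set{V}'_2}$ and $\tvs_{\set{V}'_2}$, and (ii) summing over $\vx_{\set{V}'_1}$, arguing that each of these two operations preserves PSD.

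For the joint-product step, since each $f_a(\vx_{\delta{a}})$ is PSD, it admits a spectral decomposition
\begin{equation*}
f_a(\vs_\nb{a}, \tvs_\nb{a}; \vx_{\delta{a}}) = \sum_{k} \lambda^a_k \, v^a_k(\vs_\nb{a}) \, \overline{v^a_k(\tvs_\nb{a})}, \qquad \lambda^a_k \geq 0.
\end{equation*}
Distributing the product yields
\begin{equation*}
\prod_{a \in \set{F}'} f_a(\vs_\nb{a}, \tvs_\nb{a}; \vx_{\delta{a}}) = \sum_{(k_a)_{a \in \set{F}'}} \biggl( \prod_{a \in \set{F}'} \lambda^a_{k_a} \biggr) w_{(k_a)}(\vs)\, \overline{w_{(k_a)}(\tvs)},
\end{equation*}
where $w_{(k_a)}(\vs) := \prod_{a \in \set{F}'} v^a_{k_a}(\vs_\nb{a})$ is a scalar function on the joint quantum index space. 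This exhibits $\bigl[\prod_{a} f_a\bigr]_{\vs,\tvs}$ as a non-negative combination of rank-one PSD matrices $w_{(k_a)} w_{(k_a)}^\Herm$, hence PSD. (Equivalently, this is the standard fact that PSD is preserved under tensor products on non-shared quantum variables and under Hadamard/Schur products on shared quantum variables, applied uniformly via the spectral expansion.)

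Next, independently summing out $\vs_{\set{V}'_2}$ and $\tvs_{\set{V}'_2}$ can be written as the bilinear operation $M \mapsto B^\Herm M B$ with $B := I_{\text{ext}} \otimes \mathbf{1}_{\text{int}}$, where $\mathbf{1}_{\text{int}}$ is the all-ones column vector on the interior alphabet $\bigtimes_{i \in \set{V}'_2} \set{S}_i$ and $I_{\text{ext}}$ is the identity on the exterior alphabet $\bigtimes_{j \in \cup_a \nb{a} \xk \set{V}'_2} \set{S}_j$; conjugation by $B$ manifestly preserves the PSD property. Finally, summing over $\vx_{\set{V}'_1}$ is a sum of PSD matrices, one per value of $\vx_{\set{V}'_1}$, which is again PSD. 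Composing the three operations yields $\bigl[f_{\set{G}'}(\vx_{\text{ext}})\bigr]_{\vs_{\text{ext}},\tvs_{\text{ext}}}$ as a PSD matrix, as required.

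The main conceptual obstacle is the joint-product step: in a DeFG the product of local factors is neither a genuine matrix product nor a clean tensor product when several factors share a quantum variable, but rather mixes tensor and Hadamard structures; one must verify that PSD survives this mixed combination. The spectral-decomposition expansion above sidesteps the case analysis and gives a single, transparent argument that handles arbitrary incidence patterns in one stroke.
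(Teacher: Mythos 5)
Your proof is correct, and it takes a genuinely different route from the paper's. The paper proceeds by induction on pairs of factors: at each step it merges two adjacent local factors, integrates out one shared quantum variable and one shared classical variable, and verifies PSD-ness of the merged matrix via a quadratic-form test, invoking as a black box the Schur product theorem (``the Hadamard product of two PSD matrices is PSD'') to handle the shared quantum index. Your proof instead expands \emph{all} factors jointly via spectral decompositions $f_a = \sum_k \lambda^a_k\, v^a_k\, \overline{v^a_k}$ with $\lambda^a_k \geq 0$, exhibits the full product as a non-negative combination of rank-one matrices $w\,\conj{w}^\transp$, and then observes that the two post-processing operations---summing out interior quantum variables (a congruence $M \mapsto B^\Herm M B$ with $B = I_{\text{ext}} \tensor \mathbf{1}_{\text{int}}$) and summing over interior classical variables (a conic combination)---both manifestly preserve PSD-ness. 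The paper's incremental approach is slightly more economical in not writing out the full expansion, but leaves the case analysis of the induction (multiple factors sharing a quantum variable, multiple shared variables at once) implicit; your global spectral expansion handles arbitrary incidence patterns in a single uniform step and makes the needed positivity self-contained rather than deferring to the Schur product theorem. Both arguments hinge on the same structural observation, namely that the joint product on the full quantum index space is PSD, but your route to it is more transparent.
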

\begin{proof}
Consider the setup in Definition~\ref{def:DeFG:CtB}.
It suffices to show that the matrices associated with the exterior function $[f_{\set{G}'}(\vx)]_{\vs,\tvs}\defeq f_{\set{G}'}(\vs,\tvs;\vx)$ are PSD.
Using mathematical induction, it suffices to show that given any pair of functions $f_a(\vs_\nb{a},\tvs_\nb{a};\vx_{\delta{a}})$ and $f_b(\vs_\nb{b},\tvs_\nb{b};\vx_{\delta{b}})$ such that the matrices $[f_a(\vx_{\delta{a}})]_{\vs_\nb{a},\tvs_\nb{a}}\defeq f_a(\vs_\nb{a},\tvs_\nb{a};\vx_{\delta{a}})$ and $[f_b(\vx_{\delta{b}})]_{\vs_\nb{b},\tvs_\nb{b}}\defeq f_b(\vs_\nb{b},\tvs_\nb{b};\vx_{\delta{b}})$ are PSD, then, for each $i\in\delta{a}\cap\delta{b}$, $j\in\nb{a}\cap\nb{b}$, and $\vx_{\delta{a}\cup\delta{b}\xk i}\in\Tensor_{\ell\in\delta{a}\cup\delta{b}\xk i}\set{X}_\ell$, the matrix $[f(\vx_{\delta{a}\cup\delta{b}\xk i})]$ with its $(\vs_{\nb{a}\cup\nb{b}\xk j},\tvs_{\nb{a}\cup\nb{b}\xk j})$-th entry being
	$
	\sum_{s_j,\ts_j,x_i} f_a(\vs_\nb{a},\tvs_\nb{a};\vx_{\delta{a}})\cdot f_b(\vs_\nb{b},\tvs_\nb{b};\vx_{\delta{b}})
	$
	must also be PSD.
However, notice that given any complex-valued function $h(\vs_{\nb{a}\cup\nb{b}\xk j})$, it holds that
\begin{align*}
&\sum_{\vs_{\nb{a}\cup\nb{b}\xk j},\atop\tvs_{\nb{a}\cup\nb{b}\xk j}}
h(\vs_{\nb{a}\cup\nb{b}\xk j})\cdot
\left[\sum_{s_j,\ts_j,x_i} f_a(\vs_\nb{a},\tvs_\nb{a};\vx_{\delta{a}})\cdot f_b(\vs_\nb{b},\tvs_\nb{b};\vx_{\delta{b}})\right]
\cdot \conj{h(\tvs_{\nb{a}\cup\nb{b}\xk j})} = \\
&\sum_{x_i}\sum_{\vs_{\nb{a}\cup\nb{b}},\atop\tvs_{\nb{a}\cup\nb{b}}}
h(\vs_{\nb{a}\cup\nb{b}\xk j})\!\cdot\!\mathbf{1}(s_j)\cdot
\bigg[f_a(\vs_\nb{a},\tvs_\nb{a};\vx_{\delta{a}})\cdot f_b(\vs_\nb{b},\tvs_\nb{b};\vx_{\delta{b}})\bigg]
\cdot \conj{h(\tvs_{\nb{a}\cup\nb{b}\xk j})\!\cdot\!\mathbf{1}(\ts_j)}
\geq 0,
\end{align*}
since the Hadamard product of two PSD matrices is PSD.\footnote{Here, $\mathbf{1}$ stands for the constant-$1$ function.}
\end{proof}
Similar to factor graphs, we define the partition sum of a DeFG\index{partition sum!of DeFGs} as the result of summing the global function over all of its variables.
Namely, we have the definition
\begin{equation}
	Z(\set{G}) \defeq \sum_{\vs,\tvs,\vx} g(\vs,\tvs;\vx).
\end{equation}
Since the partition sum can be understood as the result of ``closing-the-box'' \wrt the graph itself, we have the following corollary as a direct result of Theorem~\ref{thm:DeFG:CtB}.
\begin{corollary}
The partition sum of a DeFG is real non-negative.
\end{corollary}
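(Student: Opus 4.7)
The plan is to read the partition sum as the exterior function produced by a single ``closing-the-box'' (CtB) operation applied to the entire DeFG $\set{G}$, and then invoke Theorem~\ref{thm:DeFG:CtB} to conclude the PSD (hence non-negativity) of that function. Concretely, I would take $\set{G}'=\set{G}$ in Definition~\ref{def:DeFG:CtB}: since $\set{G}'$ contains every factor, it trivially contains every neighbor of each variable vertex it contains, so $\set{G}'$ qualifies as a box of $\set{G}$. Unfolding the definition of $f_{\set{G}'}$ in this case,
\begin{equation*}
f_{\set{G}}(\ ;\ )\;=\;\sum_{\vs_{\set{V}_2},\,\tvs_{\set{V}_2},\,\vx_{\set{V}_1}}\;\prod_{a\in\set{F}}f_a(\vs_\nb{a},\tvs_\nb{a};\vx_{\delta{a}})\;=\;Z(\set{G}),
\end{equation*}
so the partition sum coincides with the exterior function of this total CtB operation.

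Next I would invoke Theorem~\ref{thm:DeFG:CtB}: after this CtB operation, the result is still a DeFG, consisting of a single factor vertex with no incident edges (neither single nor double), whose local function is precisely $f_\set{G}$. By Definition~\ref{def:DeFG}, the matrix associated with this factor must be PSD. However, since both $\nb{a}=\emptyset$ and $\delta{a}=\emptyset$ for this lone factor, the corresponding ``matrix'' is indexed by the empty product $\bigtimes_{j\in\emptyset}\set{S}_j^2$, \ie, a singleton; it is therefore a $1\times 1$ matrix whose sole entry is the complex number $Z(\set{G})$. A $1\times 1$ PSD matrix is exactly a non-negative real scalar, so $Z(\set{G})\in\Reals_{\geqslant 0}$.

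I do not anticipate any real obstacle: the only point worth being careful about is the degenerate-case bookkeeping in step two (that a factor with no neighbors indeed falls under the PSD requirement of Definition~\ref{def:DeFG} with a one-dimensional index set), and that Theorem~\ref{thm:DeFG:CtB}, which was proven via an inductive reduction to pairwise merges, does apply when the subgraph coincides with the whole DeFG. Both are immediate from the definitions, so the corollary follows in a few lines.
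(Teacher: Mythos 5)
Your proposal is correct and is exactly the argument the paper has in mind: the paper explicitly says the corollary follows ``since the partition sum can be understood as the result of `closing-the-box' \wrt the graph itself,'' invoking Theorem~\ref{thm:DeFG:CtB}. You have simply filled in the bookkeeping the paper leaves implicit, in particular that the resulting factor is a $1\times 1$ PSD matrix and hence a non-negative real scalar.
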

\section{DeFGs and Quantum Systems: Examples}
In this section, we would like to present several examples of DeFGs, especially the DeFGs representing elementary quantum systems.
Some of the examples presented here are related to the examples in~\cite{loeliger2017factor, loeliger2012factor}.
In particular, we would like to illustrate the connection between the marginals of DeFGs and elements from quantum information theory.
\begin{example}
\begin{figure}\centering
\begin{tikzpicture}[nodes = {draw= none, minimum size = 0pt},
	factor/.style={rectangle, minimum size=.7cm, draw},
	node distance=2cm]
	\node[factor] (P) {}; \node [below=0pt of P] {$\diag(\prob_\rv{X})$};
	\node[factor] (u1) [right of= P] {}; \node [below=0pt of u1] {$\hat{U}_1$};
	\node[factor] (u2) [right of= u1] {}; \node [below=0pt of u2] {$\hat{U}_2$};
	\node (um) [right of = u2] {$\cdots$};
	\node[factor] (un) [right of= um] {}; \node [below=0pt of un] {$\hat{U}_n$};
	\node[factor] (b) [right of= un] {}; \node [below=0pt of b] {$\hat{B}$};
	\node[factor] (Q) [right of= b] {}; \node [below=0pt of Q] {$I_y$};
	
	\path (P) edge[double] node[above] {$s_0$} node[below] {$\ts_0$} (u1);
	\path (u1) edge[*-, double] (u2);
	\path (u2) edge[*-, double] (um);
	\path (um) edge[double] (un);
	\path (un) edge[*-*, double] (b);
	\path (b) edge[double] (Q);
	\path (Q) edge node[above, pos=1] {$y$} ([xshift=1.5cm]Q);
	
	\node(R1A) [above left = .1cm and .3cm of P]{};
	\node(R1C) [below right = .5cm and .25cm of u1]{};
	\draw[dashed, red] (R1A) rectangle (R1C);
	\node [anchor=south east, font=\scriptsize, red] at (R1C) {$\rho_1$};
	
	\node(R2A) [above left = .2cm and .4cm of P]{};
	\node(R2C) [below right = .6cm and .25cm of u2]{};
	\draw[dashed, red] (R2A) rectangle (R2C);
	\node [anchor=south east, font=\scriptsize, red] at (R2C) {$\rho_2$};
	
	\node(R3A) [above left = .3cm and .5cm of P]{};
	\node(R3C) [below right = .7cm and .25cm of un]{};
	\draw[dashed, red] (R3A) rectangle (R3C);
	\node [anchor=south east, font=\scriptsize, red] at (R3C) {$\rho_n$};
	
	\node(R4A) [above left = .3cm and .25cm of b]{};
	\node(R4C) [below right = .7cm and .05cm of Q]{};
	\draw[dotted, blue] (R4A) rectangle (R4C);
	
	\node(R5A) [above left = .4cm and .25cm of un]{};
	\node(R5C) [below right = .8cm and .15cm of Q]{};
	\draw[dotted, blue] (R5A) rectangle (R5C);
	
	\node(R6A) [above left = .5cm and .25cm of u2]{};
	\node(R6C) [below right = .9cm and .25cm of Q]{};
	\draw[dotted, blue] (R6A) rectangle (R6C);
	
	\node(R7A) [above left = .6cm and .25cm of u1]{};
	\node(R7C) [below right = 1cm and .35cm of Q]{};
	\draw[dotted, blue] (R7A) rectangle (R7C);
\end{tikzpicture}
\caption{$n$ steps of unitary evolution followed by a projective measurement \wrt the computation basis.}
\label{fig:DeFG_2}\end{figure}
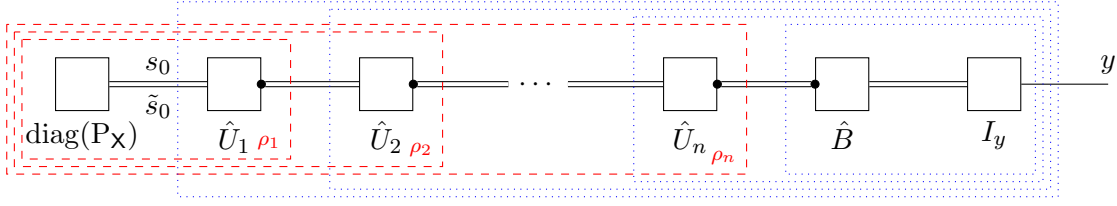
The DeFG in Figure~\ref{fig:DeFG_2} describes a quantum system with $n$ steps of unitary evolutions followed by a projective measurement \wrt the computation basis.
For every $k\in\{1,2,\ldots,n\}$, the exterior function of the k-th dashed box can be expressed as\footnote{Note that we may sometimes group the arguments of some functions (\eg, $\hat{U}_k$ in~\eqref{eq:footnote:group:argumnets}) using parentheses for better readability.}
\begin{align} \label{eq:footnote:group:argumnets}
	\rho_k(s_k,\ts_k) &= \sum_{s_{k-1},\ts_{s-1}} \hat{U}_k((s_k,s_{k-1}),(\ts_k,\ts_{k-1}))\cdot \rho_{k-1}(s_{k-1},\ts_{k-1}).
\end{align}
Here, note that $\hat{U}_k((s_k,s_{k-1}),(\ts_k,\ts_{k-1}))=U_{s_k,s_{k-1}}U^\Herm_{\ts_{k-1},\ts_k}$ for some unitary matrix $U$.
The functions $\{\rho_k\}_{k=1}^n$ are the Schrödinger representation of the system.
Similarly, the exterior functions of the dotted boxes correspond to 
	the Heisenberg representation.
\end{example}
\begin{example}\label{ex:DeFG:unitary:measurement}
\begin{figure}\centering
\begin{tikzpicture}[nodes = {draw= none, minimum size = 0pt},
	factor/.style={rectangle, minimum size=.7cm, draw},
	node distance=2cm]
	\node[factor] (P) {}; \node [below=0pt of P] {$\diag(\prob_\rv{X})$};
	\node[factor] (u0) [right of= P] {}; \node [below=0pt of u0] {$\hat{U}_0$};
	\node[factor] (A1) [right of= u0] {}; \node [below=0pt of A1] {$\hat{A}_1$};
	\node[factor] (u1) [right of= A1] {}; \node [below=0pt of u1] {$\hat{U}_1$};
	\node[factor] (A2) [right of= u1] {}; \node [below=0pt of A2] {$\hat{A}_2$};
	\node[factor] (I) [right of = A2] {=}; \node [below=0cm of I] {$I$};
	\path (P) edge[double] node[above] {$s_0$} node[below] {$\ts_0$} (u0);
	\path (u0) edge[*-, double] node[above] {$s_1$} node[below] {$\ts_1$} (A1);
	\path (A1) edge[*-, double] node[above] {$s_1'$} node[below] {$\ts_1'$} (u1);
	\path (u1) edge[*-, double] node[above] {$s_2$} node[below] {$\ts_2$} (A2);
	\path (A2) edge[*-, double] node[above] {$s_2'$} node[below] {$\ts_2'$} (I);
	\path (A1) edge node[above, pos=1] {$y_1$} ([yshift=1cm]A1);
	\path (A2) edge node[above, pos=1] {$y_2$} ([yshift=1cm]A2);
\end{tikzpicture}
\caption{A factor graph describing a two-measurement quantum system.}
\label{fig:DeFG_3}
\end{figure}
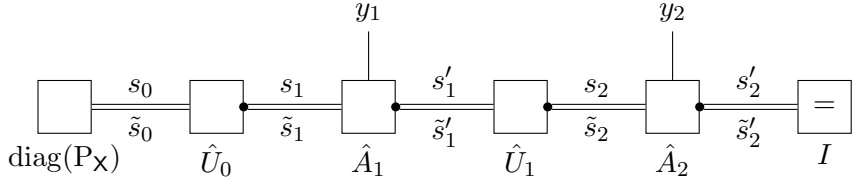
Figure~\ref{fig:DeFG_3} describes a DeFG for a quantum system with two measurements, where $\hat{A}_1$, $\hat{A}_2$ are functions such that $\sum_{y_k}\sum_{s_k'=\ts_k'} A_k(s_k',s_k,\ts_k',\ts_k;y_k)=\delta_{s_k,\ts_k}$ for each $k=1,2$.
\end{example}
\begin{example}
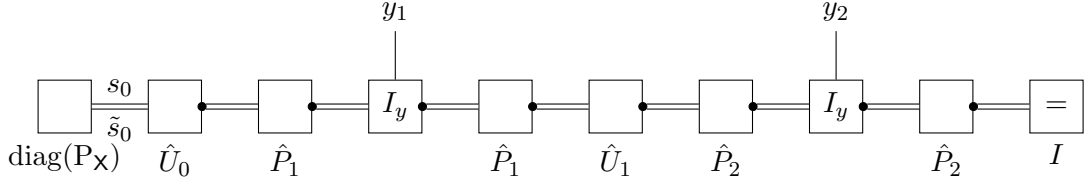
\begin{figure}\centering
\begin{tikzpicture}[nodes = {draw= none, minimum size = 0pt},
	factor/.style={rectangle, minimum size=.7cm, draw},
	node distance=1.45cm]
	\node[factor] (P) {}; \node [below=0pt of P] {$\diag(\prob_\rv{X})$};
	\node[factor] (u0) [right of= P] {}; \node [below=0pt of u0] {$\hat{U}_0$};
	\node[factor] (b1) [right of= u0] {}; \node [below=0pt of b1] {$\hat{P}_1$};
	\node[factor] (A1) [right of= b1] {};\node at (A1) {$I_y$};
	\node[factor] (b1h) [right of= A1] {}; \node [below=0pt of b1h] {$\hat{P}_1$};
	\node[factor] (u1) [right of= b1h] {}; \node [below=0pt of u1] {$\hat{U}_1$};
	\node[factor] (b2) [right of= u1] {}; \node [below=0pt of b2] {$\hat{P}_2$};
	\node[factor] (A2) [right of= b2] {};\node at (A2) {$I_y$};
	\node[factor] (b2h) [right of= A2] {}; \node [below=0pt of b2h] {$\hat{P}_2$};
	\node[factor] (I) [right of = b2h] {=};\node [below=0pt of I] {$I$};
	\path (P) edge[double] node[above] {$s_0$} node[below] {$\ts_0$} (u0);
	\path (u0) edge[*-, double] (b1);
	\path (b1) edge[*-, double] (A1);
	\path (A1) edge[*-, double] (b1h);
	\path (b1h) edge[*-, double] (u1);
	\path (u1) edge[*-, double] (b2);
	\path (b2) edge[*-, double] (A2);
	\path (A2) edge[*-, double] (b2h);
	\path (b2h) edge[*-, double] (I);
	\path (A1) edge node[above, pos=1] {$y_1$} ([yshift=1cm]A1);
	\path (A2) edge node[above, pos=1] {$y_2$} ([yshift=1cm]A2);
\end{tikzpicture}
\caption{A factor graph describing a two-measurement quantum system with projective measure onto 1-dimension eigenspace.}
\label{fig:DeFG_4}
\end{figure}
The DeFG in Figure~\ref{fig:DeFG_4} is a special case of Example~\ref{ex:DeFG:unitary:measurement}, in which both measurements are projective measurements with one-dimensional eigenspaces.
Here, $\hat{P}_k(y_k,s_k,\ty_k,\ts_k)=P^{(k)}_{y_k,s_k}P^{(k)\Herm}_{\ts_k.\ty_k}$ for some projective matrix $P^{(k)}$ for each $k=1,2$.
\end{example}
\begin{example}
\begin{figure}\centering
\begin{tikzpicture}[nodes = {draw= none, minimum size = 0pt},
	factor/.style={rectangle, minimum size=.7cm, draw},
	Factor/.style={rectangle, minimum width=.7cm, minimum height=1.5cm, draw},
	node distance=2cm]
	\node[factor] (P) {}; \node [below=0pt of P] {$\diag(\prob_\rv{X})$};
	\node[Factor] (u0) [right of= P] {}; \node [below=0pt of u0] {$\hat{U}_0$};
	\node (a1) [right of= u0] {};
	\node[factor] (A1) [above = -.04cm of a1] {}; \node at (A1) {$\hat{A}_1$};
	\node (A1h) [below = .1cm of a1] {};
	\node[Factor] (u1) [right of=  a1] {}; 
	\node [below=0pt of u1] {$\hat{U}_1$};
	\node (a2) [right of= u1] {};
	\node[factor] (A2) [above = -.04cm of a2] {}; \node at (A2) {$\hat{A}_2$};
	\node (A2h) [below = .1cm of a2] {};
	\node[Factor] (u2) [right of= a2] {}; \node [below=0pt of u2] {$\hat{U}_2$};
	\node[factor] (I) [right of = u2] {=}; \node [below=0cm of I] {$I$};

	\path (P) edge[double] node[above] {$s_0$} node[below] {$\ts_0$} (u0);
	\path (A1-|u0.east) edge[*-, double] node[above] {$s_1$} node[below] {$\ts_1$} (A1);
	\path (A1) edge[*-, double] node[above] {$s_1'$} node[below] {$\ts_1'$} (A1-|u1.west);
	\path (A2-|u1.east) edge[*-, double] node[above] {$s_2$} node[below] {$\ts_2$} (A2);
	\path (A2) edge[*-, double] node[above] {$s_2'$} node[below] {$\ts_2'$} (A2-|u2.west);
	\path (u2) edge[*-, double] node[above] {$s_3$} node[below] {$\ts_3$} (I);
	\path (A1h-|u0.east) edge[*-, double] node[above] {$w_1$} node[below] {$\tilde{w}_1$} (A1h-|u1.west);
	\path (A2h-|u1.east) edge[*-, double] node[above] {$w_2$} node[below] {$\tilde{w}_2$} (A2h-|u2.west);
	\path (A1) edge node[above, pos=1] {$y_1$} ([yshift=1cm]A1);
	\path (A2) edge node[above, pos=1] {$y_2$} ([yshift=1cm]A2);
\end{tikzpicture}
\caption{A quantum system with two partial measurements.}
\label{fig:DeFG_5}
\end{figure}
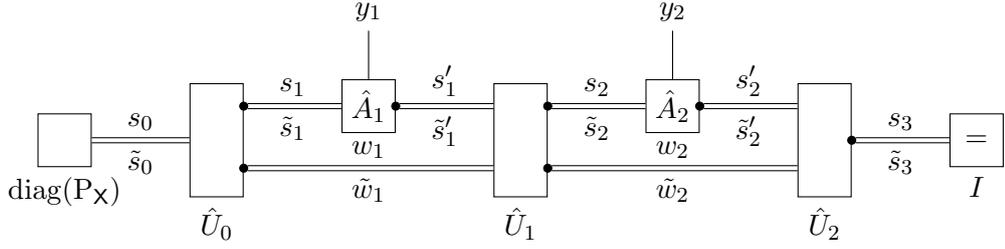
The DeFG in Figure~\ref{fig:DeFG_5} depicts a quantum system with two partial measurements.
Here, $\hat{A}_1$, $\hat{A}_2$ satisfy the same requirement as in Example~\ref{ex:DeFG:unitary:measurement}, and $\hat{U}_1$, $\hat{U}_2$, and $\hat{U}_3$ are functions such that
\begin{align*}
	\hat{U}_1(s_1,w_1,s_0,\ts_1,\tilde{w}_1,\ts_0)
	&= U^{(1)}_{(s_1,w_1),s_0} U^{(1)\Herm}_{\ts_0,(\ts_1,\tilde{w}_1)}\\
	\hat{U}_2(s_2,w_2,s_1',w_1,\ts_2,\tilde{w}_2,\ts_1',\tilde{w}_1)
	&= U^{(2)}_{(s_2,w_2),(s_1',w_1)}U^{(2)\Herm}_{(\ts_1',\tilde{w}_1),(\ts_2,\tilde{w}_2)}\\
	\hat{U}_3(s_3,s_2',w_2,\ts_3,\ts_2',\tilde{w}_2)
	&= U^{(3)}_{s_3,(s_2',w_2)}U^{(3)\Herm}_{(\ts_2',\tilde{w}_2),\ts_3}
\end{align*}
for some unitary matrices $U^{(1)}$, $U^{(2)}$, $U^{(3)}$.
Note that this DeFG contains cycles.
\end{example}
\begin{table}\centering
\begin{tabular}{|p{3.5cm}|p{3.5cm}|m{5.5cm}|}
	\hline
	FG Description & DeFG Description & Remarks\\
	\hline
	\centering\raisebox{-.5\height}{
	\begin{tikzpicture}[nodes = {draw= none, minimum size = 0pt, font=\small},
		factor/.style={rectangle, minimum size=.5cm, draw},
		node distance=1.7cm]
	\node[factor] (rho) {}; \node at (rho) {$\rho$};
	\node[factor, above right = .15cm and .7cm of rho] (u) {}; \node at (u) {$U$};
	\node[factor, below right = .15cm and .7cm of rho] (l) {}; \node at (l) {$U^\Herm$};
	\draw[*-] (rho) |- (u);
	\draw[-*] (rho) |- (l);
	\draw[*-] (u) -- ([xshift=1cm]u.east);
	\draw     (l) -- ([xshift=1cm]l.east);
	\node[fit=(current bounding box),inner ysep=1mm,inner xsep=0]{};
	\end{tikzpicture}
	} &  \centering\raisebox{-.5\height}{
	\begin{tikzpicture}[nodes = {draw= none, minimum size = 0pt, font=\small},
		factor/.style={rectangle, minimum size=.5cm, draw},
		node distance=1.2cm]
	\node[factor] (rho) {}; \node at (rho) {$\rho$};
	\node[factor, right of= rho] (f) {}; \node at (f) {$\hat{U}$};
	\draw[*-,double] (f) -- ([xshift=1cm]f.east);
	\draw[double] (f) -- (rho);
	\node[fit=(current bounding box),inner ysep=1mm,inner xsep=0]{};
	\end{tikzpicture}
	} & $\hat{U}(s',s,\ts',\ts)=U_{s',s}U^\Herm_{\ts,\ts'}$\\
	\hline
	\centering\raisebox{-.5\height}{
	\begin{tikzpicture}[nodes = {draw= none, minimum size = 0pt, font=\small},
		factor/.style={rectangle, minimum size=.5cm, draw},
		node distance=1.7cm]
	\node[factor] (rho) {}; \node at (rho) {$\rho$};
	\node[factor, above right = .15cm and .7cm of rho] (u) {}; \node at (u) {$E_k$};
	\node[factor, below right = .15cm and .7cm of rho] (l) {}; \node at (l) {$E_k^\Herm$};
	\draw[*-] (rho) |- (u);
	\draw[-*] (rho) |- (l);
	\draw[*-] (u) -- ([xshift=1cm]u.east);
	\draw     (l) -- ([xshift=1cm]l.east);
	\path (u) edge[draw] node[right] {$k$} (l);
	\node[fit=(current bounding box),inner ysep=1mm,inner xsep=0]{};
	\end{tikzpicture}
	} & \centering\raisebox{-.5\height}{
	\begin{tikzpicture}[nodes = {draw= none, minimum size = 0pt, font=\small},
		factor/.style={rectangle, minimum size=.5cm, draw},
		node distance=1.2cm]
	\node[factor] (rho) {}; \node at (rho) {$\rho$};
	\node[factor, right of= rho] (f) {}; \node at (f) {$N$};
	\draw[*-,double] (f) -- ([xshift=1cm]f.east);
	\draw[double] (f) -- (rho);
	\node[fit=(current bounding box),inner ysep=1mm,inner xsep=0]{};
	\end{tikzpicture}
	} & $N(s',s,\ts',\ts)=\sum_{k}E_{s',s}E^\Herm_{\ts,\ts'}$\\
	\hline
	\centering\raisebox{-.5\height}{
	\begin{tikzpicture}[nodes = {draw= none, minimum size = 0pt, font=\small},
		factor/.style={rectangle, minimum size=.5cm, draw}]
	\node[factor] (rho) {}; \node at (rho) {$\rho$};
	\node[factor, above right = .25cm and .7cm of rho] (u) {}; \node at (u) {$M_y$};
	\node[factor, below right = .25cm and .7cm of rho] (l) {}; \node at (l) {$M_y^\Herm$};
	\draw[*-] (rho) |- (u);
	\draw[-*] (rho) |- (l);
	\draw[*-] (u) -- ([xshift=1cm]u.east);
	\draw     (l) -- ([xshift=1cm]l.east);
	\path (u) edge[draw=none] node[factor] (Y) {} (l); \node at (Y) {$I_y$};
	\draw (u) -- (Y) -- (l);
	\node (y) at ([xshift=.7cm,yshift=1.3cm]Y) {$y$};
	\draw (Y) -| (y);
	\node[fit=(current bounding box),inner ysep=1mm,inner xsep=0]{};
	\end{tikzpicture}
	} & \centering\raisebox{-.5\height}{
	\begin{tikzpicture}[nodes = {draw= none, minimum size = 0pt, font=\small},
		factor/.style={rectangle, minimum size=.5cm, draw},
		node distance=1.2cm]
	\node[factor] (rho) {}; \node at (rho) {$\rho$};
	\node[factor, right of= rho] (f) {}; \node at (f) {$A$};
	\draw[*-,double] (f) -- ([xshift=1cm]f.east);
	\draw[double] (f) -- (rho);
	\draw (f) -- ([yshift=1cm]f.north);
	\node[fit=(current bounding box),inner ysep=1mm,inner xsep=0]{};
	\end{tikzpicture}
	} & $A(s',s,\ts',\ts;y) = M_y(s',s)M_y^\Herm(\ts,\ts')$ \\
	\hline
	\centering\raisebox{-.5\height}{
	\begin{tikzpicture}[nodes = {draw= none, minimum size = 0pt, font=\small},
		factor/.style={rectangle, minimum size=.5cm, draw}]
	\node[factor] (rho) {}; \node at (rho) {$\rho$};
	\node[factor, above right = .25cm and .7cm of rho] (u) {}; \node at (u) {$M_y$};
	\node[factor, below right = .25cm and .7cm of rho] (l) {}; \node at (l) {$M_y^\Herm$};
	\draw[*-] ([xshift=.1cm]rho.north) |- (u);
	\draw[*-] ([xshift=-.1cm]rho.north) |- ([xshift=1cm,yshift=.5cm]u.east);
	\draw[-*] ([xshift=.1cm]rho.south) |- (l);
	\draw ([xshift=-.1cm]rho.south) |- ([xshift=1cm,yshift=-.5cm]l.east);
	\draw[*-] (u) -- ([xshift=1cm]u.east);
	\draw     (l) -- ([xshift=1cm]l.east);
	\path (u) edge[draw=none] node[factor] (Y) {} (l); \node at (Y) {$I_y$};
	\draw (u) -- (Y) -- (l);
	\node (y) at ([xshift=.7cm,yshift=1.8cm]Y) {$y$};
	\draw (Y) -| (y);
	\node[fit=(current bounding box),inner ysep=1mm,inner xsep=0]{};
	\end{tikzpicture}
	} & \centering\raisebox{-.5\height}{
	\begin{tikzpicture}[nodes = {draw= none, minimum size = 0pt, font=\small},
		factor/.style={rectangle, minimum size=.5cm, draw},
		node distance=1.2cm]
	\node[factor, minimum height = 1cm] (rho) {}; \node at (rho) {$\rho$};
	\node[factor, anchor=north, xshift=1.2cm] at (rho.north) (f) {}; \node at (f) {$A$};
	\draw[*-,double] (f) -- ([xshift=1cm]f.east);
	\draw[double] (f) -- (f-|rho.east);
	\draw (f) -- ([yshift=1cm]f.north);
	\draw[double] ([yshift=-.25cm]rho.east) -- ([xshift=2.24cm,yshift=-.25cm]rho.east); 
	\node[fit=(current bounding box),inner ysep=1mm,inner xsep=0]{};
	\end{tikzpicture}
	} & Same as above \\
	\hline
	\centering\raisebox{-.5\height}{
	\begin{tikzpicture}[nodes = {draw= none, minimum size = 0pt, font=\small},
		factor/.style={rectangle, minimum size=.5cm, draw}]
	\node[factor] (rho) {}; \node at (rho) {$\rho$};
	\node[factor, right = 1.2cm of rho] (I) {}; \node at (I) {$I$};
	\draw[*-] (rho.north) -- ([yshift=.5cm]rho.north) -| (I.north);
	\draw (I.south) |- ([yshift=-.5cm]rho.south) -- (rho.south);
	\node[fit=(current bounding box),inner ysep=1mm,inner xsep=0]{};
	\end{tikzpicture}
	} &
	\centering\raisebox{-.5\height}{
	\begin{tikzpicture}[nodes = {draw= none, minimum size = 0pt, font=\small},
		factor/.style={rectangle, minimum size=.5cm, draw},
		node distance=1.2cm]
	\node[factor] (rho) {}; \node at (rho) {$\rho$};
	\node[factor, right of= rho] (f) {}; \node at (f) {$I$};
	\draw[double] (f) -- (rho);
	\node[fit=(current bounding box),inner ysep=1mm,inner xsep=0]{};
	\end{tikzpicture}
	} & - \\
	\hline
\end{tabular}
\caption{Representing quantum systems using factor graphs and DeFGs.}
\label{tb:quantum:system:FGs}
\end{table}
As illustrated by the above examples, the dynamics of a quantum system (see Section~\ref{sec:quantum_postulates}) can be described by DeFGs systematically, as summarized in Table~\ref{tb:quantum:system:FGs}.
\section{Belief-Propagation Algorithms for DeFGs}
In this section, we consider the problem of computing the partition sums of DeFGs.
For acyclic cases, the problem is not so different from that on a factor graph and can be solved by a slightly modified version of the method in Section~\ref{subsec:marginal:acyclic:FGs}.
For DeFGs with cycles, we propose and analyze a generalized version of the belief-propagation algorithms.
\par
Without loss of generality, we assume all the DeFGs involved in the rest of this chapter do not have ``single edges''.
Namely, we only consider global functions in the form
\begin{equation}\label{eq:global:function:DeFG:simplified}
	g(\vs,\tvs) = \prod_{a\in\mathcal{F}} f_a(\vs_{\nb{a}},\tvs_{\nb{a}}).
\end{equation}
\subsection{Computing the marginals/partition sum of an acyclic DeFG}
Similar to the methods for acyclic factor graphs in Section~\ref{subsec:marginal:acyclic:FGs}, one can also ``shrink'' an acyclic DeFG to a root factor via a sequence of ``closing-the-box'' operations starting from the leaf nodes.
The following example illustrates this idea.
\begin{example}
\begin{figure}\centering
\begin{tikzpicture}[nodes = {draw= none, minimum size = 0pt},
	factor/.style={rectangle, minimum size=.7cm, draw},
	Factor/.style={rectangle, minimum width=.7cm, minimum height=1.5cm, draw},
	node distance=2cm]
	\pgfdeclarelayer{front}
	\pgfdeclarelayer{back}
	\pgfsetlayers{back,main,front}
	\begin{pgfonlayer}{front}
		\node[factor] (A) {a};
		\node[factor] (B) [below left of= A] {b};
		\node[factor] (C) [below right of=A] {c};
		\node[factor] (D) [below left of=B] {d};
		\node[factor] (E) [below left of=C] {e};
		\node[factor] (F) [below right of=C] {f};
		\path (A) edge[double, thick] (B);
		\path (A) edge[double, thick] (C);
		\path (B) edge[double, thick] (D);
		\path (C) edge[double, thick] (E);
		\path (C) edge[double, thick] (F);
	\end{pgfonlayer}
	\begin{pgfonlayer}{main}
		\path (B)--(D) coordinate [midway] (bd);
		\draw[rotate=45, dashed, fill opacity=.4,draw opacity=1, fill=blue!40]
			(bd) ellipse (1.6cm and .8cm);
		\path (C)--(E) coordinate [midway] (ce);
		\draw[rotate=45, dashed, fill opacity=.4,draw opacity=1, fill=blue!40]
			(ce) ellipse (1.6cm and .8cm);
		\path (E)--(F) coordinate [midway] (ef);
		\draw[dashed, fill opacity=.4, draw opacity=1, fill=blue!40]
			([yshift=.4cm]ef) ellipse (2.2cm and 1.6cm);
		\path (A)--(D) coordinate [midway] (ad);
		\path (A)--(B) coordinate [midway] (ab);
		\draw[rotate=45, dashed, fill opacity=.4, draw opacity=1, fill=blue!40]
			([xshift=-.1cm]ad) ellipse (2.7cm and .9cm);
		\path (B)--(C) coordinate [midway] (bc);
		\path (D)--(E) coordinate [midway] (de);
		\draw[dashed, fill opacity=.4, draw opacity=1, fill=blue!40]
			([yshift=-.5cm]bc) ellipse (4.6cm and 2.5cm);
	\end{pgfonlayer}
	\begin{pgfonlayer}{front}
		\node[red] at ([xshift=-.2cm,yshift=.2cm]bd) {\textbf{i}};
		\node[red] at ([xshift=-.2cm,yshift=.2cm]ce) {\textbf{ii}};
		\node[red] at ([yshift=-.4cm]ef) {\textbf{iii}};
		\node[red] at ([xshift=-.2cm,yshift=.2cm]ab) {\textbf{iv}};
		\node[red] at ([yshift=-1cm]de) {\textbf{v}};
	\end{pgfonlayer}
\end{tikzpicture}
\caption{Computing the partition sum of a normal \emph{acyclic} DeFG.}
\label{fig:SPA_1}
\end{figure}
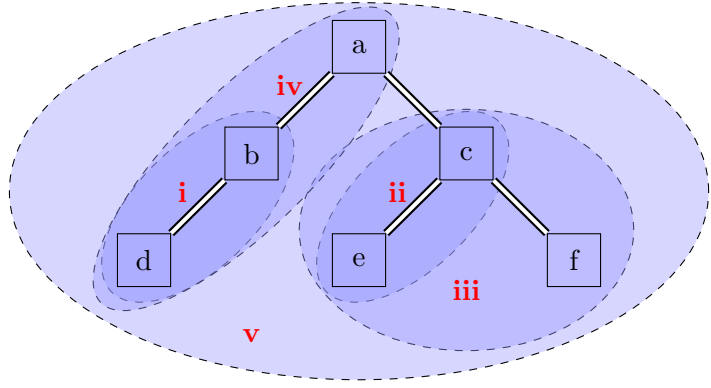
By taking a sequence of ``closing-the-box'' operations for each adjacent factor, the normal DeFG in Figure~\ref{fig:SPA_1} can be ``shrunk'' to a null graph with a single local constant function.
The roman numbers indicate the sequence of the ``closing-the-box'' operations.
The constant obtained in the end is the partition sum.
\end{example}
We summarize the method above as Algorithm~\ref{alg:BP:acyclic:DeFG}.
Without loss of generality, we have assumed all the leaf vertices are from $\set{F}$, since we can always append a constant-$1$ local function to a leaf vertex from $\set{V}$ without changing the partition sum.
\begin{algorithm}
\caption{Belief-Propagation Algorithm for Acyclic DeFGs}
\label{alg:BP:acyclic:DeFG}
\begin{algorithmic}[1]
\Require{An acyclic DeFG $\left(\set{G}=(\set{V},\set{F},\set{E}\in\set{V}\times\set{F}), \mathfrak{V}=\{\set{S}_j\}_{j\in\set{V}}, \mathfrak{F}=\{f_a\}_{a\in\set{F}}\right)$ with all leaf vertices belonging to $\set{F}$, a root vertex $r\in\mathcal{F}$}
\Ensure{The partition sum $Z(\mathcal{G})\defeq\sum_{\vs,\tvs}g(\vs,\tvs)$}
	\State Define the bipartite graph $G'=(\set{V}',\set{F}',\set{E}')\gets \set{G}$;
	\While{$\set{F}'\neq\{r\}$}
	\ForAll{$a\in\set{F}'$ being a leaf and $j$ being its parent in $\set{G}'$}
		\State $m_{a\to j}(s_j,\ts_j) \gets \sum_{\vs_{\nb{a}\xk j},\tvs_{\nb{a}\xk j}} f_a(\vs_\nb{a},\tvs_\nb{a}) \cdot \prod_{k\in\nb{a}\xk{j}} m_{k\to a}(s_k,\ts_k)$;
		\State $\set{F}'\gets \set{F}'\xk a$, $\set{E}'\gets \set{E}'\xk (\nb{a}\times a)$;
		\Comment{Remove $a$ from $\set{G}'$.}
	\EndFor
	\ForAll{$j\in\set{V}'$ being a leaf and $a$ being its parent in $\set{G}'$}
		\State $m_{j\to a}(s_j,\ts_j)\gets\prod_{c\in\nb{j}\xk{a}} m_{c\to j}(s_j,\ts_j)$;
		\State $\set{V}'\gets \set{V}'\xk j$, $\set{E}'\gets \set{E}'\xk (j\times\nb{j})$;
		\Comment{Remove $j$ from $\set{G}'$.}
	\EndFor
	\EndWhile
	\State $Z(\set{G})\gets \sum_{\vs_\nb{r},\tvs_\nb{r}}f_r(\vs_\nb{r},\tvs_\nb{r}) \cdot \prod_{j\in\nb{r}}m_{j\to r}(s_j,\ts_j)$;
\end{algorithmic}
\end{algorithm}
\par
In analogy to Theorem~\ref{thm:BP:tree}, we have the following theorem for acyclic DeFGs.
\begin{theorem}\label{thm:BP:DeFG:tree}
	Given an acyclic DeFG $\left(\set{G}=(\set{V},\set{F},\set{E}\in\set{V}\times\set{F}), \{\set{S}_j\}_{j\in\set{V}}, \{f_a\}_{a\in\set{F}}\right)$, there exists a unique set of messages $\{m_{j\to a},m_{a\to j}:\set{S}_j^2\to\Complex\}_{(j,a)\in\set{E}}$ such that
	\begin{align}
		\label{eq:msg:acyclic:DeFG:ja}
		m_{j\to a}(s_j,\ts_j)&=\prod_{c\in\nb{j}\xk{a}} m_{c\to j}(s_j,\ts_j) \\
		\label{eq:msg:acyclic:DeFG:aj}
		m_{a\to j}(s_j,\ts_j)&=\sum_{\vs_{\nb{a}\xk j},\tvs_{\nb{a}\xk j}} f_a(\vs_\nb{a},\tvs_\nb{a}) \cdot \prod_{k\in\nb{a}\xk{j}} m_{k\to a}(s_k,\ts_k).
	\end{align}
	Moreover, in this case, 
	\begin{align}
		\sum_{\vs_{\set{V}\xk{j}},\tvs_{\set{V}\xk{j}}}\prod_{a\in\set{F}} f_a(\vs_\nb{a},\tvs_\nb{a}) &= \prod_{c\in\nb{j}} m_{c\to j}(s_j,\ts_j) && \forall j\in\set{V},\\
		\sum_{\vs_{\set{V}\xk\nb{a}},\tvs_{\set{V}\xk\nb{a}}}\prod_{\hat{a}\in\set{F}} f_{\hat{a}}(\vs_\nb{\hat{a}},\tvs_\nb{\hat{a}}) &= f_a(\vs_\nb{a},\tvs_\nb{a})\cdot\prod_{j\in\nb{a}} m_{j\to a}(s_j,\ts_j) && \forall a\in\set{F}.
	\end{align}
\end{theorem}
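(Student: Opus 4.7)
The plan is to mirror the proof strategy of Theorem~\ref{thm:BP:tree}, since the acyclic bipartite combinatorics are identical and only the ``arity'' of the messages changes (each message is now a function of a pair $(s_j,\ts_j)$ instead of a single $x_i$). The PSD requirement on local factors plays no essential role for this particular statement; it only guarantees that the resulting messages themselves correspond to PSD matrices, which one can extract as a side remark via Theorem~\ref{thm:DeFG:CtB} if desired.

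For \textbf{existence}, I would, without loss of generality, assume the underlying bipartite graph $(\set{V},\set{F},\set{E})$ is connected. For any edge $(j,a)\in\set{E}$, removing it splits the graph into two disjoint sub-bipartite-graphs $(\set{V}_j^a,\set{F}_j^a,\set{E}_j^a)$ and $(\set{V}_a^j,\set{F}_a^j,\set{E}_a^j)$ containing $j$ and $a$ respectively. I would then \emph{define}
\begin{align*}
m_{j\to a}(s_j,\ts_j) &\defeq \sum_{\vs_{\set{V}_j^a\xk{j}},\,\tvs_{\set{V}_j^a\xk{j}}}\ \prod_{c\in\set{F}_j^a} f_c(\vs_\nb{c},\tvs_\nb{c}),\\
m_{a\to j}(s_j,\ts_j) &\defeq \sum_{\vs_{\set{V}_a^j},\,\tvs_{\set{V}_a^j}}\ \prod_{c\in\set{F}_a^j} f_c(\vs_\nb{c},\tvs_\nb{c}).
\end{align*}
The four acyclic decompositions listed in the proof of Theorem~\ref{thm:BP:tree} (splitting each subtree around a neighbor, and splitting the full graph around any vertex) transfer verbatim since they are statements purely about the bipartite graph. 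Plugging these decompositions into the defining sums and using the factorization of the global function over the connected components yields \eqref{eq:msg:acyclic:DeFG:ja} and \eqref{eq:msg:acyclic:DeFG:aj}, as well as the two marginalization identities.

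For \textbf{uniqueness}, I would copy the path-tracing argument from Theorem~\ref{thm:BP:tree}. Assume two sets $\{m_{j\to a},m_{a\to j}\}$ and $\{m'_{j\to a},m'_{a\to j}\}$ both satisfy \eqref{eq:msg:acyclic:DeFG:ja}--\eqref{eq:msg:acyclic:DeFG:aj}; if they disagree anywhere, alternate applications of the two update rules produce an infinite backtrackless walk along which the messages keep disagreeing. Since the graph is finite and acyclic, the walk must terminate at a leaf, but a leaf in $\set{V}$ forces $m_{j\to a}=\mathbf{1}=m'_{j\to a}$ and a leaf in $\set{F}$ forces $m_{a\to j}=f_a=m'_{a\to j}$, a contradiction.

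The only step that requires slightly more care than in the classical proof is writing out the base case of the leaf vertices: a degree-$1$ variable vertex $j$ with unique neighbor $a$ has $\nb{j}\xk{a}=\emptyset$, so \eqref{eq:msg:acyclic:DeFG:ja} pins $m_{j\to a}$ to the constant function $\mathbf{1}$ on $\set{S}_j\times\set{S}_j$ (using the vacuous-product convention); similarly a leaf factor $a$ with unique neighbor $j$ has $m_{a\to j}=f_a$. I do not expect any serious obstacle: the whole argument is essentially a rewriting of the classical tree-BP proof with variables replaced by pairs and single-edge messages replaced by double-edge messages, and the proof can be terminated with an ``we omit the details'' pointer to Theorem~\ref{thm:BP:tree} after making these correspondences explicit.
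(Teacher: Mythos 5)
Your proposal matches the paper's proof exactly: the paper also reduces Theorem~\ref{thm:BP:DeFG:tree} to Theorem~\ref{thm:BP:tree} by treating $(s_j,\ts_j)$ as a single variable and noting that the acyclic-decomposition and path-tracing arguments are purely combinatorial (and indifferent to whether the local functions are real- or complex-valued), omitting the details. You have merely spelled out the correspondences the paper leaves implicit, and your observation that the PSD hypothesis is not used here (only for the follow-up remark that the messages are PSD) is also correct.
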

\begin{proof}
	By treating $(s_j,\ts_j)$ as a single variable for each $j$, the theorem is essentially the same as Theorem~\ref{thm:BP:tree}.
	We omit the details.
\end{proof}
\begin{remark}
By Theorem~\ref{thm:DeFG:CtB}, the matrices corresponding to the messages defined in~\eqref{eq:msg:acyclic:DeFG:ja} and~\eqref{eq:msg:acyclic:DeFG:aj} are PSD.
\end{remark}
\subsection{Belief-Propagation Algorithms for DeFG and BP Fixed Points}
Similar to BP algorithms for factor graphs, we define BP algorithms\index{belief-propagation algorithm!for DeFGs} as a heuristic generalization based on the messaging-passing rules~\eqref{eq:msg:acyclic:DeFG:ja} and~\eqref{eq:msg:acyclic:DeFG:aj}.
Namely, we consider an iterative method with the updating rules
\begin{align}
\label{eq:msg:DeFG:ja}
	m_{j\to a}^{(t)}(s_j,\ts_j)&\propto \prod_{c\in\nb{j}\xk{a}} m_{c\to j}^{(t)}(s_j,\ts_j), \\
\label{eq:msg:DeFG:aj}
	m_{a\to j}^{(t)}(s_j,\ts_j)&\propto \sum_{\vs_{\nb{a}\xk j},\tvs_{\nb{a}\xk j}} f_a(\vs_\nb{a},\tvs_\nb{a}) \cdot \prod_{k\in\nb{a}\xk{j}} m_{k\to a}^{(t-1)}(s_k,\ts_k),
\end{align}
where the initial messages $\{m_{j\to a}^{(0)}\}_{(j,a)\in\set{E}}$ are some constant functions.
Same as the BP algorithms for factor graphs, the updating sequence of the messages in~\eqref{eq:msg:DeFG:ja} and~\eqref{eq:msg:DeFG:aj} (\aka schedule) can be cleverly designed to suit different scenarios.
In this thesis, we choose to focus on the synchronous schedule (\aka flooding schedule).
Algorithm~\ref{alg:BP:DeFG} lists BP algorithm for DeFGs with the flooding schedule.
\begin{algorithm}
\caption{Belief-Propagation Algorithm for DeFGs (Flooding Schedule with Timeout)}
\label{alg:BP:DeFG}
\begin{algorithmic}[1]
\Require{A DeFG $\left(\set{G}=(\set{V},\set{F},\set{E}\in\set{V}\times\set{F}), \mathfrak{V}=\{\set{S}_j\}_{j\in\set{V}}, \mathfrak{F}=\{f_a\}_{a\in\set{F}}\right)$, $\epsilon>0$}
\Ensure{Messages $\{m_{j\to a},m_{a\to j}:\set{S}_j^2\to\Complex\}_{(j,a)\in\set{E}}$, $\mathsf{FLAG}\in\{\mathrm{completed},\mathrm{timeout}\}$.}
	\ForAll{$(j,a)\in\set{E}$}
		\State $m_{j\to a}(s_j,\ts_j)\gets 1/\abs{\set{S}}^2$ for each $(s_j,\ts_j)\in\set{S}_j^2$;
	\EndFor
	\State $t\gets 0$;
	\Do
		\State $t\gets t+1$;
		\ForAll{$(j,a)\in\set{E}$}
			\State $m_{a\to j}^{(t)}(s_j,\ts_j)\defpropto \sum_{\vs_{\nb{a}\xk j},\tvs_{\nb{a}\xk j}} f_a(\vs_\nb{a},\tvs_\nb{a}) \cdot \prod_{k\in\nb{a}\xk{j}} m_{k\to a}^{(t-1)}(s_k,\ts_k)$ for each $(s_j,\ts_j)\in\set{S}_j^2$;
		\EndFor
		\ForAll{$(j,a)\in\set{E}$}
		\State $m_{j\to a}^{(t)}(s_j,\ts_j)\defpropto\prod_{c\in\nb{j}\xk{a}} m_{c\to j}^{(t)}(s_j,\ts_j)$ for each $(s_j,\ts_j)\in\set{S}_j^2$;
		\EndFor
	\DoWhile{$\left(\neg\mathsf{timeout}\right) \land \left( \exists(j,a)\in\set{E} \text{ s.t. }\norm{m_{j\to a}^{(t)}-m_{j\to a}^{(t-1)}}_2>\varepsilon\text{ or }\norm{m_{a\to j}^{(t)}-m_{a\to j}^{(t-1)}}_2>\varepsilon\right)$}
	\Comment{$\mathsf{timeout}=\mathsf{false}$ unless the operating time exceeds a pre-selected waiting time.}
	\If{$\mathrm{timeout}$}
        \State{$\mathsf{FLAG}\gets\mathrm{timeout}$;}
	\Else
	    \State{$\mathsf{FLAG}\gets\mathrm{completed}$;}
	    \ForAll{$(j,a)\in\set{E}$}
	    \State{$m_{j\to a}\gets m_{j\to a}^{(t)}$;}
	    \State{$m_{a\to j}\gets m_{a\to j}^{(t)}$;}
	    \EndFor
	\EndIf
\end{algorithmic}
\end{algorithm}
\par
BP fixed points of a DeFG, defined similarly below as those of a factor graph, are of great importance for investigating the properties of the BA algorithms for DeFGs.
\begin{definition}[BP Fixed Points of a DeFG]\label{def:DeFG:BP:fixed:points} \index{BP fixed points!of DeFGs}
Applying Algorithm~\ref{alg:BP:DeFG} to a DeFG, a resulting set of messages $\{m_{j\to a},m_{a\to j}:\set{S}_j^2\to\Complex\}_{(j,a)\in\set{E}}$ is said to be a BP fixed point if 
\begin{align}
\label{eq:fixed:DeFG:ja}
m_{j\to a}(s_j,\ts_j)&\propto\prod_{c\in\nb{j}\xk{a}} m_{c\to j}(s_j,\ts_j), \\
\label{eq:fixed:DeFG:aj}
m_{a\to j}(s_j,\ts_j)&\propto\sum_{\vs_{\nb{a}\xk j},\tvs_{\nb{a}\xk j}} f_a(\vs_\nb{a},\tvs_\nb{a}) \cdot \prod_{k\in\nb{a}\xk{j}} m_{k\to a}(s_k,\ts_k).
\end{align}
In this case, the set $\{m_{j\to a},m_{a\to j}\}_{(j,a)\in\set{E}}$ is also called a set of fixed-point messages.
\end{definition}
\begin{definition} \index{induced partition sum}
Given a set of normalized PSD messages $\{m_{j\to a},m_{a\to j}\}_{(j,a)\in\set{E}}$, not necessarily a BP fixed point, the \emph{induced} partition sum (\wrt the messages) is defined as
\begin{equation}
Z_{\mathsf{induced}}\left(\{m_{j\to a},m_{a\to j}\}_{(j,a)\in\set{E}}\right)=\frac{\prod_{a\in\set{F}}Z_a(\{m_{j\to a}\}_{j\in\nb{a}})\cdot\prod_{j\in\set{V}}Z_j(\{m_{a\to j}\}_{a\in\nb{j}})}{\prod_{(j,a)\in\set{E}}Z_{j,a}(m_{j\to a},m_{a\to j})},
\end{equation}
where
\begin{align*}
Z_a(\{m_{j\to a}\}_{j\in\nb{a}}) &\defeq \sum_{\vs_\nb{a},\tvs_\nb{a}}f_a(\vs_\nb{a},\tvs_\nb{a}) \cdot \prod_{j\in\nb{a}} m_{j\to a}(s_j,\ts_j) && \forall a\in\set{F},\\
Z_j(\{m_{a\to j}\}_{a\in\nb{j}}) &\defeq \sum_{s_j,\ts_j}\prod_{a\in\nb{j}} m_{a\to j}(s_j,\ts_j) && \forall j\in\set{V},\\
Z_{j,a}(m_{j\to a},m_{a\to j}) &\defeq \sum_{s_j,\ts_j} m_{j\to a}(s_j,\ts_j)\cdot m_{a\to j}(s_j,\ts_j) && \forall (j,a)\in\set{E}. \qedhere
\end{align*}
\end{definition}
\begin{proposition}\label{prop:DeFG:Zinduced:stationary}
The function $Z_{\mathsf{induced}}$	is real non-negative, and its stationary points correspond to BP fixed points.
\end{proposition}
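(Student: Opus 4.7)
The proposition splits into two claims, which I would handle separately.

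\textbf{Non-negativity of $Z_{\mathsf{induced}}$.} The plan is to show that each of $Z_a$, $Z_j$, and $Z_{j,a}$ is a non-negative real number, from which the claim follows. Throughout, I view each message as a matrix indexed by $(s_j,\ts_j)$, which is PSD by assumption. For $Z_a$: since the tensor product of PSD matrices is PSD, $M_a\defeq\bigotimes_{j\in\nb{a}}m_{j\to a}$ is a PSD matrix on $\bigotimes_{j\in\nb{a}}\set{S}_j$, and one can write
\[
Z_a=\sum_{\vs_\nb{a},\tvs_\nb{a}}[f_a]_{\vs_\nb{a},\tvs_\nb{a}}\,[M_a]_{\vs_\nb{a},\tvs_\nb{a}}=\tr\bigl(f_a\cdot\overline{M_a}\bigr),
\]
which is non-negative since $\overline{M_a}$ is PSD (complex conjugation preserves PSD-ness) and the trace of a product of two PSD operators is non-negative. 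For $Z_j$: the Schur product theorem gives that the Hadamard product of the incoming messages $\{m_{a\to j}\}_{a\in\nb{j}}$ is PSD, and $Z_j$ is the sum of all its entries, hence non-negative. An identical entry-sum argument applied to $m_{j\to a}\hadamard m_{a\to j}$ (PSD by the same theorem) yields $Z_{j,a}\geq 0$. Combining, $Z_{\mathsf{induced}}\geq 0$.

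\textbf{Stationary points.} Writing $\log Z_{\mathsf{induced}}=\sum_{a\in\set{F}}\log Z_a+\sum_{j\in\set{V}}\log Z_j-\sum_{(j,a)\in\set{E}}\log Z_{j,a}$, I would compute partial derivatives with respect to each entry $m_{j\to a}(s_j,\ts_j)$ and $m_{a\to j}(s_j,\ts_j)$. The key observation enabling a clean Lagrange-free treatment is that each of $Z_a$, $Z_j$, and $Z_{j,a}$ is linear in every message it involves, so $Z_{\mathsf{induced}}$ is homogeneous of degree zero in each individual message (the contribution from $Z_a$ is cancelled by that from $Z_{j,a}$, and similarly on the other side). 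Hence normalization is inactive and stationarity can be checked by unconstrained differentiation. A direct computation gives
\begin{align*}
\frac{\partial\log Z_a}{\partial m_{j\to a}(s_j,\ts_j)}&=\frac{1}{Z_a}\sum_{\vs_{\nb{a}\xk j},\tvs_{\nb{a}\xk j}}f_a(\vs_\nb{a},\tvs_\nb{a})\prod_{k\in\nb{a}\xk j}m_{k\to a}(s_k,\ts_k),\\
\frac{\partial\log Z_{j,a}}{\partial m_{j\to a}(s_j,\ts_j)}&=\frac{m_{a\to j}(s_j,\ts_j)}{Z_{j,a}},
\end{align*}
and no other term depends on $m_{j\to a}$. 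Setting $\partial\log Z_{\mathsf{induced}}/\partial m_{j\to a}=0$ therefore forces $m_{a\to j}(s_j,\ts_j)$ to be proportional to the right-hand side of~\eqref{eq:fixed:DeFG:aj}, which is exactly that equation. A symmetric computation for $\partial/\partial m_{a\to j}$, using $Z_j$ and $Z_{j,a}$, yields~\eqref{eq:fixed:DeFG:ja}. The converse direction (BP fixed point $\Rightarrow$ stationary) follows by substituting~\eqref{eq:fixed:DeFG:ja} and~\eqref{eq:fixed:DeFG:aj} back into the same derivatives.

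The main subtlety I anticipate is justifying unconstrained variation given that the messages are constrained to the PSD cone. I plan to address this either (i) by noting that the update rules~\eqref{eq:fixed:DeFG:ja}--\eqref{eq:fixed:DeFG:aj} themselves preserve PSD-ness (essentially Theorem~\ref{thm:DeFG:CtB} applied locally to the appropriate box), so the stationary points located by free differentiation automatically lie in the PSD cone, or (ii) by reparameterizing each message as $m=LL^\Herm$ and verifying that stationarity in $L$ yields the same equations up to an invertible rescaling.
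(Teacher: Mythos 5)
Your proposal is essentially correct and, for the stationarity part, amounts to the same calculation the paper performs, just phrased differently. The paper characterizes stationarity via directional derivatives of $Z_{\mathsf{induced}}$ along \emph{PSD} perturbations $\eta_{j\to a}$ (resp.\ $\eta_{a\to j}$); since $Z_a$, $Z_j$, $Z_{j,a}$ are each multilinear in the messages, those directional derivatives reduce to exactly the ratios of partial derivatives you wrote down, and the paper then argues that because the equations hold for \emph{all} PSD directions and the PSD cone linearly spans the Hermitian matrices, proportionality of $m_{a\to j}$ (resp.\ $m_{j\to a}$) to the CtB expression follows. Your ``unconstrained differentiation'' is the same thing once you replace it with ``variation within the space of Hermitian matrices'', which is precisely what the paper's PSD-$\eta$ formulation encodes. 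Your degree-zero homogeneity observation is a nice, explicit way to see that a normalization multiplier is unnecessary; the paper achieves the same effect implicitly by working with ratios $\dot Z_k/Z_k = \dot Z_{k,c}/Z_{k,c}$.

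Two remarks on the points where your proposal departs from, or is weaker than, the paper's argument. First, you prove the non-negativity claim explicitly (via $Z_a=\tr(f_a\cdot\conj{M_a})$, Schur product for $Z_j$ and $Z_{j,a}$, and $\tr$ of a product of PSD operators being $\geq 0$); the paper's proof of this proposition does not address that part at all, so this is genuine added content and the argument you give is correct. Second, your anticipated fix (i) for the PSD-cone constraint is not quite the right justification: the fact that the BP update rules preserve PSD-ness does not by itself show that a point that is stationary for the constrained problem is stationary for the unconstrained one (that equivalence fails at boundary points, where KKT conditions involve the normal cone). The clean statement is the one the paper uses implicitly: for messages in the interior of the cone (strictly PD), the PSD inequality constraint is inactive, so stationarity over PSD directions is the same as stationarity over all Hermitian directions, and the only remaining constraint is Hermiticity, which is handled by varying along Hermitian (equivalently, PSD) $\eta$. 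Your option (ii), the $m=LL^\Herm$ reparameterization, also works for interior points with a bit of care, but it is not needed once you phrase the variation Hermitian-linearly as in (i) corrected or as in the paper.
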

\begin{proof}
It suffice to show that the set of equations 
\begin{equation}\label{eq:proof:prop:DeFG:Zinduced:stationary:1}
\begin{aligned}
	\left.\frac{\D}{\D h}\right\rvert_{h=0} Z_{\mathsf{induced}}(m_{c\to k}+h\cdot \eta_{c\to k}) = 0 & \quad\forall \eta_{c\to k}(s_k,\ts_k)\text{ PSD}\\
	\left.\frac{\D}{\D h}\right\rvert_{h=0} Z_{\mathsf{induced}}(m_{k\to c}+h\cdot \eta_{k\to c}) = 0 & \quad\forall \eta_{k\to c}(s_k,\ts_k)\text{ PSD}
\end{aligned}
	\qquad \forall (k,c)\in\set{E}
\end{equation}
is equivalent to the updating rules~\eqref{eq:fixed:DeFG:ja} and~\eqref{eq:fixed:DeFG:aj}.
	
First, suppose $\{m_{j\to a},m_{a\to j}\}_{(j,a)\in\set{E}}$ is a stationary point of $Z_{\mathsf{induced}}$, \ie,~\eqref{eq:proof:prop:DeFG:Zinduced:stationary:1} holds.
By definition of $Z_{\mathsf{induced}}$, one can rewrite the upper set of~\eqref{eq:proof:prop:DeFG:Zinduced:stationary:1} as
\[
\left.\frac{\D}{\D h}\right\rvert_{h=0} Z_{k,c}(m_{c\to k}+h\cdot \eta_{c\to k}) / Z_{k,c} = 
\left.\frac{\D}{\D h}\right\rvert_{h=0} Z_k(m_{c\to k}+h\cdot \eta_{c\to k}) / Z_k.
\]
Substituting the definitions of $Z_k$ and $Z_{k,c}$, the above is equivalent to 
\begin{equation}\label{eq:proof:prop:DeFG:Zinduced:stationary:2}
\sum_{s_k,\ts_k} m_{k\to c}(s_k,\ts_k)\cdot\eta_{c\to k}(s_k,\ts_k) = \frac{Z_{k,c}}{Z_k}\cdot \sum_{s_k,\ts_k} \tilde{m}_{k\to c}(s_k,\ts_k)\cdot\eta_{c\to k}(s_k,\ts_k),
\end{equation}
where $\tilde{m}_{k\to c}(s_k,\ts_k)\defeq\prod_{a\in\nb{k}\xk{c}}m_{a\to k}(s_k,\ts_k)$.
Since this holds for all PSD functions $\eta_{c\to k}$, linear algebra must hold that $m_{k\to c}\propto\tilde{m}_{k\to c}$, which recovers~\eqref{eq:fixed:DeFG:ja}.
Similarly, the lower set of equations of~\eqref{eq:proof:prop:DeFG:Zinduced:stationary:1} can be rewritten as
\[
\left.\frac{\D}{\D h}\right\rvert_{h=0} Z_{k,c}(m_{k\to c}+h\cdot \eta_{k\to c}) / Z_{k,c} = 
\left.\frac{\D}{\D h}\right\rvert_{h=0} Z_c(m_{k\to c}+h\cdot \eta_{k\to c}) / Z_c,
\]
which is, in turn, equivalent to 
\begin{equation}\label{eq:proof:prop:DeFG:Zinduced:stationary:3}
\sum_{s_k,\ts_k}m_{c\to k}(s_k,\ts_k)\cdot\eta_{k\to c}(s_k,\ts_k) = 
\frac{Z_{k,c}}{Z_c}\cdot\sum_{s_k\ts_k}\tilde{m}_{c\to k}(s_k,\ts_k)\cdot\eta_{k\to c}(s_k,\ts_k),
\end{equation}
where $\tilde{m}_{c\to k}(s_k,\ts_k)\defeq\sum_{\vs_{\nb{c}\xk k},\tvs_{\nb{c}\xk k}}f_c(\vs_\nb{c},\tvs_\nb{c})\cdot\prod_{j\in\nb{c}\xk k}m_{j\to c}(s_j,\ts_j)$.
Again, since this also holds for all PSD functions $\eta_{k\to c}$, we have $m_{c\to k}\propto\tilde{m}_{c\to k}$, which recovers~\eqref{eq:fixed:DeFG:aj}.
	
Second, suppose $\{m_{j\to a},m_{a\to j}\}_{(j,a)\in\set{E}}$ is a BP fixed point, \ie,~\eqref{eq:fixed:DeFG:ja} and~\eqref{eq:fixed:DeFG:aj} hold.
To show~\eqref{eq:proof:prop:DeFG:Zinduced:stationary:1}, it suffices to verify~\eqref{eq:proof:prop:DeFG:Zinduced:stationary:2} and~\eqref{eq:proof:prop:DeFG:Zinduced:stationary:3}.
The latter can be shown rather straightforwardly as soon as one notices that
\begin{align*}
\frac{Z_{k,c}}{Z_k} &= \frac{\sum_{s_k,\ts_k}m_{k\to c}(s_k,\ts_k)\cdot m_{c\to k}(s_k,\ts_k)}{\sum_{s_k,\ts_k}\left(\prod_{a\in\nb{k}\xk{c}}m_{a\to k}(s_k,\ts_k)\right)\cdot m_{c\to k}(s_k,\ts_k)}
= \frac{m_{k\to c}(s_k,\ts_k)}{\prod_{a\in\nb{k}\xk{c}}m_{a\to k}(s_k,\ts_k)}\\
\frac{Z_{k,c}}{Z_c} &= \frac{\sum_{s_k,\ts_k}m_{c\to k}(s_k,\ts_k)\cdot m_{k\to c}(s_k,\ts_k)}{\sum_{s_k,\ts_k}\left(\sum_{\vs_{\nb{c}\xk k},\tvs_{\nb{c}\xk k}}f_c(\vs_\nb{c},\tvs_\nb{c})\cdot\prod_{j\in\nb{c}\xk k}m_{j\to c}(s_j,\ts_j)\right)\cdot m_{k\to c}(s_k,\ts_k)}\\
&= \frac{m_{c\to k}(s_k,\ts_k)}{\sum_{\vs_{\nb{c}\xk k},\tvs_{\nb{c}\xk k}}f_c(\vs_\nb{c},\tvs_\nb{c})\cdot\prod_{j\in\nb{c}\xk k}m_{j\to c}(s_j,\ts_j)}
\end{align*}
for all $s_k$, $\ts_k$ since $m_{k\to c}$ is proportional to $\prod_{a\in\nb{k}\xk{c}}m_{a\to k}$, and $m_{c\to k}$ is proportional to $\sum_{\vs_{\nb{c}\xk k},\tvs_{\nb{c}\xk k}}f_c\cdot\prod_{j\in\nb{c}\xk k}m_{j\to c}$.
\end{proof}
One \emph{must} note that $Z_{\mathsf{induced}}$ is conceptually different from $Z_{\mathsf{B}}$ for factor graphs.
In particular, $Z_{\mathsf{induced}}$ is a function of messages, instead of some local beliefs or marginals.
More importantly, in the classical case, $Z_{\mathsf{induced}}=Z_{\mathsf{B}}$ for messages corresponding to the belief minimizing $\bethe$.
However, although $\bethe$ can be generalized to DeFGs by analytical continuation, it is not clear how to define a similar version of $Z_{\mathsf{B}}$ based on the minimum of $\bethe$ while maintaining the connections to BP algorithms.
Thus, it is currently unclear how $Z_{\mathsf{induced}}$ is related to the partition sum $Z$ (for general DeFGs).
Nevertheless, for acyclic DeFGs, we still have the following corollary.
\begin{corollary}
For an acyclic DeFG $\set{G}$, its BP fixed point is unique.
Also, in this case, $Z_{\mathsf{induced}}$ has only one stationary point, and at that point, $Z_{\mathsf{induced}}=Z(\set{G})$.
\end{corollary}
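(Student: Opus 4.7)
The plan is to reduce all three claims to Theorem~\ref{thm:BP:DeFG:tree} and Proposition~\ref{prop:DeFG:Zinduced:stationary}, and then evaluate $Z_{\mathsf{induced}}$ at the fixed point via a telescoping argument that exploits the tree structure.

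First I would establish uniqueness of the BP fixed point. Theorem~\ref{thm:BP:DeFG:tree} exhibits a unique set of messages satisfying the \emph{exact} equalities~\eqref{eq:msg:acyclic:DeFG:ja}--\eqref{eq:msg:acyclic:DeFG:aj}. Any BP fixed point in the sense of Definition~\ref{def:DeFG:BP:fixed:points} satisfies these relations only up to positive scaling, but the scalars are pinned down by any normalization convention such as that used in Algorithm~\ref{alg:BP:DeFG}; hence the normalized BP fixed point is unique. By Proposition~\ref{prop:DeFG:Zinduced:stationary}, the stationary points of $Z_{\mathsf{induced}}$ correspond to BP fixed points, so $Z_{\mathsf{induced}}$ has a unique stationary point as well.

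For the identity $Z_{\mathsf{induced}} = Z(\set{G})$, the key observation is that $Z_{\mathsf{induced}}$ is invariant under rescaling any single message: multiplying $m_{j\to a}$ by $\lambda$ contributes $\lambda$ both to $Z_a$ in the numerator and to $Z_{j,a}$ in the denominator (and analogously for $m_{a\to j}$). We may therefore substitute the exact (unnormalized) messages from Theorem~\ref{thm:BP:DeFG:tree}. For every edge $(j,a)\in\set{E}$, isolating the variable $(s_j,\ts_j)$ in the defining equations yields
\begin{align*}
Z_a &= \sum_{s_j,\ts_j} m_{j\to a}(s_j,\ts_j) \biggl(\sum_{\vs_{\nb{a}\xk j},\tvs_{\nb{a}\xk j}} f_a \!\!\prod_{k\in\nb{a}\xk j}\!\! m_{k\to a}\biggr) = \sum_{s_j,\ts_j} m_{j\to a}(s_j,\ts_j)\, m_{a\to j}(s_j,\ts_j) = Z_{j,a},\\
Z_j &= \sum_{s_j,\ts_j} m_{a\to j}(s_j,\ts_j) \!\!\prod_{c\in\nb{j}\xk a}\!\! m_{c\to j}(s_j,\ts_j) = \sum_{s_j,\ts_j} m_{a\to j}(s_j,\ts_j)\, m_{j\to a}(s_j,\ts_j) = Z_{j,a},
\end{align*}
where I have used~\eqref{eq:msg:acyclic:DeFG:aj} and~\eqref{eq:msg:acyclic:DeFG:ja} respectively. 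By connectedness (applied componentwise if the graph is a forest), all the $Z_a$, $Z_j$, and $Z_{j,a}$ of a single component share a common value $C$. Since a connected tree satisfies $|\set{V}|+|\set{F}|-|\set{E}|=1$,
\[
Z_{\mathsf{induced}} = \frac{\prod_{a\in\set{F}} Z_a \cdot \prod_{j\in\set{V}} Z_j}{\prod_{(j,a)\in\set{E}} Z_{j,a}} = C^{|\set{F}|+|\set{V}|-|\set{E}|} = C.
\]
Taking the edge factor $a$ to be the root $r$ in Algorithm~\ref{alg:BP:acyclic:DeFG}, the last line of that algorithm gives $C = Z_r = Z(\set{G})$, completing the proof; the multi-component case follows because both $Z$ and $Z_{\mathsf{induced}}$ are multiplicative across disjoint components.

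The main obstacle is not algebraic depth but rather bookkeeping around the scaling ambiguity built into Definition~\ref{def:DeFG:BP:fixed:points}. I would fix this once and for all by adopting the normalization used in Algorithm~\ref{alg:BP:DeFG}; the scale invariance of $Z_{\mathsf{induced}}$ then guarantees that its value at the unique fixed point is intrinsically defined, and the telescoping identity above shows it equals $Z(\set{G})$ regardless of the convention chosen.
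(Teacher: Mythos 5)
Your argument is correct and follows the same high-level route as the paper, which simply cites Theorem~\ref{thm:BP:DeFG:tree} and Proposition~\ref{prop:DeFG:Zinduced:stationary} without supplying the bookkeeping; you fill in exactly what those citations leave implicit, including the scale-invariance of $Z_{\mathsf{induced}}$ and the tree Euler-characteristic identity $|\set{V}|+|\set{F}|-|\set{E}|=1$. As a minor streamlining, the two ``sandwich'' identities in Theorem~\ref{thm:BP:DeFG:tree} already force each of $Z_a$, $Z_j$, and $Z_{j,a}$ to equal $Z(\set{G})$ outright (sum the $a$-identity over $\vs_{\nb{a}},\tvs_{\nb{a}}$ and the $j$-identity over $s_j,\ts_j$), so one can skip the pairwise $Z_a=Z_{j,a}=Z_j$ step and the appeal to connectedness, though your version works equally well.
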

\begin{proof}
This is a direct result of Theorem~\ref{thm:BP:DeFG:tree} and Proposition~\ref{prop:DeFG:Zinduced:stationary}.
\end{proof}
\subsection{Holographic Transformations and Loop Calculus for DeFGs}
In this section, we consider the generalization of the holographic transformation and the loop calculus expansion from factor graphs~\cite{mori2015loop, chernyak2007loop} to DeFGs.

The generalization of the holographic transformation for DeFG is rather straightforward.
The following theorem is a direct generalization of Theorem~\ref{thm:holant}.
\begin{theorem}[Holant Theorem for DeFGs] \label{thm:DeFG:holant} \index{Holant theorem}
Let $\set{G}=(\set{V},\set{F},\set{E})$ be a DeFG representing the factorization $g(\vs,\tvs) = \prod_{a\in\mathcal{F}} f_a(\vs_{\nb{a}},\tvs_{\nb{a}})$.
Given the Hermitian functions\footnote{Here, a function $f:\bigtimes_{k}\set{X}_k^2\to\Complex$ is said to be Hermitian if $f(x_1,\ldots,x_k;\tx_1,\ldots,\tx_k)=\conj{f(\tx_1,\ldots,\tx_k;x_1,\ldots,x_k)}$ for all $x_1,\ldots,x_k,\tx_1,\ldots,\tx_k$.} $\{\hat{\phi}_{j,a}:\set{T}_{j,a}^2\times\set{S}_j^2\to\Complex\}_{(j,a)\in\set{E}}$ and $\{\phi_{j,a}:\set{S}_j^2\times\set{T}_{j,a}^2\to\Complex\}_{(i,a)\in\set{E}}$ such that (where the sets $\set{T}_{j,a}$ and $\set{S}_j$ are finite)
\begin{equation}\label{eq:holographic:requirement:DeFG}
	\sum_{t,\tit} \phi_{j,a}(s,\ts;t,\tit)\cdot\hat{\phi}_{j,a}(t,\tit;s',\ts') = \delta_{s,s'}\cdot\delta_{\ts,\ts'},
\end{equation}
the partition sum $Z(\set{G})$ can be expressed as
\begin{equation}
Z(\set{G}) \defeq \sum_{\vs,\tvs} \prod_{a\in\set{F}} f_a(\vs_\nb{a},\tvs_\nb{a})
= \sum_{\vt,\tvt} \prod_{a\in\set{F}} \hat{f}_a(\vt_{\nb{a},a},\tvt_{\nb{a},a}) \prod_{j\in\set{V}} \hat{h}_j(\vt_{j,\nb{j}},\tvt_{i,\nb{j}}),
\end{equation}
where 
\begin{align}
	\label{eq:DeFG:holographic:1}
	\hat{f}_a(\vt_{\nb{a},a},\tvt_{\nb{a},a}) &\defeq \sum_{\vs_\nb{a},\tvs_\nb{a}} f_a(\vs_\nb{a},\tvs_\nb{a}) \cdot \prod_{j\in\nb{a}} \hat{\phi}_{j,a}(t_{j,a},\tit_{j,a};s_j,\ts_j),\\
	\label{eq:DeFG:holographic:2}
	\hat{h}_j(\vt_{j,\nb{j}},\tvt_{i,\nb{j}}) &\defeq \sum_{s_j,\ts_j} \prod_{a\in\nb{j}} \phi_{j,a}(s_j,\ts_j;t_{j,a},\tit_{j,a}).
\end{align}
\end{theorem}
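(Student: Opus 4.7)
The plan is to mimic the proof sketch of Theorem~\ref{thm:holant} for classical factor graphs, treating each pair $(s_j,\ts_j)$ as a single ``super-index'' on the $\set{S}_j^2$ side and each pair $(t_{j,a},\tit_{j,a})$ as a single super-index on the $\set{T}_{j,a}^2$ side. Condition~\eqref{eq:holographic:requirement:DeFG} then says exactly that, for every edge $(j,a)\in\set{E}$, the $\size{\set{S}_j}^2\times\size{\set{T}_{j,a}}^2$ arrays $\phi_{j,a}$ and $\hat{\phi}_{j,a}$ are mutual inverses under matrix multiplication; in particular the product $\phi_{j,a}\cdot\hat{\phi}_{j,a}$ is the identity on $\set{S}_j^2\times\set{S}_j^2$.

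First, I would rewrite $Z(\set{G})$ in an ``edge-split'' form. Since in a DeFG every variable $(s_j,\ts_j)$ is implicitly shared by all factors $a\in\nb{j}$ through an equality constraint, I introduce independent copies $(s_j^{(a)},\ts_j^{(a)})$ for each $a\in\nb{j}$, together with a copy $(s_j,\ts_j)$ living on the variable-node side, and enforce agreement by Kronecker deltas. This rewrites
\begin{equation*}
Z(\set{G})=\sum_{\vs,\tvs}\sum_{\{s_j^{(a)},\ts_j^{(a)}\}}
\prod_{a\in\set{F}} f_a\!\left(\vs_\nb{a}^{(a)},\tvs_\nb{a}^{(a)}\right)
\cdot\prod_{(j,a)\in\set{E}}\delta_{s_j,s_j^{(a)}}\delta_{\ts_j,\ts_j^{(a)}}.
\end{equation*}

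Next, on each edge $(j,a)$ I would substitute the product of deltas using the identity provided by~\eqref{eq:holographic:requirement:DeFG}:
\begin{equation*}
\delta_{s_j,s_j^{(a)}}\delta_{\ts_j,\ts_j^{(a)}}
=\sum_{t_{j,a},\tit_{j,a}}
\phi_{j,a}(s_j,\ts_j;t_{j,a},\tit_{j,a})\cdot
\hat{\phi}_{j,a}(t_{j,a},\tit_{j,a};s_j^{(a)},\ts_j^{(a)}).
\end{equation*}
Plugging this in and pulling the new summations over $\vt,\tvt$ outside gives a finite sum whose summand factorizes naturally into two groups: for each $a\in\set{F}$, the pieces $f_a$ together with the $\hat{\phi}_{j,a}$'s for $j\in\nb{a}$ and the summation over $(\vs_\nb{a}^{(a)},\tvs_\nb{a}^{(a)})$; and for each $j\in\set{V}$, the $\phi_{j,a}$'s for $a\in\nb{j}$ together with the summation over $(s_j,\ts_j)$. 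Recognising these two groups as precisely the definitions~\eqref{eq:DeFG:holographic:1} and~\eqref{eq:DeFG:holographic:2} of $\hat{f}_a$ and $\hat{h}_j$, respectively, yields the claimed identity.

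This is a purely linear-algebraic manipulation, so I do not anticipate any genuine obstacle; the only thing that requires care is bookkeeping, namely tracking which primed/unprimed and which copy-index belongs to which side of each edge so that the regrouping step correctly assigns all $\hat{\phi}_{j,a}$'s to $\hat{f}_a$ and all $\phi_{j,a}$'s to $\hat{h}_j$. The Hermiticity hypothesis on $\phi_{j,a}$ and $\hat{\phi}_{j,a}$ is not needed for the identity itself (it plays a role only later, when one wants the transformed factors to again fit the PSD framework of DeFGs), so I would remark on this but not use it in the present proof.
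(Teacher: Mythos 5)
Your proof is correct and follows exactly the route the paper intends (edge-splitting with Kronecker deltas, inserting the resolution of identity $\delta=\phi\cdot\hat\phi$ on each edge, then regrouping the $\hat\phi$'s with the factor nodes and the $\phi$'s with the variable nodes), which is the standard Holant-theorem argument the paper explicitly says it is reusing from the classical case. Your closing remark that Hermiticity of $\phi_{j,a},\hat\phi_{j,a}$ is irrelevant to the identity itself and only matters for keeping the transformed object a DeFG is also accurate.
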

The idea behind the above theorem is the same as that of Theorem~\ref{thm:holant}, and we omit the proof.
Similar to the case of factor graphs, the holographic transform\index{holographic transformation!of DeFGs} of $\set{G}$ (\wrt $\{\hat{\phi}_{j,a},\phi_{j,a}\}_{(j,a)}$) is defined to be the DeFG $\hat{\set{G}}=(\set{E},\set{F}\cup\set{V},\setdef{(e,e_1),(e,e_2)}{e\in\set{E}})$ representing the factorization 
\begin{equation}
\hat{g}(\vt,\tvt) = \prod_{a\in\set{F}} \hat{f}_a(\vt_{\nb{a},a},\tvt_{\nb{a},a}) \prod_{j\in\set{V}} \hat{h}_j(\vt_{j,\nb{j}},\tvt_{i,\nb{j}}).
\end{equation}
\par
Following the idea of the method of loop calculus (recall Section~\ref{subsec:FG:LoopCalculus}), we consider a specific holographic transform such that
\begin{align}
	\label{eq:DeFG:lc:assumption:1}
	\hat{f}_a(\vt_{\nb{a},a},\tvt_{\nb{a},a}) 
		&= 0,\text{ if } \wt(\vt_{\nb{a},a}\tensor\tvt_{\nb{a},a}) = 1,\\
	\label{eq:DeFG:lc:assumption:2}
	\hat{h}_j(\vt_{j,\nb{j}},\tvt_{i,\nb{j}})
		&= 0,\text{ if } \wt(\vt_{j,\nb{j}}\tensor\tvt_{i,\nb{j}}) = 1,
\end{align}
where we have assumed that each alphabet $\set{T}_{j,a}$ contains an elements labeled as $0$.\footnote{Note that $\wt(\vx_1^n\tensor\tvx_1^n)\defeq\sum_{i=1}^n 1_{(x_i,\tx_i)\neq(0,0)}$, where $1_\text{true}=1$ and $1_\text{false}=0$.}
In this case, the support of $\hat{g}$ is limited to those $(t_{j,a},\tit_{j,a})_{(j,a)\in\set{E}}$ such that the edges $(j,a)$ corresponding to $(t_{j,a},\tit_{j,a})\neq\vect{0}$ form a \emph{generalized loop} (see~\eqref{eq:def:extended:loops}).
Substituting~\eqref{eq:DeFG:holographic:1} and~\eqref{eq:DeFG:holographic:2} into~\eqref{eq:DeFG:lc:assumption:1} and~\eqref{eq:DeFG:lc:assumption:2}, respectively, we can rewrite the latter as
\begin{align}
\label{eq:DeFG:lc:assumption:3}
\sum_{s_j,\ts_j}\left(\sum_{\vs_{\nb{a}\xk j},\tvs_{\nb{a}\xk j}} f_a(\vs_\nb{a},\tvs_\nb{a}) \cdot \prod_{k\in\nb{a}\xk j} \hat{\phi}_{k,a}(0,0;s_k,\ts_k)\right) \cdot \hat{\phi}_{j,a}(t_{j,a},\tit_{j,a};s_j,\ts_j) &= 0 \\
\label{eq:DeFG:lc:assumption:4}
\sum_{s_j,\ts_j}\left(\prod_{c\in\nb{j}\xk a} \phi_{j,c}(s_j,\ts_j;0,0)\right)\cdot\phi_{j,a}(s_j,\ts_j;t_{j,a},\tit_{j,a}) &= 0
\end{align}
for each $(j,a)\in\set{E}$ and $(t_{j,a},\tit_{j,a})\neq\vect{0}$.
Since all functions involved in~\eqref{eq:DeFG:lc:assumption:3} and~\eqref{eq:DeFG:lc:assumption:4} are Hermitian, the \LHS of these equations are nothing more than the Frobenius inner product between matrices.
In other words,
\begin{align}
\label{eq:DeFG:lc:assumption:3a}
\left[\sum_{\vs_{\nb{a}\xk j},\tvs_{\nb{a}\xk j}}\!\! f_a(\vs_\nb{a},\tvs_\nb{a}) \cdot\!\! \prod_{k\in\nb{a}\xk j} \hat{\phi}_{k,a}(0,0;s_k,\ts_k)\right]_{s_j,\ts_j}\!\!
&\perp \bigg[\hat{\phi}_{j,a}(t_{j,a},\tit_{j,a};s_j,\ts_j)\bigg]_{s_j,\ts_j}\\
\label{eq:DeFG:lc:assumption:4a}
\left[\prod_{c\in\nb{j}\xk a} \phi_{j,c}(s_j,\ts_j;0,0)\right]_{s_j,\ts_j}\!\!
&\perp \bigg[\phi_{j,a}(s_j,\ts_j;t_{j,a},\tit_{j,a})\bigg]_{s_j,\ts_j}
\end{align}
for each $(j,a)\in\set{E}$ and $(t_{j,a},\tit_{j,a})\neq\vect{0}$.
Considering~\eqref{eq:holographic:requirement:DeFG}, we have 
\begin{align}
\label{eq:DeFG:lc:assumption:5}
\phi_{j,a}(\cdot,\cdot;0,0) &\propto
\left[\sum_{\vs_{\nb{a}\xk j},\tvs_{\nb{a}\xk j}}\!\! f_a(\vs_\nb{a},\tvs_\nb{a}) \cdot\!\! \prod_{k\in\nb{a}\xk j} \hat{\phi}_{k,a}(0,0;s_k,\ts_k)\right]_{s_j,\ts_j},\\
\label{eq:DeFG:lc:assumption:6}
\hat{\phi}_{j,a}(0,0;\cdot,\cdot) &\propto
\left[\prod_{c\in\nb{j}\xk a} \phi_{j,c}(s_j,\ts_j;0,0)\right]_{s_j,\ts_j}.
\end{align}
Compare above with Definition~\ref{def:DeFG:BP:fixed:points}.
Eq~\eqref{eq:DeFG:lc:assumption:5} and~\eqref{eq:DeFG:lc:assumption:6} are equivalent to the existence of some fixed-point messages $\{m_{j\to a}, m_{a\to j}\}_{(i,a)}$ such that
\begin{align}
	\label{eq:DeFG:lc:assumption:7}
	\phi_{j,a}(s_j,\ts_j\cdot;0,0) &= c_{j,a}\cdot m_{a\to j}(s_j,\ts_j)\\
	\label{eq:DeFG:lc:assumption:8}
	\hat{\phi}_{j,a}(0,0;s_j,\ts_j) &= \hat{c}_{j,a}\cdot m_{j\to a}(s_j,\ts_j)
\end{align}
for some constants $c_{j,a}$, $\hat{c}_{j,a}$ such that $c_{j,a}\cdot \hat{c}_{j,a}=Z_{j,a}^{-1}(m_{j\to a},m_{a\to j})$ for each $(j,a)\in\set{E}$.
Retracing the steps above, given any BP fixed point $\{m_{j\to a}, m_{a\to j}\}_{(j,a)\in\set{E}}$ of the DeFG, one can construct a holographic transform satisfying~\eqref{eq:DeFG:lc:assumption:7} and~\eqref{eq:DeFG:lc:assumption:8}; and thus  satisfying~\eqref{eq:DeFG:lc:assumption:1} and~\eqref{eq:DeFG:lc:assumption:2}.
Additionally, in such a case, for each $a\in\set{F}$ and $j\in\set{V}$, we have 
\begin{equation}
\hat{f}_a(\vect{0},\vect{0}) = Z_a \prod_{j\in\nb{a}} \hat{c}_{j,a},\quad
\hat{h}_j(\vect{0},\vect{0}) = Z_j \prod_{c\in\nb{j}} c_{j,c},
\end{equation}
and thus,
\begin{equation}
Z_{\mathsf{induced}}(\{m_{j\to a},m_{a\to j}\}_{(j,a)\in\set{E}})=
	\prod_{a\in\set{F}} \hat{f}_a(\vect{0},\vect{0}) \cdot \prod_{j\in\set{V}} \hat{h}_j(\vect{0},\vect{0}).
\end{equation}

\begin{theorem}[Loop Calculus Expansion for DeFGs] \index{loop calculus!of DeFGs}
Consider a DeFG representing the factorization $g(\vs,\tvs) = \prod_{a\in\mathcal{F}} f_a(\vs_{\nb{a}},\tvs_{\nb{a}})$.
Let $\{m_{j\to a}, m_{a\to j}\}_{(i,a)}$ be a set of fixed-point messages.
Then, it holds that 
\begin{equation}\label{eq:DeFG:lc}
Z(\set{G})=Z_{\mathsf{induced}}(\{m_{j\to a},m_{a\to j}\}_{(j,a)\in\set{E}}) \cdot \sum_{E\in\mathfrak{L}(\set{E})}\mathcal{K}(E),
\end{equation}
where the set of \emph{generalized loops} $\mathfrak{L}(\set{E})$ has been defined in~\eqref{eq:def:extended:loops} and where $\mathcal{K}\left(E\right)$ is some function of $E$ such that $\mathcal{K}(\emptyset)=1$.
\end{theorem}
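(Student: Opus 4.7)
The plan is to mimic the strategy used for classical factor graphs in Section~\ref{subsec:FG:LoopCalculus}, but working with the Hermitian/PSD structure intrinsic to DeFGs. The key tool is the Holant Theorem for DeFGs (Theorem~\ref{thm:DeFG:holant}), together with a carefully chosen holographic transformation anchored on the given BP fixed point.

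First, I would construct a holographic transform $\{\phi_{j,a},\hat{\phi}_{j,a}\}_{(j,a)\in\set{E}}$ (with $\size{\set{T}_{j,a}}=\size{\set{S}_j}$, so that both $\phi_{j,a}$ and $\hat{\phi}_{j,a}$ are invertible as $\size{\set{S}_j}^2\times\size{\set{S}_j}^2$ matrices indexed by $(s,\ts)$ and $(t,\tit)$) satisfying the prescribed ``zero-coordinate'' identities~\eqref{eq:DeFG:lc:assumption:7} and~\eqref{eq:DeFG:lc:assumption:8}. Concretely: pick constants $c_{j,a},\hat{c}_{j,a}$ with $c_{j,a}\hat{c}_{j,a}=Z_{j,a}^{-1}(m_{j\to a},m_{a\to j})$, fix the $(0,0)$-column of $\phi_{j,a}$ to be $c_{j,a}\cdot m_{a\to j}$ and the $(0,0)$-row of $\hat{\phi}_{j,a}$ to be $\hat{c}_{j,a}\cdot m_{j\to a}$, and complete the remaining columns/rows so that~\eqref{eq:holographic:requirement:DeFG} holds and both functions are Hermitian. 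Because the fixed-point messages are PSD, their matricizations are Hermitian, so the prescribed column/row fits into a valid Hermitian-preserving basis; the extension is a standard Gram--Schmidt-style completion in $\Complex^{\size{\set{S}_j}^2}$.

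Second, I would verify that this choice forces~\eqref{eq:DeFG:lc:assumption:1} and~\eqref{eq:DeFG:lc:assumption:2}. Substituting~\eqref{eq:DeFG:lc:assumption:7} and~\eqref{eq:DeFG:lc:assumption:8} into~\eqref{eq:DeFG:holographic:1} and~\eqref{eq:DeFG:holographic:2} and using the fixed-point equations~\eqref{eq:fixed:DeFG:ja},~\eqref{eq:fixed:DeFG:aj} together with the biorthogonality~\eqref{eq:holographic:requirement:DeFG}, one obtains that $\hat{f}_a(\vt_{\nb{a},a},\tvt_{\nb{a},a})$ and $\hat{h}_j(\vt_{j,\nb{j}},\tvt_{j,\nb{j}})$ vanish whenever exactly one of the $(t_{j,a},\tit_{j,a})$ pairs is non-zero. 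Consequently, in the expansion of $\hat{g}=\prod_a \hat{f}_a\cdot\prod_j\hat{h}_j$, only those configurations $(\vt,\tvt)$ whose non-zero support (i.e., the set $E=\{(j,a): (t_{j,a},\tit_{j,a})\neq(0,0)\}$) satisfies $\size{\{(j,a)\in E: j\in\nb{a}\}}\neq 1$ at every $a$ and $\size{\{(j,a)\in E: a\in\nb{j}\}}\neq 1$ at every $j$ survive; that is, $E\in\mathfrak{L}(\set{E})$.

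Third, I would invoke the Holant Theorem for DeFGs to write $Z(\set{G})=\sum_{\vt,\tvt}\hat{g}(\vt,\tvt)$ and partition this sum by the induced $E\in\mathfrak{L}(\set{E})$. The $E=\emptyset$ term is
\[
\hat{g}(\vect{0},\vect{0})=\prod_{a\in\set{F}}\hat{f}_a(\vect{0},\vect{0})\cdot\prod_{j\in\set{V}}\hat{h}_j(\vect{0},\vect{0})
=\prod_{a\in\set{F}}\Bigl(Z_a\prod_{j\in\nb{a}}\hat{c}_{j,a}\Bigr)\cdot\prod_{j\in\set{V}}\Bigl(Z_j\prod_{c\in\nb{j}}c_{j,c}\Bigr),
\]
which, upon grouping the $c_{j,a}\hat{c}_{j,a}=Z_{j,a}^{-1}$ pairs across each edge, collapses to $Z_{\mathsf{induced}}(\{m_{j\to a},m_{a\to j}\}_{(j,a)\in\set{E}})$. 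Defining
\[
\mathcal{K}(E)\defeq \frac{1}{\hat{g}(\vect{0},\vect{0})}\sum_{\substack{(t_{j,a},\tit_{j,a})\neq(0,0)\\ \forall (j,a)\in E\\ (t_{j,a},\tit_{j,a})=(0,0)\\ \forall (j,a)\notin E}}\prod_{a\in\set{F}}\hat{f}_a(\vt_{\nb{a},a},\tvt_{\nb{a},a})\cdot\prod_{j\in\set{V}}\hat{h}_j(\vt_{j,\nb{j}},\tvt_{j,\nb{j}}),
\]
yields a function of $E$ only (the dependence on the messages is absorbed into $Z_{\mathsf{induced}}$) with $\mathcal{K}(\emptyset)=1$, giving~\eqref{eq:DeFG:lc}.

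The main obstacle I foresee is the first step: producing a \emph{Hermitian-preserving} completion of the transform with the prescribed zero-coordinate column/row. Unlike the classical case, where the columns of $\phi_{i,a}$ are simple vectors and the completion is routine Gram--Schmidt, here each ``column'' is itself a Hermitian matrix over $\set{S}_j\times\set{S}_j$, and we must ensure the remaining basis elements continue to yield Hermitian functions so that the Frobenius-inner-product interpretation used in~\eqref{eq:DeFG:lc:assumption:3a}--\eqref{eq:DeFG:lc:assumption:4a} is valid. This is where most of the technical care will be needed; once that completion exists, the remaining algebraic manipulations proceed in close analogy with~\cite{mori2015loop}.
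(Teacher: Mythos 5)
Your proposal follows essentially the same route as the paper's proof: anchor a holographic transform on the BP fixed point via~\eqref{eq:DeFG:lc:assumption:7}--\eqref{eq:DeFG:lc:assumption:8} with $c_{j,a}\hat{c}_{j,a}=Z_{j,a}^{-1}$, invoke Theorem~\ref{thm:DeFG:holant} so that the support of $\hat{g}$ is confined to generalized loops, observe that $\hat{g}(\vect{0},\vect{0})=Z_{\mathsf{induced}}$ after telescoping the $c_{j,a}\hat{c}_{j,a}$ products over edges, and define $\mathcal{K}(E)$ as the normalized remainder. The one point you flag as delicate---constructing a Hermitian-preserving completion of $\phi_{j,a}$, $\hat{\phi}_{j,a}$ from the prescribed $(0,0)$ column/row---is exactly the step the paper also asserts rather than proves (it states one ``can construct'' such a transform in the preamble), so your concern is well placed but does not represent a different approach.
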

\begin{proof}
Consider a holographic transform \wrt $\{\hat{\phi}_{j,a},\phi_{j,a}\}_{(j,a)}$ as established throughout~\eqref{eq:DeFG:lc:assumption:1} to~\eqref{eq:DeFG:lc:assumption:8}.
Under such a setup,
\begin{fleqn}\begin{equation*}
	Z(\set{G}) = \sum_{(\vt,\tvt)\in\support(\hat{g})} \prod_{a\in\set{F}} \hat{f}_a(\vt_{\nb{a},a},\tvt_{\nb{a},a}) \prod_{j\in\set{V}} \hat{h}_j(\vt_{j,\nb{j}},\tvt_{i,\nb{j}}),
\end{equation*}\end{fleqn}
\begin{fleqn}\begin{equation*}\phantom{Z(\set{G})}
	= Z_{\mathsf{induced}} \cdot \sum_{(\vt,\tvt)\in\support(\hat{g})} \prod_{a\in\set{F}} \frac{\hat{f}_a(\vt_{\nb{a},a},\tvt_{\nb{a},a})}{\hat{f}_a(\vect{0},\vect{0})} \prod_{j\in\set{V}} \frac{\hat{h}_j(\vt_{j,\nb{j}},\tvt_{i,\nb{j}})}{\hat{h}_j(\vect{0},\vect{0})},
\end{equation*}\end{fleqn}
\begin{fleqn}\begin{equation*}\phantom{Z(\set{G})}
	= Z_{\mathsf{induced}} \cdot \sum_{(\vt,\tvt)\in\support(\hat{g})}
	\begin{aligned}[t]
	&\prod_{a\in\set{F}} \left[\sum_{\vs_\nb{a},\tvs_\nb{a}}b_a(\vs_\nb{a},\tvs_\nb{a})\cdot\prod_{j\in\nb{a}}\frac{\hat{\phi}_{j,a}(t_{j,a},\tit_{j,a};s_j,\ts_j)}{\hat{\phi}_{j,a}(0,0;s_j,\ts_j)}\right]\cdot\\
	&\prod_{j\in\set{V}} \left[\sum_{s_j,\ts_j}b_j(s_j,\ts_j)\cdot\prod_{a\in\nb{j}}\frac{\phi_{j,a}(s_j,\ts_j;t_{j,a},\tit_{j,a})}{\phi_{j,a}(s_j,\ts_j;0,0)}\right],
	\end{aligned}
\end{equation*}\end{fleqn}
\begin{fleqn}\begin{equation*}\phantom{Z(\set{G})}
	= Z_{\mathsf{induced}} \cdot \sum_{E\in\mathfrak{L}(\set{E})} \sum_{(\vt_E,\tvt_E)\neq\vect{0}}
	\begin{aligned}[t]
	&\prod_{a\in\set{F}} \left[\sum_{\vs_\nb{a},\tvs_\nb{a}}b_a(\vs_\nb{a},\tvs_\nb{a})\cdot\prod_{j\in\nb{a}:\:(j,a)\in E}\frac{\hat{\phi}_{j,a}(t_{j,a},\tit_{j,a};s_j,\ts_j)}{\hat{\phi}_{j,a}(0,0;s_j,\ts_j)}\right]\cdot\\
	&\prod_{j\in\set{V}} \left[\sum_{s_j,\ts_j}b_j(s_j,\ts_j)\cdot\prod_{a\in\nb{j}:\:(j,a)\in E}\frac{\phi_{j,a}(s_j,\ts_j;t_{j,a},\tit_{j,a})}{\phi_{j,a}(s_j,\ts_j;0,0)}\right],
	\end{aligned}
\end{equation*}\end{fleqn}
where
\begin{align}
b_a(\vs_\nb{a},\tvs_\nb{a}) &\defeq \frac{1}{Z_a}\cdot f_a(\vs_\nb{a},\tvs_\nb{a})\cdot \prod_{j\in\nb{a}} m_{j \to a} (s_j,\ts_j),\\
b_j(s_j\ts_j) &\defeq \frac{1}{Z_j}\cdot \prod_{a\in\nb{j}} m_{a\to j}(s_j,\ts_j).
\end{align}
Finally, by letting 
\begin{equation}\label{eq:lc:DeFG:K}
\begin{aligned}
\mathcal{K}(E)\defeq \sum_{(\vt_E,\tvt_E)\neq\vect{0}}
&\prod_{a\in\set{F}} \left[\sum_{\vs_\nb{a},\tvs_\nb{a}}b_a(\vs_\nb{a},\tvs_\nb{a})\cdot\prod_{j\in\nb{a}:\:(j,a)\in E}\frac{\hat{\phi}_{j,a}(t_{j,a},\tit_{j,a};s_j,\ts_j)}{\hat{\phi}_{j,a}(0,0;s_j,\ts_j)}\right]\cdot\\
&\prod_{j\in\set{V}} \left[\sum_{s_j,\ts_j}b_j(s_j,\ts_j)\cdot\prod_{a\in\nb{j}:\:(j,a)\in E}\frac{\phi_{j,a}(s_j,\ts_j;t_{j,a},\tit_{j,a})}{\phi_{j,a}(s_j,\ts_j;0,0)}\right],
\end{aligned}
\end{equation}
Eq.~\eqref{eq:DeFG:lc} can be justified.
\end{proof}
Unfortunately, even for the simplest case where $\set{T}_{j,a}$ is binary for each $(j,a)\in\set{E}$,~\eqref{eq:lc:DeFG:K} is pretty complicated.
However, the theorem does \emph{suggest} that $Z_{\mathsf{induced}}$ at a BP fixed point for DeFGs with a relatively smaller number of cycles tends to be closer to the partition sum.
\section{Numerical Examples}
In this section, we discuss various examples of DeFGs.
In particular, we compare the induced partition sum $Z_{\mathsf{induced}}$ with the exact partition sum $Z$.
(The DeFGs in this section have a modest size so that the exact partition sums are tractable.)
Moreover, for the first example, we also make some analytical statements.
\par
\begin{example}\label{numerical:ex:DeFG:1}
\begin{figure}
\begin{subfigure}[c]{0.4\columnwidth}\centering
\begin{tikzpicture}[nodes = {draw= none, minimum size = 0pt},
	factor/.style={rectangle, minimum size=1cm, draw},
	node distance=3cm]
	\node[factor] (f1) {}; \node[right=0pt of f1] {$f_1$};
	\node[factor, left of = f1] (f2) {}; \node[left=0pt of f2] {$f_2$};
	\node[factor, below of = f2] (f3) {}; \node[left=0pt of f3] {$f_3$};
	\node[factor, right of = f3] (f0) {}; \node[right=0pt of f0] {$f_0$};
	\path (f0) edge[double] node[left]{$s_1$} node[right]{$\ts_1$} (f1);
	\path (f1) edge[double] node[above]{$s_2$} node[below]{$\ts_2$} (f2);
	\path (f2) edge[double] node[left]{$s_3$} node[right]{$\ts_3$} (f3);
	\path (f3) edge[double] node[above]{$s_0$} node[below]{$\ts_0$} (f0);
\end{tikzpicture}
\caption{DeFG in Example~\ref{numerical:ex:DeFG:1}.}
\label{fig:numerical:ex:DeFG:1:a}
\end{subfigure}
\begin{subfigure}[c]{0.6\columnwidth}\centering
\begin{tikzpicture}[nodes = {draw=none, minimum size = 0pt, inner sep = 0pt, outer sep = 0pt}]
	\node (plot) {\includegraphics[width=6cm]{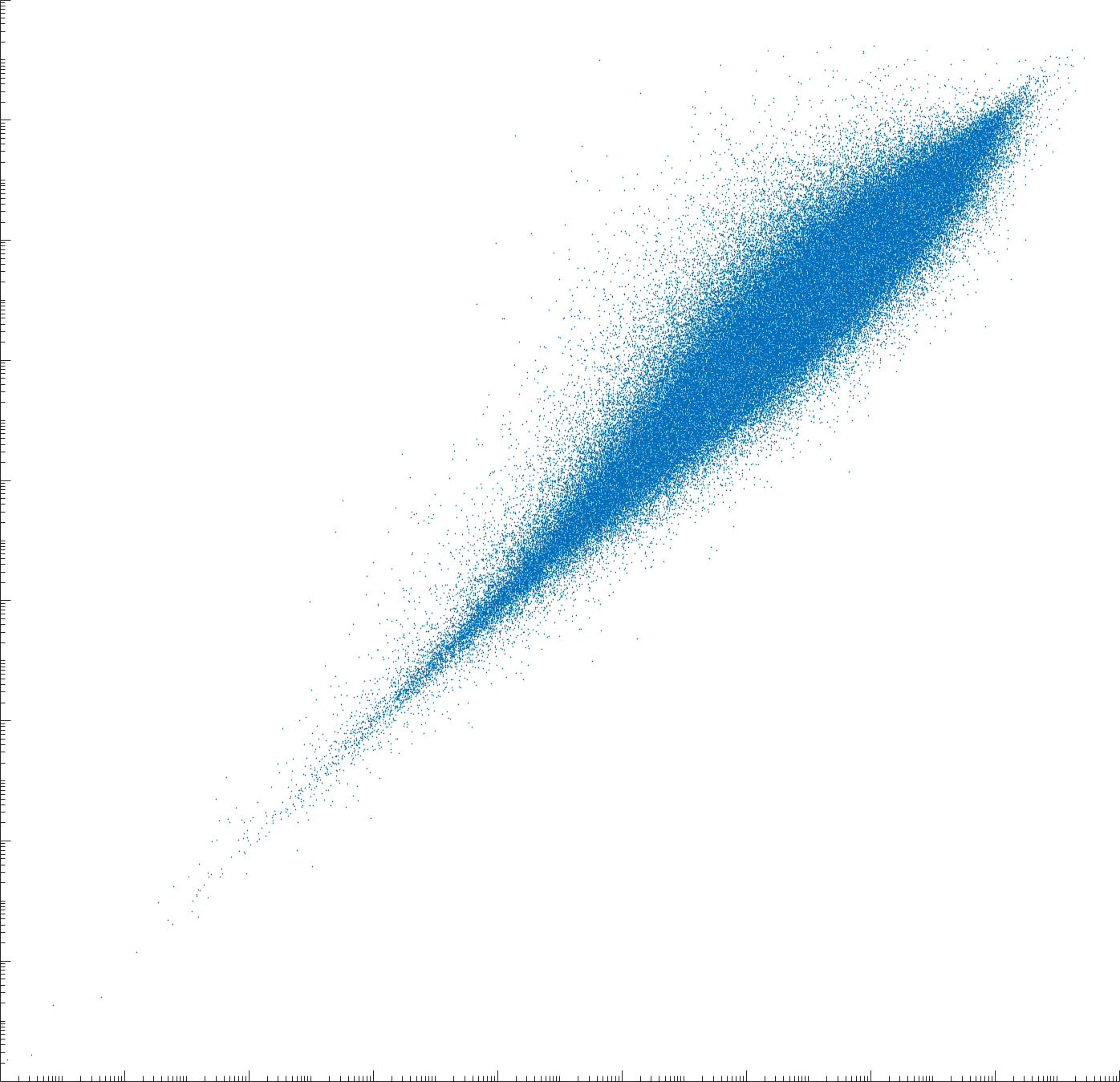}};
	\node (o) at (plot.south west) {};
	\node (xend) at (plot.south east) {};
	\node (yend) at (plot.north west) {};
	\path (o) edge[draw=none]
		node[anchor=north, yshift=-15pt] {$Z$}
		node[pos=0, anchor=north, yshift=-2pt, font=\scriptsize] {$10^{-12}$}
		node[pos=.3333, anchor=north, yshift=-2pt, font=\scriptsize] {$10^{-6}$}
		node[pos=.6667, anchor=north, yshift=-2pt, font=\scriptsize] {$10^{0}$}
		node[pos=1, anchor=north, yshift=-2pt, font=\scriptsize] {$10^{6}$}
	(xend);
	\path (o) edge[draw=none]
		node[anchor=south, yshift=25pt, sloped] {$Z_{\mathsf{induced}}$}
		node[pos=0, anchor=east, xshift=-2pt, font=\scriptsize] {$10^{-12}$}
		node[pos=.3333, anchor=east, xshift=-2pt, font=\scriptsize] {$10^{-6}$}
		node[pos=.6667, anchor=east, xshift=-2pt, font=\scriptsize] {$10^{0}$}
		node[pos=1, anchor=east, xshift=-2pt, font=\scriptsize] {$10^{6}$}
	(yend);
	\draw[red,dashed] (o) -- (plot.north east);
\end{tikzpicture}
\caption{Numerical demonstration in Example~\ref{numerical:ex:DeFG:1}.}
\label{fig:numerical:ex:DeFG:1:b}
\end{subfigure}
\caption{Plots for Example~\ref{numerical:ex:DeFG:1}.}
\label{fig:numerical:ex:DeFG:1}
\end{figure}
Consider a normal DeFG whose topology is an $n$-cycle ($n>1$) and where all variables take on values in the same finite alphabet $\set{S}$.
(Figure~\ref{fig:numerical:ex:DeFG:1:a} shows such a DeFG for $n=4$.)
Let $f$ be a complex-valued PD matrix of size $\size{\set{S}}^2\times\size{\set{S}}^2$ with its $((s_0,s_1),(\ts_0,\ts_1))$-th  entries being $f(s_0,s_1;\ts_0,\ts_1)$.
For $i\in\{0,1,\ldots,n-1\}$, we define the local function $f_i$ to be $f_i(s_i,s_{i+1};\ts_i,\ts_{i+1})\defeq f(s_i,s_{i+1};\ts_i,\ts_{i+1})$.
(All indices are modulo $n$.)
\par
To proceed, it is convenient to define the complex-valued matrix $F$ of size $\size{\set{S}}^2\times\size{\set{S}}^2$ with its $((s_0,\ts_0),(s_1,\ts_1))$-th entries being $f(s_0,s_1;\ts_0,\ts_1)$.
\par
For $i\in\{0,1,\ldots,n-1\}$, let $m^{(t)}_{i\to i+1}(s_{i+1},\ts_{i+1})$ and $m^{(t)}_{i\to i-1}(s_i,\ts_i)$ be the belief-propagation messages from $f_i$ to $f_{i+1}$ and $f_{i-1}$ at time index $t$, respectively.
Clearly,
\begin{align}
\label{eq:numerical:ex:DeFG:1:1}
m^{(t)}_{i\to i+1}(s_{i+1},\ts_{i+1}) &\propto \sum_{s_i,\ts_i} m^{(t-1)}_{i-1\to i}(s_i,\ts_i) \cdot F_{(s_i,\ts_i),(s_{i+1},\ts_{i+1})}, \\
\label{eq:numerical:ex:DeFG:1:2}
m^{(t)}_{i\to i-1}(s_i,\ts_i) &\propto \sum_{s_{i+1},\ts_{i+1}} F_{(s_i,\ts_i),(s_{i+1},\ts_{i+1})} \cdot m^{(t)}_{i+1\to i}(s_{i+1},\ts_{i+1}).
\end{align}
We assume $m^{(0)}_{i\to i+1}(s_{i+1},\ts_{i+1})\defeq\delta_{s_{i+1},\ts_{i+1}}$ and $m^{(0)}_{i\to i-1}(s_i,\ts_i)\defeq\delta_{s_i,\ts_i}$ for each $i\in\{0,1,\ldots,n-1\}$.
\par
Due to the properties of $F$ (which are induced from that of $f$), the mappings specified in~\eqref{eq:numerical:ex:DeFG:1:1} and~\eqref{eq:numerical:ex:DeFG:1:2} are completely positive.
Using generalizations of Perron--Frobenius theory (see~\cite{evans1977spectral, schrader2000perron}), one can make the following statements:
\begin{itemize}
	\item For each $i$, the messages $m^{(t)}_{i\to i+1}$ converge to a PD matrix as $t\to\infty$.
	\item For each $i$, the messages $m^{(t)}_{i\to i-1}$ converge to a PD matrix as $t\to\infty$.
	\item The eigenvalue of $F$ with the largest absolute value is real and is unique. Denote it by $\lambda_0$.
	\item When the messages converge, the induced partition sum $Z_{\mathsf{induced}} = \lambda_0^n$.
\end{itemize}
Compare this result with the exact partition sum, which is
\begin{equation}
Z = \sum_{k=1}^{\size{\set{S}}^2-1} \lambda_k^n = \lambda_0^n \cdot \left( 1 +\sum_{k=1}^{\size{\set{S}}^2-1} \left(\frac{\lambda_k}{\lambda_0}\right)^n\right),
\end{equation}
where $\lambda_0,\ldots,\lambda_{\size{\set{S}}^2-1}$ are the eigenvalues of $F$.
We see that the smaller the ratios $(\lambda_k/\lambda_0)^n$ for each $k$ are, the better the approximation is.
\par
For $n=4$ and $\size{\set{S}}=2$, Figure~\ref{fig:numerical:ex:DeFG:1:b} plots a million instances of the values of $Z$ and $Z_{\mathsf{induced}}$ \wrt the randomly generated matrices $F=U\cdot D\cdot U^\Herm$, where the unitary matrix $U$ is randomly generated according to the Haar measure and where $D$ is a diagonal matrix with each of its diagonal entries drawn from the square of the standard normal distribution in an {i.i.d.}~fashion.
We see that, very often, the ratio $Z_{\mathsf{induced}}/Z$ is rather close to $1$.
\end{example}
\begin{example}\label{numerical:ex:DeFG:2}
\begin{figure}
\begin{subfigure}[c]{0.4\columnwidth}\centering
\begin{tikzpicture}[nodes = {draw= none, minimum size = 0pt},
	factor/.style={rectangle, minimum size=1cm, draw},
	node distance=3cm]
	\node[factor] (f1) {}; \node[right=0pt of f1] {$f_1$};
	\node[factor, left of = f1] (f2) {}; \node[left=0pt of f2] {$f_2$};
	\node[factor, below of = f2] (f3) {}; \node[left=0pt of f3] {$f_3$};
	\node[factor, right of = f3] (f0) {}; \node[right=0pt of f0] {$f_0$};
	\path (f0) edge[double] node[left]{$s_1$} node[right]{$\ts_1$} (f1);
	\path (f1) edge[double] node[above]{$s_2$} node[below]{$\ts_2$} (f2);
	\path (f2) edge[double] node[left]{$s_3$} node[right]{$\ts_3$} (f3);
	\path (f3) edge[double] node[above]{$s_0$} node[below]{$\ts_0$} (f0);
	\path (f2) edge[double] node[above right=-5pt]{$s_4$} node[below left=-5pt]{$\ts_4$} (f0);
\end{tikzpicture}
\caption{DeFG in Example~\ref{numerical:ex:DeFG:2}.}
\label{fig:numerical:ex:DeFG:2:a}
\end{subfigure}
\begin{subfigure}[c]{0.6\columnwidth}\centering
\begin{tikzpicture}[nodes = {draw=none, minimum size = 0pt, inner sep = 0pt, outer sep = 0pt}]
	\node (plot) {\includegraphics[width=6cm]{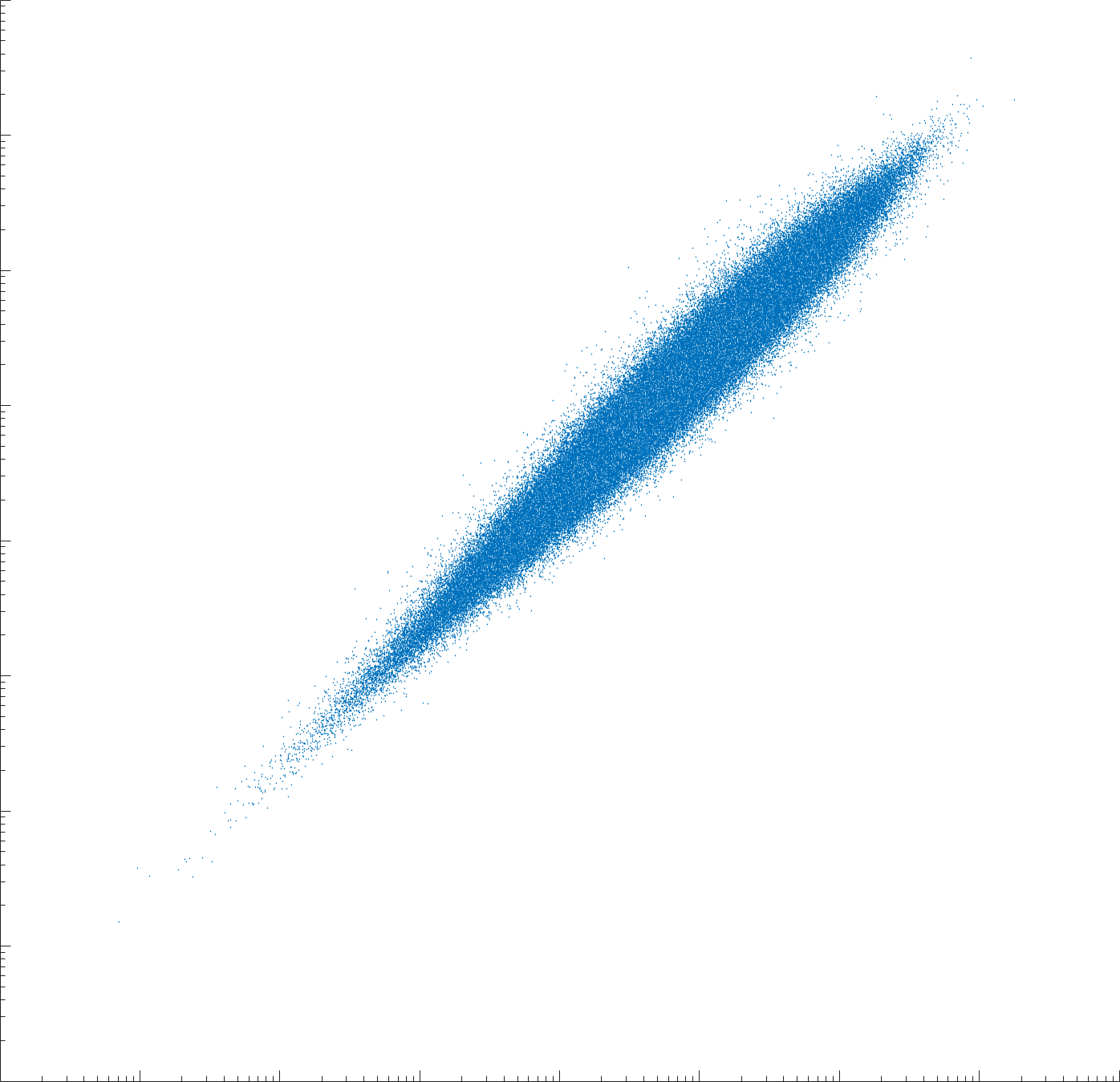}};
	\node (o) at (plot.south west) {};
	\node (xend) at (plot.south east) {};
	\node (yend) at (plot.north west) {};
	\path (o) edge[draw=none]
		node[anchor=north, yshift=-15pt] {$Z$}
		node[pos=0, anchor=north, yshift=-2pt, font=\scriptsize] {$10^{-4}$}
		node[pos=.5, anchor=north, yshift=-2pt, font=\scriptsize] {$10^{0}$}
		node[pos=1, anchor=north, yshift=-2pt, font=\scriptsize] {$10^{4}$}
	(xend);
	\path (o) edge[draw=none]
		node[anchor=south, yshift=25pt, sloped] {$Z_{\mathsf{induced}}$}
		node[pos=0, anchor=east, xshift=-2pt, font=\scriptsize] {$10^{-4}$}
		node[pos=.5, anchor=east, xshift=-2pt, font=\scriptsize] {$10^{0}$}
		node[pos=1, anchor=east, xshift=-2pt, font=\scriptsize] {$10^{4}$}
	(yend);
	\draw[red,dashed] (o) -- (plot.north east);
\end{tikzpicture}
\caption{Numerical demonstration in Example~\ref{numerical:ex:DeFG:2}.}
\label{fig:numerical:ex:DeFG:2:b}
\end{subfigure}
\caption{Plots for Example~\ref{numerical:ex:DeFG:2}.}
\label{fig:numerical:ex:DeFG:2}
\end{figure}
Consider the DeFG in Figure~\ref{fig:numerical:ex:DeFG:2:a}.
For $\size{\set{S}}=2$, Figure~\ref{fig:numerical:ex:DeFG:2:b} plots a million instances of the values of $Z$ and $Z_{\mathsf{induced}}$ \wrt randomly generated local functions.
In contrast to Example~\ref{numerical:ex:DeFG:1}, where all local functions were the same for every instantiation, in this example, all local functions are generated independently.
We observe that the ratio $Z_{\mathsf{induced}}/Z$ is reasonably close to~$1$ but typically larger than~$1$.
\end{example}
\begin{example}\label{numerical:ex:DeFG:3}
\begin{figure}
\begin{minipage}{0.4\columnwidth}
\begin{subfigure}[c]{\linewidth}\centering
\begin{tikzpicture}[nodes = {draw= none, minimum size = 0pt},
	factor/.style={rectangle, minimum size=.5cm, draw}]
	\node[factor] (f) {}; \node at (f) {$f_{i,j}$};
	\node[factor, below left = .2cm and 1.2cm of f] (fL) {};
	\node[left=0pt of fL] {$f_i^L$};
	\node[yshift=-.9cm] (L1) at (f.west) {};
	\node[yshift=-1.5cm] (L2) at (f.west) {};
	\node[factor, above right = .2cm and 1.5cm of f] (fR) {};
	\node[right=0pt of fR] {$f_j^R$};
	\node[yshift=.9cm] (R1) at (f.east) {};
	\node[yshift=1.5cm] (R2) at (f.east) {};
	\path (fL) edge node[above]{$x_{i,j}^{L}$} (f);
	\path (fL) edge[draw=none] coordinate[midway] (Lm1) (L1);
	\draw (fL) -- (Lm1); \draw[dashed] (Lm1) -- (L1);
	\path (fL) edge[draw=none] coordinate[midway] (Lm2) (L2);
	\draw (fL) -- (Lm2); \draw[dashed] (Lm2) -- (L2);
	\path (fR) edge node[below]{$x_{i,j}^{R}$} (f);
	\path (fR) edge[draw=none] coordinate[midway] (Rm1) (R1);
	\draw (fR) -- (Rm1); \draw[dashed] (Rm1) -- (R1);
	\path (fR) edge[draw=none] coordinate[midway] (Rm2) (R2);
	\draw (fR) -- (Rm2); \draw[dashed] (Rm2) -- (R2);
\end{tikzpicture}
\caption{FG in Example~\ref{numerical:ex:DeFG:3}.}
\label{fig:numerical:ex:DeFG:3:a}
\end{subfigure}
\begin{subfigure}[c]{\linewidth}\centering
\begin{tikzpicture}[nodes = {draw= none, minimum size = 0pt},
	factor/.style={rectangle, minimum size=.5cm, draw}]
	\node[factor] (f) {}; \node at (f) {$f_{i,j}$};
	\node[factor, below left = .2cm and 1.2cm of f] (fL) {};
	\node[left=0pt of fL] {$f_i^L$};
	\node[yshift=-.9cm] (L1) at (f.west) {};
	\node[yshift=-1.5cm] (L2) at (f.west) {};
	\node[factor, above right = .2cm and 1.5cm of f] (fR) {};
	\node[right=0pt of fR] {$f_j^R$};
	\node[yshift=.9cm] (R1) at (f.east) {};
	\node[yshift=1.5cm] (R2) at (f.east) {};
	\path (fL) edge[draw=none] coordinate[midway] (Lm1) (L1);
	\draw[double] (fL) -- (Lm1); \draw[dashed,double] (Lm1) -- (L1);
	\path (fL) edge[draw=none] coordinate[midway] (Lm2) (L2);
	\draw[double] (fL) -- (Lm2); \draw[dashed,double] (Lm2) -- (L2);
	\path (fR) edge[draw=none] coordinate[midway] (Rm1) (R1);
	\draw[double] (fR) -- (Rm1); \draw[dashed,double] (Rm1) -- (R1);
	\path (fR) edge[draw=none] coordinate[midway] (Rm2) (R2);
	\draw[double] (fR) -- (Rm2); \draw[dashed,double] (Rm2) -- (R2);
	\path (fL) edge[double] node[above,pos=.2]{$s_{i,j}^{L}$}
		node[below,pos=.8]{$\ts_{i,j}^{L}$} (f);
	\path (fR) edge[double] node[above,pos=.8]{$s_{i,j}^{R}$}
		node[below,pos=.2]{$\ts_{i,j}^{R}$} (f);
\end{tikzpicture}
\caption{DeFG in Example~\ref{numerical:ex:DeFG:3}.}
\label{fig:numerical:ex:DeFG:3:b}
\end{subfigure}
\end{minipage}
\begin{minipage}{0.6\columnwidth}
\begin{subfigure}[c]{\linewidth}\centering
\begin{tikzpicture}[nodes = {draw=none, minimum size = 0pt, inner sep = 0pt, outer sep = 0pt}]
	\node (plot) {\includegraphics[width=6cm]{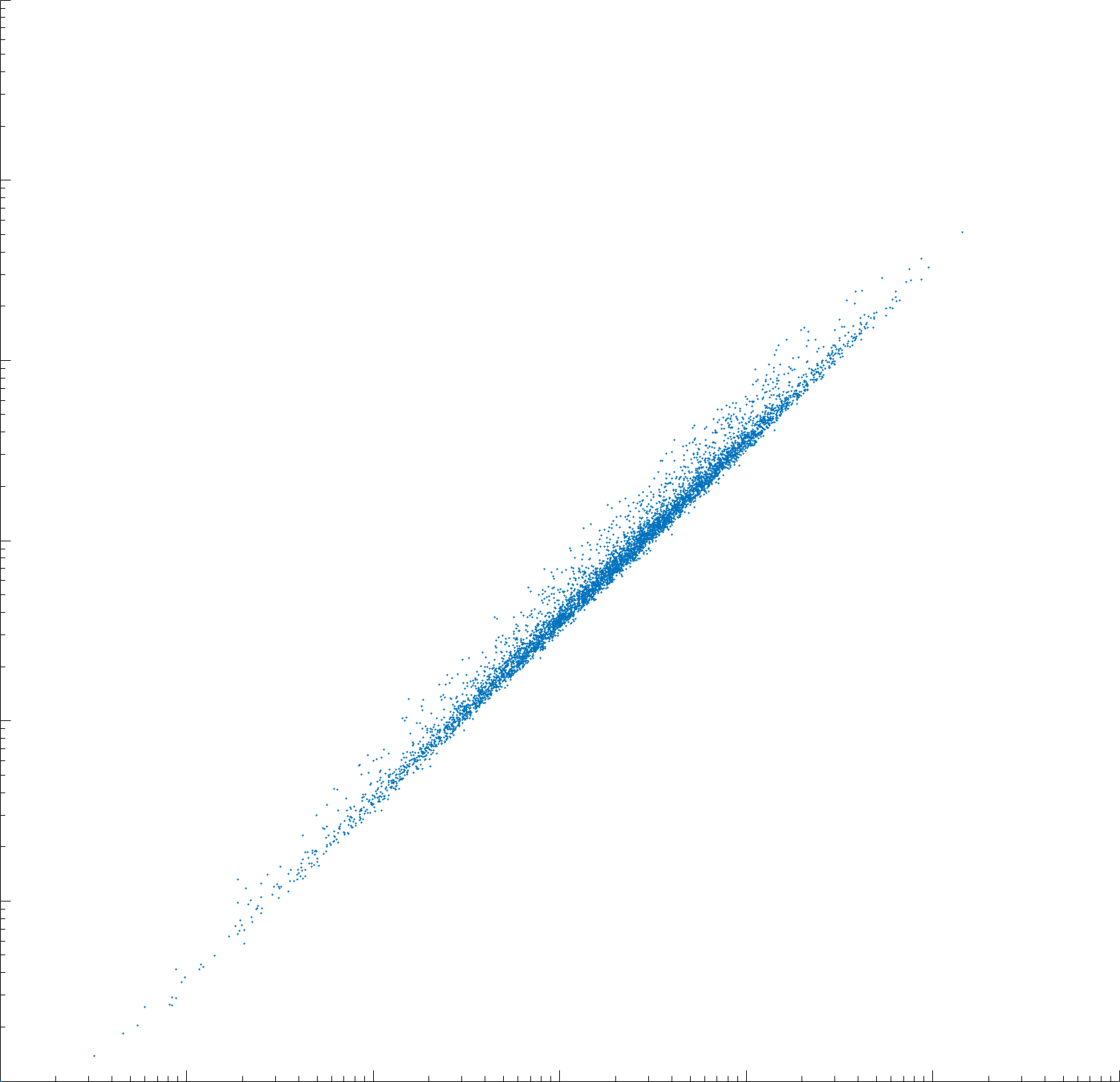}};
	\node (o) at (plot.south west) {};
	\node (xend) at (plot.south east) {};
	\node (yend) at (plot.north west) {};
	\path (o) edge[draw=none]
		node[anchor=north, yshift=-15pt] {$Z$}
		node[pos=0, anchor=north, yshift=-2pt, font=\scriptsize] {$10^{0}$}
		node[pos=.3333, anchor=north, yshift=-2pt, font=\scriptsize] {$10^{2}$}
		node[pos=.6667, anchor=north, yshift=-2pt, font=\scriptsize] {$10^{4}$}
		node[pos=1, anchor=north, yshift=-2pt, font=\scriptsize] {$10^{6}$}
	(xend);
	\path (o) edge[draw=none]
		node[anchor=south, yshift=25pt, sloped] {$Z_{\mathsf{induced}}$}
		node[pos=0, anchor=east, xshift=-2pt, font=\scriptsize] {$10^{0}$}
		node[pos=.3333, anchor=east, xshift=-2pt, font=\scriptsize] {$10^{2}$}
		node[pos=.6667, anchor=east, xshift=-2pt, font=\scriptsize] {$10^{4}$}
		node[pos=1, anchor=east, xshift=-2pt, font=\scriptsize] {$10^{6}$}
	(yend);
	\draw[red,dashed] (o) -- (plot.north east);
\end{tikzpicture}
\caption{Numerical demonstration in Example~\ref{numerical:ex:DeFG:3}.}
\label{fig:numerical:ex:DeFG:3:c}
\end{subfigure}
\end{minipage}
\caption{Plots for Example~\ref{numerical:ex:DeFG:3}.}
\label{fig:numerical:ex:DeFG:3}
\end{figure}
Let $\boldsymbol{\theta}$ be a complex-valued matrix of size $n\times n$ with its $(i,j)$-th entries being $\theta_{i,j}$.
The permanent~\cite{minc1984permanents} of $\boldsymbol{\theta}$ is defined to be $\perm(\boldsymbol{\theta})\defeq\sum_{\sigma}\prod_{i=1}^n \theta_{i,\sigma(i)}$, where the summation is over all $n!$ permutations of the set $\{1,\ldots,n\}$.
Ryser's algorithm, one of the most efficient algorithms for computing $\perm(\boldsymbol{\theta})$ for general matrices $\boldsymbol{\theta}$ exactly, requires $\Theta(n\cdot 2^n)$ arithmetic operations~\cite{ryser1963combinatorial}, and so the exact computation of the permanent is intractable, even for moderate values of $n$.
(Note that even the computation of the permanent of matrices containing only zeros and ones is \#P-complete~\cite{valiant1979complexity}.)
\par
One can formulate an NFG whose partition sum equals $\perm(\boldsymbol{\theta})$ (see, \eg,~\cite[Figure~1]{vontobel2013bethe}).
That factor graph is a complete bipartite graph with $n$ factor vertices on the left and $n$ factor vertices on the right.
Here, Figure~\ref{fig:numerical:ex:DeFG:3:a} shows a slightly modified version of that NFG.
All variables take values in the set $\set{X}=\{0,1\}$.
Moreover, for each $i$, the function $f_i^L$ is defined to be
\begin{equation}
f_i^L(\{x_{i,j}^L\}_{j=1}^{n}) \defeq \begin{cases}
 	1 & \text{exactly one of }\{x_{i,j}^L\}_{j=1}^{n}\text{ equals }1 \\
 	0 & \text{otherwise}\end{cases}
\end{equation}
for each $j$, the function $f_j^R$ is defined analogously; and for each $(i,j)$, the function $f_{i,j}$ is defined to be
\begin{equation}
f_{i,j}(x_{i,j}^L,x_{i,j}^R) \defeq \delta_{x_{i,j}^L,x_{i,j}^R} \cdot
	\begin{cases}
 	\theta_{i,j} & \text{ if }x_{i,j}^L=1\\
 	1 & \text{ if }x_{i,j}^L=0
	\end{cases}.
\end{equation}
\par
In this example, we consider the following rather natural generalization to the DeFG in Figure~\ref{fig:numerical:ex:DeFG:3:b}.
Assume that for each $(i,j)$, $\tilde{\boldsymbol{\theta}}_{i,j}$ is a PSD matrix of size $2\times 2$ with its $(k,\ell)$-th entries being $\tilde{\theta}_{i,j}(k,\ell)$.
With this, for each $i$, the function $\tilde{f}_i^L$ is defined to be
\begin{equation}
\tilde{f}_i^L(\{s_{i,j}^L,\ts_{i,j}^L\}_{j=1}^{n}) \defeq 
f_i^L(\{s_{i,j}^L\}_{j=1}^{n})\cdot f_i^L(\{\ts_{i,j}^L\}_{j=1}^{n})
\end{equation}
for each $j$, the function $\tilde{f}_j^R$ is defined analogously; and for each $(i,j)$, the function $\tilde{f}_{i,j}$ is defined to be
\begin{equation}
\tilde{f}_{i,j}(s_{i,j}^L,s_{i,j}^R;\ts_{i,j}^L,\ts_{i,j}^R)\defeq \delta_{s_{i,j}^L,s_{i,j}^R}\cdot\delta_{\ts_{i,j}^L,\ts_{i,j}^R} \cdot\tilde{\theta}_{i,j}(s_{i,j}^L,\ts_{i,j}^L).
\end{equation}
(One can easily verify that these local functions satisfy the requirement of a DeFG.)
Finally, let $\tilde{Z}$ be the partition sum of this DeFG.
\par
This DeFG definition has the following two important special cases:
\begin{itemize}
	\item If $\tilde{\boldsymbol{\theta}}_{i,j}=\left(\begin{smallmatrix} 1 & 0 \\ 0 & \theta_{i,j} \end{smallmatrix}\right)$ for each $(i,j)$, then $\tilde{Z} = \perm(\boldsymbol{\theta})$.
	\item If $\tilde{\boldsymbol{\theta}}_{i,j}=\left(\begin{smallmatrix} 1 \\ \theta_{i,j} \end{smallmatrix}\right) \cdot \left(\begin{smallmatrix} 1 & \overline{\theta_{i,j}} \end{smallmatrix}\right) = \left(\begin{smallmatrix} 1 & \overline{\theta_{i,j}} \\ \theta_{i,j} & \abs{\theta_{i,j}}^2 \end{smallmatrix}\right)$ for each $(i,j)$, then $\tilde{Z} = \perm(\boldsymbol{\theta}) \cdot \perm(\conj{\boldsymbol{\theta}}) = \abs{\perm(\boldsymbol{\theta})}^2$, where $\conj{\boldsymbol{\theta}}$ denotes the matrix whose entries are the complex-conjugate values of the entries of $\boldsymbol{\theta}$.
		(Note that such partition sums are of interest in quantum information processing~\cite{aaronson2013computational}, where $\boldsymbol{\theta}$ are certain types of square matrices over the complex numbers.
		We refer to\cite{aaronson2013computational} for details.)
\end{itemize}
\par
In our simulations, we considered the following setup.
Namely, for every $(i,j)\in\{1,\ldots,n\}^2$, we independently generate $\tilde{\boldsymbol{\theta}}_{i,j}$ as follows:
$\tilde{\theta}_{i,j}(0,0)\defeq 1$;
$\tilde{\theta}_{i,j}(1,0)$ is picked uniformly from the unit circle in the complex plane;
$\tilde{\theta}_{i,j}(0,1) \defeq \conj{\tilde{\theta}_{i,j}(1,0)}$;
$\tilde{\theta}_{i,j}(1,1)$ is picked uniformly (and independently of the other entries) from the real line interval $[1.10, 11.10]$.
Figure~\ref{fig:numerical:ex:DeFG:3:c} plots $5000$ instances of the values of 
  $Z_{\mathsf{induced}}$ and $Z$ for the case when $n$ is $5$.
We observe that the ratio $Z_{\mathsf{induced}}/ Z$ is concentrated around a value smaller than $1$.
\end{example}
\chapter{Quantum Factor Graphs}\label{chapter:QFGs}
In this chapter, we consider a generalization of factor graphs known as quantum factor graphs\index{quantum factor graph} (QFGs)~\cite{Leifer08}. 
This graphical model is a direct generalization of the bifactor networks proposed in the same paper, in which they considered generalized ``factorizations'' of positive operators like
\begin{equation}\label{eq:QFG:global}
\rho \defeq \bigstar_{a \in \set{F}}\rho_a = \exp\left(\sum_{a\in\set{F}}\log{\rho_a}\right),
\end{equation}
where, for each $a\in\set{F}$, $\rho_a$ is some positive operators and where $\star$ is some associative and commutative binary operator on PSD matrices (see~\eqref{eq:def:star} and~\eqref{eq:def:star:2}).
In this case, the generalized partition sum is defined to be
\begin{equation}\label{eq:QFG:partition}
Z(\set{G})\defeq\tr(\rho) = \tr\left(\exp\left(\sum_{a\in\set{F}}\log{\rho_a}\right)\right).
\end{equation}
Notice that the (classical) factor graphs are a special case of QFGs where all the involved local operators $\{\rho_a\}_{a\in\set{F}}$ are diagonal.
In~\cite{Leifer08}, Leifer and Poulin also proposed and studied a generalized belief-propagation algorithm for bifactor networks.

In our study, we are interested in the scenarios without commutativity constraints as those in~\cite{Leifer08}.
In such cases, many vital expressions that hold precisely for factor graphs only hold approximately for QFGs, provided the local operators are chosen randomly according to some distributions.
We give some analytical and numerical characterizations of these approximations.
In particular, we study how the ``closing-the-box'' operations and Bethe's approximation can be generalized to QFGs in an approximate manner.
\section{The $\star$-Product and Quantum Factor Graphs (QFGs)}
This section reviews the definition of the $\star$-product and quantum factor graphs.
We discuss various properties of the $\star$-product, particularly the (lack of) distributivity of $\star$ over (partial) trace operations.
\subsection{Definition of the $\star$-product and QFGs}
As already mentioned in~\eqref{eq:QFG:global}, for QFGs, we consider the factorization of some positive operator into the $\star$-product~\cite{warmuth2005Bayes},\footnote{Note that in~\cite{warmuth2005Bayes} this operation is denoted by $\odot$. However, in this thesis, $\odot$ denotes the Hadamard product.} of some local operators.
Given (strictly) PD operators $\rho,\sigma\in\StPositiveOp(\hilbert)$, their $\star$-product is defined to be 
\begin{equation}\label{eq:def:star}\index{$\star$-product}
 	\rho\star\sigma\defeq \exp(\log{\rho}+\log{\sigma}),
\end{equation}
where $\exp$ and $\log$ stand for matrix exponential and logarithm, respectively.
For PSD operators, we consider the Lie product formula\index{Lie product formula}~\cite{bhatia2013Matrix} and rewrite~\eqref{eq:def:star} as
\begin{equation}\label{eq:def:star:2}
  \rho\star\sigma = \lim_{n\to \infty} \left(\rho^{\frac{1}{n}}\sigma^{\frac{1}{n}}\right)^n.
\end{equation}
Using a continuity argument, one can generalize the $\star$-product to PSD operators $\rho,\sigma\in \PositiveOp(\hilbert)$.
(For the proof of the convergence of the \RHS of~\eqref{eq:def:star:2} for PSD operators $\rho$ and $\sigma$, see, \eg,~\cite{bhatia2013Matrix} and~\cite[Theorem 1.2]{simon2005functional}.)
The $\star$-product is associative and commutative, \ie, 
\begin{align}
  \rho_1 \star \rho_2 &= \rho_2 \star \rho_1 
  &&\forall \rho_1, \rho_2 \in \PositiveOp(\hilbert),\\
  (\rho_1 \star \rho_2) \star \rho_3 &= \rho_1 \star (\rho_2 \star \rho_3) 
  &&\forall \rho_1, \rho_2, \rho_3 \in \PositiveOp(\hilbert).
\end{align}
\par
As a convention, given $\rho$ and $\sigma$ acting on different Hilbert spaces, we treat the expression $\rho\star\sigma$ as the $\star$-product of the \emph{embedded} operators of $\rho$ and $\sigma$ on some smallest \emph{common} Hilbert space.
For example, given $\rho_\system{AB}\in\PositiveOp(\hilbert_\system{A}\tensor\hilbert_\system{B})$ and $\rho_\system{BC}\in\PositiveOp(\hilbert_\system{B}\tensor\hilbert_\system{C})$, the convention
\begin{equation}\label{eq:blowupConvention}
  \rho_\system{AB} \star \rho_\system{BC} \defeq (\rho_\system{AB} \tensor I_\system{C}) \star (I_\system{A} \tensor \rho_\system{BC})
\end{equation}
holds, where $I_\system{C}$ and $I_\system{A}$ are the identity operators on $\hilbert_\system{A}$ and $\hilbert_\system{C}$, respectively;
and $\rho_\system{AB} \star \rho_\system{BC}$ is an operator acting on $\hilbert_\system{A} \tensor \hilbert_\system{B} \tensor \hilbert_\system{C}$.
This convention also applies to the expression $\bigstar_{a \in \set{F}}\rho_a$, where some equations similar to~\eqref{eq:blowupConvention} are applied recursively.
As a reminder, note that $\log(\rho\tensor I)\equiv\log{\rho}\tensor I$.
\begin{definition}[Quantum Factor Graph~\cite{Leifer08}] \label{def:QFG} \index{quantum factor graph}
A quantum factor graph (QFG) is a bipartite graph $\set{G}=(\set{V},\set{F},\set{E}\in\set{V}\times\set{F})$ associated with a variable set $\mathfrak{V}$ and a factor set $\mathfrak{F}$, where
\begin{itemize}
	\item $\mathfrak{V}=\{\hilbert_i\}_{i\in\set{V}}$ is indexed by $\set{V}$, and each element of $\mathfrak{V}$ is a Hilbert space;
	\item $\mathfrak{F}=\{\rho_a\}_{a\in\set{F}}$ is indexed by $\set{F}$, and $\rho_a \in \PositiveOp(\Tensor_{i\in\nb{a}}\hilbert_i)$ for each $a\in\set{F}$.
\end{itemize}
The operator $\rho \defeq \bigstar_{a \in \set{F}}\rho_a$ is called the \emph{global operator} of $\set{G}$, and in this case, $\set{G}$ is also said to be representing the factorization $\rho = \bigstar_{a \in \set{F}}\rho_a$.
Similar to factor graphs, a QFG is said to be \emph{normal} if the degree of any vertex in $\set{V}$ is at most 2.
\end{definition}
Note that the global operator $\rho$ is always a PSD operator on $\Tensor_{i\in \set{V}} \hilbert_i$ (i.e., $\rho\in\PositiveOp{\Tensor_{i\in \set{V}} \hilbert_i}$).
Moreover, if all the local operators are \emph{strictly} positive, \ie, $\rho_a \in \StPositiveOp(\hilbert_a)$ for each $a$, so is $\rho$.
\begin{remark}
Similar to NFGs, in a normal QFG, we redraw the vertices in $\set{V}$ as edges (see \RHS of Figure~\ref{fig:QFG:1} and Figure~\ref{fig:QFG:2}).
\end{remark}
\begin{example}\label{ex:QFG:1}
\begin{figure}\centering
\begin{subfigure}[c]{0.49\textwidth}\centering
\begin{tikzpicture}[node distance=2.2cm,
	var/.style={circle, minimum size=.5cm, draw},
	factor/.style={rectangle, minimum size=.7cm, draw}]
	\node[var] (A) {}; \node at (A) {$\system{A}$};
	\node[factor, left=1cm of A] (rho) {$\rho_\system{A}$};
	\node[factor, above right = .8660cm and .5cm of A] (sigma) {$\sigma_\system{A}$};
	\node[factor, below right = .8660cm and .5cm of A] (tau) {$\tau_\system{A}$};
	\draw[line width=2pt] (A) -- (rho);
	\draw[line width=2pt] (A) -- (sigma);
	\draw[line width=2pt] (A) -- (tau);
\end{tikzpicture}
\end{subfigure}
\begin{subfigure}[c]{0.49\textwidth}\centering
\begin{tikzpicture}[node distance=2.2cm,
	var/.style={circle, minimum size=.5cm, draw},
	factor/.style={rectangle, minimum size=.7cm, draw}]
	\node[factor] (pi) {$\pi$};
	\node[factor, left=1cm of pi] (rho) {$\rho_\system{A}$};
	\node[factor, above right = .8660cm and .5cm of pi] (sigma) {$\sigma_\system{B}$};
	\node[factor, below right = .8660cm and .5cm of pi] (tau) {$\tau_\system{C}$};
	\path (pi) edge[line width=2pt] node[above]{$\system{A}$} (rho);
	\path (pi) edge[line width=2pt] node[below, anchor=north west]{$\system{B}$} (sigma);
	\path (pi) edge[line width=2pt] node[below, anchor=north east]{$\system{C}$} (tau);
\end{tikzpicture}
\end{subfigure}
\caption{QFG in Example~\ref{ex:QFG:1}.}
\label{fig:QFG:1}
\end{figure}
The \LHS of Figure~\ref{fig:QFG:1} depicts a QFG with a single variable vertex and three factor vertices.
Here, all local factors $\rho_\system{A}$, $\sigma_\system{A}$, and $\tau_\system{A}$ act on the same Hilbert space $\hilbert_\system{A}$, and the global operator is given by $\rho_\set{G}\defeq\rho_\system{A}\star\sigma_\system{A}\star\tau_\system{A}$.
On the \RHS of Figure~\ref{fig:QFG:1}, we have a normal QFG, which is equivalent to the QFG on the left.
Here, the new-introduced variable vertices $\system{B}$ and $\system{C}$ are associated with the Hilbert spaces $\hilbert_\system{B}$ and $\hilbert_\system{C}$, each being of the same dimension as that of $\hilbert_\system{A}$.
The local operators $\sigma_\system{B}$ and $\tau_\system{C}$ are the same as $\sigma_\system{A}$ and $\tau_\system{A}$, except that they act on $\hilbert_\system{B}$ and $\hilbert_\system{C}$, respectively.
The new-introduced operator $\pi$ acts on $\hilbert_\system{A}\tensor\hilbert_\system{B}\tensor\hilbert_\system{C}$ and is defined as
\begin{equation}\label{eq:QFG:convert:normal}
\pi\defeq\exp\left(\log{\sigma_\system{A}}\!\tensor\! I_\system{BC} + \log{\tau_\system{A}}\!\tensor\! I_\system{BC} - I_\system{A}\!\tensor\!\log{\sigma_\system{B}}\!\tensor\! I_\system{C} - I_\system{AB}\!\tensor\!\log{\tau_\system{C}}\right).
\end{equation}
In this case, the global operator of the QFG on the \RHS in Figure~\ref{fig:QFG:1} can be expressed as
\begin{align*}
\tilde{\rho} &\defeq \rho_\system{A} \star \sigma_\system{B} \star \tau_\system{C}\star \pi\\
&=\exp\left(\log{\rho_\system{A}}\tensor I_\system{BC} + \log{\sigma_\system{A}}\tensor I_\system{BC} + \log{\tau_\system{A}}\tensor I_\system{BC}\right)\\
&=\rho_\set{G}\tensor I_\system{BC}. \qedhere
\end{align*}
\end{example}
\begin{remark}
As illustrated in Example~\ref{ex:QFG:1}, any QFG can be converted into an equivalent normal QFG by introducing additional local operators.
However, in contrast to factor graphs or DeFGs, the additional local operator introduced depends on the original local operators (see, \eg,~\eqref{eq:QFG:convert:normal}).
In the remaining part of this chapter, we will focus solely on normal QFGs.
\end{remark}
\par
\subsection{Approximated Distributivity of the $\star$-Product over (Partial) Trace}
For a factor graph or a DeFG, the ``closing-the-box'' operation is a handy visualization of the distributive law of addition over multiplication.
However, for a QFG, the $\star$-product does not distribute over the partial trace operations in general, as illustrated in the example below.
\begin{example}\label{ex:QFG:2}
\begin{figure}\centering
\begin{subfigure}[c]{0.49\textwidth}\centering
\begin{tikzpicture}[node distance=2.2cm,
	var/.style={circle, minimum size=.5cm, draw},
	factor/.style={rectangle, minimum size=.7cm, draw}]
	\node[factor] (a) {$\rho_a$};
	\node[factor] (b)[right of=a] {$\rho_b$};
	\node[factor] (c)[right of=b] {$\rho_c$};
	\path (a) edge[draw=none] node[var] (A) {} (b); \node at (A) {$\system{A}$};
	\path (b) edge[draw=none] node[var] (B) {} (c); \node at (B) {$\system{B}$};
	\draw[line width=2pt] (a) -- (A) -- (b) -- (B) -- (c);
	\draw[dashed] ([xshift=-.5cm,yshift=.5cm]b) rectangle ([xshift=.5cm,yshift=-.5cm]c);
	\draw[dashed] ([xshift=-.5cm,yshift=.7cm]a) rectangle ([xshift=.7cm,yshift=-.7cm]c);
\end{tikzpicture}
\end{subfigure}
\begin{subfigure}[c]{0.49\textwidth}\centering
\begin{tikzpicture}[node distance=2.2cm,
	factor/.style={rectangle, minimum size=.7cm, draw}]
	\node[factor] (a) {$\rho_a$};
	\node[factor] (b)[right of=a] {$\rho_b$};
	\node[factor] (c)[right of=b] {$\rho_c$};
	\draw[line width=2pt] (a) -- (b) -- (c);
	\draw[dashed] ([xshift=-.5cm,yshift=.5cm]b) rectangle ([xshift=.5cm,yshift=-.5cm]c);
	\draw[dashed] ([xshift=-.5cm,yshift=.7cm]a) rectangle ([xshift=.7cm,yshift=-.7cm]c);
\end{tikzpicture}
\end{subfigure}
\caption{QFG in Example~\ref{ex:QFG:2}.}
\label{fig:QFG:2}
\end{figure}
Consider the QFG in Figure~\ref{fig:QFG:1} with the variable vertices $\set{V} =\{\system{A},\system{B}\}$, the factor vertices $\set{F} = \{a,b,c\}$, and the local operators $\rho_a\in\PositiveOp(\hilbert_\system{A})$, $\rho_b\in \PositiveOp(\hilbert_\system{A}\tensor\hilbert_\system{B})$, and $\rho_c\in\PositiveOp(\hilbert_\system{B})$.
This QFG represents the factorization $\rho=\rho_a\star\rho_b\star\rho_c$, and the partition sum is $Z = \tr(\rho_a\star\rho_b\star\rho_c)$.
However, in general, we cannot compute $Z$ \emph{in steps}, \ie, 
\begin{equation}\label{eq:notwork}
\tr(\rho_a\star\rho_b\star\rho_c)
=\tr_\system{A}\left(\tr_\system{B}(\rho_a\star\rho_b\star\rho_c)\right)
\not\equiv \tr_\system{A}\left(\rho_a\star\tr_\system{B}(\rho_b\star\rho_c)\right).
\end{equation}
In particular, let $\rho_a=\frac{1}{2}\left[\begin{smallmatrix}+1 & -1\\ -1 & +1 \end{smallmatrix}\right]$, $\rho_b=\diag([0,1,1,0])$, and $\rho_c=\left[\begin{smallmatrix} 1 & 0 \\ 0 & 1 \end{smallmatrix}\right]$, then $\tr_\system{A}(\tr_\system{B}(\rho_a\star\rho_b\star\rho_c))=0$, but $\tr_\system{A}(\rho_a\star\tr_\system{B}(\rho_b\star\rho_c))=1$.
(Here, we have used~\eqref{eq:def:star:2}.)
\end{example}
\par
However, there do exist situations where $\tr_\system{A}(\rho_a\star\tr_\system{B}(\rho_b\star\rho_c))$ \emph{approximates} $\tr_\system{A}(\tr_\system{B}(\rho_a\star\rho_b\star\rho_c))$ reasonably well.
This subsection aims to understand when this happens so that an approximated notion of the ``closing-the-box'' operations can be salvaged.
\begin{lemma}\label{lem:partialTraceBound}
Given $\rho_a\in\StPositiveOp(\hilbert_\system{A})$, $\rho_b\in \StPositiveOp(\hilbert_\system{A}\tensor\hilbert_\system{B})$, and $\rho_c\in\PositiveOp(\hilbert_\system{B})$, we have
\begin{equation}\label{eq:partialTraceBound}
  S\left(\kappa(\rho_a)\right)^{-1} \leqslant
  \frac{\tr_\system{A}(\rho_a\star\tr_\system{B}(\rho_b\star\rho_c))}
	   {\tr(\rho_a\star\rho_b\star\rho_c)}
  \leqslant S\left(\kappa(\rho_a)\right),
\end{equation}
where $\kappa(\rho_a)\geqslant 1$ is the condition number of the operator $\rho_a$, and $S(\cdot)$ is the Specht ratio function defined as 
\begin{equation} \label{eq:Specht} \index{Specht ratio function}
  S(r) \defeq \frac{(r-1)\cdot r^{\frac{1}{r-1}}}{e\cdot \log{r}} \quad \forall r>0.
\end{equation}
\end{lemma}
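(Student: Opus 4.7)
The plan is to reduce the claim to a combination of the Golden--Thompson inequality and its Specht-ratio-type reverse, exploiting the fact that the partial trace does distribute exactly over the \emph{ordinary} operator product (even though it fails to do so over $\star$). First I would set $\sigma \defeq \rho_b \star \rho_c \in \PositiveOp(\hilbert_\system{A}\tensor\hilbert_\system{B})$ and, using the embedding convention of~\eqref{eq:blowupConvention}, identify $\rho_a$ with $\rho_a\tensor I_\system{B}$ on $\hilbert_\system{A}\tensor\hilbert_\system{B}$. A one-line computation using $\tr\bigl((A\tensor I_\system{B})M\bigr)=\tr_\system{A}\bigl(A\cdot\tr_\system{B}(M)\bigr)$ gives the exact (ordinary-product) identity
\[
\tr(\rho_a\cdot\sigma)=\tr_\system{A}\bigl(\rho_a\cdot\tr_\system{B}(\sigma)\bigr),
\]
so both the numerator and the denominator in~\eqref{eq:partialTraceBound} can be compared against this single common scalar.

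The second ingredient is a two-sided trace inequality between $\star$-product and ordinary product: for any $X,Y\in\PositiveOp(\hilbert)$ with $X$ of condition number $\kappa(X)$,
\[
\tr(X\star Y)\;\leq\;\tr(XY)\;\leq\;S\bigl(\kappa(X)\bigr)\cdot\tr(X\star Y).
\]
The left inequality is the Golden--Thompson inequality $\tr(e^{U+V})\leq\tr(e^Ue^V)$ applied to $U=\log X$, $V=\log Y$; the right inequality is the complementary Specht-ratio bound (of Ando--Hiai / Furuichi type), whose sharp constant is exactly $S(\kappa(X))$ as defined in~\eqref{eq:Specht}. I would apply this two-sided chain twice: once to $(X,Y)=(\rho_a\tensor I_\system{B},\sigma)$ on $\hilbert_\system{A}\tensor\hilbert_\system{B}$, and once to $(X,Y)=(\rho_a,\tr_\system{B}(\sigma))$ on $\hilbert_\system{A}$. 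The crucial observation is that $\kappa(\rho_a\tensor I_\system{B})=\kappa(\rho_a)$, since tensoring with an identity preserves the spectrum up to multiplicity, so the \emph{same} Specht ratio $S(\kappa(\rho_a))$ governs both applications.

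Putting the pieces together is then pure book-keeping. For the upper bound in~\eqref{eq:partialTraceBound}, one chains
\[
\tr_\system{A}\bigl(\rho_a\star\tr_\system{B}\sigma\bigr)\;\stackrel{\text{GT on }\hilbert_\system{A}}{\leq}\;\tr_\system{A}\bigl(\rho_a\tr_\system{B}\sigma\bigr)\;=\;\tr(\rho_a\sigma)\;\stackrel{\text{Specht on }\hilbert_\system{AB}}{\leq}\;S\bigl(\kappa(\rho_a)\bigr)\cdot\tr(\rho_a\star\sigma),
\]
and for the lower bound one runs the symmetric chain
\[
\tr(\rho_a\star\sigma)\;\stackrel{\text{GT on }\hilbert_\system{AB}}{\leq}\;\tr(\rho_a\sigma)\;=\;\tr_\system{A}\bigl(\rho_a\tr_\system{B}\sigma\bigr)\;\stackrel{\text{Specht on }\hilbert_\system{A}}{\leq}\;S\bigl(\kappa(\rho_a)\bigr)\cdot\tr_\system{A}\bigl(\rho_a\star\tr_\system{B}\sigma\bigr).
\]
Notably, the bound carries a \emph{single} factor of $S(\kappa(\rho_a))$ rather than $S(\kappa(\rho_a))^{2}$ precisely because Golden--Thompson contributes no constant and absorbs one of the two comparisons in each chain.

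The main obstacle I anticipate is pinning down and, if necessary, reproving the complementary Specht-ratio inequality $\tr(XY)\leq S(\kappa(X))\tr(X\star Y)$ in exactly the form needed: several versions appear in the literature with slightly different hypotheses (bounds in terms of $\kappa(X)$, $\kappa(Y)$, or their joint spectrum), and one has to verify that for our purposes the constant depends only on $\kappa(\rho_a)$ and is valid for an arbitrary PSD partner $\sigma$ whose spectrum is uncontrolled. A secondary technicality is extending from the strictly positive case to the PSD hypothesis on $\rho_c$: I would do this by the continuity argument $\rho_c\leadsto\rho_c+\varepsilon I$ with $\varepsilon\downarrow 0$, appealing to the Lie-product-formula definition~\eqref{eq:def:star:2} and the continuity of the partial trace to pass to the limit on both sides of~\eqref{eq:partialTraceBound}.
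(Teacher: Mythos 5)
Your proof is correct and is essentially the same as the paper's: both chains apply the Golden--Thompson inequality in one direction and the reverse (Specht-ratio) Golden--Thompson inequality in the other, pivoting through the exact identity $\tr\bigl((\rho_a\tensor I_\system{B})\,\sigma\bigr)=\tr_\system{A}\bigl(\rho_a\,\tr_\system{B}\sigma\bigr)$ and using $\kappa(\rho_a\tensor I_\system{B})=\kappa(\rho_a)$, so each direction picks up exactly one factor of $S(\kappa(\rho_a))$. The reverse inequality you flag as the potential obstacle, $\tr(e^V e^W)\leqslant S(\alpha)\tr(e^{V+W})$ with $\alpha$ the condition number of $e^V$, is precisely the form the paper cites from Bourin (2007), and your continuity argument $\rho_c\to\rho_c+\varepsilon I$ cleanly handles the merely-PSD $\rho_c$ endpoint that the paper leaves implicit.
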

\begin{proof}
Consider the Golden--Thompson inequality\index{Golden--Thompson inequality} and its reverse~\cite{bourin2007Reverse}, \ie, 
\begin{equation}\label{eq:Golden-Thompson}
  \tr\left(e^{V+W}\right) \leqslant \tr\left(e^V \cdot e^W\right)
  \leqslant S(\alpha)\cdot \tr\left(e^{V+W}\right),
\end{equation}
where $V$ and $W$ are Hermitian operators, and $\alpha$ is the condition number of $e^V$.
For arbitrary (strict) PD operators $\rho$ and $\sigma$, since both $\log{\rho}$ and $\log{\sigma}$ are Hermitian, by substituting $V=\log{\rho}$ and $W=\log{\sigma}$ into~\eqref{eq:Golden-Thompson}, we have  
\begin{equation}\label{eq:Golden-Thompson-log}
	\tr(\rho\star\sigma) \leqslant \tr(\rho\cdot\sigma)
  	\leqslant S(\kappa(\rho))\cdot\tr(\rho\star\sigma) \quad \forall \rho,\sigma>0.
\end{equation}
\par
To prove the first inequality in~\eqref{eq:partialTraceBound}, we have
\begin{align*}
\tr(\rho_a\star\rho_b\star\rho_c)
&\overset{\text{(a)}}{\leqslant} \tr(\rho_a\cdot(\rho_b\star\rho_c))
= \tr_\system{A}(\rho_a\cdot\tr_\system{B}(\rho_b\star\rho_c))\\
&\overset{\text{(b)}}{\leqslant} S(\kappa(\rho_a))\cdot \tr_\system{A}(\rho_a\star\tr_\system{B}(\rho_b\star\rho_c))
= S(\kappa(\rho_a))\cdot \tr(\rho_a\star\rho_b\star\rho_c),
\end{align*}
where we have applied the first and second inequalities in~\eqref{eq:Golden-Thompson-log} to step~(a) and step~(b), respectively.
Similarly, to show the second inequality in~\eqref{eq:partialTraceBound}, we have
\begin{align*}
\tr_\system{A}(\rho_a \star \tr_\system{B}(\rho_b\star\rho_c))
&\overset{\text{(a)}}{\leqslant} \tr_\system{A}(\rho_a \cdot \tr_\system{B}(\rho_b\star\rho_c))
= \tr(\rho_a \cdot (\rho_b\star\rho_c))\\
&\overset{\text{(b)}}{\leqslant} S(\kappa(\rho_a)) \cdot \tr(\rho_a\star(\rho_b\star\rho_c))
= S(\kappa(\rho_a))\cdot \tr(\rho_a\star\rho_b\star\rho_c),
\end{align*}
where, again, we have used~\eqref{eq:Golden-Thompson-log} in step~(a) and step~(b).
\end{proof}
Lemma~\ref{lem:partialTraceBound} indicates that $\tr_\system{A}(\rho_a\star\tr_\system{B}(\rho_b\star\rho_c))$ is expected to approximate $\tr(\rho_a\star\rho_b\star\rho_c)$ reasonably well when $\rho_a$ or $\rho_b\star\rho_c$ is proportionally close to the identity matrix.
This fact is further elaborated in the following theorem, where $\rho_a$ and $\rho_b\star\rho_c$ are close to the identity matrix in a linear fashion, \ie, $\rho_a=I+tX$ and $\rho_b\star\rho_c=I+tY$ for some Hermitian matrices $X$ and $Y$, and some real number $t$ in a small neighborhood of zero.
Another approach to study such approximations is to consider $\rho_a=e^{tX}$ and $\rho_b\star\rho_c=e^{tY}$, again, for some Hermitian matrices $X$ and $Y$, and some real number $t$ in a small neighborhood of zero.
We present the second approach in Appendix~\ref{app:exp:approx}.
\begin{theorem}\label{thm:approx:distri}
Consider finite-dimensional Hilbert spaces $\hilbert_\system{A}$ and $\hilbert_\system{B}$.
Given $X \in \HermitianOp(\hilbert_\system{A})$ and $Y\in\HermitianOp(\hilbert_\system{A} \tensor \hilbert_\system{B})$, it holds that
\begin{equation}\label{eq:approx:distri}
\tr\left((I+t X)\star(I+t Y)\right) = \tr_\system{A} \left( (I+t X) \star \tr_\system{B}(I+t Y) \right) + O(t^4),
\end{equation}
where the real number $t$ is in a neighborhood of $0$ such that $I+tX$ and $I+tY$ are always positive definite.
In other words, $\tr_\system{A} \left( (I+t X) \star \tr_\system{B}(I+t Y) \right)$ approximates $\tr\left((I+t X)\star(I+t Y)\right)$ when $t$ is small, and the error is of fourth order in $t$.
\end{theorem}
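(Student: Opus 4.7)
The plan is to Taylor-expand each side to order $t^3$ and verify that every coefficient agrees; the conclusion will then follow from the remainder form of Taylor's theorem applied to the smooth function $F(t) - G(t)$, where $F(t)$ and $G(t)$ denote the left- and right-hand sides of~\eqref{eq:approx:distri}. Writing $\tilde{X} \defeq X \otimes I_\system{B}$ and using $\log(I + tA) = tA - \tfrac{t^2}{2}A^2 + \tfrac{t^3}{3}A^3 + O(t^4)$, I would set $L_t \defeq \log(I+t\tilde{X})+\log(I+tY)$ and expand $e^{L_t} = I + L_t + L_t^2/2 + L_t^3/6 + O(t^4)$. After taking the trace and collapsing monomials via cyclicity (e.g.\ $\tr(\tilde{X} Y) = \tr(Y \tilde{X})$, and analogous symmetrizations at degree three), a direct cancellation among the contributions of $\tr(L_t)$, $\tr(L_t^2)/2$ and $\tr(L_t^3)/6$ yields
$$ F(t) = \dim(\hilbert_\system{A})\dim(\hilbert_\system{B}) + t\,\tr(\tilde{X} + Y) + t^2\,\tr(\tilde{X} Y) + O(t^4), $$
with the $t^3$ coefficient vanishing identically.

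For the right-hand side, I would first observe that $\tr_\system{B}(I+tY) = \dim(\hilbert_\system{B})\, I_\system{A} + t\,\tr_\system{B}(Y)$, and set $\tilde{Y} \defeq \tr_\system{B}(Y)/\dim(\hilbert_\system{B})$. Because the scalar operator $\log(\dim(\hilbert_\system{B}))\,I_\system{A}$ commutes with everything, the logarithm factors cleanly:
$$ \log\!\bigl(\tr_\system{B}(I+tY)\bigr) = \log(\dim(\hilbert_\system{B}))\,I_\system{A} + \log(I_\system{A} + t\tilde{Y}), $$
so $(I+tX)\star \tr_\system{B}(I+tY) = \dim(\hilbert_\system{B})\cdot e^{\log(I+tX)+\log(I+t\tilde{Y})}$. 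Reusing the same Taylor expansion, now on $\hilbert_\system{A}$ alone and with $\tilde{Y}$ playing the role of $Y$, gives
$$ G(t) = \dim(\hilbert_\system{B})\Bigl(\dim(\hilbert_\system{A}) + t\,\tr_\system{A}(X + \tilde{Y}) + t^2\,\tr_\system{A}(X \tilde{Y})\Bigr) + O(t^4), $$
again with a vanishing $t^3$ contribution by exactly the same cancellation.

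Matching $F(t)$ and $G(t)$ order by order then reduces to a short list of partial-trace identities. The $t^0$ and $t^1$ matches are immediate from the definition of $\tilde{Y}$, and the $t^2$ match rests on the standard identity $\tr_\system{AB}\!\bigl((X\otimes I_\system{B})\,Y\bigr) = \tr_\system{A}\!\bigl(X\,\tr_\system{B}(Y)\bigr)$, which equates $\tr(\tilde{X} Y)$ with $\dim(\hilbert_\system{B})\tr_\system{A}(X \tilde{Y})$. The $t^3$ match is trivial since both coefficients are zero.

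The main obstacle I anticipate is purely bookkeeping at order $t^3$: there are several noncommuting monomials in $\tilde{X}$ and $Y$ of total degree three, and one has to verify carefully, using cyclicity of the trace, that their coefficients cancel among the three contributions to $\tr(e^{L_t})$. No deeper machinery is needed, and in particular Lemma~\ref{lem:partialTraceBound} does not help here: its multiplicative bound $S(\kappa(I+tX))$ is only $1 + O(t^2)$, which would translate to an $O(t^2)$ additive error on $\tr(\rho\star\sigma)\approx \dim(\hilbert_\system{A})\dim(\hilbert_\system{B})$, far weaker than the $O(t^4)$ decay claimed here.
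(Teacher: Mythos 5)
Your proposal is correct and follows essentially the same route as the paper's proof: a Taylor expansion of both sides in $t$ to third order, followed by termwise matching via the identity $\tr_\system{AB}\bigl((X\tensor I_\system{B})\,Y\bigr)=\tr_\system{A}\bigl(X\,\tr_\system{B}(Y)\bigr)$ and verification that the $t^3$ coefficient of $\tr\bigl(e^{\log(I+tA)+\log(I+tB)}\bigr)$ cancels identically. Your reduction of the right-hand side to $\dim(\hilbert_\system{B})\,\tr_\system{A}\bigl((I+tX)\star(I+t\tilde Y)\bigr)$ with $\tilde Y=\tr_\system{B}(Y)/\dim(\hilbert_\system{B})$ is a clean way to reuse the left-hand-side expansion rather than redoing it, and plays the same role as the paper's use of normalized traces $\overline{\tr}$, $\overline{\tr}_\system{A}$, $\overline{\tr}_\system{B}$.
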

\begin{proof}
The theorem is proven using the Taylor series expansion.
Recall that, for a Hermitian matrix $V$ and a positive definite matrix $W$ with spectral radius less than $1$, we have 
\begin{align*}
\exp{V} &= I + V + \frac{1}{2} V^2 + \frac{1}{3!} V^3 + \cdots + \frac{1}{n!} V^n + \cdots,\\
\log{(I+W)} &= B - \frac{1}{2} W^2 + \frac{1}{3} W^3 - \cdots + \frac{(-1)^{n-1}}{n} W^n + \cdots.
\end{align*}
To simplify the proof, we introduce the \emph{normalized} (partial) trace functions
\[
  \overline{\tr}(\rho) \defeq \frac{\tr(\rho)}{\tr(I)},\qquad
  \overline{\tr}_\system{A}(\rho) \defeq \frac{\tr_\system{A}(\rho)}{\tr_\system{A}(I)}.
\]
Using the above notations, we rewrite~\eqref{eq:approx:distri} as 
\[
\overline{\tr}\left((I+t X)\star(I+t Y)\right) = \overline{\tr}_\system{A} \left( (I+t X) \star \overline{\tr}_\system{B}(I+t Y) \right) + O(t^4).
\]
Let $\tilde{X}\defeq X\tensor I \in \HermitianOp(\hilbert_\system{A} \tensor \hilbert_\system{B})$, and consider the Taylor series expansion of the terms (without $O(t^4)$) on both sides of the above equation:
\begin{align}
\label{eq:taylor:LHS:ADT}
\text{LHS} &= \begin{aligned}[t]
1 &+ t\cdot \overline{\tr}(\tilde{X}+Y) + t^2\cdot\frac{\overline{\tr}(\tilde{X}Y + Y\tilde{X})}{2}\\
&+ t^4 \cdot \frac{\overline{\tr}\left( \tilde{X}Y\tilde{X}Y - \tilde{X}^2Y^2 \right)}{12} + O(t^5),
\end{aligned}\\
\label{eq:taylor:RHS:ADT}
\text{RHS} &= \begin{aligned}[t]
1 &+ t\cdot \overline{\tr}_\system{A}( X + \overline{\tr}_\system{B}(Y) ) + t^2 \cdot \frac{\overline{\tr}_\system{A} \left( X\overline{\tr}_\system{B}(Y) + \overline{\tr}_\system{B} (Y)X \right)}{2}\\
&+ t^4 \cdot \frac{\overline{\tr}_\system{A} \left( X\overline{\tr}_\system{B}(Y)X\overline{\tr}_\system{B}(Y) - X^2\overline{\tr}_\system{B}(Y)^2 \right)}{12}+O(t^5).	
\end{aligned}
\end{align}
Note that $\overline{\tr}_\system{B}(\tilde{X}\cdot Z) = X\cdot\overline{\tr}_\system{B}(Z)$ for any operator $Z\in \HermitianOp(\hilbert_\system{A}\tensor\hilbert_\system{B})$.
Thus, one can easily check that~\eqref{eq:taylor:LHS:ADT} and~\eqref{eq:taylor:RHS:ADT} agree up to $t^3$.
(Note that the coefficients of $t^3$ are both $0$ for~\eqref{eq:taylor:LHS:ADT}
  and~\eqref{eq:taylor:RHS:ADT}.)
In particular, we have
\[
\overline{\tr}\left((I+t X)\star(I+t Y)\right) = 
\begin{aligned}[t]
&\overline{\tr}_\system{A} \left( (I+t X) \star \overline{\tr}_\system{B}(I+t Y) \right)\\
&\hspace{-60pt}+ \frac{\overline{\tr}_\system{A}\left( X \cdot [\overline{\tr}_2(Y),X]\cdot \overline{\tr}_2(Y) \right) - \overline{\tr}\left( \tilde{X} \cdot[\tilde{X},Y] \cdot Y \right)}{12} \cdot t^4 + O(t^5),
\end{aligned}
\]
where $[V,W]\defeq VW-WV$ for matrices $V,W$.
\end{proof}
Using the same method, we have the following approximation as well.
\begin{proposition}\label{prop:approx:distri}
Under the same setup as in Theorem~\ref{thm:approx:distri}, it holds that
\begin{equation}\label{eq:approx:distri:prop}
\tr_\system{B}\left((I+t X)\star(I+t Y)\right) = (I+t X) \star \tr_\system{B}(I+t Y) + O(t^3).
\end{equation}
\end{proposition}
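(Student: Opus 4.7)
The plan is to mirror the Taylor-series argument used in the proof of Theorem~\ref{thm:approx:distri}, but to stop one step short: instead of contracting all the way to a scalar by the outer trace, I would keep the partial trace $\tr_\system{B}$ as an operator-valued expression on $\hilbert_\system{A}$ and verify that the two sides of~\eqref{eq:approx:distri:prop} agree up to and including order $t^2$. First, with $\tilde{X}\defeq X\tensor I_\system{B}$, I would expand
\begin{equation*}
\log(I+t\tilde{X})=t\tilde{X}-\tfrac{t^2}{2}\tilde{X}^2+O(t^3), \qquad \log(I+tY)=tY-\tfrac{t^2}{2}Y^2+O(t^3),
\end{equation*}
add them, and exponentiate using $\exp(V)=I+V+\tfrac{1}{2}V^2+O(\|V\|^3)$. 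This gives exactly the same expansion that enters the proof of Theorem~\ref{thm:approx:distri}, namely $(I+tX)\star(I+tY)=I+t(\tilde{X}+Y)+\tfrac{t^2}{2}(\tilde{X}Y+Y\tilde{X})+O(t^3)$ on $\hilbert_\system{A}\tensor\hilbert_\system{B}$.

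Next, I would apply $\tr_\system{B}$ to this expansion, using the identities $\tr_\system{B}(\tilde{X}\,Z)=X\,\tr_\system{B}(Z)$ and $\tr_\system{B}(Z\,\tilde{X})=\tr_\system{B}(Z)\,X$ for $Z\in\LinearOp(\hilbert_\system{A}\tensor\hilbert_\system{B})$, which yield
\begin{equation*}
\tr_\system{B}\bigl((I{+}tX){\star}(I{+}tY)\bigr)=d_\system{B} I_\system{A}+t\bigl(d_\system{B} X+\tr_\system{B}(Y)\bigr)+\tfrac{t^2}{2}\bigl(X\,\tr_\system{B}(Y)+\tr_\system{B}(Y)\,X\bigr)+O(t^3),
\end{equation*}
where $d_\system{B}\defeq\dim\hilbert_\system{B}$. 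For the right-hand side I would factor $d_\system{B}$ out of $\tr_\system{B}(I+tY)=d_\system{B}\bigl(I_\system{A}+\tfrac{t}{d_\system{B}}\tr_\system{B}(Y)\bigr)$, so that
\begin{equation*}
\log\tr_\system{B}(I+tY)=(\log d_\system{B})\,I_\system{A}+\tfrac{t}{d_\system{B}}\tr_\system{B}(Y)-\tfrac{t^2}{2d_\system{B}^{\,2}}\tr_\system{B}(Y)^2+O(t^3),
\end{equation*}
add $\log(I+tX)$, and exponentiate. The $(\log d_\system{B})\,I_\system{A}$ piece commutes with everything and comes out of the exponential as an overall scalar factor of $d_\system{B}$, which matches the leading $d_\system{B} I_\system{A}$ on the LHS; a short calculation then shows the remaining terms match term-by-term through order $t^2$, so the difference is indeed $O(t^3)$.

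The main obstacle will be simply keeping careful track of the factor $d_\system{B}$, which appears because we are using the unnormalized partial trace (unlike the proof of Theorem~\ref{thm:approx:distri}, which exploited the normalized $\overline{\tr}$ so that this factor cancels neatly). A more conceptual point is that the bound here is only $O(t^3)$ rather than the $O(t^4)$ obtained in Theorem~\ref{thm:approx:distri}: the full trace annihilates commutators by cyclicity, which kills the $t^3$ correction there, whereas $\tr_\system{B}$ of a commutator in $\tilde{X}$ and $Y$ is generically a nonzero operator on $\hilbert_\system{A}$, so the $t^3$ term should not be expected to vanish. A brief check on a small example (e.g.\ qubits with noncommuting $X$ and $Y$) would then confirm that the $O(t^3)$ bound in the statement is tight.
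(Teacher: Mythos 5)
Your proof is correct and follows essentially the same Taylor-series strategy as the paper's. The one cosmetic difference is that the paper first passes to the \emph{normalized} partial trace $\overline{\tr}_\system{B}(\cdot)\defeq\tr_\system{B}(\cdot)/\dim\hilbert_\system{B}$, which eliminates the $d_\system{B}$ bookkeeping and the $\log(d_\system{B})\,I_\system{A}$ scalar you need to pull out of the exponential; the underlying order-$t^2$ matching is identical to what you compute, and your handling of the $d_\system{B}$ factor is correct. The paper also goes one step further by writing out the explicit $t^3$ residual term, which substantiates your (correct) observation that the bound cannot be improved to $O(t^4)$ because $\tr_\system{B}$, unlike the full trace, does not kill the $t^3$ terms via cyclicity.
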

\begin{proof}
We use the same method as that in the proof of Theorem~\ref{thm:approx:distri}.
In particular, we rewrite~\eqref{eq:approx:distri:prop} as
\[
\overline{\tr}_\system{B}\left((I+t X)\star(I+t Y)\right) = (I+t X) \star \overline{\tr}_\system{B}(I+t Y) + O(t^3)
\]
and expand each side (without $O(t^3)$) of the above equation as
\begin{align}
\label{eq:taylor:LHS:ADP}
\text{LHS}
&= \begin{aligned}[t]
I &+  t \cdot \overline{\tr}_\system{B}(\tilde{X}+Y) + t^2 \cdot \frac{\overline{\tr}_\system{B}(\tilde{X}Y+Y\tilde{X})}{2} \\
  &+ t^3 \cdot \frac{\overline{\tr}_\system{B}\left( 2\tilde{X}Y\tilde{X} + 2Y\tilde{X}Y - \tilde{X}^2 Y - Y^2\tilde{X} - \tilde{X}Y^2 - Y \tilde{X}^2 \right)}{12} + O(t^4),\!
  \end{aligned}\\
\label{eq:taylor:RHS:ADP}
\text{RHS}
&= \begin{aligned}[t]
I &+ t\cdot (X+\overline{\tr}_\system{B}(Y)) + t^2 \cdot \frac{X\overline{\tr}_\system{B}(Y) + \overline{\tr}_\system{B}(Y)X}{2}\\
&+ t^3 \cdot \begin{aligned}[t]
	\bigg(&\frac{2X\overline{\tr}_\system{B}(Y)X - X^2\overline{\tr}_\system{B}(Y) - X \overline{\tr}_\system{B}(Y)^2}{12} \\
	&+\frac{2\overline{\tr}_\system{B}(Y)X\overline{\tr}_\system{B}(Y) - \overline{\tr}_\system{B}(Y)^2X - \overline{\tr}_\system{B}(Y) X^2}{12}\bigg) + O(t^4).
	\end{aligned}
\end{aligned}
\end{align}
Comparing~\eqref{eq:taylor:LHS:ADP} with~\eqref{eq:taylor:RHS:ADP}, we have
\[
\begin{aligned}
&\overline{\tr}_\system{B}\left((I+t X)\star(I+t Y)\right) = (I+t X) \star \overline{\tr}_\system{B}(I+t Y) + \\
&t^3\cdot\frac{2\left(\overline{\tr}_\system{B}(Y\tilde{X}Y) - \overline{\tr}_\system{B}(Y)X\overline{\tr}_\system{B}(Y)\right) + \left(\overline{\tr}_\system{B}(Y)^2 - \overline{\tr}_\system{B}(Y^2)\right)X + X\left(\overline{\tr}_\system{B}(Y)^2 - \overline{\tr}_\system{B}(Y^2)\right)}{12}\\
& + O(t ^4),
\end{aligned}
\]
which concludes the proof.
\end{proof}
\begin{remark}
Theorem~\ref{thm:approx:distri} and Proposition~\ref{prop:approx:distri} also hold if the local operators are \emph{proportionally} close to the identity matrix.
Namely, it holds that
\begin{align}
\tr\left(a(I+t X)\star b(I+t Y)\right) & = \tr_\system{A} \left( a(I+t X) \star \tr_\system{B}b(I+t Y) \right) + O(t^4)\\
\tr_\system{B}\left(a(I+t X)\star b(I+t Y)\right) &= \left( a(I+t X) \star \tr_\system{B} b(I+t Y) \right) + O(t^3)
\end{align}
for any real numbers $a$, $b$.
\end{remark}
\begin{corollary}\label{cor:QFG:chain}
\begin{figure}\centering
\begin{tikzpicture}[node distance=2cm, nodes={draw=none},
	factor/.style={rectangle, minimum size=1cm, draw}]
	\node[factor] (1) {}; \node at (1) {$\rho_1$};
	\node[factor] (2)[right of=1] {}; \node at (2) {$\rho_2$};
	\node[factor] (3)[right of=2] {}; \node at (3) {$\rho_3$};
	\node[factor] (4)[right =2cm of 3] {}; \node at (4) {$\rho_{n\!-\!1}$};
	\node[factor] (5)[right of=4] {}; \node at (5) {$\rho_n$};
	\path (3) edge[draw=none] node[midway] (A) {$\cdots$} (4);
	\draw[line width = 2pt] (1) -- (2) -- (3) -- (4) -- (5);
	\draw[dashed] ([xshift=-.7cm,yshift=.7cm]1) rectangle ([xshift=.7cm,yshift=-.7cm]2);
	\draw[dashed] ([xshift=-.9cm,yshift=.9cm]1) rectangle ([xshift=.7cm,yshift=-.9cm]3);
	\draw[dashed] ([xshift=-1.1cm,yshift=1.1cm]1) rectangle ([xshift=.7cm,yshift=-1.1cm]4);
	\draw[dashed] ([xshift=-1.3cm,yshift=1.3cm]1) rectangle ([xshift=.7cm,yshift=-1.3cm]5);
\end{tikzpicture}
\caption{A chain QFG. ($n\geqslant 3$).}
\label{fig:QFG:chain}
\end{figure}
Let $n\geqslant 3$.
Consider a chain QFG as in Figure~\ref{fig:QFG:chain}, where $\rho_1 \in \StPositiveOp(\hilbert_{1})$, $\rho_n \in \StPositiveOp(\hilbert_{n-1})$, and $\rho_k \in \StPositiveOp(\hilbert_{k-1}\tensor\hilbert_{k})$ for each $k=2,\ldots,n-1$.
Suppose that for each $k$, $\rho_k\propto I+t\cdot H_k$ for some Hermitian operator $H_k$ and some real number $t$ close to $0$, then Theorem~\ref{thm:approx:distri} implies the estimate
\begin{equation}\label{eq:chain:QFG}
\tr\left(\bigstar_{k=1}^{n}\rho_k\right) = 
\tr_{n\!-\!1}\left( \tr_{n\!-\!2} \left(\cdots\tr_1(\rho_1\star\rho_2) \star \cdots \star \rho_{n\!-\!1}\right)\star\rho_{n}\right) + O(t^4).
\end{equation}
\end{corollary}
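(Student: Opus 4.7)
The plan is to prove~\eqref{eq:chain:QFG} by induction on $n \geq 3$, where at each step we apply Theorem~\ref{thm:approx:distri} once to ``close the box'' over the leftmost variable, thereby reducing a chain of length $n$ to a chain of length $n-1$. For the base case $n = 3$, I would apply Theorem~\ref{thm:approx:distri} directly with $\system{A} = \hilbert_2$, $\system{B} = \hilbert_1$, taking (formally) $I + tX = \rho_3$ on $\hilbert_2$ and $I + tY = \rho_1 \star \rho_2$ on $\hilbert_1 \otimes \hilbert_2$. This immediately yields $\tr(\rho_1 \star \rho_2 \star \rho_3) = \tr_2(\rho_3 \star \tr_1(\rho_1 \star \rho_2)) + O(t^4)$, which is precisely~\eqref{eq:chain:QFG} for $n = 3$.

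For the inductive step ($n \geq 4$), I would use associativity and commutativity of $\star$ to split the global operator as $(\rho_1 \star \rho_2) \star (\rho_3 \star \cdots \star \rho_n)$, where $\rho_1 \star \rho_2 \in \PositiveOp(\hilbert_1 \otimes \hilbert_2)$ and $\rho_3 \star \cdots \star \rho_n \in \PositiveOp(\hilbert_2 \otimes \cdots \otimes \hilbert_{n-1})$. Applying Theorem~\ref{thm:approx:distri} with $\system{A} = \hilbert_2 \otimes \cdots \otimes \hilbert_{n-1}$ and $\system{B} = \hilbert_1$ produces
\[
\tr\!\left(\bigstar_{k=1}^n \rho_k\right) = \tr_\system{A}\!\left(B_1 \star (\rho_3 \star \cdots \star \rho_n)\right) + O(t^4),
\]
where $B_1 \defeq \tr_1(\rho_1 \star \rho_2) \in \StPositiveOp(\hilbert_2)$. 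The right-hand side is the partition sum of a chain QFG of length $n - 1$ whose leftmost factor is $B_1$ (acting on $\hilbert_2$) with the remaining factors $\rho_3, \ldots, \rho_n$ unchanged. A short Taylor expansion of $B_1 = \dim(\hilbert_1)\cdot I_{\hilbert_2} + t\,\tr_1(H_1 + H_2) + O(t^2)$ shows that $B_1$ is proportional to $I + t\tilde{H}(t)$ with $\tilde{H}(t)$ bounded near $t = 0$, so the induction hypothesis applies to the length-$(n-1)$ chain and finishes the argument.

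The main obstacle I anticipate is technical rather than structural: Theorem~\ref{thm:approx:distri} is stated with fixed Hermitian operators $X, Y$ independent of $t$, whereas the inductive step requires applying it to operators $X(t), Y(t)$ that depend smoothly on $t$ through $I + tX(t) = \rho_3 \star \cdots \star \rho_n$ and $I + tY(t) = \rho_1 \star \rho_2$. I would justify the needed extension by revisiting the Taylor-expansion proof of Theorem~\ref{thm:approx:distri} and observing that the $O(t^4)$ remainder depends only polynomially on the norms of the seed operators, so replacing $X, Y$ by analytic families $X(t), Y(t)$ uniformly bounded near $t = 0$ only contributes additional powers of $t$ to the remainder. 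A parallel nuisance is the proportionality constant carried by $B_1$; this is absorbed using the elementary identity $a\rho \star b\sigma = ab\,(\rho \star \sigma)$, which follows from $\log(a\rho) = (\log a) I + \log \rho$ and the definition of $\star$. Once these two technical points are settled, the induction is mechanical.
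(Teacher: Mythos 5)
Your proof is correct and takes the mirror image of the paper's inductive argument. The paper groups the last two factors $\rho_m\star\rho_{m+1}$ into a single node, invokes the induction hypothesis for the resulting $m$-factor chain, and finishes by splitting the grouped factor with one application of Theorem~\ref{thm:approx:distri}; you instead close the leftmost box first (one application of Theorem~\ref{thm:approx:distri}) and then apply the induction hypothesis to the shortened chain $B_1,\rho_3,\ldots,\rho_n$. The two directions are logically equivalent, though yours tracks the sequential ``closing-the-box'' picture of Figure~\ref{fig:QFG:chain} more literally.

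Where you add real value is in explicitly flagging the technicality that both proofs face but only yours names: the synthesized local factor---$B_1=\tr_1(\rho_1\star\rho_2)$ in your version, $\rho_m\star\rho_{m+1}$ in the paper's---is proportional to $I+t\,\tilde H(t)$ with a $t$-dependent Hermitian part, whereas Theorem~\ref{thm:approx:distri} is stated for $I+tH$ with $H$ fixed, and the corollary's hypothesis is stated the same way. The paper applies the induction hypothesis to a chain whose last factor has exactly this deficiency without comment. Your observation that the Taylor remainder in the proof of Theorem~\ref{thm:approx:distri} is controlled by operator norms, so that bounded analytic families $X(t),Y(t)$ only perturb the coefficients of the expansion and preserve the $O(t^4)$ estimate, is the right repair, and the identity $a\rho\star b\sigma = ab\,(\rho\star\sigma)$ correctly disposes of the $\dim(\hilbert_1)$ proportionality constant. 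Both points are genuinely needed by either direction of the induction, so spelling them out strengthens the argument.
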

\begin{proof}
We use mathematical induction to prove~\eqref{eq:chain:QFG}.
For $n=3$,~\eqref{eq:chain:QFG} is the same as~\eqref{eq:approx:distri}.
Assume that~\eqref{eq:chain:QFG} is true when $n=m$, for some arbitrary integer $m\geqslant 3$.
Under this hypothesis, consider the case when $n=m+1$, \ie, 
\begin{align*}
\tr\left(\bigstar_{k=1}^{m+1}\rho_k\right)
&= \tr\left(\rho_1\star\cdots\star\rho_{m-1}\star(\rho_m\star\rho_{m+1})\right)\\
&\overset{\text{(a)}}{=} \tr_{m,m\!-\!1}\left( \tr_{m\!-\!2}\left( \cdots \tr_1(\rho_1\star\rho_2) \cdots \star\rho_{m\!-\!1} \right) \star(\rho_{m}\star\rho_{m\!+\!1}) \right) + O(t^4)\\
&\overset{\text{(b)}}{=} \tr_{m}\left(\tr_{m\!-\!1}\left( \tr_{m\!-\!2}\left( \cdots \tr_1(\rho_1\star\rho_2) \cdots \star\rho_{m\!-\!1} \right) \star\rho_{m} \right) \star\rho_{m\!+\!1} \right) + O(t^4),
\end{align*}
where we have used the induction hypothesis in step (a) and Theorem~\ref{thm:approx:distri} in step (b).
Thus, we have shown that~\eqref{eq:chain:QFG} holds for $n=m+1$ provided the same expression holding for $n=m$.
By mathematical induction, the corollary is proven.
\end{proof}
Via Theorem~\ref{thm:approx:distri} and Proposition~\ref{prop:approx:distri}, we have established the following \emph{approximate distributive laws} of the $\star$-product over the trace and the partial trace functions.
Namely, 
\begin{align}
\label{eq:distributive:star}
\tr(\rho_a\star\rho_b) &\approx \tr_{\nb{b}\xk\set{I}}(\rho_a\star\tr_{\set{I}}(\rho_b)),\\
\label{eq:distributive:star:partial}
\tr_{\set{I}}(\rho_a\star\rho_b) &\approx \rho_a\star\tr_{\set{I}}(\rho_b),
\end{align}
where $\partial{a}\subsetneq\partial{b}$ and where $\set{I}\subset\partial{b}\setminus\partial{a}$ and where $\rho_a \in \StPositiveOp{ \Tensor_{ i\in \partial{a}}\hilbert_i}$ and $\rho_b \in \StPositiveOp{ \Tensor_{ i\in \partial{b}}\hilbert_i}$ are proportionally close to the identity matrix $I$.
\par
\begin {figure}\centering
\begin{tikzpicture}[every axis/.append style={font=\footnotesize},
	every mark/.append style={scale=.9}]
\begin{semilogyaxis}[width = \textwidth, height = .618\textwidth,
	xlabel={Relative Error $\eta$},
	ylabel={Frequency Density}, ylabel shift = -4 pt,
	xmin=0.002, xmax=0.2, ymin=0.0001,
	legend style={font=\small}, legend cell align={left},
	mark repeat={5}]
	\addplot[dashed] table[x=error,y=freq_N.25] {1155001913-cao_michael_xuan-202104-phd-Data_approx_distributive.txt};
	\addlegendentry{$\abs{\mathcal{N}(\mu,\sigma^2)}$ distributed Eigenvalues ($\mu=1$, $\sigma=0.25$)}
	\addplot[solid] table[x=error,y=freq_N.5] {1155001913-cao_michael_xuan-202104-phd-Data_approx_distributive.txt};
	\addlegendentry{$\abs{\mathcal{N}(\mu,\sigma^2)}$ distributed eigenvalues ($\mu=1$, $\sigma=0.5$)}
	\addplot[solid,mark=+] table[x=error,y=freq_N1] {1155001913-cao_michael_xuan-202104-phd-Data_approx_distributive.txt};
	\addlegendentry{$\abs{\mathcal{N}(\mu,\sigma^2)}$ distributed Eigenvalues ($\mu=1$, $\sigma=1$)}
	\addplot[solid,mark=o] table[x=error,y=freq_U01] {1155001913-cao_michael_xuan-202104-phd-Data_approx_distributive.txt};
	\addlegendentry{Uniformly distributed Eigenvalues ($a=0$, $b=1$)}
\end{semilogyaxis}\end{tikzpicture}
\caption{Distribution of $\eta$ for random positive operators $\rho_a$ and $\rho_b$ under different randomization schemes.}
\label{fig:approx:distributive:star}
\end{figure}
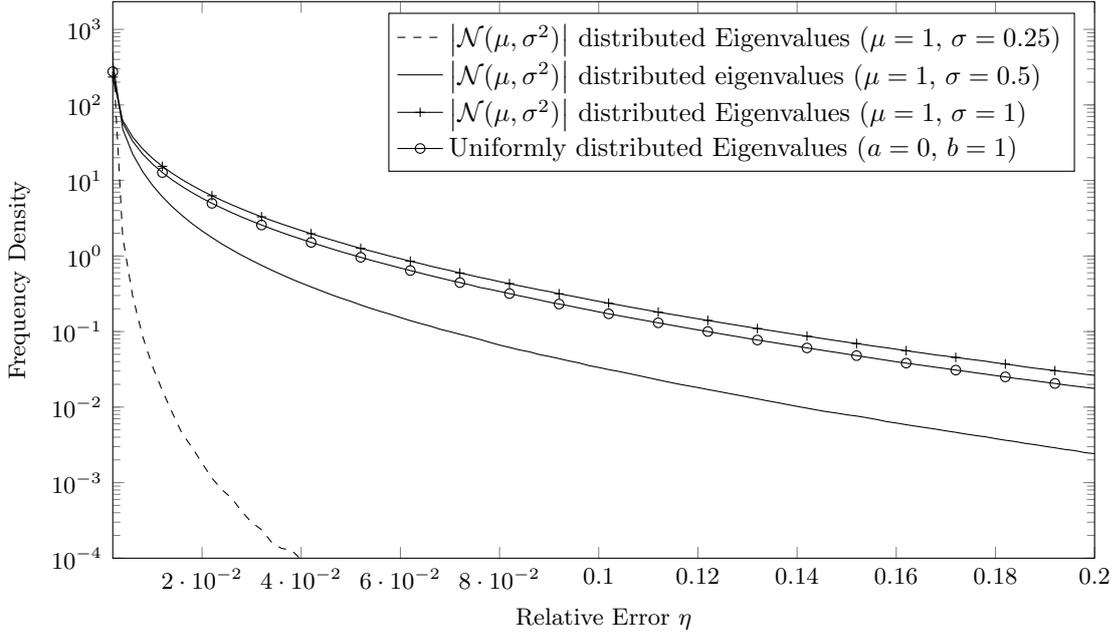
It is worthwhile to make a brief numerical comparison between $\tr_\system{A}( \rho_a\star \tr_\system{B}(\rho_b))$ and $\tr(\rho_\system{a}\star\rho_b)$.
In Figure~\ref{fig:approx:distributive:star}, we randomly generate $\rho_a\in\StPositiveOp(\hilbert_\system{A})$ and $\rho_b\in\StPositiveOp(\hilbert_\system{A}\tensor\hilbert_\system{B})$, where $\hilbert_\system{A}=\hilbert_\system{B}=\Complex^2$, and plot statistical distribution of the relative error as in 
\begin{equation}
\eta \defeq \frac{\abs{\tr_\system{A}( \rho_a\star \tr_\system{B}(\rho_b)) - \tr(\rho_a\star\rho_b)}}{\tr(\rho_a\star\rho_b)}.
\end{equation} 
Here, $\rho_a\defeq U_\system{A}^\Herm\Lambda_\system{A}U_\system{A}$ and $\rho_b\defeq U_\system{AB}^\Herm\Lambda_\system{AB}U_\system{AB}$, where the matrices $U_\system{A}$ and $U_\system{AB}$ are, respectively, random unitary matrices in $\Complex^2$ and $\Complex^4$ generated according to the Haar measures over unitary matrices of corresponding sizes and where $\Lambda_\system{A}$ and $\Lambda_\system{AB}$ are positive diagonal matrices with {i.i.d.}~entries (these entries are eigenvalues of $\rho_a$ and $\rho_b$).
The distributions used to generate the eigenvalues are marked in the legend.
For example, in the legend of Figure~\ref{fig:approx:distributive:star}, $\abs{\mathcal{N}(\mu,\sigma^2)}$ stands for the random variable distributed according to the absolute value of a Gaussian random variable with mean $\mu$ and variance $\sigma^2$.
\section{Quantum Belief-Propagation Algorithms}
In this section, we propose an approximated version of the ``closing-the-box'' operations for QFGs.
The approximation is based on the approximated distributivity discussed in the previous section.
The quantum belief-propagation algorithm, which was first proposed in~\cite{Leifer08}, can then be understood as a natural extension of the approximated ``closing-the-box'' operations.
At the end of this section, we discuss the holographic transformations of QFGs.
\subsection{Approximated ``Closing-the-Box'' Operations on QFGs}
Motivated by the approximated distributive law of the $\star$-product over partial trace operations (see~\eqref{eq:distributive:star:partial}), we define the ``closing-the-box'' operations on QFGs as the process to replace the box with the result of the partial trace of the $\star$-product of the local operators \emph{in} the box \wrt the Hilbert spaces \emph{in} the box.
\begin{definition}[``Closing-the-Box'' Operations on QFGs] \label{def:QFG:CtB} \index{closing-the-box operations!of QFGs}
Let $\set{G}=(\set{V},\set{F},\set{E})$ be a QFG as defined in Definition~\ref{def:QFG}.
Let $\set{G}'=(\set{V}',\set{F}')$ be a subgraph of $\set{G}$ such that if a variable vertex is in $\set{G}'$, so do all of its neighbors (all of which are in $\set{F}$).
(We call such a subgraph a \emph{box} of $\set{G}$.)
The ``closing-the-box'' operation (\wrt $\set{G}'$) is to replace $\set{G}'$ in $\set{G}$ by a factor vertex associated with the \emph{exterior operator} $\rho_{\set{G}'}$ of $\set{G}'$,
where $\rho_{\set{G}'}$ is the resultant operator by taking the partial trace of the $\star$-product of the local operators associated with $\set{F}'$ over the Hilbert spaces associated with $\set{V}'$.
Namely, 
\[
\rho_{\set{G}'} \defeq \tr_{i\in\set{V}'}\left(\bigstar_{a\in\set{F}'} \rho_a \right).\qedhere
\]
\end{definition}
Since the matrix exponential of Hermitian matrices is always PSD, a ``closing-the-box'' operation of a QFG always ends up with another QFG.
As discussed in Corollary~\ref{cor:QFG:chain}, for a chain QFG, its partition sum can be estimated via a sequence of ``closing-the-box'' operations.
This can be further generalized to all acyclic QFGs (see Theorem~\ref{thm:BP:QFG:tree} below).
\begin{theorem}\label{thm:BP:QFG:tree}
Given an acyclic QFG $(\set{G}=(\set{V},\set{F},\set{E}\in\set{V}\times\set{F}), \mathfrak{V}=\{\hilbert_i\}_{i\in\set{V}}, \mathfrak{F}=\{\rho_a\}_{a\in\set{F}})$, there exists a sequence of ``closing-the-box'' operations, tracing out one Hilbert space at each step, that shrinks the QFG to a constant\footnote{More precisely, the word ``constant'' here refers to a trivial QFG representing a constant operator on a 1-dimensional Hilbert space.} $\hat{Z}$.
If all local operators are proportionally close to $I$, \ie, for each $a$, $\rho_a\propto I+t\cdot H_a$ for some Hermitian operator $H_a$ and some small number $t$, we have 
\begin{equation}
	\hat{Z} = Z(\set{G}) + O(t^4).	
\end{equation}
\end{theorem}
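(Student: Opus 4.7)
I plan to prove this by induction on $|\set{V}|$, the number of variable vertices. The base case $|\set{V}|=0$ is immediate: every factor vertex is a scalar in $\Complex$, there is nothing to trace, and $\hat{Z}\defeq\prod_{a\in\set{F}}\rho_a=Z(\set{G})$ exactly.

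For the inductive step with $|\set{V}|\geqslant 1$, I pick any variable $i\in\set{V}$ and let $\nb{i}=\{a_1,\ldots,a_k\}$ be its factor neighbors. The first closing-the-box operation I apply is on the box $(\{i\},\nb{i})$: it traces out $\hilbert_i$ and replaces the star $\{i,a_1,\ldots,a_k\}$ by a single factor vertex carrying $\tilde{\rho}\defeq\tr_i(\rho_{a_1}\star\cdots\star\rho_{a_k})$, adjacent to the variables $\bigl(\bigcup_{\ell} \partial a_\ell\bigr)\xk\{i\}$. Because $\set{G}$ is acyclic, no cycle is created and the resulting graph $\tilde{\set{G}}$ is a smaller acyclic QFG. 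Moreover, $\tilde{\rho}$ is still proportionally close to $I$: the class of operators of the form $c\cdot(I+t\tilde{H}+O(t^2))$ (with $\tilde{H}$ Hermitian) is closed under the $\star$-product and under partial tracing, so the induction hypothesis applies to $\tilde{\set{G}}$ and produces a sequence of closing-the-box operations shrinking it to a constant $\hat{Z}$ with $\hat{Z}=Z(\tilde{\set{G}})+O(t^4)$.

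It remains to show $Z(\set{G})=Z(\tilde{\set{G}})+O(t^4)$. Set $\rho_\system{B}\defeq\rho_{a_1}\star\cdots\star\rho_{a_k}$, which acts on $\hilbert_i$ together with the Hilbert spaces attached to the other variables of $\bigcup_\ell \partial a_\ell$, and $\rho_\system{A}\defeq\bigstar_{b\notin\nb{i}}\rho_b$, which does \emph{not} act on $\hilbert_i$ since $i\notin\partial b$ for every $b\notin\nb{i}$. Identifying $\hilbert_i$ with the ``inside'' system $\hilbert_\system{B}$ of Theorem~\ref{thm:approx:distri} and the remaining Hilbert spaces with $\hilbert_\system{A}$, that theorem yields
\[
Z(\set{G})=\tr(\rho_\system{B}\star\rho_\system{A})=\tr_\system{A}\bigl(\rho_\system{A}\star\tr_\system{B}(\rho_\system{B})\bigr)+O(t^4)=\tr(\rho_\system{A}\star\tilde{\rho})+O(t^4)=Z(\tilde{\set{G}})+O(t^4).
\]
Chaining this estimate with the induction hypothesis gives $\hat{Z}=Z(\set{G})+O(t^4)$, as required.

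The main obstacle is that Theorem~\ref{thm:approx:distri} is stated for operators exactly of the form $I+tX$ and $I+tY$, whereas in our setting $\rho_\system{A}$ and $\rho_\system{B}$ are only proportional to $I+t\tilde{H}+O(t^2)$ (each being a $\star$-product of several close-to-identity factors). I expect this to be resolved by rerunning the Taylor expansion in the proof of Theorem~\ref{thm:approx:distri}: the $O(t^2)$ and higher corrections in $\rho_\system{A}$ and $\rho_\system{B}$ propagate identically through both sides of~\eqref{eq:approx:distri} via the identity $\overline{\tr}_\system{B}(\tilde{X}\cdot Z)=X\cdot\overline{\tr}_\system{B}(Z)$ used in that proof, so the two expansions still agree through order $t^3$ and leave an $O(t^4)$ residual. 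Verifying this extension rigorously, and checking that the error accumulated over the $|\set{V}|$ inductive steps stays within $O(t^4)$ with constants depending only on the fixed graph $\set{G}$, is the chief technical task.
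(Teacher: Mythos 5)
Your approach matches the paper's: the paper sketches the proof of the second part as a direct generalization of Corollary~\ref{cor:QFG:chain}, and that corollary is proved by exactly the induction you describe, closing one variable's box at a time and invoking the approximate distributive law of Theorem~\ref{thm:approx:distri} for the per-step error. The one cosmetic difference — you pick an arbitrary variable vertex $i$, while the paper's constructive part picks one with no variable-descendants — is immaterial: in either case the box is $\{i\}\cup\nb{i}$, only $\hilbert_i$ is traced out at that step, contraction preserves acyclicity, and the summand sizes are the same.

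The gap you flag is genuine and worth taking seriously, but note that it is already latent in the paper's own proof of Corollary~\ref{cor:QFG:chain}: step~(a) there invokes the inductive hypothesis with $\rho_m\star\rho_{m+1}$ treated as a single local operator, yet $\rho_m\star\rho_{m+1}\propto I+t(H_m+H_{m+1})+O(t^2)$ rather than exactly $I+tH$. Your sketch of the fix is correct, and the cleanest implementation uses the exponential form in Appendix~\ref{app:exp:approx}. Any $\rho\propto I+tH+O(t^2)$ can be written as $\rho\propto e^{tH(t)}$ with $H(t)=t^{-1}\bigl(\log\rho-\log(\tr\rho/\dim)\bigr)$ Hermitian and analytic near $t=0$; this class is closed under $\star$-products and under normalized partial traces. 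The cancellation through order $t^3$ in the proof of Theorem~\ref{thm:approx:distri:exp} is an algebraic identity in the Hermitian arguments $X$ and $Y$, so it survives the substitution $X\mapsto X(t)$, $Y\mapsto Y(t)$ for bounded $t$-dependent families, giving a per-step error $O(t^4)$ with a constant controlled by $\max_a\norm{H_a}$ and the (fixed) dimensions of the $\hilbert_i$. Accumulating $|\set{V}|$ such errors — $|\set{V}|$ being a fixed finite number — still yields $O(t^4)$, which closes the argument.
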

\begin{proof}
The first part of the theorem is a direct result of acyclicity:
Pick a vertex from $\set{V}$ as a root of $\set{G}$.
Since $\set{G}$ is a tree, there must exist another vertex $v\in\set{V}$ with no descendent from $\set{V}$ (note that $\set{G}$ is bipartite).
Closing the box encompassing $v$ and all of its neighbors will result in a tree with one less vertex from $\set{V}$.
 The process can be repeated until there is only one vertex from $\set{V}$ left in the resultant graph, which can be then ``closed'' into a trivial QFG.
 \par
 The second part of the theorem is a direct generalization of Corollary~\ref{cor:QFG:chain} and can be justified using mathematical induction similar to that in the corollary's proof.
 We omit the details.
\end{proof}
However, due to the approximating nature of the ``closing-the-box'' operations for QFGs, we do not have similar results as in Theorem~\ref{thm:BP:DeFG:tree} or Theorem~\ref{thm:BP:tree} for QFGs.
The process to estimate the partition sum, as mentioned in Theorem~\ref{thm:BP:QFG:tree}, is summarized in Algorithm~\ref{alg:BP:acyclic:QFG}.
Here, without loss of generality, we have assumed all the leaf vertices are from $\set{F}$, since we can always append an identity factor vertex to a leaf vertex from $\set{V}$ without changing the partition sum.
Note that, in Algorithm~\ref{alg:BP:acyclic:QFG}, we describe the ``closing-the-box'' operations as some message-passing rules, \ie, 
\begin{align}
	\label{eq:msg:acyclic:QFG:update:ia}
	m_{i\to a} &\defeq \bigstar_{c\in\nb{i}\xk{a}} m_{c\to i},\\
	\label{eq:msg:acyclic:QFG:update:ai}
	m_{a\to i} &\defeq \tr_{\nb{a}\xk i}\left( \rho_a \star \Tensor_{k\in\nb{a}\xk{i}} m_{k\to a} \right),
\end{align}
where for each $(i,a)\in\set{E}$, $m_{a\to i}$, $m_{i\to a}$ are some PSD operators acting on $\hilbert_i$.
\begin{algorithm}
\caption{Belief-Propagation Algorithm for Acyclic QFGs}
\label{alg:BP:acyclic:QFG}
\begin{algorithmic}[1]
\Require{An acyclic QFG $\left(\set{G}=(\set{V},\set{F},\set{E}\in\set{V}\times\set{F}), \mathfrak{V}=\{\hilbert_i\}_{i\in\set{V}}, \mathfrak{F}=\{\rho_a\}_{a\in\set{F}}\right)$ with all leaf vertices belonging to $\set{F}$, a root vertex $r\in\mathcal{F}$; and for each $a\in\set{F}$, $\rho_a\propto I + t\cdot H_a$ for some Hermitian operator $H_a$ and some small number $t$}
\Ensure{An estimated partition sum $\hat{Z}=Z(\set{G})+O(t^4)$}
	\State Define the bipartite graph $G'=(\set{V}',\set{F}',\set{E}')\gets \set{G}$;
	\While{$\set{F}'\neq\{r\}$}
	\ForAll{$a\in\set{F}'$ being a leaf and $i$ being its parent in $\set{G}'$}
		\State $m_{a\to i} \gets \tr_{\nb{a}\xk i}\left( \rho_a \star \Tensor_{k\in\nb{a}\xk{i}} m_{k\to a} \right)$;
		\State $\set{F}'\gets \set{F}'\xk a$, $\set{E}'\gets \set{E}'\xk (\nb{a}\times a)$;
		\Comment{Remove $a$ from $\set{G}'$.}
	\EndFor
	\ForAll{$i\in\set{V}'$ being a leaf and $a$ being its parent in $\set{G}'$}
		\State $m_{i\to a}\gets\bigstar_{c\in\nb{i}\xk{a}} m_{c\to i}$;
		\State $\set{V}'\gets \set{V}'\xk i$, $\set{E}'\gets \set{E}'\xk (i\times\nb{j})$;
		\Comment{Remove $i$ from $\set{G}'$.}
	\EndFor
	\EndWhile
	\State $\hat{Z}\gets \tr\left(\rho_r \star \Tensor_{i\in\nb{r}}m_{i\to r} \right)$;
\end{algorithmic}
\end{algorithm}
\subsection{Belief-Propagation Algorithms for QFGs}
Similar to factor graphs and DeFGs, for generic QFGs, we define BP algorithms\index{belief-propagation algorithm!for QFGs} as a heuristic generalization of the message-passing rules~\eqref{eq:msg:acyclic:QFG:update:ai} and~\eqref{eq:msg:acyclic:QFG:update:ia}.
Namely, we consider an iterative method with the updating rules
\begin{align}
	\label{eq:msg:QFG:update:ia}
	m_{i\to a}^{(t)} &\propto \bigstar_{c\in\nb{i}\xk{a}} m_{c\to i}^{(t)},\\
	\label{eq:msg:QFG:update:ai}
	m_{a\to i}^{(t)} &\propto \tr_{\nb{a}\xk i}\left( \rho_a \star \Tensor_{k\in\nb{a}\xk{i}} m_{k\to a}^{(t-1)} \right), 
\end{align}
where the initial messages $\{m_{i\to a}^{(0)}\}_{(i,a)\in\set{E}}$ are proportional to identity operators.
The resultant BP algorithms for QFGs share a similar idea as that of the quantum belief-propagation (QBP) algorithm~\cite{Leifer08}.
However, the latter was derived via a different approach and required the target QFG to be \emph{bifactor}, \ie, each vertex in $\set{F}$ has a degree at most~$2$.
Similar to BP algorithms for factor graphs and DeFGs, different updating sequences of the messages in~\eqref{eq:msg:QFG:update:ia} and~\eqref{eq:msg:QFG:update:ai} (\aka schedules) exist.
However, as usual, we will focus on the synchronous schedule (\aka flooding schedule) in this thesis.
Algorithm~\ref{alg:BP:QFG} lists BP algorithm for QFGs with the flooding schedule.
\begin{algorithm}
\caption{Belief-Propagation Algorithm for QFGs (Flooding Schedule with Timeout)}
\label{alg:BP:QFG}
\begin{algorithmic}[1]
\Require{A QFG $\left(\set{G}=(\set{V},\set{F},\set{E}\in\set{V}\times\set{F}), \mathfrak{V}=\{\hilbert_i\}_{i\in\set{V}}, \mathfrak{F}=\{\rho_a\}_{a\in\set{F}}\right)$ with all leaf vertices belonging to $\set{F}$, a root vertex $r\in\mathcal{F}$; and for each $a\in\set{F}$, $\rho_a \propto I + t\cdot H_a$ for some Hermitian operator $H_a$ and some small number $t$; $\epsilon>0$}
\Ensure{Messages $\{m_{i\to a},m_{a\to i}\in\PositiveOp(\hilbert_i)\}_{(i,a)\in\set{E}}$, $\mathsf{FLAG}\in\{\mathrm{completed},\mathrm{timeout}\}$.}
	\ForAll{$(i,a)\in\set{E}$}
		\State $m_{i\to a}\gets I$;
		\Comment{$I$ is the identity operator on $\hilbert_i$.}
	\EndFor
	\State $t\gets 0$;
	\Do
		\State $t\gets t+1$;
		\ForAll{$(i,a)\in\set{E}$}
			\State $m_{a\to i}^{(t)} \defpropto \tr_{\nb{a}\xk i}\left( \rho_a \star \Tensor_{k\in\nb{a}\xk{i}} m_{k\to a}^{(t-1)} \right)$;
		\EndFor
		\ForAll{$(i,a)\in\set{E}$}
			\State $m_{i\to a}^{(t)} \defpropto \bigstar_{c\in\nb{i}\xk{a}} m_{c\to i}^{(t)} \equiv \Tensor_{c\in\nb{i}\xk{a}} m_{c\to i}^{(t)}$;
		\EndFor
	\DoWhile{$\left(\neg\mathsf{timeout}\right) \land \left( \exists(i,a)\in\set{E} \text{ s.t. }\norm{m_{i\to a}^{(t)}-m_{i\to a}^{(t-1)}}_2>\varepsilon\text{ or }\norm{m_{a\to i}^{(t)}-m_{a\to i}^{(t-1)}}_2>\varepsilon\right)$}
	\Comment{$\mathsf{timeout}=\mathsf{false}$ unless the operating time exceeds a pre-selected waiting time.}
	\If{$\mathrm{timeout}$}
        \State{$\mathsf{FLAG}\gets\mathrm{timeout}$;}
	\Else
	    \State{$\mathsf{FLAG}\gets\mathrm{completed}$;}
	    \ForAll{$(i,a)\in\set{E}$}
	    \State{$m_{i\to a}\gets m_{i\to a}^{(t)}$;}
	    \State{$m_{a\to i}\gets m_{a\to i}^{(t)}$;}
	    \EndFor
	\EndIf
\end{algorithmic}
\end{algorithm}
\par
We define the BP fixed points of a QFG similarly to those of a factor graph or a DeFG.
\begin{definition}[BP Fixed Points of a QFG]\label{def:QFG:BP:fixed:points} \index{BP fixed points! of QFGs}
Given a QFG as provided to Algorithm~\ref{alg:BP:QFG}, a set of messages $\{m_{i\to a},m_{a\to i}\in\PositiveOp(\hilbert_i)\}_{(i,a)\in\set{E}}$ is said to be a BP fixed point if 
\begin{align}
\label{eq:fixed:QFG:ia}
m_{i\to a} &\propto \bigstar_{c\in\nb{i}\xk{a}} m_{c\to i}, \\
\label{eq:fixed:QFG:ai}
m_{a\to i} &\propto \tr_{\nb{a}\xk i}\left( \rho_a \star \Tensor_{k\in\nb{a}\xk{i}} m_{k\to a} \right).
\end{align}
In this case, the set $\{m_{i\to a},m_{a\to i}\}_{(i,a)\in\set{E}}$ is also called a set of fixed-point messages.
\end{definition}
At a BP fixed point, we estimate the partition sum using the induced partition sum (see definition below).
We will discuss an interpretation of the BP fixed points and the corresponding induced partition sums in Section~\ref{sec:QFG:bethe}.
\begin{definition}
Given a set of normalized PSD messages $\{m_{i\to a},m_{a\to i}\}_{(i,a)\in\set{E}}$, not necessarily a BP fixed point, the \emph{induced} partition sum (\wrt the messages) is defined as
\begin{equation}
Z_{\mathsf{induced}}\left(\{m_{i\to a},m_{a\to i}\}_{(i,a)\in\set{E}}\right)=\frac{\prod_{a\in\set{F}}Z_a(\{m_{i\to a}\}_{i\in\nb{a}})\cdot\prod_{i\in\set{V}}Z_i(\{m_{a\to i}\}_{a\in\nb{i}})}{\prod_{(j,a)\in\set{E}}Z_{i,a}(m_{i\to a},m_{a\to i})},
\end{equation}
where
\begin{align*}
Z_a(\{m_{i\to a}\}_{i\in\nb{a}}) &\defeq \tr\left( \rho_a \star \Tensor_{i\in\nb{a}} m_{i\to a} \right) && \forall a\in\set{F},\\
Z_i(\{m_{a\to i}\}_{a\in\nb{i}}) &\defeq \tr\left( \bigstar_{a\in\nb{i}} m_{a\to i} \right) && \forall i\in\set{V},\\
Z_{i,a}(m_{i\to a},m_{a\to i}) &\defeq \tr\left( m_{i\to a}\star m_{a\to i} \right) && \forall (i,a)\in\set{E}. \qedhere
\end{align*}
\end{definition}
\subsection{Holographic Transformations of QFGs}
The following theorem is a generalization of Theorem~\ref{thm:holant} for QFGs.
The holographic transformations of QFGs is a direct result of this theorem.
\begin{theorem}[Holant Theorem for QFGs]
Let $\set{G}=(\set{V},\set{F},\set{E})$ be a QFG representing the factorization $\rho=\bigstar_{a\in\set{F}} \rho_a$.
Given Hermitian operators $\{\hat{\phi}_{i,a}\in\HermitianOp(\hilbert_{i,a}\tensor\hilbert_i)\}_{i,a}\in\set{E}$ and $\{\phi_{i,a}\in\HermitianOp(\hilbert_i\tensor\hilbert_{i,a})\}_{i,a}\in\set{E}$ such that
\begin{equation}
\tr_{\hilbert_{i,a}}	 \left(\phi_{i,a}\star\hat{\phi}_{i,a}\right) = I_{\hilbert_i}
\end{equation}
for each $(i,a)\in\set{E}$, and supposing $\{\rho_a\}_{a\in\set{F}}$, $\{\hat{\phi}_{i,a},\phi_{i,a}\}_{(i,a)\in\set{E}}$ are proportionally close to the identity operator, the partition sum $Z(\set{G})$ can be approximated as
\begin{equation}
Z(\set{G})\defeq \tr\left( \bigstar_{a\in\set{F}} \rho_a \right) \approx
\tr\left(\Tensor_{a\in\set{F}} \hat{\rho}_a \star \Tensor_{i\in\set{V}}\hat\tau_i \right),
\end{equation}
where
\begin{align}
\hat{\rho}_a &\defeq \tr_{\nb{a}} \left( \rho_a \star \bigstar_{i\in\nb{a}} \hat{\phi}_{i,a} \right) \in \HermitianOp\left( \Tensor_{i\in\nb{a}} \hilbert_{i,a} \right) = \HermitianOp(\hilbert_{\nb{a},a}),\\
\hat{\tau}_i &= \tr_{i} \left(\bigstar_{a\in\nb{i}} \phi_{i,a} \right) \in \HermitianOp\left( \Tensor_{a\in\nb{i}} \hilbert_{i,a} \right) = \HermitianOp(\hilbert_{i,\nb{i}}).
\end{align}
\end{theorem}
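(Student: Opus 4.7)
The plan is to mimic the classical Holant-theorem derivation (Theorem~\ref{thm:holant}) but with sums/products replaced by traces and $\star$-products, and to pay the price of approximation wherever the classical step used exact distributivity. I would begin by writing the partition sum as
\begin{equation*}
Z(\set{G}) = \tr\left( \bigstar_{a\in\set{F}} \rho_a \right)
= \tr\left( \bigstar_{a\in\set{F}} \rho_a \;\star\; \bigstar_{(i,a)\in\set{E}} I_{\hilbert_i} \right),
\end{equation*}
since $I$ is the neutral element for $\star$, and then inserting the hypothesis $I_{\hilbert_i} = \tr_{\hilbert_{i,a}}(\phi_{i,a}\star\hat{\phi}_{i,a})$ on each edge $(i,a)\in\set{E}$. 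The resulting expression lives, conceptually, on the enlarged Hilbert space $\bigotimes_{i\in\set{V}}\hilbert_i \otimes \bigotimes_{(i,a)\in\set{E}}\hilbert_{i,a}$.

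Next I would use the approximate distributivity of the $\star$-product over partial traces (Proposition~\ref{prop:approx:distri} applied to the auxiliary spaces $\hilbert_{i,a}$) to pull each partial trace $\tr_{\hilbert_{i,a}}$ outside the outermost $\star$-product, so that all partial traces $\tr_{\hilbert_{i,a}}$ and the final $\tr_i$ (for $i\in\set{V}$) sit on the outside and the interior becomes a single $\star$-product of all local operators:
\begin{equation*}
Z(\set{G}) \approx \tr_{(i,a)\in\set{E}} \tr_{i\in\set{V}}
\left( \bigstar_{a\in\set{F}}\rho_a \;\star\; \bigstar_{(i,a)\in\set{E}}\phi_{i,a} \;\star\; \bigstar_{(i,a)\in\set{E}}\hat{\phi}_{i,a} \right).
\end{equation*}
All three families of operators are close to $I$, so each application of the approximate distributive law contributes only a small error.

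The key combinatorial step is then a regrouping: each $\hat{\phi}_{i,a}$ acts on $\hilbert_i\otimes\hilbert_{i,a}$ and shares the space $\hilbert_i$ only with $\rho_a$ (via the factor-side of the edge), while $\phi_{i,a}$ shares $\hilbert_i$ only with the other $\phi_{i,a'}$'s at the variable vertex $i$. Using approximate distributivity once more to split the $\star$-product along the bipartition of the edge-resolution pieces and tracing out $\hilbert_i$ at the variable side, the interior reorganises into a $\star$-product of \emph{disjoint-support} blocks: one block per factor $a$ equal to $\hat{\rho}_a \defeq \tr_{\nb{a}}\bigl(\rho_a \star \bigstar_{i\in\nb{a}}\hat{\phi}_{i,a}\bigr)$, living on $\hilbert_{\nb{a},a}$, and one block per variable $i$ equal to $\hat{\tau}_i \defeq \tr_i\bigl(\bigstar_{a\in\nb{i}}\phi_{i,a}\bigr)$, living on $\hilbert_{i,\nb{i}}$. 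Because the supports of different $\hat{\rho}_a$'s (and of different $\hat{\tau}_i$'s) are pairwise disjoint, their $\star$-products collapse to tensor products, giving
\begin{equation*}
Z(\set{G}) \approx \tr\left( \Tensor_{a\in\set{F}} \hat{\rho}_a \;\star\; \Tensor_{i\in\set{V}} \hat{\tau}_i \right),
\end{equation*}
which is the claim.

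The main obstacle will be bookkeeping the error. Each use of Proposition~\ref{prop:approx:distri} introduces an $O(t^3)$ (or $O(t^4)$ for the outermost trace via Theorem~\ref{thm:approx:distri}) error, and we invoke it once per edge plus once per variable vertex, so I would need to argue that the collection of near-identity operators $\{\rho_a,\phi_{i,a},\hat{\phi}_{i,a}\}$ remains proportionally close to $I$ after each intermediate $\star$-product and partial trace (a fact that follows from continuity of $\exp,\log,\tr$ in a neighbourhood of $I$) so that the hypotheses of the distributivity results continue to hold and the errors can be combined additively into a single overall approximation. The other subtle point is verifying that the two sides agree \emph{exactly} when $t=0$ (all operators equal to $I$, up to scalars), which serves as the base case making the Taylor-style approximation meaningful.
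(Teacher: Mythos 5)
Your proposal is correct and follows essentially the same approach as the paper's proof: both pass through the same intermediate expression $\tr\bigl(\bigstar_{(i,a)\in\set{E}}(\hat{\phi}_{i,a}\star\phi_{i,a})\star\bigstar_{a\in\set{F}}\rho_a\bigr)$ using the hypothesis $\tr_{\hilbert_{i,a}}(\phi_{i,a}\star\hat{\phi}_{i,a})=I_{\hilbert_i}$, the approximate distributivity of $\star$ over (partial) trace, and the fact that $\star$ reduces to $\otimes$ on disjoint supports. The only difference is direction: the paper starts from the right-hand side and contracts down to $Z(\set{G})$, whereas you start from $Z(\set{G})$ and expand to the right-hand side; your closing remarks on error bookkeeping and the $t=0$ base case are a sensible supplement that the paper leaves implicit.
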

\begin{proof}
	The proof is based on the approximated distributivity of $\star$ over trace operation~\eqref{eq:distributive:star}.
	Namely, 
	\begin{fleqn}\begin{equation*}
		\tr\left(\Tensor_{a\in\set{F}} \hat{\rho}_a \star \Tensor_{i\in\set{V}}\hat\tau_i \right)
		\overset{\text{(a)}}{=} \tr\left(\bigstar_{a\in\set{F}} \hat{\rho}_a \star \bigstar_{i\in\set{V}}\hat\tau_i \right)
	\end{equation*}\end{fleqn}
	\begin{fleqn}\begin{equation*}
		\phantom{\tr\left(\Tensor_{a\in\set{F}} \hat{\rho}_a \star \Tensor_{i\in\set{V}}\hat\tau_i \right)}
		= \tr\left(\bigstar_{a\in\set{F}} \tr_{\nb{a}}\left( \rho_a \star \bigstar_{i\in\nb{a}} \hat{\phi}_{i,a} \right) \star \bigstar_{i\in\set{V}} \tr_{i} \left(\bigstar_{a\in\nb{i}} \phi_{i,a} \right) \right)
	\end{equation*}\end{fleqn}
	\begin{fleqn}\begin{equation*}
		\phantom{\tr\left(\Tensor_{a\in\set{F}} \hat{\rho}_a \star \Tensor_{i\in\set{V}}\hat\tau_i \right)}
		\overset{\text{(b)}}{\approx} \tr\left(\bigstar_{(i,a)\in\set{E}}\left( \hat{\phi}_{i,a} \star \phi_{i,a} \right) \star \bigstar_{a\in\set{F}} \rho_a \right)
	\end{equation*}\end{fleqn}
	\begin{fleqn}\begin{equation*}
		\phantom{\tr\left(\Tensor_{a\in\set{F}} \hat{\rho}_a \star \Tensor_{i\in\set{V}}\hat\tau_i \right)}
		\overset{\text{(c)}}{\approx} \tr\left(\bigstar_{(i,a)\in\set{E}} \tr_{\hilbert_{i,a}} \left( \hat{\phi}_{i,a} \star \phi_{i,a} \right) \star \bigstar_{a\in\set{F}} \rho_a \right)
	\end{equation*}\end{fleqn}
	\begin{fleqn}\begin{equation*}
		\phantom{\tr\left(\Tensor_{a\in\set{F}} \hat{\rho}_a \star \Tensor_{i\in\set{V}}\hat\tau_i \right)}
		= \tr\left(\bigstar_{(i,a)\in\set{E}} I_{\hilbert_i} \star \bigstar_{a\in\set{F}} \rho_a \right)
	\end{equation*}\end{fleqn}
	\begin{fleqn}\begin{equation*}
		\phantom{\tr\left(\Tensor_{a\in\set{F}} \hat{\rho}_a \star \Tensor_{i\in\set{V}}\hat\tau_i \right)}
		= \tr\left(I_{\hilbert_\set{V}} \star \bigstar_{a\in\set{F}} \rho_a \right)
		= \tr\left(\bigstar_{a\in\set{F}} \rho_a \right)
		= Z(\set{G}),
	\end{equation*}\end{fleqn}
where step~(a) is due to the convention that $\rho_\system{A}\star \rho_\system{B} \equiv \rho_\system{A}\tensor \rho_\system{B}$ given $\rho_\system{A}$ and $\tau_\system{B}$ acting on isolated Hilbert spaces and where we have used~\eqref{eq:distributive:star} in step~(b) and~(c).
\end{proof}
We define the holographic transform of $\set{G}$ (\wrt $\{\hat{\phi}_{i,a},\phi_{i,a}\}_{(i,a)\in\set{E}}$) to be the QFG $\hat{\set{G}}=(\set{E},\set{F}\cup\set{V},\setdef{(e,e_1),(e,e_2)}{e\in\set{E}})$ representing the factorization
\begin{equation}
\hat{\rho} = \Tensor_{a\in\set{F}} \hat{\rho}_a \star \Tensor_{i\in\set{V}}\hat\tau_i
= \bigstar_{a\in\set{F}} \hat{\rho}_a \star \bigstar_{i\in\set{V}}\hat\tau_i.
\end{equation}
\section{Generalization of Bethe's approximation for QFGs}\label{sec:QFG:bethe}
Bethe's approximation (see Section~\ref{subsec:FG:Bethe}) is a successful approach to interpret BP algorithms for factor graphs.
In this section, we generalize the method to QFGs.
In particular, we prove a generalized version of Theorem~\ref{thm:bethe:BPA} (\ie,~\cite[Theorem 2]{yedidia2005constructing}).
\subsection{From Local Density Operators to Global Density Operators}
An important result following from Theorem~\ref{thm:approx:distri} and Proposition~\ref{prop:approx:distri} is the ability to construct the global density operators from the local operators under suitable conditions.
This is stated in the following lemma.
Notice that this lemma serves as a quantum analog of Lemma 2 in~\cite{mori2015loop}.
\begin{lemma}\label{lemma:QFG:split}
Consider an acyclic QFG $(\set{V},\set{F},\set{E})$.
Given the density operators $\{\sigma_{a}\in \DensOp(\Tensor_{i\in\nb{a}}\hilbert_i)\}_{a\in\set{F}}$ and $\{\sigma_{i}\in\DensOp(\hilbert_i)\}_{i\in\set{V}}$, all of which are proportionally close to the identity operator, such that
\begin{equation}
\sigma_i = \tr_{\partial{a}\xk i}(\sigma_a) \qquad \forall(i,a)\in\set{E},
\end{equation}
there exists a global density operator $\sigma \in \DensOp(\Tensor_{i\in\set{V}}\hilbert_i)$ such that
\begin{align}
\label{eq:QFG:marginal:a}
\tr_{\set{V}\xk \nb{a}}(\sigma) & \approx \sigma_a && \forall a\in\set{F},\\
\label{eq:QFG:marginal:i}
\tr_{\set{V}\xk i}(\sigma) & \approx \sigma_i && \forall i\in\set{V},
\end{align}
where the approximations ``$\:\approx$'' in the above equations are based on~\eqref{eq:distributive:star}.
\end{lemma}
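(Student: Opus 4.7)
\bigskip

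\noindent\textbf{Proof plan.} My plan is to mimic the classical template from Lemma~\ref{prop:tree:Fb:equals:Fg} (and Lemma~2 of~\cite{mori2015loop}), where the global belief on an acyclic factor graph is recovered from the local beliefs via the formula $b(\vx)=\prod_a b_a(\vx_\nb{a})\cdot\prod_i b_i(x_i)^{1-d_i}$. The natural quantum analogue is to define
\begin{equation}\label{eq:plan:candidate}
\sigma \;\propto\; \bigstar_{a\in\set{F}}\sigma_a \;\star\; \bigstar_{i\in\set{V}}\sigma_i^{\star(1-d_i)},
\end{equation}
where $\sigma_i^{\star k}\defeq\exp(k\log\sigma_i)$ extends the $\star$-power to negative integer exponents. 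Because every $\sigma_a$ and $\sigma_i$ is proportionally close to the identity, so is the operator on the right-hand side of~\eqref{eq:plan:candidate}; in particular, its spectrum is strictly positive, which makes $\sigma$ automatically a bona fide density operator after normalization (positivity comes from the $\exp(\,\cdot\,)$ representation, and trace-$1$ from the proportionality constant).

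\medskip

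\noindent The next step is to verify the marginal identities~\eqref{eq:QFG:marginal:a} and~\eqref{eq:QFG:marginal:i} by induction on the tree structure. I would root the QFG at an arbitrary vertex and peel leaves inward. A leaf factor $a$ with unique neighbor $i$ satisfies $\tr_{\nb{a}\xk i}(\sigma_a)=\sigma_i$ exactly, which implies (using $\log\sigma_a - \log\sigma_i\otimes I$ and the exp-log form) that $\sigma_a \star \sigma_i^{\star(-1)}$ is a ``conditional'' operator that, when partially traced over $\nb{a}\xk i$, approximates the identity on $\hilbert_i$ in the sense of~\eqref{eq:distributive:star}. Peeling this leaf from~\eqref{eq:plan:candidate} and applying Proposition~\ref{prop:approx:distri} to pull the partial trace inside the $\star$-product then gives an expression of the same form as~\eqref{eq:plan:candidate} but on the reduced sub-tree. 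Iterating, any marginal $\tr_{\set{V}\xk\nb{a}}(\sigma)$ collapses to $\sigma_a$, and analogously for $\tr_{\set{V}\xk i}(\sigma)=\sigma_i$.

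\medskip

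\noindent Concretely, the induction step uses Proposition~\ref{prop:approx:distri} to rewrite $\tr_{\set{I}}(\rho_a\star\rho_b)\approx \rho_a\star\tr_{\set{I}}(\rho_b)$ whenever the indices $\set{I}$ are disjoint from $\partial a$. Each such rewrite incurs an error of order $O(t^3)$ at the operator level or $O(t^4)$ after a final trace (by Theorem~\ref{thm:approx:distri}); the induction therefore produces a sum of $O(\size{\set{E}})$ such error terms, all absorbed by the ``$\approx$'' in the statement of the lemma, which is defined precisely relative to~\eqref{eq:distributive:star}.

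\medskip

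\noindent The hard part I expect is the bookkeeping of how the $\star$-power $\sigma_i^{\star(1-d_i)}$ interacts with the trace: unlike in the classical case, where $b_i(x_i)^{d_i-1}$ commutes with everything, here $\log\sigma_i$ sits inside an $\exp$ together with logarithms of the neighboring $\sigma_a$'s that do not commute with it. This is precisely where the closeness-to-identity hypothesis is essential, because the Baker--Campbell--Hausdorff corrections are of the order controlled by Theorem~\ref{thm:approx:distri} and Proposition~\ref{prop:approx:distri}. A secondary subtlety is ensuring that the sequence of ``closing-the-box'' operations used to justify each inductive step follows a valid peeling order on the tree, so that at every step one of the two operators involved in an application of~\eqref{eq:distributive:star:partial} already has the form required by Proposition~\ref{prop:approx:distri} (i.e.\ is supported on a set containing the traced-out indices). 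The acyclicity of the QFG guarantees the existence of such an order, exactly as in the classical proof of Theorem~\ref{thm:BP:QFG:tree}.
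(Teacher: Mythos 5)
Your candidate $\sigma \propto \bigstar_{a}\sigma_a \star \bigstar_{i}\sigma_i^{\star(1-d_i)} = \exp\bigl(\sum_a \log\sigma_a - \sum_i(d_i-1)\log\sigma_i\bigr)$ is exactly the operator the paper constructs, and your ``peel leaves inward'' induction, driven by the marginal consistency $\tr_{\nb{c}\xk i}(\sigma_c)=\sigma_i$ and repeated applications of Proposition~\ref{prop:approx:distri} to push a partial trace past the $\star$-product, is the same telescoping argument the paper carries out by BFS levels $\nb{^n a}$ from the target $a$. This is essentially the paper's proof, just narrated from the leaves inward rather than by distance-from-$a$ strata.
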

\begin{proof}
Let $\sigma$ be a density operator acting on $\Tensor_{i\in\set{V}}\hilbert_i$ such that
\begin{equation}
\sigma \propto \exp\left( \sum_{a\in\set{F}} \log{\sigma_a} - \sum_{i\in\set{V}}(d_i-1)\log{\sigma_i} \right),
\end{equation}
where $d_i=\deg(i)$ for each $i\in\set{V}$.
We claim that both~\eqref{eq:QFG:marginal:a} and~\eqref{eq:QFG:marginal:i} hold for this $\sigma$.
To verify~\eqref{eq:QFG:marginal:a}, we have
\begin{fleqn}\begin{equation*}
\tr_{\set{V}\xk\nb{a}}(\sigma)
\propto \tr_{\set{V}\xk\nb{a}}\left( \exp\left( \sum_{a\in\set{F}} \log{\sigma_a} - \sum_{i\in\set{V}}(d_i-1)\log{\sigma_i} \right) \right)
\end{equation*}\end{fleqn}
\begin{fleqn}\begin{equation*}\phantom{\tr_{\set{V}\xk\nb{a}}(\sigma)}
\overset{\text{(a)}}{=} \tr_{\set{V}\xk\nb{a}}\left( \exp\left( \log{\sigma_a} + \sum_{n=0}^N \sum_{i\in\nb{^na},\atop c\in\nb{^*_a i}} \left( \log{\sigma_c} - \log{\sigma_i} \right) \right) \right)
\end{equation*}\end{fleqn}
\begin{fleqn}\begin{equation*}\phantom{\tr_{\set{V}\xk\nb{a}}(\sigma)}
= \tr_{\set{V}\xk\nb{a}}\left( \exp\left( \log{\sigma_a} + \sum_{n=0}^{N-1} \sum_{i\in\nb{^na},\atop c\in\nb{^*_a i}} \left( \log{\sigma_c} - \log{\sigma_i} \right) \right)\star \bigstar_{i\in\nb{^Na},\atop c\in\nb{^*_a i}} \left(\sigma_c \star \sigma_i^{-1}\right) \right)
\end{equation*}\end{fleqn}
\begin{fleqn}\begin{equation*}\phantom{\tr_{\set{V}\xk\nb{a}}(\sigma)}
\overset{\text{(b)}}{\approx} \tr_{\set{V}\xk\nb{a}}\left( \exp\left( \log{\sigma_a} + \sum_{n=0}^{N-1} \sum_{i\in\nb{^na},\atop c\in\nb{^*_a i}} \left( \log{\sigma_c} - \log{\sigma_i} \right) \right)\star \bigstar_{i\in\nb{^Na},\atop c\in\nb{^*_a i}} \left(\tr_{\nb{c}\xk{i}}(\sigma_c) \star \sigma_i^{-1}\right) \right)
\end{equation*}\end{fleqn}
\begin{fleqn}\begin{equation*}\phantom{\tr_{\set{V}\xk\nb{a}}(\sigma)}
= \tr_{\set{V}\xk\nb{a}}\left( \exp\left( \log{\sigma_a} + \sum_{n=0}^{N-1} \sum_{i\in\nb{^na},\atop c\in\nb{^*_a i}} \left( \log{\sigma_c} - \log{\sigma_i} \right) \right) \right)
\end{equation*}\end{fleqn}
\begin{fleqn}\begin{equation*}\phantom{\tr_{\set{V}\xk\nb{a}}(\sigma)}
\approx \cdots \approx \tr_{\set{V}\xk\nb{a}}\left( \exp\left( \log{\sigma_a} \right) \right) = \sigma_a,
\end{equation*}\end{fleqn}
where $\nb{^na}$ denotes the set of vertices in $\set{V}$ reachable from $a\in\set{F}$ after walking through $n$ vertices in $\set{F}$ (without backtracking), where $\nb{^*_a i}$ denotes the set of the neighbors of $i$ excluding the vertex through which $a$ reaches $i$, and where $N$ is the largest integer such that $\nb{^Na}$ is nonempty.
Step~(a) is due to the tree structure.
Step~(b) and the ``$\approx$'' on the last line follow directly from Proposition~\ref{prop:approx:distri}.
\par
We omit the verification of~\eqref{eq:QFG:marginal:a} since the process is very similar.
\end{proof}
It is worth noting that if we had defined the global density operator $\tilde\sigma$ as
\begin{equation}
\tilde\sigma \defeq \exp\left( \sum_{a\in\set{F}} \log{\sigma_a} - \sum_{i\in\set{V}} (d_i-1) \log{\sigma_i} \right),
\end{equation}
we would have encountered the problem that $\tr(\tilde\sigma)\neq 1$, despite that a similar result holds for acyclic (classical) factor graphs, \ie,
\begin{equation}
\sum_{\vx} \prod_{a\in\set{F}} \frac{b_a(\vx_\nb{a})} {\prod_{i\in\nb{a}} b_i(x_i)}\prod_{i\in\set{V}} b_i(x_i) = 1,
\end{equation}
given the \pmfs $\{b_a\}_{a\in\set{F}}$, $\{b_i\}_{i\in\set{V}}$ such that $\sum_{\vx_{\nb{a}\xk{i}}} b_a(\vx_\nb{a}) = b_i(x_i)$ for all $(i,a)\in\set{E}$.
However, approximately speaking (using~\eqref{eq:distributive:star}), we have
\begin{equation}\label{eq:tilde:sigma:tr:1}
\tr(\tilde\sigma) = \tr_{\nb{a}}\left( \tr_{\set{V}\xk\nb{a}}\left( \exp\left( \sum_{a\in\set{F}} \log{\sigma_a} - \sum_{i\in\set{V}}(d_i-1) \log{\sigma_i} \right)\! \right)\! \right)
\approx \tr_{\nb{a}}(\sigma_a) = 1.
\end{equation}
Thus, we can treat $\tilde{\sigma}$ as an approximate global density operator.
\subsection{Free Energies of QFGs}
We define quantum analogies of the Helmholtz free energy and the Gibbs free energy function as follows.
\begin{definition} \label{def:QFG:energy} \index{free energy!quantum Helmholtz} \index{free energy!quantum Gibbs}
Given a QFG $\set{G}$ representing the factorization $\rho=\bigstar_{a\in\set{F}} \rho_a$, we define the \emph{quantum Helmholtz free energy} and the \emph{quantum Gibbs free energy function} \wrt the density operator $\sigma\in\DensOp(\hilbert_\set{V})$ as
\begin{align}
	\helmholtz &\defeq - \log{Z(\set{G})}, \\
    \gibbs(\sigma) &\defeq - \sum_{a \in \set{F}} \tr\left( \sigma \cdot (\log{\rho_a} \tensor I_{\hilbert_{\set{V}\xk\nb{a}}}) \right) - \qEntropy(\sigma),
\end{align}
where $\qEntropy(\sigma)$ is the von Neumann entropy of $\sigma$.
\end{definition}
\begin{theorem}\label{thm:QFG:Gibbs:geq:Helmholtz}
$\gibbs(\sigma)$ is lower bounded by $\helmholtz$.
In particular, 
\begin{equation}
\gibbs(\sigma) = \helmholtz + \infdiv{\sigma}{\tilde\rho},
\end{equation}
where $\tilde\rho\defeq\rho/Z(\set{G})\propto \bigstar_{a\in\set{F}}\rho_a$ is a density operator.
\end{theorem}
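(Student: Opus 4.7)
The plan is to show directly that $\gibbs(\sigma) - \helmholtz = \infdiv{\sigma}{\tilde\rho}$ by unfolding the definitions, and then invoke Klein's inequality (which was stated earlier in the excerpt) to conclude that this quantity is non-negative.

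First I would compute $\log\tilde\rho$. By construction, $\tilde\rho = Z(\set{G})^{-1}\cdot\rho = Z(\set{G})^{-1} \cdot \exp\bigl(\sum_{a\in\set{F}} \log\rho_a \tensor I_{\hilbert_{\set{V}\xk\nb{a}}}\bigr)$, so that
\begin{equation*}
\log\tilde\rho = \sum_{a\in\set{F}} \log\rho_a \tensor I_{\hilbert_{\set{V}\xk\nb{a}}} \,-\, \log Z(\set{G}) \cdot I_{\hilbert_\set{V}}.
\end{equation*}
Here I am using that $\rho$ and $Z(\set{G})\cdot I$ commute, so the logarithm of their product splits additively, and the definition of $\helmholtz$ gives $-\log Z(\set{G}) = \helmholtz$.

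Next I would substitute into the definition of the quantum information divergence:
\begin{align*}
\infdiv{\sigma}{\tilde\rho}
&= \tr(\sigma\log\sigma) - \tr(\sigma\log\tilde\rho) \\
&= -\qEntropy(\sigma) - \sum_{a\in\set{F}} \tr\!\left(\sigma\cdot(\log\rho_a \tensor I_{\hilbert_{\set{V}\xk\nb{a}}})\right) + \log Z(\set{G})\cdot\tr(\sigma).
\end{align*}
Since $\sigma$ is a density operator, $\tr(\sigma)=1$, and comparing with Definition~\ref{def:QFG:energy} we see that the first two terms on the right-hand side are exactly $\gibbs(\sigma)$, while the last term is $-\helmholtz$. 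Rearranging yields the claimed identity $\gibbs(\sigma) = \helmholtz + \infdiv{\sigma}{\tilde\rho}$.

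Finally, Klein's inequality (stated earlier in Section~\ref{sec:basic_quantum_theory}) guarantees $\infdiv{\sigma}{\tilde\rho}\geqslant 0$, with equality iff $\sigma=\tilde\rho$; this immediately upgrades the identity to the lower bound $\gibbs(\sigma)\geqslant \helmholtz$. There is no real obstacle in this proof: the only subtlety worth double-checking is that $\tilde\rho$ is well-defined as a density operator (it is PSD because $\rho$ is, and unit-trace by construction) and that $\log\tilde\rho$ makes sense, which requires $\tilde\rho>0$; this is assured whenever the $\{\rho_a\}$ are strictly positive, which is the standing setup for $\star$-products via~\eqref{eq:def:star}, and otherwise handled by the usual $\lim\lambda\log\lambda=0$ continuity convention mentioned in Remark~\ref{rem:decompose:logA:AlogA}.
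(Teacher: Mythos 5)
Your proof is correct and follows essentially the same route as the paper: a direct unfolding of the definitions to show $\gibbs(\sigma)-\helmholtz=\infdiv{\sigma}{\tilde\rho}$, followed by Klein's inequality. The only cosmetic difference is that you expand $\log\tilde\rho$ first and work from the divergence side, whereas the paper starts from $\gibbs(\sigma)-\helmholtz$ and rewrites toward the divergence; the two chains of equalities are the same read in opposite directions.
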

\begin{proof}
This is a result of the direct computation, \ie,
\begin{fleqn}\begin{equation*}
\gibbs(\sigma) - \helmholtz = - \sum_{a \in \set{F}} \tr\left( \sigma \cdot (\log{\rho_a} \tensor I_{\hilbert_{\set{V}\xk\nb{a}}}) \right) - \qEntropy(\sigma) + \log(Z(\set{G}))
\end{equation*}\end{fleqn}
\begin{fleqn}\begin{equation*}\phantom{\gibbs(\sigma) - \helmholtz}
= - \tr\left( \sigma \cdot \left(\sum_{a\in\set{F}} \log{\rho_a} - \log{Z(\set{G})} \cdot I \right) \right) + \tr\left(\sigma\cdot \log{\sigma} \right)
\end{equation*}\end{fleqn}
\begin{fleqn}\begin{equation*}\phantom{\gibbs(\sigma) - \helmholtz}
= - \tr\left( \sigma \cdot \log\left(\exp{\left(\sum_{a\in\set{F}} \log{\rho_a} - \log{Z(\set{G})} \cdot I \right)} \right) \right) + \tr\left(\sigma\cdot \log{\sigma} \right)
\end{equation*}\end{fleqn}
\begin{fleqn}\begin{equation*}\phantom{\gibbs(\sigma) - \helmholtz}
= - \tr\left( \sigma \cdot \log{\tilde\rho} \right) + \tr\left(\sigma\cdot \log{\sigma} \right)
= \infdiv{\sigma}{\tilde\rho}.
\end{equation*}\end{fleqn}
By the Klein's inequality, we have $\gibbs(\sigma) \geq \helmholtz$, with equality if and only if $\sigma = \tilde\rho$.
\end{proof}
Theorem~\ref{thm:QFG:Gibbs:geq:Helmholtz} allows us to compute $\helmholtz$ via minimizing $\gibbs(\sigma)$ over all possible $\sigma\in\DensOp(\hilbert_\set{V})$.
However, such an optimization problem is, in general, not tractable due to the enormous dimension of the density operator $\sigma$.
In analogy to the Bethe free energy (see Definition~\ref{def:bethe:energy}), we propose the following quantum Bethe free energy as an approximation to $\gibbs$.
\begin{definition}[Quantum Bethe Free Energy]\label{def:quantum:bethe:energy} \index{free energy!quantum Bethe}
Given a QFG $\set{G}$ representing the factorization $\rho=\bigstar_{a\in\set{F}} \rho_a$, the \emph{quantum Bethe free energy function} is the function
\begin{equation}\label{eq:quantum:bethe:energy}
	\begin{aligned}
	\bethe(\{\sigma_a\}_{a \in \set{F}}, \{\sigma_i\}_{i \in \set{V}})) \defeq
	&-\sum_{a\in\set{F}} \tr\left( \sigma_a\cdot \log{\rho_a} \right) +  \sum_{a\in\set{F}} \tr\left( \sigma_a\cdot \log{\sigma_a} \right)\\
	&-\sum_{i\in\set{V}} (d_i-1) \cdot \tr\left( \sigma_i\cdot \log{\sigma_i} \right), 
	\end{aligned}
\end{equation}
where the domain of $\bethe$ is
\[
\set{L}(\set{G}) \defeq \setdef*{\big(\{\sigma_a\}_{a\in\set{F}},\{\sigma_i\}_{i\in\set{V}}\big)}
	{\begin{array}{ll}
	\sigma_a\in\DensOp(\Tensor_{i\in\nb{a}}\hilbert_i) &\forall a\in\set{F}\\
	\sigma_i\in\DensOp(\hilbert_i) &\forall i\in\set{V}\\
	\tr_{\nb{a}\xk{i}} \sigma_a = \sigma_i &\forall (i,a)\in\set{E}
	\end{array}}.\qedhere
\]
\end{definition}
For acyclic QFGs, the quantum Bethe free energy approximates the Gibbs free energy, as shown in the following theorem.
\begin{theorem}\label{thm:quantum:Bethe:Gibbs}
Consider an acyclic QFG $(\set{V},\set{F},\set{E})$ representing the factorization $\rho = \bigstar_{a\in\set{F}} \rho_a$.
Let $\sigma$ be a global density operator, and for each $a\in\set{F}$, let $\sigma_a \defeq \tr_{\set{V}\xk\nb{a}} (\sigma)$, and for each $i\in\set{V}$, let $\sigma_i \defeq \tr_{\set{V}\xk i}(\sigma)$.
Supposing $\{\rho_a\}_{a\in\set{F}}$ and $\sigma$ are proportionally close to the identity operators, \ie, $\rho_a\propto I + t\cdot H_a$ and $\sigma\propto I + t\cdot H$ for some Hermitian operators $H_a$ and $H$ and some small number $t$, then
\begin{equation}\label{eq:quantum:Bethe:Gibbs}
\gibbs(\sigma) = \bethe(\{\sigma_a\}_{a\in\set{F}}, \{\sigma_i\}_{i\in\set{V}})) + O(t^3).
\end{equation}
\end{theorem}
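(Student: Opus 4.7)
The plan is to first isolate the ``energy'' parts of $\gibbs$ and $\bethe$, which agree \emph{exactly}, and then to control the difference of the entropy parts. By the defining property of the partial trace, $\tr\bigl(\sigma\cdot(\log\rho_a\tensor I_{\hilbert_{\set{V}\xk\nb{a}}})\bigr) = \tr(\sigma_a\log\rho_a)$ for every $a\in\set{F}$, so subtracting the two free energies leaves
\[
\gibbs(\sigma) - \bethe\bigl(\{\sigma_a\}_a,\{\sigma_i\}_i\bigr) = -\qEntropy(\sigma) - \sum_{a\in\set{F}}\tr(\sigma_a\log\sigma_a) + \sum_{i\in\set{V}}(d_i-1)\tr(\sigma_i\log\sigma_i).
\]
It therefore suffices to show that the right-hand side is $O(t^3)$.

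To exhibit the above as a difference of two traces, I would bring in the auxiliary operator from Lemma~\ref{lemma:QFG:split}, namely $\tilde\sigma \defeq \exp\bigl(\sum_{a}\log\sigma_a - \sum_{i}(d_i-1)\log\sigma_i\bigr)$ (with the usual convention of tensoring against identities on absent factors), together with its normalization $Z_\sigma \defeq \tr(\tilde\sigma)$. By construction, $\log\tilde\sigma = \sum_{a}\log\sigma_a\tensor I - \sum_{i}(d_i-1)\log\sigma_i\tensor I$, so applying $\tr(\sigma\,\cdot\,)$ and again invoking the partial-trace identity yields
\[
\tr(\sigma\log\tilde\sigma) = \sum_{a\in\set{F}}\tr(\sigma_a\log\sigma_a) - \sum_{i\in\set{V}}(d_i-1)\tr(\sigma_i\log\sigma_i).
\]
Substituting reduces the task to proving that $\tr(\sigma\log\sigma) - \tr(\sigma\log\tilde\sigma) = O(t^3)$. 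The tree-induction of Lemma~\ref{lemma:QFG:split}, combined with the estimate $Z_\sigma = 1 + O(t^4)$ coming from~\eqref{eq:tilde:sigma:tr:1}, forces $\sigma$ and $\tilde\sigma/Z_\sigma$ to agree up to an $O(t^3)$ perturbation in operator norm, since on a tree the marginal data $\{\sigma_a\},\{\sigma_i\}$ determines the global state canonically through the very formula defining $\tilde\sigma$, and both states are proportionally close to $I$.

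To convert this operator-norm proximity into the desired trace bound, I would write $\sigma = I/D + t\Delta_1 + t^2\Delta_2 + O(t^3)$ and analogously for $\tilde\sigma/Z_\sigma$, where $D = \dim\hilbert_{\set{V}}$, and expand $\log\sigma - \log\tilde\sigma$ using the integral representation $\log X = \int_0^\infty\bigl((1+u)^{-1}I - (X+u)^{-1}\bigr)\,\D u$ (valid for $X > 0$). Because Lemma~\ref{lemma:QFG:split} forces the $t^0$, $t^1$ and $t^2$ coefficients of $\sigma$ and $\tilde\sigma/Z_\sigma$ to match (their marginals coincide up to third order), the resulting expansion gives $\tr\bigl(\sigma(\log\sigma - \log\tilde\sigma)\bigr) = O(t \cdot t^3) + O(t^6) = O(t^3)$, completing the reduction; the $\log Z_\sigma$ remainder contributes only $O(t^4)$ and is absorbed harmlessly.

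The main obstacle I anticipate is the bookkeeping: Lemma~\ref{lemma:QFG:split} is currently stated with a qualitative ``$\approx$'', so it would need to be sharpened into an explicit operator-norm bound $\lVert\sigma - \tilde\sigma/Z_\sigma\rVert = O(t^3)$, obtained by tracking the $O(t^3)$ error of each invocation of Proposition~\ref{prop:approx:distri} through the tree recursion and checking that the number of such invocations (hence the multiplicative constant) depends only on $\size{\set{V}}$ and $\size{\set{F}}$, and not on $t$. A secondary subtlety is a careful unification of the hypothesis ``proportionally close to $I$'' for the $\rho_a$'s and for $\sigma$: one must ensure that a single scale parameter $t$ governs both, so that the induced closeness of the marginals $\sigma_a,\sigma_i$ to scaled identities is also of order $t$, making all error estimates compatible.
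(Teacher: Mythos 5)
Your proposal mirrors the paper's proof for its first two-thirds: you cancel the energy terms exactly via the partial-trace identity, introduce the auxiliary state $\tilde\sigma$, note $Z_\sigma \defeq \tr(\tilde\sigma) = 1 + O(t^4)$, and reduce the claim to $\tr(\sigma\log\sigma) - \tr(\sigma\log(\tilde\sigma/Z_\sigma)) = O(t^3)$. The paper then closes this by passing to the quantum exponential family (Appendix~\ref{app:QEF}), using Lemma~\ref{lemma:QFG:split} to write $\boldeta(\sigma) = \boldeta(\tilde\sigma/Z_\sigma) + O(t^3)$ in the dual coordinates, and invoking the differentiability of $f(\boldeta) \defeq -\tr\bigl(\sigma\log\sigma(\boldeta)\bigr)$ from Appendix~\ref{app:differentiability}; you instead propose to Taylor-expand the two states directly and to control the difference of logarithms via the integral representation of the matrix logarithm. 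Your route is more elementary, but the dual-parameter coordinates are exactly those in which ``marginal agreement'' becomes a linear statement, so the bookkeeping you flag as your main obstacle is indeed harder your way.

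There is, however, a gap that your proposal --- and, one layer deeper, the paper's proof as well --- does not close. You assert that agreement of all local marginals of $\sigma$ and $\tilde\sigma/Z_\sigma$ to $O(t^3)$ forces their Taylor expansions to agree to $O(t^3)$, on the grounds that the marginals ``determine the global state canonically'' on a tree. This is false for an arbitrary global density operator. Writing $\sigma = I/D + t\Delta + O(t^2)$ with $D = \dim(\Tensor_{i\in\set{V}}\hilbert_i)$ and $\tr\Delta = 0$, the perturbation $\Delta$ can be nonzero yet have \emph{every} local marginal zero: take $\Delta = A\tensor B\tensor C$ with each one-site factor traceless Hermitian, on a three-site chain QFG. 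Then $\sigma_a = I/\dim\hilbert_{\nb{a}} + O(t^2)$ for every $a$, so $\tilde\sigma/Z_\sigma = I/D + O(t^2)$, and the marginals of $\sigma$ and $\tilde\sigma/Z_\sigma$ do match to the advertised order, yet $\sigma - \tilde\sigma/Z_\sigma = t\Delta + O(t^2)$. Worse, the theorem's conclusion itself fails for such $\sigma$: the entropy part of $\bethe$ is $-\log D + O(t^4)$, while $\qEntropy(\sigma) = \log D - \tfrac{D}{2}t^2\tr(\Delta^2) + O(t^3)$, so $\gibbs(\sigma) - \bethe = O(t^2)$. The paper's proof tacitly makes the same assumption in the step $\tr(\sigma\log\sigma) = -f(\boldeta(\sigma))$, which requires $\sigma(\boldeta(\sigma)) = \sigma$, \ie, that $\sigma$ lies in the graph's quantum exponential family. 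The needed repair, for both arguments, is to restrict $\sigma$ accordingly --- the quantum analogue of the way Lemma~\ref{prop:tree:Fb:equals:Fg} applies to the specific mock-up distribution of~\eqref{eq:belief:mock:up}, not to arbitrary distributions with the same marginals. With that hypothesis made explicit, and with the quantitative sharpening of Lemma~\ref{lemma:QFG:split} that you yourself correctly anticipate needing, your remaining steps go through.
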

\begin{proof}
By definition of partial trace, it holds that 
\[
\tr\left( \sigma \cdot (\log{\rho_a} \tensor I_{\hilbert_{\set{V}\xk\nb{a}}}) \right) = \tr\left( \sigma_a\cdot \log{\rho_a} \right).
\]
Considering the definitions of $\gibbs$ and $\bethe$, it suffices to show  
\[
\sum_{a\in\set{F}} \tr\left( \sigma_a\cdot \log{\sigma_a} \right) -\sum_{i\in\set{V}} (d_i-1) \cdot \tr\left( \sigma_i\cdot \log{\sigma_i} \right) - \tr\left(\sigma\cdot\log{\sigma}\right) = O(t^3).
\]
By letting $\tilde\sigma \defeq \exp\left( \sum_{a\in\set{F}} \log{\sigma_a} - \sum_{i\in\set{V}} (d_i-1) \log{\sigma_i} \right)$, the LHS of above can be rewritten as
\[\begin{aligned}
\tr\left(\sigma\cdot\log{\tilde\sigma}\right) - \tr\left(\sigma\cdot\log{\sigma}\right)
= \tr\left(\sigma\cdot\log{\frac{\tilde\sigma}{\tr\left(\tilde\sigma\right)}}\right) - \tr\left(\sigma\cdot\log{\sigma}\right) + \tr\left(\tilde\sigma\right), \\
= \tr\left(\sigma\cdot\log{\frac{\tilde\sigma}{\tr\left(\tilde\sigma\right)}}\right) - \tr\left(\sigma\cdot\log{\sigma}\right) + O(t^4),
\end{aligned}\]
where we have used~\eqref{eq:tilde:sigma:tr:1} and Theorem~\ref{thm:approx:distri} in the very last step.
\par
Now, consider the quantum exponential family as in Example~\ref{example:QEFwrtG} in Appendix~\ref{app:QEF}.
By Lemma~\ref{lemma:QFG:split}, we know that 
\begin{align*}
\tr_{\set{V}\xk\nb{a}} (\sigma) - \tr_{\set{V}\xk\nb{a}}\left( \frac{\tilde\sigma}{\tr\left(\tilde\sigma\right)} \right)
&= \sigma_a - \tr_{\set{V}\xk\nb{a}}\left( \frac{\tilde\sigma}{\tr\left(\tilde\sigma\right)} \right) = O(t^3), \\
\tr_{\set{V}\xk i} (\sigma) - \tr_{\set{V}\xk i}\left( \frac{\tilde\sigma}{\tr\left(\tilde\sigma\right)} \right)
&= \sigma_i - \tr_{\set{V}\xk i}\left( \frac{\tilde\sigma}{\tr\left(\tilde\sigma\right)} \right) =  O(t^3).
\end{align*}
In terms of their dual parameters, this can be written as $\boldeta(\sigma) = \boldeta(\frac{\tilde\sigma}{\tr\left(\tilde\sigma\right)}) + t^3 \cdot \Delta\boldeta + O(t^4)$ for some real vector $\Delta\boldeta$ (see Appendix~\ref{app:QEF}).
Define the function $f(\boldeta): \boldeta \mapsto - \tr\left( \sigma \cdot \log{\sigma(\boldeta)} \right)$.
In this case, we have
\begin{align*}
\lim_{t \to 0} \frac{\tr\left(\sigma\cdot\log{\frac{\tilde\sigma}{\tr\left(\tilde\sigma\right)}}\right) - \tr\left(\sigma\cdot\log{\sigma}\right)}{t^3}
& = \lim_{t \to 0} \frac{f(\boldeta(\frac{\tilde\sigma}{\tr\left(\tilde\sigma\right)})) - f(\boldeta(\sigma))}{t^3}\\
& = \lim_{t \to 0} \frac{f(\boldeta(\frac{\tilde\sigma}{\tr\left(\tilde\sigma\right)})) - f(\boldeta(\frac{\tilde\sigma}{\tr\left(\tilde\sigma\right)}) + t^3 \cdot \Delta\boldeta + O(t^4))}{t^3}\\
& = \grad{f}^\transp \cdot \Delta\boldeta,
\end{align*}
where the differentiability of $f$ is justified in Appendix~\ref{app:differentiability}.
Since $\grad{f}^\transp \cdot \Delta\boldeta$ is a finite number, we conclude that $\tr\left(\sigma\cdot\log{\frac{\tilde\sigma}{\tr\left(\tilde\sigma\right)}}\right) - \tr\left(\sigma\cdot\log{\sigma}\right) = O(t^3)$.
Thus, we have $\tr\left(\sigma\cdot\log{\tilde\sigma}\right) - \tr\left(\sigma\cdot\log{\sigma}\right) = O(t^3)$, which finishes the proof.
\end{proof}
\subsection{Correspondence between the Stationary Points of $\bethe$ and the BP Fixed Points}
Motivated by Theorem~\ref{thm:quantum:Bethe:Gibbs}, which allows us to treat the Bethe free energy $\bethe$ as an approximation to the Gibbs free energy $\gibbs$, we define the following optimization problem as an approximated version of the Gibbs minimization problem.
\begin{definition}[Constrained Quantum Bethe Minimization Problem]
Given a QFG $(\set{V},\set{F},\set{E})$ representing the factorization $\rho=\bigstar_{a\in\set{F}} \rho_a$, the optimization problem
\begin{equation}\begin{aligned}
\min\quad & \bethe(\{\sigma_a\}_{a \in \set{F}}, \{\sigma_i\}_{i \in \set{V}})) \\
\st\quad & \sigma_a\in\DensOp(\Tensor_{i\in\nb{a}}\hilbert_i) &&\forall a\in\set{F}\\
& \sigma_i\in\DensOp(\hilbert_i) &&\forall i\in\set{V}\\
& \tr_{\nb{a}\xk{i}} \sigma_a = \sigma_i &&\forall (i,a)\in\set{E}
\end{aligned}\end{equation}
is called the \emph{constrained Bethe minimization problem}.
\end{definition}
One must note that even for acyclic QFGs, $\bethe$ is merely an approximation to $\gibbs$ under suitable conditions.
Thus, the constrained Bethe minimization problem does not guarantee the recovery of the quantum Helmholtz free energy $\helmholtz$, even for acyclic cases.
In contrast, for acyclic factor graphs, $\bethe$ and $\gibbs$ share the same minimum, which equals the Helmholtz free energy $\helmholtz$.
In other words, the BP algorithm for acyclic factor graphs is exact, but BP algorithms for acyclic QFGs are not.
\par
In the remainder of this section, we show how the quantum BP algorithms, particularly the BP fixed points, are connected to the constrained Bethe minimization problem.
\begin{theorem}\label{thm:bethe:QBPA}
    Given a QFG $((\set{V},\set{F},\set{E}),\{\hilbert_i\}_{i\in\set{V}},\{\rho_a\}_{a\in\set{F}})$ with (straightly) PD local operators, $\left(\{\sigma_a\}_{a\in\set{F}},\{\sigma_i\}_{i\in\set{V}}\right)\in\set{L}(\set{G})$ is an interior stationary point of $\bethe$ if 
    \begin{align}
    	\label{eq:def:qba:msg}
    	\sigma_a &\propto \rho_a \star \bigstar_{i\in\nb{a}} m_{i\to a},\\
    	\label{eq:def:qbi:msg}
    	\sigma_i &\propto \bigstar_{a\in\nb{a}} m_{a\to i},
    \end{align}
    where $\{m_{i\to a},  m_{a\to i}\in \DensOp(\hilbert_i)\}_{(i,a)\in\set{E}}$ are (straightly) PD messages such that
	\begin{align}
	\label{eq:msg:QFG:update:2:ia}
	m_{i\to a} &\propto \bigstar_{c\in\nb{i}\xk{a}} m_{c\to i}, \\
	\label{eq:msg:QFG:update:2:ai}
	m_{a\to i} &\propto \tr_{\nb{a}\xk i}\left( \rho_a \star \Tensor_{k\in\nb{a}} m_{k\to a} \right) \star m_{i\to a}^{-1}.
	\end{align}
    Conversely,~\eqref{eq:msg:QFG:update:2:ia} and~\eqref{eq:msg:QFG:update:2:ai} hold if $\left(\{\sigma_a\}_{a\in\set{F}},\{\sigma_i\}_{i\in\set{V}}\right)$ defined according to~\eqref{eq:def:qba:msg} and~\eqref{eq:def:qbi:msg} form an internal stationary point of $\bethe$.
\end{theorem}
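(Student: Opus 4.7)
The plan is to follow the classical Lagrange-multiplier proof of Theorem~\ref{thm:bethe:BPA} almost verbatim, exploiting the fact that $\star$ is defined precisely so that $\log(\rho\star\sigma)=\log\rho+\log\sigma$ exactly. Because of this identity, the stationarity analysis is as clean in the quantum setting as it is classically, and in fact no approximation is needed for this theorem: the approximations established earlier in this chapter only enter when relating $\bethe$ back to $\gibbs$ or to $\helmholtz$, not for the correspondence with BP fixed points itself.

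First I would form the Lagrangian
\begin{equation*}
L = \bethe - \sum_{a}\gamma_a(\tr\sigma_a-1) - \sum_{i}\gamma_i(\tr\sigma_i-1) - \sum_{(i,a)\in\set{E}} \tr\!\left(\lambda_{i,a}\bigl(\tr_{\nb{a}\xk i}(\sigma_a) - \sigma_i\bigr)\right),
\end{equation*}
with Hermitian operator-valued multipliers $\lambda_{i,a}\in\HermitianOp(\hilbert_i)$ for the marginal-consistency constraints. Computing $\partial L/\partial \sigma_a=0$ and $\partial L/\partial\sigma_i=0$ using the standard identity $\partial\tr(\sigma\log\sigma)/\partial\sigma=\log\sigma+I$ together with the partial-trace adjoint $\tr(A\cdot\tr_{\set{I}}(B))=\tr((A\tensor I_{\set{I}})\cdot B)$ would yield
\begin{align*}
\log\sigma_a &= \log\rho_a + \textstyle\sum_{i\in\nb{a}}\lambda_{i,a} + (\gamma_a-1)\,I,\\
(d_i-1)\log\sigma_i &= \textstyle\sum_{a\in\nb{i}}\lambda_{i,a} - (\gamma_i+d_i-1)\,I.
\end{align*}
Identifying $m_{i\to a}\defeq \exp(\lambda_{i,a})$ and exponentiating the first identity recovers \eqref{eq:def:qba:msg} directly. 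Defining $m_{a\to i}$ through \eqref{eq:msg:QFG:update:2:ai} then guarantees by construction that $m_{a\to i}\star m_{i\to a}\propto \tr_{\nb{a}\xk i}(\sigma_a)$, so the marginal-consistency constraint forces $\sigma_i\propto m_{a\to i}\star m_{i\to a}$ for every $a\in\nb{i}$.

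The heart of the argument is then a short manipulation with the $\star$-product. Taking the $\star$-product over all $a\in\nb{i}$ (which factors cleanly since $\log\bigstar_a X_a=\sum_a\log X_a$) gives $\sigma_i^{d_i}\propto (\bigstar_a m_{a\to i})\star(\bigstar_a m_{i\to a})$, and the second stationarity identity says $\sigma_i^{d_i-1}\propto\bigstar_a m_{i\to a}$; $\star$-dividing these isolates $\sigma_i\propto\bigstar_{a\in\nb{i}} m_{a\to i}$, which is \eqref{eq:def:qbi:msg}. Substituting back into $\sigma_i\propto m_{a\to i}\star m_{i\to a}$ yields $m_{i\to a}\propto\bigstar_{c\in\nb{i}\xk a}m_{c\to i}$, i.e.\ \eqref{eq:msg:QFG:update:2:ia}.

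For the converse, I would simply reverse the argument: starting from messages satisfying \eqref{eq:msg:QFG:update:2:ia}--\eqref{eq:msg:QFG:update:2:ai}, set $\lambda_{i,a}\defeq\log m_{i\to a}$ and define $\sigma_a,\sigma_i$ via \eqref{eq:def:qba:msg}--\eqref{eq:def:qbi:msg}. The first stationarity condition is then immediate, and the second follows from the combinatorial identity that each $\log m_{c\to i}$ appears in exactly $d_i-1$ of the sums $\log m_{i\to a}$ (once for every $a\neq c$), so that $\sum_{a\in\nb{i}}\log m_{i\to a} = (d_i-1)\sum_{c\in\nb{i}}\log m_{c\to i} = (d_i-1)\log\sigma_i + \textrm{const}$. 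Marginal consistency is verified directly from \eqref{eq:msg:QFG:update:2:ai}. The only real technical care is justifying the matrix-calculus derivations under the Hermitian constraint and handling the embedding $\lambda_{i,a}\mapsto\lambda_{i,a}\tensor I_{\nb{a}\xk i}$ implicit in the partial-trace adjoint; both are standard. The main conceptual obstacle I anticipate is the step of ``$\star$-dividing'' two relations to isolate $\sigma_i$, which is valid only because PD operators form a group under $\star$ via the logarithm — this is exactly the place where replacing pointwise multiplication by $\star$ makes the classical proof go through essentially unchanged rather than producing only an approximate identity.
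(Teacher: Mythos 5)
Your proposal is correct and follows the same route as the paper: form the Lagrangian with operator-valued multipliers $\lambda_{i,a}$ for the marginal-consistency constraints, obtain the stationarity conditions by first-order perturbation of the matrix logarithm, and identify $\lambda_{i,a}=\log m_{i\to a}$. Your $\star$-division step actually makes explicit the derivation of~\eqref{eq:def:qbi:msg} and~\eqref{eq:msg:QFG:update:2:ia} from the marginal constraints and the $\sigma_i$ stationarity equation, whereas the paper simply asserts that the identification $\lambda_{i,a}=\sum_{c\in\nb{i}\xk a}\log m_{c\to i}$ can be made; both arrive at the same place, and your version closes that small gap cleanly.
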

\begin{proof}
This theorem is a quantum analogy of Theorem~\ref{thm:bethe:BPA}.
Part of the ideas of this proof originated from~\cite{yedidia2005constructing}.
\par
Suppose $\left(\{\sigma_a\}_{a\in\set{F}},\{\sigma_i\}_{i\in\set{V}}\right)\in\set{L}(\set{G})$ is an \emph{interior} stationary point of the constrained Bethe minimization problem.
In this case, the Lagrangian of this problem can be expressed as
\[
L \defeq \bethe + \sum_{a\in\set{F}} \gamma_a \cdot \left( \tr\left( \sigma_a \right)-1 \right) + \sum_{i\in\set{V}} \gamma_i \cdot \left( \tr\left( \sigma_i \right) - 1 \right) + \sum_{(i,a)\in\set{E}} \tr\left( \lambda_{i,a} \cdot \left(\sigma_i - \tr_{\partial{a}\xk i}\left( \sigma_a \right) \right) \right),
\]
where $\{\gamma_a\in\Reals\}_{a\in\set{F}}$, $\{\gamma_i\in\Reals\}_{i\in\set{V}}$, and $\{\lambda_{i,a}\in \LinearOp(\hilbert_i)\}_{(i,a)\in\set{E}}$ are Lagrangian dual variables.
At the stationary point, it must hold that
\begin{align}
\label{eq:kkt:1}
\frac{\partial{L}}{\partial{\gamma_a}} &=0 &&\forall a\in\set{F},\\
\label{eq:kkt:2}
\frac{\partial{L}}{\partial{\gamma_i}} &=0 &&\forall i\in\set{V},\\
\label{eq:kkt:3}
\left.\frac{\D}{\D t}\right|_{t=0} L(\lambda_{i,a} + t\cdot H_{i,a}) &= 0 &&\forall H_{i,a}\in\HermitianOp(\hilbert_i),\, \forall (i,a)\in\set{E},\\
\label{eq:kkt:4}
\left.\frac{\D}{\D t}\right|_{t=0} L(\sigma_a + t\cdot H_a) &= 0 &&\forall H_a\in\HermitianOp(\hilbert_\nb{a}),\, \forall a\in\set{F},\\
\label{eq:kkt:5}
\left.\frac{\D}{\D t}\right|_{t=0} L(\sigma_i + t\cdot H_i) &= 0, &&\forall H_i\in\HermitianOp(\hilbert_i),\, \forall i\in\set{V}.
\end{align}
Eqs.~\eqref{eq:kkt:1},~\eqref{eq:kkt:2},~\eqref{eq:kkt:3} are equivalent to the constraints of the problem.
By Spectral theorem and the first-order perturbation theory~\cite{Kato1966},~\eqref{eq:kkt:4} and~\eqref{eq:kkt:5} can be expanded as
\begin{align*}
-\tr\left( H_a \cdot \log{\rho_a} \right) + \tr\left( H_a \cdot \left( I+\log{\sigma_a} \right) \right) + \tr(H_a) \cdot \gamma_a - \sum_{i\in\nb{a}} \tr\left( \lambda_{i,a} \cdot \tr_{\partial{a}\xk i} H_a \right) & = 0,\\
(1-d_i) \cdot \tr\left( H_i \cdot \left( I+\log{\sigma_i} \right) \right) + \tr(H_i) \cdot \gamma_i + \sum_{a\in\nb{i}} \tr\left( H_i \cdot \lambda_{i,a} \right) & = 0.
\end{align*}
Solving the above equations for $\{\sigma_a\}_{a\in\set{F}}$ and $\{\sigma_i\}_{i\in\set{V}}$, respectively, we have
\begin{align}
\label{eq:kkt:4b}
\sigma_a &= \exp\left( \log{\rho_a} + \sum_{i\in\nb{a}} \lambda_{i,a} - (1+\gamma_a) \cdot I \right) && \forall a\in\set{F},\\
\label{eq:kkt:5b}
\sigma_i &= \exp\left( \frac{1}{d_i-1} \cdot \left( \sum_{a\in\nb{i}} \lambda_{i,a} + \left(1+\gamma_i\right) \cdot I \right) \right) && \forall i\in\set{V}.
\end{align}
Eqs.~\eqref{eq:def:qba:msg} and~\eqref{eq:def:qbi:msg} can be shown by making the identification that 
\begin{equation}\label{eq:QBP:lambda}
\lambda_{i,a} = \log{m_{i\to a}} = \sum_{c\in\nb{i}\xk a}\log{m_{c\to i}} \qquad \forall (i,a)\in\set{E}.
\end{equation}
In this case,~\eqref{eq:msg:QFG:update:2:ai} can be derived from the constraints that $\tr_{\nb{a}\xk{i}} \sigma_a = \sigma_i$ for each $(i,a){\in\set{E}}$.
\par
Conversely, suppose there exist some PD messages $\{m_{i\to a},  m_{a\to i}\in \DensOp(\hilbert_i)\}_{(i,a)\in\set{E}}$ satisfying~\eqref{eq:msg:QFG:update:2:ia} and~\eqref{eq:msg:QFG:update:2:ai}.
Let $\left(\{\sigma_a\}_{a\in\set{F}},\{\sigma_i\}_{i\in\set{V}}\right)\in\set{L}(\set{G})$ be defined according to~\eqref{eq:def:qba:msg} and~\eqref{eq:def:qbi:msg}.
By choosing $\{\gamma_a\}_{a\in\set{F}}$, $\{\gamma_i\}_{i\in\set{V}}$, and $\{\lambda_{i,a}\}_{(i,a)\in\set{E}}$ by such that~\eqref{eq:kkt:4b}, \eqref{eq:kkt:5b}, and~\eqref{eq:QBP:lambda} hold simultaneously, we have satisfied~\eqref{eq:kkt:4} and~\eqref{eq:kkt:5}.
Eqs.~\eqref{eq:kkt:1} and~\eqref{eq:kkt:2} also hold since $\{\sigma_a\}_{a\in\set{F}}$ and $\{\sigma_i\}_{i\in\set{V}}$ are density operators.
Finally,~\eqref{eq:kkt:3} holds since $\tr_{\nb{a}\xk{i}} \sigma_a = \sigma_i$ for each $(i,a){\in\set{E}}$, which in turn can be derived from~\eqref{eq:msg:QFG:update:2:ai}.
Thus, we have verified $\left(\{\sigma_a\}_{a\in\set{F}},\{\sigma_i\}_{i\in\set{V}}\right)$ to be a stationary point.
\end{proof}
\par
Under suitable conditions, by Proposition~\ref{prop:approx:distri},~\eqref{eq:msg:QFG:update:2:ai} can be approximated as
\begin{equation}\label{eq:msg:QFG:update:3:ai}
m_{a\to i} \propto \tr_{\nb{a}\xk i}\left( \rho_a \star \Tensor_{k\in\nb{a}} m_{k\to a} \right) \star m_{i\to a}^{-1}
\approx \tr_{\nb{a}\xk i}\left( \rho_a \star \Tensor_{k\in\nb{a}\xk{i}} m_{k\to a} \right),
\end{equation}
which is exactly the second half of the BP fixed point condition (recall Definition~\ref{def:QFG:BP:fixed:points}).
Therefore, Theorem~\ref{thm:bethe:QBPA} provides an interpretation of the quantum BP algorithm as an iterative method for finding a stationary point of the constrained Bethe minimization problem.
Unfortunately, there is no guarantee of convergence of such an iterative method.
Moreover, even for acyclic QFGs, we have made multiple approximations to link the BP fixed points to the partition sum (including Theorem~\ref{thm:quantum:Bethe:Gibbs} and Eq.~\eqref{eq:msg:QFG:update:3:ai}).
Despite such concerns, as illustrated in the next section, the quantum BP algorithm still shows a rather promising performance in some numerical applications.
\par
The following corollary generalizes Corollary~\ref{cor:Zb:decompose} and is a direct result of Theorem~\ref{thm:bethe:QBPA}.
\begin{corollary}
Consider a QFG $\set{G}$ representing the factorization $\rho=\bigstar_{a\in\set{F}} \rho_a$.
For a collection of positive messages $\{m_{i\to a}, m_{a\to i}\}_{i,a}$ satisfying~\eqref{eq:msg:QFG:update:2:ia} and~\eqref{eq:msg:QFG:update:2:ai}, we have
\begin{align}
Z_{\mathsf{B}}(\{\sigma_a\}_{a\in\set{F}},\{\sigma_i\}_{i\in\set{V}})
&\defeq \exp\left(-\bethe(\{\sigma_a\}_{a\in\set{F}}, \{\sigma_i\}_{i\in\set{V}})\right)\\
&= Z_{\mathsf{induced}}(\{m_{i\to a},m_{a\to i}\}_{(i,a)\in\set{E}}),
\end{align}
where the density operators $\{\sigma_a\}_{a\in\set{F}}$ and $\{\sigma_i\}_{i\in\set{V}}$ are defined according to~\eqref{eq:def:qba:msg} and~\eqref{eq:def:qbi:msg}.
\end{corollary}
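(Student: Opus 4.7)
The plan is to mirror the classical proof of Corollary~\ref{cor:Zb:decompose} while being careful about two quantum-specific issues: (i) the logarithm of a $\star$-product is a (tensored) sum of logarithms, so substituting the fixed-point equations gives clean expressions, and (ii) the normalization constants hidden in the symbol ``$\propto$'' must be tracked explicitly, since they will supply exactly the $Z_a$, $Z_i$, and $Z_{i,a}$ factors that appear in $Z_{\mathsf{induced}}$.

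First, I would exploit~\eqref{eq:def:qba:msg} and~\eqref{eq:def:qbi:msg} together with the trace-$1$ normalization of $\sigma_a$ and $\sigma_i$ to write
\begin{align*}
\log\sigma_a &= \log\rho_a + \sum_{i\in\nb{a}}\log m_{i\to a}\tensor I_{\nb{a}\xk i} - (\log Z_a)\,I, \\
\log\sigma_i &= \sum_{a\in\nb{i}}\log m_{a\to i} - (\log Z_i)\,I,
\end{align*}
where $Z_a=\tr(\rho_a\star\Tensor_{i\in\nb{a}}m_{i\to a})$ and $Z_i=\tr(\bigstar_{a\in\nb{i}}m_{a\to i})$. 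Plugging these into Definition~\ref{def:quantum:bethe:energy} and using the marginal constraint $\tr_{\nb{a}\xk i}(\sigma_a)=\sigma_i$ to collapse $\tr(\sigma_a\cdot\log m_{i\to a}\tensor I)$ to $\tr(\sigma_i\log m_{i\to a})$, the $-\tr(\sigma_a\log\rho_a)$ and $+\tr(\sigma_a\log\rho_a)$ terms cancel, leaving
\[
\bethe = -\sum_a\log Z_a + \sum_i(d_i-1)\log Z_i + \sum_{(i,a)\in\set{E}}\bigl[\tr(\sigma_i\log m_{i\to a}) - (d_i-1)\tr(\sigma_i\log m_{a\to i})\bigr].
\]

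Next I would use the fixed-point identity~\eqref{eq:msg:QFG:update:2:ia}, which after taking logarithms becomes
\[
\log m_{i\to a} = C_{i,a}\,I + \sum_{c\in\nb{i}\xk a}\log m_{c\to i}
\]
for some real constants $C_{i,a}$. Substituting this and performing the standard edge-counting identity $\sum_{a\in\nb{i}}\sum_{c\in\nb{i}\xk a}f(c)=(d_i-1)\sum_{c\in\nb{i}}f(c)$, the two bracketed $\tr(\sigma_i\log m_{a\to i})$ sums cancel exactly, and what survives in the bracket is simply $\sum_{(i,a)\in\set{E}}C_{i,a}$.

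The final step is to evaluate the constants $C_{i,a}$. Combining~\eqref{eq:msg:QFG:update:2:ia} with~\eqref{eq:def:qbi:msg} yields $\sigma_i = Z_{i,a}^{-1}\,(m_{i\to a}\star m_{a\to i})$, where $Z_{i,a}=\tr(m_{i\to a}\star m_{a\to i})$; comparing with $\sigma_i=Z_i^{-1}\bigstar_{c\in\nb{i}}m_{c\to i}$ and taking logarithms gives $C_{i,a}=\log Z_{i,a}-\log Z_i$. Inserting this, with $\sum_{(i,a)\in\set{E}}\log Z_i = \sum_i d_i\log Z_i$, yields
\[
\bethe = -\sum_a\log Z_a - \sum_i\log Z_i + \sum_{(i,a)\in\set{E}}\log Z_{i,a} = -\log Z_{\mathsf{induced}},
\]
which is exactly the claim after exponentiation. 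I expect the only delicate point to be the tensor-with-identity bookkeeping when expanding $\log\sigma_a$ and applying the marginal constraint; everything else is a mechanical translation of the classical argument, made possible because $\log$ intertwines $\star$ with ordinary addition.
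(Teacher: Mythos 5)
Your proof is correct and follows the same route the paper gestures at (the paper only cites Theorem~\ref{thm:bethe:QBPA} and the classical analogue Corollary~\ref{cor:Zb:decompose}, which is itself stated as a direct substitution of the belief formulas into the Bethe free energy). You simply execute that substitution carefully: taking logarithms of~\eqref{eq:def:qba:msg} and~\eqref{eq:def:qbi:msg}, using the local consistency constraint to reduce $\tr(\sigma_a\cdot(\log m_{i\to a}\tensor I))$ to $\tr(\sigma_i\log m_{i\to a})$, exploiting the fixed-point relation and edge-counting to cancel the $\tr(\sigma_i\log m_{a\to i})$ terms, and identifying the residual proportionality constants as $\log Z_{i,a}-\log Z_i$ via the trace-one normalization of $\sigma_i$.
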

\section{Numerical Example}\label{sec:QFG:numerical}
\begin{figure}\centering
\begin{tikzpicture}[factor/.style={rectangle, minimum size=1cm, draw}]
	\node[draw=none,inner sep = 0pt, outer sep = 0pt] (O) {};
	\foreach \t in {1,...,6}{
		\pgfmathsetmacro\posX{2*cos(60*\t)};
		\pgfmathsetmacro\posY{2*sin(60*\t)};
		\node[factor] (P\t) at ([xshift=-\posX cm, yshift=\posY cm]O) {$\rho_\t$};
	}
	\foreach \t in {1,...,6}{
		\pgfmathtruncatemacro\a{mod(\t,6)+1};
		\pgfmathtruncatemacro\b{mod(\t+1,6)+1};
		\path (P\t) edge [line width = 2pt] (P\a);
		\path (P\t) edge [line width = 2pt] (P\b);
	}
\end{tikzpicture}
\caption{QFG in Section~\ref{sec:QFG:numerical}.}
\label{fig:QFG:numerical}
\end{figure}
\begin{figure}\centering
\begin{tikzpicture}[nodes = {draw=none, minimum size = 0pt, inner sep = 0pt, outer sep = 0pt}]
	\node (plot) {\includegraphics[width=6cm]{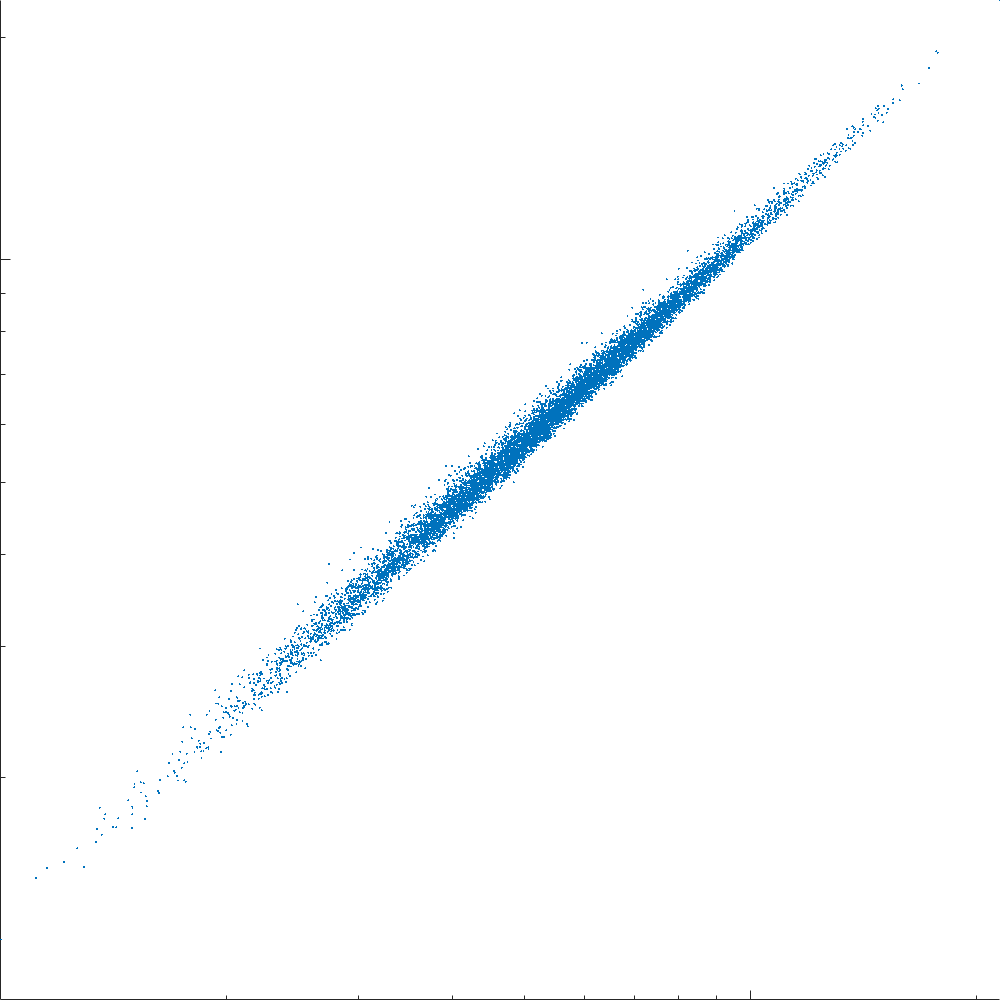}};
	\node (o) at (plot.south west) {};
	\node (xend) at (plot.south east) {};
	\node (yend) at (plot.north west) {};
	\path (o) edge[draw=none]
		node[anchor=north, yshift=-15pt] {$Z$}
		node[pos=0, anchor=north, yshift=-2pt, font=\scriptsize] {$10^{1}$}
		node[pos=.7505, anchor=north, yshift=-2pt, font=\scriptsize] {$10^{2}$}
	(xend);
	\path (o) edge[draw=none]
		node[anchor=south, yshift=20pt, sloped] {$Z_{\mathsf{induced}}$}
		node[pos=0, anchor=east, xshift=-2pt, font=\scriptsize] {$10^{1}$}
		node[pos=.7505, anchor=east, xshift=-2pt, font=\scriptsize] {$10^{2}$}
	(yend);
	\draw[red,dashed] (o) -- (plot.north east);
	\node[right = 35pt of plot.east, anchor=west] (plotH) {\includegraphics[width=6cm]{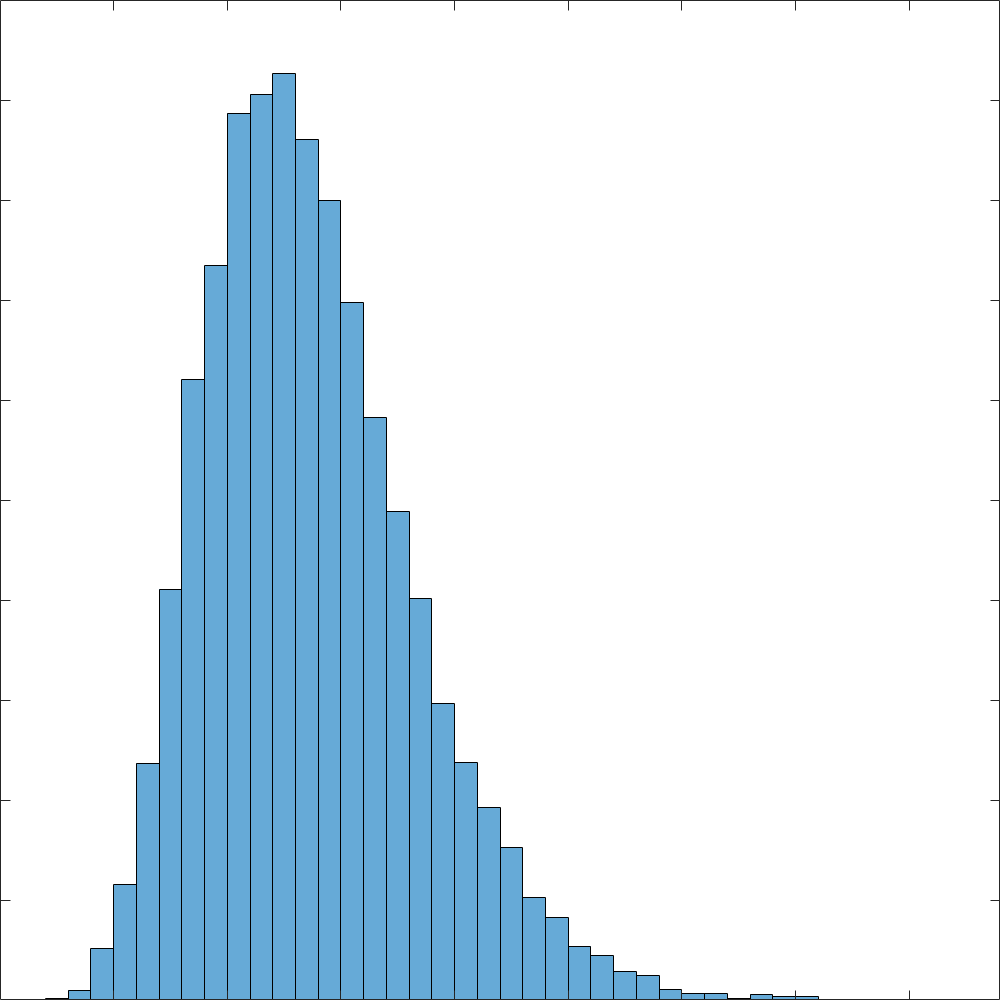}};
	\node (o) at (plotH.south west) {};
	\node (xend) at (plotH.south east) {};
	\node (yend) at (plotH.north west) {};
	\path (o) edge[draw=none]
		node[anchor=north, yshift=-15pt] {Relative Error}
		node[pos=0, anchor=north, yshift=-2pt, font=\scriptsize] {$0$}
		node[pos=.1111, anchor=north, yshift=-2pt, font=\scriptsize] {$.05$}
		node[pos=.2222, anchor=north, yshift=-2pt, font=\scriptsize] {$.10$}
		node[pos=.3333, anchor=north, yshift=-2pt, font=\scriptsize] {$.15$}
		node[pos=.4444, anchor=north, yshift=-2pt, font=\scriptsize] {$.20$}
		node[pos=.5555, anchor=north, yshift=-2pt, font=\scriptsize] {$.25$}
		node[pos=.6666, anchor=north, yshift=-2pt, font=\scriptsize] {$.30$}
		node[pos=.7777, anchor=north, yshift=-2pt, font=\scriptsize] {$.35$}
		node[pos=.8888, anchor=north, yshift=-2pt, font=\scriptsize] {$.40$}
	(xend);
	\path (o) edge[draw=none]
		node[anchor=south, yshift=20pt, sloped] {Number of Appearances}
		node[pos=0, anchor=east, xshift=-2pt, font=\scriptsize] {$0$}
		node[pos=.1, anchor=east, xshift=-2pt, font=\scriptsize] {$100$}
		node[pos=.2, anchor=east, xshift=-2pt, font=\scriptsize] {$200$}
		node[pos=.3, anchor=east, xshift=-2pt, font=\scriptsize] {$300$}
		node[pos=.4, anchor=east, xshift=-2pt, font=\scriptsize] {$400$}
		node[pos=.5, anchor=east, xshift=-2pt, font=\scriptsize] {$500$}
		node[pos=.6, anchor=east, xshift=-2pt, font=\scriptsize] {$600$}
		node[pos=.7, anchor=east, xshift=-2pt, font=\scriptsize] {$700$}
		node[pos=.8, anchor=east, xshift=-2pt, font=\scriptsize] {$800$}
		node[pos=.9, anchor=east, xshift=-2pt, font=\scriptsize] {$900$}
		node[pos=1, anchor=east, xshift=-2pt, font=\scriptsize] {$1000$}
	(yend);
\end{tikzpicture}
\caption{The partition sum $Z$ and the induced partition sum $Z_{\mathsf{induced}}$ produced by Algorithm~\ref{alg:BP:QFG}. The plot consists of 10,000 instances based on the QFG in Figure~\ref{fig:QFG:numerical}, where each local factors $\{\rho_i\}_{i=1}^6$ are generated independently as $\rho_i\gets U\cdot \mbox{diag}(\boldsymbol\lambda_1^{16}) \cdot U$, and where (for each $\rho_i$) $U$ is some $16$-by-$16$ unitary matrix (independently) randomly generated according to Haar measure, and where each $\{\lambda_k\}_{k=1}^{16}$ are independently uniformly distributed on $[0,1]$.}
\label{fig:QFG:numerical:1}
\end{figure}
\begin{figure}\centering
\begin{tikzpicture}[nodes = {draw=none, minimum size = 0pt, inner sep = 0pt, outer sep = 0pt}]
	\node (plot) {\includegraphics[width=6cm]{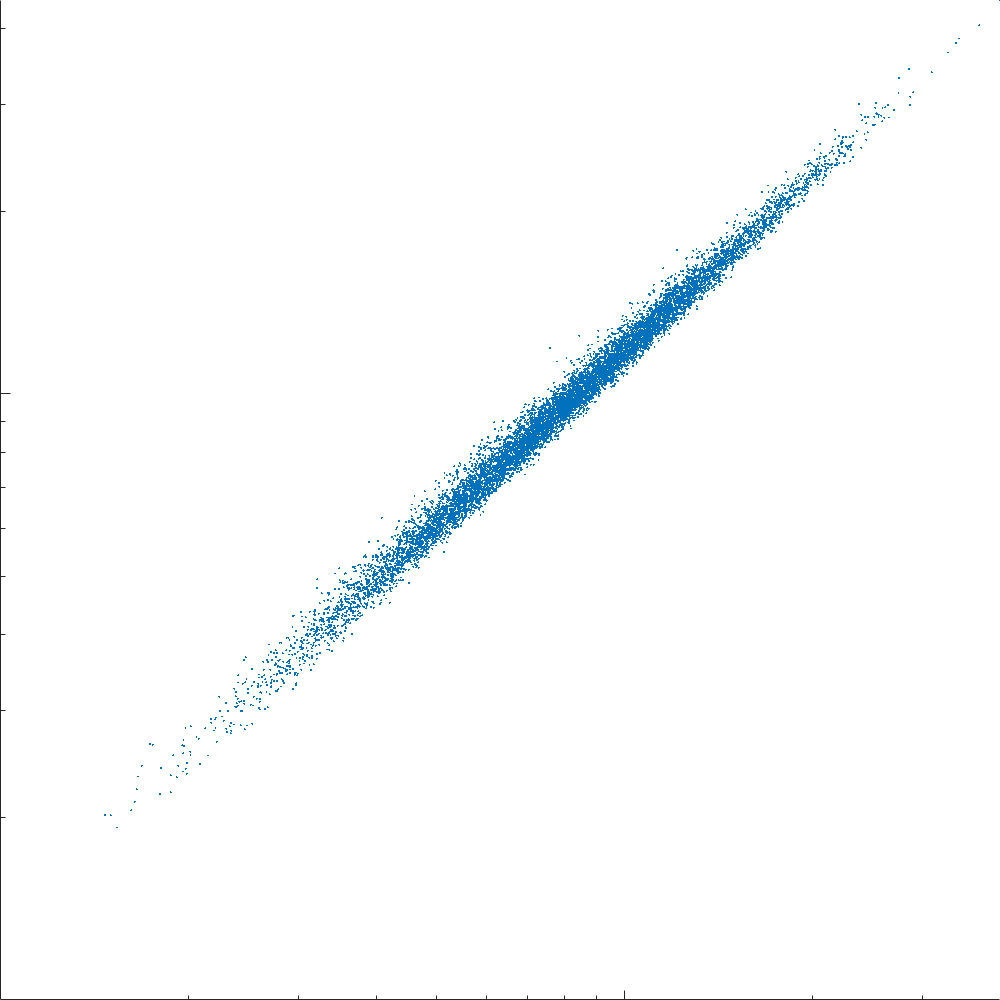}};
	\node (o) at (plot.south west) {};
	\node (xend) at (plot.south east) {};
	\node (yend) at (plot.north west) {};
	\path (o) edge[draw=none]
		node[anchor=north, yshift=-15pt] {$Z$}
		node[pos=0, anchor=north, yshift=-2pt, font=\scriptsize] {$10^{3}$}
		node[pos=.6225, anchor=north, yshift=-2pt, font=\scriptsize] {$10^{4}$}
	(xend);
	\path (o) edge[draw=none]
		node[anchor=south, yshift=20pt, sloped] {$Z_{\mathsf{induced}}$}
		node[pos=0, anchor=east, xshift=-2pt, font=\scriptsize] {$10^{3}$}
		node[pos=.6225, anchor=east, xshift=-2pt, font=\scriptsize] {$10^{4}$}
	(yend);
	\draw[red,dashed] (o) -- (plot.north east);
	\node[right = 35pt of plot.east, anchor=west] (plotH) {\includegraphics[width=6cm]{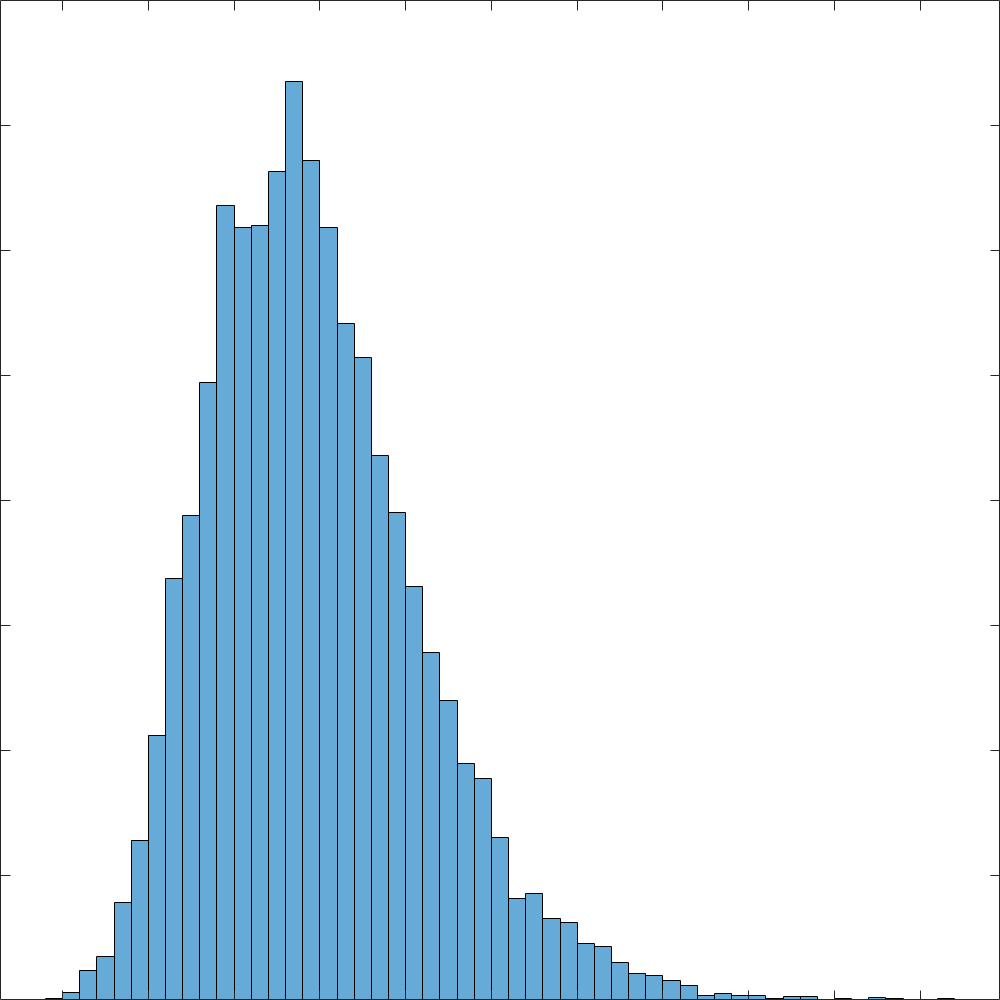}};
	\node (o) at (plotH.south west) {};
	\node (xend) at (plotH.south east) {};
	\node (yend) at (plotH.north west) {};
	\path (o) edge[draw=none]
		node[anchor=north, yshift=-15pt] {Relative Error}
		node[pos=0, anchor=north, yshift=-2pt, font=\scriptsize] {$0$}
		node[pos=.0833, anchor=north, yshift=-2pt, font=\scriptsize] {$.05$}
		node[pos=.1667, anchor=north, yshift=-2pt, font=\scriptsize] {$.10$}
		node[pos=.2500, anchor=north, yshift=-2pt, font=\scriptsize] {$.15$}
		node[pos=.3333, anchor=north, yshift=-2pt, font=\scriptsize] {$.20$}
		node[pos=.4167, anchor=north, yshift=-2pt, font=\scriptsize] {$.25$}
		node[pos=.5000, anchor=north, yshift=-2pt, font=\scriptsize] {$.30$}
		node[pos=.5833, anchor=north, yshift=-2pt, font=\scriptsize] {$.35$}
		node[pos=.6667, anchor=north, yshift=-2pt, font=\scriptsize] {$.40$}
		node[pos=.7500, anchor=north, yshift=-2pt, font=\scriptsize] {$.45$}
		node[pos=.8333, anchor=north, yshift=-2pt, font=\scriptsize] {$.50$}
		node[pos=.9167, anchor=north, yshift=-2pt, font=\scriptsize] {$.55$}
	(xend);
	\path (o) edge[draw=none]
		node[anchor=south, yshift=20pt, sloped] {Number of Appearances}
		node[pos=0, anchor=east, xshift=-2pt, font=\scriptsize] {$0$}
		node[pos=.125, anchor=east, xshift=-2pt, font=\scriptsize] {$100$}
		node[pos=.250, anchor=east, xshift=-2pt, font=\scriptsize] {$200$}
		node[pos=.375, anchor=east, xshift=-2pt, font=\scriptsize] {$300$}
		node[pos=.500, anchor=east, xshift=-2pt, font=\scriptsize] {$400$}
		node[pos=.625, anchor=east, xshift=-2pt, font=\scriptsize] {$500$}
		node[pos=.750, anchor=east, xshift=-2pt, font=\scriptsize] {$600$}
		node[pos=.875, anchor=east, xshift=-2pt, font=\scriptsize] {$700$}
		node[pos=1, anchor=east, xshift=-2pt, font=\scriptsize] {$800$}
	(yend);
\end{tikzpicture}
\caption{The partition sum $Z$ and the induced partition sum $Z_{\mathsf{induced}}$ produced by Algorithm~\ref{alg:BP:QFG}. The plot consists of 10,000 instances based on the QFG in Figure~\ref{fig:QFG:numerical}, where each local factors $\{\rho_i\}_{i=1}^6$ are generated independently as $\rho_i\gets U\cdot \mbox{diag}(\boldsymbol\lambda_1^{16}) \cdot U$, and where (for each $\rho_i$) $U$ is some $16$-by-$16$ unitary matrix (independently) randomly generated according to Haar measure, and where each $\{\lambda_k\}_{k=1}^{16}$ are independently distributed according to $\abs{\mathcal{N}(1,1)}$. ($\mathcal{N}(\mu,\sigma)$ denotes the Gaussian distribution with mean $\mu$ and standard variation $\sigma$.)}
\label{fig:QFG:numerical:2}
\end{figure}
In this section, we consider the QFG as in Figure.~\ref{fig:QFG:numerical}, where each local operator $\{\rho_a\}_{a=1,\ldots,6}$ is generated in the same fashion as $\rho_a$ and $\rho_b$ for Figure~\ref{fig:approx:distributive:star}.
We are interested in comparing the induced partition sum $Z_{\mathsf{induced}}$ at BP fixed points with the actual partition sum $Z(\set{G})$.
In particular, in Figure~\ref{fig:QFG:numerical:1}~and~\ref{fig:QFG:numerical:2}, we have plotted the statistics of the relative error $\eta\defeq\abs{(Z_{\mathsf{induced}}-Z)/Z)}$ \wrt different distributions of the eigenvalues.
The plots are based on $10,000$ simulations.
\chapter[IRs of Quantum Channels with Memory]{Bounding and Estimating the Classical Information Rate of Quantum Channels with Memory}\label{chapter:QCwM}
In this chapter, we consider the transmission rate of classical information over a finite-dimensional quantum channel with memory.
A \emph{quantum channel with memory}~\cite{bowen2004quantum, kretschmann2005quantum, caruso2014quantum} is a quantum channel (from input system $\system{A}$ to output system $\system{B}$) equipped with a memory system $\system{S}$; namely it is a CPTP map from the set of density operators on $\hilbert_\system{A}\tensor\hilbert_\system{S}$ to the set of  density operators on $\hilbert_\system{B}\tensor\hilbert_{\system{S}'}$.
The system $\system{S}$ can be understood either as a state of the channel (as illustrated in Figure~\ref{fig:interpret:1}), or as a part of the environment that does not decay between consecutive channel uses (as illustrated in Figure~\ref{fig:interpret:2}).
Interesting examples of quantum channels with  memory include spin chains~\cite{bose2003quantum} and fiber optic links~\cite{ball2004exploiting}.
\begin{figure}[t]\centering
\begin{subfigure}[b]{0.41\textwidth}\centering
\begin{tikzpicture}[nodes/.style={draw=none. inner sep= 0pt, outer sep= 0pt},
    smallcircle/.style={draw, minimum size = 12pt, circle}]
	\node[draw, minimum width = 30pt, minimum height = 22.5pt] (C) {$\operator{N}$};
	\node[smallcircle, below = 18pt of C] (M) {};
	\node[left = 3pt of M.center, anchor=center, smallcircle] {};
	\node[right = 3pt of M.center, anchor=center, smallcircle] {};
	\draw (C.west|-M.south) -- (C.east|-M.south);
	\draw (C.west|-M.south) arc (270:90:18pt);
	\draw (C.east|-M.south) arc (-90:90:18pt);
	\draw ([yshift=5pt]C.west) -- ([yshift=5pt, xshift=-16pt]C.west) node[pos=1, left](A) {$\system{A}$};
	\draw ([yshift=5pt]C.east) -- ([yshift=5pt, xshift=16pt]C.east) node[pos=1, right](B) {$\system{B}$};
	\node[yshift=16pt] at (A|-M.south) {$\system{S}$};
	\node[yshift=16pt] at (B|-M.south) {$\system{S}'$};
	\node[below=0pt of M, font=\small] {quantum memory};
\end{tikzpicture}
\caption{Memory as the state of the channel.}
\label{fig:interpret:1}
\end{subfigure}
~
\begin{subfigure}[b]{0.5\textwidth}\centering
\begin{tikzpicture}[nodes/.style={draw=none, inner sep=0pt, outer sep=0pt}]
	\node[draw, minimum width = 30pt, minimum height = 75pt] (U) {$\operator{U}$};
	\draw ([yshift=27pt]U.west) -- ([yshift=27pt,xshift=-22.5pt]U.west) node[left, pos=1] (A) {$\system{A}$};
	\draw (U.west) -- ([xshift=-22.5pt]U.west) node[left, pos=1] (E) {$\bra{0}$};
	\draw[latex-] ([yshift=-27pt]U.west) -- ([yshift=-27pt,xshift=-22.5pt]U.west) -- ([yshift=-48pt,xshift=-22.5pt]U.west) -- ([yshift=-48pt,xshift=22.5pt]U.east) -- ([yshift=-27pt,xshift=22.5pt]U.east) -- ([yshift=-27pt]U.east);
	\draw (U.east) -- ([xshift=13.5pt]U.east) node [pos=1, right, draw, minimum size=18pt] (M) {};
	\node[minimum size = 2.25pt, fill = black, inner sep = 0pt, outer sep = 0pt, below = 4.5pt of M.center, circle, anchor = center] (m) {};
	\draw[decoration={markings, mark = at position 1 with {\arrow[>=latex]{>}}}, postaction={decorate}, draw = none] (m.center)--([yshift=10.5pt,xshift=4.5pt]m.center);
	\draw(m.center)--([yshift=9.45pt, xshift=4.05pt]m.center);
	\draw ([xshift=6pt]m.center) arc (0:180:6pt);
	\draw ([yshift=27pt]U.east) -- ([yshift=27pt,xshift=42pt]U.east) node[right, pos=1] (B) {$\system{B}$};
	\draw[line width = 2pt] (M.east) -- ([xshift=20.25pt]M.east) node[pos=1, minimum size = 7.5pt, inner sep = 0pt, outer sep = 0pt] (Eend) {};
	\draw[line width = 2pt] (Eend.south west) -- (Eend.north east);
	\draw[line width = 2pt] (Eend.north west) -- (Eend.south east);

	\node[yshift=-27pt] at (A|-U.west) (S) {$\system{S}$};
	\node[yshift=-27pt] at (B|-U.east) {$\system{S}^\prime$};

	\path[draw=none] (E) edge[draw=none] node[xshift=-18pt](env) {$\Bigg\{$} (S);
	\node[left=0pt of env, anchor = center, rotate=90, font = \small] {environment};
	\node[right=123pt of env] {$\Bigg\}$};
\end{tikzpicture}
\caption{Memory as undecayed partial environment.}
\label{fig:interpret:2}
\end{subfigure}
\caption{Interpretations of quantum channels with memory.}
\end{figure}
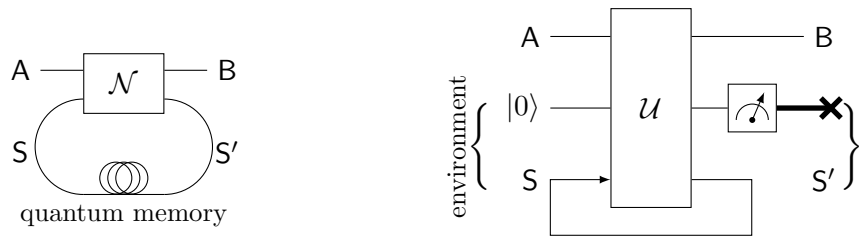
\par
Classical communication over such channels is accomplished by encoding classical data into some density operators before the transmission and applying measurements to the outputs of the channel.
In the most generic case, an ensemble and a measurement on the \emph{joint} input and output systems of multiple channels uses can be used for encoding and decoding, respectively.
The scenario involving a $k$-channel joint ensemble and a $k$-channel joint measurement is depicted in Figure~\ref{fig:generic_memory}, where
\begin{itemize}
\item the encoding process $\mathcal{E}$ is described by some ensemble $\{\prob_\rv{X}(x), \rho_{\system{A}_1^k}^{(x)}\}_{x\in\set{X}}$ on the joint input system $(\system{A}_1, \ldots, \system{A}_k)$, with $\set{X}$ being the input alphabet, $\prob_\rv{X}(x)$ being the input distribution, and $\rho_{\system{A}_1^k}^{(x)}$ being the density operator on the input systems $\system{A}_1^k$ corresponding to the classical input $x$;
\item the decoding process $\mathcal{D}$ is described by some positive-operator valued measurement (POVM) $\{\Lambda_{\system{B}_1^k}^{(y)}\}_{y\in\set{Y}}$ on the joint output system $(\system{B}_1,\ldots,\system{B}_k)$, with $\set{Y}$ being the output alphabet;
\item the classical input and output are represented by some random variables $\rv{X}$ and $\rv{Y}$, respectively.
\end{itemize}
For comparison, Figure~\ref{fig:generic_memoryless} shows the corresponding memoryless setup.
The above arrangement results in a (classical) channel from $\rv{X}$ to $\rv{Y}$, whose rate of transmission is given by
\begin{equation}\label{eq:capacity:jointk}
\infoRate(\mathcal{E},\operator{N}^{\boxtimes k},\mathcal{D})= \limsup_{n\to\infty}\frac{1}{n} \mutualInfo(\rv{X}_1^{n};\rv{Y}_1^{n}),
\end{equation}
where we use the above transmission scheme $n$ times consecutively (as depicted in Figure~\ref{fig:generic_memory_multiple}) and where 
\[
\begin{aligned}
\operator{N}^{\boxtimes k} \defeq
&\left( \operator{N}_{\system{A}_k\system{S}_{k-1} \to \system{B}_k\system{S}_k} \otimes \id_{\system{B}_1^{k-1} \to \system{B}_1^{k-1}} \right) \circ
\left(\id_{\system{A}_k \to \system{A}_k} \otimes \operator{N}_{\system{A}_{k-1}\system{S}_{k-2} \to \system{B}_{k-1}\system{S}_{k-1}} \otimes \id_{\system{B}_1^{k-2} \to \system{B}_1^{k-2}} \right)\\ 
&\circ \cdots \circ
\Big( \id_{\system{A}_2^k \to \system{A}_2^k} \otimes \operator{N}_{\system{A}_1\system{S}_0 \to \system{B}_1\system{S}_1} \Big).
\end{aligned}
\]
Here, $\mutualInfo$ stands for the mutual information.
As a fundamental result, this quantity can be simplified to $\mutualInfo(\rv{X};\rv{Y})$ for the memoryless case~\cite{shannon1948mathematical, cover2012elements}.
Optimizing $\infoRate(\mathcal{E},\operator{N}^{\boxtimes k},\mathcal{D})$ over $\mathcal{E}$ and $\mathcal{D}$ (with $k\to\infty$) yields the classical capacity of the quantum channel $\operator{N}$, namely
\begin{equation}
\capacity(\operator{N}) = \limsup_{k}\frac{1}{k}\sup_{\mathcal{E},\mathcal{D}}
    \infoRate(\mathcal{E},\operator{N}^{\boxtimes k},\mathcal{D}).
\end{equation}
\begin{figure}[t]\centering
\begin{subfigure}[b]{.49\textwidth}\centering
\begin{tikzpicture}[nodes/.style={draw=none, inner sep=0pt, outer sep= 0pt},
	bignode/.style={minimum height = 100pt, minimum width = 20pt, draw},
	smallnode/.style={minimum size = 15pt, draw}]
\node[bignode] (cq) {$\mathcal{E}$};
\node[bignode, right = 70pt of cq] (qc) {$\mathcal{D}$};
\path (cq.north) edge[draw=none] node[smallnode, anchor=north] (T1) {$\operator{N}$} (qc.north);
\path (cq.east|-T1) edge node[above] {$\system{A}_1$} (T1);
\path (T1) edge node[above] {$\system{B}_1$} (qc.west|-T1);
\path (cq.south) edge[draw=none] node[smallnode, anchor=south] (Tk) {$\operator{N}$} (qc.south);
\path (cq.east|-Tk) edge node[below] {$\system{A}_k$} (Tk);
\path (Tk) edge node[below] {$\system{B}_k$} (qc.west|-Tk);
\path (T1.center) edge[draw=none] node[smallnode, pos=.3333, anchor=center] (T2) {$\operator{N}$} node[pos=.6667, anchor=center] (Tdots) {$\vdots$} (Tk.center);
\path (cq.east|-T2) edge node[above] {$\system{A}_2$} (T2);
\path (T2) edge node[above] {$\system{B}_2$} (qc.west|-T2);
\path (cq.west) edge[line width=2pt] node[pos=1, above] {$\rv{X}$} ([xshift=-20pt]cq.west);
\path (qc.east) edge[line width=2pt] node[pos=1, above] {$\rv{Y}$} ([xshift=20pt]qc.east);
\draw[draw=none] ([yshift=10pt]T1.north) -- (T1.north) node[pos=0, above=0pt, color=white] {$\system{S}_0$};
\draw[draw=none] ([yshift=-10pt]Tk.south) -- (Tk.south) node[pos=0, below=0pt, color=white] {$\system{S}_k$};
\end{tikzpicture}
\settowidth{\somelength}{(a)}
\setlength{\somelength}{\dimexpr\textwidth-\somelength\relax}
\caption{\parbox[t]{\somelength}{Using the memoryless quantum channel $\operator{N}$ consecutively.}}
\label{fig:generic_memoryless}
\end{subfigure}
\begin{subfigure}[b]{.49\textwidth}\centering
\begin{tikzpicture}[nodes/.style={draw=none, inner sep=0pt, outer sep= 0pt},
	bignode/.style={minimum height = 100pt, minimum width = 20pt, draw},
	smallnode/.style={minimum size = 15pt, draw}]
\node[bignode] (cq) {$\mathcal{E}$};
\node[bignode, right = 70pt of cq] (qc) {$\mathcal{D}$};
\path (cq.north) edge[draw=none] node[smallnode, anchor=north] (T1) {$\operator{N}$} (qc.north);
\path (cq.east|-T1) edge node[above] {$\system{A}_1$} (T1);
\path (T1) edge node[above] {$\system{B}_1$} (qc.west|-T1);
\path (cq.south) edge[draw=none] node[smallnode, anchor=south] (Tk) {$\operator{N}$} (qc.south);
\path (cq.east|-Tk) edge node[below] {$\system{A}_k$} (Tk);
\path (Tk) edge node[below] {$\system{B}_k$} (qc.west|-Tk);
\path (T1.center) edge[draw=none] node[smallnode, pos=.3333, anchor=center] (T2) {$\operator{N}$} node[pos=.6667, anchor=center] (Tdots) {$\vdots$} (Tk.center);
\path (cq.east|-T2) edge node[above] {$\system{A}_2$} (T2);
\path (T2) edge node[above] {$\system{B}_2$} (qc.west|-T2);
\path (cq.west) edge[line width=2pt] node[pos=1, above] {$\rv{X}$} ([xshift=-20pt]cq.west);
\path (qc.east) edge[line width=2pt] node[pos=1, above] {$\rv{Y}$} ([xshift=20pt]qc.east);

\draw([yshift=10pt]T1.north) -- (T1.north) node[pos=0, above=0pt] {$\system{S}_0$};
\path (T1) edge node[right] {$\system{S}_1$} (T2);
\path (T2) edge node[pos=.75, right] {$\system{S}_2$} (Tdots);
\path (Tdots) edge node[pos=.25, right] {$\system{S}_{k-1}$} (Tk);
\draw ([yshift=-10pt]Tk.south) -- (Tk.south) node[pos=0, below=0pt] {$\system{S}_k$};
\end{tikzpicture}
\settowidth{\somelength}{(b)}
\setlength{\somelength}{\dimexpr\textwidth-\somelength\relax}
\caption{\parbox[t]{\somelength}{Using the quantum channel with memory $\operator{N}$ consecutively.}}
\label{fig:generic_memory}
\end{subfigure}
\begin{subfigure}[b]{0.96\columnwidth}\centering
\begin{tikzpicture}[nodes/.style={draw=none, inner sep=0pt, outer sep=0pt},
	smallnode/.style={minimum size = 15pt, draw}]
\node[smallnode] (cq1) {$\mathcal{E}$};
\node[smallnode, below =40pt of cq1.center, anchor=center] (cq2)
     {$\mathcal{E}$};
\node[below =40pt of cq2.center, anchor=center] (cqdots)
     {$\vdots$};
\node[smallnode, below =40pt of cqdots.center, anchor = center] (cqn)
     {$\mathcal{E}$};
\node[smallnode, right = 150pt of cq1] (qc1) {$\mathcal{D}$};
\node[smallnode, right = 150pt of cq2] (qc2) {$\mathcal{D}$};
\node[below = 40pt of qc2.center, anchor = center] (qcdots) {$\vdots$};
\node[smallnode, right = 150pt of cqn] (qcn) {$\mathcal{D}$};

\path (cq1) edge[draw=none] node[smallnode] (T1) {$\operator{N}^{\boxtimes k}$} (qc1);
\path (cq1) edge node[above] {$\system{A}_1^k$} (T1);
\path (T1) edge node[above] {$\system{B}_1^k$} (qc1);

\path (cq2) edge[draw=none] node[smallnode] (T2) {$\operator{N}^{\boxtimes k}$} (qc2);
\path (cq2) edge node[above] {$\system{A}_{k+1}^{2k}$} (T2);
\path (T2) edge node[above] {$\system{B}_{k+1}^{2k}$} (qc2);
\path (cqdots) edge[draw=none] node (Tdots) {$\vdots$} (qcdots);
\path (cqn) edge[draw=none] node[smallnode] (Tn) {$\operator{N}^{\boxtimes k}$} (qcn);
\path (cqn) edge node[below] {$\system{A}_{(n\!-\!1)\cdot k+1}^{n\cdot k}$} (Tn);
\path (Tn) edge node[below] {$\system{B}_{(n\!-\!1)\cdot k+1}^{n\cdot k}$} (qcn);

\path (cq1.west) edge[line width=2pt] node[pos=1, above] {$\rv{X}_1$} ([xshift=-20pt]cq1.west);
\path (cq2.west) edge[line width=2pt] node[pos=1, above] {$\rv{X}_2$} ([xshift=-20pt]cq2.west);
\path (cqn.west) edge[line width=2pt] node[pos=1, above] {$\rv{X}_n$} ([xshift=-20pt]cqn.west);
\path (qc1.east) edge[line width=2pt] node[pos=1, above] {$\rv{Y}_1$} ([xshift=20pt]qc1.east);
\path (qc2.east) edge[line width=2pt] node[pos=1, above] {$\rv{Y}_2$} ([xshift=20pt]qc2.east);
\path (qcn.east) edge[line width=2pt] node[pos=1, above] {$\rv{Y}_n$} ([xshift=20pt]qcn.east);

\draw ([yshift=10pt]T1.north) -- (T1.north) node[pos=0, above=0pt] {$\system{S}_0$};
\path (T1) edge node[right] {$\system{S}_k$} (T2);
\path (T2) edge node[right] {$\system{S}_{2k}$} (Tdots);
\path (Tdots) edge node[right] {$\system{S}_{(n-1)\cdot k}$} (Tn);
\draw ([yshift=-10pt]Tn.south) -- (Tn.south) node[pos=0, below=0pt] {$\system{S}_{n\cdot k}$};
\end{tikzpicture}
\caption{Generic classical communication corresponding to~\eqref{eq:capacity:jointk}.}
\label{fig:generic_memory_multiple}
\end{subfigure}
\caption{Classical communications over quantum channels.}
\label{fig:ccq}
\end{figure}
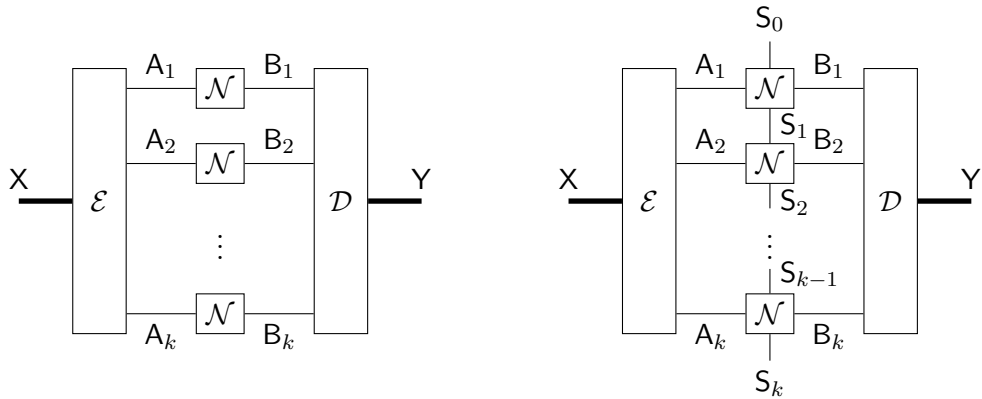
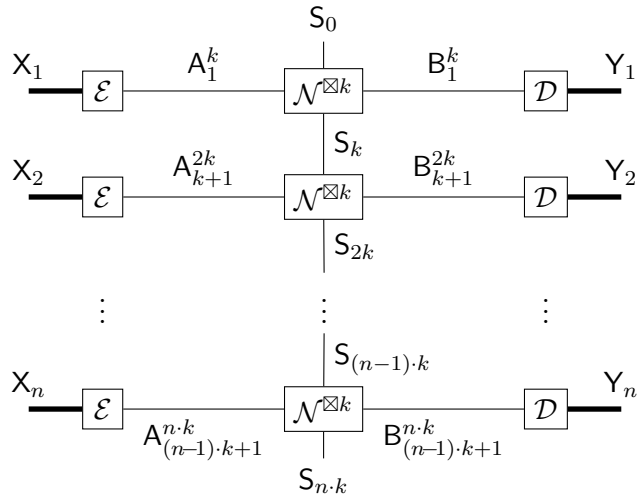
\par
In this chapter, we are interested in computing and bounding the information rate as in~\eqref{eq:capacity:jointk} for finite-dimensional quantum channels with memory using only separable input ensembles and local output measurements, \ie, the case $k=1$, which is depicted in Figure~\ref{fig:ccq2}.
This restriction is equivalent to the scenario where no quantum computing device is present at the sending or receiving end or where our manipulation of the channel is limited to a single-channel use.
The difficulty of the problem lies with the presence of the quantum memory.
In the simplest situation, the memory system exhibits classical properties under certain ensembles and measurements.
In this case, the resulting classical communication setup is equivalent to a finite-state-machine channel (FSMC)~\cite{gallager1968information}.
Though the evaluation of the information rate of an FSMC is nontrivial in general, efficient stochastic methods for estimating and bounding this quantity have been developed~\cite{arnold2006simulation, sadeghi2009optimization}.
\begin{figure}[t]\centering
\begin{tikzpicture}[nodes/.style={draw=none, inner sep=0pt, outer sep=0pt},
	smallnode/.style={minimum size = 15pt, draw},
	bignode/.style={minimum height = 150pt, minimum width = 20pt, draw}]
	\node[bignode] (M) {}; \node[rotate=90] at (M) {Classical Encoder};
	\node[bignode, right=200pt of M] (Mp) {}; \node[rotate=90] at (Mp) {Classical Decoder};
	\path (M.north) edge[draw=none]
		node[smallnode, anchor=north, pos=.25] (cq1) {$\mathcal{E}$}
		node[smallnode, anchor=north, pos=.5] (T1) {$\operator{N}$}
		node[smallnode, anchor=north, pos=.75] (qc1) {$\mathcal{D}$} (Mp.north);
	\path (M.south) edge[draw=none]
		node[smallnode, anchor=south, pos=.25] (cqn) {$\mathcal{E}$}
		node[smallnode, anchor=south, pos=.5] (Tn) {$\operator{N}$}
		node[smallnode, anchor=south, pos=.75] (qcn) {$\mathcal{D}$} (Mp.south);
	\path (cq1.center) edge[draw=none]
		node[smallnode, anchor=center, pos=.3333] (cq2) {$\mathcal{E}$}
		node[anchor=center, pos=.6667] (cqdots) {$\vdots$} (cqn.center);
	\path (T1.center) edge[draw=none]
		node[smallnode, anchor=center, pos=.3333] (T2) {$\operator{N}$}
		node[anchor=center, pos=.6667] (Tdots) {$\vdots$} (Tn.center);
	\path (qc1.center) edge[draw=none]
		node[smallnode, anchor=center, pos=.3333] (qc2) {$\mathcal{D}$}
		node[anchor=center, pos=.6667] (qcdots) {$\vdots$} (qcn.center);	
	\path (M.east|-cq1) edge[line width=2pt] node[above] {$\rv{X}_1$} (cq1.west);
	\path (M.east|-cq2) edge[line width=2pt] node[above] {$\rv{X}_2$} (cq2.west);
	\path (M.east|-cqn) edge[line width=2pt] node[above] {$\rv{X}_n$} (cqn.west);
	\path (cq1) edge node[above] {$\system{A}_1$} (T1);
	\path (cq2) edge node[above] {$\system{A}_2$} (T2);
	\path (cqn) edge node[above] {$\system{A}_n$} (Tn);
	\path (T1) edge node[above] {$\system{B}_1$} (qc1);
	\path (T2) edge node[above] {$\system{B}_2$} (qc2);
	\path (Tn) edge node[above] {$\system{B}_n$} (qcn);
	\path (qc1.east) edge[line width=2pt] node[above] {$\rv{Y}_1$} (Mp.west|-qc1);
	\path (qc2.east) edge[line width=2pt] node[above] {$\rv{Y}_2$} (Mp.west|-qc2);
	\path (qcn.east) edge[line width=2pt] node[above] {$\rv{Y}_n$} (Mp.west|-qcn);
	\path (M.west) edge[line width=2pt] node[above, pos=1] {$\rv{U}$} ([xshift=-20pt]M.west);
	\path (Mp.east) edge[line width=2pt] node[above, pos=1] {$\hat{\rv{U}}$} ([xshift=20pt]Mp.east);
	\path ([yshift=10pt]T1.north) edge node[pos=0, above=0pt] {$\system{S}_0$} (T1.north);
	\path (T1) edge node[right] {$\system{S}_1$} (T2);
	\path (T2) edge node[right] {$\system{S}_2$} (Tdots);
	\path (Tdots) edge node[right] {$\system{S}_n$} (Tn);
	\path ([yshift=-10pt]Tn.south) edge node[pos=0, below=0pt] {$\system{S}_{n+1}$} (Tn.south);
\end{tikzpicture}
\caption{Classical communication over a quantum channel with memory using a separable ensemble and local measurements.}
\label{fig:ccq2}
\end{figure}
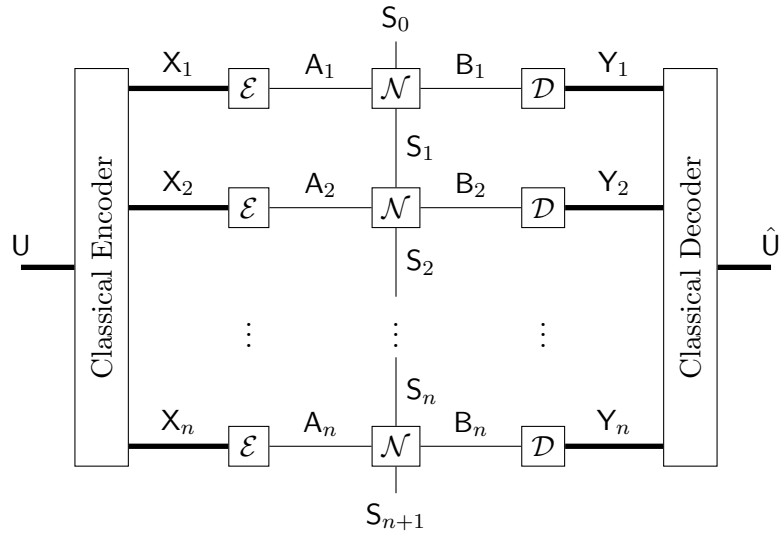
\par
Our work is highly inspired by~\cite{arnold2006simulation}, where the authors considered the information rate of FSMCs.
In particular, for an indecomposable FSMC~\cite{gallager1968information} with its channel law described by \pmf $W$, its information rate, which is independent of the initial channel state, is given by
\begin{equation}\label{eq:def:fsmc:ir:1}
\infoRate_W(Q) = \lim_{n\to\infty} \frac{1}{n} \mutualInfo(\rv{X}_1^n;\rv{Y}_1^n),
\end{equation}
where $\rv{X}_1^n = \left(\rv{X}_1,\ldots,\rv{X}_n\right)$ is the channel input process characterized by some sequence of distributions $\{Q^{(n)}\}_{n}$ and where $\rv{Y}_1^n = \left(\rv{Y}_1,\ldots,\rv{Y}_n\right)$ is the channel output process.
Although, except for very special cases, there are no single-letter or other simple expressions for information rate available, efficient stochastic techniques have been developed for estimating the information rate for \emph{stationary} and \emph{ergodic} input processes $\{Q^{(n)}\}_{n}$~\cite{arnold2006simulation, sharma2001entropy, pfister2001achievable}.
(For these techniques, under mild conditions, the numerical estimate of the information rate converges with probability one to the true value when the length of the channel input sequence goes to infinity.)
In this chapter, we extend such techniques to quantum channels with memory; in particular, we use \emph{factor graphs} for quantum probabilities~\cite{loeliger2017factor} (see Section~\ref{subsec:FG:QP}) for estimating quantities of interest.
These graphical models are useful for visualizing the relevant computations and for providing a clear comparison between the setup considered in this chapter and its classical counterparts in~\cite{arnold2006simulation} and~\cite{sadeghi2009optimization}.\footnote{
Clearly, the graphical models that we use are very similar to tensor networks (see, for example, the discussion in Appendix~\ref{app:figures}~A of~\cite{loeliger2017factor}).
A benefit of the graphical models that we use (including the corresponding terminology), is that they are compatible with the graphical models that are being used in classical information processing.}
\par
Our work is also partially inspired by~\cite{sadeghi2009optimization}, where the authors proposed upper and lower bounds based on some so-called auxiliary FSMCs, which are often lower-complexity approximations of the original FSMC.
They also provided efficient methods for optimizing these bounds.
Such techniques have been proven useful for FSMCs with large state spaces, when the above-mentioned information rate estimation techniques can be overly time-consuming.
Interestingly enough, the lower bounds represent achievable rates under mismatched decoding, where the decoder bases its computations not on the original FSMC but on the auxiliary FSMC~\cite{ganti2000mismatched}.
(See~\cite{sadeghi2009optimization} for a more detailed discussion of this topic and for further references.)
In this chapter, we also consider auxiliary channels and their induced bounds.
However, the auxiliary channels of our interest are chosen from a larger set of channels called \emph{quantum-state channels}, which will be defined in Section~\ref{QCwM:sec:3:QCM}.
We also propose a method for optimizing these bounds.
In particular, our method for optimizing the lower bound is ``data-driven'' in the sense that only  the input/output sequences of the original channel are needed, but not the mathematical model of the original channel.
\par
One must note that even if we can efficiently compute or bound the information rate, it is still a long way to go to compute the classical capacity of a quantum channel with memory.
On the one hand, maximizing $\infoRate(\mathcal{E},\operator{N},\mathcal{D})$ is a difficult problem.
(The analogous classical problems have been addressed in~\cite{arimoto1972algorithm},~\cite{blahut1972computation}, and~\cite{vontobel2008generalization}.)
On the other hand, due to the superadditivity property~\cite{hastings2009superadditivity} of quantum channels, which happens to be more common for quantum channels with memory~\cite{macchiavello2004transition, karimipour2006entanglement, lupo2010transitional} (compared with memoryless quantum channels), it is inevitable to consider joint ensembles on input systems and joint measurements on output systems across multiple channel uses.
\par
The rest of this chapter is organized as follows.
Section~\ref{sec:2:FSMC} reviews the method of estimating the information rate of an FSMC.
Section~\ref{QCwM:sec:3:QCM} models the classical communication scheme over a quantum channel with memory, and defines the notion of quantum-state channels as an equivalent description.
A graphical notation for representing such channels is also presented in this section.
Section~\ref{QCwM:sec:4:IR} estimates the information rate of such channels.
Section~\ref{QCwM:sec:5:UBLB} considers the upper and lower bounds induced by auxiliary quantum-state channels, and presents methods for optimizing them.
Section~\ref{QCwM:sec:6:example} contains numerical examples.
\section[Review of FSMCs]{Review of (Classical) Finite-State Machine Channels:
         Information Rate, its Estimation, and Bounds}
\label{sec:2:FSMC}
In this section, we review the methods developed in~\cite{arnold2006simulation} for estimating the information rate of a (classical) FSMC, and the auxiliary-channel-induced upper and lower bounds studied in~\cite{sadeghi2009optimization}.
As we will see, the development in later sections about quantum channels will have many similarities, but also some important differences.
We emphasize that this section is a \emph{brief review} of~\cite{arnold2006simulation} and~\cite{sadeghi2009optimization} for the purpose of introducing necessary tools and ideas for later sections.
\subsection{Finite-State Machine Channels (FSMCs) and their Graphical Representation}
A (time-invariant) finite-state machine channel (FSMC) consists of an input alphabet $\set{X}$, an output alphabet $\set{Y}$, a state alphabet $\set{S}$, all of which are finite, and a channel law $W(y,s'|x,s)$, where the latter equals the probability of receiving $y\in\set{Y}$ and ending up in state $s'\in\set{S}$ given the channel input $x\in\set{X}$ and the previous channel state $s\in\set{S}$.
The relationship among the input, output, and state processes $\rv{X}_1^n,\rv{Y}_1^n,\rv{S}_0^n$ of $n$-channel uses can be described by the conditional \pmf
\begin{equation}\label{eq:FSMC:n:conditional}
W(\vy_1^n,\vs_1^n|\vx_1^n,s_0)  \defeq \prob_{\rv{Y}_1^n,\rv{S}_1^n|\rv{X}_1^n,\rv{S}_0}(\vy_1^n,\vs_1^n|\vx_1^n,s_0) = \prod_{\ell=1}^{n} W(y_\ell,s_\ell|x_\ell,s_{\ell-1}),
\end{equation}
where $x_\ell\in\set{X}$, $y_\ell\in\set{Y}$, and $s_\ell\in\set{S}$ for each $\ell$.
\begin{example}[Gilbert--Elliott channels]\label{example:GEC}
A notable class of examples of FSMCs are the Gilbert--Elliott channels~\cite{mushkin1989capacity}, which behave like a binary symmetric channel (BSC) with cross-over probability $p_s$ controlled by the channel state $s\in\{``\mathrm{b}",``\mathrm{g}"\}$, where usually $\bigl| \pbad - \frac{1}{2} \bigr| < \bigl| \pgood - \frac{1}{2} \bigr|$.
The state process itself is a first-order stationary ergodic Markov process that is independent of the input process.\footnote{
The independence of the state process on the input process is a particular feature of the Gilbert--Elliott channel.
In general, the state process of a finite-state channel can depend on the input process.}
(For more details, see, \eg, the discussions in~\cite{sadeghi2009optimization}.)
\end{example}
Given an input process $\{Q^{(n)}\}_{n}$ and an initial state \pmf $\prob_{\rv{S}_0}(s_0)$, we can write down the joint \pmf of  $(\rv{X}_1^n,\rv{Y}_1^n,\rv{S}_0^n)$ as 
\begin{equation} \label{eq:FSMC:glabal:distribution}
g(\vx_1^n,\vy_1^n,\vs_0^n) \defeq \prob_{\rv{X}_1^n,\rv{Y}_1^n,\rv{S}_0^n}(\vx_1^n,\vy_1^n,\vs_0^n) = \prob_{\rv{S}_0}(s_0) \!\cdot\! Q^{(n)}(\vx_1^n) \!\cdot\!
    \prod_{\ell=1}^{n} W(y_\ell,s_\ell|x_\ell,s_{\ell-1}).
\end{equation}
The factorization of $g(\vx_1^n,\vy_1^n,\vs_0^n)$ as shown in~\eqref{eq:FSMC:glabal:distribution} can be visualized with the help of an NFG as in Figure~\ref{fig:FMSC:high:level:1}.
\begin{figure}\centering
\begin{tikzpicture}[
    factor/.style={rectangle, minimum width=1cm, minimum height=.7cm, draw},
    sfactor/.style={rectangle, minimum size=.4cm, draw}]
	\node[sfactor] (S) {$\prob_{\rv{S}_0}$};
	\node[factor] (E1) [right=.7cm of S] {$W$};
	\draw (S) -- (E1) node[above=-.05cm,midway] {$s_0$};
	\node[sfactor] (X1) [above=1.1cm of E1] {$Q$};
	\draw (X1) -- (E1) node[right=-.05cm, midway] {$x_1$};
	\draw (E1.south) -- ([yshift=-.6cm]E1.south) node[right=-.05cm] {$y_1$};
	
	\node[factor] (E2) [right=1cm of E1] {$W$};
	\draw (E1) -- (E2) node[above=-.05cm,midway] {$s_1$};
	\node[sfactor] (X2) [above=1.1cm of E2] {$Q$};
	\draw (X2) -- (E2) node[right=-.05cm, midway] {$x_2$};
	\draw (E2.south) -- ([yshift=-.6cm]E2.south) node[right=-.05cm] {$y_2$};

	\node[factor,draw=none] (Edummy) [right=1cm of E2] {$\cdots$};
	\node[sfactor,draw=none] (Xdummy) [above=1.1cm of Edummy] {$\cdots$};
	
	\node[factor] (En) [right=1cm of Edummy] {$W$};
	\draw (E2) -- (Edummy) node[above=-.05cm,midway] {$s_2$};
	
	\draw (Edummy) -- (En) node[above=-.05cm,midway] {$s_{n-1}$};
	\node[sfactor] (Xn) [above=1.1cm of En] {$Q$};
	\draw (Xn) -- (En) node[right=-.05cm, midway] {$x_n$};
	\draw (En.south) -- ([yshift=-.6cm]En.south) node[right=-.05cm] {$y_n$};

	\node[sfactor, inner sep=0pt] (ee) [right=.5cm of En] {$1$};
	\draw (En) -- (ee) node[above=-.05cm, midway] {$s_n$};
	
	\begin{pgfonlayer}{bg}
		\draw[dashed, blue, line width=1.5pt, fill=blue!10] ([xshift=-.8cm,yshift=0.45cm]X1) rectangle ([xshift=.7cm,yshift=-0.45cm]Xn);
		\draw[dashed, red, line width=1.5pt, fill= red!10] ([xshift=-.7cm,yshift=0.55cm]S) rectangle ([xshift=.6cm,yshift=-0.55cm]ee);
	\end{pgfonlayer}
\end{tikzpicture}
\caption[Channel with a classical state: closing the top box yields the input process $Q^{(n)}$, closing the bottom box yields the joint channel law $W(\vy_1^n|\vx_1^n)$.]{Channel with a classical state: closing the {\color{blue}top} box yields the input process $Q^{(n)}$, closing the {\color{red}bottom} box yields the joint channel law $W(\vy_1^n|\vx_1^n)$.}
\label{fig:FMSC:high:level:1}
\end{figure}
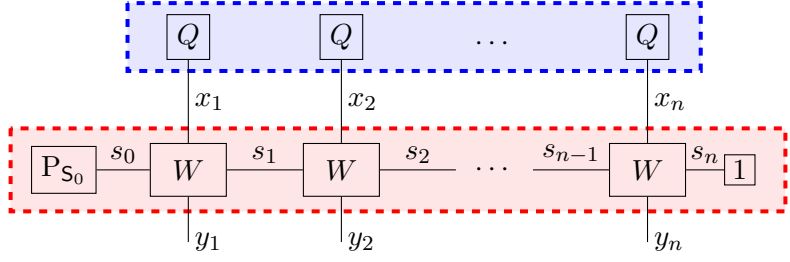
In particular:
\begin{enumerate}[label=\alph*)]
\item \label{closing:the:box} The part of the factor graph inside the {\color{red}bottom} box represents $W\left(\vy_1^n,\vs_1^n|\vx_1^n,s_0\right)$, \ie, the probability of obtaining $\vy_1^n$ and $\vs_1^n$ given $\vx_1^n$ and $s_0$.
	After applying the ``closing-the-box'' operation (see Section~\ref{subsec:marginal:acyclic:FGs}) \wrt the {\color{red}bottom} box, we obtain the joint channel law $W(\vy_1^n|\vx_1^n) \defeq \sum_{\vs_0^n} \prob_{\rv{S}_0}(s_0) \cdot W(\vy_1^n,\vs_1^n|\vx_1^n,s_0) $.
\item The part of the factor graph inside the {\color{blue}top} box represents the input process $Q^{(n)}(\vx_1^n)$.
	Here, for simplicity, the input process is an {i.i.d.}~process characterized by the \pmf $Q$, \ie, $Q^{(n)}(\vx_1^n) = \prod_{\ell=1}^{n} Q(x_{\ell})$.
\item The marginal function $g(\vx_1^n,\vy_1^n) \defeq \sum_{\vs_0^n} g(\vx_1^n,\vy_1^n,\vs_0^n)$is the marginal \pmf of $\vx_1^n$ and $\vy_1^n$.
  The marginal function $g(\vs_0^n) \defeq \sum_{\vx_1^n,\vy_1^n} g(\vx_1^n,\vy_1^n, \vs_0^n)$ is the marginal \pmf of $\vs_0^n$.
  Other marginal \pmfs can be obtained similarly.
\end{enumerate}
Using the ``closing-the-box'' operations, such factor graph representations can be useful in computing a number of quantities of interests.
For example, to prove that~\eqref{eq:FSMC:n:conditional} is indeed a valid conditional \pmf, it suffices to show that
\begin{equation}\label{eq:verfy:FSMC:conditional:distribution}
\sum_{\vs_1^n,\vy_1^n} W(\vy_1^n,\vs_1^n|\cvx_1^n,\cs_0) = 1
	\quad \forall \cvx_1^n\in\set{X}^{n},\,\cs_0\in\set{S},
\end{equation}
which can be verified via a sequence of ``closing-the-box'' operations as shown in Figure~\ref{fig:FMSC:high:level:2} in the Appendix~\ref{app:figures}.
Such techniques are at the heart of the information-rate-estimation methods as in~\cite{arnold2006simulation}.
The details are reviewed in the next subsection.
\subsection{Information Rate Estimation} \label{QCwM:sec:FSMC:IR}
The approach of~\cite{arnold2006simulation} for estimating information rate of FSMCs, as reviewed in this section, is based on the Shannon--McMillan--Breiman theorem (see \eg,~\cite{cover2012elements}) and suitable generalizations.
We make the following assumptions.
\begin{itemize}
\item As already mentioned, the derivations in this chapter are for the case where the input process $\rv{X} = (\rv{X}_1,\rv{X}_2,\ldots)$ is an {i.i.d.}~process.
	The results can be generalized to other stationary ergodic input processes that can be represented by a finite-state-machine source (FSMS).
	Technically, this is done by defining a new state that combines the source state and the channel state.
\item We assume that the FSMC is indecomposable, which roughly means that in the long term the behavior of the channel is independent of the initial channel state distribution $\prob_{\rv{S}_0}$ (see~\cite[Section~4.6]{gallager1968information} for the exact definition).
	For such channels and stationary ergodic input processes, the information rate $\infoRate_W$ in~\eqref{eq:def:fsmc:ir:1} is well defined.
\end{itemize}
Let $W(\vy_1^n|\vx_1^n)$ be the joint channel law of an FSMC satisfying the assumptions above.
As aforementioned, the information rate of such a channel using the {i.i.d.}~input distribution $\{Q^{(n)}\defeq Q^{\tensor{n}}\}_{n}$ is given by~\eqref{eq:def:fsmc:ir:1}, where the input process $\rv{X}_1^n$ and the output process $\rv{Y}_1^n$ are jointly distributed according to 
\begin{equation} \label{eq:joint:fsmc}
\prob_{\rv{X}_1^n,\rv{Y}_1^n}(\vx_1^n,\vy_1^n) = \prod_{\ell=1}^{n} Q(x_\ell) \cdot W(\vy_1^n|\vx_1^n).
\end{equation}
One can rewrite~\eqref{eq:def:fsmc:ir:1} as
\begin{equation}\label{eq:def:fsmc:ir:2}
    \infoRate_W(Q) = \entropicRate(\rv{X}) + \entropicRate(\rv{Y}) - \entropicRate(\rv{X},\rv{Y}),  
\end{equation}
where the \emph{entropic rates} $\entropicRate(\rv{X})$, $\entropicRate(\rv{Y})$ and $\entropicRate(\rv{X},\rv{Y})$ are defined as
\begin{align}
    \entropicRate(\rv{X}) &\defeq \lim_{n\to\infty} \frac{1}{n} \entropy(\rv{X}_1^n), \\
    \entropicRate(\rv{Y}) &\defeq \lim_{n\to\infty} \frac{1}{n} \entropy(\rv{Y}_1^n), \\
    \entropicRate(\rv{X},\rv{Y}) &\defeq \lim_{n\to\infty} \frac{1}{n} \entropy(\rv{X}_1^n,\rv{Y}_1^n).
\end{align}
\par
We proceed as in~\cite{arnold2006simulation}.
(For more background information, see the references in~\cite{arnold2006simulation}, in particular~\cite{ephraim2002hidden}.)
Namely, because of~\eqref{eq:def:fsmc:ir:2} and 
\begin{align}
\label{eq:converge:x:1}
-\frac{1}{n}\log{\prob_{\rv{X}_1^n}(\rv{X}_1^n)} &\stackrel{n\to\infty}{\longrightarrow} \entropicRate(\rv{X})
      && \quad \text{in probability},\\
\label{eq:converge:y:1}
-\frac{1}{n}\log{\prob_{\rv{Y}_1^n}(\rv{Y}_1^n)} &\stackrel{n\to\infty}{\longrightarrow} \entropicRate(\rv{Y})
	&& \quad \text{in probability},\\
\label{eq:converge:xy:1}
-\frac{1}{n}\log{\prob_{\rv{X}_1^n,\rv{Y}_1^n}(\rv{X}_1^n,\rv{Y}_1^n)} &\stackrel{n\to\infty}{\longrightarrow} \entropicRate(\rv{X},\rv{Y})
	&& \quad \text{in probability},
\end{align}
by choosing some large number $n$, we have the approximation
\begin{equation}\label{eq:FSMC:ir:estimate:1}
\infoRate_W(Q) \approx -\frac{1}{n}\log{\prob_{\rv{X}_1^n}(\cvx_1^n)} - \frac{1}{n}\log{\prob_{\rv{Y}_1^n}(\cvy_1^n)} + \frac{1}{n}\log{\prob_{\rv{X}_1^n,\rv{Y}_1^n}(\cvx_1^n,\cvy_1^n)}
\end{equation}
where $\cvx_1^n$ and $\cvy_1^n$ are some input and output sequences, respectively, randomly generated according to
\begin{equation}\label{eq:fsmc:joint:xy:distribution}
\prob_{\rv{X}_1^n,\rv{Y}_1^n}(\cvx_1^n,\cvy_1^n) = \sum_{\vs_0^n} \prob_{\rv{S}_0}(s_0)\cdot Q^{(n)}(\cvx_1^n) \cdot W(\cvy_1^n,\vs_1^n|\cvx_1^n,s_0),
\end{equation}
where $W(\cvy_1^n,\vs_1^n|\cvx_1^n,s_0)$ is defined in~\eqref{eq:FSMC:n:conditional}.
Note that $\cvx_1^n$ can be obtained by simulating the input process, and $\cvy_1^n$ can be obtained by simulating the channel for the given input string $\cvx_1^n$.
The latter can be done by keeping track of $\prob_{\rv{Y}_\ell|\rv{X}_1^\ell,\rv{Y}_1^{\ell\!-\!1}}(y_\ell|\cvx_1^\ell,\cvy_1^{\ell\!-\!1})$, which is proportional to $\prob_{\rv{Y}_\ell,\rv{Y}_1^{\ell\!-\!1}|\rv{X}_1^\ell}(y_\ell,\cvy_1^{\ell\!-\!1}|\cvx_1^\ell)$, and can be efficiently calculated by applying suitable ``closing-the-box'' operations as in Figure~\ref{fig:CFSM:channel:simulation:Y} in the Appendix~\ref{app:figures}.
\par 
We continue by showing how the three terms appearing on the right-hand side of~\eqref{eq:FSMC:ir:estimate:1} can be computed efficiently.
We show it explicitly for the second term, and then outline it for the first and the third term. \par 
In order to efficiently compute the second term on the right-hand side of~\eqref{eq:FSMC:ir:estimate:1}, \ie, $-\frac{1}{n}\log{\prob_{\rv{Y}_1^n}(\cvy_1^n)}$, we consider the \emph{state metric} defined in~\cite{arnold2006simulation} as
\begin{equation}\label{eq:def:classical:channel:state:metric:Y:1}
\muY_{\ell}(s_{\ell}) \defeq \sum_{\vx_1^{\ell}} \sum_{\vs_0^{\ell\!-\!1}} \prob_{\rv{S}_0}(s_0) \cdot Q^{(\ell)}(\vx_1^\ell) \cdot W(\cvy_1^\ell,\vs_1^\ell|\vx_1^\ell,s_0).
\end{equation}
In this case,
\begin{equation} \label{eq:calculate:p:y:1}
\prob_{\rv{Y}_1^n}(\cvy_1^n) = \sum_{s_n} \muY_n(s_n),
\end{equation}
and the calculation of $\muY_\ell(s_{\ell})$ can be done iteratively as
\begin{align}
\muY_\ell(s_{\ell}) & = \sum_{x_{\ell}} \sum_{s_{\ell\!-\!1}} \muY_{\ell\!-\!1}(s_{\ell\!-\!1}) \cdot Q(x_{\ell}|\vx_1^{\ell\!-\!1}) \cdot W(\cy_{\ell},s_{\ell}|x_{\ell},s_{\ell\!-\!1}) \\
\label{eq:recursive:state:metric:Y:1}
& = \sum_{x_{\ell}} \sum_{s_{\ell\!-\!1}} \muY_{\ell\!-\!1}(s_{\ell\!-\!1}) \cdot Q(x_{\ell}) \cdot W(\cy_{\ell},s_{\ell}|x_{\ell},s_{\ell\!-\!1}).
\end{align}
Eq.~\eqref{eq:recursive:state:metric:Y:1} is visualized in Figure~\ref{fig:CFSM:estimate:hY} as applying suitable ``closing-the-box'' operations to the NFG in Figure~\ref{fig:FMSC:high:level:1}. \par 
However, since the value of $\muY_{\ell}(s_\ell)$ tends to zero as $\ell$ grows, such recursive calculations are numerically inconvenient.
A solution is to \emph{normalize} $\muY_{\ell}(s_\ell)$ after each use of~\eqref{eq:recursive:state:metric:Y:1} and to keep track of the scaling coefficients.
Namely,
\begin{equation}\label{eq:recursive:state:metric:Y:2}
\bmuY_{\ell}(s_\ell) \defeq \frac{1}{\lambdaY_\ell} \sum_{x_\ell}\sum_{s_{\ell\!-\!1}} \bmuY_{\ell\!-\!1}(s_\ell) \cdot Q(x_\ell) \cdot W(\cy_\ell,s_\ell|x_\ell,s_{\ell\!-\!1}),
\end{equation}
where the scaling factor $\lambdaY_\ell > 0$ is chosen such that $\sum_{s_\ell} \bmuY_\ell(s_\ell) = 1$.
With this, Eq.~\eqref{eq:calculate:p:y:1} can be rewritten as
\begin{equation} \label{eq:calculate:p:y:2}
\prob_{\rv{Y}_1^n}(\cvy_1^n) = \prod_{\ell=1}^n \lambdaY_{\ell}.
\end{equation}
Finally, we arrive at the following efficient procedure for computing $-\frac{1}{n}\log{\prob_{\rv{Y}_1^n}(\vy_1^n)}$:
\begin{itemize}
\item For $\ell = 1,\ldots, n$, iteratively compute the normalized state
    metric and with that the scaling factors $\lambdaY_{\ell}$.
\item Conclude with the result
	\begin{equation} \label{eq:calculate:p:y:3}
	-\frac{1}{n}\log{\prob_{\rv{Y}_1^n}(\vy_1^n)} = \frac{1}{n} \sum_{\ell=1}^n
    \log(\lambdaY_\ell).
	\end{equation}
\end{itemize} \par 
The third term on the right-hand side of~\eqref{eq:FSMC:ir:estimate:1} can be evaluated by an analogous procedure, where the state metric $\muY_\ell(s_\ell)$ is replaced by the state metric
\begin{equation} \label{eq:def:classical:channel:state:metric:XY:1}
  \muXY_\ell(s_\ell) \defeq \sum_{\vs_0^{\ell\!-\!1}} \prob_{\rv{S}_0}(s_0) \cdot Q^{(\ell)}(\cvx_1^\ell) \cdot W(\cvy_1^\ell,\vs_1^\ell|\cvx_1^\ell).      
\end{equation}
The iterative calculation of $\muXY_{\ell}(s_{\ell})$ is visualized in Figure~\ref{fig:CFSM:estimate:hXY}. \par
Finally, the first term on the right-hand side of~\eqref{eq:FSMC:ir:estimate:1} can be trivially evaluated if $\rv{X}$ is an {i.i.d.}~process, and with a similar approach as above if it is described by an FSMS.
\par
The above discussion is summarized as Algorithm~\ref{alg:SPA}.
On the side, note that for each $\ell=2,\ldots,n$, the quantities $\lambdaY_\ell$ and $\lambdaXY_\ell$ in the algorithm are the conditional probabilities $\prob_{\rv{Y}_\ell|\rv{Y}_{1}^{\ell\!-\!1}}(\cy_\ell|\cvy_1^{\ell\!-\!1})$ and $\prob_{\rv{X}_\ell\rv{Y}_\ell|\rv{X}_{1}^{\ell\!-\!1}\rv{Y}_{1}^{\ell\!-\!1}}(\cx_{\ell},\cy_\ell|\cvx_1^{\ell\!-\!1},\cvy_1^{\ell\!-\!1})$, respectively.
\begin{algorithm}
\caption{Estimating the Information Rate of an FSMC}
\begin{algorithmic}[1] 
\Require{An indecomposable FSMC channel law $W$, a input distribution $Q$, a positive integer $n$ large enough.}
\Ensure{$\infoRate_W(Q) \approx \entropy(\rv{X}) + \hat\entropicRate(\rv{Y}) - \hat\entropicRate(\rv{X},\rv{Y})$.}
\State Initialize the channel state distribution $\prob_{\rv{S}_0}$ as a uniform distribution over $\set{S}$
\State Generate an input sequence $\cvx_1^n \sim Q^{\tensor n}$
\State Generate a corresponding output sequence $\cvy_1^n$
\State $\bmuY_{0}\gets \prob_{\rv{S}_0}$
\ForEach{$\ell=1,\ldots,n$}
\State $\muY_\ell(s_\ell) \gets \sum_{x_\ell,s_{\ell\!-\!1}} \bmuY_{\ell\!-\!1}(s_{\ell\!-\!1}) \cdot Q(x_\ell) \cdot W(\cy_\ell,s_\ell|x_\ell,s_{\ell\!-\!1})$
\State $\lambdaY_\ell \gets \sum_{s_\ell} \muY_\ell(s_\ell)$;
\State $\bmuY_\ell \gets \muY_\ell / \lambdaY_\ell$
\EndFor
\State $\hat\entropicRate(\rv{Y}) \gets -\frac{1}{n} \sum_{\ell=1}^n \log(\lambdaY_\ell)$
\State $\bmuXY_{0} \gets \prob_{\rv{S}_0}$
\ForEach{$\ell=1,\ldots,n$}
\State $\muXY_\ell(s_\ell) \gets \sum_{s_{\ell\!-\!1}} \bmuXY_{\ell\!-\!1}(s_{\ell\!-\!1}) \cdot Q(\cx_\ell) \cdot W(\cy_\ell,s_\ell|\cx_\ell,s_{\ell\!-\!1})$
\State $\lambdaXY_\ell \gets \sum_{s_\ell} \muXY_\ell(s_\ell)$
\State $\bmuXY_\ell \gets \muXY_\ell/\lambdaXY_\ell$
\EndFor
\State $\hat\entropicRate(\rv{X},\rv{Y})\gets - \frac{1}{n} \sum_{\ell=1}^n \log(\lambdaXY_\ell)$
\State $\entropy(\rv{X}) \gets -\sum_{x} Q(x) \log{Q(x)}$
\State Estimate $\infoRate_W(Q)$ as $\entropy(\rv{X}) + \hat\entropicRate(\rv{Y}) - \hat\entropicRate(\rv{X},\rv{Y})$.
\end{algorithmic}
\label{alg:SPA}
\end{algorithm}
\subsection{Auxiliary Channels and Bounds on the Information Rate}
\label{QCwM:sec:aux}
As mentioned earlier in this chapter, auxiliary channels\footnote{Technically speaking, an auxiliary channel can be defined as \emph{any} channel with the same input/output alphabet. For example, an auxiliary channel for an FSMC can be just another FSMC with smaller state space; in contrast, in Section~\ref{QCwM:sec:5:UBLB}, an auxiliary channel can also be a quantum-state channel.} are introduced when the state space of the FSMC is too large, making the calculation in Algorithm~\ref{alg:SPA} (practically) intractable.
More precisely, given an auxiliary forward FSMC (AF-FSMC) $\hat{W}(y_\ell,\hat{s}_\ell|x_\ell,\hat{s}_{\ell\!-\!1})$ and an auxiliary backward FSMC (AB-FSMC) $\hat{V}(x_\ell,\hat{s}_\ell|y_\ell,\hat{s}_{\ell\!-\!1})$,  a pair of upper and lower bounds of the information rate is given in~\cite{arnold2006simulation, sadeghi2009optimization} as
\begin{align}
\label{eq:def:IRUB}
\IRUB^{(n)}_{W}(\hat{W}) &\defeq \frac{1}{n} \sum_{\vx_1^n,\vy_1^n}
    Q(\vx_1^n) W(\vy_1^n|\vx_1^n) \log{\frac{W(\vy_1^n|\vx_1^n)}{\QWaux(\vy_1^n)}},\\
\label{eq:def:IRLB}
\IRLB_{W}^{(n)}(\hat{V})  &\defeq \frac{1}{n} \sum_{\vx_1^n,\vy_1^n} Q(\vx_1^n) W(\vy_1^n|\vx_1^n) \log{\frac{\hat{V}(\vx_1^n|\vy_1^n)}{Q(\vx_1^n)}},
\end{align}
where $\QWaux(\vy_1^n) \defeq \sum_{\vx_1^n} Q(\vx_1^n) \cdot \hat{W}(\vy_1^n|\vx_1^n)$.
To see that~\eqref{eq:def:IRUB} and~\eqref{eq:def:IRLB} are, respectively, upper and lower bounds, one can verify that
\begin{align}
\label{eq:IRUB:minus:IR}
\IRUB_{W}(\hat{W}) - \infoRate_{W} &= \frac{1}{n} \infdiv{\QW(\rv{Y}_1^n)}{\QWaux(\rv{y}_1^n)},\\
\label{eq:IR:minus:IRLB}
\infoRate_{W} - \IRLB_{W}(\hat{V}) &= \frac{1}{n} \sum_{\vy_1^n} \QW(\vy_1^n) \cdot \infdiv{V(\rv{X}_1^n|\vy_1^n)}{\hat{V}(\rv{X}_1^n|\vy_1^n)},
\end{align}
where the backward channel $V(\vx|\vy)$ is defined as $V(\vx|\vy)\defeq Q(\vx)W(\vy|\vx)/\QW(\vy)$.
In particular, given an AF-FSMC $\hat{W}$,~\cite{sadeghi2009optimization} considered the induced AB-FSMC $\hat{V}(\vx|\vy)\defeq Q(\vx)\hat{W}(\vy|\vx)/\QWaux(\vy)$.
In this case, 
\begin{equation}
\IRLB_{W}^{(n)}(\hat{V}) = \frac{1}{n} \sum_{\vx_1^n,\vy_1^n} Q(\vx_1^n) W(\vy_1^n|\vx_1^n) \log\frac{\hat{W}(\vy_1^n|\vx_1^n)}{\QWaux(\vy_1^n)}.
\end{equation}
The difference function $\Delta_{W}^{(n)}(\hat{W})$ is defined as
\begin{equation} \label{eq:delta} \begin{aligned}
\Delta_{W}^{(n)}(\hat{W}) &\defeq \IRUB^{(n)}_W(\hat{W}) - \IRLB_{W}^{(n)}(\hat{V})\\
&=\frac{1}{n} \sum_{\vx_1^n,\vy_1^n} Q(\vx_1^n) W(\vy_1^n|\vx_1^n) \log{\frac{W(\vy_1^n|\vx_1^n)}{\hat{W}(\vy_1^n|\vx_1^n)}}\\
&=\frac{1}{n} \infdiv{Q(\rv{X}_1^n) W(\rv{Y}_1^n|\rv{X}_1^n)}{Q(\rv{X}_1^n) \hat{W}(\rv{Y}_1^n|\rv{X}_1^n)}.
\end{aligned}\end{equation}
Apparently, $\Delta_{W}^{(n)}(\hat{W})\geqslant 0$, and equality holds if and only if $\hat{W}(\vy_1^n|\vx_1^n) = W(\vy_1^n|\vx_1^n)$ for all $\vx_1^n$ and $\vy_1^n$ with positive support \wrt $\prob_{\rv{X}_1^n,\rv{Y}_1^n}$ as in~\eqref{eq:joint:fsmc}.
An efficient algorithm for finding a local minimum of the difference function was proposed in~\cite{sadeghi2009optimization}; we refer to~\cite{sadeghi2009optimization} for further details.
\section[QCwM \& their Graphical Representation]{Quantum Channel with Memory and their Graphical Representation} \label{QCwM:sec:3:QCM}
In this section, we formalize our notations and modeling of quantum channels with memory~\cite{bowen2004quantum, kretschmann2005quantum, caruso2014quantum} and of classical communications over such channels.
In particular, we will define a class of channels named \emph{quantum-state channels},  which is an alternative description of the classical communications over quantum channels with memory.
In addition, we will introduce several NFGs for representing these channels and processes.
\subsection{Classical Communication over a Quantum Channel with Memory} \label{QCwM:sec:classical:comm:quantum}
As mentioned earlier in this chapter, we define a quantum channel with memory as follows.
\begin{definition} \label{def:quantum:channel:memory}
A \emph{quantum channel with memory} is a CPTP map:
\begin{equation} \label{eq:def:qcm}
\operator{N}:\DensOp(\hilbert_\system{A}\tensor\hilbert_\system{S}) \to \DensOp(\hilbert_\system{B}\tensor\hilbert_{\system{S}'}),
\end{equation}
where $\system{A}$ is the input system, $\system{B}$ is the output system, $\system{S}$ and $\system{S}'$ are, respectively, the memory systems before and after the channel use.
The Hilbert spaces $\hilbert_\system{A}$, $\hilbert_\system{B}$, and $\hilbert_\system{S} = \hilbert_{\system{S}'}$ are the state spaces corresponding to those systems.
\end{definition}
\par
We consider classical communication over such channels using some separable input ensemble and local output measurements; namely, the encoder and decoder are, respectively, some classical-to-quantum and quantum-to-classical channels involving a single input or output system.
In particular, given an ensemble $\{\rho^{(x)}_\system{A}\}_{x\in\set{X}}$ and a measurement $\{\Lambda_\system{B}^{(y)}\}_{y\in\set{Y}}$, we define the encoding and decoding function, respectively, as
\begin{align}
\text{Encoding }\mathcal{E}: \prob_\rv{X} &\mapsto \sum_{x\in\set{X}} \prob_\rv{X}(x) \rho^{(x)}_\system{A}
    && \forall\ \prob_\rv{X} \text{ \pmf over }\set{X},\\
\text{Decoding }\mathcal{D}: \sigma_\system{B} &\mapsto \left\{\tr(\Lambda_\system{B}^{(y)} \cdot \sigma_\system{B})\right\}_{y\in\set{Y}}
    && \forall\ \sigma_\system{B}\in\PositiveOp(\hilbert_\system{B}).
\end{align}
We emphasize that in our setup, the ensemble $\{\rho^{(x)}_\system{A}\}_{x\in\set{X}}$ and measurements $\{\Lambda_\system{B}^{(y)}\}_{y\in\set{Y}}$ are given and fixed.
Furthermore, we assume that one does not have access to the memory systems of the channel.
For the case of {i.i.d.}~inputs, the memory system $\system{S}$ before each channel use shall be independent of the input system $\system{A}$, namely, the joint memory-input operator shall take the form of $\rho_\system{A}\tensor\rho_\system{S}$ at each channel input.\footnote{
More generally, for FSMSs, this statement also holds by conditioning on all previous inputs.}
\par
With this, the probability of receiving $y\in\set{Y}$, given that $x\in\set{X}$  was sent and given that the density operator of the memory system \emph{before} the usage of the channel was $\rho_\system{S}$, equals
\begin{equation}\label{eq:channel:law:1}
\prob_{\rv{Y}|\rv{X};\system{S}}(y|x;\rho_\system{S}) = \tr\left( \Lambda_\system{B}^{(y)} \cdot \tr_{\system{S}'}\left( \operator{N}(\rho^{(x)}_\system{A}\tensor \rho_\system{S}) \right) \right),
\end{equation}
which can also be written as
\begin{equation}\label{eq:channel:law:2}
\prob_{\rv{Y}|\rv{X};\system{S}}(y|x;\rho_\system{S}) = \tr\left( (\Lambda_\system{B}^{(y)}\tensor I_\system{S}) \cdot \operator{N}( \rho^{(x)}_\system{A}\tensor\rho_\system{S}) \right).
\end{equation}
Moreover, assuming that $y$ was observed, the density operator of the
memory system \emph{after} the channel use is given by
\begin{equation} \label{eq:channel:evolution:1}
\rho_{\system{S}'} = \frac{\tr_\system{B}\left( (\Lambda_\system{B}^{(y)} \tensor I_\system{S}) \cdot \operator{N}(\rho^{(x)}_\system{A} \tensor \rho_\system{S})\right)}{\tr\left( (\Lambda_\system{B}^{(y)} \tensor I_\system{S}) \cdot \operator{N}(\rho^{(x)}_\system{A} \tensor \rho_\system{S}) \right)}.
\end{equation}
Notice that the denominator in~\eqref{eq:channel:evolution:1} equals the expressions in~\eqref{eq:channel:law:1} and~\eqref{eq:channel:law:2}.
One should note that, though the input and the memory systems are independent before each channel use (given {i.i.d.}~inputs), the output and the memory systems after each channel use can be correlated or even entangled.
In particular, this translates to the fact that the measurement outcome $y$ can have an influence on the memory system as indicated in~\eqref{eq:channel:evolution:1}.
\par
Consider using the channel $n$ times consecutively with the above scheme.
The joint channel law, namely the conditional \pmf of the channel outputs $\rv{Y}_1^n$ given the channel inputs $\rv{X}_1^n$ and the initial channel state $\rho_{\system{S}_0}$, can be computed iteratively using~\eqref{eq:channel:law:2} and~\eqref{eq:channel:evolution:1}.
In particular, the joint conditional \pmf can be computed as
\begin{equation}\label{eq:joint:1}
\prob_{\rv{Y}_1^n|\rv{X}_1^n;\system{S}_0}(\vy_1^n|\vx_1^n;\rho_{\system{S}_0}) = \prod_{\ell=1}^n \prob_{\rv{Y}_\ell|\rv{X}_\ell;\system{S}_{\ell\!-\!1}}
                     (y_\ell|x_\ell;\rho_{\system{S}_{\ell\!-\!1}}),
\end{equation}
where we compute the density operators $\{\rho_{\system{S}_\ell}\}_{\ell=1}^n$ iteratively using~\eqref{eq:channel:evolution:1} as
\begin{equation}\label{eq:channel:evolution:2}
\rho_{\system{S}_\ell} = \frac{\tr_\system{B}\left( (\Lambda_\system{B}^{(y_\ell)}\tensor I_\system{S}) \cdot \operator{N}(\rho^{(x_\ell)}_\system{A}\tensor\rho_{\system{S}_{\ell\!-\!1}}) \right)}{\tr\left( (\Lambda_\system{B}^{(y_\ell)}\tensor I_\system{S}) \cdot \operator{N}(\rho^{(x_\ell)}_\system{A} \tensor \rho_{\system{S}_{\ell\!-\!1}}) \right)}.
\end{equation}
\subsection{Quantum-State Channels} \label{QCwM:sec:QSC}
For each channel-ensemble-measurement configuration ($\operator{N}$, $\{\rho^{(x)}_\system{A}\}_{x\in\set{X}}$, $\{\Lambda_\system{B}^{(y)}\}_{y\in\set{Y}}$) as introduced above, one ends up with a joint conditional \pmf, as  in~\eqref{eq:joint:1}.
However, this relationship is not bijective.
In particular, consider some unitary operators $U_\system{A}$ and $U_\system{B}$ acting on $\hilbert_\system{A}$ and $\hilbert_\system{B}$, respectively.
The following setup induces exactly the same joint conditional \pmf:
\begin{align*}
&\tilde{\operator{N}}: \tilde\rho_\system{AS} \mapsto (U_\system{B} \tensor I_\system{S}) \cdot \operator{N}\left( (U_\system{A} \tensor I_\system{S}) \tilde\rho_\system{AS} (U_\system{A}^\Herm \tensor I_\system{S}) \right) \cdot (U_\system{B}^\Herm \tensor I_\system{S}),\\
&\tilde\rho^{(x)}_\system{A} \defeq U_\system{A}^\Herm \cdot \rho^{(x)}_\system{A} \cdot U_\system{A} &&\quad\forall x\in\set{X},\\
&\tilde\Lambda_\system{B}^{(y)} \defeq U_\system{B}^\Herm \cdot \Lambda_\system{B}^{(y)} \cdot U_\system{B} &&\quad\forall y\in\set{Y}.
\end{align*}
Such redundancy is not only tedious, but also detrimental when we try to compare different channels; in particular, when we try to introduce proper auxiliary channels to approximate the original communication scheme.
\par
In this subsection, we introduce a class of channels called~\emph{quantum-state channels} to eliminate such redundancies.
In particular, notice that the statistical behavior of the aforementioned communication scheme is fully specified via~\eqref{eq:channel:law:2} and~\eqref{eq:channel:evolution:1}; which are in turn determined by the set of completely positive mappings $\{\operator{N}^{y|x}\}_{x\in\set{X},y\in\set{Y}}$ defined as
\begin{equation}\label{eq:def:qsc}
\operator{N}^{y|x}: \rho_\system{S} \mapsto \tr_\system{B}\left( (\Lambda_\system{B}^{(y)} \tensor I_\system{S}) \cdot \operator{N}(\rho^{(x)}_\system{A} \tensor \rho_\system{S}) \right).
\end{equation}
In this case,~\eqref{eq:channel:law:2},~\eqref{eq:channel:evolution:1}, and~\eqref{eq:joint:1} can be rewritten, respectively, as
\begin{align}
\label{eq:channel:law:3}
\prob_{\rv{Y}|\rv{X};\system{S}}(y|x;\rho_\system{S}) &= \tr\left( \operator{N}^{y|x}(\rho_\system{S}) \right),\\
\label{eq:channel:evolution:3}
\rho_{\system{S}'} &= \operator{N}^{y|x}(\rho_\system{S}) \big/ \tr\left( \operator{N}^{y|x}(\rho_\system{S}) \right),\\
\label{eq:joint:2}
\prob_{\rv{Y}_1^n|\rv{X}_1^n;\system{S}_0}(\vy_1^n|\vx_1^n;\rho_{\system{S}_0} &= \tr\left( \operator{N}^{y_n|x_n} \circ \cdots \circ \operator{N}^{y_1|x_1}(\rho_{\system{S}_0}) \right).
\end{align}
Thus, the operators $\{\operator{N}^{y|x}\}_{x\in\set{X},y\in\set{Y}}$ fully specify the joint conditional \pmf as in~\eqref{eq:joint:2}.
Moreover, such specification is also \emph{unique}; namely, any two sets of channel-ensemble-measurement configuration shall end up with the same joint channel law if and only if the mappings defined in~\eqref{eq:def:qsc} are identical.
This inspires us to make the following definition.
\begin{definition}[Quantum-State Channel] \index{quantum-state channel}
A (finite indexed) set of completely positive operators $\{\operator{N}^{y|x}\}_{x\in\set{X},y\in\set{Y}}$ (acting on the same Hilbert space) is said to be a  \emph{(classical-input classical-output) quantum-state channel} (CC-QSC) if $\sum_{y\in\set{Y}} \operator{N}^{y|x}$ is trace-preserving for each $x \in \set{X}$.
\end{definition}
Given any channel-ensemble-measurement configuration as described in Section~\ref{QCwM:sec:classical:comm:quantum}, one can always define a corresponding CC-QSC by~\eqref{eq:def:qsc}.
On the other hand, as stated in the proposition below, the converse is also true.
\begin{proposition}\label{prop:quantum:state:channel}
For any CC-QSC $\{\operator{N}^{y|x}\}_{x\in\set{X},y\in\set{Y}}$, there exists some quantum channel with memory $\operator{N}$ as in~\eqref{eq:def:qcm} such that~\eqref{eq:def:qsc} holds with the ensemble $\{\rho_\system{A}^{(x)} = \braket{x}\}_{x\in\set{X}}$ and the measurement $\{\Lambda_\system{B}^{(y)} = \braket{y}\}_{y\in\set{Y}}$.
Here, $\hilbert_\system{A}$ and $\hilbert_\system{B}$ are defined such that $\{\bra{x}\}_x$ and $\{\bra{y}\}_y$ are orthonormal bases of $\hilbert_\system{A}$ and $\hilbert_\system{B}$, respectively.
\end{proposition}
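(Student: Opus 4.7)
The plan is to construct $\operator{N}$ explicitly as a complete dephasing on the $\system{A}$-register in the $\{\bra{x}\}_x$ basis, followed by the natural extension of the CC-QSC. The key observation is that the specified ensemble $\{\braket{x}\}_x$ and measurement $\{\braket{y}\}_y$ only probe the diagonal of the $\system{A}$-part in the basis $\{\bra{x}\}_x$, so the constraint~\eqref{eq:def:qsc} only pins down $\operator{N}$ on inputs of the form $\braket{x}\tensor\rho_\system{S}$; we may exploit the remaining freedom to kill off-diagonal components, which automatically secures CPTP.

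Concretely, I would take $\hilbert_\system{A}$ and $\hilbert_\system{B}$ to be Hilbert spaces of dimensions $\abs{\set{X}}$ and $\abs{\set{Y}}$ with designated orthonormal bases $\{\bra{x}\}_{x\in\set{X}}$ and $\{\bra{y}\}_{y\in\set{Y}}$, respectively, and define
\begin{equation*}
\operator{N}(\sigma_\system{AS}) \defeq \sum_{x\in\set{X}}\sum_{y\in\set{Y}} \braket{y} \tensor \operator{N}^{y|x}\bigl(\ket{x}_\system{A}\,\sigma_\system{AS}\,\bra{x}_\system{A}\bigr),
\end{equation*}
where, consistent with the paper's notation, $\ket{x}_\system{A}\sigma_\system{AS}\bra{x}_\system{A}$ denotes $(\ket{x}_\system{A}\tensor I_\system{S})\,\sigma_\system{AS}\,(\bra{x}_\system{A}\tensor I_\system{S})\in\LinearOp(\hilbert_\system{S})$. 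Extended linearly, this is a map from $\LinearOp(\hilbert_\system{A}\tensor\hilbert_\system{S})$ to $\LinearOp(\hilbert_\system{B}\tensor\hilbert_\system{S})$.

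Next I would verify CPTP. For each $(x,y)$ the summand decomposes as (i)~the partial projection $\sigma\mapsto\ket{x}_\system{A}\sigma\bra{x}_\system{A}$, which is CP with single Kraus operator $\ket{x}_\system{A}\tensor I_\system{S}$; (ii)~the given CP map $\operator{N}^{y|x}$; and (iii)~tensoring with the pure state $\braket{y}$. Since a sum of CP maps is CP, $\operator{N}$ is CP. For trace preservation, by linearity and the CC-QSC hypothesis that each $\sum_y\operator{N}^{y|x}$ is trace-preserving,
\begin{align*}
\tr\bigl(\operator{N}(\sigma_\system{AS})\bigr)
&= \sum_{x,y}\tr\bigl(\operator{N}^{y|x}(\ket{x}_\system{A}\sigma_\system{AS}\bra{x}_\system{A})\bigr) \\
&= \sum_{x}\tr\bigl(\ket{x}_\system{A}\sigma_\system{AS}\bra{x}_\system{A}\bigr) = \tr(\sigma_\system{AS}),
\end{align*}
where the last equality uses the resolution of identity $\sum_{x}\bra{x}\ket{x}=I_\system{A}$ together with the cyclic property of the trace.

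Finally, I would verify~\eqref{eq:def:qsc} directly. Plugging $\sigma_\system{AS}=\braket{x_0}\tensor\rho_\system{S}$ into the definition yields $\ket{x}_\system{A}(\braket{x_0}\tensor\rho_\system{S})\bra{x}_\system{A}=\delta_{x,x_0}\rho_\system{S}$, so $\operator{N}(\braket{x_0}\tensor\rho_\system{S})=\sum_{y}\braket{y}\tensor\operator{N}^{y|x_0}(\rho_\system{S})$, from which $\tr_\system{B}\bigl((\braket{y_0}\tensor I_\system{S})\operator{N}(\braket{x_0}\tensor\rho_\system{S})\bigr)=\operator{N}^{y_0|x_0}(\rho_\system{S})$ follows immediately. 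There is no substantive obstacle here---the construction is essentially an unfolding of definitions, and the one conceptual choice is how to extend $\operator{N}$ off the subspace of block-diagonal inputs; the dephasing extension above is the cleanest choice that decouples the CP and TP verifications. Equivalently, one could assemble $\operator{N}$ from Kraus representations of the individual $\operator{N}^{y|x}$ via Theorem~\ref{thm:Kraus}, but the direct definition is shortest for the proof.
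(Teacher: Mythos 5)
Your proposal is correct and, in substance, identical to the paper's proof: the channel you define, $\sigma \mapsto \sum_{x,y}\braket{y}\tensor\operator{N}^{y|x}\bigl(\ket{x}_\system{A}\sigma\bra{x}_\system{A}\bigr)$, is exactly the channel the paper writes via the Kraus operators $\bra{y}\!\ket{x}\tensor E^{y|x}_k$, since $\bigl(\bra{y}\!\ket{x}\tensor E^{y|x}_k\bigr)\sigma\bigl(\bra{y}\!\ket{x}\tensor E^{y|x}_k\bigr)^\Herm = \braket{y}\tensor E^{y|x}_k\bigl(\ket{x}_\system{A}\sigma\bra{x}_\system{A}\bigr)(E^{y|x}_k)^\Herm$. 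The only cosmetic difference is that you establish complete positivity compositionally (projection, then $\operator{N}^{y|x}$, then tensoring with $\braket{y}$) rather than by exhibiting the Kraus family outright, and the conceptual framing as a ``dephase then conditionally apply'' channel is a nice gloss the paper leaves implicit.
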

\begin{proof}
It suffices to show that there exists a CPTP map $\operator{N}: \DensOp(\hilbert_\system{A} \tensor \hilbert_\system{S}) \to \DensOp(\hilbert_\system{B} \tensor \hilbert_\system{S})$ such that for all $\rho_\system{S} \in \DensOp(\hilbert_\system{S})$, and $x\in\set{X}$,
\[
\operator{N}: \braket{x}\tensor\rho_\system{S} \mapsto \sum_{y\in\set{Y}} \braket{y}\tensor\operator{N}^{y|x}(\rho_\system{S}).
\]
Such an $\operator{N}$ can be constructed as
\[
\operator{N}: \rho \mapsto \sum_{x,y,k}\left(\bra{y}\!\ket{x} \otimes E^{y|x}_k\right) \cdot \rho \cdot \left( \bra{y}\!\ket{x} \otimes E^{y|x}_k\right)^{\Herm},
\]
where $\left\{E^{y|x}_k\right\}_{k}$ is a Kraus representation of $\operator{N}^{y|x}$.
It remains to check if $\operator{N}$ is a CPTP, which is indeed the case:
\begin{align*}
\sum_{x,y,k} \left(\bra{y}\!\ket{x} \otimes E^{y|x}_k\right)^{\Herm} \cdot \left(\bra{y}\!\ket{x} \otimes E^{y|x}_k\right)
&= \sum_x \sum_{y,k} \braket{x} \otimes (E^{y|x}_k)^\Herm E^{y|x}_k \\
&= \sum_x \braket{x} \otimes I = I. \qedhere
\end{align*}
\end{proof}
\subsection{Visualization using Normal Factor Graphs} \label{QCwM:subsec:NFGs}
In this subsection, we focus on the computations of~\eqref{eq:channel:law:3},~\eqref{eq:channel:evolution:3}, and~\eqref{eq:joint:2} for the situation where the involved channel $\operator{N}$ is of finite dimension.
In analogy to the FSMCs, we demonstrate how to use NFGs to facilitate and visualize the relevant computations (see Section~\ref{subsec:FG:QP} and~\cite{loeliger2017factor}).
\par
By Proposition~\ref{prop:quantum:state:channel}, let us consider a CC-QSC $\{\operator{N}^{y|x}\}_{x\in\set{X},y\in\set{Y}}$ acting on $\hilbert_\system{S}$, where $d=\dim(\hilbert_\system{S})$ is finite, and $\{\bra{s}\}_{s\in\set{S}}$ is an orthonormal basis of $\hilbert_\system{S}$.
(Apparently, $\size{\set{S}}=d$.)
Since for each $x$ and $y$, $\operator{N}^{y|x}$ is a completely positive map, there must exist finitely many (\emph{not} necessarily unique) matrices $\{ F_k^{y|x} \in \mathbb{C}^{\set{S}\times\set{S}} \}_{k}$ such that
\begin{equation} \label{eq:Kraus:quantum:state:channel}
\bigl[\operator{N}^{y|x}(\rho_\system{S})\bigr] = \sum_{k} F_k^{y|x} \cdot [\rho_\system{S}] \cdot (F_k^{y|x})^\Herm
	\quad \forall \rho_\system{S} \in \DensOp(\hilbert_\system{S}),
\end{equation}
where $\bigl[\operator{N}^{y|x}(\rho_\system{S})\bigr]$ and $[\rho_\system{S}]$ are, respectively, the matrix representation of the operator $\operator{N}^{y|x}(\rho_\system{S})$ and $\rho_\system{S}$ under $\{\bra{s}\}_{s\in\set{S}}$.
The reason for such matrices $\{F_k^{y|x}\}_{k}$ to exist is the same as for the Kraus operators of CPTP maps (see Theorem~\ref{thm:Kraus}).
Also note that $\sum_{y\in\set{Y}} \mathcal{E}^{y|x}$ is trace-preserving, thus it must hold that
\begin{equation} \label{eq:operator:sum:representation:condition:1}
\sum_{y\in\set{Y}}\sum_{k} (F_k^{y|x})^\Herm F_k^{y|x} = I
	\quad \forall x \in\set{X}.
\end{equation}
Now, define a set of functions $\{W^{y|x}\}_{x\in\set{X},y\in\set{Y}}$ as
\begin{equation}\label{eq:def:channel:function:representation}
W^{y|x}:(s',s,\tilde{s}',\tilde{s}) \mapsto \sum_k F_k^{y|x}(s',s) \conj{F_k^{y|x}(\tilde{s}',\tilde{s})},
\end{equation}
where $s',s,\tilde{s}',\tilde{s}\in\set{S}$ are indices of the corresponding matrices, namely, $F_k^{y|x}(s',s)$ is the $(s',s)$-th entry of the matrix $F_k^{y|x}$.
In this case, one can rewrite~\eqref{eq:channel:law:3},~\eqref{eq:channel:evolution:3} and~\eqref{eq:joint:2}, respectively, into
\begin{align}
\label{eq:channel:law:4}
\prob_{\rv{Y}|\rv{X};\system{S}}(y|x;\rho_\system{S})
&= \sum_{s',\tilde{s}':\:\atop s'=\tilde{s}'} \sum_{s,\tilde{s}} W^{y|x}(s',s,\tilde{s}',\tilde{s})\cdot[\rho_\system{S}]_{s,\tilde{s}}, \\
\label{eq:channel:evolution:4}
[\rho_{\system{S}'}]_{s',\tilde{s}'}
&= \frac{\sum_{s,\tilde{s}} W^{y|x}(s',s,\tilde{s}',\tilde{s}) \cdot [\rho_\system{S}]_{s,\tilde{s}}}{\sum_{s',\tilde{s}':\:\atop s'=\tilde{s}'}\sum_{s,\tilde{s}} W^{y|x}(s',s,\tilde{s}',\tilde{s}) \cdot [\rho_\system{S}]_{s,\tilde{s}}}, \\
\label{eq:joint:3}
\prob_{\rv{Y}_1^n|\rv{X}_1^n;\system{S}_0}(\vy_1^n|\vx_1^n;\rho_{\system{S}_0})
&= \sum_{s_n,\tilde{s}_n:\:\atop s_n=\tilde{s}_n} \sum_{\vs_0^{n\!-\!1},\tilde{\vs}_0^{n\!-\!1}} [\rho_{\system{S}_0}]_{s_0,\tilde{s}_0} \cdot \prod_{\ell=1}^n W^{y_\ell|x_\ell}(s_\ell,s_{\ell\!-\!1},\tilde{s}_\ell,\tilde{s}_{\ell\!-\!1}).
\end{align}
By rearranging the entries of $W^{y|x}$ (for each $x,y$) into a matrix $[W^{y|x}]\in\mathbb{C}^{\set{S}^2\times\set{S}^2}$ as
\begin{equation}\label{eq:qsc:matrix:1}
[W^{y|x}]_{(s',\tilde{s}'),(s,\tilde{s})} \defeq W^{y|x}(s',s,\tilde{s}',\tilde{s}),
\end{equation}
where $(s',\tilde{s}')\in\set{S}^2$ is the first index, and $(s,\tilde{s})\in\set{S}^2$ is the second index of $[W^{y|x}]$, we can simplify~\eqref{eq:channel:law:4},~\eqref{eq:channel:evolution:4}, and~\eqref{eq:joint:3} as
\begin{align}
\label{eq:channel:law:5}
\prob_{\rv{Y}|\rv{X};\system{S}}(y|x;\rho_\system{S})
&= \tr([W^{y|x}] \cdot [\rho_\system{S}]),\\
\label{eq:channel:evolution:5}
[\rho_{\system{S}'}]
&= \frac{[W^{y|x}] \cdot [\rho_\system{S}]}{\tr([W^{y|x}] \cdot [\rho_\system{S}])},\\
\label{eq:joint:4}
\prob_{\rv{Y}_1^n|\rv{X}_1^n;\system{S}_0}(\vy_1^n|\vx_1^n;\rho_{\system{S}_0})
&= \tr\left( [W^{y_{n}|x_{n}}] \cdots [W^{y_{1}|x_{1}}] \cdot [\rho_{\system{S}_0}] \right),
\end{align}
respectively.
Here we treat $[\rho_\system{S}]$ as a length-$d^2$ vector indexed by $(s,\tilde{s})\in\set{S}^2$ in the above equations.
\par
By considering $\{W^{y|x}\}_{x,y}$ as a function of six variables, we can represent it using a factor vertex of degree six in an NFG as in Figure~\ref{fig:NFG:QSC:single}.
In this case, Eqs.~\eqref{eq:channel:law:4} and~\eqref{eq:channel:law:5} can be visualized as ``closing the {\color{red} outer} box'' in the factor graph.
Similarly,~\eqref{eq:channel:evolution:4} and~\eqref{eq:channel:evolution:5} can be visualized as ``closing the {\color{blue}inner} box''.
The factor graph corresponding to using the channel $n$~times consecutively is depicted in Figure~\ref{fig:NFG:QSC:multiple}, where~\eqref{eq:joint:3} and~\eqref{eq:joint:4} are visualized as closing the {\color{red}outermost} box.
Interestingly, this ``closing-the-box'' operation can be carried out by a sequence of simpler ``closing-the-box'' operations as shown in the figure.
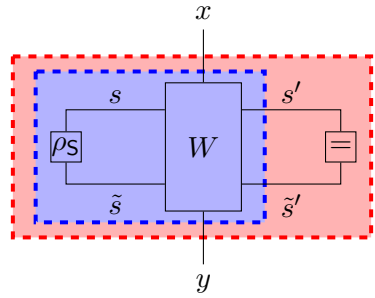
\begin{figure}\centering
\begin{tikzpicture}[node/.style={draw=none},
                    factor/.style={rectangle, minimum width=1cm, minimum height=1.7cm, draw},
                    sfactor/.style={rectangle, minimum size=.4cm, draw}]
    \node[factor] (W) {$W$};
    \node[above left=-.5cm and 1.2cm of W] (sp) {};
    \node[below left=-.5cm and 1.2cm of W] (sp') {};
    \node[above right=-.5cm and 1.2cm of W] (s) {};
    \node[below right=-.5cm and 1.2cm of W] (s') {};
    \node[above=.7cm of W] (X) {$x$};
    \node[below=.7cm of W] (Y) {$y$};
    \node[sfactor, inner sep=0pt, midway] (EW) [right=1.6cm of W] {$=$};
    \node[sfactor, inner sep=0pt, midway] (rho) [left=1.6cm of W] {};
    \node at (rho) {$\rho_\system{S}$};    
    \draw (sp.east-|W.west) -| (rho) node[above=-0.05cm,pos=0.25] {$s$};
    \draw (sp'.east-|W.west) -| (rho) node[below,pos=0.25] {$\tilde{s}$};
    \draw (sp.east-|W.east) -| (EW) node[above=-0.05cm,pos=0.25] {$s'$};
    \draw (sp'.east-|W.east) -| (EW) node[below,pos=0.25] {$\tilde{s}'$};
    \draw (X) -- (W);
    \draw (Y) -- (W);
    \begin{pgfonlayer}{bg}
    \draw[dashed, red, line width=1.5pt,fill= red!30]
        ([xshift=-.7cm,yshift=1.2cm]rho) rectangle
        ([xshift=.4cm,yshift=-1.2cm]EW);
    \draw[dashed, blue, line width=1.5pt,fill= blue!30]
        ([xshift=-.4cm,yshift=1cm]rho) rectangle
        ([xshift=-1cm,yshift=-1cm]EW);
    \end{pgfonlayer}
\end{tikzpicture}
\caption{Representation of $\{W^{y|x}\}_{x,y}$ using an NFG.}
\label{fig:NFG:QSC:single}
\end{figure}
\begin{figure}\centering
\begin{tikzpicture}[
    factor/.style ={rectangle, minimum width=1cm, minimum height=1.7cm, draw},
    sfactor/.style={rectangle, minimum size=.4cm, draw},
    label/.style={anchor=south east, circle, draw, inner sep=.5pt, 
        outer sep=5pt, font=\scriptsize}]
	\node[sfactor] (S) {}; \node at (S) {\resizebox{.375cm}{!}{$\rho_{\system{S}_0}$}};
	\node[factor] (E1) [right=.5cm of S] {$W$};
	\draw (S.north) |- ([yshift=.6cm]E1.west) node[above=-0.05cm,pos=.75] {$s_0$};
	\draw (S.south) |- ([yshift=-.6cm]E1.west) node[below,pos=.75] {$\tilde{s}_0$};
	\draw (E1.north) -- ([yshift=1cm]E1.north) node[right] {$x_1$};
	\draw (E1.south) -- ([yshift=-1cm]E1.south) node[right] {$y_1$};
	\node[factor] (E2) [right=.6cm of E1] {$W$};
	\draw ([yshift=.6cm]E1.east) |- ([yshift=.6cm]E2.west) node[above=-0.05cm,pos=.75] {$s_1$};
	\draw ([yshift=-.6cm]E1.east) |- ([yshift=-.6cm]E2.west) node[below,pos=.75] {$\tilde{s}_1$};
	\draw (E2.north) -- ([yshift=1cm]E2.north) node[right] {$x_2$};
	\draw (E2.south) -- ([yshift=-1cm]E2.south) node[right] {$y_2$};
	\node[factor, draw=none] (Edummy) [right=.6cm of E2] {$\cdots$};
	\draw ([yshift=.6cm]E2.east) |- ([yshift=.6cm]Edummy.west) node[above=-0.05cm,pos=.75] {$s_2$};
	\draw ([yshift=-.6cm]E2.east) |- ([yshift=-.6cm]Edummy.west) node[below,pos=.75] {$\tilde{s}_2$};
	\node[factor] (En) [right=.6cm of Edummy] {$W$};
	\draw ([yshift=.6cm]Edummy.east) |- ([yshift=.6cm]En.west) node[above=-0.05cm,pos=.75] {$s_{\!n-\!1}$};
	\draw ([yshift=-.6cm]Edummy.east) |- ([yshift=-.6cm]En.west) node[below,pos=.75] {$\tilde{s}_{n\!-\!1}$};
	\draw (En.north) -- ([yshift=1cm]En.north) node[right] {$x_n$};
	\draw (En.south) -- ([yshift=-1cm]En.south) node[right] {$y_n$};
	\node[sfactor, draw=none] (Edummy1) [above=-0.50cm of Edummy] {$\cdots$};
	\node[sfactor, draw=none] (Edummy2) [below=-0.45cm of Edummy] {$\cdots$};
	\node[sfactor,draw=none] (Xdummy) at ([yshift=1cm]E1.north-|Edummy) {$\cdots$};
	\node[sfactor,draw=none] (Ydummy) at ([yshift=-1cm]E1.south-|Edummy) {$\cdots$};
	\node[sfactor, inner sep=0pt] (ee) [right=.5cm of En] {$=$};
	\draw ([yshift=.6cm]En.east) -| (ee.north) node[above=-0.05cm,pos=.25] {$s_n$};
	\draw ([yshift=-.6cm]En.east) -| (ee.south) node[below,pos=.25] {$\tilde{s}_n$};
	\begin{pgfonlayer}{bg}
		\draw[dashed, red,line width=1.5pt,fill= red!10] ([xshift=-.7cm,yshift=1.5cm]S) rectangle ([xshift=1.5cm,yshift=-1.6cm]En);
		\draw[dashed, red,line width=1.5pt,fill= red!20] ([xshift=-.6cm,yshift=1.4cm]S) rectangle	([xshift=.6cm,yshift=-1.5cm]En);
		\node[label] at ([xshift=.6cm,yshift=-1.5cm]En) {$n$};
		\draw[dashed, red,line width=1.5pt,fill= red!30] ([xshift=-.5cm,yshift=1.3cm]S) rectangle ([xshift=.6cm,yshift=-1.4cm]E2);
		\node[label] at ([xshift=.6cm,yshift=-1.4cm]E2) {2};
		\draw[dashed, red,line width=1.5pt,fill= red!50] ([xshift=-.4cm,yshift=1.2cm]S) rectangle ([xshift=.6cm,yshift=-1.3cm]E1);
		\node[label] at ([xshift=.6cm,yshift=-1.3cm]E1) {1};
	\end{pgfonlayer}
\end{tikzpicture}
\caption[The joint channel law~\eqref{eq:joint:3} and~\eqref{eq:joint:4} can be visualized as the result of the ``closing of the outermost box'' above, which can in turn be carried out by a sequence of ``closing-the-box'' operations as indicated.]{The joint channel law~\eqref{eq:joint:3} and~\eqref{eq:joint:4} can be visualized as the result of the ``closing of the {\color{blue}outermost} box'' above, which can in turn be carried out by a sequence of ``closing-the-box'' operations as indicated.}
\label{fig:NFG:QSC:multiple}
\end{figure}
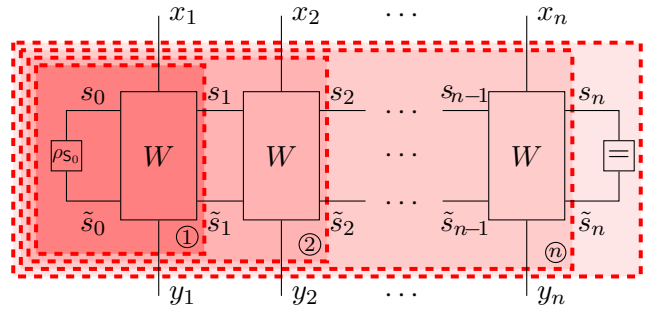
\par
A number of statistical quantities and density operators of interest can be computed and visualized as ``closing-the-box'' operations on suitable NFGs similar to that of Figure~\ref{fig:NFG:QSC:multiple}.
The following example highlights how quantities of this kind can be computed in such a manner.
\begin{figure}[t]\centering
\resizebox{\columnwidth}{!}{\begin{tikzpicture}[
    node/.style={draw=none},
    factor/.style={rectangle, minimum width=1cm, minimum height=1.7cm, draw},
    sfactor/.style={rectangle, minimum size=.5cm, draw},
    darksolid/.style={rectangle, minimum size=.15cm, draw, fill = black,
    inner sep=0pt, outer sep = 0pt}]
	\node[sfactor] (S) {}; \node at (S) {$\rho_{\system{S}_0}$};
	\node[factor] (E1) [right=.7cm of S] {$W$};
	\draw (S.north) |- ([yshift=.6cm]E1.west) node[above=-0.05cm,pos=.75] {$s_0$};
	\draw (S.south) |- ([yshift=-.6cm]E1.west) node[below,pos=.75] {$\tilde{s}_0$};
	\node[sfactor] (X1) [above=.7cm of E1] {}; \node at (X1) {$Q$};
	\draw (X1) -- (E1) node[right, midway] {$x_1$};
	\draw (E1.south) -- ([yshift=-.8cm]E1.south) node[darksolid]{} node (Y1) [right] {$\cy_1$};
	\node[factor, draw=none] (Edummy1) [right=1cm of E1] {};
	\node at ([yshift=.6cm]E1.east-|Edummy1) {$\cdots$};
	\node at ([yshift=-.6cm]E1.east-|Edummy1) {$\cdots$};
	\draw ([yshift=.6cm]E1.east) |- ([yshift=.6cm]Edummy1.west) node[above=-0.05cm,pos=.75] {$s_1$};
	\draw ([yshift=-.6cm]E1.east) |- ([yshift=-.6cm]Edummy1.west) node[below,pos=.75] {$\tilde{s}_1$};
	\node at (X1-|Edummy1) {$\cdots$};
	\node at (Y1-|Edummy1) {$\cdots$};
	\node[factor] (El0) [right=1cm of Edummy1] {$W$};
	\draw ([yshift=.6cm]Edummy1.east) |- ([yshift=.6cm]El0.west) node[above=-0.05cm,pos=.75] {$s_{\ell-2}$};
	\draw ([yshift=-.6cm]Edummy1.east) |- ([yshift=-.6cm]El0.west) node[below,pos=.75] {$\tilde{s}_{\ell-2}$};
	\node[sfactor] (Xl0) [above=.7cm of El0] {}; \node at (Xl0) {$Q$};
	\draw (Xl0) -- (El0) node[right, midway] {$x_{\ell\!-\!1}$};
	\draw (El0.south) -- ([yshift=-.8cm]El0.south) node[darksolid]{} node[right] {$\cy_{\ell\!-\!1}$};
	\node[factor] (El) [right=1cm of El0] {$W$};
	\draw ([yshift=.6cm]El0.east) |- ([yshift=.6cm]El.west) node[above=-0.05cm,pos=.85] {$s_{\ell\!-\!1}$};
	\draw ([yshift=-.6cm]El0.east) |- ([yshift=-.6cm]El.west) node[below,pos=.85] {$\tilde{s}_{\ell\!-\!1}$};
	\node[sfactor] (Xl) [above=.7cm of El] {}; \node at (Xl) {$Q$};
	\draw (El.north) -- (Xl.south) node[darksolid,midway,red,fill=red]{} node[right,midway] {$x_{\ell}$};
	\draw (El.south) -- ([yshift=-.8cm]El.south) node[darksolid]{} node[right] {$\cy_{\ell}$};
	\node[factor] (El2) [right=1cm of El] {$W$};
	\draw ([yshift=.6cm]El.east) |- ([yshift=.6cm]El2.west) node[above=-0.05cm,pos=.75] {$s_\ell$};
	\draw ([yshift=-.6cm]El.east) |- ([yshift=-.6cm]El2.west) node[below,pos=.75] {$s_\ell'$};
	\node[sfactor] (Xl2) [above=.7cm of El2] {}; \node at (Xl2) {$Q$};
	\draw (Xl2) -- (El2) node[right, midway] {$x_{\ell\!+\!1}$};
	\draw (El2.south) -- ([yshift=-.8cm]El2.south) node[darksolid]{} node[right] {$\cy_{\ell\!+\!1}$};
	\node[factor, draw=none] (Edummy2) [right=1cm of El2] {};
	\node at ([yshift=.6cm]El2.east-|Edummy2) {$\cdots$};
	\node at ([yshift=-.6cm]El2.east-|Edummy2) {$\cdots$};
	\draw ([yshift=.6cm]El2.east) |- ([yshift=.6cm]Edummy2.west) node[above=-0.05cm,pos=.75] {$s_{\ell\!+\!1}$};
	\draw ([yshift=-.6cm]El2.east) |- ([yshift=-.6cm]Edummy2.west) node[below,pos=.75] {$s_{\ell\!+\!1}'$};
	\node at (X1-|Edummy2) {$\cdots$};
	\node at (Y1-|Edummy2) {$\cdots$};
	\node[factor] (En) [right=1cm of Edummy2] {$W$};
	\draw ([yshift=.6cm]Edummy2.east) |- ([yshift=.6cm]En.west) node[above=-0.05cm,pos=.75] {$s_{n\!-\!1}$};
	\draw ([yshift=-.6cm]Edummy2.east) |- ([yshift=-.6cm]En.west) node[below,pos=.75] {$s_{n\!-\!1}'$};
	\node[sfactor] (Xn) [above=.7cm of En] {}; \node at (Xn) {$Q$};
	\draw (Xn) -- (En) node[right, midway] {$x_n$};
	\draw (En.south) -- ([yshift=-.8cm]En.south) node[darksolid]{} node[right] {$\cy_n$};
	\node[sfactor, inner sep=0pt] (ee) [right=.5cm of En] {$=$};
	\draw ([yshift=.6cm]En.east) -| (ee.north) node[above=-0.05cm,pos=.25] {$s_n$};
	\draw ([yshift=-.6cm]En.east) -| (ee.south) node[below,pos=.25] {$s_n'$};
	\begin{pgfonlayer}{bg}
		\draw[dashed, black, line width=1.5pt, fill=yellow!20] ([xshift=-.8cm,yshift=2.4cm]S) rectangle ([xshift=.8cm,yshift=-2.3cm]ee);
		\draw[dashed, blue, line width=1.5pt, fill=blue!20] ([xshift=-.7cm,yshift=2.2cm]El2) rectangle ([xshift=.6cm,yshift=-2.1cm]ee);
		\node[anchor=south east, fill= blue!50] at ([xshift=.6cm,yshift=-2.1cm]ee) {$\lvec{\sigma}_{\system{S}_{\ell}}^{(\cvy_{\ell\!+\!1}^n)}$};
		\draw[dashed, red, line width=1.5pt, fill=red!20] ([xshift=-.6cm,yshift=2.2cm]S) rectangle ([xshift=.8cm,yshift=-2.1cm]El0);
		\node at ([xshift=-.6cm,yshift=2.2cm]S) (tempA) {};
		\node at ([xshift=.8cm,yshift=-2.1cm]El0) (tempB) {};
		\node[anchor=south west,fill=red!50] at (tempA |- tempB) {$\rvec{\sigma}_{\system{S}_{\ell\!-\!1}}^{(\cvy_1^{\ell\!-\!1})}$};
	\end{pgfonlayer}
\end{tikzpicture}}
\caption{Computation of the marginal \pmf $\prob_{\rv{X}_\ell|\rv{Y}_1^n;\system{S}_0}$ using a sequence of ``closing-the-box'' operations.}
\label{fig:qsc:bcjr}
\end{figure}
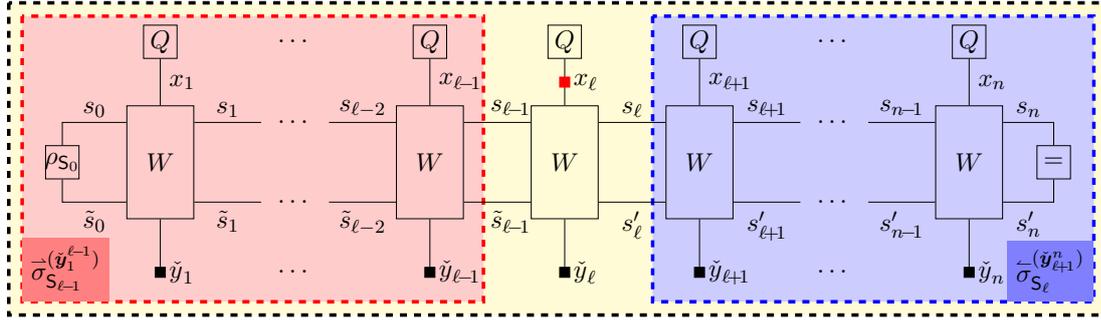
\begin{example}[BCJR~\cite{bahl1974optimal} decoding for CC-QSCs]
For fixed $\cvy_1^n\in\set{Y}^n$ and a given initial density operator $\rho_{\system{S}_0}$, the conditional probability $\prob_{\rv{X}_\ell|\rv{Y}_1^n;\system{S}_0}(x_\ell|\cvy_1^n;\rho_{\system{S}_0})$ can be computed via
\begin{equation}\label{eq:qsc:bcjr:1}
\prob_{\rv{X}_\ell|\rv{Y}_1^n;\system{S}_0}(\cdot|\cvy_1^n;\rho_{\system{S}_0}) \propto \prob_{\rv{X}_\ell,\rv{Y}_1^n|\system{S}_0}(\cdot,\cvy_1^n|\rho_{\system{S}_0}),
\end{equation}
where the right-hand side of~\eqref{eq:qsc:bcjr:1} is a marginal \pmf defined as
\begin{equation}\label{eq:qsc:bcjr:2}
\prob_{\rv{X}_\ell,\rv{Y}_1^n|\system{S}_0}(x_\ell,\cvy_1^n|\rho_{\system{S}_0}) = \sum_{\vx_{1}^{\ell\!-\!1},\vx_{\ell\!+\!1}^{n}} \sum_{\vs_0^n,\tilde{\vs}_0^n} [\rho_{\system{S}_0}]_{s_0,\tilde{s}_0} \cdot \prod_{i=1}^n Q(x_i) \cdot \prod_{j=1}^n W^{\cy_j|x_j}(s_j,s_{j\!-\!1},\tilde{s}_j,\tilde{s}_{j\!-\!1}),
\end{equation}
where we have assumed that the input process $\rv{X}_1^n$ is {i.i.d.}~characterized by some \pmf $Q$.
The evaluation of~\eqref{eq:qsc:bcjr:2} can be carried out efficiently using a sequence of ``closing-the-box'' operations as visualized in Figure~\ref{fig:qsc:bcjr}.
These operations can be roughly divided into the following three steps:
\begin{enumerate}
\item Closing the {\color{red}left inner} box: this results in an operator $\rvec{\sigma}_{\system{S}_{\ell\!-\!1}}^{(\cvy_1^{\ell\!-\!1})}$ on $\hilbert_{\system{S}_{\ell\!-\!1}}$.
\item Closing the {\color{blue}right inner} box: this results in another operator $\lvec{\sigma}_{\system{S}_{\ell}}^{(\cvy_{\ell\!+\!1}^n)}$ on $\hilbert_{\system{S}_{\ell}}$.
\item Applying the ``closing-the-box'' operation to the yellow box: the result is the marginal \pmf $\prob_{\rv{X}_\ell,\rv{Y}_1^n|\system{S}_0}(x_\ell,\cvy_1^n|\rho_{\system{S}_0})$, from which the desired conditional probability $\prob_{\rv{X}_\ell|\rv{Y}_1^n;\system{S}_0}$ $(x_\ell|\cvy_1^n;\rho_{\system{S}_0})$ can be easily obtained by normalization.
\end{enumerate}
The operators mentioned in 1) and 2) can be computed recursively, using a sequence of ``closing-the-box'' operations.
Namely, one can carry out the computations in 1) consecutively with $\ell=1,2,\ldots,n$; and the computations in 2) consecutively with $\ell=n,n\!-\!1,\ldots,1$.
This provides an efficient way to evaluate $\prob_{\rv{X}_\ell|\rv{Y}_1^n;\system{S}_0}(x_\ell|\cvy_1^n;\rho_{\system{S}_0})$ for each $\ell=1,\ldots,n$; and thus provides an efficient symbol-wise decoding algorithm.
The idea in this example is conceptually identical to that of the BCJR decoding algorithm for an FSMC.
\end{example}
As shown in the above example, very often the desired functions or quantities are based on the same partial results.
The NFG framework is very helpful to visualize these partial results and to show how they can be combined to obtain the desired functions and quantities.
\par
We emphasize that the functions $\{W^{y|x}\}_{x,y}$ defined in~\eqref{eq:def:channel:function:representation} are unique for a given finite-dimensional CC-QSC $\{\operator{N}^{y|x}\}_{x,y}$; even though such uniqueness does not apply to the Kraus operators $\{F^{y|x}\}_k$ being used to define $\{W^{y|x}\}_{x,y}$.
This can be proven by making the identification that
\begin{equation}\label{eq:qsc:matrix:identification}
\bigl[\operator{N}^{y|x}(\rho_\system{S})\bigr] = [W^{y|x}]\cdot [\rho_\system{S}]
	\qquad \forall \rho_\system{S}\in\DensOp(\hilbert_\system{S})
\end{equation}
for all $x$ and $y$.
Moreover, we argue that the functions $\{W^{y|x}\}_{x,y}$, are an \emph{equivalent} way to specify a CC-QSC, or classical communication over a quantum channel with memory as described at the beginning of this section.
Namely, for any set of complex-valued functions $\{W^{y|x}\}_{x,y}$ on $\set{S}^{4}$ satisfying some constraints to be clarified later, there must exist a unique CC-QSC $\{\operator{N}^{y|x}\}_{x,y}$ such that~\eqref{eq:qsc:matrix:identification} holds; and thus, there must exist some corresponding channel-ensemble-measurement configuration, unique up to its channel law.
As for such constraints, we rearrange the entries of $W^{y|x}$ (for each $x,y$) into another matrix $\llbracket W^{y|x}\rrbracket \in \mathbb{C}^{\set{S}^2 \times \set{S}^2}$ (\aka Choi--Jamio{\l}kowski matrix~\cite{jamiolkowski1972linear}), whose entries are defined as
\begin{equation}\label{eq:qsc:matrix:2}
\llbracket W^{y|x}\rrbracket_{(s',s),(\tilde{s}',\tilde{s})} \defeq W^{y|x}(s',s,\tilde{s}',\tilde{s}),
\end{equation}
where $(s',s)\in\set{S}^2$ is the first index, and $(\tilde{s}',\tilde{s})\in\set{S}^2$ is the second index of $\llbracket W^{y|x}\rrbracket$.
Notice that, $\llbracket W^{y|x}\rrbracket$ is a PSD matrix, and satisfies the equation
\begin{equation} \label{eq:operator:sum:representation:condition:2}
\sum_{y\in\set{Y}} \sum_{s',\tilde{s}':\: s'=\tilde{s}'} \llbracket W^{y|x}\rrbracket_{(s',s),(\tilde{s}',\tilde{s})} = \delta_{s,\tilde{s}}
	\quad \forall x\in\set{X}.
\end{equation}
In this case, the ``equivalence'' can be shown by the following proposition.
\begin{proposition} \label{prop:CFR}
Let $\set{X}$, $\set{Y}$ be finite sets, and $\hilbert_\system{S}$ be a finite-dimensional Hilbert space with an orthonormal basis $\{\bra{s}\}_{s\in\set{S}}$.
For any set of functions
\[
\{W^{y|x}:\set{S}\times\set{S}\times\set{S}\times\set{S} \to \mathbb{C}\}_{x\in\set{X},y\in\set{Y}}
\]
such that their matrix form $\{\llbracket W^{y|x}\rrbracket\}_{x,y}$ consists of PSD matrices and satisfies~\eqref{eq:operator:sum:representation:condition:2}, there must exist a unique CC-QSC $\{\operator{N}^{y|x}\}_{x,y}$ acting on $\hilbert_\system{S}$ such that~\eqref{eq:qsc:matrix:identification} holds.
\end{proposition}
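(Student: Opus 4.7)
The plan is to construct $\mathcal{N}^{y|x}$ explicitly from $[W^{y|x}]$, then use the two distinct reorderings $[W^{y|x}]$ and $\llbracket W^{y|x}\rrbracket$ of the same four-index array to read off the two required properties: trace-preservation from $[W^{y|x}]$ via~\eqref{eq:operator:sum:representation:condition:2}, and complete positivity from $\llbracket W^{y|x}\rrbracket$ via the Choi--Jamio\l{}kowski theorem.

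First, I will define, for each $(x,y)$, the linear map $\mathcal{N}^{y|x}:\mathfrak{L}(\mathcal{H}_\mathsf{S})\to\mathfrak{L}(\mathcal{H}_\mathsf{S})$ via its matrix form $[\mathcal{N}^{y|x}(\rho_\mathsf{S})] \defeq [W^{y|x}]\cdot[\rho_\mathsf{S}]$, where we view $[\rho_\mathsf{S}]$ as a length-$d^2$ column vector indexed by $(s,\tilde s)\in\mathcal{S}^2$ (as in~\eqref{eq:channel:law:5}). Linearity is immediate, and this construction forces~\eqref{eq:qsc:matrix:identification} to hold on all of $\mathfrak{L}(\mathcal{H}_\mathsf{S})$, not just on $\mathfrak{D}(\mathcal{H}_\mathsf{S})$. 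Uniqueness then follows because the density operators span $\mathfrak{L}(\mathcal{H}_\mathsf{S})$: any other linear map $\tilde{\mathcal{N}}^{y|x}$ satisfying~\eqref{eq:qsc:matrix:identification} agrees with $\mathcal{N}^{y|x}$ on a spanning set and hence everywhere.

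Next, I will verify trace preservation of $\sum_{y}\mathcal{N}^{y|x}$. For any $\rho_\mathsf{S}\in\mathfrak{D}(\mathcal{H}_\mathsf{S})$,
\begin{equation*}
\sum_{y}\tr\bigl(\mathcal{N}^{y|x}(\rho_\mathsf{S})\bigr)
= \sum_{y}\sum_{s',\tilde{s}':\,s'=\tilde{s}'}\sum_{s,\tilde{s}} W^{y|x}(s',s,\tilde{s}',\tilde{s})\cdot[\rho_\mathsf{S}]_{s,\tilde{s}},
\end{equation*}
and rearranging indices via~\eqref{eq:qsc:matrix:2} converts the inner $(s',\tilde{s}')$-diagonal sum into exactly the LHS of~\eqref{eq:operator:sum:representation:condition:2}, yielding $\sum_{s,\tilde{s}}\delta_{s,\tilde{s}}[\rho_\mathsf{S}]_{s,\tilde{s}}=\tr(\rho_\mathsf{S})=1$. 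Hence $\sum_y\mathcal{N}^{y|x}$ is trace-preserving.

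The main obstacle is complete positivity of each $\mathcal{N}^{y|x}$. Here I will invoke the Choi--Jamio\l{}kowski correspondence (recall Corollary~\ref{cor:Choi} and Theorem~\ref{thm:Kraus}): a linear map $\mathcal{N}^{y|x}$ is completely positive if and only if its Choi matrix
\begin{equation*}
C^{y|x}_{(s',s),(\tilde{s}',\tilde{s})}\defeq \bigl\langle s'\bigr| \mathcal{N}^{y|x}(|s\rangle\!\langle\tilde{s}|)\bigl|\tilde{s}'\bigr\rangle
\end{equation*}
is PSD. Using~\eqref{eq:qsc:matrix:identification} applied to the rank-one operator $|s\rangle\!\langle\tilde{s}|$, whose matrix form has a single nonzero entry at position $(s,\tilde{s})$, a short computation gives $C^{y|x}_{(s',s),(\tilde{s}',\tilde{s})}=W^{y|x}(s',s,\tilde{s}',\tilde{s})=\llbracket W^{y|x}\rrbracket_{(s',s),(\tilde{s}',\tilde{s})}$. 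By hypothesis $\llbracket W^{y|x}\rrbracket$ is PSD, so $\mathcal{N}^{y|x}$ is completely positive, and the Kraus form guaranteed by Theorem~\ref{thm:Kraus} recovers operators $\{F_k^{y|x}\}_k$ matching~\eqref{eq:Kraus:quantum:state:channel}. Combined with the trace-preservation condition above, this establishes that $\{\mathcal{N}^{y|x}\}_{x,y}$ is a CC-QSC. The only subtlety I foresee is keeping the two index orderings in~\eqref{eq:qsc:matrix:1} and~\eqref{eq:qsc:matrix:2} distinct and consistent throughout; this is a bookkeeping issue rather than a substantive one.
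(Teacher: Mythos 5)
Your proof is correct. The paper's one-sentence sketch is constructive: eigendecompose $\llbracket W^{y|x}\rrbracket$ to read off Kraus operators $F_k^{y|x}$ (by reshaping scaled eigenvectors), then define $\operator{N}^{y|x}$ from them via~\eqref{eq:Kraus:quantum:state:channel}. You instead define the map directly through the transfer matrix $[W^{y|x}]$, and verify complete positivity abstractly by showing that the Choi matrix of the resulting map is entry-for-entry equal to $\llbracket W^{y|x}\rrbracket$ --- the crucial observation, exploiting exactly the index reordering between~\eqref{eq:qsc:matrix:1} and~\eqref{eq:qsc:matrix:2}. The two arguments are dual presentations of the Choi--Jamio{\l}kowski correspondence and carry the same mathematical content, but yours has two small advantages: it avoids exhibiting Kraus operators, and it makes the uniqueness claim explicit (density operators span $\LinearOp(\hilbert_\system{S})$ over $\mathbb{C}$, so any linear map satisfying~\eqref{eq:qsc:matrix:identification} on $\DensOp(\hilbert_\system{S})$ is determined), whereas the paper's sketch never addresses uniqueness. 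One citation note: you attribute the \emph{iff} form of Choi's theorem to Corollary~\ref{cor:Choi} and Theorem~\ref{thm:Kraus}, but Corollary~\ref{cor:Choi} only states the forward direction (CPTP $\Rightarrow$ PSD Choi matrix); the converse (PSD Choi $\Rightarrow$ CP) is precisely what the paper's eigendecompose-and-build-Kraus step is silently reproving and is not stated in the paper, so it would be cleaner to cite the converse as a known fact or to reproduce that one-line construction.
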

\begin{proof}
The idea of the proof is to consider the eigenvalue decomposition of $\llbracket W^{y|x}\rrbracket$, and reconstruct $\operator{N}^{y|x}$ by following the equations~\eqref{eq:def:channel:function:representation} and~\eqref{eq:Kraus:quantum:state:channel} backwardly.
We omit the details here.
\end{proof}
\par
Let us conclude this section by pointing out that the functions $\{W^{y|x}\}_{x,y}$, particularly the corresponding NFG, can be constructed from the channel-ensemble-measurement configuration ($\operator{N}$, $\{\rho^{(x)}_\system{A}\}_{x\in\set{X}}$, $\{\Lambda_\system{B}^{(y)}\}_{y\in\set{Y}}$) as in Figure~\ref{fig:CFR}.
This can be justified by checking~\eqref{eq:def:qsc} and~\eqref{eq:qsc:matrix:identification}.
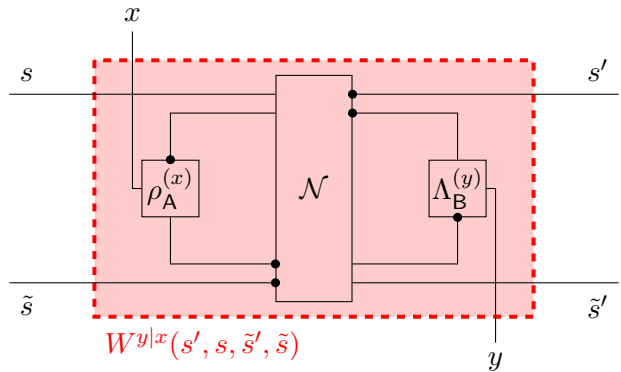
\begin{figure}\centering
\begin{tikzpicture}[
    node/.style={draw=none, inner sep=0pt, outer sep=0pt},
    factor/.style={rectangle, minimum width=1cm, minimum height=3cm, draw},
    sfactor/.style={rectangle, minimum size=.75cm, draw, inner sep=1pt}]
	\node[sfactor] (rhoX) {};
	\node at (rhoX) {$\rho_\system{A}^{(x)}$};
	\node[factor, right=1cm of rhoX] (N) {$\operator{N}$};
	\draw[*-] (rhoX.north) |- ([yshift=1cm]N.west);
	\draw[-*] (rhoX.south) |- ([yshift=-1cm]N.west);
	\node at ([xshift=-.5cm,yshift=2.3cm]rhoX) (X) {$x$};
	\draw (X) |- (rhoX);
	\draw ([xshift=-3.50cm,yshift=1.25cm]N.west) -- ([yshift=1.25cm]N.west) node[above=1pt,pos=0,anchor=south west] {$s$};
	\draw[-*] ([xshift=-3.50cm,yshift=-1.25cm]N.west) -- ([yshift=-1.25cm]N.west) node[below=1pt,pos=0,anchor=north west] {$\tilde{s}$};
	\draw[-*] ([xshift=3.50cm,yshift=1.25cm]N.east) -- ([yshift=1.25cm]N.east) node[above=1pt,pos=0,anchor=south east] {$s'$};
	\draw ([xshift=3.50cm,yshift=-1.25cm]N.east) -- ([yshift=-1.25cm]N.east) node[below=1pt,pos=0,anchor=north east] {$\tilde{s}'$};
	\node[sfactor, right=1cm of N] (M) {};
	\node at (M) {$\Lambda^{(y)}_{\system{B}}$};
	\draw[-*] (M.north) |- ([yshift=1cm]N.east);
	\draw[*-] (M.south) |- ([yshift=-1cm]N.east);
	\node at ([xshift=.5cm,yshift=-2.3cm]M) (Y) {$y$};
	\draw (Y) |- (M);
	\begin{pgfonlayer}{bg}
		\draw[dashed, red, line width=1.5pt, fill= red!20] ([xshift=-1.0cm,yshift=1.7cm]rhoX) rectangle ([xshift=1cm,yshift=-1.7cm]M);
		\node[red,anchor=north west] at ([xshift=-1.0cm,yshift=-1.7cm]rhoX) {$W^{y|x}(s',s,\tilde{s}',\tilde{s})$};
	\end{pgfonlayer}
\end{tikzpicture}
\caption{NFG representation of the channel-ensemble-measurement configuration ($\operator{N}$, $\{\rho^{(x)}_\system{A}\}_{x\in\set{X}}$, $\{\Lambda_\system{B}^{(y)}\}_{y\in\set{Y}}$).}
\label{fig:CFR}
\end{figure}
\section{Information Rate and its Estimation}\label{QCwM:sec:4:IR}
In this section, we focus on the information rate of the communication scheme described in Section~\ref{QCwM:sec:3:QCM}.
As defined in~\eqref{eq:capacity:jointk}, the information rate is the limit superior of the average mutual information $\frac{1}{n}\mutualInfo\left(\rv{X}_1^n;\rv{Y}_1^n\right)$ between the input and output processes $\rv{X}_1^n$ and $\rv{Y}_1^n$ as $n$ tends to infinity.
We assume that $\rv{X}_1^n$ is distributed according to some {i.i.d.}~process\footnote{
For more general type of sources, like a finite-state-machine source (FSMS), one can consider ``merging'' the memory of the source into that of the channel, and thus obtaining an equivalent memoryless input process.}
characterized by the \pmf $Q$, \ie, $Q^{(n)}(\vx_1^n) = \prod_{\ell=1}^{n} Q(x_{\ell})$.
In this case, the joint distribution of $(\rv{X}_1^n,\rv{Y}_1^n)$ is given by
\begin{equation}\label{eq:joint:5}
\prob_{\rv{X}_1^n,\rv{Y}_1^n|\system{S}_0}(\vx_1^n,\vy_1^n|\rho_{\system{S}_0}) =  \prod_{\ell=1}^n Q(x_\ell) \cdot \prob_{\rv{Y}_1^n|\rv{X}_1^n;\system{S}_0}(\vy_1^n|\vx_1^n;\rho_{\system{S}_0}),
\end{equation}
where $\prob_{\rv{Y}_1^n|\rv{X}_1^n;\system{S}_0}$ is specified in~\eqref{eq:joint:1},~\eqref{eq:joint:2},~\eqref{eq:joint:3} or~\eqref{eq:joint:4}, depending on which notation we use to specify the channel (see Propositions~\ref{prop:quantum:state:channel} and~\ref{prop:CFR}).
It is obvious that the value of~\eqref{eq:joint:5}, and thus the information rate, depends on the initial density operator $\rho_{\system{S}_0}$.
In this sense, we denote the information rate as a function of the input \pmf $Q$, the CC-QSC $\{\operator{N}^{y|x}\}_{x,y}$ describing the channel, and the initial density operator $\rho_{\system{S}_0}$, namely
\begin{align}
\label{eq:def:qsc:IR:2}
\infoRate(Q,\{\operator{N}^{y|x}\}_{x,y},\rho_{\system{S}_0})
&\defeq \limsup_{n\to\infty} \infoRate^{(n)}(Q,\{\operator{N}^{y|x}\}_{x,y},\rho_{\system{S}_0}),\\
\label{eq:def:qsc:IR:1}
\infoRate^{(n)}(Q,\{\operator{N}^{y|x}\}_{x,y},\rho_{\system{S}_0})
&\defeq \frac{1}{n}\mutualInfo(\rv{X}_1^n;\rv{Y}_1^n)(\rho_{\system{S}_0}).
\end{align}
Here, $\mutualInfo(\rv{X}_1^n;\rv{Y}_1^n)(\rho_{\system{S}_0})$ is the mutual information between $\rv{X}_1^n$ and $\rv{Y}_1^n$; and the latter are jointly distributed according to~\eqref{eq:joint:5}.
The argument $\rho_{\system{S}_0}$ emphasizes the dependency of $\frac{1}{n}\mutualInfo(\rv{X}_1^n;\rv{Y}_1^n)$ on $\rho_{\system{S}_0}$.
\par
Similar to the case of an FSMC, the dependency of the information rate on the initial density operator usually cannot be ignored.
However, as already mentioned in~Section~\ref{QCwM:sec:FSMC:IR}, for a class of FSMCs, namely the indecomposable FSMCs, it is known that the information rate is independent of the initial channel state~\cite{gallager1968information}.
An indecomposable FSMC, intuitively speaking, is an FSMC whose state distribution, given different initial states, tends to be indistinguishable as $n\to\infty$, independently of the input sequence realized.
A quantum analogy was proposed by Bowen, Devetak, and Mancini~\cite{bowen2005bounds}, where they defined the indecomposable quantum channels with memory, and proved that the quantum entropic bound for such channels is independent of the initial density operator.
\par
In the remainder of this section we firstly define the indecomposability of CC-QSCs, and prove the independence of the information rate as in~\eqref{eq:def:qsc:IR:2} from the initial density operator.
Secondly, we generalize the methods in Algorithm~\ref{alg:SPA} for estimating such information rate efficiently.
\par
The definition of an indecomposable CC-QSC in this chapter is similar (but different) and closely related to that of an indecomposable (quantum) channel with memory in~\cite{bowen2005bounds}.
Namely, an indecomposable channel with memory equipped with separable input ensemble and local output measurement will always induce an indecomposable CC-QSC, but not necessarily vice versa.
Moreover, in~\cite{bowen2005bounds} the classical capacity of quantum channels with finite memory was considered, where the capacity is essentially the Holevo bound and where the latter was proven to be achievable~\cite{bowen2004quantum}.
However, in our work, we focus on the situation where the ensemble and the measurement are fixed.
\subsection{Indecomposable Quantum-State Channel}
\begin{definition}
A CC-QSC $\{\operator{N}^{y|x}\}_{x,y}$ is said to be \emph{indecomposable}\index{indecomposable} if for any initial density operators $\alpha_{\system{S}_0}$ and $\beta_{\system{S}_0}$, the following statement holds: for any $\epsilon>0$, there exists some positive integer $N$ s.t.
\begin{equation}
\norm{\alpha_{\system{S}_n}^{(\vx_1^n)} - \beta_{\system{S}_n}^{(\vx_1^n)}}_1 < \varepsilon \quad \forall n\geqslant N,\ \forall \vx_1^n\in\set{X}^n,
\end{equation}
where
\begin{align}
\alpha_{\system{S}_n}^{(\vx_1^n)}
&\defeq \sum_{\vy_1^n} \operator{N}^{y_n|x_n} \circ \cdots \circ \operator{N}^{y_1|x_1}(\alpha_{\system{S}_0}),\\
\beta_{\system{S}_n}^{(\vx_1^n)}
&\defeq \sum_{\vy_1^n} \operator{N}^{y_n|x_n} \circ \cdots \circ \operator{N}^{y_1|x_1}(\beta_{\system{S}_0}),
\end{align}
and where $\norm{A}_1$ is the trace distance for an operator $A$ on $\hilbert_\system{S}$, \ie, $\norm{A}_1\defeq\frac{1}{2}\tr\sqrt{A^\Herm A}$.
\end{definition}
\begin{theorem}\label{thm:quantum:indecom}\hspace{-5pt}\footnote{
A similar result regarding indecomposable/forgetful quantum channel with memory can be found in~\cite{kretschmann2005quantum} and~\cite{bowen2005bounds}.}
The information rate of an indecomposable CC-QSC with an {i.i.d.}~input process is independent of the initial density operator.
Namely, if $\{\operator{N}^{y|x}\}_{x,y}$ is indecomposable, then
\begin{equation}
\infoRate^{(n)}(Q,\{\operator{N}^{y|x}\}_{x,y},\alpha_{\system{S}_0}) - \infoRate^{(n)}(Q,\{\operator{N}^{y|x}\}_{x,y},\beta_{\system{S}_0}) \stackrel{n\to\infty}{\longrightarrow} 0
\end{equation}
for any initial density operators $\alpha_{\system{S}_0}$, $\beta_{\system{S}_0}
\in\DensOp(\hilbert_{\system{S}_0})$.
\end{theorem}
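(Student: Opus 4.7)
The plan is to reduce the difference $\infoRate^{(n)}(\alpha_{\system{S}_0}) - \infoRate^{(n)}(\beta_{\system{S}_0})$ to a comparison of the joint input/output distributions of a ``tail'' block of channel uses, after a finite ``burn-in'' block whose contribution to the normalized mutual information is negligible in the limit. Let $T_{\vx, \vy} \defeq \operator{N}^{y_m|x_m} \circ \cdots \circ \operator{N}^{y_1|x_1}$ for length-$m$ input/output pairs, and write $\prob_\alpha, \prob_\beta$, $\mutualInfo_\alpha, \mutualInfo_\beta$ and $\entropy_\alpha, \entropy_\beta$ for the corresponding probabilities and information quantities under the two initial states. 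Since $\rv{X}_1^n$ is {i.i.d.}~with \pmf~$Q$, the marginal $\entropy(\rv{X}_1^n)$ is independent of the initial density operator. For any $n_0 \leq n$, standard chain-rule manipulations together with the trivial bounds $\mutualInfo(\rv{X}_1^n; \rv{Y}_1^{n_0} | \rv{Y}_{n_0+1}^n) \leq n_0 \log|\set{Y}|$ and $\mutualInfo(\rv{X}_1^{n_0}; \rv{Y}_{n_0+1}^n | \rv{X}_{n_0+1}^n) \leq n_0 \log|\set{X}|$ give
\begin{equation*}
|\mutualInfo(\rv{X}_1^n; \rv{Y}_1^n) - \mutualInfo(\rv{X}_{n_0+1}^n; \rv{Y}_{n_0+1}^n)| \leq n_0 \log(|\set{X}||\set{Y}|),
\end{equation*}
under either initial state. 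After normalizing by $n$, this introduces only an $O(n_0/n)$ error, so the task reduces to controlling the tail difference $|\mutualInfo_\alpha(\rv{X}_{n_0+1}^n; \rv{Y}_{n_0+1}^n) - \mutualInfo_\beta(\rv{X}_{n_0+1}^n; \rv{Y}_{n_0+1}^n)|$ uniformly in $n$ as $n_0$ is chosen appropriately.

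Next, I would fix $\varepsilon > 0$ and invoke indecomposability to choose $n_0$ with $\sup_{\vx_1^{n_0}} \norm{\alpha_{\system{S}_{n_0}}^{(\vx_1^{n_0})} - \beta_{\system{S}_{n_0}}^{(\vx_1^{n_0})}}_1 < \varepsilon$. For each fixed tail input $\vx_{n_0+1}^n$, the family $\{T_{\vx_{n_0+1}^n, \vy_{n_0+1}^n}\}_{\vy_{n_0+1}^n}$ is a quantum instrument, so the total variation of its outcome distribution from two input states is bounded by the trace distance of those states. Using the representation
\begin{equation*}
\prob_\alpha(\vx_{n_0+1}^n, \vy_{n_0+1}^n) = \sum_{\vx_1^{n_0}} Q^{(n_0)}(\vx_1^{n_0}) Q^{(n-n_0)}(\vx_{n_0+1}^n) \tr\bigl(T_{\vx_{n_0+1}^n, \vy_{n_0+1}^n}(\alpha_{\system{S}_{n_0}}^{(\vx_1^{n_0})})\bigr),
\end{equation*}
and the analogous expression for $\beta$, the pointwise instrument bound, averaged over $Q^{(n_0)}$ and integrated against $Q^{(n-n_0)}$, gives
\begin{equation*}
\sum_{\vx_{n_0+1}^n, \vy_{n_0+1}^n} |\prob_\alpha(\vx_{n_0+1}^n, \vy_{n_0+1}^n) - \prob_\beta(\vx_{n_0+1}^n, \vy_{n_0+1}^n)| \leq \varepsilon,
\end{equation*}
with the same bound for the $\rv{Y}_{n_0+1}^n$ marginals.

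Finally, a classical Fannes-type continuity estimate ($|\entropy(P) - \entropy(R)| \leq \norm{P-R}_1 \log|\Omega| + h(\norm{P-R}_1)$ on a finite alphabet $\Omega$), applied to the joint and marginal tail distributions on $\set{X}^{n-n_0} \times \set{Y}^{n-n_0}$, yields
\begin{equation*}
|\mutualInfo_\alpha(\rv{X}_{n_0+1}^n; \rv{Y}_{n_0+1}^n) - \mutualInfo_\beta(\rv{X}_{n_0+1}^n; \rv{Y}_{n_0+1}^n)| \leq 2\varepsilon(n-n_0)\log(|\set{X}||\set{Y}|) + O(1).
\end{equation*}
Dividing by $n$ and combining with the first reduction yields $\limsup_{n\to\infty} |\infoRate^{(n)}(\alpha_{\system{S}_0}) - \infoRate^{(n)}(\beta_{\system{S}_0})| \leq 2\varepsilon \log(|\set{X}||\set{Y}|)$; since $\varepsilon$ was arbitrary, the limit is zero.

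The main obstacle I expect is the second step: one must verify carefully that the uniform trace-distance contraction supplied by indecomposability propagates to a uniform total-variation bound on the \emph{joint} tail distributions without accumulating a factor growing with $n - n_0$. In particular, the monotonicity of trace distance under a quantum instrument has to be invoked pointwise in both $\vx_1^{n_0}$ and $\vx_{n_0+1}^n$, before any averaging against $Q^{(n_0)}$ or $Q^{(n-n_0)}$. Once this is in place, the remaining steps are standard classical information-theoretic bookkeeping.
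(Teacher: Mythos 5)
Your proof is correct, and it takes a genuinely different route from the paper's. The paper lifts everything to the quantum side: it realizes $\mutualInfo(\rv{X}_1^n;\rv{Y}_1^n)$ as the quantum mutual information $\qmutualInfo(\system{A}_1^n;\system{B}_1^n)$ of a classical--quantum state, uses the data processing inequality for quantum mutual information together with subadditivity of von Neumann entropy to prove $\qmutualInfo(\system{A}_1^N\system{B}_1^N;\system{A}_{N+1}^n\system{B}_{N+1}^n)\leq 2\qEntropy(\system{S}_N)$ and its $\system{B}$-only analogue, and then applies the quantum Fannes inequality and contractivity of trace distance under the CPTP maps $\Phi_{N+1}^n$, $\Psi_{N+1}^n$. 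Your version stays on the classical side throughout: the head/tail split is done with the classical chain rule plus the trivial bounds $\entropy(\rv{X}_1^{n_0})\leq n_0\log|\set{X}|$, $\entropy(\rv{Y}_1^{n_0})\leq n_0\log|\set{Y}|$, and the only quantum input is the single contraction estimate for the quantum instrument $\{T_{\vx_{n_0+1}^n,\,\cdot}\}$, which you correctly apply \emph{pointwise} in $(\vx_1^{n_0},\vx_{n_0+1}^n)$ before averaging, so no spurious factor in $n-n_0$ appears. After that, classical Fannes/Audenaert continuity on alphabets of size $(|\set{X}||\set{Y}|)^{n-n_0}$ does the rest. The two proofs share the same skeleton (fixed burn-in block of length $N$ or $n_0$, indecomposability to make the post-burn-in states $\varepsilon$-close in trace distance, an entropy continuity estimate to convert that into an $O(\varepsilon n)$ entropy gap, division by $n$, send $n\to\infty$ then $\varepsilon\to 0$); what differs is the machinery used for the splitting and continuity steps. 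Your version is more elementary in that it avoids quantum DPI for mutual information and quantum Fannes entirely, at the cost of a slightly cruder (but still $O(N)$) burn-in bound; the paper's version is ``quantum-native'' and matches the cq-state formalism of the rest of the section. Both converge to the same conclusion with the same quantifier order in $\varepsilon$, $N$, $n$.
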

In the proof below, we follow a similar idea as in~\cite{gallager1968information} for indecomposable FSMCs, and as that in~\cite{bowen2005bounds} for indecomposable quantum channels with memory.
\begin{proof}
Let $\system{A}$ and $\system{B}$ be quantum systems described by Hilbert spaces $\hilbert_\system{A}$ and $\hilbert_\system{B}$, respectively, where $\{\bra{x}\}_{x\in\set{X}}$ and $\{\bra{y}\}_{y\in\set{Y}}$ are orthonormal bases of $\hilbert_\system{A}$ and $\hilbert_\system{B}$, respectively.
Let $\system{A}_1^n$ and $\system{B}_1^n$ be $n$ copies of $\system{A}$ and $\system{B}$, respectively.
Let $\rho_{\system{S}_0}$ be some initial density operator; and let the joint density operator on the system $\system{A}_1^n\system{B}_1^n$ be 
\[
\rho_{\system{A}_1^n\system{B}_1^n} \defeq \sum_{\vx_1^n} Q(\vx_1^n) \cdot \braket{\vx_1^n} \tensor \sum_{\vy_1^n} \tr\left( \operator{N}^{\vy_1^n|\vx_1^n}(\rho_{\system{S}_0}) \right) \cdot \braket{\vy_1^n},
\]
where $\operator{N}^{\vy_1^n|\vx_1^n} \defeq \operator{N}^{y_n|x_n} \circ \cdots \circ \operator{N}^{y_1|x_1}$.
In this case, it is not hard to see that
\[
\mutualInfo(\rv{X}_1^n;\rv{Y}_1^n)[\rho_{\system{S}_0}] = \qmutualInfo(\system{A}_1^n;\system{B}_1^n)[\rho_{\system{S}_0}].
\]
In fact, one can easily check that
\begin{align*}
\qEntropy(\system{A}_1^n) &= \entropy(\rv{X}_1^n),\\
\qEntropy(\system{B}_1^n) &= \entropy(\rv{Y}_1^n),\\
\qEntropy(\system{A}_1^n,\system{B}_1^n) &= \entropy(\rv{X}_1^n,\rv{Y}_1^n).
\end{align*}
In particular, $\qEntropy(\system{A}_1^n)$ is independent of the initial density operator $\rho_{\system{S}_0}$.
We also claim that, for each $\rho_{\system{S}_0} \in \DensOp(\hilbert_{\system{S}_0})$ and a positive integer $N < n$, 
\begin{align}
\label{eq:indecomposable:AB}
\qmutualInfo(\system{A}_1^N\system{B}_1^N;\system{A}_{N+1}^n\system{B}_{N+1}^n) 
    &\leqslant 2 \qEntropy(\system{S}_N),\\
\label{eq:indecomposable:B}
\qmutualInfo(\system{B}_1^N;\system{B}_{N+1}^n)
    &\leqslant 2 \qEntropy(\system{S}_N),
\end{align}
where the density operator for $\system{S}_N$ is defined as (depending on $\rho_{\system{S}_0}$)
\[
\rho_{\system{S}_N} \defeq \sum_{\vx_1^N} Q(\vx_1^N) \cdot \sum_{\vy_1^N} \operator{N}^{\vy_1^N|\vx_1^N}(\rho_{\system{S}_0}).
\]
\par
\textbf{Proof of~\eqref{eq:indecomposable:AB}: }
We define a class of CPTP maps $\{\Phi_a^b: \DensOp(\hilbert_{\system{S}_a}) \to \DensOp(\hilbert_{\system{A}_a^b \system{B}_a^b})\}_{a<b\in\mathbb{N}}$ as
\[
\Phi_a^b: \rho_{\system{S}_a} \mapsto \sum_{\vx_a^b} Q(\vx_a^b) \cdot \braket{\vx_a^b} \tensor \sum_{\vy_a^b} \tr\left( \operator{N}^{\vy_a^b|\vx_a^b}(\rho_{\system{S}_a}) \right) \cdot \braket{\vy_a^b}.
\]
Since the input process $Q$ is {i.i.d.}, we can rewrite $\rho_{\system{A}_1^n\system{B}_1^n}$ for each positive integer $N<n$, as
\[
\rho_{\system{A}_1^n\system{B}_1^n} = \left( I_{\system{A}_1^N\system{B}_1^N} \otimes \Phi_{N+1}^n \right) \left( \rho_{\system{A}_1^N\system{B}_1^N\system{S}_N} \right),
\]
where
\[
\rho_{\system{A}_1^N\system{B}_1^N\system{S}_N} \defeq \sum_{\vx_1^N} Q(\vx_1^N) \braket{\vx_1^N} \tensor \sum_{\vy_1^N} \operator{N}^{\vy_1^N|\vx_1^N}(\rho_{\system{S}_0}) \braket{\vy_1^N}.
\]
Hence, by data processing inequality for quantum mutual information (see \eg,~\cite[Theorem~11.9.4]{wilde2017quantum}), one must have 
\[
\qmutualInfo(\system{A}_1^N\system{B}_1^N;\system{A}_{N+1}^n\system{B}_{N+1}^n) \leqslant \qmutualInfo(\system{A}_1^N\system{B}_1^N;\system{S}_N).
\]
Additionally, by subadditivity of joint entropy, we have
\begin{align*}
\qmutualInfo(\system{A}_1^N\system{B}_1^N;\system{S}_N)
&\defeq \qEntropy(\system{A}_1^N\system{B}_1^N) + \qEntropy(\system{S}_N) - \qEntropy(\system{A}_1^N\system{B}_1^N\system{S}_N)\\
&\leqslant \qEntropy(\system{A}_1^N\system{B}_1^N) + \qEntropy(\system{S}_N) - \abs{\qEntropy(\system{A}_1^N\system{B}_1^N) - \qEntropy(\system{S}_N)}\\
&\leqslant 2\qEntropy(\system{S}_N).
\end{align*}
Combining the above two inequalities, we have proven~\eqref{eq:indecomposable:AB}.
\par
\textbf{Proof of~\eqref{eq:indecomposable:B}: }
\eqref{eq:indecomposable:B} can be shown via the same approach above by considering another class of CPTP maps $\{\Psi_a^b: \DensOp(\hilbert_{\system{S}_a}) \to \DensOp(\hilbert_{\system{B}_a^b})\}_{a<b\in\mathbb{N}}$ as
\[
\Psi_a^b: \rho_{\system{S}_a} \mapsto \sum_{\vx_a^b} Q(\vx_a^b) \cdot \sum_{\vy_a^b} \tr\left( \operator{N}^{\vy_a^b|\vx_a^b}(\rho_{\system{S}_a})\right) \cdot \braket{\vy_a^b}.
\]
We omit the details.
\par
Now return to the main proof.
Given the initial density operators $\alpha_{\system{S}_0}$, and $\beta_{\system{S}_0}$, we define $\alpha_{\system{A}_1^n\system{B}_1^n}$, $\beta_{\system{A}_1^n\system{B}_1^n}$ and $\alpha_{\system{S}_N}$, $\beta_{\system{S}_N}$ in a similar fashion as we have defined $\rho_{\system{A}_1^n\system{B}_1^n}$ and $\rho_{\system{S}_N}$ based on $\rho_{\system{S}_0}$.
In this case, one obtains 
\begin{align}
\nonumber
&\bigabs{\qEntropy(\alpha_{\system{A}_1^n\system{B}_1^n}) - \qEntropy(\beta_{\system{A}_1^n\system{B}_1^n})} - \bigabs{\qEntropy(\alpha_{\system{A}_{N\!+\!1}^n\system{B}_{N\!+\!1}^n}) - \qEntropy(\beta_{\system{A}_{N\!+\!1}^n\system{B}_{N\!+\!1}^n})}\\
\nonumber
\overset{\text{(a)}}{\leqslant} 
&\biggabs{\qEntropy(\alpha_{\system{A}_1^N\system{B}_1^N}) \!-\! \qEntropy(\beta_{\system{A}_1^N\system{B}_1^N})} \!+\! \biggabs{\qmutualInfo(\system{A}_1^N\system{B}_1^N;\system{A}_{N\!+\!1}^n\system{B}_{N\!+\!1}^n)[\alpha_{\system{A}_1^n\system{B}_1^n}] \!-\! \qmutualInfo(\system{A}_1^N\system{B}_1^N;\system{A}_{N\!+\!1}^n\system{B}_{N\!+\!1}^n)[\beta_{\system{A}_1^n\system{B}_1^n}]}\\
\label{eq:tail:1}
\overset{\text{(b)}}{\leqslant}
& N\cdot\log(\dim{\hilbert_\system{AB}}) + 2\cdot\max\left\{\qEntropy(\alpha_{\system{S}_N}),\qEntropy(\beta_{\system{S}_N})\right\},
\end{align}
where we have used the triangle inequality in step (a), and Corollary~\ref{cor:qEntropy:basics} and~\eqref{eq:indecomposable:AB} in step (b).
Similarly, using~\eqref{eq:indecomposable:B}, one can prove
\begin{equation}\label{eq:tail:2}
\bigabs{\qEntropy(\alpha_{\system{B}_1^n}) - \qEntropy(\beta_{\system{B}_1^n})} - \bigabs{\qEntropy(\alpha_{\system{B}_{N+1}^n}) - \qEntropy(\beta_{\system{B}_{N+1}^n})}
\leqslant N \cdot \log(\dim{\hilbert_\system{B}}) + 2\cdot\max\left\{\qEntropy(\alpha_{\system{S}_N}),\qEntropy(\beta_{\system{S}_N})\right\}.
\end{equation}
By assumption, there exists some positive integer $d$ such that $\max\{\dim{\hilbert_\system{A}},\dim{\hilbert_\system{B}},\dim{\hilbert_\system{S}}\} \leqslant d$.
Thus, we have
\begin{fleqn}\begin{equation*}\phantom{=}
\frac{1}{n}\bigabs{\mutualInfo(\rv{X}_1^n;\rv{Y}_1^n)[\alpha_{\system{S}_0}] - \mutualInfo(\rv{X}_1^n;\rv{Y}_1^n)[\beta_{\system{S}_0}]}
\end{equation*}\end{fleqn}
\begin{fleqn}\begin{equation*}
=\frac{1}{n}\bigabs{\qmutualInfo(\system{A}_1^n;\system{B}_1^n)[\alpha_{\system{S}_0}] - \qmutualInfo(\system{A}_1^n;\system{B}_1^n)[\beta_{\system{S}_0}]}
\end{equation*}\end{fleqn}
\begin{fleqn}\begin{equation*}
=\frac{1}{n}\bigabs{\left( \qEntropy(\alpha_{\system{B}_1^n}) - \qEntropy(\alpha_{\system{A}_1^n\system{B}_1^n}) \right) - \left(\qEntropy(\beta_{\system{B}_1^n}) - \qEntropy(\beta_{\system{A}_1^n\system{B}_1^n})\right)}
\end{equation*}\end{fleqn}
\begin{fleqn}\begin{equation*}\overset{(\mathrm{c})}{\leqslant}
\frac{1}{n} \bigabs{\qEntropy(\alpha_{\system{B}_1^n}) - \qEntropy(\beta_{\system{B}_1^n})} + \frac{1}{n}\bigabs{\qEntropy(\alpha_{\system{A}_1^n\system{B}_1^n}) - \qEntropy(\beta_{\system{A}_1^n\system{B}_1^n})}
\end{equation*}\end{fleqn}
\begin{fleqn}\begin{equation*}\overset{(\mathrm{d})}{\leqslant}
\frac{3N\!+\!4}{n}\cdot\log{d} + \frac{1}{n}\bigabs{\qEntropy(\alpha_{\system{B}_{N+1}^n}) - \qEntropy(\beta_{\system{B}_{N+1}^n})} + \frac{1}{n}\bigabs{\qEntropy(\alpha_{\system{A}_{N+1}^n\system{B}_{N+1}^n}) - \qEntropy(\beta_{\system{A}_{N+1}^n\system{B}_{N+1}^n})}
\end{equation*}\end{fleqn}
\begin{fleqn}\begin{equation*}
=\frac{3N\!+\!4}{n}\cdot\log{d} + \frac{1}{n}\bigabs{\qEntropy(\Psi_{N+1}^n(\alpha_{\system{S}_N})) - \qEntropy(\Psi_{N+1}^n(\beta_{\system{S}_N}))} + \frac{1}{n}\bigabs{\qEntropy(\Phi_{N+1}^n(\alpha_{\system{S}_N})) - \qEntropy(\Phi_{N+1}^n(\beta_{\system{S}_N}))},
\end{equation*}\end{fleqn}
where we have used the triangle inequality in step (c), and~\cite[Theorem~11.8]{nielsen2011quantum}, \eqref{eq:tail:1}, \eqref{eq:tail:2} in step (d).
Using a loose variant of Fannes' inequality~\cite{fannes1973continuity}\footnote{
Namely, we used the inequality $\abs{\qEntropy(\rho)-\qEntropy(\sigma)} \leqslant \log{\dim}\cdot\norm{\rho-\sigma}_1+e^{-1}$.
Note that tighter variants of Fannes' inequality exist, but the above inequality is good enough to prove the desired result.}, we have
\begin{align*}
\bigabs{\qEntropy(\Psi_{N+1}^n(\alpha_{\system{S}_N})) \!-\! \qEntropy(\Psi_{N+1}^n(\beta_{\system{S}_N}))}
&\leqslant (n\!-\!N) \cdot \log{d}\ \cdot \norm{\Psi_{N+1}^n(\alpha_{\system{S}_N}) \!-\! \Psi_{N+1}^n(\beta_{\system{S}_N})}_1 + e^{-1},\\
\bigabs{\qEntropy(\Phi_{N+1}^n(\alpha_{\system{S}_N})) \!-\! \qEntropy(\Phi_{N+1}^n(\beta_{\system{S}_N}))}
&\leqslant 2\!\cdot\!(n\!-\!N)\cdot\log{d} \cdot \norm{\Phi_{N+1}^n(\alpha_{\system{S}_N}) \!-\! \Phi_{N+1}^n(\beta_{\system{S}_N})}_1 + e^{-1}.
\end{align*}
Moreover, by the contractivity of the trace distance, we have,
\begin{align*}
\norm{\Psi_{N+1}^n(\alpha_{\system{S}_N})-\Psi_{N+1}^n(\beta_{\system{S}_N})}_1
& \leqslant \norm{\alpha_{\system{S}_N}-\beta_{\system{S}_N}}_1,\\
\norm{\Phi_{N+1}^n(\alpha_{\system{S}_N})-\Phi_{N+1}^n(\beta_{\system{S}_N})}_1
& \leqslant \norm{\alpha_{\system{S}_N}-\beta_{\system{S}_N}}_1.
\end{align*}
This allows us to bound the difference of the information rate by
\[
\frac{1}{n}\bigabs{\mutualInfo(\rv{X}_1^n;\rv{Y}_1^n)[\alpha_{\system{S}_0}] - \mutualInfo(\rv{X}_1^n;\rv{Y}_1^n)[\beta_{\system{S}_0}]} \leqslant \frac{3N\!+\!4}{n}\cdot\log{d} + \frac{3(n-N)}{n} \cdot \log{d} \cdot \norm{\alpha_{\system{S}_n} - \beta_{\system{S}_n}}_1 + \frac{2}{n\cdot e}.
\]
Finally, because the CC-QSC is indecomposable, for any $\varepsilon>0$, we can choose $N$ large enough such that
\[
\norm{\alpha_{\system{S}_N}-\beta_{\system{S}_N}}_1 < \frac{\varepsilon}{6\cdot\log{d}}
\]
and then choose an integer $M>N$ such that
\[
\frac{3N+4}{M}\cdot\log{d} + \frac{2}{M\cdot e}< \frac{\varepsilon}{2}.
\]
This will ensure that for any $n>M$, we have
\[
\frac{3N+4}{n} \cdot \log{d} + \frac{3(n-N)}{n} \cdot \log{d} \cdot \norm{\alpha_{\system{S}_n} - \beta_{\system{S}_n}}_1 + \frac{2}{n\cdot e} < \varepsilon,
\]
which concludes the proof.
\end{proof}
\subsection{Estimation of the Information Rate}
The development in this section is very similar to the development in Subsection~\ref{QCwM:sec:FSMC:IR}.
In particular, we follow the same approach as in~\eqref{eq:def:fsmc:ir:2}--\eqref{eq:FSMC:ir:estimate:1}.
This similarity stems from the similarity of the factor graphs in Figures~\ref{fig:FMSC:high:level:1} and~\ref{fig:qsc:bcjr}, and highlights one of the benefits of the factor-graph approach that we take to estimate information rate of quantum channels with memory.\par
We make the following assumptions.
\begin{itemize}
\item As already mentioned, the derivations in this chapter are for the case where the input process $\rv{X}_1^n=(\rv{X}_1,\ldots,\rv{X}_n)$ is an {i.i.d.}~process.
	The results can be generalized to other stationary ergodic input processes that can be represented by a finite-state-machine source (FSMS).
	Technically, this is done by defining a new state that combines the FSMS state and the channel state.
\item We assume that the corresponding quantum-state channel $\{\operator{N}^{y|x}\}_{x\in\set{X},y\in\set{Y}}$ is finite-dimensional and indecomposable.
	We also assume it can be represented by some functions $\{W^{y|x}\}_{x,y}$ as defined in~\eqref{eq:def:channel:function:representation}.
\end{itemize}
The major difference compared with Section~\ref{QCwM:sec:FSMC:IR} is the conditional \pmf $\prob_{\rv{Y}_1^n|\rv{X}_1^n;\system{S}_0}$, and thus the joint \pmf $\prob_{\rv{Y}_1^n,\rv{X}_1^n|\system{S}_0}$ as specified in~\eqref{eq:joint:4} and~\eqref{eq:joint:5}, respectively.
In this case, in order to compute $-\frac{1}{n}\log{\prob_{\rv{Y}_1^n}(\cvy_1^n)}$ and $-\frac{1}{n}\log{\prob_{\rv{X}_1^n\rv{Y}_1^n}(\cvx_1^n,\cvy_1^n)}$ using a similar method as in Section~\ref{QCwM:sec:FSMC:IR}, we consider the state metrics $\{\sigmaY_\ell\}_{\ell=1}^n$ and $\{\sigmaXY_\ell\}_{\ell=1}^n$ (which are operators on $\hilbert_{\system{S}_\ell}$ for each $\ell$) defined \wrt $\cvy_1^n$ and \wrt $\cvx_1^n$ and $\cvy_1^n$, respectively, as
\begin{align}
\sigmaY_\ell
&\defeq \sum_{\vx_1^\ell} Q^{(\ell)}(\vx_1^\ell) \cdot \operator{N}^{\cy_n|x_n} \circ \cdots \circ \operator{N}^{\cy_1|x_1}(\rho_{\system{S}_0}),\\
\sigmaXY_{\ell}
&\defeq \operator{N}^{\cy_n|\cx_n} \circ \cdots \circ \operator{N}^{\cy_1|\cx_1}(\rho_{\system{S}_0}).
\end{align}
In this case, we have $\prob_{\rv{Y}_1^n}(\cvy_1^n) = \tr(\sigmaY_n)$, and $\prob_{\rv{X}_1^n\rv{Y}_1^n}(\cvx_1^n,\cvy_1^n) = \tr(\sigmaXY_n)$.
Notice that $\{\sigmaY_\ell\}_{\ell}$ and $\{\sigmaXY_\ell\}_{\ell}$ can be computed iteratively as
\begin{align}
\label{eq:recursive:quantum:state:metric:Y:1}
[\sigmaY_{\ell}]
&= \sum_{x_\ell} Q(x_\ell) \cdot [W^{y|x}] \cdot [\sigmaY_{\ell-1}],\\
\label{eq:def:quantum:channel:state:metric:XY:1}
[\sigmaXY_{\ell}]
&= [W^{y|x}] \cdot [\sigmaXY_{\ell-1}],
\end{align}
where we treat $[\sigma^\rv{Y}_{\ell}]$ and $[\sigma^\rv{XY}_{\ell}]$ as length-$d^2$ vectors indexed by $(s,\tilde{s})\in\set{S}^2$ in the above two equations.
(See~\eqref{eq:qsc:matrix:1} and~\eqref{eq:joint:4} for notations.)
Moreover, we can also introduce normalizing coefficients $\{\lambda_\ell^\rv{Y}\}_{\ell}$ and $\{\lambda_\ell^\rv{XY}\}_{\ell}$, similar to~\eqref{eq:recursive:state:metric:Y:2}, for the sake of numerical stability.
In the latter case, we have iterative updating rules
\begin{align}
\label{eq:recursive:quantum:state:metric:Y:2}
[\bsigmaY_\ell] &= \frac{1}{\lambdaY_\ell} \cdot \sum_{x_\ell} Q(x_\ell) \cdot 
    [W^{y|x}] \cdot [\bsigmaY_{\ell-1}],\\
\label{eq:def:quantum:channel:state:metric:XY:2}
[\bsigmaXY_\ell] &= \frac{1}{\lambdaXY_\ell} \cdot [W^{y|x}] 
    \cdot [\bsigmaXY_{\ell-1}],
\end{align}
where the scaling factors $\lambdaY_\ell>0$ and $\lambdaXY_\ell>0$ are chosen such that $\tr(\bsigmaY_\ell)=1$ and $\tr(\bsigmaXY_\ell)=1$, respectively.
In addition, one can verify that $\prob_{\rv{Y}_1^n}(\cvy_1^n) = \prod_{\ell=1}^n \lambdaY_\ell$, and $\prob_{\rv{X}_1^n\rv{Y}_1^n}(\cvx_1^n,\cvy_1^n) = \prod_{\ell=1}^n \lambdaXY_\ell$.
\par
The above discussion is summarized as Algorithm~\ref{alg:SPA:2}.
The computations corresponding to Line 3, 5--9 and 12--16 are visualized in Figures~\ref{fig:QFSM:channel:simulation:Y},~\ref{fig:QFSM:estimate:hY}, and~\ref{fig:QFSM:estimate:hXY} in the Appendix~\ref{app:figures}, respectively.
\begin{algorithm}
\caption{Estimating the Information Rate of a CC-QSC}
\begin{algorithmic}[1]
\Require{An indecomposable CC-QSC $\{\operator{N}^{y|x}\}_{x\in\set{X},y\in\set{Y}}$, which can be represented by functions $\{W^{y|x}\}_{x,y}$, a input~distribution~$Q$, a positive~integer~$n$ large enough.}
\Ensure{$\infoRate^{(n)}(Q,\{\operator{N}^{y|x}\}_{x,y}) \approx \entropy(\rv{X})+\hat\entropicRate(\rv{Y}) - \hat\entropicRate(\rv{X},\rv{Y})$.}
\State Initialize the memory density operator $\rho_{\system{S}_0} \gets \braket{0_\system{S}}$
\State Generate an input sequence $\cvx_1^n \sim Q^{\tensor n}$
\State Generate a corresponding output sequence $\cvy_1^n$
\State $\bsigmaY_0 \gets \rho_{\system{S}_0}$
\ForEach{$\ell=1,\ldots,n$}
\State $[\sigmaY_\ell] \gets \sum_{x_\ell} Q(x_\ell)\cdot [W^{\cy_{\ell}|x}] \cdot [\bsigmaY_{\ell-1}]$
\State $\lambdaY_\ell \gets \tr(\sigmaY_\ell)$
\State $\bsigmaY_{\ell} \gets \sigmaY_\ell/\lambdaY_\ell$
\EndFor
\State $\hat\entropicRate(\rv{Y}) \gets -\frac{1}{n} \sum_{\ell=1}^n \log(\lambdaY_{\ell})$
\State $\bsigmaXY_0 \gets \rho_{\system{S}_0}$
\ForEach{$\ell=1,\ldots,n$}
\State $[\sigmaXY_\ell] \gets [W^{\cy_\ell|\cx_\ell}] \cdot [\bsigmaXY_{\ell-1}]$
\State $\lambdaXY_\ell \gets \tr(\sigmaXY_\ell)$
\State $\bsigmaXY_{\ell} \gets \sigmaXY_\ell/\lambdaXY_\ell$
\EndFor
\State $\hat\entropicRate(\rv{X},\rv{Y}) \gets -\frac{1}{n} \sum_{\ell=1}^n \log(\lambdaXY_{\ell})$
\State $\entropy(\rv{X}) \gets -\sum_{x} Q(x) \log{Q(x)}$
\State Estimate $\infoRate^{(n)}(Q,\{\operator{N}^{y|x}\}_{x,y})$ as $\entropy(\rv{X}) +\hat\entropicRate(\rv{Y}) - \hat\entropicRate(\rv{X},\rv{Y})$.
\end{algorithmic}
\label{alg:SPA:2}
\end{algorithm}
\section[IRUB/IRLB and their Optimization]{Information rate upper/lower bounds and their Optimization} \label{QCwM:sec:5:UBLB}
In this section, we consider auxiliary channels and their induced upper and lower bounds on the information rate.
As mentioned in the beginning of this chapter auxiliary channels are often introduced as a low-complexity approximation of the original channel, which are useful in mismatch decoding.
The techniques developed in this section only require the channel input/output data, but not the channel model itself.
This is particularly useful when the channel is only made physically, but not mathematically, available.
In this case, the task of minimizing the difference between the upper and lower bound is equivalent to finding the channel model (within a specified class of channel models) best fitting the \emph{empirical} channel law.
Similarly, minimizing the upper bound corresponds to finding the channel model best fitting the \emph{empirical} channel output distribution, and maximizing the lower bound corresponds to finding the channel model best fitting the \emph{empirical} reverse channel law.
Motivated by the above scenarios, we particularly consider the auxiliary channels chosen from the domain of all CC-QSCs with the same input and output alphabet as the original channel, and acting on a memory system of a certain dimension (which can be different from the memory dimension of the original channel).
Throughout this section, we assume the original channel as described in Section~\ref{QCwM:sec:3:QCM} is indecomposable, and that all the involved Hilbert spaces are of finite dimension, and that the alphabets $\set{X}$ and $\set{Y}$ are finite.
\par
Suppose we have some auxiliary CC-QSC $\{\hat{\operator{N}}^{y|x}\}_{x,y}$, describable by some functions $\{\hat{W}^{y|x}\}_{x,y}$ as in~\eqref{eq:def:channel:function:representation}.
Let $\hat{P}_{\rv{Y}_1^n|\rv{X}_1^n,\hat{\system{S}}_0}$ denote its joint channel law, similar to~\eqref{eq:joint:2},~\eqref{eq:joint:3}, or~\eqref{eq:joint:4}.
Namely,
\begin{equation}
\hat{P}_{\rv{Y}_1^n|\rv{X}_1^n,\hat{\system{S}}_0}(\vy_1^n|\vx_1^n;\hat{\rho}_{\system{S}_0}) \defeq \tr\left( [\What^{y_{n}|x_{n}}] \cdots [\What^{y_{1}|x_{1}}] \cdot [\hat{\rho}_{\system{S}_0}] \right).
\end{equation}
We follow a similar approach as in~\cite{arnold2006simulation, sadeghi2009optimization}, and define the quantities
\begin{align}
\label{eq:IRUB}\begin{aligned}
\IRUB_{W}^{(n)}(\hat{W}) \defeq &
\frac{1}{n} \sum_{\vx_1^n,\vy_1^n} Q^{(n)}(\vx_1^n) \cdot \prob_{\rv{Y}_1^n|\rv{X}_1^n;\system{S}_0}(\vy_1^n|\vx_1^n;\rho_{\system{S}_0})\\
&\cdot \log\frac{\prob_{\rv{Y}_1^n|\rv{X}_1^n;\system{S}_0}(\vy_1^n|\vx_1^n;\rho_{\system{S}_0})}{\sum_{\cvx_1^n} Q^{(n)}(\cvx_1^n) \hat{P}_{\rv{Y}_1^n|\rv{X}_1^n,\hat{\system{S}}_0}(\vy_1^n|\cvx_1^n;\rho_{\hat{\system{S}}_0})},
\end{aligned}\\
\label{eq:IRLB}\begin{aligned}
\IRLB_{W}^{(n)}(\hat{W}) \defeq &
\frac{1}{n} \sum_{\vx_1^n,\vy_1^n} Q^{(n)}(\vx_1^n) \cdot \prob_{\rv{Y}_1^n|\rv{X}_1^n;\system{S}_0}(\vy_1^n|\vx_1^n;\rho_{\system{S}_0})\\
&\cdot \log\frac{\hat \prob_{\rv{Y}_1^n|\rv{X}_1^n;\system{S}_0}(\vy_1^n|\vx_1^n;\rho_{\system{S}_0})}{\sum_{\cvx_1^n} Q^{(n)}(\cvx_1^n) \hat{P}_{\rv{Y}_1^n|\rv{X}_1^n,\hat{\system{S}}_0}(\vy_1^n|\cvx_1^n;\rho_{\hat{\system{S}}_0})},
\end{aligned}
\end{align}
where $\prob_{\rv{Y}_1^n|\rv{X}_1^n;\system{S}_0}$ is defined in~\eqref{eq:joint:2},~\eqref{eq:joint:3} or~\eqref{eq:joint:4}.
By following similar arguments like those in~\eqref{eq:IRUB:minus:IR} and~\eqref{eq:IR:minus:IRLB}, one can verify that 
\begin{equation}
\IRLB_{W}^{(n)}(\hat{W}) \leqslant \infoRate^{(n)}_{W} \leqslant \IRUB_{W}^{(n)}(\hat{W}),
\end{equation}
where the first inequality holds with equality if and only if $\hat{P}_{\rv{Y}_1^n|\rv{X}_1^n,\hat{\system{S}}_0}(\vy_1^n|\vx_1^n;\rho_{\hat{\system{S}}_0})$ and $\prob_{\rv{Y}_1^n|\rv{X}_1^n;\system{S}_0}(\vy_1^n|\vx_1^n;\rho_{\system{S}_0})$ coincide for all $\vx_1^n$ and $\vy_1^n$ with positive support of $\prob_{\rv{Y}_1^n|\rv{X}_1^n;\system{S}_0}$ and where the second inequalities holds with equality if and only if $\hat{P}_{\rv{Y}_1^n|\hat{\system{S}}_0}(\vy_1^n|\rho_{\hat{\system{S}}_0})$ and $\prob_{\rv{Y}_1^n|\system{S}_0}(\vy_1^n|\rho_{\system{S}_0})$ coincide for all $\vy_1^n$ with positive support of $\prob_{\rv{Y}_1^n|\system{S}_0}$.
Another quantity of interest is the \emph{difference function} defined as 
\begin{equation} \label{eq:DIFF}
\Delta_{W}^{(n)}(\hat{W}) \defeq \IRUB_{W}^{(n)}(\hat{W}) - \IRLB_{W}^{(n)}(\hat{W}).
\end{equation}
Explicit expressions of~\eqref{eq:IRUB},~\eqref{eq:IRLB}, and~\eqref{eq:DIFF} are given by
\begin{align}
\label{eq:IRUB:explicit}
\IRUB^{(n)}_W(\hat{W}) &= \frac{1}{n}\expectationwrt{
\log\frac{\tr\left([W^{\rv{Y}_n|\rv{X}_n}] \cdots [W^{\rv{Y}_1|\rv{X}_1}] \cdot [\rho_{\system{S}_0}]\right)}{\sum_{\vx_1^n} Q^{(n)}(\vx_1^n) \cdot \tr\left([\hat{W}^{\rv{Y}_n|x_n}] \cdots [\hat{W}^{\rv{Y}_1|x_1}] \cdot [\rho_{\hat{\system{S}}_0}]\right)}}{\rv{X}_1^n\rv{Y}_1^n},\\
\label{eq:IRLB:explicit}
\IRLB_{W}^{(n)}(\hat{W}) &= \frac{1}{n}\expectationwrt{
\log\frac{\tr\left([\hat{W}^{\rv{Y}_n|\rv{X}_n}] \cdots [\hat{W}^{\rv{Y}_1|\rv{X}_1}]\cdot [\rho_{\hat{\system{S}}_0}]\right)}{\sum_{\vx_1^n} Q^{(n)}(\vx_1^n) \cdot \tr\left([\hat{W}^{\rv{Y}_n|x_n}]\cdots[\hat{W}^{\rv{Y}_1|x_1}] \cdot [\rho_{\hat{\system{S}}_0}]\right)}}{\rv{X}_1^n\rv{Y}_1^n},\\
\label{eq:DIFF:explicit}
\Delta_{W}^{(n)}(\hat{W}) &= \frac{1}{n}\expectationwrt{
\log\frac{\tr\left([W^{\rv{Y}_n|\rv{X}_n}] \cdots [W^{\rv{Y}_1|\rv{X}_1}] \cdot [\rho_{\system{S}_0}]\right)}{\tr\left([\hat{W}^{\rv{Y}_n|\rv{X}_n}] \cdots [\hat{W}^{\rv{Y}_1|\rv{X}_1}] \cdot [\rho_{\hat{\system{S}}_0}]\right)}}{\rv{X}_1^n\rv{Y}_1^n},
\end{align}
respectively, where $\rv{X}_1^n$ and $\rv{Y}_1^n$ are random variables distributed according to the joint distribution $Q^{(n)}(\vx_1^n)\cdot \prob_{\rv{Y}_1^n|\rv{X}_1^n;\system{S}_0}(\vy_1^n|\vx_1^n;\rho_{\system{S}_0})$ and where $\left\langle\cdot\right\rangle$ stands for the expectation function.
\par
In the remainder of this section, we propose an algorithm based on the gradient-descent method and the techniques described in Section~\ref{QCwM:subsec:NFGs} and~\ref{QCwM:sec:4:IR} for optimizing the quantities in~\eqref{eq:IRUB},~\eqref{eq:IRLB}, and~\eqref{eq:DIFF}.
In particular, we consider $\{\hat{W}^{y|x}\}_{x,y}$ to be an \emph{interior} point in the domain of CC-QSCs, namely
\begin{itemize}
\item The Choi--Jamio{\l}kowski matrices $\llbracket \hat{W}^{y|x}\rrbracket$, defined similarly as~\eqref{eq:qsc:matrix:2}, are strictly positive definite for each $x$ and $y$,
\item Eq.~\eqref{eq:operator:sum:representation:condition:2} holds by replacing $W^{y|x}$ with $\hat{W}^{y|x}$, namely $\sum_{y\in\set{Y}} \sum_{s',\tilde{s}':\: s'=\tilde{s}'} \llbracket \hat{W}^{y|x}\rrbracket_{(s',s),(\tilde{s}',\tilde{s})} = \delta_{s,\tilde{s}}$ for all $x\in\set{X}$.
\end{itemize}
For any set of functions $\{H^{y|x}:\set{S}^4\to \mathbb{C}\}_{x,y}$ such that $\llbracket H^{y|x}\rrbracket$ (again, defined similarly as~\eqref{eq:qsc:matrix:2}) is Hermitian for each $x$ and $y$ and such that
\begin{equation}\label{eq:tangent:linear:constraint}
\sum_{y\in\set{Y}}\sum_{s',\tilde{s}':\: s'=\tilde{s}'} \llbracket H^{y|x}\rrbracket_{(s',s),(\tilde{s}',\tilde{s})} = 0 
	\quad \forall x\in\set{X},
\end{equation}
the functions $\{\hat{W}^{y|x}+t\cdot H^{y|x}\}_{x,y}$ describe a valid CC-QSC, for all $t$ in some neighborhood of $0$.
In this case, the directional derivatives of the functions $\IRLB_{W}^{(n)}$, $\IRUB_{W}^{(n)}$, and $\Delta_{W}^{(n)}$ at $\{\hat{W}^{y|x}\}_{x,y}$ \emph{along} $\{H^{y|x}\}_{x,y}$ is well defined, and can be expressed as~\eqref{eq:IRUB:dev:2},~\eqref{eq:IRLB:dev:2}, and~\eqref{eq:DIFF:dev:2} on next page, 
\begin{sidewaysfigure}
\begin{align}
\left.\frac{\D}{\D t}\right|_{t=0} \IRUB_{W}^{(n)}(\hat{W}+tH)
&\propto -\frac{1}{n} \expectationwrt{
\sum_{k=1}^n \sum_{\vx_1^n} Q^{(n)}(\vx_1^n) \cdot \tr\left( [\hat{W}^{\rv{Y}_n|x_n}] \cdots [\hat{W}^{\rv{Y}_{k\!+\!1}|x_{k\!+\!1}}] [H^{\rv{Y}_k|x_k}] [\hat{W}^{\rv{Y}_{k\!-\!1}|x_{k\!-\!1}}] \cdots [\hat{W}^{\rv{Y}_1|x_1}] \cdot [\rho_{\hat{\system{S}}_0}] \right)}{\rv{Y}_1^n} \nonumber\\
\label{eq:IRUB:dev:2}
&= -\frac{1}{n} \sum_{\vx_1^n,\vy_1^n} \prob_{\rv{X}_1^n,\rv{Y}_1^n|\system{S}_0}(\vx_1^n,\vy_1^n|\rho_{\system{S}_0}) \cdot \sum_{k} \sum_{s',s,\tilde{s}',\tilde{s}} \rvec{\varrho}_{\hat{\system{S}}_{k\!-\!1}}^{(\vy_1^{k\!-\!1})}(s,\tilde{s}) \cdot H^{y_k|x_k}(s',s,\tilde{s}',\tilde{s}) \cdot \lvec{\varrho}_{\hat{\system{S}}_k}^{(\vy_{k\!+\!1}^n)}(s',\tilde{s}')\\
\left.\frac{\D}{\D t}\right|_{t=0} \IRLB_{W}^{(n)}(\hat{W}+tH)
&\propto\!\!\begin{aligned}[t]
&-\frac{1}{n}\expectationwrt{
\sum_{k=1}^n \sum_{\vx_1^n} Q^{(n)}(\vx_1^n) \cdot \tr\left([\hat{W}^{\rv{Y}_n|x_n}] \cdots [\hat{W}^{\rv{Y}_{k\!+\!1}|x_{k\!+\!1}}] [H^{\rv{Y}_k|x_k}] [\hat{W}^{\rv{Y}_{k\!-\!1}|x_{k\!-\!1}}] \cdots [\hat{W}^{\rv{Y}_1|x_1}] \cdot [\rho_{\hat{\system{S}}_0}] \right)}{\rv{Y}_1^n}\\
&+\frac{1}{n}\expectationwrt{
\sum_{k=1}^n \tr\left([\hat{W}^{\rv{Y}_n|\rv{X}_n}] \cdots [\hat{W}^{\rv{Y}_{k\!+\!1}|\rv{X}_{k\!+\!1}}] [H^{\rv{Y}_k|\rv{X}_k}] [\hat{W}^{\rv{Y}_{k\!-\!1}|\rv{X}_{k\!-\!1}}] \cdots [\hat{W}^{\rv{Y}_1|\rv{X}_1}] \cdot [\rho_{\hat{\system{S}}_0}] \right)}{\rv{X}_1^n\rv{Y}_1^n}
\end{aligned}\nonumber\\
\label{eq:IRLB:dev:2}
&=\!\!\begin{aligned}[t]
&-\frac{1}{n} \sum_{\vx_1^n,\vy_1^n} \prob_{\rv{X}_1^n,\rv{Y}_1^n|\system{S}_0}(\vx_1^n,\vy_1^n|\rho_{\system{S}_0}) \cdot \sum_{k}\sum_{s',s,\tilde{s}',\tilde{s}} \rvec{\varrho}_{\hat{\system{S}}_{k\!-\!1}}^{\,(\vy_1^{k\!-\!1})}(s,\tilde{s}) \cdot H^{y_k|x_k}(s',s,\tilde{s}',\tilde{s}) \cdot \lvec{\varrho}_{\hat{\system{S}}_k}^{\,(\vy_{k\!+\!1}^n)}(s',\tilde{s}')\\
&+\frac{1}{n} \sum_{\vx_1^n,\vy_1^n} \prob_{\rv{X}_1^n,\rv{Y}_1^n|\system{S}_0}(\vx_1^n,\vy_1^n|\rho_{\system{S}_0}) \cdot \sum_{k}\sum_{s',s,\tilde{s}',\tilde{s}} \rvec{\varrho}_{\hat{\system{S}}_{k\!-\!1}}^{(\vx_1^{k\!-\!1},\vy_1^{k\!-\!1})}(s,\tilde{s}) \cdot H^{y_k|x_k}(s',s,\tilde{s}',\tilde{s})\cdot \lvec{\varrho}_{\hat{\system{S}}_k}^{(\vx_{k\!+\!1}^n,\vy_{k\!+\!1}^n)}(s',\tilde{s}')
\end{aligned}\\
\left.\frac{\D}{\D t}\right|_{t=0} \Delta_{W}^{(n)}(\hat{W}+tH)
&\propto-\frac{1}{n}\expectationwrt{
\sum_{k=1}^n \tr\left( [\hat{W}^{\rv{Y}_n|\rv{X}_n}] \cdots [\hat{W}^{\rv{Y}_{k\!+\!1}|\rv{X}_{k\!+\!1}}] [H^{\rv{Y}_k|\rv{X}_k}] [\hat{W}^{\rv{Y}_{k\!-\!1}|\rv{X}_{k\!-\!1}}] \cdots [\hat{W}^{\rv{Y}_1|\rv{X}_1}] \cdot [\rho_{\hat{\system{S}}_0}]\right)}{\rv{X}_1^n\rv{Y}_1^n}\nonumber\\
\label{eq:DIFF:dev:2}
&=-\frac{1}{n}\sum_{\vx_1^n,\vy_1^n} \prob_{\rv{X}_1^n,\rv{Y}_1^n|\system{S}_0}(\vx_1^n,\vy_1^n|\rho_{\system{S}_0}) \cdot \sum_{k}\sum_{s',s,\tilde{s}',\tilde{s}} \rvec{\varrho}_{\hat{\system{S}}_{k\!-\!1}}^{\,(\vx_1^{k\!-\!1},\vy_1^{k\!-\!1})}(s,\tilde{s}) \cdot H^{y_k|x_k}(s',s,\tilde{s}',\tilde{s}) \cdot \lvec{\varrho}_{\hat{\system{S}}_k}^{\,(\vx_{k\!+\!1}^n,\vy_{k\!+\!1}^n)}(s',\tilde{s}')
\end{align}
\end{sidewaysfigure}
where we define the messages $\{\rvec{\varrho}_{\system{S}_\ell}^{(\cvy_1^\ell)}\}_\ell$, $\{\lvec{\varrho}_{\system{S}_\ell}^{(\cvy_{\ell\!+\!1}^n)}\}_\ell$, $\{\rvec{\varrho}_{\system{S}_\ell}^{(\cvx_1^\ell,\cvy_1^\ell)}\}_\ell$, and $\{\lvec{\varrho}_{\system{S}_\ell}^{(\cvx_{\ell\!+\!1}^n,\cvy_{\ell\!+\!1}^n)}\}_\ell$ in a recursive manner as 
\begin{align}
[\rvec{\varrho}_{\system{S}_\ell}^{(\cvy_1^\ell)}] &\defeq
\sum_{\vx_1^\ell} Q(\vx_1^\ell) \cdot [\hat{W}^{\cy_\ell|x_\ell}] \cdots [\hat{W}^{\cy_1|x_1}] \cdot [\rho_{\system{S}_0}],\\
[\lvec{\varrho}_{\system{S}_\ell}^{(\cvy_{\ell\!+\!1}^n)}] &\defeq
\sum_{\vx_{\ell\!+\!1}^n} Q(\vx_{\ell+1}^n) \cdot [I_{\system{S}_n}] \cdot [\hat{W}^{\cy_n|x_n}] \cdots [\hat{W}^{\cy_{\ell\!+\!1}|x_{\ell\!+\!1}}],\\
[\rvec{\varrho}_{\system{S}_\ell}^{(\cvx_1^\ell,\cvy_1^\ell)}] &\defeq
[\hat{W}^{\cy_\ell|\cx_\ell}] \cdots [\hat{W}^{\cy_1|\cx_1}] \cdot [\rho_{\system{S}_0}],\\
[\lvec{\varrho}_{\system{S}_\ell}^{(\cvx_{\ell\!+\!1}^n,\cvy_{\ell\!+\!1}^n)}] &\defeq
[I_{\system{S}_n}] \cdot [\hat{W}^{\cy_n|\cx_n}] \cdots [\hat{W}^{\cy_{\ell\!+\!1}|\cx_{\ell\!+\!1}}].
\end{align}
Recall that, in above equations, $[I_{\system{S}_n}]$ is a row vector, and $[\rho_{\system{S}_0}]$ is a column vector.
\par
By extending the domain of the functions $\IRLB_{W}^{(n)}$, $\IRUB_{W}^{(n)}$, and $\Delta_{W}^{(n)}$ to include \emph{all} PSD matrices $\llbracket \hat{W}^{y|x}\rrbracket$, one can omit the linear constraint~\eqref{eq:tangent:linear:constraint}.
Namely, the ``direction'' $\{\llbracket H^{y|x}\rrbracket\}_{x,y}$ can take any Hermitian matrices.
Using some linear algebra, the gradient w.r.t. $\hat{W}$ of these functions on this \emph{extended} domain can be expressed as 
\begin{align}
\label{eq:IRUB:grad:1}
\left(\grad\IRUB_{W,\mathrm{ext}}^{(n)}(\hat{W})\right)^{y|x}
&\propto -\frac{1}{n} \expectationwrt{
\sum_{k=1}^{n}\delta_{\rv{X_k},x} \cdot \delta_{\rv{Y_k},y} \cdot \rvec{\varrho}_{\hat{\system{S}}_{k\!-\!1}}^{(\rv{Y}_1^{k\!-\!1})} \tensor \lvec{\varrho}_{\hat{\system{S}}_k}^{(\rv{Y}_{k\!+\!1}^n)}}{\rv{X}_1^n\rv{Y}_1^n},\\
\label{eq:IRLB:grad:1}
\left(\grad \IRLB_{W,\mathrm{ext}}^{(n)}(\hat{W})\right)^{y|x}
&\propto -\frac{1}{n} \Bigg\langle \sum_{k=1}^{n} \delta_{\rv{X_k},x} \cdot \delta_{\rv{Y_k},y} \cdot
	\begin{aligned}[t]
	&\Bigg(\rvec{\varrho}_{\hat{\system{S}}_{k\!-\!1}}^{(\rv{Y}_1^{k\!-\!1})} \tensor \lvec{\varrho}_{\hat{\system{S}}_k}^{(\rv{Y}_{k\!+\!1}^n)} - \\
    &\rvec{\varrho}_{\hat{\system{S}}_{k\!-\!1}}^{(\rv{X}_1^{k\!-\!1},\rv{Y}_1^{k\!-\!1})} \tensor \lvec{\varrho}_{\hat{\system{S}}_k}^{(\rv{X}_{k\!+\!1}^n,\rv{Y}_{k\!+\!1}^n)}\Bigg)\Bigg\rangle_{\rv{X}_1^n\rv{Y}_1^n},
	\end{aligned}\\
\label{eq:DIFF:grad:1}
\left(\grad \Delta_{W,\mathrm{ext}}^{(n)}(\hat{W})\right)^{y|x}
&\propto -\frac{1}{n} \expectationwrt{
\sum_{k=1}^{n} \delta_{\rv{X_k},x} \cdot \delta_{\rv{Y_k},y} \cdot \rvec{\varrho}_{\hat{\system{S}}_{k\!-\!1}}^{(\rv{X}_1^{k\!-\!1},\rv{Y}_1^{k\!-\!1})} \tensor \lvec{\varrho}_{\hat{\system{S}}_k}^{(\rv{X}_{k\!+\!1}^n,\rv{Y}_{k\!+\!1}^n)}}{\rv{X}_1^n\rv{Y}_1^n},
\end{align}
respectively.
For stationary and ergodic input and output processes $(\rv{X}_1^n,\rv{Y}_1^n)$, we can \emph{estimate}~\eqref{eq:IRUB:grad:1} and~\eqref{eq:DIFF:grad:1}, respectively, as
\begin{align}
\label{eq:IRUB:grad:2}
\left(\grad\IRUB^{(n)}_{W,\mathrm{ext}}(\hat{W})\right)^{y|x}
&\overset{\cdot}{\propto} -\frac{1}{n} \sum_{k:\:{\cx_k=x \atop \cy_k=y}} \rvec{\varrho}_{\hat{\system{S}}_{k\!-\!1}}^{(\cvy_1^{k\!-\!1})} \tensor \lvec{\varrho}_{\hat{\system{S}}_k}^{(\cvy_{k\!+\!1}^n)},\\
\label{eq:DIFF:grad:2}
\left(\grad\Delta_{W,\mathrm{ext}}^{(n)}(\hat{W})\right)^{y|x}
&\overset{\cdot}{\propto} -\frac{1}{n} \sum_{k:\:{\cx_k=x \atop \cy_k=y}} \rvec{\varrho}_{\hat{\system{S}}_{k\!-\!1}}^{(\cvx_1^{k\!-\!1},\cvy_1^{k\!-\!1})} \tensor \lvec{\varrho}_{\hat{\system{S}}_k}^{(\cvx_{k\!+\!1}^n,\cvy_{k\!+\!1}^n)},
\end{align}
where $(\cvx_1^n,\cvy_1^n)$ is a realization of the channel input/output processes generated by the original channel model.
The dot in~\eqref{eq:IRUB:grad:2} and~\eqref{eq:DIFF:grad:2} stands for ``approximation''.
Notice that the messages $\rvec{\varrho}_{\system{S}_{k\!-\!1}}^{(\cvy_1^{k\!-\!1})}$, $\lvec{\varrho}_{\system{S}_{k}}^{(\cvy_{k\!+\!1}^n)}$, $\rvec{\varrho}_{\system{S}_{k\!-\!1}}^{(\cvx_1^{k\!-\!1},\cvy_1^{k\!-\!1})}$, and $\lvec{\varrho}_{\system{S}_k}^{(\cvx_{k\!+\!1}^n,\cvy_{k\!+\!1}^n)}$ can be computed iteratively.
Thus,~\eqref{eq:IRUB:grad:2} and \eqref{eq:DIFF:grad:2} provide efficient means to estimate the gradient.
However, due to the extension of the domain, the gradients computed above may not satisfy constraint~\eqref{eq:tangent:linear:constraint}.
This can be compensated using a projection \wrt the linear constraint, which can be solved using linear programming.
On the other hand, the above gradient method may lead to a violation of the PSD condition required by CC-QSCs.
However, since the feasible domain of CC-QSCs is convex and bounded, this can be corrected using convex programming at each step.
\par
We summarize the above discussion as Algorithm~\ref{alg:grad:1}, which is an iterative gradient-descent method for minimizing $\Delta_{W}^{(n)}$.
Notice that the quantity $\lambda_\ell$ in this case is the conditional probability $\prob_{\rv{X}_\ell\rv{Y}_\ell|\rv{X}_{1}^{\ell\!-\!1}\rv{Y}_{1}^{\ell\!-\!1}}(\cx_{\ell},\cy_\ell|\cvx_1^{\ell\!-\!1},\cvy_1^{\ell\!-\!1})$.
The algorithm for minimizing the upper and lower bounds are similar, and we omit the details.
\begin{algorithm}
\caption{Optimizing the Difference Function $\Delta_{W}^{(n)}(\hat{W})$}
\begin{algorithmic}[1]
\Require{An indecomposable~CC-QSC, an input~distribution~$Q$, a positive~integer~$n$ large enough, an initial~auxiliary~CC-QSC~$\{\hat{W}^{y|x}\}_{x,y}$, step~size~$\gamma>0$.}
\Ensure{$\{\hat{W}^{y|x}\}_{x,y}$, an estimated local minimum point of $\Delta_{W}^{(n)}$.}
\State Initialize the memory density operator $\rho_{\hat{\system{S}_0}} \gets \braket{0}$
\State Generate an input sequence $\cvx_1^n \sim Q^{\tensor n}$
\State Generate a corresponding output sequence $\cvy_1^n$
\Repeat
\State $\rvec{\varrho}_{\hat{\system{S}}_0} \gets \rho_{\hat{\system{S}}_0}$
\ForEach{$\ell=1,\ldots,n$}
\State $[\rvec{\varrho}_{\hat{\system{S}}_\ell}] \gets [\hat{W}^{\cy_\ell|\cx_\ell}] \cdot [\rvec{\varrho}_{\hat{\system{S}}_{\ell\!-\!1}}]$
\State $\lambda_\ell \gets \tr(\rvec{\varrho}_{\hat{\system{S}}_\ell})$
\State $\rvec{\varrho}_{\hat{\system{S}}_\ell} \gets \lambda_\ell^{-1} \cdot \rvec{\varrho}_{\hat{\system{S}}_\ell}$
\EndFor
\State $\lvec{\varrho}_{\hat{\system{S}}_n} \gets I_{\hat{\system{S}}_n}$
\ForEach{$\ell=n,\ldots,1$}
\State $[\lvec{\varrho}_{\hat{\system{S}}_{\ell\!-\!1}}] \gets [\lvec{\varrho}_{\hat{\system{S}}_\ell}] \cdot [\hat{W}^{\cy_\ell|\cx_\ell}]$
\State $\lvec{\varrho}_{\hat{\system{S}}_{\ell\!-\!1}} \gets \left( \tr( \lvec{\varrho}_{\hat{\system{S}}_{\ell\!-\!1}})\right)^{-1} \cdot \lvec{\varrho}_{\hat{\system{S}}_{\ell\!-\!1}}$
\EndFor
\State for each $x,y$, let $\left(\grad\Delta_{W,\mathrm{ext}}^{(n)}(\hat{W})\right)^{y|x} \gets \mathbf{0}$ 
\ForEach{$k=1,\ldots,n$}
	\State $\left( \grad\Delta_{W,\mathrm{ext}}^{(n)}(\hat{W})\right)^{\cy_k|\cx_k} \mathrel{+}= \frac{1}{n} \cdot \frac{\rvec{\varrho}_{\hat{\system{S}}_{k\!-\!1}} \tensor \lvec{\varrho}_{\hat{\system{S}}_k}}{\lambda_k \cdot \tr(\rvec{\varrho}_{\hat{\system{S}}_k} \cdot \lvec{\varrho}_{\hat{\system{S}}_k})}$
\EndFor
\State Project $\{\left( \grad\Delta_{W,\mathrm{ext}}^{(n)}(\hat{W})\right)^{y|x}\}_{x,y}$ onto the subspace satisfying~\eqref{eq:tangent:linear:constraint}; denoting the result by $\left\{\left(\grad\Delta_{W}^{(n)}(\hat{W})\right)^{y|x}\right\}_{x,y}$
\State $\{\hat{W}^{y|x}\}_{x,y}\gets\{\hat{W}^{y|x}\}_{x,y} - \gamma \cdot \left\{ \left( \grad\Delta_{W}^{(n)}(\hat{W}) \right)^{y|x} \right\}_{x,y}$
\State Solve the following convex program \wrt $\{\tilde{W}^{y|x}\}_{x,y}$:
\[\begin{aligned}
\min\quad & \sum\nolimits_{x,y} \tr\left( (\llbracket\tilde{W}^{y|x}\rrbracket - \llbracket\hat{W}^{y|x}\rrbracket) \cdot (\llbracket\tilde{W}^{y|x}\rrbracket - \llbracket\hat{W}^{y|x}\rrbracket)^\Herm \right) \\
\st\quad & \llbracket\tilde{W}^{y|x}\rrbracket \in \mathbb{C}^{\set{S}^2\times\set{S}^2} \text{being PSD} &&\forall x,y\\
& \sum\nolimits_{y\in\set{Y}}\sum\nolimits_{s',\tilde{s}':\: s'=\tilde{s}'} \llbracket\tilde{W}^{y|x}\rrbracket_{(s',s),(\tilde{s}',\tilde{s})} = \delta_{s,\tilde{s}} &&\forall x
\end{aligned}\]
\State $\{\hat{W}^{y|x}\} \gets \{\tilde{W}^{y|x}\}$
\Until{$\{\hat{W}^{y|x}\}_{x,y}$ has converged.}
\end{algorithmic}
\label{alg:grad:1}
\end{algorithm}
\section{Example: Quantum Gilbert--Elliott Channels}\label{QCwM:sec:6:example}
In this section we present some numerical results as a demonstration of the algorithms introduced in this chapter.
In particular, as a generalization of Example~\ref{example:GEC}, we consider a class of quantum channels with memory named the quantum Gilbert--Elliott channels (QGECs), which were introduced in~\cite{cao2017estimating}, and consider their information rate using some separable input ensemble and local output measurement.
Note that the numerical results in this section are based on binary logarithm, and thus the information rate is measured in bits.
\par
\begin{figure}\centering
\begin{tikzpicture}[every node/.style={transform shape}]
	\node (S) {$\system{S}$};
	\node[below = 10pt of S] (A) {$\system{A}$};
	\node[below = 10pt of A] (E) {$\bra{0}$};
	\path[draw=none] (A) edge[draw=none] node[midway] (AE) {} (E);
	\node[draw, minimum width = 20pt, minimum height = 42pt, right = 20pt of AE] (CBF) {$U^{\bra{s}}$};
	\node[right = 80pt of A] (B) {$\system{B}$};
	\node[right = 52pt of E, draw, minimum size = 12pt] (M) {};
	\node[minimum size = 1.5pt, fill = black, inner sep = 0pt, outer sep = 0pt, below = 3pt of M.center, circle, anchor = center] (m) {};
	\draw [decoration={markings,mark=at position 1 with {\arrow[scale=0.5,>=latex]{>}}},postaction={decorate}, draw = none] (m.center) -- ([yshift=7pt, xshift=3pt]m.center);
	\draw (m.center) -- ([yshift=6.3pt, xshift=2.7pt]m.center);
	\draw ([xshift=4pt]m.center) arc (0:180:4pt);
	\draw[line width = 1.3pt] (M.east) -- ([xshift=13.5pt]M.east) node[pos=1, minimum size = 5pt, inner sep = 0pt, outer sep = 0pt] (Eend) {};
	\draw[line width = 1.3pt] (Eend.south west) -- (Eend.north east);
	\draw[line width = 1.3pt] (Eend.north west) -- (Eend.south east);
	\node[right = 52pt of S, draw, minimum size = 12pt] (Us) {$V_\system{S}$};
	\node at (Us-|B) {$\system{S}'$};
	\draw (A) -- (A-|CBF.west);
	\draw (E) -- (E-|CBF.west);
	\draw (M-|CBF.east) -- (M);
	\draw (B-|CBF.east) -- (B);
	\draw (S) -- (Us);
	\draw[-latex] (Us) -- (Us-|B.west) -- ([yshift=15pt]Us-|B.west) -- ([yshift=15pt]S.east) -- (S.east);
	\draw[-*] (CBF) -- (S-|CBF);
	\begin{pgfonlayer}{bg}
		\draw [fill=red!20,draw=none] ([xshift=-90pt, yshift=18pt]S) rectangle ([xshift=5pt,yshift=-10pt]Us-|B.east);
		\node [left=85pt of S.center, anchor=west, color = red] {memory system};
		\draw [fill=blue!20,draw=none] ([xshift=-90pt, yshift=10pt]A) rectangle ([xshift=5pt,yshift=-10pt]B.east);
		\node [left=85pt of A.center, anchor=west, color = blue] {primary system};
		\draw [fill=yellow!20,draw=none] ([xshift=-90pt, yshift=10pt]E) rectangle ([xshift=5pt,yshift=-10pt]E-|B.east);
		\node [left=85pt of E.center, anchor=west] {enviroment};
	\end{pgfonlayer}
\end{tikzpicture}
\begin{tikzpicture}[every node/.style={transform shape}]
	\node (S) {$\system{S}$};
	\node[below = 10pt of S] (A) {$\system{A}$};
	\node[below = 10pt of A] (E) {$\bra{0}$};
	\path[draw=none] (A) edge[draw=none] node[midway] (AE) {} (E);
	\node[draw, minimum width = 20pt, minimum height = 42pt, right = 20pt of AE] (CBF) {$U^{\bra{s}}$};
	\node[right = 80pt of A] (B) {$\system{B}$};
	\node[right = 52pt of E, draw, minimum size = 12pt] (M) {};
	\node[minimum size = 1.5pt, fill = black, inner sep = 0pt, outer sep = 0pt, below = 3pt of M.center, circle, anchor = center] (m) {};
	\draw [decoration={markings,mark=at position 1 with {\arrow[scale=0.5,>=latex]{>}}},postaction={decorate}, draw = none] (m.center) -- ([yshift=7pt, xshift=3pt]m.center);
	\draw (m.center) -- ([yshift=6.3pt, xshift=2.7pt]m.center);
	\draw ([xshift=4pt]m.center) arc (0:180:4pt);
	\draw[line width = 1.3pt] (M.east) -- ([xshift=13.5pt]M.east) node[pos=1, minimum size = 5pt, inner sep = 0pt, outer sep = 0pt] (Eend) {};
	\draw[line width = 1.3pt] (Eend.south west) -- (Eend.north east);
	\draw[line width = 1.3pt] (Eend.north west) -- (Eend.south east);
	\node[right = 52pt of S, draw, minimum size = 12pt] (Us) {$\tilde{V}_\system{S}$};
	\node at (Us-|B) {$\system{S}'$};
	\draw (A) -- (A-|CBF.west);
	\draw (E) -- (E-|CBF.west);
	\draw (M-|CBF.east) -- (M);
	\draw (B-|CBF.east) -- (B);
	\draw ([yshift=-2pt]S.east) -- ([yshift=-2pt]Us.west);
	\draw[-latex] ([yshift=-2pt]Us.east) -- ([yshift=-2pt]Us.east-|B.west) -- ([yshift=15pt]Us-|B.west) -- ([yshift=15pt]S.east) -- ([yshift=-2pt]S.east);
	\draw[-*] (CBF) -- ([yshift=-2pt]S-|CBF);
	\draw ([xshift=3pt, yshift=2pt]S.east) -- node[pos=0.8] (marker1) {} ([yshift=2pt]Us.west);
	\draw[-latex] ([yshift=2pt]Us.east) -- node[midway] (marker2) {} ([yshift=2pt,xshift=-3pt]Us.east-|B.west) -- ([yshift=12pt,xshift=-3pt]Us-|B.west) -- ([xshift=3pt,yshift=12pt]S.east) -- ([xshift=3pt,yshift=2pt]S.east);
	\draw ([xshift=-2pt, yshift=-2pt]marker1.center) -- ([xshift=2pt, yshift=2pt]marker1.center);
	\draw ([xshift=-2pt, yshift=-2pt]marker2.center) -- ([xshift=2pt, yshift=2pt]marker2.center);
	\begin{pgfonlayer}{bg}
		\draw [fill=red!20,draw=none] ([xshift=-90pt, yshift=18pt]S) rectangle ([xshift=5pt,yshift=-10pt]Us-|B.east);
		\node [left=85pt of S.center, anchor=west, color = red] {memory system};
		\draw [fill=blue!20,draw=none] ([xshift=-90pt, yshift=10pt]A) rectangle ([xshift=5pt,yshift=-10pt]B.east);
		\node [left=85pt of A.center, anchor=west, color = blue] {primary system};
		\draw [fill=yellow!20,draw=none] ([xshift=-90pt, yshift=10pt]E) rectangle ([xshift=5pt,yshift=-10pt]E-|B.east);
		\node [left=85pt of E.center, anchor=west] {enviroment};
	\end{pgfonlayer}
\end{tikzpicture}
\caption{A quantum Gilbert--Elliott channel (LHS), and a variant where the memory system consists of multiple qubits with only one of them controlling $U^{\bra{s}}$ (RHS).}
\label{fig:QGEC}
\end{figure}
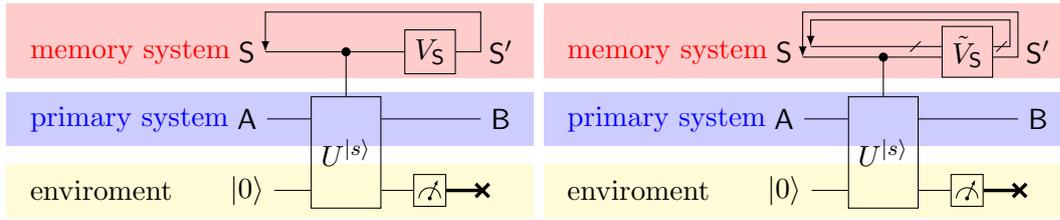
A QGEC is a quantum channel with memory defined by\footnote{
We put the system $\system{S}$ ahead of $\system{A}$ and $\system{B}$ in this example to emphasize the role of $\system{S}$ as a \emph{control} qubit, and also for simplicity reasons.}
\begin{alignat*}{2}
&\operator{N}:\:& \DensOp(\hilbert_\system{S}\tensor\hilbert_\system{A}) 
& \to \DensOp(\hilbert_{\system{S}'}\tensor\hilbert_\system{B})\\
& & \rho_{\system{SA}} & \mapsto (V_\system{S}\tensor I_\system{B}) \cdot \Phi^{\mathrm{CBF}}(\rho_{\system{SA}}) \cdot (V_\system{S}^\Herm\tensor I_\system{B}),
\end{alignat*}
where $\hilbert_\system{A}$, $\hilbert_\system{B}$, and $\hilbert_\system{S}=\hilbert_{\system{S}'}$ are of dimension 2, namely each of them is made up of one qubit, and where $\Phi^{\mathrm{CBF}}$ is the \emph{controlled bit-flip channel} defined by $\Phi^{\mathrm{CBF}}(\rho_\system{SA}) \defeq E_0\rho^{\mathrm{SA}} E_0^\Herm + E_1\rho^{\mathrm{SA}} E_1^\Herm $ with
\[
E_0 \defeq \left[ \begin{smallmatrix}
\!\!\sqrt{1\!-\!\pgood}\!\! & 0                & 0   & 0 \\
0                & \!\!\sqrt{1\!-\!\pgood}\!\! & 0   & 0 \\
0                & 0                & \!\!\sqrt{1\!-\!\pbad}\!\! & 0 \\
0                & 0                & 0   & \!\!\sqrt{1\!-\!\pbad}\!\!
\end{smallmatrix}\right],\ 
E_1 \defeq \left[ \begin{smallmatrix}
0                  & \sqrt{\pgood} & 0     & 0 \\
\sqrt{\pgood} & 0            & 0           & 0 \\
0                  & 0            & 0   & \sqrt{\pbad} \\
0                  & 0            & \sqrt{\pbad} & 0
\end{smallmatrix}\right],
\]
and where $V_\system{S}$ is some unitary operator on $\hilbert_\system{S}$ to be specified later.
The controlled bit-flip channel $\Phi^{\mathrm{CBF}}$ applies a quantum bit-flip channel to the system $\system{A}$ with flipping probability $\pgood$ when the system $\system{S}$ is in the state of $\bra{0}$, and with flipping probability $\pbad$ when the system $\system{S}$ is in the state of $\bra{1}$.
The action of a QGEC is the combined effect of a controlled bit-flip channel and a unitary evolution on $\system{S}$; as depicted in the following circuit diagram in Figure~\ref{fig:QGEC}, where $U^{\bra{s}}$ is a Stinespring representation of $\Phi^{\mathrm{CBF}}$:
\[
U^{\bra{0}} \defeq\left[ \begin{smallmatrix}
\sqrt{1\!-\!\pgood}  & 0
& 0                          & -\sqrt{\pgood}  \\
0                            & \sqrt{1\!-\!\pgood}
&\sqrt{\pgood}      & 0 \\
0                            & -\sqrt{\pgood}
& \sqrt{1\!-\!\pgood}& 0 \\
\sqrt{\pgood}        & 0
& 0                          & \sqrt{1\!-\!\pgood}
\end{smallmatrix}\right],\ 
U^{\bra{1}} \defeq \left[ \begin{smallmatrix}
\sqrt{1\!-\!\pbad}  & 0
& 0                          & -\sqrt{\pbad}  \\
0                            & \sqrt{1\!-\!\pbad}
&\sqrt{\pbad}      & 0 \\
0                            & -\sqrt{\pbad}
& \sqrt{1\!-\!\pbad}& 0 \\
\sqrt{\pbad}        & 0
& 0                          & \sqrt{1\!-\!\pbad}
\end{smallmatrix}\right].
\]
\par
In Figure~\ref{fig:QGEC:plot:1}--\ref{fig:QGEC:plot:4}, we present some numerical information rate lower bounds estimated for a QGEC and a variant of a QGEC (as depicted in Figure~\ref{fig:QGEC}), equipped with ``trivial'' orthonormal ensemble and projective measurements.
Namely, the original channel in Figure~\ref{fig:QGEC:plot:1} and~\ref{fig:QGEC:plot:3} can be described by the CC-QSC
\begin{equation} \label{eq:qgec:qsc:1}
\operator{N}^{y|x}(\rho_\system{S}) = \tr_\system{B} \left( (V_\system{S}^\Herm V_\system{S} \tensor \braket{y}) \cdot \Phi^{\mathrm{CBF}}(\rho_\system{S} \tensor \braket{x}) \right),
\end{equation}
whereas Figure~\ref{fig:QGEC:plot:2} and~\ref{fig:QGEC:plot:4} is described by
\begin{equation} \label{eq:qgec:qsc:2}
\operator{N}^{y|x}(\rho_\system{S}) = \tr_\system{B} \left( (\tilde{V}_\system{S}^\Herm \tilde{V}_\system{S} \tensor
    \braket{y}) \cdot (\id \tensor \Phi^{\mathrm{CBF}})(\rho_\system{S} \tensor \braket{x}) \right),
\end{equation}
where $\{\bra{x}\}_{x\in\set{X}}$ and $\{\bra{y}\}_{y\in\set{Y}}$ are some orthonormal basis of $\hilbert_{\system{A}}$ and $\hilbert_{\system{B}}$, respectively.
In the latter case, the memory system $\system{S}$ is extended as $\hilbert_{\system{S}} = \hilbert_{\system{S}_{1}} \tensor \hilbert_{\system{S}_{0}}$.
More specifically, in~\eqref{eq:qgec:qsc:2}, $\rho_\system{S}$ and $\tilde{V}_\system{S}$ are operators on $\hilbert_{\system{S}}$, and $\Phi^{\mathrm{CBF}}$ acts on $\DensOp(\hilbert_{\system{S}_{0}}\tensor \hilbert_{\system{A}})$, and $\id$ is the identity map on $\system{S}_{1}$.
For both scenarios, the input processes are binary symmetric {i.i.d.}~processes, \ie, $Q^{(n)}(\vx_1^n)\defeq 2^{-n}$ for all $\vx_1^n\in\{0,1\}^n$.
The lower bounds in those figures were obtained by minimizing the difference function $\Delta_{W}^{(n)}$ defined in~\eqref{eq:delta} \wrt different classes of auxiliary channels (subject to certain time and threshold constraints).
For the case where the auxiliary channels are CC-QSCs, Algorithm~\ref{alg:grad:1} was applied.
For FSMC auxiliary channels, we implemented the expectation-maximization type algorithm in~\cite{sadeghi2009optimization} for comparison.
As already emphasized beforehand, these lower bounds represent rates that are achievable with the help of a mismatched decoder~\cite{ganti2000mismatched}.
Figure~\ref{fig:QGEC:plot:5} is an example illustrating the typical convergence time of different methods (including our own) for minimizing the difference function.
In all of the above figures, $n=10,000$, and we have used Algorithm~\ref{alg:SPA:2} to \emph{estimate} the information rate.
According to our experience, the error of the estimation in this case lies within the line-width in the figures.
\begin{figure}
\begin{tikzpicture}[
	every axis/.append style={font=\footnotesize},
	every mark/.append style={scale=1.5}]
\begin{axis}[
	xlabel={$\pbad$},
	ylabel={bits per channel use},
	ylabel shift=-4 pt,
	xmin=0, xmax=1, ymin=0,  ymax=0.9,
	legend style={font=\scriptsize}, 
	mark repeat={4}, 
	width=\columnwidth, height=.618\columnwidth,
	ytick={0.1,0.2,0.3,0.4,0.5,0.6,0.7,0.8},
	extra y ticks={0,0.9},
	extra y tick style={grid=none},
	y tick label style={
		/pgf/number format/fixed,
		/pgf/number format/fixed zerofill,
		/pgf/number format/precision=1},
	xtick={0.2,0.4,0.6,0.8},
	extra x ticks={0.0,1.0},
	extra x tick style={grid=none},
	x tick label style={
		/pgf/number format/fixed,
		/pgf/number format/fixed zerofill,
		/pgf/number format/precision=1},
	grid=both, grid style={line width=0.1pt, dashed},
	legend cell align={left}]
	\addplot[dashed, thick, black] coordinates {
		(0.00, 0.839601745174) (0.05, 0.714120015688)
		(0.10, 0.618261781187) (0.15, 0.538893269110)
		(0.20, 0.470418506076) (0.25, 0.411550042780)
		(0.30, 0.361325752184) (0.35, 0.315858686126)
		(0.40, 0.276693640181) (0.45, 0.242634877108)
		(0.50, 0.213364467275) (0.55, 0.187646703615)
		(0.60, 0.166111113269) (0.65, 0.147285664160)
		(0.70, 0.130630670279) (0.75, 0.115773844628)
		(0.80, 0.101787384142) (0.85, 0.087948744743)
		(0.90, 0.073629956869) (0.95, 0.055286167568)
		(1.00, 0.030194299618)};
	\addlegendentry{Estimated IR}
	\addplot[solid, red, dashed] coordinates {
		(0.00, 0.833686348679) (0.05, 0.711496512010)
		(0.10, 0.622781653924) (0.15, 0.531787273503)
		(0.20, 0.456691524228) (0.25, 0.394254900201)
		(0.30, 0.353083453813) (0.35, 0.306942832207)
		(0.40, 0.268742612068) (0.45, 0.234172015381)
		(0.50, 0.197625745231) (0.55, 0.179086497776)
		(0.60, 0.154340415942) (0.65, 0.135508136380)
		(0.70, 0.118932943912) (0.75, 0.108928054432)
		(0.80, 0.097114178557) (0.85, 0.082541871698)
		(0.90, 0.067974760037) (0.95, 0.055364118464)
		(1.00, 0.030712333955)};
	\addlegendentry{EM-type Algorithm~[SVS09] with 4-state FSMC}
	\addplot[solid, blue, dashed] coordinates {
		(0.00, 0.831741905571) (0.05, 0.712427595410)
		(0.10, 0.613164794327) (0.15, 0.533958257063)
		(0.20, 0.457099591967) (0.25, 0.396516510418)
		(0.30, 0.333931200180) (0.35, 0.276208896384)
		(0.40, 0.231405341707) (0.45, 0.193564225477)
		(0.50, 0.156714715230) (0.55, 0.126192277424)
		(0.60, 0.094889516489) (0.65, 0.075748267428)
		(0.70, 0.057208456686) (0.75, 0.040925933725)
		(0.80, 0.030531953773) (0.85, 0.020645659910)
		(0.90, 0.015760096465) (0.95, 0.009658573661)
		(1.00, 0.008579667796)};
	\addlegendentry{EM-type Algorithm~[SVS09] with 2-state FSMC}
	\addplot[solid, mark = diamond, blue] coordinates {
		(0.00, 0.828760809658) (0.05, 0.713985929442)
		(0.10, 0.615994378803) (0.15, 0.528854930308)
		(0.20, 0.459058469386) (0.25, 0.389432982447)
		(0.30, 0.341934887183) (0.35, 0.307475093096)
		(0.40, 0.282804615465) (0.45, 0.244481977767)
		(0.50, 0.215103229847) (0.55, 0.190116309030)
		(0.60, 0.168708522517) (0.65, 0.146737015048)
		(0.70, 0.132491368761) (0.75, 0.116971911450)
		(0.80, 0.099518348532) (0.85, 0.086967374828)
		(0.90, 0.071632787144) (0.95, 0.051482497374)
		(1.00, 0.030962259381) };
	\addlegendentry{Algorithm~4.3 with 1-qubit QSC}
\end{axis}
\end{tikzpicture}
\caption{Quantum Gilbert–Elliott channel: $\pgood = 0.05$ is fixed; $\pbad$ varies from $0$ to $1$; $V_\system{S} = \exp(-j \alpha H)$, where $H$ is some fixed $2$-by-$2$ Hermitian matrix and where $\alpha = 1$ is fixed; $n=10^5$.}
\label{fig:QGEC:plot:1}
\end{figure}
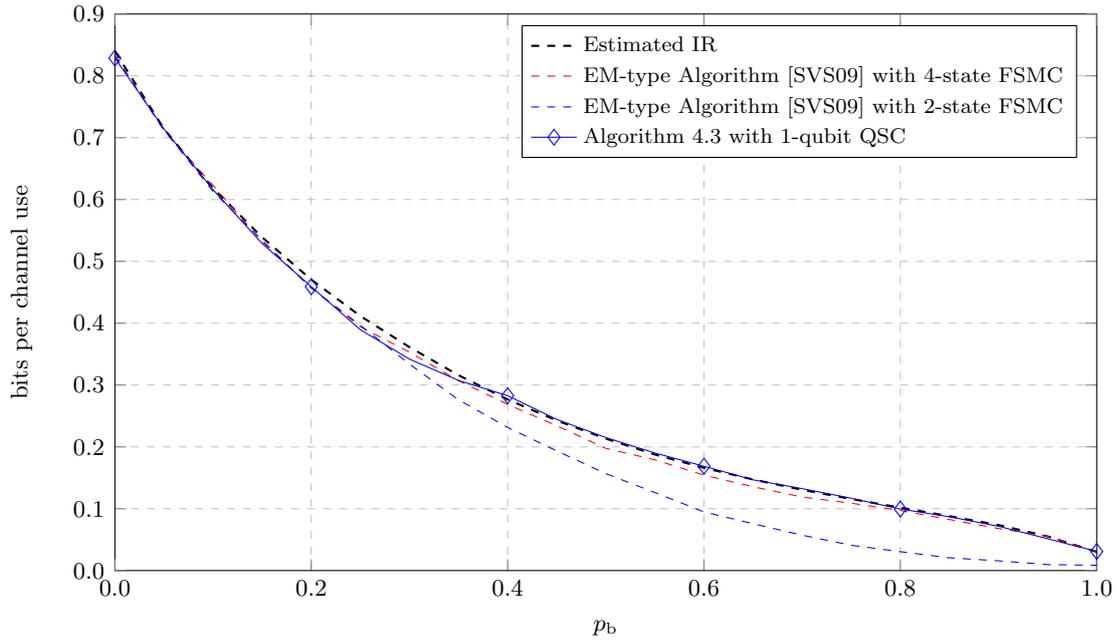
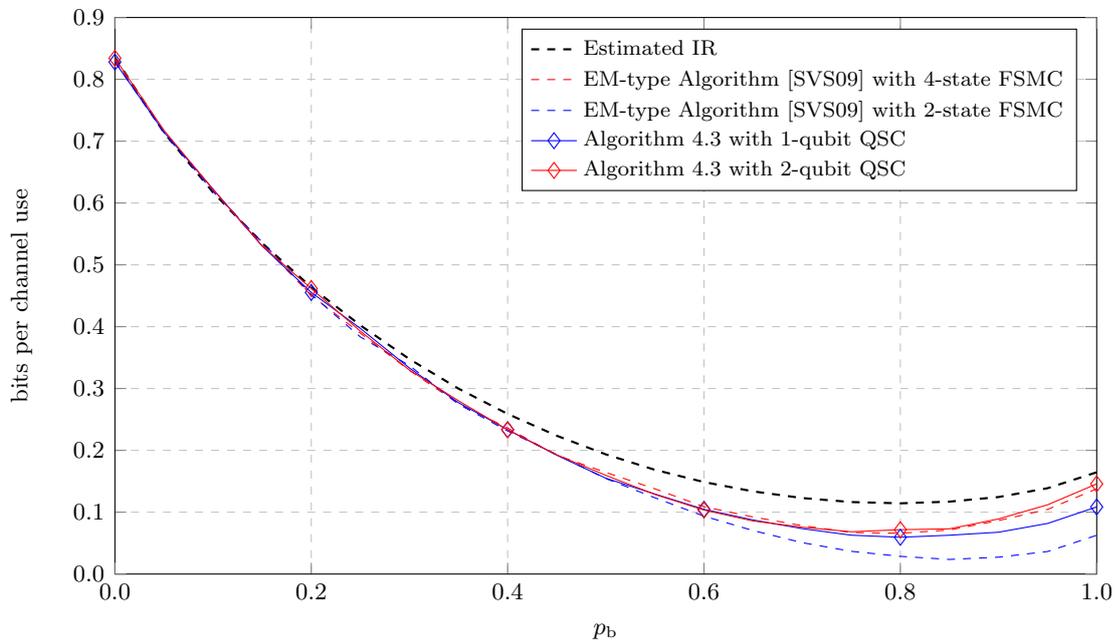
\begin{figure}
\begin{tikzpicture}[
	every axis/.append style={font=\footnotesize},
	every mark/.append style={scale=1.5}]
\begin{axis}[
	xlabel={$\pbad$}, 
	ylabel={bits per channel use},
	ylabel shift=-4 pt,
	xmin=0, xmax=1, ymin=0,  ymax=0.9,
	legend style={font=\scriptsize}, 
	mark repeat={4}, 
	width=\columnwidth, height=.618\columnwidth,
	y tick label style={
		/pgf/number format/fixed,
		/pgf/number format/fixed zerofill,
		/pgf/number format/precision=1},
	ytick={0.1,0.2,0.3,0.4,0.5,0.6,0.7,0.8},
	extra y ticks={0,0.9},
	extra y tick style={grid=none},
	xtick={0.2,0.4,0.6,0.8},
	extra x ticks={0.0,1.0},
	extra x tick style={grid=none},
	x tick label style={
		/pgf/number format/fixed,
		/pgf/number format/fixed zerofill,
		/pgf/number format/precision=1},
	grid=both, grid style={line width=0.1pt, dashed},
	legend cell align={left}]
	\addplot[dashed, black, thick] coordinates {
		(0.00, 0.835334547072) (0.05, 0.713817138457)
		(0.10, 0.616995721117) (0.15, 0.535170417981)
		(0.20, 0.463871511452) (0.25, 0.402218381517)
		(0.30, 0.347600395229) (0.35, 0.299817809855)
		(0.40, 0.258942093992) (0.45, 0.223665276283)
		(0.50, 0.193521758116) (0.55, 0.168561414250)
		(0.60, 0.148733205093) (0.65, 0.133686946797)
		(0.70, 0.122912164107) (0.75, 0.116431195033)
		(0.80, 0.114371410381) (0.85, 0.117102579296)
		(0.90, 0.124549747387) (0.95, 0.138826478090)
		(1.00, 0.164582211924)};
	\addlegendentry{Estimated IR}
	\addplot[solid, red, dashed] coordinates {
		(0.00, 0.831517212717) (0.05, 0.714764648111)
		(0.10, 0.620863015847) (0.15, 0.532861848575)
		(0.20, 0.450692752931) (0.25, 0.389546432757)
		(0.30, 0.329290008573) (0.35, 0.275553175585)
		(0.40, 0.236190228422) (0.45, 0.192090727923)
		(0.50, 0.164389649731) (0.55, 0.137826355554)
		(0.60, 0.108796192389) (0.65, 0.092290544995)
		(0.70, 0.078342868664) (0.75, 0.067132930525)
		(0.80, 0.065980300478) (0.85, 0.071230799275)
		(0.90, 0.086533592418) (0.95, 0.104559960603)
		(1.00, 0.137889776752)};
	\addlegendentry{EM-type Algorithm~[SVS09] with 4-state FSMC}
	\addplot[solid, blue, dashed] coordinates {
		(0.00, 0.828833977289) (0.05, 0.713401902069)
		(0.10, 0.618383090362) (0.15, 0.536796068619)
		(0.20, 0.452103146643) (0.25, 0.383414537377)
		(0.30, 0.337204267691) (0.35, 0.274681469889)
		(0.40, 0.231386008521) (0.45, 0.193295725224)
		(0.50, 0.154478319411) (0.55, 0.123336899762)
		(0.60, 0.093653236031) (0.65, 0.070335085534)
		(0.70, 0.050775825078) (0.75, 0.036778413585)
		(0.80, 0.028658913432) (0.85, 0.023617608576)
		(0.90, 0.027313636512) (0.95, 0.036456654565)
		(1.00, 0.062739939978)};
	\addlegendentry{EM-type Algorithm~[SVS09] with 2-state FSMC}
	\addplot[solid, mark = diamond, blue] coordinates {
		(0.00, 0.827985156616) (0.05, 0.714668236552)
		(0.10, 0.622321223551) (0.15, 0.530413033978)
		(0.20, 0.455413779164) (0.25, 0.397241693779)
		(0.30, 0.333206279519) (0.35, 0.277076762749)
		(0.40, 0.233492782705) (0.45, 0.192237937528)
		(0.50, 0.154956559171) (0.55, 0.129387917312)
		(0.60, 0.104722394072) (0.65, 0.087434544657)
		(0.70, 0.073503966393) (0.75, 0.062853525376)
		(0.80, 0.059483982086) (0.85, 0.062788063958)
		(0.90, 0.067646229180) (0.95, 0.081847657751)
		(1.00, 0.108539715332)};
	\addlegendentry{Algorithm~4.3 with 1-qubit QSC}
	\addplot[solid, mark = diamond, red] coordinates {
		(0.00, 0.834042100608) (0.05, 0.717419247603)
		(0.10, 0.621893971509) (0.15, 0.530616289996)
		(0.20, 0.461705781004) (0.25, 0.392711338005)
		(0.30, 0.329821195930) (0.35, 0.280192615343)
		(0.40, 0.233190489326) (0.45, 0.192814793929)
		(0.50, 0.159635194920) (0.55, 0.128742999094)
		(0.60, 0.103480124610) (0.65, 0.085811168978)
		(0.70, 0.075800104426) (0.75, 0.068367278168)
		(0.80, 0.071753040441) (0.85, 0.073104108032)
		(0.90, 0.089161880328) (0.95, 0.112025283761)
		(1.00, 0.145801283377)};
	\addlegendentry{Algorithm~4.3 with 2-qubit QSC}
\end{axis}
\end{tikzpicture}
\caption{Variant of the quantum Gilbert–Elliott channel described in the RHS of Figure~\ref{fig:QGEC}.
Parameters: $\pgood = 0.05$; $\pbad\in[0,1]$; $\tilde{V}_\system{S} = \exp(-j\alpha H)$, where $H$ is some fixed $4$-by-$4$ Hermitian matrix and where $\alpha = 1$ is fixed; $n = 10^5$.}
\label{fig:QGEC:plot:2}
\end{figure}
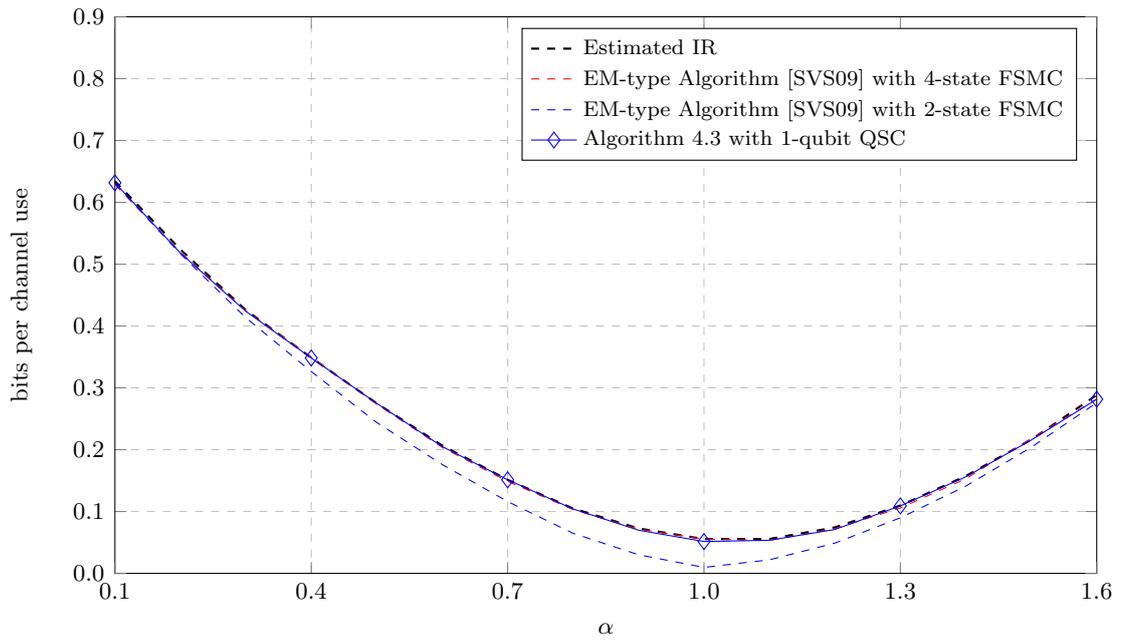
\begin{figure}
\begin{tikzpicture}[
	every axis/.append style={font=\footnotesize},
	every mark/.append style={scale=1.5}]
\begin{axis}[xlabel={$\alpha$}, 
	ylabel={bits per channel use},
	ylabel shift=-4 pt,
	xmin=0.1, xmax=1.6, ymin=0,  ymax=0.9,
	legend style={font=\scriptsize}, 
	mark repeat={3}, 
	width=\columnwidth, height=.618\columnwidth,
	y tick label style={
		/pgf/number format/fixed,
		/pgf/number format/fixed zerofill,
		/pgf/number format/precision=1},
	ytick={0.1,0.2,0.3,0.4,0.5,0.6,0.7,0.8},
	extra y ticks={0,0.9},
	extra y tick style={grid=none},
	xtick={0.4,0.7,1.0,1.3},
	extra x ticks={0.1,1.6},
	extra x tick style={grid=none},
	x tick label style={
		/pgf/number format/fixed,
		/pgf/number format/fixed zerofill,
		/pgf/number format/precision=1},
	grid=both, grid style={line width=0.1pt, dashed},
	legend cell align={left}]
	\addplot[dashed, black, thick] coordinates {
		(0.1, 0.634246436353) (0.2, 0.523724519704)
		(0.3, 0.425843507895) (0.4, 0.348767614102)
		(0.5, 0.274949201240) (0.6, 0.206807177630)
		(0.7, 0.151334883495) (0.8, 0.104972113295)
		(0.9, 0.072702348555) (1.0, 0.055474107518)
		(1.1, 0.055705182488) (1.2, 0.074364078064)
		(1.3, 0.109701367088) (1.4, 0.157408739672)
		(1.5, 0.216248463821) (1.6, 0.288383375526)};
	\addlegendentry{Estimated IR}
	\addplot[solid, red, dashed] coordinates {
		(0.1, 0.628757790182) (0.2, 0.519154991495)
		(0.3, 0.424800128488) (0.4, 0.349505951177)
		(0.5, 0.273691145329) (0.6, 0.203689311260)
		(0.7, 0.148728055607) (0.8, 0.103202914342)
		(0.9, 0.071644182670) (1.0, 0.055364118464)
		(1.1, 0.053273311931) (1.2, 0.072222309436)
		(1.3, 0.105810164407) (1.4, 0.153058635133)
		(1.5, 0.218322703723) (1.6, 0.281969975885)};
	\addlegendentry{EM-type Algorithm~[SVS09] with 4-state FSMC}
	\addplot[solid, blue, dashed] coordinates {
		(0.1, 0.630537416883) (0.2, 0.516839131244)
		(0.3, 0.413357758219) (0.4, 0.326340217404)
		(0.5, 0.244019998069) (0.6, 0.176033806634)
		(0.7, 0.116161198638) (0.8, 0.064796428188)
		(0.9, 0.030125167285) (1.0, 0.009658573661)
		(1.1, 0.021889961949) (1.2, 0.048929762310)
		(1.3, 0.089895723650) (1.4, 0.141175364024)
		(1.5, 0.203941640222) (1.6, 0.275966557336)};
	\addlegendentry{EM-type Algorithm~[SVS09] with 2-state FSMC}
	\addplot[solid, mark = diamond, blue] coordinates {
		(0.100, 0.631573943140) (0.200, 0.518053760695)
		(0.300, 0.423761299113) (0.400, 0.348255291277)
		(0.500, 0.274996844752) (0.600, 0.205089306142)
		(0.700, 0.151321209016) (0.800, 0.104187042637)
		(0.900, 0.069819436449) (1.000, 0.051482497374)
		(1.100, 0.053512560728) (1.200, 0.070980855572)
		(1.300, 0.109099922099) (1.400, 0.156097904004)
		(1.500, 0.215592931815) (1.600, 0.281598938534) };
	\addlegendentry{Algorithm~4.3 with 1-qubit QSC}
\end{axis}
\end{tikzpicture}
\caption{Quantum Gilbert–Elliott channel: $\pgood = 0.05$ is fixed; $\pbad =0.95$ is fixed; $V_\system{S} = \exp(-j \alpha H)$, where $H$ is the same $2$-by-$2$ Hermitian matrix as in Figure~\ref{fig:QGEC:plot:1} and where $\alpha$ varies from $0.1$ to $+1.5$; $n = 10^5$.}
\label{fig:QGEC:plot:3}
\end{figure}
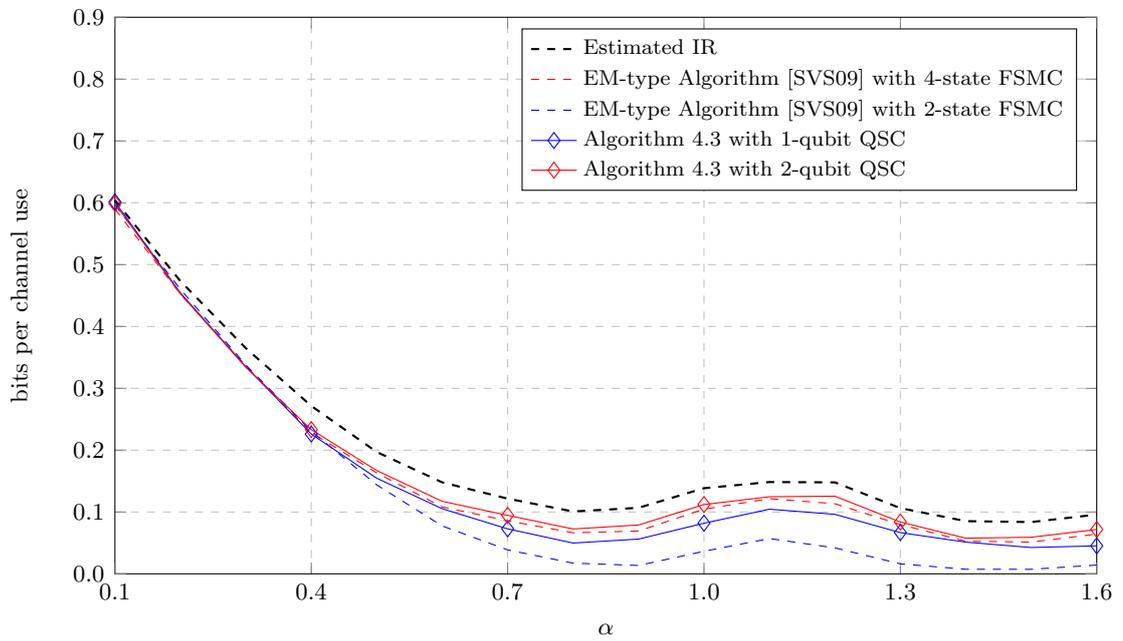
\begin{figure}
\begin{tikzpicture}[
	every axis/.append style={font=\footnotesize},
	every mark/.append style={scale=1.5}]
\begin{axis}[
	xlabel={$\alpha$}, 
	ylabel={bits per channel use},
	ylabel shift=-4 pt,
	xmin=0.1, xmax=1.6, ymin=0,  ymax=0.9,
	legend style={font=\scriptsize}, 
	mark repeat={3}, 
	width=\columnwidth, height=.618\columnwidth,
	y tick label style={
		/pgf/number format/fixed,
		/pgf/number format/fixed zerofill,
		/pgf/number format/precision=1},
	ytick={0.1,0.2,0.3,0.4,0.5,0.6,0.7,0.8},
	extra y ticks={0,0.9},
	extra y tick style={grid=none},
	xtick={0.4,0.7,1.0,1.3},
	extra x ticks={0.1,1.6},
	extra x tick style={grid=none},
	x tick label style={
		/pgf/number format/fixed,
		/pgf/number format/fixed zerofill,
		/pgf/number format/precision=1},
	grid=both, grid style={line width=0.1pt, dashed},
	legend cell align={left}]
	\addplot[dashed, black, thick] coordinates {
		(0.1, 0.604100063356) (0.2, 0.473401110905)
		(0.3, 0.364883790848) (0.4, 0.271535196296)
		(0.5, 0.196980525949) (0.6, 0.147892471850)
		(0.7, 0.121579631965) (0.8, 0.100800736650)
		(0.9, 0.106741942766) (1.0, 0.138425721984)
		(1.1, 0.148411159003) (1.2, 0.147873920267)
		(1.3, 0.106213495216) (1.4, 0.085179900573)
		(1.5, 0.083684790791) (1.6, 0.095801652946)};
	\addlegendentry{Estimated IR}
	\addplot[solid, red, dashed] coordinates {
		(0.1, 0.590030613376) (0.2, 0.452373967863)
		(0.3, 0.336164446674) (0.4, 0.228386828122)
		(0.5, 0.163171954985) (0.6, 0.108116007087)
		(0.7, 0.085836992514) (0.8, 0.066406308630)
		(0.9, 0.069213426994) (1.0, 0.104559960603)
		(1.1, 0.121506799903) (1.2, 0.113347442020)
		(1.3, 0.079694559896) (1.4, 0.052635475909)
		(1.5, 0.051345016335) (1.6, 0.064265969362)};
	\addlegendentry{EM-type Algorithm~[SVS09] with 4-state FSMC}
	\addplot[solid, blue, dashed] coordinates {
		(0.1, 0.596576347658) (0.2, 0.460288337401)
		(0.3, 0.337629300284) (0.4, 0.230703408080)
		(0.5, 0.143804485696) (0.6, 0.077428614807)
		(0.7, 0.038734579522) (0.8, 0.017294618485)
		(0.9, 0.013494816112) (1.0, 0.036456654565)
		(1.1, 0.057035136631) (1.2, 0.041939227789)
		(1.3, 0.016095530905) (1.4, 0.007397882922)
		(1.5, 0.007319105337) (1.6, 0.014187593031)};
	\addlegendentry{EM-type Algorithm~[SVS09] with 2-state FSMC}
	\addplot[solid, mark = diamond, blue] coordinates {
		(0.1, 0.601528340475) (0.2, 0.454463594618)
		(0.3, 0.335020097620) (0.4, 0.226052428000)
		(0.5, 0.154639172891) (0.6, 0.105263825537)
		(0.7, 0.072710389970) (0.8, 0.049759820290)
		(0.9, 0.056187338246) (1.0, 0.081847657751)
		(1.1, 0.104479497919) (1.2, 0.096208892501)
		(1.3, 0.066664735539) (1.4, 0.051219739998)
		(1.5, 0.042557259501) (1.6, 0.045233504158)};
	\addlegendentry{Algorithm~4.3 with 1-qubit QSC}
	\addplot[solid, mark = diamond, red] coordinates {
		(0.1, 0.598825525120) (0.2, 0.452793632504)
		(0.3, 0.333983755168) (0.4, 0.233252042999)
		(0.5, 0.167063892604) (0.6, 0.117258612189)
		(0.7, 0.094584955365) (0.8, 0.072521235449)
		(0.9, 0.079008564509) (1.0, 0.112025283761)
		(1.1, 0.124629280264) (1.2, 0.125535074107)
		(1.3, 0.083770542149) (1.4, 0.057645455501)
		(1.5, 0.059133475287) (1.6, 0.071769388212) };
	\addlegendentry{Algorithm~4.3 with 2-qubit QSC}
\end{axis}
\end{tikzpicture}
\caption{Same variant of the quantum Gilbert–Elliott channel as in Figure~\ref{fig:QGEC:plot:2} with different parameters: $\pgood = 0.05$; $\pbad = 0.95$; $V_\system{S} = \exp(-j \alpha H)$, where $H$ is the same $4$-by-$4$ Hermitian matrix as in Figure~\ref{fig:QGEC:plot:2} and where $\alpha$ varies from $0.1$ to $+1.5$; $n = 10^5$.}
\label{fig:QGEC:plot:4}
\end{figure}
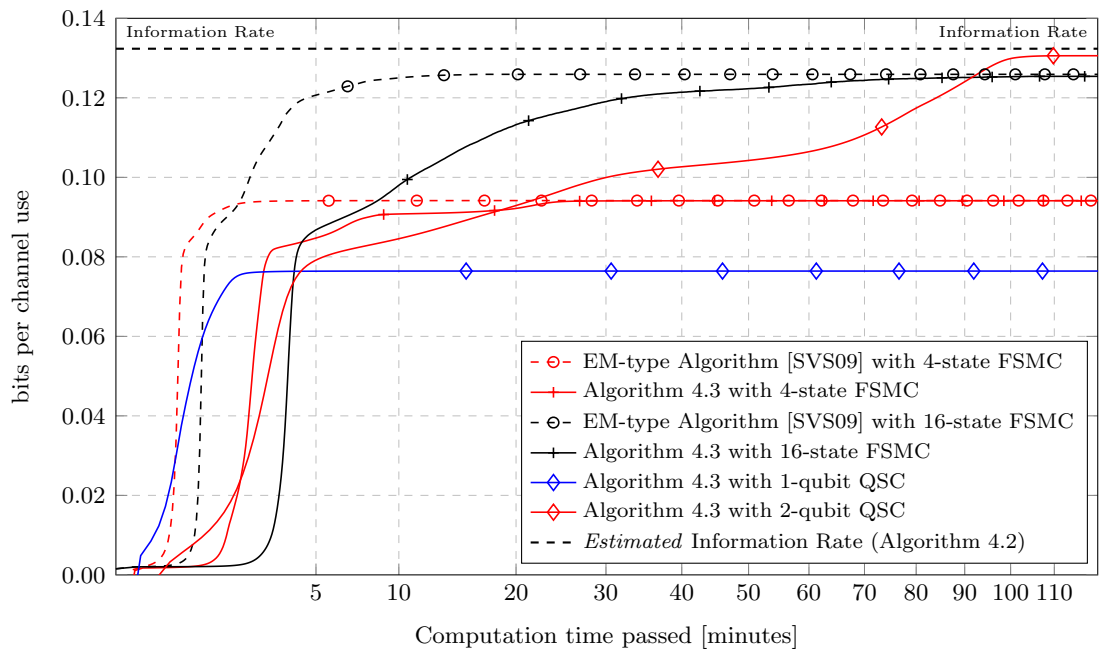
\begin{figure}
\begin{tikzpicture}[
	every axis/.append style={font=\footnotesize},
	every mark/.append style={scale=1.5}]
\begin{axis}[
	xlabel={Computation time passed [minutes]}, 
	ylabel={bits per channel use},
	ylabel shift=-4 pt,
	xmin=0, xmax=85, ymin=0,  ymax=0.14,
	legend style={font=\scriptsize},
	mark repeat=200, 
	mark phase = 200,
	width=\columnwidth, height=.618\columnwidth,
	y tick label style={
		/pgf/number format/fixed,
		/pgf/number format/fixed zerofill,
		/pgf/number format/precision=2},
	ytick={0.02,0.04,0.06,0.08,0.1,0.12},,
	extra y ticks={0,0.14},
	extra y tick style={grid=none},
	xticklabels={5,10,20,30,40,50,60,70,80,90,100,110},
	xtick={17.3205,24.4949,34.6410,42.4264,48.9898,54.7723,60.0000,64.8074,69.2820,73.4847,77.4597,81.2404},
	x tick label style={
		/pgf/number format/fixed,
		/pgf/number format/fixed zerofill,
		/pgf/number format/precision=0},
	grid=both, grid style={line width=0.1pt, dashed},
	legend cell align={left},
	legend style={at={(0.99,0.02)},anchor=south east},
	every axis plot/.append style={semithick}]
	\addplot[red, dashed, mark = o, mark options={solid}]
		table[x expr=((\thisrow{step}-1)*0.848424200)^(0.5),y=IR_L]
		{1155001913-cao_michael_xuan-202104-phd-Data_Performance_Gap_cem_4.txt};
	\addlegendentry{EM-type Algorithm~[SVS09] with 4-state FSMC}
	\addplot[red, solid, mark = +, mark options={solid}]
		table[x expr=((\thisrow{step}-1)*1.343300420)^(0.5),y=IR_L]
		{1155001913-cao_michael_xuan-202104-phd-Data_Performance_Gap_cgd_4.txt};
	\addlegendentry{Algorithm~4.3 with 4-state FSMC}
	\addplot[black, dashed, mark = o, mark options={solid}]
		table[x expr=((\thisrow{step}-1)*1.010961860)^(0.5),y=IR_L]
		{1155001913-cao_michael_xuan-202104-phd-Data_Performance_Gap_cem_16.txt};
	\addlegendentry{EM-type Algorithm~[SVS09] with 16-state FSMC}
	\addplot[black, solid, mark = +, mark options={solid}]
		table[x expr=((\thisrow{step}-1)*1.599697330)^(0.5),y=IR_L]
		{1155001913-cao_michael_xuan-202104-phd-Data_Performance_Gap_cgd_16.txt};
	\addlegendentry{Algorithm~4.3 with 16-state FSMC}
	\addplot[blue, mark=diamond]
		table[x expr=((\thisrow{step}-1)*2.298483867)^(0.5),y=IR_L]
		{1155001913-cao_michael_xuan-202104-phd-Data_Performance_Gap_qgd_1.txt};
	\addlegendentry{Algorithm~4.3 with 1-qubit QSC}
	\addplot[red, mark=diamond]
		table[x expr=((\thisrow{step}-1)*5.480697869)^(0.5),y=IR_L]
		{1155001913-cao_michael_xuan-202104-phd-Data_Performance_Gap_qgd_2.txt};
	\addlegendentry{Algorithm~4.3 with 2-qubit QSC}
	\addplot[dashed,black,thick] coordinates{(0000,0.132352)(90,0.132352)};
	\addlegendentry{\emph{Estimated} Information Rate (Algorithm~4.2)};
	\node[anchor=north west, font=\tiny, yshift=1pt] at (rel axis cs:0,1) {Information Rate};
	\node[anchor=north east, font=\tiny, yshift=1pt] at (rel axis cs:1,1) {Information Rate};
\end{axis}
\end{tikzpicture}
\caption{Minimizing the difference function $\Delta_{W}^{(n)}$ using different methods.
	The markers appear after every 400 updates.}
\label{fig:QGEC:plot:5}
\end{figure}
\chapter*{Summary and Outlook\markboth{SUMMARY AND OUTLOOK}{}}
\addcontentsline{toc}{chapter}{Summary and Outlook}
Chapter~\ref{chapter:DeFGs} proposed a generalized factor graph model called double-edge factor graphs (DeFGs) for representing quantum systems.
Compared with the factor graph for quantum probabilities~\cite{loeliger2012factor, loeliger2017factor}, DeFGs provide a neat generalization of the ``closing-the-box'' operations and is also more generic.
We also proposed generalized versions of the belief-propagation (BP) algorithms for DeFGs and derived the loop calculus expansion for DeFGs (see~\eqref{eq:DeFG:lc}) at BP fixed points.
However, it is still unclear how the induced partition sum $Z_\mathsf{induced}(\{m_{j \to a}, m_{a \to j}\}_{(j,a)\in\set{E}})$ at BP fixed points can be related to the actual partition sum $Z(\set{G})$.
In comparison, for factor graphs, it was shown that the induced partition sum $Z_\mathsf{induced}$ coincides with the Bethe partition sum $Z_{\mathsf{B}}$ at BP fixed points (see Corollary~\ref{cor:Zb:decompose}), and how well the latter estimates $Z(\set{G})$ depends on how $\bethe$ estimates $\gibbs$.
Nevertheless, we studied several numerical examples.
We observed a promising performance of BP algorithms for DeFGs.
\par
Chapter~\ref{chapter:QFGs} studied a graphical model called quantum factor graphs.
In particular, we investigated how ``closing-the-box'' operations behave when all the local operators are proportionally close to identity operators.
Under such a setup, the ``closing-the-box'' operations provide an \emph{approximation} to the partition sum of a QFG, as we found that the distributivity of the $\star$-product over the (partial) trace operations holds approximately.
We also introduced BP algorithms for QFGs as a natural generalization of the ``closing-the-box'' operations.
The algorithm coincided with the quantum belief-propagation algorithm for bifactor networks~\cite{Leifer08}.
We were able to generalize Bethe's approximation to QFGs and show that the positive BP fixed points \emph{approximately} correspond to the quantum Bethe free energy's interior stationary points.
\par
Chapter~\ref{chapter:QCwM} considered the scenario of transmitting classical information over a quantum channel with finite memory using separable-state ensembles and local measurements.
We defined the classical-input classical-output quantum-state channel (CC-QSC) as an equivalent way to describe such communication setups and demonstrated how normal factor graphs could be used to visualize such channels.
We showed that a CC-QSC's information rate is independent of the initial density operator under suitable conditions and proposed algorithms for estimating and bounding such information rate.
The computations in such algorithms can be carried out using the corresponding factor graphs of the CC-QSC.
We emphasize that our approach for optimizing the lower bound is data-driven and does not require the knowledge of the actual channel model.
\par
Despite the obstacles we have encountered in analyzing the corresponding generalized versions of BP algorithms in Chapter~\ref{chapter:DeFGs} and  Chapter~\ref{chapter:QFGs}, we have observed promising numerical results in a number of examples.
This suggests potential connections between the generalized BP-fixed points and the corresponding generalized partition sums of DeFGs and QFGs, respectively, yet to be discovered.
One of the directions is to generalize the method of graph covers to these new factor graphs.
(For some initial results, see~\cite{huang2020characterizing}.)
Another is to consider the regional method as in~\cite{yedidia2005constructing}.
In some earlier studies of (classical) factor graphs~\cite{mori2015loop}, some methods from information geometry have been proven useful in analyzing the loop calculus expansion at the fixed points.
Though it is currently unclear whether (or how much) quantum information geometry can be helpful in analyzing the loop calculus expansion for DeFGs, it is still an interesting direction to look into.
\begin{appendices}
\chapter[Alternative Distributivity]{Alternative Approximated Distributivity of $\star$ over (Partial) Trace} \label{app:exp:approx}
In this appendix, we discuss how $\tr\left( \rho_\system{A}\star \sigma_{AB} \right)$ and $\tr_\system{B}\left( \rho_\system{A}\star \sigma_{AB} \right)$ can be approximated by $\tr_\system{A} \left(\rho_\system{A}\star\tr_\system{B}(\sigma_{AB}) \right)$ and $\rho_\system{A}\star\tr_\system{B}(\sigma_{AB})$, respectively, when the operators $\rho_\system{A}$ and $\sigma_\system{AB}$ are proportionally close to identity operator in a nonlinear manner, \ie, 
\begin{align}
	\rho_\system{A} &\propto \exp\left(t\cdot X\right)\\
	\sigma_\system{AB} &\propto \exp\left(t\cdot Y\right)
\end{align}
for some Hermitian matrices $X$ and $Y$ (each of proper size) and some small $t>0$.
We present some results similar to Theorem~\ref{thm:approx:distri} and Proposition~\ref{prop:approx:distri}.
\begin{theorem}\label{thm:approx:distri:exp}
Consider finite-dimensional Hilbert spaces $\hilbert_\system{A}$ and $\hilbert_\system{B}$.
Given $X \in \HermitianOp(\hilbert_\system{A})$, and $Y \in \HermitianOp(\hilbert_\system{A} \tensor \hilbert_\system{B})$,
it holds that
\begin{equation}\label{eq:approx:distri:exp}
\tr\left( e^{tX} \star e^{tY} \right) = \tr_\system{A} \left( e^{tX} \star \tr_\system{B}(e^{tY}) \right) + O(t^4),
\end{equation}
where the real number $t>0$ is in some neighborhood of $0$.
\end{theorem}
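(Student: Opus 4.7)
The plan is to reduce this statement to Theorem~\ref{thm:approx:distri} by rewriting the exponentials as operators that are ``linearly close'' to the identity, with a parameter-dependent perturbation. Specifically, for Hermitian $X$, the matrix exponential admits the power-series expansion
\[
e^{tX} = I + t\cdot\tilde{X}(t),\qquad \tilde{X}(t) \defeq X + \frac{t}{2}X^2 + \frac{t^2}{6}X^3 + \cdots,
\]
and for all $t$ in a sufficiently small neighborhood of $0$, the operator $\tilde{X}(t)$ is Hermitian (as a convergent power series in $t$ with Hermitian coefficients) and $I + t\tilde{X}(t)$ is positive definite. Analogously, $e^{tY} = I + t\tilde{Y}(t)$ with $\tilde{Y}(t)\in\HermitianOp(\hilbert_\system{A}\tensor\hilbert_\system{B})$.

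Next, I would apply Theorem~\ref{thm:approx:distri} with $X$ and $Y$ replaced by $\tilde{X}(t)$ and $\tilde{Y}(t)$, respectively. This yields
\[
\tr\bigl((I+t\tilde{X}(t))\star(I+t\tilde{Y}(t))\bigr) = \tr_\system{A}\bigl((I+t\tilde{X}(t))\star\tr_\system{B}(I+t\tilde{Y}(t))\bigr) + R(t),
\]
where $R(t)$ is the corresponding fourth-order remainder. Substituting back the definitions of $\tilde{X}(t)$ and $\tilde{Y}(t)$ recovers $e^{tX}$ on the left-hand inputs and $e^{tY}$ on the right-hand inputs, so the two sides of the identity become exactly $\tr(e^{tX}\star e^{tY})$ and $\tr_\system{A}(e^{tX}\star\tr_\system{B}(e^{tY}))$. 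The remaining task is then to verify that $R(t)=O(t^4)$.

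The main delicate step is controlling $R(t)$, because the proof of Theorem~\ref{thm:approx:distri} exhibits a $t^4$-coefficient that depends explicitly on $X$ and $Y$ (and, implicitly, on higher-order commutators). Here $\tilde{X}(t) = X + O(t)$ and $\tilde{Y}(t) = Y + O(t)$ in operator norm, so substituting into the closed-form coefficient from the proof of Theorem~\ref{thm:approx:distri} gives a leading term identical to the one produced by $X$ and $Y$ alone, plus a correction of order $O(t)\cdot t^4 = O(t^5)$. Combined with the $O(t^5)$ tail already present in that proof, the total remainder remains $O(t^4)$. I expect this bookkeeping—tracking how higher-order terms in $\tilde{X}(t),\tilde{Y}(t)$ propagate into the Taylor remainder—to be the principal obstacle; it is routine but must be done carefully since the substitution is into a $t$-dependent expression. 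A completely parallel argument, reusing Proposition~\ref{prop:approx:distri} in place of Theorem~\ref{thm:approx:distri}, would give the analogous partial-trace approximation $\tr_\system{B}(e^{tX}\star e^{tY}) = e^{tX}\star\tr_\system{B}(e^{tY}) + O(t^3)$.
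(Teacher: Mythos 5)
Your reduction to Theorem~\ref{thm:approx:distri} via $e^{tX} = I + t\,\tilde X(t)$ with $\tilde X(t) = (e^{tX}-I)/t$ is a genuinely different route from the paper, which simply redoes the Taylor expansion from scratch (using the simplification $e^{tX}\star e^{tY} = \exp\bigl(t(X\tensor I + Y)\bigr)$ to collapse the left-hand side into a single exponential, and then comparing coefficients through $t^3$). Your approach is sound, but the step you flag as ``routine but must be done carefully'' is in fact the whole content of the reduction and deserves to be stated precisely: Theorem~\ref{thm:approx:distri} gives a remainder $O(s^4)$ whose implied constant depends on the fixed operators $X,Y$, so one cannot simply substitute $X\leftarrow\tilde X(t)$, $Y\leftarrow\tilde Y(t)$, $s\leftarrow t$ and conclude $O(t^4)$ without arguing that the constant is \emph{locally uniform} in the operator arguments. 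The clean way to say it is that the map $(s,A,B)\mapsto \tr\bigl((I+sA)\star(I+sB)\bigr) - \tr_\system{A}\bigl((I+sA)\star\tr_\system{B}(I+sB)\bigr)$ is jointly analytic near $(0,X,Y)$ and vanishes to order $4$ in $s$ for each fixed $(A,B)$, hence factors as $s^4\,h(s,A,B)$ with $h$ continuous; since $\tilde X(t),\tilde Y(t)$ are Hermitian power series in $t$ converging to $X,Y$, evaluating at $(t,\tilde X(t),\tilde Y(t))$ yields $t^4\cdot h(t,\tilde X(t),\tilde Y(t)) = O(t^4)$. With that made explicit, the proof stands; the payoff of your route is that it reuses the already-computed $t^4$-coefficient from Theorem~\ref{thm:approx:distri}, whereas the paper's direct expansion avoids the uniformity issue entirely at the cost of repeating the coefficient check.
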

\begin{proof}
We use similar techniques in the proof of Theorem~\ref{thm:approx:distri} in proving this theorem.
Using the notation of normalized trace functions, we rewrite~\eqref{eq:approx:distri:exp} as 
\begin{equation}\label{eq:approx:distri:exp:normalized}
\overline{\tr}\left( e^{tX} \star e^{tY} \right) = \overline{\tr}_\system{A} \left( e^{tX} \star \overline{\tr}_\system{B}(e^{tY}) \right) + O(t^4).
\end{equation}
Let $\tilde{X}\defeq X\tensor I_\system{B}$.
The Taylor series expansion of each side of~\eqref{eq:approx:distri:exp:normalized} (without $O(t^4)$) can be expressed as
\begin{align}
\label{eq:taylor:LHS:ADT:exp}
\text{LHS}
&=\begin{aligned}[t]
1 &+ t \cdot \overline{\tr}(\tilde{X}+Y) + \frac{1}{2}t^2 \cdot \overline{\tr}(\tilde{X}+Y)^2 \\
&+ \frac{1}{3!} t^3 \cdot \overline{\tr}(\tilde{X}+Y)^3 + \frac{1}{4!} t^4 \cdot \overline{\tr}(\tilde{X}+Y)^4 + O(t^5),
\end{aligned}\\
\label{eq:taylor:RHS:ADT:exp}
\text{RHS}
&=\begin{aligned}[t]
1 &+ t \cdot \overline{\tr}_\system{A} \left( X + \overline{\tr}_\system{B}(Y) \right) + \frac{t^2}{2} \cdot \overline{\tr}_\system{A} \left( X^2 + X\overline{\tr}_2(Y) + \overline{\tr}_2(Y)X + \overline{\tr}_\system{B}(Y^2) \right) \\
&+ \frac{t^3}{6} \cdot \overline{\tr}_\system{A}\left( X^3 + 3\cdot X^2 \cdot \overline{\tr}_2(Y) + 3 \cdot X \cdot \overline{\tr}_\system{B}(Y^2) + \overline{\tr}_2(Y^3) \right) + O(t^4).
\end{aligned}\!\!\!
\end{align}
Note that~\eqref{eq:taylor:LHS:ADT:exp} and~\eqref{eq:taylor:RHS:ADT:exp} agree up to $t^3$, which concludes the proof.
\end{proof}
We can also show the following proposition in a similar way.
\begin{proposition}\label{prop:approx:distri:exp}
Under the same setup as in Theorem~\ref{thm:approx:distri:exp}, it holds that
\begin{equation}\label{eq:approx:distri:prop:exp}
\tr_\system{B}\left( e^{tX}\star e^{tY} \right)  = e^{tX} \star \tr_\system{B}(e^{tY}) + O(t^3).
\end{equation}
\end{proposition}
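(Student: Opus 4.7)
The plan is to mimic the strategy used in the proof of Theorem~\ref{thm:approx:distri:exp}: namely, expand both sides of~\eqref{eq:approx:distri:prop:exp} as Taylor series in $t$ about $t=0$, and verify that the coefficients of $t^0$, $t^1$, and $t^2$ on the two sides agree. Since the $\star$-product and the (partial) trace behave continuously and the matrices involved are finite-dimensional, the remainder term is then automatically $O(t^3)$, which is exactly the claim.

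Using $\tilde X \defeq X \tensor I_\system{B}$ and the convention~\eqref{eq:blowupConvention}, the left-hand side becomes $\tr_\system{B}\bigl(\exp(t\tilde X + tY)\bigr)$. Its expansion up to second order is
\begin{equation*}
\tr_\system{B}(I) + t\bigl(\tr_\system{B}(\tilde X) + \tr_\system{B}(Y)\bigr) + \tfrac{t^2}{2}\tr_\system{B}\bigl((\tilde X + Y)^2\bigr) + O(t^3),
\end{equation*}
and using $\tr_\system{B}(\tilde X \cdot Z) = X \cdot \tr_\system{B}(Z)$ for any $Z$, the quadratic term simplifies to $\tr_\system{B}(I)\cdot X^2 + X\tr_\system{B}(Y) + \tr_\system{B}(Y)X + \tr_\system{B}(Y^2)$ (up to the factor $\tfrac{1}{2}$). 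The right-hand side is $\exp\bigl(tX + \log \tr_\system{B}(e^{tY})\bigr)$ (with the suitable tensor-with-identity embedding), and to second order I will need
\begin{equation*}
\log \tr_\system{B}(e^{tY}) = \log\!\bigl(\tr_\system{B}(I) \cdot I + t\tr_\system{B}(Y) + \tfrac{t^2}{2}\tr_\system{B}(Y^2) + O(t^3)\bigr),
\end{equation*}
which I will handle by pulling out $\tr_\system{B}(I)$ and using $\log(I+Z) = Z - \tfrac{Z^2}{2} + O(Z^3)$; the resulting $-\tfrac{t^2}{2}\,\tr_\system{B}(Y)^2/\tr_\system{B}(I)$ term (working with normalized traces $\overline{\tr}_\system{B}$ as in the proof of Theorem~\ref{thm:approx:distri:exp} keeps this cleaner). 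Summing with $tX$ and exponentiating via $e^W = I + W + \tfrac{W^2}{2} + O(W^3)$, the quadratic cross term from $\tfrac{1}{2}(X+\overline{\tr}_\system{B}(Y))^2$ produces exactly $\overline{\tr}_\system{B}(Y)^2$, which cancels the $-\overline{\tr}_\system{B}(Y)^2$ coming out of the logarithm, leaving the matching expression.

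First I would switch to normalized traces $\overline{\tr}_\system{B}$ (as was done for Theorem~\ref{thm:approx:distri:exp}); this absorbs the dimensional factor $\tr_\system{B}(I)$ and eliminates a layer of bookkeeping. Next I would compute the expansions of the two sides in parallel up to order $t^2$ and observe term-by-term cancellation, in particular the cancellation of the $\overline{\tr}_\system{B}(Y)^2$ terms coming from the logarithm and the square in the exponential. Finally, I would cite the analyticity of $\exp$ and $\log$ on a neighborhood of the identity, which justifies collecting all remaining terms into a single $O(t^3)$ remainder uniformly in the finite-dimensional operator norms of $X$ and $Y$.

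The computations are essentially routine once the normalized traces are in place; the only mild obstacle is the $\log$-of-a-partial-trace expansion, since $\overline{\tr}_\system{B}(e^{tY})$ is not of the form $e^{tZ}$ for any simple $Z$. However, the identity $\log(I+Z) = Z - \tfrac{Z^2}{2} + O(Z^3)$ applied to $Z = \overline{\tr}_\system{B}(e^{tY}) - I$ handles this cleanly, and the ``extra'' $-\tfrac{t^2}{2}\,\overline{\tr}_\system{B}(Y)^2$ term it introduces is precisely what cancels against the quadratic cross-term in the exponential on the right. No technique beyond what was used in the proof of Theorem~\ref{thm:approx:distri:exp} should be required.
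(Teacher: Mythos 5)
Your proposal is correct and takes essentially the same route as the paper's proof: pass to the normalized partial trace $\overline{\tr}_\system{B}$, expand both sides of~\eqref{eq:approx:distri:prop:exp} as Taylor series in $t$ to order $t^2$, and verify term-by-term agreement, with the $\overline{\tr}_\system{B}(Y)^2$ cancellation you highlight being exactly what makes the paper's RHS expansion~\eqref{eq:taylor:RHS:ADP:exp} come out as stated. The only slip is the parenthetical $-\tfrac{t^2}{2}\,\tr_\system{B}(Y)^2/\tr_\system{B}(I)$, whose denominator should be squared, but this becomes moot once you switch to normalized traces as you propose.
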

\begin{proof}
Rewrite~\eqref{eq:approx:distri:prop:exp} as
\begin{equation}
\overline{\tr}_\system{B}\left( e^{tX}\star e^{tY} \right)  = e^{tX} \star \overline{\tr}_\system{B}(e^{tY}) + O(t^3),
\end{equation}\label{eq:approx:distri:prop:exp:normalized}
and note that each side of the above equation (without $O(t^3)$) can be expressed as
\begin{align}
\label{eq:taylor:LHS:ADP:exp}
\text{LHS}
&= I + t \cdot \overline{\tr}_\system{B}(\tilde{X}+Y)+\frac{1}{2} t^2 \cdot \overline{\tr}_\system{B}(\tilde{X}+Y)^2 + \frac{1}{3!}t^3 \cdot \overline{\tr}_\system{B}(\tilde{X}+Y)^3 + O(t^4),\\
\label{eq:taylor:RHS:ADP:exp}
\text{RHS}
&= I + t \cdot \left(X+\overline{\tr}_\system{B}(Y)\right) + \frac{1}{2} t^2 \cdot \left(X^2 + X\overline{\tr}_\system{B}(Y) + \overline{\tr}_\system{B}(Y)X + \overline{\tr}_\system{B}(Y^2)\right) + O(t^3),
\end{align} respectively.
The proposition can be justified since the above two expressions agree up to $t^2$.
\end{proof}
\chapter{Quantum Exponential Family}\label{app:QEF}
\begin{definition}[Quantum Exponential Family~\cite{mori2015loop}]
Similar to classical exponential families, a \emph{quantum exponential family} (of degree $d$) is a parametric family of quantum operators in the form of 
\begin{equation}\label{eq:defQEF}
\rho_\boldtheta \defeq \exp\left(\sum_{k=1}^d \theta_k\cdot\boldtau_k - \Psi(\boldtheta)\right)
\end{equation}
for \emph{natural parameter} $\boldtheta$ in some open subset $\Theta \in \mathbb{R}^d$, where $\boldtau_k\in\HermitianOp(\hilbert_{\nb{k}}) =\HermitianOp(\Tensor_{i\in\nb{k}}\hilbert_{i})$ are some given Hermitian operators ($\partial k\subset \left\{1,\ldots,N\right\}$), and conventions as in~\eqref{eq:blowupConvention} are applied in the summation in~\eqref{eq:defQEF}.
Moreover, the scaler $\Psi(\boldtheta)$ is defined as
\begin{equation}
\Psi(\boldtheta) \defeq \log\left( \tr\left( \exp\left( \sum_{k=1}^d \theta_k \cdot \boldtau_k \right) \right) \right).
\end{equation}
Thus, $\rho_\boldtheta$ is a density operator.
\end{definition}
Note that if $\{\boldtau_k\}_{k=1}^d$ are linearly independent, then the mapping $\boldtheta\mapsto\rho_\boldtheta$ is injective.
In this appendix, we assume $\{\boldtau_k\}_{k=1}^d$ to be linearly independent.
In this case, the (strict) convexity of the function $\Psi$ follows naturally from the (strict) convexity of the exponential function (see, \eg,~\cite{carlen2010trace}).
\par
\begin{example}\label{example:QEFwrtG}
Consider the quantum exponential family
\begin{equation}\label{eq:QEFwrtG}
\sigma_\boldtheta = \exp\left(\sum_{a\in\set{F}} \sum_{k}\theta^{(a)}_k\cdot\boldtau^{(a)}_k - \Psi(\boldtheta)\right),
\end{equation}
where $\boldtheta\in\mathbb{R}^d$ and where $\{\boldtau^{(a)}_k\}_k$ form a basis of $\HermitianOp(\hilbert_\nb{a})$.
The parametrization of $\sigma_\boldtheta$ corresponds to all the density operators $\sigma$ that can be decomposed as 
\begin{equation}
\sigma \propto \bigstar_{a \in \set{F}} \sigma_a
= \exp\left(\sum_{a\in\set{F}}\log{\sigma_a}\right),
\end{equation}
where $\sigma_a\in\DensOp(\hilbert_\nb{a})$ for each $a\in\set{F}$.
\end{example}
\begin{definition}[Dual Parameters]
Given a quantum exponential family as in~\eqref{eq:QEFwrtG}.
We define the \emph{dual parameter} (\wrt $\boldtheta$) to be $\boldeta = (\eta_l)_{l=1}^d$ where
\begin{equation}
	\eta_l \defeq \frac{\partial}{\partial \theta_l} \Psi(\boldtheta)
\end{equation}
for each $l=1,\ldots,n.$
\end{definition}
We can re-express $\boldeta$ as a function of $\rho_\boldtheta$, \ie,
\begin{align}
\nonumber
\eta_l &= \frac{\partial}{\partial \theta_l} \Psi(\boldtheta) 
= \frac{\partial}{\partial \theta_l} \log\left( \tr\left(
    \exp\left( \sum_{k=1}^d \theta_k \cdot \boldtau_k \right) \right) \right), \\
\nonumber
&= {\tr\left( \exp\left( \sum_{k=1}^d \theta_k \boldtau_k \right) \right) }^{-1} \frac{\partial}{\partial \theta_l} \tr\left( \exp\left( \sum_{k=1}^d \theta_k \boldtau_l \right) \right), \\
\nonumber
&\overset{\text{(a)}}{=} {\tr\left( \exp\left( \sum_{k=1}^d \theta_k \boldtau_k \right) \right) }^{-1} \tr\left( \exp\left( \sum_{k=1}^d \theta_k \boldtau_k \right) \cdot \boldtau_l \right), \\
&= \tr\left( \rho_\boldtheta \cdot \boldtau_l \right),
\end{align}
where step (a) was obtained by applying the first-order perturbation theory~\cite{Kato1966}.
\par
Due to the strict convexity of $\Psi$, the mapping $\boldeta(\boldtheta): \boldtheta \mapsto (\tr\left( \rho_\boldtheta \cdot \boldtau_l \right) )_{l=1}^n$ is injective.
On the other hand, by considering the conjugate function of $\Psi$ defined as 
(which is also strictly convex)
\begin{equation}
\Phi(\boldeta) \defeq \sup_\boldtheta \left\{ \sum_{k=1}^d \theta_k \cdot \eta_k - \Psi(\boldtheta) \right\},
\end{equation}
the inverse mapping can be expressed as 
\begin{equation}
\boldtheta(\boldeta): \boldeta \mapsto \left( \frac{\partial}{\partial \eta_k} \Phi(\boldeta) \right)_k.
\end{equation}
Thus, the correspondence between the natural parameters and the dual parameters is bijective.
\par
\begin{example}
We continue Example~\ref{example:QEFwrtG}.
The dual parameters can be expressed as
\begin{equation}\label{eq:sigma2eta}
\eta_k^{(a)} = \tr\left( \sigma_\boldtheta \cdot \boldtau_k^{(a)} \right)
= \tr_{\partial a} \left( \tr_{\set{V}\xk\nb{a}}\left( \sigma_\boldtheta \cdot \boldtau_k^{(a)} \right) \right)
= \tr\left( \sigma_a \cdot \boldtau_k^{(a)}\right),
\end{equation}
where $\sigma_a \defeq \tr_{\set{V}\xk\nb{a}} \left( \sigma_\boldtheta \right)$.
Since $\left\{ \boldtau^{(a)}_k \right\}_k$ is a basis of $\HermitianOp(\hilbert_{\partial a})$, \eqref{eq:sigma2eta} has established an injection from $\sigma_a$ to $\boldeta^{(a)}$.
In other words, given some local densities operators, there exists at most one  global density operator in the quantum exponential family such that its partial traces match these local densities operators.
\end{example}
\chapter{Differentiability of $-\tr\left( \sigma \cdot \log{\rho(\boldeta)} \right)$} \label{app:differentiability}
First, we verify the differentiability of the bijective mapping 
$\boldeta: \boldtheta \mapsto \left(\tr\left(\rho_\theta \cdot \boldtau_i \right) \right)_{i}$.
Note that,
\begin{align}
\frac{\partial \eta_i}{\partial \theta_j} = 
\frac{\partial\tr\left(\rho_\theta\cdot\boldtau_i\right)}{\partial \theta_j}
&= \left.\frac{\D}{\D t}\right|_{t=0} \tr\left( \exp\left( \sum_{k=1}^d \theta_k \cdot \boldtau_k + t\cdot \boldtau_i - \Psi(\boldtheta + t \cdot \vect{e}_i) \right) \cdot \boldtau_i \right)\\
&= \left.\frac{\D}{\D t}\right|_{t=0} \frac{\tr\left( \exp\left( \sum_{k=1}^d \theta_k \cdot \boldtau_k + t \cdot \boldtau_i \right) \cdot \boldtau_i \right)}{\exp\left( \Psi(\boldtheta + t \cdot \vect{e}_i) \right)}.
\end{align}
Since the denominator $\exp(\Psi(\boldtheta+t\cdot\vect{e}_i))$ is clearly differentiable, it suffice to show the differentiability of $t\mapsto \tr\left( \exp\left( \sum_{k=1}^d \theta_k \cdot \boldtau_k + t \cdot \boldtau_i \right) \cdot \boldtau_i \right)$ at $t=0$.
By the Taylor series expansion, we can write 
\begin{align}
&\tr\left( \exp\left( \sum_{k=1}^d \theta_k \cdot \boldtau_k + t \cdot \boldtau_i \right) \cdot \boldtau_i \right) - \tr\left( \exp\left( \sum_{k=1}^d \theta_k \cdot \boldtau_k \right) \cdot \boldtau_i \right)\\
= &\tr\left( \sum_{n=0}^\infty \frac{\left( \sum_{k=1}^d \theta_k \cdot \boldtau_k + t \cdot \boldtau_i \right)^n}{n!} \cdot \boldtau_i \right) - \tr\left( \sum_{n=0}^\infty \frac{\left( \sum_{k=1}^d \theta_k \cdot \boldtau_k \right)^n}{n!} \cdot \boldtau_i \right)\\
= & \tr\left( \sum_{n=0}^\infty \frac{\left( \sum_{k=1}^d \theta_k \cdot \boldtau_k + t \cdot \boldtau_i\right)^n -\left( \sum_{k=1}^d \theta_k \cdot \boldtau_k \right)^n}{n!} \cdot \boldtau_i \right)\\
= & \tr\left(\sum_{n=1}^\infty \frac{ t\cdot \left( \sum_{\ell=0}^{n-1} \left(\sum_{k=1}^d \theta_k \cdot\boldtau_k\right)^{n-1-\ell} \!\cdot \boldtau_i \cdot \left(\sum_{k=1}^d \theta_k \cdot\boldtau_k\right)^{\ell}\right) + O(t^2)}{n!} \cdot \boldtau_i \right)\\
= &\ t\cdot \sum_{n=1}^\infty \frac{\tr\left( \left(\sum_{k=1}^d \theta_k \cdot\boldtau_k\right)^{n-1} \cdot \boldtau_i \right)}{(n-1)!}+O(t^2).
\end{align}
Notice that
\begin{align}
\label{eq:seriesBounded1}
\sum_{n=1}^\infty \abs{\frac{\tr\left( \left(\sum_{k=1}^d \theta_k \cdot\boldtau_k\right)^{n-1} \cdot \boldtau_i \right)}{(n-1)!} }
&\leqslant \sum_{n=1}^\infty
    \frac{ \norm{\sum_{k=1}^d \theta_k \cdot\boldtau_k}^{n-1}
    \cdot \abs{\tr(\boldtau_i)}}{(n-1)!}\\
\label{eq:seriesBounded2}
&= \exp\left( \norm{\sum_{k=1}^d \theta_k \cdot\boldtau_k}\right) \cdot \abs{\tr(\boldtau_i)},
\end{align}
where, for a matrix $A$, $\norm{A}$ stands for the absolute value of the largest-in-absolute-value eigenvalue of $A$ and where we have applied the inequalities
\begin{align}
\abs{\tr(A\cdot B)} &\leqslant \norm{A}\cdot\abs{\tr(B)}\\
\norm{A\cdot B} &\leqslant \norm{A} \cdot \norm{B}
\end{align}
in deriving~\eqref{eq:seriesBounded1}.
Equation~\eqref{eq:seriesBounded2} implies that the series
\[
\sum_{n=1}^\infty \frac{\tr\left( \left(\sum_{k=1}^d \theta_k \cdot\boldtau_k\right)^{n-1} \cdot \boldtau_i \right)}{(n-1)!}
\]
is absolute convergent and thus is convergent.
Therefore, the limit
\begin{equation}
\lim_{t\to 0} \frac{1}{t} \cdot \left( \tr\left( \exp\left( \sum_{k=1}^d \theta_k \cdot \boldtau_k + t \cdot \boldtau_i \right) \cdot \boldtau_i \right) - \tr\left( \exp\left( \sum_{k=1}^d \theta_k \cdot \boldtau_k \right) \cdot \boldtau_i \right) \right)
\end{equation}
exists, which justifies the differentiability of $t\mapsto \tr\left( \exp\left( \sum_{k=1}^d \theta_k \cdot \boldtau_k + t \cdot \boldtau_i \right) \cdot \boldtau_i \right)$ at $t=0$, and of $\boldeta: \boldtheta \mapsto \left(\tr\left(\rho_\theta \cdot \boldtau_i \right) \right)_{i}$ as well.
\par
Second, we consider the function $\hat{f}(\boldtheta)\defeq-\tr\left(\sigma\cdot\log{\rho_\boldtheta}\right)$.
Consider a small change in $\boldtheta$ along its $i$-th component.
The change in $\hat{f}$ can be expressed as 
\begin{equation}
\hat{f}(\boldtheta+h\cdot \mathbf{e}_j) - \hat{f}(\boldtheta) = 
h\cdot\tr\left(\sigma \cdot \boldtau_i \right) - ( \Psi(\boldtheta+h\cdot\mathbf{e}_i) - \Psi(\boldtheta) ).
\end{equation}
Clearly, $\hat{f}$ is differentiable.
\par
Finally, note that $f(\boldeta)=\hat{f}(\boldtheta(\boldeta))$.
Since the bijective mapping $\boldtheta\mapsto\boldeta$ is differentiable, then so is its inverse mapping $\boldeta\mapsto\boldtheta$.
Therefore, the differentiability of $f$ follows directly from the differentiability of $\hat{f}$.
\chapter{Additional Figures for Chapter~\ref{chapter:QCwM}}\label{app:figures}
\noindent\begin{minipage}[c][.341\textheight]{\textwidth}
\centering
\begin{tikzpicture}[scale=.85,every node/.style={transform shape},
    factor/.style={rectangle, minimum width=1cm, minimum height=.7cm, draw},
    sfactor/.style={rectangle, minimum size=.4cm, draw},
    darksolid/.style={rectangle, minimum size=.15cm, draw,fill = black,
    inner sep=0pt, outer sep = 0pt},
    label/.style={red,anchor=north east,xshift = .1cm}]
\node[darksolid] (S) {}; \node [left=0pt of S] {$\cs_0$};
\node[factor] (E1) [right=.7cm of S] {$W$};
\draw (S) -- (E1);
\node[darksolid] (X1) [above=.8cm of E1] {}; \node[above = 0pt of X1] {$\cx_1$};
\draw (X1) -- (E1);
\draw (E1.south) -- ([yshift=-.8cm]E1.south) node (Y1) [right] {$y_1$};
    
\node[factor] (E2) [right=1cm of E1] {$W$};
\draw (E1) -- (E2) node[above,midway] {$s_1$};
\node[darksolid] (X2) [above=.8cm of E2] {}; \node[above = 0pt of X2] {$\cx_2$};
\draw (X2) -- (E2);
\draw (E2.south) -- ([yshift=-.8cm]E2.south) node[right] {$y_2$};
    
\node[factor, draw=none] (Edummy1) [right=1cm of E2] {$\cdots$};
\draw (E2) -- (Edummy1) node[above,midway] {$s_2$};
\node at (X1-|Edummy1) {$\cdots$};
\node at (Y1-|Edummy1) {$\cdots$};
    
\node[factor] (El) [right=1cm of Edummy1] {$W$};
\draw (Edummy1) -- (El) node[above,midway] {$\cs_{\ell-1}$};
\node[darksolid] (Xl) [above=.8cm of El] {};
\node[above = 0pt of Xl] {$\cx_\ell$};
\draw (Xl) -- (El);
\draw (El.south) -- ([yshift=-.8cm]El.south) node[right] {$y_\ell$};
    
\node[factor] (El2) [right=1cm of El] {$W$};
\draw (El) -- (El2) node[above,midway] {$s_\ell$};
\node[darksolid] (Xl2) [above=.8cm of El2] {};
    \node[above = 0pt of Xl2] {$\cx_{\ell\!+\!1}$};
\draw (Xl2) -- (El2);
\draw (El2.south) -- ([yshift=-.8cm]El2.south) node[right] {$y_{\ell\!+\!1}$};
    
\node[factor, draw=none] (Edummy2) [right=1cm of El2] {$\cdots$};
\draw (El2) -- (Edummy2) node[above,midway] {$s_{\ell\!+\!1}$};
\node at (X1-|Edummy2) {$\cdots$};
\node at (Y1-|Edummy2) {$\cdots$};
    
\node[factor] (En) [right=1cm of Edummy2] {$W$};
\draw (Edummy2) -- (En) node[above,pos=0.4] {$s_{n\!-\!1}$};
\node[darksolid] (Xn) [above=.8cm of En] {}; \node[above = 0pt of Xn] {$\cx_n$};
\draw (Xn) -- (En);
\draw (En.south) -- ([yshift=-.8cm]En.south) node[right] {$y_{n}$};
	
\node[sfactor, inner sep=0pt] (ee) [right=.5cm of En] {$\mathbf{1}$};
\draw (En) -- (ee) node[above,midway] {$s_n$};
    
\begin{pgfonlayer}{bg}
    \draw[dashed, blue, line width=1.5pt, fill=blue!10]
        ([xshift=-2.1cm,yshift=2.8cm]E1) rectangle
        ([xshift=1.5cm,yshift=-2.4cm]ee);
    \draw[dashed, blue, line width=1.5pt, fill=blue!20]
        ([xshift=-.6cm,yshift=2.6cm]E1) rectangle
        ([xshift=1.3cm,yshift=-2.2cm]ee);
    \draw[dashed, blue, line width=1.5pt, fill=blue!30]
    ([xshift=-.7cm,yshift=2.4cm]E2) rectangle
    ([xshift=1.1cm,yshift=-2cm]ee);
    \draw[dashed, blue, line width=1.5pt, fill=blue!40]
        ([xshift=-.6cm,yshift=2.2cm]El) rectangle
        ([xshift=.9cm,yshift=-1.8cm]ee);
    \draw[dashed, blue, line width=1.5pt, fill=blue!50]
        ([xshift=-.7cm,yshift=2cm]El2) rectangle
        ([xshift=.7cm,yshift=-1.6cm]ee);
    \draw[dashed, blue, line width=1.5pt, fill=blue!60]
        ([xshift=-.7cm,yshift=1.8cm]En) rectangle
        ([xshift=.5cm,yshift=-1.4cm]ee);
\end{pgfonlayer}
\end{tikzpicture}

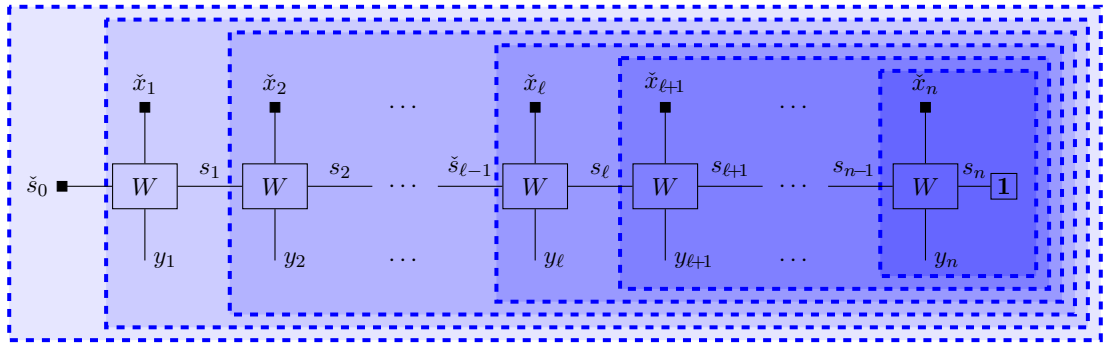
\captionof{figure}{Verification of~\eqref{eq:verfy:FSMC:conditional:distribution}.
Note that every ``closing-the-box'' operation yields a function node
        representing the constant function~$1$.}
\label{fig:FMSC:high:level:2}
\end{minipage}
\noindent\begin{minipage}[c][.341\textheight]{\textwidth}
\centering
\begin{tikzpicture}[scale=.85,every node/.style={transform shape},
    factor/.style={rectangle, minimum width=1cm, minimum height=1.7cm, draw},
    sfactor/.style={rectangle, minimum size=.4cm, draw},
    darksolid/.style={rectangle, minimum size=.15cm, draw,fill = black,
    inner sep=0pt, outer sep = 0pt},
    label/.style={red,anchor=north east,xshift = .1cm}]
\node[sfactor,inner sep=0pt] (S) {$\rho_{\system{S}_0}$};
\node[factor] (E1) [right=.7cm of S] {$W$};
\draw (S.north) |- ([yshift=.6cm]E1.west) node[above=-0.05cm,pos=.75] {$s_0$};
\draw (S.south) |- ([yshift=-.6cm]E1.west) node[below,pos=.75] {$\tilde{s}_0$};
\node[darksolid] (X1) [above=.7cm of E1] {}; \node[above = 0pt of X1] {$\cx_1$};
\draw (X1) -- (E1);
\draw (E1.south) -- ([yshift=-.7cm]E1.south) node (Y1) [right] {$y_1$};
    
\node[factor] (E2) [right=1cm of E1] {$W$};
\draw ([yshift=.6cm]E1.east) |- ([yshift=.6cm]E2.west) 
    node[above=-0.05cm,pos=.75] {$s_1$};
\draw ([yshift=-.6cm]E1.east) |- ([yshift=-.6cm]E2.west) 
    node[below,pos=.75] {$\tilde{s}_1$};
\node[darksolid] (X2) [above=.7cm of E2] {}; \node[above = 0pt of X2] {$\cx_2$};
\draw (X2) -- (E2);
\draw (E2.south) -- ([yshift=-.7cm]E2.south) node[right] {$y_2$};
    
\node[factor, draw=none] (Edummy1) [right=1cm of E2] {};
\node at ([yshift=.6cm]E2.east-|Edummy1) {$\cdots$};
\node at ([yshift=-.6cm]E2.east-|Edummy1) {$\cdots$};
\draw ([yshift=.6cm]E2.east) |- ([yshift=.6cm]Edummy1.west)
    node[above=-0.05cm,pos=.75] {$s_2$};
\draw ([yshift=-.6cm]E2.east) |- ([yshift=-.6cm]Edummy1.west) 
    node[below,pos=.75] {$\tilde{s}_2$};
\node at (X1-|Edummy1) {$\cdots$};
\node at (Y1-|Edummy1) {$\cdots$};
    
\node[factor] (El) [right=1cm of Edummy1] {$W$};
\draw ([yshift=.6cm]Edummy1.east) |- ([yshift=.6cm]El.west)
    node[above=-0.05cm,pos=.75] {$s_{\ell\!-\!1}$};
\draw ([yshift=-.6cm]Edummy1.east) |- ([yshift=-.6cm]El.west)
    node[below,pos=.75] {$\tilde{s}_{\ell\!-\!1}$};
\node[darksolid] (Xl)[above=.7cm of El] {};\node[above = 0pt of Xl]{$\cx_\ell$};
\draw (Xl) -- (El);
\draw (El.south) -- ([yshift=-.7cm]El.south) node[right] {$y_\ell$};
    
\node[factor] (El2) [right=1cm of El] {$W$};
\draw ([yshift=.6cm]El.east) |- ([yshift=.6cm]El2.west)
    node[above=-0.05cm,pos=.75] {$s_\ell$};
\draw ([yshift=-.6cm]El.east) |- ([yshift=-.6cm]El2.west)
    node[below,pos=.75] {$\tilde{s}_\ell$};
\node[darksolid] (Xl2) [above=.7cm of El2] {};
    \node[above = 0pt of Xl2] {$\cx_{\ell\!+\!1}$};
\draw (Xl2) -- (El2);
\draw (El2.south) -- ([yshift=-.7cm]El2.south) node[right] {$y_{\ell\!+\!1}$};
    
\node[factor, draw=none] (Edummy2) [right=1cm of El2] {};
\node at ([yshift=.6cm]El2.east-|Edummy2) {$\cdots$};
\node at ([yshift=-.6cm]El2.east-|Edummy2) {$\cdots$};
\draw ([yshift=.6cm]El2.east) |- ([yshift=.6cm]Edummy2.west)
    node[above=-0.05cm,pos=.75] {$s_{\ell\!+\!1}$};
\draw ([yshift=-.6cm]El2.east) |- ([yshift=-.6cm]Edummy2.west)
    node[below,pos=.75] {$\tilde{s}_{\ell\!+\!1}$};
\node at (X1-|Edummy2) {$\cdots$};
\node at (Y1-|Edummy2) {$\cdots$};
    
\node[factor] (En) [right=1cm of Edummy2] {$W$};
\draw ([yshift=.6cm]Edummy2.east) |- ([yshift=.6cm]En.west)
    node[above=-0.05cm,pos=.75] {$s_{n\!-\!1}$};
\draw ([yshift=-.6cm]Edummy2.east) |- ([yshift=-.6cm]En.west)
    node[below,pos=.75] {$\tilde{s}_{n\!-\!1}$};
\node[darksolid] (Xn) [above=.7cm of En] {}; \node[above = 0pt of Xn] {$\cx_n$};
\draw (Xn) -- (En);
\draw (En.south) -- ([yshift=-.7cm]En.south) node[right] {$y_n$};
    
\node[sfactor, inner sep=0pt] (ee) [right=.5cm of En] {$=$};
\draw ([yshift=.6cm]En.east) -| (ee.north) node[above=-0.05cm,pos=.25] {$s_n$};
\draw ([yshift=-.6cm]En.east) -| (ee.south) node[below,pos=.25] {$\tilde{s}_n$};
    
\begin{pgfonlayer}{bg}
    \draw[dashed, blue, line width=1.5pt, fill=blue!10]
        ([xshift=-2.2cm,yshift=3.2cm]E1) rectangle
        ([xshift=1.3cm,yshift=-2.8cm]ee);
    \draw[dashed, blue, line width=1.5pt, fill=blue!20]
        ([xshift=-.7cm,yshift=3cm]E1) rectangle
        ([xshift=1.1cm,yshift=-2.6cm]ee);
    \draw[dashed, blue, line width=1.5pt, fill=blue!30]
        ([xshift=-.7cm,yshift=2.8cm]E2) rectangle
        ([xshift=.9cm,yshift=-2.4cm]ee);
    \draw[dashed, blue, line width=1.5pt, fill=blue!40]
    ([xshift=-.7cm,yshift=2.6cm]El) rectangle 
    ([xshift=.7cm,yshift=-2.2cm]ee);
    \draw[dashed, blue, line width=1.5pt, fill=blue!50]
    ([xshift=-.7cm,yshift=2.4cm]El2) rectangle
    ([xshift=.5cm,yshift=-2cm]ee);
    \draw[dashed, blue, line width=1.5pt, fill=blue!60]
    ([xshift=-.7cm,yshift=2.2cm]En) rectangle
    ([xshift=.3cm,yshift=-1.8cm]ee);
\end{pgfonlayer}
\end{tikzpicture}

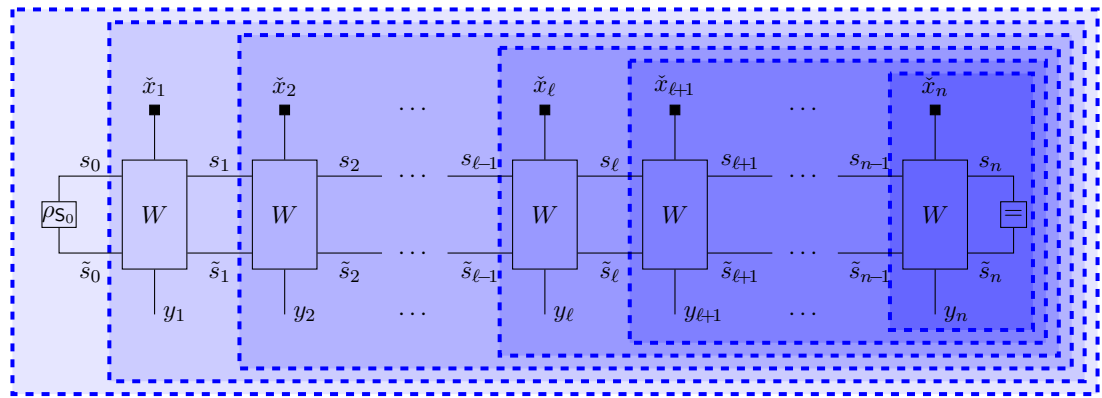
\captionof{figure}{Counterpart of Figure~\ref{fig:FMSC:high:level:2} for QSCs.
Note that every ``closing-the-box'' operation yields a function node representing
    a Kronecker-delta function node, \ie, a degree-two equality function node.}
\label{fig:QFSM:closing:the:box}
\end{minipage}
\noindent\begin{minipage}[c][.5\textheight]{\textwidth}
\centering
\begin{tikzpicture}[scale=.85,every node/.style={transform shape},
    node/.style={draw=none},
    factor/.style={rectangle, minimum width=1cm, minimum height=.7cm, draw},
    sfactor/.style={rectangle, minimum size=.4cm, draw},
    darksolid/.style={rectangle, minimum size=.15cm, draw, fill = black,
        inner sep=0pt, outer sep = 0pt}]
\node[sfactor,inner sep=0pt] (S) {$\prob_{\rv{S}_0}$};
\node[factor] (E1) [right=.7cm of S] {$W$};
\draw (S) -- (E1) node[above,midway] {$s_0$};
\draw (E1.north) -- ([yshift=.8cm]E1.north) node (X1) [darksolid]{} 
    node[right] {$\cx_1$};
\draw (E1.south) -- ([yshift=-.8cm]E1.south) node (Y1) [darksolid]{} 
    node[right] {$\cy_1$};

\node[factor, draw=none] (Edummy1) [right=1cm of E1] {$\cdots$};
\draw (E1) -- (Edummy1) node[above,midway] {$s_1$};
\node at (X1-|Edummy1) {$\cdots$};
\node at (Y1-|Edummy1) {$\cdots$};

\node[factor] (El0) [right=1cm of Edummy1] {$W$};
\draw (Edummy1) -- (El0) node[above,midway] {$s_{\ell-2}$};
\draw (El0.north) -- ([yshift=.8cm]El0.north) node (Xl0) [darksolid]{} 
    node[right] {$\cx_{\ell\!-\!1}$};
\draw (El0.south) -- ([yshift=-.8cm]El0.south)
    node[darksolid]{} node[right] {$\cy_{\ell\!-\!1}$};
    
\node[factor] (El) [right=1cm of El0] {$W$};
\draw (El0) -- (El) node[above,pos=.65] {$s_{\ell\!-\!1}$};
\draw (El.north) -- ([yshift=.8cm]El.north) node (Xl) [darksolid]{} 
    node[right] {$\cx_\ell$};
\draw (El.south) -- ([yshift=-1.8cm]El.south) node[right] {$y_\ell$};
    
\node[factor] (El2) [right=1cm of El] {$W$};
\draw (El) -- (El2) node[above,midway] {$s_\ell$};
\draw (El2.north) -- ([yshift=.8cm]El2.north) node (Xl2) [darksolid]{} 
    node[right] {$\cx_{\ell\!+\!1}$};
\draw (El2.south) -- ([yshift=-.8cm]El2.south) node[right] {$y_{\ell\!+\!1}$};
	
\node[factor, draw=none] (Edummy2) [right=1cm of El2] {$\cdots$};
\draw (El2) -- (Edummy2) node[above,midway] {$s_{\ell+1}$};
\node at (X1-|Edummy2) {$\cdots$};
\node at (Y1-|Edummy2) {$\cdots$};
    
\node[factor] (En) [right=1cm of Edummy2] {$W$};
\draw (Edummy2) -- (En) node[above,midway] {$s_{n\!-\!1}$};
\draw (En.north) -- ([yshift=.8cm]En.north) node (Xn) [darksolid]{} 
    node[right] {$\cx_n$};
\draw (En.south) -- ([yshift=-.8cm]En.south) node[right] {$y_n$};
    
\node[sfactor, inner sep=0pt] (ee) [right=.5cm of En] {$\mathbf{1}$};
\draw (En) -- (ee) node[above,midway] {$s_n$};
    
\begin{pgfonlayer}{bg}
    \draw[dashed, black, line width=1.5pt, fill=yellow!20]
        ([xshift=-.8cm,yshift=1.6cm]S) rectangle 
        ([xshift=.6cm,yshift=-1.6cm]ee);
    \node[anchor=north west] at ([xshift=-.6cm,yshift=-1.6cm]S|-ee)
        {$\prob_{\rv{Y}_\ell,\rv{Y}_1^{\ell\!-\!1}|\rv{X}_1^\ell}
        (y_\ell,\cvy_1^{\ell\!-\!1}|\cvx_1^\ell)$};
    \draw[dashed, blue, line width=1.5pt, fill=blue!20]
        ([xshift=-.7cm,yshift=1.4cm]El2) rectangle 
        ([xshift=.4cm,yshift=-1.4cm]ee);
    \node[anchor=south east, fill=blue!50] at ([xshift=.4cm,yshift=-1.4cm]ee)
        {$\mathbf{1}$};
    \draw[dashed, red, line width=1.5pt, fill=red!20]
    ([xshift=-.6cm,yshift=1.4cm]S) rectangle 
    ([xshift=.8cm,yshift=-1.4cm]El0);
\end{pgfonlayer}
\end{tikzpicture}
\captionof{figure}{Efficient simulation of the channel output at step $\ell$ given the channel input $\cvx_1^n$ and the channel output $\cvy_1^{\ell-1}$ for an FSMC.}
\label{fig:CFSM:channel:simulation:Y}
\end{minipage}
\noindent\begin{minipage}[c][.5\textheight]{\textwidth}
\centering
\begin{tikzpicture}[scale=.85,every node/.style={transform shape},
    factor/.style={rectangle, minimum width=1cm, minimum height=1.7cm, draw},
    sfactor/.style={rectangle, minimum size=.4cm, draw},
    darksolid/.style={rectangle, minimum size=.15cm, draw,fill = black,
    inner sep=0pt, outer sep = 0pt}]
\node[sfactor,inner sep=0pt] (S) {$\rho_{\system{S}_0}$};
\node[factor] (E1) [right=.7cm of S] {$W$};
\draw (S.north) |- ([yshift=.6cm]E1.west) node[above=-0.05cm,pos=.75] {$s_0$};
\draw (S.south) |- ([yshift=-.6cm]E1.west) node[below,pos=.75] {$\tilde{s}_0$};
\draw (E1.north) -- ([yshift=.8cm]E1.north) node (X1) [darksolid]{} 
    node[right] {$\cx_1$};
\draw (E1.south) -- ([yshift=-.8cm]E1.south) node[darksolid]{}
    node (Y1) [right] {$\cy_1$};
    
\node[factor, draw=none] (Edummy1) [right=1cm of E1] {};
\node at ([yshift=.6cm]E1.east-|Edummy1) {$\cdots$};
\node at ([yshift=-.6cm]E1.east-|Edummy1) {$\cdots$};
\draw ([yshift=.6cm]E1.east) |- ([yshift=.6cm]Edummy1.west)
    node[above=-0.05cm,pos=.75] {$s_1$};
\draw ([yshift=-.6cm]E1.east) |- ([yshift=-.6cm]Edummy1.west)
    node[below,pos=.75] {$\tilde{s}_1$};
\node at (X1-|Edummy1) {$\cdots$};
\node at (Y1-|Edummy1) {$\cdots$};
    
\node[factor] (El0) [right=1cm of Edummy1] {$W$};
\draw ([yshift=.6cm]Edummy1.east) |- ([yshift=.6cm]El0.west)
    node[above=-0.05cm,pos=.75] {$s_{\ell\!-\!2}$};
\draw ([yshift=-.6cm]Edummy1.east) |- ([yshift=-.6cm]El0.west)
    node[below,pos=.75] {$\tilde{s}_{\ell\!-\!2}$};
\draw (El0.north) -- ([yshift=.8cm]El0.north) node (Xl0) [darksolid]{} 
    node[right] {$\cx_{\ell\!-\!1}$};
\draw (El0.south) -- ([yshift=-.8cm]El0.south) node[darksolid]{}
    node[right] {$\cy_{\ell\!-\!1}$};
    
\node[factor] (El) [right=1cm of El0] {$W$};
\draw ([yshift=.6cm]El0.east) |- ([yshift=.6cm]El.west)
    node[above=-0.05cm,pos=.85] {$s_{\ell\!-\!1}$};
\draw ([yshift=-.6cm]El0.east) |- ([yshift=-.6cm]El.west)
    node[below,pos=.85] {$\tilde{s}_{\ell\!-\!1}$};
\draw (El.north) -- ([yshift=.8cm]El.north) node (Xl) [darksolid]{} 
    node[right] {$\cx_\ell$};
\draw (El.south) -- ([yshift=-2.2cm]El.south) node[right] {$y_\ell$};
    
\node[factor] (El2) [right=1cm of El] {$W$};
\draw ([yshift=.6cm]El.east) |- ([yshift=.6cm]El2.west)
    node[above=-0.05cm,pos=.75] {$s_\ell$};
\draw ([yshift=-.6cm]El.east) |- ([yshift=-.6cm]El2.west)
    node[below,pos=.75] {$\tilde{s}_\ell$};
\draw (El2.north) -- ([yshift=.8cm]El2.north) node (Xl2) [darksolid]{} 
    node[right] {$\cx_{\ell\!+\!1}$};
\draw (El2.south) -- ([yshift=-.8cm]El2.south) node[right] {$y_{\ell\!+\!1}$};
    
\node[factor, draw=none] (Edummy2) [right=1cm of El2] {};
\node at ([yshift=.6cm]El2.east-|Edummy2) {$\cdots$};
\node at ([yshift=-.6cm]El2.east-|Edummy2) {$\cdots$};
\draw ([yshift=.6cm]El2.east) |- ([yshift=.6cm]Edummy2.west)
    node[above=-0.05cm,pos=.75] {$s_{\ell\!+\!1}$};
\draw ([yshift=-.6cm]El2.east) |- ([yshift=-.6cm]Edummy2.west)
    node[below,pos=.75] {$\tilde{s}_{\ell\!+\!1}$};
\node at (X1-|Edummy2) {$\cdots$};
\node at (Y1-|Edummy2) {$\cdots$};
    
\node[factor] (En) [right=1cm of Edummy2] {$W$};
\draw ([yshift=.6cm]Edummy2.east) |- ([yshift=.6cm]En.west)
    node[above=-0.05cm,pos=.75] {$s_{n\!-\!1}$};
\draw ([yshift=-.6cm]Edummy2.east) |- ([yshift=-.6cm]En.west)
    node[below,pos=.75] {$\tilde{s}_{n\!-\!1}$};
\draw (En.north) -- ([yshift=.8cm]En.north) node (Xn) [darksolid]{} 
    node[right] {$\cx_n$};
\draw (En.south) -- ([yshift=-.8cm]En.south) node[right] {$y_n$};
    
\node[sfactor, inner sep=0pt] (ee) [right=.5cm of En] {$=$};
\draw ([yshift=.6cm]En.east) -| (ee.north) node[above=-0.05cm,pos=.25] {$s_n$};
\draw ([yshift=-.6cm]En.east) -| (ee.south) node[below,pos=.25] {$\tilde{s}_n$};
    
\begin{pgfonlayer}{bg}
    \draw[dashed, black, line width=1.5pt, fill=yellow!20]
        ([xshift=-.8cm,yshift=2.4cm]S) rectangle 
        ([xshift=.8cm,yshift=-2.3cm]ee);
    \node[anchor=north west] at ([xshift=-.8cm,yshift=-2.3cm]S|-ee)
        {$\prob_{\rv{Y}_\ell,\rv{Y}_1^{\ell\!-\!1}|\rv{X}_1^\ell,\system{S}}
        (y_\ell,\cvy_1^{\ell\!-\!1}|\cvx_1^\ell;\rho_{\system{S}_0})$};
    \draw[dashed, blue, line width=1.5pt, fill=blue!20]
        ([xshift=-.7cm,yshift=2.2cm]El2) rectangle
        ([xshift=.6cm,yshift=-2.1cm]ee);
    \node[anchor=south east, fill=blue!50] at ([xshift=.6cm,yshift=-2.1cm]ee)
        {$\delta$};
    \draw[dashed, red, line width=1.5pt, fill=red!20]
        ([xshift=-.6cm,yshift=2.2cm]S) rectangle 
        ([xshift=.8cm,yshift=-2.1cm]El0);
\end{pgfonlayer}
\end{tikzpicture}
\captionof{figure}{Efficient simulation of the channel output at step $\ell$ given the channel input $\cvx_1^n$ and the channel output $\cvy_1^{\ell-1}$ for a QSC.}
\label{fig:QFSM:channel:simulation:Y}
\end{minipage}
\noindent\begin{minipage}[c][.5\textheight]{\textwidth}
\centering
\begin{tikzpicture}[scale=.85,every node/.style={transform shape},
    factor/.style={rectangle, minimum width=1cm, minimum height=.7cm, draw},
    sfactor/.style={rectangle, minimum size=.5cm, draw},
    darksolid/.style={rectangle, minimum size=.15cm, draw,fill = black,
    inner sep=0pt, outer sep = 0pt},
    label/.style={red,anchor=north east,xshift = .1cm}]
\node[sfactor,inner sep=0pt] (S) {$\prob_{\rv{S}_0}$};
\node[factor] (E1) [right=.7cm of S] {$W$};
\draw (S) -- (E1) node[above,midway] {$s_0$};
\node[sfactor] (X1) [above=.8cm of E1] {$Q$};
\draw (X1) -- (E1) node[right, midway] {$x_1$};
\draw (E1.south) -- ([yshift=-.8cm]E1.south) node (Y1) [darksolid]{}
    node[right] {$\cy_1$};
    
\node[factor] (E2) [right=1cm of E1] {$W$};
\draw (E1) -- (E2) node[above,midway] {$s_1$};
\node[sfactor] (X2) [above=.8cm of E2] {$Q$};
\draw (X2) -- (E2) node[right, midway] {$x_2$};
\draw (E2.south) -- ([yshift=-.8cm]E2.south) node[darksolid]{}
    node[right] {$\cy_2$};
    
\node[factor, draw=none] (Edummy1) [right=1cm of E2] {$\cdots$};
\draw (E2) -- (Edummy1) node[above,midway] {$s_2$};
\node at (X1-|Edummy1) {$\cdots$};
\node at (Y1-|Edummy1) {$\cdots$};
    
\node[factor] (El) [right=1cm of Edummy1] {$W$};
\draw (Edummy1) -- (El) node[above,midway] {$s_{\ell-1}$};
\node[sfactor] (Xl) [above=.8cm of El] {$Q$};
\draw (Xl) -- (El) node[right, midway] {$x_\ell$};
\draw (El.south) -- ([yshift=-.8cm]El.south) node[darksolid]{}
    node[right] {$\cy_\ell$};
    
    
\node[factor, draw=none] (Edummy2) [right=1cm of El] {$\cdots$};
\draw (El) -- (Edummy2) node[above,pos=.7] {$s_{\ell\!+\!1}$};
\node at (X1-|Edummy2) {$\cdots$};
\node at (Y1-|Edummy2) {$\cdots$};
    
\node[factor] (En) [right=1cm of Edummy2] {$W$};
\draw (Edummy2) -- (En) node[above,midway] {$s_{n\!-\!1}$};
\node[sfactor] (Xn) [above=.8cm of En] {$Q$};
\draw (Xn) -- (En) node[right, midway] {$x_n$};
\draw (En.south) -- ([yshift=-.8cm]En.south) node[darksolid]{}
    node[right] {$\cy_{n}$};
    
\node[sfactor, inner sep=0pt] (ee) [right=1cm of En] {$\mathbf{1}$};
\draw (En) -- (ee) node[above,midway] {$s_n$};
    
\begin{pgfonlayer}{bg}
    \draw[dashed, red, line width=1.5pt, fill=red!15]
        ([xshift=-1.2cm,yshift=2.5cm]S) rectangle
        ([xshift=.7cm,yshift=-3.2cm]En);
    \node[label] at ([xshift=.7cm,yshift=-3.2cm]En) {$\muY_{n}$};
    \draw[dashed, red, line width=1.5pt, fill=red!30]
        ([xshift=-1cm,yshift=2.3cm]S) rectangle
        ([xshift=.7cm,yshift=-2.6cm]El);
    \node[label] at ([xshift=.7cm,yshift=-2.6cm]El) {$\muY_{\ell}$};
    \draw[dashed, red, line width=1.5pt, fill=red!45]
        ([xshift=-.8cm,yshift=2.1cm]S) rectangle
        ([xshift=.7cm,yshift=-2cm]E2);
    \node[label] at ([xshift=.7cm,yshift=-2cm]E2) {$\muY_2$};
    \draw[dashed, red, line width=1.5pt, fill=red!60]
        ([xshift=-.6cm,yshift=1.9cm]S) rectangle
        ([xshift=.7cm,yshift=-1.4cm]E1);
    \node[label] at ([xshift=.7cm,yshift=-1.4cm]E1) {$\muY_1$};
\end{pgfonlayer}
\end{tikzpicture}
\captionof{figure}{The iterative computation of $\muY_{\ell}$ as in~\eqref{eq:recursive:state:metric:Y:1} can be understood as a sequence of ``closing-the-box'' operations as shown above.}
\label{fig:CFSM:estimate:hY}
\end{minipage}
\noindent\begin{minipage}[c][.5\textheight]{\textwidth}
\centering
\begin{tikzpicture}[scale=.85,every node/.style={transform shape},
    factor/.style={rectangle, minimum width=1cm, minimum height=1.7cm, draw},
    sfactor/.style={rectangle, minimum size=.4cm, draw},
    darksolid/.style={rectangle, minimum size=.15cm, draw, fill = black,
    inner sep=0pt, outer sep = 0pt},
    label/.style={red, anchor=north east, xshift = .1cm}]
\node[sfactor,inner sep=0pt] (S) {$\rho_{\system{S}_0}$};
\node[factor] (E1) [right=.7cm of S] {$W$};
\draw (S.north) |- ([yshift=.6cm]E1.west) node[above=-0.05cm,pos=.75] {$s_0$};
\draw (S.south) |- ([yshift=-.6cm]E1.west) node[below,pos=.75] {$\tilde{s}_0$};
\node[sfactor] (X1) [above=.7cm of E1] {$Q$};
\draw (X1) -- (E1) node[right, midway] {$x_1$};
\draw (E1.south) -- ([yshift=-.7cm]E1.south) node[darksolid]{}
    node (Y1) [right] {$\cy_1$};
    
\node[factor] (E2) [right=1cm of E1] {$W$};
\draw ([yshift=.6cm]E1.east) |- ([yshift=.6cm]E2.west)
    node[above=-0.05cm,pos=.75] {$s_1$};
\draw ([yshift=-.6cm]E1.east) |- ([yshift=-.6cm]E2.west)
    node[below,pos=.75] {$\tilde{s}_1$};
\node[sfactor] (X2) [above=.7cm of E2] {$Q$};
\draw (X2) -- (E2) node[right, midway] {$x_2$};
\draw (E2.south) -- ([yshift=-.7cm]E2.south) node[darksolid]{}
    node[right] {$\cy_2$};
    
\node[factor, draw=none] (Edummy1) [right=1cm of E2] {};
\node at ([yshift=.6cm]E2.east-|Edummy1) {$\cdots$};
\node at ([yshift=-.6cm]E2.east-|Edummy1) {$\cdots$};
\draw ([yshift=.6cm]E2.east) |- ([yshift=.6cm]Edummy1.west) node[above=-0.05cm,pos=.75] {$s_2$};
\draw ([yshift=-.6cm]E2.east) |- ([yshift=-.6cm]Edummy1.west)
    node[below,pos=.75] {$\tilde{s}_2$};
\node at (X1-|Edummy1) {$\cdots$};
\node at (Y1-|Edummy1) {$\cdots$};
    
\node[factor] (El) [right=1cm of Edummy1] {$W$};
\draw ([yshift=.6cm]Edummy1.east) |- ([yshift=.6cm]El.west)
    node[above=-0.05cm,pos=.75] {$s_{\ell-1}$};
\draw ([yshift=-.6cm]Edummy1.east) |- ([yshift=-.6cm]El.west)
    node[below,pos=.75] {$\tilde{s}_{\ell-1}$};
\node[sfactor] (Xl) [above=.7cm of El] {$Q$};
\draw (Xl) -- (El) node[right, midway] {$x_\ell$};
\draw (El.south) -- ([yshift=-.7cm]El.south) node[darksolid]{}
    node[right] {$\cy_\ell$};
    
    
\node[factor, draw=none] (Edummy2) [right=1cm of El] {};
\node at ([yshift=.6cm]El.east-|Edummy2) {$\cdots$};
\node at ([yshift=-.6cm]El.east-|Edummy2) {$\cdots$};
\draw ([yshift=.6cm]El.east) |- ([yshift=.6cm]Edummy2.west)
    node[above=-0.05cm,pos=.85] {$s_{\ell\!+\!1}$};
\draw ([yshift=-.6cm]El.east) |- ([yshift=-.6cm]Edummy2.west)
    node[below,pos=.85] {$\tilde{s}_{\ell\!+\!1}$};
\node at (X1-|Edummy2) {$\cdots$};
\node at (Y1-|Edummy2) {$\cdots$};
    
\node[factor] (En) [right=1cm of Edummy2] {$W$};
\draw ([yshift=.6cm]Edummy2.east) |- ([yshift=.6cm]En.west)
    node[above=-0.05cm,pos=.75] {$s_{n\!-\!1}$};
\draw ([yshift=-.6cm]Edummy2.east) |- ([yshift=-.6cm]En.west)
    node[below,pos=.75] {$\tilde{s}_{n\!-\!1}$};
\node[sfactor] (Xn) [above=.7cm of En] {$Q$};
\draw (Xn) -- (En) node[right, midway] {$x_n$};
\draw (En.south) -- ([yshift=-.7cm]En.south) node[darksolid]{}
    node[right] {$\cy_n$};
    
\node[sfactor, inner sep=0pt] (ee) [right=1cm of En] {$=$};
\draw ([yshift=.6cm]En.east) -| (ee.north) node[above=-0.05cm,pos=.25] {$s_n$};
\draw ([yshift=-.6cm]En.east) -| (ee.south) node[below,pos=.25] {$\tilde{s}_n$};
    
\begin{pgfonlayer}{bg}
    \draw[dashed, red, line width=1.5pt, fill=red!15]
        ([xshift=-1.2cm,yshift=2.9cm]S) rectangle
        ([xshift=.7cm,yshift=-3.6cm]En);
    \node[label] at ([xshift=.7cm,yshift=-3.6cm]En) {$\sigmaY_n$};
    \draw[dashed, red, line width=1.5pt, fill=red!30]
        ([xshift=-1cm,yshift=2.7cm]S) rectangle
        ([xshift=.7cm,yshift=-3cm]El);
    \node[label] at ([xshift=.7cm,yshift=-3cm]El) {$\sigmaY_{\ell}$};
    \draw[dashed, red, line width=1.5pt, fill=red!45]
        ([xshift=-.8cm,yshift=2.5cm]S) rectangle
        ([xshift=.7cm,yshift=-2.4cm]E2);
    \node[label] at ([xshift=.7cm,yshift=-2.4cm]E2) {$\sigmaY_2$};
    \draw[dashed, red, line width=1.5pt, fill=red!60]
        ([xshift=-.6cm,yshift=2.3cm]S) rectangle
        ([xshift=.7cm,yshift=-1.8cm]E1);
    \node[label] at ([xshift=.7cm,yshift=-1.8cm]E1) {$\sigmaY_1$};
\end{pgfonlayer}
\end{tikzpicture}
\captionof{figure}{The iterative computation of $\sigmaY_{\ell}$ as in~\eqref{eq:recursive:quantum:state:metric:Y:1} can be understood as a sequence of ``closing-the-box'' operations as shown above.}
\label{fig:QFSM:estimate:hY}
\end{minipage}
\noindent\begin{minipage}[c][.5\textheight]{\textwidth}
\centering
\begin{tikzpicture}[scale=.85,every node/.style={transform shape},
    factor/.style={rectangle, minimum width=1cm, minimum height=.7cm, draw},
    sfactor/.style={rectangle, minimum size=.4cm, draw},
    darksolid/.style={rectangle, minimum size=.15cm, draw, fill = black,
    inner sep=0pt, outer sep = 0pt},
    label/.style={red, anchor=north east, xshift = .1cm}]
\node[sfactor,inner sep=0pt] (S) {$\prob_{\rv{S}_0}$};
\node[factor] (E1) [right=.7cm of S] {$W$};
\draw (S) -- (E1) node[above,midway] {$s_0$};
\node[darksolid] (X1) [above=.3cm of E1.north,anchor = south] {};
\draw (X1) -- (E1) node[right, pos = 0] {$\cx_1$};
\draw (X1) -- ([yshift=.3cm]X1.north) 
    node (pX1) [sfactor, anchor = south, pos=1] {$Q$};
\draw (E1.south) -- ([yshift=-.8cm]E1.south) node (Y1) [darksolid]{} 
    node[right] {$\cy_1$};
    
\node[factor] (E2) [right=1cm of E1] {$W$};
\draw (E1) -- (E2) node[above,midway] {$s_1$};
\node[darksolid] (X2) [above=.3cm of E2.north,anchor = south] {};
\draw (X2) -- (E2) node[right, pos = 0] {$\cx_2$};
\draw (X2) -- ([yshift=.3cm]X2.north)
    node[sfactor, anchor = south, pos=1] {$Q$};
\draw (E2.south) -- ([yshift=-.8cm]E2.south) 
    node[darksolid]{} node[right] {$\cy_2$};
    
\node[factor, draw=none] (Edummy1) [right=1cm of E2] {$\cdots$};
\draw (E2) -- (Edummy1) node[above,midway] {$s_2$};
\node at (pX1-|Edummy1) {$\cdots$};
\node at (Y1-|Edummy1) {$\cdots$};
    
\node[factor] (El) [right=1cm of Edummy1] {$W$};
\draw (Edummy1) -- (El) node[above,midway] {$s_{\ell-1}$};
\node[darksolid] (Xl) [above=.3cm of El.north,anchor = south] {};
\draw (Xl) -- (El) node[right, pos = 0] {$\cx_\ell$};
\draw (Xl) -- ([yshift=.3cm]Xl.north)
    node[sfactor, anchor = south, pos=1] {$Q$};
\draw (El.south) -- ([yshift=-.8cm]El.south) node[darksolid]{}
    node[right] {$\cy_\ell$};
    
    
\node[factor, draw=none] (Edummy2) [right=1cm of El] {$\cdots$};
\draw (El) -- (Edummy2) node[above,pos=.7] {$s_{\ell\!+\!1}$};
\node at (pX1-|Edummy2) {$\cdots$};
\node at (Y1-|Edummy2) {$\cdots$};
    
\node[factor] (En) [right=1cm of Edummy2] {$W$};
\draw (Edummy2) -- (En) node[above,midway] {$s_{n\!-\!1}$};
\node[darksolid] (Xn) [above=.3cm of En.north,anchor = south] {};
\draw (Xn) -- (En) node[right, pos = 0] {$\cx_n$};
\draw (Xn) -- ([yshift=.3cm]Xn.north)
    node[sfactor, anchor = south, pos=1] {$Q$};
\draw (En.south) -- ([yshift=-.8cm]En.south) node[darksolid]{}
    node[right] {$\cy_{n}$};
    
\node[sfactor, inner sep=0pt] (ee) [right=1cm of En] {$\mathbf{1}$};
\draw (En) -- (ee) node[above,midway] {$s_n$};
    
\begin{pgfonlayer}{bg}
    \draw[dashed, red, line width=1.5pt, fill=red!15]
        ([xshift=-1.2cm,yshift=2.5cm]S) rectangle
        ([xshift=.7cm,yshift=-3.2cm]En);
    \node[label] at ([xshift=.7cm,yshift=-3.2cm]En) {$\muXY_{n}$};
    \draw[dashed, red, line width=1.5pt, fill=red!30]
        ([xshift=-1cm,yshift=2.3cm]S) rectangle
        ([xshift=.7cm,yshift=-2.6cm]El);
    \node[label] at ([xshift=.7cm,yshift=-2.6cm]El) {$\muXY_{\ell}$};
    \draw[dashed, red, line width=1.5pt, fill=red!45]
        ([xshift=-.8cm,yshift=2.1cm]S) rectangle
        ([xshift=.7cm,yshift=-2cm]E2);
    \node[label] at ([xshift=.7cm,yshift=-2cm]E2) {$\muXY_2$};
    \draw[dashed, red, line width=1.5pt, fill=red!60]
        ([xshift=-.6cm,yshift=1.9cm]S) rectangle
        ([xshift=.7cm,yshift=-1.4cm]E1);
    \node[label] at ([xshift=.7cm,yshift=-1.4cm]E1) {$\muXY_1$};
\end{pgfonlayer}
\end{tikzpicture}
\captionof{figure}{The iterative computation of $\muXY_{\ell}$ can be understood as a sequence of ``closing-the-box'' operations as shown above.}
\label{fig:CFSM:estimate:hXY}
\end{minipage}
\noindent\begin{minipage}[c][.49\textheight]{\textwidth}
\centering
\begin{tikzpicture}[scale=.85,every node/.style={transform shape},
    factor/.style={rectangle, minimum width=1cm, minimum height=1.7cm, draw},
    sfactor/.style={rectangle, minimum size=.4cm, draw},
    darksolid/.style={rectangle, minimum size=.15cm, draw, fill = black,
    inner sep=0pt, outer sep = 0pt},
    label/.style={red, anchor=north east, xshift = .1cm}]
\node[sfactor,inner sep=0pt] (S) {$\rho_{\system{S}_0}$};
\node[factor] (E1) [right=.7cm of S] {$W$};
\draw (S.north) |- ([yshift=.6cm]E1.west) node[above=-0.05cm,pos=.75] {$s_0$};
\draw (S.south) |- ([yshift=-.6cm]E1.west) node[below,pos=.75] {$\tilde{s}_0$};
\node[darksolid] (X1) [above=.3cm of E1.north,anchor = south] {};
\draw (X1) -- (E1) node[right, pos = 0] {$\cx_1$};
\draw (X1) -- ([yshift=.3cm]X1.north)
    node (pX1) [sfactor, anchor = south, pos=1] {$Q$};
\draw (E1.south) -- ([yshift=-.7cm]E1.south) node[darksolid]{}
    node (Y1) [right] {$\cy_1$};
    
\node[factor] (E2) [right=1cm of E1] {$W$};
\draw ([yshift=.6cm]E1.east) |- ([yshift=.6cm]E2.west)
    node[above=-0.05cm,pos=.75] {$s_1$};
\draw ([yshift=-.6cm]E1.east) |- ([yshift=-.6cm]E2.west)
    node[below,pos=.75] {$\tilde{s}_1$};
\node[darksolid] (X2) [above=.3cm of E2.north,anchor = south] {};
\draw (X2) -- (E2) node[right, pos = 0] {$\cx_2$};
\draw (X2) -- ([yshift=.3cm]X2.north) node[sfactor, anchor = south, pos=1] {$Q$};
\draw (E2.south) -- ([yshift=-.7cm]E2.south) node[darksolid]{} 
    node[right] {$\cy_2$};
    
\node[factor, draw=none] (Edummy1) [right=1cm of E2] {};
\node at ([yshift=.6cm]E2.east-|Edummy1) {$\cdots$};
\node at ([yshift=-.6cm]E2.east-|Edummy1) {$\cdots$};
\draw ([yshift=.6cm]E2.east) |- ([yshift=.6cm]Edummy1.west)
    node[above=-0.05cm,pos=.75] {$s_2$};
\draw ([yshift=-.6cm]E2.east) |- ([yshift=-.6cm]Edummy1.west)
    node[below,pos=.75] {$\tilde{s}_2$};
\node at (pX1-|Edummy1) {$\cdots$};
\node at (Y1-|Edummy1) {$\cdots$};
    
\node[factor] (El) [right=1cm of Edummy1] {$W$};
\draw ([yshift=.6cm]Edummy1.east) |- ([yshift=.6cm]El.west)
    node[above=-0.05cm,pos=.75] {$s_{\ell-1}$};
\draw ([yshift=-.6cm]Edummy1.east) |- ([yshift=-.6cm]El.west)
    node[below,pos=.75] {$\tilde{s}_{\ell-1}$};
\node[darksolid] (Xl) [above=.3cm of El.north,anchor = south] {};
\draw (Xl) -- (El) node[right, pos = 0] {$\cx_\ell$};
\draw (Xl) -- ([yshift=.3cm]Xl.north) node[sfactor, anchor = south, pos=1] {$Q$};
\draw (El.south) -- ([yshift=-.7cm]El.south) node[darksolid]{}
    node[right] {$\cy_\ell$};

\node[factor, draw=none] (Edummy2) [right=1cm of El] {};
\node at ([yshift=.6cm]El.east-|Edummy2) {$\cdots$};
\node at ([yshift=-.6cm]El.east-|Edummy2) {$\cdots$};
\draw ([yshift=.6cm]El.east) |- ([yshift=.6cm]Edummy2.west)
    node[above=-0.05cm,pos=.85] {$s_{\ell\!+\!1}$};
\draw ([yshift=-.6cm]El.east) |- ([yshift=-.6cm]Edummy2.west)
    node[below,pos=.85] {$\tilde{s}_{\!\ell+\!1}$};
\node at (pX1-|Edummy2) {$\cdots$};
\node at (Y1-|Edummy2) {$\cdots$};

\node[factor] (En) [right=1cm of Edummy2] {$W$};
\draw ([yshift=.6cm]Edummy2.east) |- ([yshift=.6cm]En.west)
    node[above=-0.05cm,pos=.75] {$\tilde{s}_{n\!-\!1}$};
\draw ([yshift=-.6cm]Edummy2.east) |- ([yshift=-.6cm]En.west)
    node[below,pos=.75] {$\tilde{s}_{n\!-\!1}$};
\node[darksolid] (Xn) [above=.3cm of En.north,anchor = south] {};
\draw (Xn) -- (En) node[right, pos = 0] {$\cx_n$};
\draw (Xn) -- ([yshift=.3cm]Xn.north)
    node[sfactor, anchor = south, pos=1] {$Q$};
\draw (En.south) -- ([yshift=-.7cm]En.south) node[darksolid]{}
    node[right] {$\cy_n$};
    
\node[sfactor, inner sep=0pt] (ee) [right=1cm of En] {$=$};
\draw ([yshift=.6cm]En.east) -| (ee.north) node[above=-0.05cm,pos=.25] {$s_n$};
\draw ([yshift=-.6cm]En.east) -| (ee.south) node[below,pos=.25] {$\tilde{s}_n$};
    
\begin{pgfonlayer}{bg}
    \draw[dashed, red, line width=1.5pt, fill=red!15]
        ([xshift=-1.2cm,yshift=3cm]S) rectangle
        ([xshift=.7cm,yshift=-3.6cm]En);
    \node[label] at ([xshift=.7cm,yshift=-3.6cm]En) {$\sigmaXY_n$};
    \draw[dashed, red, line width=1.5pt, fill=red!30]
        ([xshift=-1cm,yshift=2.8cm]S) rectangle
        ([xshift=.7cm,yshift=-3cm]El);
    \node[label] at ([xshift=.7cm,yshift=-3cm]El) {$\sigmaXY_{\ell}$};
    \draw[dashed, red, line width=1.5pt, fill=red!45]
        ([xshift=-.8cm,yshift=2.6cm]S) rectangle
        ([xshift=.7cm,yshift=-2.4cm]E2);
    \node[label] at ([xshift=.7cm,yshift=-2.4cm]E2) {$\sigmaXY_2$};
    \draw[dashed, red, line width=1.5pt, fill=red!60]
        ([xshift=-.6cm,yshift=2.4cm]S) rectangle
        ([xshift=.7cm,yshift=-1.8cm]E1);
    \node[label] at ([xshift=.7cm,yshift=-1.8cm]E1) {$\sigmaXY_1$};
\end{pgfonlayer}
\end{tikzpicture}
\captionof{figure}{The iterative computation of $\sigmaXY_{\ell}$ as in~\eqref{eq:def:quantum:channel:state:metric:XY:1} can be understood as a sequence of ``closing-the-box'' operations as shown above.}
\label{fig:QFSM:estimate:hXY}
\end{minipage}
\end{appendices}
\addcontentsline{toc}{chapter}{Bibliography}
\printbibliography
\printindex{}
\addcontentsline{toc}{chapter}{Index}
\end{document}